\newif\ifextended
\newcommand{\I}{\text{id}}
\newcommand{\Id}{\textbf{Id}}
\newcommand{\Ct}{\mathds{C}}
\newcommand{\D}{\mathds{D}}
\newcommand{\Ca}{\mathbf{C}}
\newcommand{\T}{\mathbf{T}}
\newcommand{\E}{\mathbf{End}_{\Ct}}
\newcommand{\ES}{\mathbf{End}_{\Ct}^{\text{s}}}
\newcommand{\EC}{\mathbf{End}_{\Ct}^{\text{cc}}}
\newcommand{\ED}{$\text{DCC}_e$\hspace*{1pt}}
\newcommand{\bc}{\bigcirc}
\newcommand{\Ge}{$\text{GMCC}_{\text{e}}$}
\newtheorem{theorem}{Theorem}[section]
\newtheorem{prop}[theorem]{Proposition}
\newtheorem{lemma}[theorem]{Lemma}
\title{Monadic and Comonadic Aspects of Dependency Analysis}
\author{Pritam Choudhury}
 \affiliation{
   \position{}
   \department{Department of Computer and Information Science}              
   \institution{University of Pennsylvania}                   
   \city{Philadelphia}
   \country{USA}
 }
 \email{pritam@seas.upenn.edu}
\newcommand{\ottdrule}[4][]{{\displaystyle\frac{\begin{array}{l}#2\end{array}}{#3}\quad\ottdrulename{#4}}}
\newcommand{\ottpremise}[1]{ #1 \\}
\newenvironment{ottdefnblock}[3][]{ \framebox{\mbox{#2}} \quad #3 \\[0pt]}{}
\newcommand{\ottnt}[1]{\mathit{#1}}
\newcommand{\ottmv}[1]{\mathit{#1}}
\newcommand{\ottkw}[1]{\mathbf{#1}}
\newcommand{\ottsym}[1]{#1}
\newcommand{\ottdrulename}[1]{\textsc{#1}}
\newcommand{\ottdruleProtXXAlready}[1]{\ottdrule[#1]{%
\ottpremise{ \ell  \sqsubseteq  \ottnt{A} }%
}{
 \ell  \sqsubseteq   T_{  \ell'  } \:  \ottnt{A}  }{%
{\ottdrulename{Prot\_Already}}{}%
}}
\newcommand\reallywidetilde[1]{\ThisStyle{%
  \setbox0=\hbox{$\SavedStyle#1$}%
  \stackengine{-.1\LMpt}{$\SavedStyle#1$}{%
    \stretchto{\scaleto{\SavedStyle\mkern.2mu\sim}{.5467\wd0}}{.6\ht0}%
  }{O}{c}{F}{T}{S}%
}}
\def\wdtilde#1{%
  \reallywidetilde{#1}}
\keywords{Graded Type System, Eilenberg-Moore Algebra, Presence-Absence Test}
\renewcommand\footnotetextcopyrightpermission[1]{}
\begin{document}

\bibliographystyle{ACM-Reference-Format}
\citestyle{acmauthoryear} 

\begin{abstract}

Dependency analysis is vital to several applications in computer science. It lies at the essence of secure information flow analysis, binding-time analysis, etc. Various calculi have been proposed in the literature for analysing individual dependencies. Abadi et. al., by extending Moggi's monadic metalanguage, unified several of these calculi into the Dependency Core Calculus (DCC). DCC has served as a foundational framework for dependency analysis for the last two decades. However, in spite of its success, DCC has its limitations. First, the monadic bind rule of the calculus is nonstandard and relies upon an auxiliary protection judgement. Second, being of a monadic nature, the calculus cannot capture dependency analyses that possess a comonadic nature, for example, the  binding-time calculus, $\lambda^{\circ}$, of Davies. In this paper, we address these limitations by designing an alternative dependency calculus that is inspired by standard ideas from category theory. Our calculus is both monadic and comonadic in nature and subsumes both DCC and $\lambda^{\circ}$. Our construction explains the nonstandard bind rule and the protection judgement of DCC in terms of standard categorical concepts. It also leads to a novel technique for proving correctness of dependency analysis. We use this technique to present alternative proofs of correctness for DCC and $\lambda^{\circ}$.  
\end{abstract}

\maketitle
\ifextended
\pagestyle{plain}
\fi

\newcommand{\lb}{$\lambda^{\boxempty}$}
\newcommand{\lc}{$\lambda^{\circ}$}

\section{Introduction}

Dependency analysis is the analysis of dependence of an entity upon another. The entities are primarily programs or parts thereof, but they can also be abstract, like security clearance levels in an organization, stages in a compilation process, etc. 

Broadly speaking, an entity depends upon another one if the latter influences the behaviour of the former. On the other hand, an entity is independent of another one if the latter does not interfere in the behaviour of the former. For example, consider the following $\lambda$-terms: $  (   \lambda  \ottmv{x}  .  \ottmv{x}   )   \:   (   \ottsym{2}  +  \ottsym{2}   )  $ and $  (   \lambda  \ottmv{x}  .  \ottsym{4}   )   \:   (   \ottsym{2}  +  \ottsym{2}   )  $. The argument $ (   \ottsym{2}  +  \ottsym{2}   ) $ dictates the normal form of the first term whereas it plays no role in deciding the normal form of the second term. So, we say that the first term depends upon the argument whereas the second one does not. What this means is that in the second term, we can replace the argument $ (   \ottsym{2}  +  \ottsym{2}   ) $ with any other terminating computation, while maintaining the same normal form for the term as a whole. 

The power of dependency analysis comes from this very simple principle: if an entity does not depend upon another one, then variations in the latter should not affect the former. This is the well-known principle of noninterference \citep{goguen}. This principle has far-reaching implications and lies at the heart of several applications in computer science, like secure information flow analysis, binding-time analysis, etc.

In secure information flow analysis \citep{denning1,denning2,smith}, one wishes to guarantee that there is no flow of information from secret data to public variables. Viewed abstractly in terms of security levels, this is equivalent to saying that level `public'  \textit{does not depend} upon level `secret'. In binding-time analysis \citep{hatcliff,gomard,gluck,lambdacirc}, one wishes to guarantee that a given program can be correctly compiled in multiple stages even when each stage can potentially optimize based on inputs received from earlier stages. To ensure correctness of such compilation, it is necessary that an earlier stage \textit{does not depend} upon a later one. Secure information flow analysis and binding-time analysis are examples of dependency analysis. There are many other examples of dependency analysis \citep[etc.]{trust,memory,tip,tang} appearing in the literature.

Over two decades ago, \citet{dcc} showed that several dependency analyses  \citep{slam,tip,hatcliff,volpano,tang} can be seen as instances of a general Dependency Core Calculus (DCC). Their work has served as a foundational framework for dependency analysis in the field of programming languages and has led to extensive research \citep[etc.]{tse-zdancewic,igarashi,ahmed,algehed,persdcc,ddc} on this topic which continues to this day. 

DCC is a simple extension of Moggi's monadic metalanguage \citep{moggi}. The monadic metalanguage is a general calculus for analysing computational effects like nontermination, exceptions, input/output, etc. Moggi showed that computational effects in programming languages can be understood in terms of monads from category theory \citep{maclane}. At first sight, computational effects seem to be quite different from dependencies. So, it comes as a surprise (pointed out by \citet{dcc} themselves) that with just a simple extension, a calculus for analysing computational effects can also analyse dependencies. 

In this regard, \citet{dcc} point out a common feature underlying monads and security levels: just as there is no way of projecting out of a `monad world', there is also no way of projecting out of a `secret world'. Concretely, just as there is no general function of type $  T  \ottnt{A}   \to  \ottnt{A} $ for a monad $T$ and a type $\ottnt{A}$, there is also no non-trivial function from secret data to public variables. So, dependency analysis has a monadic aspect to it.

However, the monadic aspect of dependency analysis might be just half of the story. Everyday experience shows us that security constraints can be enforced not only by restricting outflow but also by restricting inflow. For example, in a world with two levels, $ \mathbf{Public} $ and $ \mathbf{Secret} $, security can be enforced not only by restricting projection out of $ \mathbf{Secret} $ level, but also by restricting injection into $ \mathbf{Public} $ level. These ways are dual to one another. The way of restricting projection, employed in DCC, goes via monads. On the other hand, the way of restricting injection goes via comonads. Similar to a monad $T$ that restricts by disallowing a general function of type $  T  \ottnt{A}   \to  \ottnt{A} $, a comonad $D$ restricts by disallowing a general function of type $ \ottnt{A}  \to    D  \ottnt{A}   $. While the monadic aspect of dependency analysis has received considerable attention in literature, the comonadic aspect has received less attention. 

In this paper, we show that just like the monadic aspect, the comonadic aspect of dependency analysis also has much to offer. Further, we find that the monadic and the comonadic aspects play nicely with one another. We design a language that integrates these two aspects into a single system. This integration helps us unify DCC, a monadic dependency calculus and \lc \citep{lambdacirc}, a comonadic dependency calculus (that is known to be outside the reach of DCC \citep{dcc}). It also leads to a novel general technique for proving correctness of dependency analysis. And above all, it shines light on some of the nuances of dependency analysis. 



In short, we make the following contributions:
\begin{itemize}
\item We present a Graded Monadic Comonadic Calculus, GMCC, and its extension, \Ge{}, and provide meaning-preserving translations from both DCC and \lc{} to \Ge{}.
\item We show that the protection judgement of DCC, when appropriately modified, enables comonadic reasoning in the language. Further, we show that under certain restrictions, DCC, with this modification, is equivalent to GMCC. This equivalence helps explain the nonstandard bind-rule of DCC in terms of standard categorical concepts.
\item GMCC and \Ge{} are general calculi that are sound with respect to a class of categorical models. These  categorical models motivate a novel technique for proving correctness of dependency analyses. We use this technique to provide simple proofs of correctness for both DCC and \lc{}.
\end{itemize}


\ifextended
Note that this paper is an extended version of \citet{gmcc}. The reader can find the proofs of the lemmas and theorems stated in \citet{gmcc} in the appendices of this paper.
\else
Note that owing to space constraints, we omit the proofs of the lemmas and theorems stated in this paper. Interested readers may find them in the appendices of the extended version, \citet{gmcce}. 
\fi

In the next section, we review the basics of dependency analysis and its application in information flow control and staged execution of programs. This section is meant to provide some background to readers who are not very familiar with dependency analysis.

\section{Dependency Analysis in Action} \label{secDepAct}

Consider a database \lstinline|Db| containing demographic information of a city. For the sake of simplicity, let's say the database is represented as a list of tuples with elements of the list corresponding to residents of the city and elements of the tuples corresponding to their demographic information. Further, let's assume that each tuple has only 4 elements for recording name, age, ethnicity and monthly income of a resident (in that order). According to the policies of the city council, name, age and ethnicities of the residents are non-sensitive information that may be shared without any constraint, whereas monthly income of the residents is sensitive information that may be shared only with people who are allowed to handle such information.

Now, consider the following queries:
\begin{enumerate}
\item What fraction of the elderly city residents (age $\geq$ 65 years) are ethnically Caucasian?
\item What is the average monthly income of ethnically Asian residents of the city?
\end{enumerate} 

With the above queries in mind, a programmer seeks the outputs of the following programs, written in \verb|Haskell|-like syntax:
\begin{enumerate}
\item \begin{lstlisting}
      lstEld = filter (\ x -> second x >= 65) Db
      lstEldC = filter (\ x -> third x == "Caucasian") lstEld
      fracEldC = (length lstEldC) / (length lstEld)
      print fracEldC         
      \end{lstlisting}
      
\item \begin{lstlisting}
       lstAsn = filter (\ x -> third x == "Asian") Db
       <@\textcolor{red}{totIncA}@> = foldr (\ x y -> <@\textcolor{red}{fourth x}@> + y) 0 lstAsn
       <@\textcolor{red}{avgIncA}@> = <@\textcolor{red}{totIncA}@> / (length lstAsn)
       print <@\textcolor{red}{avgIncA}@>
       \end{lstlisting}
\end{enumerate} 

Note that functions \lstinline|second|, \lstinline|third| and \lstinline|fourth| access the second, third and fourth elements of a tuple respectively. The question we need to address now is whether the programmer may be allowed to see the outputs of the above programs. Let's suppose this programmer does not have the permission to handle sensitive information. Then, the output of the second program should not be shared with this programmer because  it reveals (at least partially) monthly income data. For example, if there's just a single ethnically Asian resident in the city, then the output of the second program gives away the monthly income of that resident. The output of the first program may, however, be shared freely with this programmer because it does not reveal any sensitive information. 

Next, how do we reach this conclusion about sharing output by \textit{just} analysing the respective programs? In other words, given a program, how do we decide whether or not its output reveals any sensitive information? To answer this question, we need to perform a dependency analysis called information flow analysis: If the output of a program \textit{depends upon} any information deemed sensitive, then the output too needs to be treated as sensitive. Conversely, if the output \textit{does not depend upon} any sensitive information, then the output too is not sensitive. In the second program above, the output, \textcolor{red}{\lstinline|avgIncA|}, is sensitive because it \textit{depends upon} \textcolor{red}{\lstinline|totIncA|}, which in turn \textit{depends upon} \textcolor{red}{\lstinline|fourth x|}, a sensitive piece of information. On the other hand, in the first program, all the data used to compute the output are non-sensitive, rendering the output itself non-sensitive.

To analyse dependency of output upon sensitive information, information flow calculi typically incorporate sensitivity of information in the types themselves. For example, the type of \lstinline|Db| would be \lstinline|[(T L String, T L Int, T L String, T H Int)]|, where \lstinline|T l String| and \lstinline|T l Int| are the types of \verb|l|-security strings and integers respectively, with \verb|l| being \verb|H| or \verb|L|, corresponding to high-security and low-security respectively. In information flow calculi, functions, too, have types that take sensitivity of information into account. For example, the type of `$+$' would be \lstinline|T l Int -> T l Int -> T l Int|. Now, if we write the above programs in such a calculus, we would see that \verb|fracEldC| has type \verb|T L Double| whereas \textcolor{red}{\lstinline|avgIncA|} has type \verb|T H Double|. Any user can access terms of type \verb|T L Double| but terms of type \verb|T H Double| are only accessible to users with high-security clearance. Within the calculus, the noninterference property enforces this restriction. In this way, information flow calculi ensure secure flow of information.

Our next example takes up another form of dependency analysis: binding-time analysis. Binding-time analysis helps in staged execution of programs. Staged execution comes in handy when programs have inputs that are statically known in addition to inputs that are known only at run-time. Computations that depend only upon static inputs may be carried out statically, thereby producing residual programs that can be executed faster at run-time. To find out which computations depend only upon static inputs, we perform a dependency analysis, called binding-time analysis. 

Next, we shall use the same database example to show binding-time analysis in action. However, instead of sensitivity of information, here we focus on availability of information. From the given demographic parameters, name and ethnicity remain constant over time whereas monthly income varies. To account for this fact, the city council mandates that every resident update their monthly income on the last day of each month. Now, to get an accurate answer to the second query, one needs to run the second program on  such days. But if the database contains millions of entries, running this program may burden the computing system on these days. However, observe that some or perhaps most of the work done by this program need not wait for the update in monthly incomes. For example, \verb|lstAsn| can be computed statically beforehand because this computation depends only upon static information. Binding-time analysis identifies such computations, thereby enabling faster execution of programs.

Binding-time calculi typically incorporate information about binding-time in the types themselves. For example, the type of \lstinline|Db| would be \lstinline|[(T Sta String, T Sta Int, T Sta String, T Dyn Int)]|, where \lstinline|T l String| and \lstinline|T l Int| are respectively the types of strings and integers available at time \verb|l|, with \verb|l| being \verb|Dyn| or \verb|Sta|, corresponding to dynamic availability and static availability respectively. In binding-time calculi, functions, too, have types that take binding-time information into account. For example, the type of `$+$' would be \lstinline|T l Int -> T l Int -> T l Int|. If we write the second program in such a calculus, we would see that the computation of \verb|lstAsn| depends upon static data only (To compute \textcolor{red}{\lstinline|totIncA|}, however, we would need dynamic data, \textcolor{red}{\lstinline|fourth x|}). So we could compute \verb|lstAsn| statically and thereafter run \textit{only the residual program} on the last day of every month, thereby reducing the burden on the computing system on such days. To give an estimate, if ethnically Asian residents constitute 5\% of the total population (say), then, compared to the original program, the residual program may run 20x faster.

With this background on dependency analysis, we shall now work towards building our dependency calculi. We shall present two key dependency calculi in this paper: GMCC and its extension, \Ge{}. The calculus GMCC is built up from a graded monadic calculus, GMC, and a graded comonadic calculus, GCC. In the next section, we look at the Graded Monadic Calculus (GMC).

\section{Graded Monadic Calculus} \label{secgmc}

Moggi \citep{moggi} showed that computational effects can be understood in terms of monads. On the other hand, \citet{gifford} showed that side effects can be tracked using effect classes. \citet{wadler} later adapted effect classes to monads but left open the question of a general theory of effects and monads. Eventually, \citet{katsumata} presented such a general theory in terms of graded monads. In this section, we adapt Katsumata's Explicit Subeffecting Calculus to present a simply-typed Graded Monadic Calculus (GMC).

GMC is an extension of the simply-typed $\lambda$-calculus with a grade-annotated monadic type constructor $T_{m}$. The grades, $m$, are drawn from an arbitrary preordered monoid $ \mathcal{M}  = (M, \_\cdot\_  , 1 ,  \leq )$. Recall that a preordered monoid $ \mathcal{M} $ is a monoid $(M ,  \_\cdot\_  , 1)$ along with a preorder $ \leq $ such that the order respects the binary operator, meaning if $ m_{{\mathrm{1}}}   \leq   m'_{{\mathrm{1}}} $ and $ m_{{\mathrm{2}}}   \leq   m'_{{\mathrm{2}}} $, then $  m_{{\mathrm{1}}}  \cdot  m_{{\mathrm{2}}}    \leq    m'_{{\mathrm{1}}}  \cdot  m'_{{\mathrm{2}}}  $. Note that whenever we need to be precise, we use GMC($ \mathcal{M} $) to refer to GMC parametrized by $ \mathcal{M} $. We follow the same convention for other calculi parametrized by algebraic structures. \\ Next, we present the calculus formally. 

\subsection{Grammar and Type System}

The grammar of the calculus appears in Figure \ref{gmc}. In addition to the types and terms of standard $\lambda$-calculus, we have a graded monadic type $ T_{ m } \:  \ottnt{A} $ and terms related to it. The typing rules of the calculus appear in Figure \ref{typGMC}. We omit the typing rules of standard $\lambda$-calculus and consider only the ones related to the graded monadic type.
\begin{figure}[h]
\begin{align*}
 \text{types}, A, B & ::=  \ottkw{Unit}  \: | \:  \mathbf{Void}  \: | \:  \ottnt{A}  \to  \ottnt{B}  \: | \:  \ottnt{A}  \times  \ottnt{B}  \: | \:  \ottnt{A}  +  \ottnt{B}   \: | \:  T_{ m } \:  \ottnt{A}  \\
 \text{terms}, a, b, f, g & ::= \ottmv{x} \: | \:  \lambda  \ottmv{x}  :  \ottnt{A}  .  \ottnt{b}  \: | \:  \ottnt{b}  \:  \ottnt{a}  \\
 & \: | \:  (  \ottnt{a_{{\mathrm{1}}}}  ,  \ottnt{a_{{\mathrm{2}}}}  )  \: | \:  \mathbf{proj}_1 \:  \ottnt{a}  \: | \:  \mathbf{proj}_2 \:  \ottnt{a}  \: | \:  \ottkw{unit}  \\
 & \: | \:  \mathbf{inj}_1 \:  \ottnt{a_{{\mathrm{1}}}}  \: | \:  \mathbf{inj}_2 \:  \ottnt{a_{{\mathrm{2}}}}  \: | \:  \mathbf{case} \:  \ottnt{a}  \: \mathbf{of} \:  \ottnt{b_{{\mathrm{1}}}}  \: ; \:  \ottnt{b_{{\mathrm{2}}}}  \: | \:  \mathbf{abort} \:  \ottnt{a}  \\
 & \: | \:  \ottkw{ret}  \:  \ottnt{a}  \: | \:  \mathbf{lift}^{ m }  \ottnt{f}  \: | \:  \mathbf{join}^{ m_{{\mathrm{1}}} , m_{{\mathrm{2}}} }  \ottnt{a}  \: | \:  \mathbf{up}^{ m_{{\mathrm{1}}} , m_{{\mathrm{2}}} }  \ottnt{a}  \\
 \text{contexts}, \Gamma & ::=  \emptyset  \: | \:  \Gamma  ,  \ottmv{x}  :  \ottnt{A} 
\end{align*}
\caption{Grammar of GMC}
\label{gmc} 
\end{figure}
\begin{figure}
\drules[M]{$ \Gamma  \vdash  \ottnt{a}  :  \ottnt{A} $}{Typing rules}{Return,Fmap,Join,Up} 
\caption{Typing rules of GMC (excerpt)} 
\label{typGMC}
\end{figure}

The \rref{M-Return,M-Fmap,M-Join} are generalizations of the corresponding rules for the ungraded monadic type. Note that the \rref{M-Join} `joins' the grades using the binary operator of the monoid. The \rref{M-Up} relaxes the grade on the monadic type. If $ \mathcal{M} $ is the trivial preordered monoid, then the above rules degenerate to the standard typing rules for monads.\\ Next, we look at the equational theory of the calculus. 

\subsection{Equational Theory} 

Equality over terms of the graded monadic calculus is a congruent equivalence relation generated by the standard $\beta\eta$-equality rules over $\lambda$-terms and the additional rules that appear in Figure \ref{eqGMC}. For presenting the rules, we use the shorthand notation: $ \ottnt{a}  \:  \leftindex^{ m_{{\mathrm{1}}} }{\gg}\!\! =^{ m_{{\mathrm{2}}} }  \ottnt{f}  \triangleq  \mathbf{join}^{ m_{{\mathrm{1}}} , m_{{\mathrm{2}}} }   (    (   \mathbf{lift}^{ m_{{\mathrm{1}}} }  \ottnt{f}   )   \:  \ottnt{a}   )  $ where $a :  T_{ m_{{\mathrm{1}}} } \:  \ottnt{A} $ and $f :  \ottnt{A}  \to   T_{ m_{{\mathrm{2}}} } \:  \ottnt{B}  $. Note that $\_\!\leftindex^{m_{{\mathrm{1}}}}{\gg}\!\! =^{m_{{\mathrm{2}}}}\!\_$ is a graded $\mathbf{bind}$-operator. 
\begin{figure}
\begin{align}
 \mathbf{lift}^{ m }   (   \lambda  \ottmv{x}  .  \ottmv{x}   )   & \equiv  \lambda  \ottmv{x}  .  \ottmv{x}  \label{eq:idn}\\
 \mathbf{lift}^{ m }   (    \lambda  \ottmv{x}  .  \ottnt{g}   \:   (   \ottnt{f}  \:  \ottmv{x}   )    )   & \equiv   \lambda  \ottmv{x}  .   (   \mathbf{lift}^{ m }  \ottnt{g}   )    \:   (    (   \mathbf{lift}^{ m }  \ottnt{f}   )   \:  \ottmv{x}   )   \label{eq:comp}\\
 \mathbf{up}^{ m_{{\mathrm{1}}} , m_{{\mathrm{1}}} }  \ottnt{a}  & \equiv \ottnt{a} \label{eq:refl}\\
 \mathbf{up}^{ m_{{\mathrm{2}}} , m_{{\mathrm{3}}} }   (   \mathbf{up}^{ m_{{\mathrm{1}}} , m_{{\mathrm{2}}} }  \ottnt{a}   )   & \equiv  \mathbf{up}^{ m_{{\mathrm{1}}} , m_{{\mathrm{3}}} }  \ottnt{a}  \label{eq:trans} \\
\hspace*{-25pt}   (   \mathbf{up}^{ m_{{\mathrm{1}}} , m'_{{\mathrm{1}}} }  \ottnt{a}   )   \:  \leftindex^{ m'_{{\mathrm{1}}} }{\gg}\!\! =^{ m_{{\mathrm{2}}} }  \ottnt{f}  & \equiv  \mathbf{up}^{   m_{{\mathrm{1}}}  \cdot  m_{{\mathrm{2}}}   ,   m'_{{\mathrm{1}}}  \cdot  m_{{\mathrm{2}}}   }   (   \ottnt{a}  \:  \leftindex^{ m_{{\mathrm{1}}} }{\gg}\!\! =^{ m_{{\mathrm{2}}} }  \ottnt{f}   )   \label{eq:natl}\\
\hspace*{-25pt}  \ottnt{a}  \:  \leftindex^{ m_{{\mathrm{1}}} }{\gg}\!\! =^{ m'_{{\mathrm{2}}} }   (   \lambda  \ottmv{x}  .   \mathbf{up}^{ m_{{\mathrm{2}}} , m'_{{\mathrm{2}}} }  \ottnt{b}    )   & \equiv  \mathbf{up}^{   m_{{\mathrm{1}}}  \cdot  m_{{\mathrm{2}}}   ,   m_{{\mathrm{1}}}  \cdot  m'_{{\mathrm{2}}}   }   (   \ottnt{a}  \:  \leftindex^{ m_{{\mathrm{1}}} }{\gg}\!\! =^{ m_{{\mathrm{2}}} }   \lambda  \ottmv{x}  .  \ottnt{b}    )   \label{eq:natr}\\
  (   \ottkw{ret}  \:  \ottnt{a}   )   \:  \leftindex^{ \ottsym{1} }{\gg}\!\! =^{ m }  \ottnt{f}  & \equiv  \ottnt{f}  \:  \ottnt{a}  \label{eq:idl}\\
 \ottnt{a}  \:  \leftindex^{ m_{{\mathrm{1}}} }{\gg}\!\! =^{ \ottsym{1} }   (   \lambda  \ottmv{x}  .   \ottkw{ret}  \:  \ottmv{x}    )   & \equiv \ottnt{a} \label{eq:idr}\\
\hspace*{-25pt}   (   \ottnt{a}  \:  \leftindex^{ m_{{\mathrm{1}}} }{\gg}\!\! =^{ m_{{\mathrm{2}}} }  \ottnt{f}   )   \:  \leftindex^{   m_{{\mathrm{1}}}  \cdot  m_{{\mathrm{2}}}   }{\gg}\!\! =^{ m_{{\mathrm{3}}} }  \ottnt{g}  & \equiv  \ottnt{a}  \:  \leftindex^{ m_{{\mathrm{1}}} }{\gg}\!\! =^{   m_{{\mathrm{2}}}  \cdot  m_{{\mathrm{3}}}   }   (   \lambda  \ottmv{x}  .   (    \ottnt{f}  \:  \ottmv{x}   \:  \leftindex^{ m_{{\mathrm{2}}} }{\gg}\!\! =^{ m_{{\mathrm{3}}} }  \ottnt{g}   )    )   \label{eq:assoc}
\end{align}
\caption{Equality rules of GMC (excerpt)}
\label{eqGMC}
\end{figure}

The first two rules correspond to preservation of identity function and composition of functions by $\mathbf{lift}$. The next two rules correspond to reflexivity and transitivity of the order relation. The two rules after that correspond to commutativity of $\mathbf{bind}$ with $\mathbf{up}$. The two subsequent rules correspond to $\mathbf{ret}$ being the left and the right identity of $\mathbf{bind}$. The last rule corresponds to associativity of $\mathbf{bind}$. 

We now want to interpret this calculus in a suitable category. The types of standard $\lambda$-calculus can be interpreted in any bicartesian closed category. To interpret the graded monadic type, we need a graded monad. \citet{fujii} provides a nice account on graded monads and graded comonads. However, for the sake of self-containment, we shall briefly review the theory behind graded monads in the following section.

\subsection{Graded Monads} \label{gradedmonad}

A graded monad is a certain kind of lax monoidal functor \citep{maclane}. A lax monoidal functor from a monoidal category $(M,\otimes_M,1_M)$ to a monoidal category $(N,\otimes_N,1_N)$ is a 3-tuple $(F,F_2,F_0)$ where, 
\begin{itemize}
\item $F$ is a functor from $M$ to $N$
\item For $X , Y \in \text{Obj} (M)$, morphisms \\ $F_2 (X , Y) : F(X) \otimes_N F(Y) \to F(X \otimes_M Y)$ are natural in $X$ and $Y$
\item $F_0 : 1_N \to F (1_M)$
\end{itemize}
such that the diagrams in Figure \ref{laxDiag} commute.
\begin{figure}[h]
\begin{center}
\begin{tikzcd}
 F(1_M) \otimes_N F(X) \arrow{dr}[below,left]{F_2(1_M,X)} & F(X) \arrow{l}[above]{F_0 \otimes_N \I} \arrow{r}{\I \otimes_N F_0} \arrow{d}{\I} & F(X) \otimes_N F(1_M) \arrow{dl}{F_2(X,1_M)} \\
& F(X)
\end{tikzcd}
\end{center}
\begin{center}
\begin{tikzcd}[column sep = 4 em]
F(X) \otimes_N F(Y) \otimes_N F(Z) \arrow{r}{\I \otimes_N F_2(Y,Z)} \arrow{d}[left]{F_2(X,Y) \otimes_N \I} & F(X) \otimes_N F (Y \otimes_M Z) \arrow{d}{F_2(X,Y \otimes_M Z)} \\
F (X \otimes_M Y) \otimes_N F(Z) \arrow{r}{F_2(X \otimes_M Y, Z)} & F (X \otimes_M Y \otimes_M Z) 
\end{tikzcd}
\end{center}
\caption{Commutative diagrams for lax monoidal functor}
\label{laxDiag}
\end{figure}

Note that here we assume $M$ and $N$ to be strict monoidal categories, i.e. $1_M \otimes_M X = X = X \otimes 1_M$ and $(X \otimes_M Y) \otimes_M Z = X \otimes_M (Y \otimes_M Z)$ for any $X, Y, Z \in \text{Obj}(M)$, and similarly for $N$. 

An example of a strict monoidal category is a preordered monoid $\mathcal{M}$. Any preorder $(M,  \leq )$ may be seen as a category, $\Ca({(M, \leq )})$, that has $M$ as its set of objects and a unique morphism   from $m_{{\mathrm{1}}}$ to $m_{{\mathrm{2}}}$ if and only if $ m_{{\mathrm{1}}}   \leq   m_{{\mathrm{2}}} $. Identity morphisms and composition of morphisms are given by reflexivity and transitivity of the order relation. Now, a preordered monoid $\mathcal{M} = (M, \_\cdot\_  , 1 ,  \leq )$ may be seen as a strict monoidal category: $\Ca({\mathcal{M}}) = (\Ca({(M, \leq )}), \_\cdot\_ ,1)$. 

Another example of a strict monoidal category is the category of endofunctors. Given any category $\Ct$, the endofunctors of $\Ct$ form a strict monoidal category, $\E$, with the tensor product given by composition $\_\circ\_$ of functors and the identity object given by the identity functor \Id. 

We use these two monoidal categories to define a graded monad. An $ \mathcal{M} $-graded monad over $\Ct$ is a lax monoidal functor from $\Ca( \mathcal{M} )$ to $\E$. We wish to use an $ \mathcal{M} $-graded monad to interpret  GMC($ \mathcal{M} $). However, such a monad doesn't stand up to the task (Try interpreting \rref{M-Fmap}!). This shortcoming should not come as a surprise because we know that monads, in and of themselves, cannot model the monadic type constructor of Moggi's computational metalanguage \citep{moggi}. For that, they need to be accompanied with tensorial strengths. Here too, we need to add tensorial strengths to graded monads to interpret the graded monadic type constructor. One could define tensorial strength separately after having defined a graded monad first. However, in lieu, one can also just define a strong graded monad in one go using the category of strong endofunctors and strong natural transformations. 

An endofunctor $F$ on a monoidal category $(M,\otimes,1,\alpha,\lambda,\rho)$ is said to be strong \citep{kelly,kock1,kock2} if there exists morphisms $t_{X,Y} : X \otimes F (Y) \to F (X \otimes Y)$, natural in $X$ and $Y$, for $X, Y \in \text{Obj}(M)$ such that the diagrams in Figure \ref{strong} commute.
\begin{figure}[h]
\begin{center}
\begin{tikzcd}
(X \otimes Y) \otimes FZ \arrow{rr}{t_{X\otimes Y,Z}} \arrow{d}[left]{\alpha^{-1}_{X,Y,FZ}} & & F((X \otimes Y) \otimes Z) \arrow{d}{F \alpha^{-1}_{X,Y,Z}} \\
X \otimes (Y \otimes FZ) \arrow{r}{\I \otimes t_{Y,Z}} & X \otimes F(Y \otimes Z) \arrow{r}{t_{X,Y\otimes Z}} & F(X \otimes (Y \otimes Z))
\end{tikzcd}
\end{center}  
\begin{center}
\begin{tikzcd}
1 \otimes FX \arrow{r}{t_{1,X}} \arrow{dr}[below,left]{\lambda_{FX}} & F(1\otimes X) \arrow{d}{F\lambda_X} \\
& FX
\end{tikzcd}
\end{center}
\vspace*{-5pt}
\caption{Commutative diagrams for strong endofunctor}
\label{strong}
\end{figure}

Given strong endofunctors $(F,t^F)$ and $(G,t^G)$, a natural transformation $\alpha : F \to G$ is said to be strong, if for any $X, Y \in \text{Obj}(M$), the diagram in Figure \ref{strongN} commutes.
\begin{figure}
\begin{center}
\begin{tikzcd}
X \otimes F Y \arrow{d}[left]{t^F_{X,Y}} \arrow{r}{\I \otimes \alpha_Y} & X \otimes G Y \arrow{d}{t^G_{X,Y}} \\
F(X \otimes Y) \arrow{r}{\alpha_{X \otimes Y}} & G(X \otimes Y)
\end{tikzcd}
\end{center}
\vspace*{-5pt}
\caption{Commutative diagram for strong natural transformation} 
\label{strongN}
\end{figure}
Strong endofunctors can be defined for arbitrary monoidal categories; however, we just need the ones over cartesian monoidal categories. Given any cartesian category $\Ct$, let $\E^s$ denote the category with objects: strong endofunctors over $(\Ct,\times,\top)$ (where $\top$ is the terminal object) and morphisms: strong natural transformations between them. Like $\E$, category $\E^s$ too is strict monoidal with the monoidal product and the identity object defined in the same way.

Finally, we have the definition of a strong graded monad. Given a preordered monoid $\mathcal{M}$ and a cartesian category $\Ct$, a strong $\mathcal{M}$-graded monad over $\Ct$ is a lax monoidal functor from $\Ca({\mathcal{M}})$ to $\E^s$. Using strong graded monads, we can now provide a categorical model for the graded monadic calculus.

\subsection{Categorical Model}


Let $\Ct$ be any bicartesian closed category. Let $(\T,\mu,\eta)$ be a strong $\mathcal{M}$-graded monad over $\Ct$. Then, the interpretation, $\llbracket \_ \rrbracket$, of types and terms is as follows: The types and terms of standard $\lambda$-calculus are interpreted in the usual way. The graded monadic type and terms related to it are interpreted in Figure \ref{intGMC}.
\begin{figure}[h]
\begin{align*}
 \llbracket   T_{ m } \:  \ottnt{A}   \rrbracket  & =  \mathbf{T} _{m} \llbracket  \ottnt{A}  \rrbracket   &
 \llbracket   \ottkw{ret}  \:  \ottnt{a}   \rrbracket  & = \eta \circ  \llbracket  \ottnt{a}  \rrbracket  \\
 \llbracket   \mathbf{lift}^{ m }  \ottnt{f}   \rrbracket  & = \Lambda (\T_{m}(\Lambda^{-1}  \llbracket  \ottnt{f}  \rrbracket ) \circ t^{\T_{m}}) &
 \llbracket   \mathbf{join}^{ m_{{\mathrm{1}}} , m_{{\mathrm{2}}} }  \ottnt{a}   \rrbracket  & = \mu^{m_{{\mathrm{1}}},m_{{\mathrm{2}}}} \circ  \llbracket  \ottnt{a}  \rrbracket  \\
 \llbracket   \mathbf{up}^{ m_{{\mathrm{1}}} , m_{{\mathrm{2}}} }  \ottnt{a}   \rrbracket  & =  \mathbf{T} ^{ m_{{\mathrm{1}}}   \leq   m_{{\mathrm{2}}} } \circ  \llbracket  \ottnt{a}  \rrbracket  
\end{align*}
\vspace*{-5pt}
\caption{Interpretation of GMC (excerpt)}
\label{intGMC}
\end{figure}

There are a few things to note here: 
\begin{itemize}
\item $\T(m)$, written as $\T_{m}$, is a functor
\item $ \mathbf{T} ({ m_{{\mathrm{1}}}   \leq   m_{{\mathrm{2}}} })$, written as $ \mathbf{T} ^{ m_{{\mathrm{1}}}   \leq   m_{{\mathrm{2}}} }$, is a natural transformation
\item $\eta$ is a natural transformation from $\Id$ to $\T_1$
\item $\mu^{m_{{\mathrm{1}}},m_{{\mathrm{2}}}}$ are morphisms from $\T_{m_{{\mathrm{1}}}} \circ \T_{m_{{\mathrm{2}}}}$ to $\T_{ m_{{\mathrm{1}}}  \cdot  m_{{\mathrm{2}}} }$, and are natural in both $m_{{\mathrm{1}}}$ and $m_{{\mathrm{2}}}$
\item $t^{\T_{m}}$ denotes the strength of $\T_{m}$
\item $\Lambda$ and $\Lambda^{-1}$ denote currying and uncurrying respectively
\end{itemize}

Let us now see why this interpretation satisfies the equational theory of the calculus. Equations (\ref{eq:idn}) and (\ref{eq:comp}) follow because $\T_m$ is a functor, for any $m \in M$. Equations (\ref{eq:refl}) and (\ref{eq:trans}) follow because $\T$ is a functor and as such, preserves identity morphisms and composition of morphisms. Equations (\ref{eq:natl}) and (\ref{eq:natr}) follow because $\mu$ is natural in its first component and its second component respectively. Equations (\ref{eq:idl}) and (\ref{eq:idr}) follow respectively from the left and the right unit laws for graded monad, laws that correspond to the commutative triangles in Figure \ref{laxDiag}. Equation (\ref{eq:assoc}) follows from the associative law for graded monad, the law that corresponds to the commutative square in Figure \ref{laxDiag}. We shall point out that soundness of the equations in the model also depends upon the axioms about strength, shown in Figure \ref{strong}. 

Thus, a bicartesian closed category $\Ct$ with $(\T,\mu,\eta)$, a strong $\mathcal{M}$-graded monad over $\Ct$, is a sound model for GMC($ \mathcal{M} $).

\begin{theorem} \label{gmcsound}
If $ \Gamma  \vdash  \ottnt{a}  :  \ottnt{A} $ in GMC, then $ \llbracket  \ottnt{a}  \rrbracket  \in \text{Hom}_{\Ct} ( \llbracket  \Gamma  \rrbracket ,  \llbracket  \ottnt{A}  \rrbracket )$. Further, if $ \Gamma  \vdash  \ottnt{a_{{\mathrm{1}}}}  :  \ottnt{A} $ and $ \Gamma  \vdash  \ottnt{a_{{\mathrm{2}}}}  :  \ottnt{A} $ such that $\ottnt{a_{{\mathrm{1}}}} \equiv \ottnt{a_{{\mathrm{2}}}}$ in GMC, then $ \llbracket  \ottnt{a_{{\mathrm{1}}}}  \rrbracket  =  \llbracket  \ottnt{a_{{\mathrm{2}}}}  \rrbracket  \in \text{Hom}_{\Ct} ( \llbracket  \Gamma  \rrbracket ,  \llbracket  \ottnt{A}  \rrbracket )$.
\end{theorem}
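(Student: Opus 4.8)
The plan is to establish both parts by rule induction: the first on the derivation of $ \Gamma  \vdash  \ottnt{a}  :  \ottnt{A} $, the second on the derivation of $\ottnt{a_{{\mathrm{1}}}} \equiv \ottnt{a_{{\mathrm{2}}}}$, using crucially that the interpretation $\llbracket \_ \rrbracket$ is defined compositionally from the interpretations of the immediate subterms. For part one, the cases for the standard $\lambda$-calculus fragment are routine: each type former is interpreted by the corresponding structure in the bicartesian closed category $\Ct$ (finite products, coproducts, exponentials, terminal and initial objects), and each typing rule is discharged by the relevant universal property. For the monadic fragment I would check the four rules of Figure \ref{typGMC}. \rref{M-Return} is immediate because $\eta : \Id \to \T_1$ is a natural transformation, so $\eta \circ \llbracket  \ottnt{a}  \rrbracket  :  \llbracket  \Gamma  \rrbracket  \to \T_1 \llbracket  \ottnt{A}  \rrbracket $. \rref{M-Join} holds because $\mu^{m_{{\mathrm{1}}},m_{{\mathrm{2}}}} : \T_{m_{{\mathrm{1}}}} \circ \T_{m_{{\mathrm{2}}}} \to \T_{ m_{{\mathrm{1}}}  \cdot  m_{{\mathrm{2}}} }$ is a morphism of $\E^s$, hence a strong natural transformation whose components are morphisms of $\Ct$ of the required type. \rref{M-Up} holds because $\T$, regarded as a functor $\Ca(\mathcal{M}) \to \E^s$, sends the unique arrow witnessing $ m_{{\mathrm{1}}}   \leq   m_{{\mathrm{2}}} $ to a natural transformation $\T^{ m_{{\mathrm{1}}}   \leq   m_{{\mathrm{2}}} } : \T_{m_{{\mathrm{1}}}} \to \T_{m_{{\mathrm{2}}}}$. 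The one case requiring care is \rref{M-Fmap}, whose interpretation is $\Lambda(\T_{m}(\Lambda^{-1}\llbracket  \ottnt{f}  \rrbracket ) \circ t^{\T_{m}})$: here one tracks types through uncurrying ($\Lambda^{-1}\llbracket  \ottnt{f}  \rrbracket  :  \llbracket  \Gamma  \rrbracket  \times  \llbracket  \ottnt{A}  \rrbracket  \to  \llbracket  \ottnt{B}  \rrbracket $), the functorial action of $\T_{m}$, the strength $t^{\T_{m}} :  \llbracket  \Gamma  \rrbracket  \times \T_{m} \llbracket  \ottnt{A}  \rrbracket  \to \T_{m}( \llbracket  \Gamma  \rrbracket  \times  \llbracket  \ottnt{A}  \rrbracket )$, and currying, to conclude that the result lies in $\text{Hom}_{\Ct}( \llbracket  \Gamma  \rrbracket , \T_{m} \llbracket  \ottnt{A}  \rrbracket  \to \T_{m} \llbracket  \ottnt{B}  \rrbracket )$. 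This is bookkeeping, but it is the point at which strength is genuinely needed.

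For part two I would first dispatch the structural cases. Reflexivity, symmetry, and transitivity are trivial since semantic equality is equality of morphisms in $\Ct$; the congruence rules follow because $\llbracket \_ \rrbracket$ is built by composition and pairing from the subterms, so replacing a subterm by a semantically equal one leaves the interpretation unchanged. The $\beta\eta$ rules of the $\lambda$-calculus fragment hold in any bicartesian closed category. It then remains to validate the equations (\ref{eq:idn})--(\ref{eq:assoc}) of Figure \ref{eqGMC}, and the text already indicates the responsible categorical fact in each case: (\ref{eq:idn}) and (\ref{eq:comp}) are functoriality of $\T_{m}$; (\ref{eq:refl}) and (\ref{eq:trans}) are functoriality of $\T$, using that $\Ca(\mathcal{M})$ has at most one arrow between comparable grades; (\ref{eq:natl}) and (\ref{eq:natr}) are naturality of $\mu$ in its first and second arguments, together with strong naturality of the comparison maps $\T^{ m_{{\mathrm{1}}}   \leq   m_{{\mathrm{2}}} }$ (Figure \ref{strongN}); (\ref{eq:idl}) and (\ref{eq:idr}) are the left and right unit laws of the lax monoidal functor (the triangles of Figure \ref{laxDiag}); and (\ref{eq:assoc}) is its associativity square (Figure \ref{laxDiag}). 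For each, I would unfold the shorthand $\mathbf{bind}$ to $\mathbf{join} \circ \mathbf{lift}$, expand $\llbracket \_ \rrbracket$ on both sides, and reduce the resulting diagram to one of the coherence diagrams of Figures \ref{laxDiag}, \ref{strong}, \ref{strongN}, interspersed with applications of naturality and the bicartesian closed equations.

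The step I expect to be the main obstacle is the interaction between strength and the graded-monad coherence data. Both the \rref{M-Fmap} case of part one and, in part two, the equations (\ref{eq:natl})--(\ref{eq:assoc}) once $\mathbf{bind}$ has been unfolded through $\mathbf{lift}$, give rise to diagrams involving the strength $t^{\T_{m}}$ simultaneously with $\mu$, $\eta$, and the $\T^{ m_{{\mathrm{1}}}   \leq   m_{{\mathrm{2}}} }$; closing them requires the strength axioms of Figure \ref{strong} and strong naturality of the coherence transformations (Figure \ref{strongN}). I would tame these calculations by first proving an auxiliary lemma that rewrites $\llbracket \_ \rrbracket$ applied to a $\mathbf{bind}$ as a single Kleisli-style composite and records its naturality and associativity properties; with that in hand, the remaining verifications become routine diagram chases.
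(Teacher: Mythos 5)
Your proposal is correct and follows essentially the same route as the paper's proof: induction on the typing derivation for well-definedness, then case analysis on the equality rules, discharging each equation of Figure \ref{eqGMC} by the categorical fact you name (functoriality, naturality of $\mu$ in each component, strong naturality of $\eta$ and of the comparison maps, and the lax monoidal coherence diagrams), with the strength axioms doing the real work in the \rref{M-Fmap} and $\mathbf{bind}$ cases exactly as you anticipate. Your suggested auxiliary lemma packaging $\mathbf{bind}$ as a Kleisli composite is a reasonable organizational refinement, but the paper simply carries out the diagram chases directly.
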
    


\section{DCC and GMC} \label{secdccgmc}

In this section, we look at the relation between DCC and GMC. Since DCC lies at the heart of our paper, we next review the calculus briefly. For simplicity, we first focus on the terminating fragment of the calculus and consider non-termination later in our paper.

\subsection{Dependency Core Calculus} \label{secdcc}

The Dependency Core Calculus is simply-typed $\lambda$-calculus, extended with multiple type constructors, $\mathcal{T}_{\ell}$, which help analyse dependencies. The indices, $\ell$, are elements of an abstract lattice $ \mathcal{L}  = (L,\vee,\wedge,\bot,\top)$. The lattice structure for the calculus is motivated by the lattice model of secure information flow \citep{denning1}. The elements of a lattice model may be thought of as dependency levels, with $ \ell_{{\mathrm{1}}}  \sqsubseteq  \ell_{{\mathrm{2}}} $ meaning $\ell_{{\mathrm{2}}}$ may depend upon $\ell_{{\mathrm{1}}}$ and $\neg( \ell_{{\mathrm{1}}}  \sqsubseteq  \ell_{{\mathrm{2}}} )$ meaning $\ell_{{\mathrm{2}}}$ should not depend upon $\ell_{{\mathrm{1}}}$. (Here, $\sqsubseteq$ is the implied order of the lattice.) For example, public and secret levels may be modelled using a two-point lattice $ \mathcal{L}_2 $: $ \mathbf{Public}  \sqsubset  \mathbf{Secret} $.

DCC uses an auxiliary protection judgement to analyse dependency. The protection judgement, written $ \ell  \sqsubseteq  \ottnt{A} $, and presented in Figure \ref{protect}, can be read as: the terms of type $\ottnt{A}$ may depend upon level $\ell$. 
Another way to read it is: the terms of type $\ottnt{A}$ are at least as secure as level $\ell$. 
With this reading of the protection judgement, the calculus may be said to be correct when it satisfies the following condition: if $ \ell  \sqsubseteq  \ottnt{A} $ and $\neg( \ell  \sqsubseteq  \ell' )$, then the terms of $\ottnt{A}$ are not be visible at $\ell'$. Let us now look at the type system and equational theory of DCC.

\begin{figure}
\drules[Prot]{$ \ell  \sqsubseteq  \ottnt{A} $}{DCC Protect}{Prod,Fun,Monad,Already} 
\caption{Protection rules for DCC} 
\label{protect}
\end{figure}

\subsection{Type System and Equational Theory of DCC}

The typing rules of DCC consist of the ones for standard $\lambda$-calculus along with the introduction and elimination rules for $\mathcal{T}_{\ell}$, shown below.
\drules[DCC]{$ \Gamma  \vdash  \ottnt{a}  :  \ottnt{A} $}{DCC Typing (Excerpt)}{Eta,Bind}
The protection judgement in \rref{DCC-Bind} ensures that $\ottnt{a}$ is visible to $\ottnt{B}$ only if $\ottnt{B}$ has the necessary permission. Note that \rref{DCC-Bind}, unlike a standard monadic bind rule, does not wrap the return type, $\ottnt{B}$, with the constructor $\mathcal{T}_{\ell}$. This difference is significant and we shall see its implications as we go along.  

Now we consider the equational theory of DCC. \citet{dcc} do not explicitly provide an equational theory for DCC. However, they provide an operational semantics for DCC. We describe the equational theory corresponding to the operational semantics they provide. The terms of DCC can be seen as $\lambda$-terms annotated with security labels. If we erase the annotations, we are left with plain $\lambda$-terms. Plain $\lambda$-terms already have an equational theory: the one generated by the standard $\beta\eta$-rules. Using this theory, we define the equational theory of DCC as follows: two DCC terms are equal, if and only if, after erasure, they are equal as $\lambda$-terms, i.e. $ \ottnt{a_{{\mathrm{1}}}}  \simeq  \ottnt{a_{{\mathrm{2}}}}  \triangleq   \lfloor  \ottnt{a_{{\mathrm{1}}}}  \rfloor   \equiv   \lfloor  \ottnt{a_{{\mathrm{2}}}}  \rfloor  $,  where $ \lfloor  \ottnt{a}  \rfloor $ is the plain $\lambda$-term corresponding to the DCC-term $\ottnt{a}$. 

Now we are in a position to explore the relation between DCC and GMC. There are two questions that we would like to address. 
\begin{itemize}
\item Is DCC a graded monadic calculus? In other words, with appropriate restrictions, can we translate GMC to DCC while preserving meaning?
\item Is DCC just a graded monadic calculus? In other words, with appropriate restrictions, can we translate DCC to GMC while preserving meaning? 
\end{itemize}  
We shall see that the answer to the first question is yes, while the answer to the second one is no.

\subsection{Is DCC a Graded Monadic Calculus?}
\label{DCCMonadic}

Both DCC and GMC are calculi parametrized by algebraic structures. To compare the calculi, we first need to relate the parametrizing structures. DCC is parametrized by an arbitrary lattice $ \mathcal{L} $ whereas GMC is parametrized by an arbitrary preordered monoid $ \mathcal{M} $. A preordered monoid is a more general structure because any bounded semilattice may be seen as a preordered monoid. For example, a bounded join-semilattice is a preordered monoid with multiplication, unit and the preorder given by join, $\bot$ and the semilattice order respectively. A point to note here is that in the original formulation of \citet{denning1}, the semantics of secure information flow just constrains the model to be a bounded join-semilattice. However, under the practical assumption of finiteness, such a model collapses to a lattice. Here, we shall compare DCC and GMC over the class of bounded join-semilattices.

Let $ \mathcal{L}  = (L,\vee,\bot)$ be a bounded join-semilattice. Then, the translation, $\overline{\phantom{a}}$, from GMC to DCC, is given in Figure \ref{GMCtoDCC}. This translation preserves typing and meaning.
\begin{theorem} \label{GMCtoDCCTh}
If $ \Gamma  \vdash  \ottnt{a}  :  \ottnt{A} $ in GMC($ \mathcal{L} $), then $  \overline{  \Gamma  }   \vdash   \overline{ \ottnt{a} }   :   \overline{ \ottnt{A} }  $ in DCC($ \mathcal{L} $). Further, if $ \Gamma  \vdash  \ottnt{a_{{\mathrm{1}}}}  :  \ottnt{A} $ and $ \Gamma  \vdash  \ottnt{a_{{\mathrm{2}}}}  :  \ottnt{A} $ such that $\ottnt{a_{{\mathrm{1}}}} \equiv \ottnt{a_{{\mathrm{2}}}}$ in GMC($ \mathcal{L} $), then $ \overline{ \ottnt{a_{{\mathrm{1}}}} }  \simeq  \overline{ \ottnt{a_{{\mathrm{2}}}} } $ in DCC($ \mathcal{L} $).
\end{theorem}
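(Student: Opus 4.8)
The plan is to treat the two halves of the statement separately: typing preservation by induction on the GMC typing derivation, and equation preservation by exploiting that DCC-equality is defined as $\beta\eta$-equality after erasure to plain $\lambda$-terms. Before either induction I would fix the translation $\overline{\phantom{a}}$ explicitly. On types it is the evident homomorphism, with $\overline{T_m A} = \mathcal{T}_m\,\overline{A}$ and, since under the bounded-join-semilattice-as-preordered-monoid reading the unit $1$ is $\bot$ and multiplication is $\vee$, $\overline{T_1 A} = \mathcal{T}_\bot\,\overline{A}$. On terms it is the identity on the $\lambda$-calculus fragment, and it encodes the monadic operations through DCC's $\mathbf{eta}$ and $\mathbf{bind}$: $\overline{\mathbf{ret}\ a} = \mathbf{eta}^\bot\,\overline{a}$, $\overline{\mathbf{lift}^m f} = \lambda y.\ \mathbf{bind}^m\ x = y\ \mathbf{in}\ \mathbf{eta}^m\,(\overline{f}\,x)$, $\overline{\mathbf{up}^{m_1,m_2} a} = \mathbf{bind}^{m_1}\ x = \overline{a}\ \mathbf{in}\ \mathbf{eta}^{m_2}\,x$, and $\overline{\mathbf{join}^{m_1,m_2} a} = \mathbf{bind}^{m_1}\ x = \overline{a}\ \mathbf{in}\ \mathbf{bind}^{m_2}\ y = x\ \mathbf{in}\ \mathbf{eta}^{m_1 \vee m_2}\,y$.

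For typing preservation, the $\lambda$-calculus rules are immediate from the induction hypothesis, and \rref{M-Return} maps to \rref{DCC-Eta} using $\overline{T_1 A} = \mathcal{T}_\bot\,\overline{A}$. The three remaining cases — \rref{M-Fmap}, \rref{M-Join}, \rref{M-Up} — each go through \rref{DCC-Bind} and thus require discharging a protection judgement of the form $\ell \sqsubseteq \mathcal{T}_{\ell'}\,\overline{B}$ in which $\ell \sqsubseteq \ell'$ always holds: by reflexivity of $\sqsubseteq$ for \rref{M-Fmap}, because $\vee$ is a least upper bound ($m_1 \sqsubseteq m_1 \vee m_2$ and $m_2 \sqsubseteq m_1 \vee m_2$) for \rref{M-Join}, and because the side condition $m_1 \le m_2$ of \rref{M-Up} is exactly $m_1 \sqsubseteq m_2$. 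In each case the judgement is then produced by \rref{Prot-Monad}; it is convenient to isolate the one-line sub-lemma that $\ell \sqsubseteq \ell'$ implies $\ell \sqsubseteq \mathcal{T}_{\ell'}\,\overline{B}$ for every $B$.

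For equation preservation I would first prove the routine substitution lemma $\overline{a\{b/x\}} = \overline{a}\{\overline{b}/x\}$ by induction on $a$, so that the erasure of a translated substitution is the substitution of the erased translations. Next I would record the erasure of the translated combinators: since $\lfloor \mathbf{eta}^\ell a \rfloor = \lfloor a \rfloor$ and $\lfloor \mathbf{bind}^\ell\ x = a\ \mathbf{in}\ b \rfloor = \lfloor b \rfloor\{\lfloor a \rfloor / x\}$, all monadic scaffolding collapses under erasure — $\lfloor\overline{\mathbf{up}^{m_1,m_2}a}\rfloor = \lfloor\overline{a}\rfloor$, $\lfloor\overline{\mathbf{join}^{m_1,m_2}a}\rfloor = \lfloor\overline{a}\rfloor$, $\lfloor\overline{\mathbf{lift}^m f}\rfloor \equiv \lfloor\overline{f}\rfloor$ modulo one $\eta$-step, and hence the translation of the GMC graded bind $\mathbf{join}^{m_1,m_2}((\mathbf{lift}^{m_1}f)\,a)$ erases, up to $\beta\eta$, to the plain application $\lfloor\overline{f}\rfloor\,\lfloor\overline{a}\rfloor$. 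With these facts, I would induct on the derivation of $a_1 \equiv a_2$: the congruence rules hold because $\equiv$ is a congruence; the $\beta\eta$-rules of the $\lambda$-fragment transport directly because $\overline{\phantom{a}}$ commutes with erasure there (and $\eta$ is applicable since the translated terms are well typed by the first half); and each of the nine axioms of Figure \ref{eqGMC} becomes, after erasure, a $\beta\eta$-valid identity of plain $\lambda$-terms, since erasure collapses $\mathcal{T}$, $\mathbf{eta}$ and $\mathbf{bind}$ to the identity operation, a no-op, and substitution respectively.

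I expect the only genuine obstacle to be pinning down the translations of $\mathbf{join}$ and $\mathbf{up}$ so that the first half actually goes through: the grade on the final $\mathbf{eta}$ must be the join $m_1 \vee m_2$ — so that the result type is $\mathcal{T}_{m_1 \vee m_2}\,\overline{B} = \overline{T_{m_1 \cdot m_2} B}$ — and it is precisely this choice that makes every \rref{DCC-Bind} side condition land in the range of \rref{Prot-Monad}; a more naive encoding fails to type-check. The rest is mechanical: the substitution lemma and the per-axiom erasure computations, the only point needing care being bound-variable hygiene when substitutions are pushed through the $\mathbf{bind}$-nests produced by the translation.
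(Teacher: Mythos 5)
Your proposal is correct and follows essentially the same route as the paper's proof: the same translation (Figure \ref{GMCtoDCC}), the same induction on the typing derivation with each of \rref{M-Fmap,M-Join,M-Up} discharged through \rref{DCC-Bind} and \rref{Prot-Monad}, and the same observation that DCC-equality is erasure-based so the monadic scaffolding collapses. The only difference is that you spell out the equational half (substitution lemma, per-axiom erasure computations) in more detail than the paper, which states it in one line.
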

\begin{figure}
\begin{align*}
 \overline{  T_{  \ell  } \:  \ottnt{A}  }  & =  \mathcal{T}_{ \ell } \:   \overline{ \ottnt{A} }   \\
 \overline{  \ottkw{ret}  \:  \ottnt{a}  }  & =  \mathbf{eta} ^{  \bot  }   \overline{ \ottnt{a} }   \\
 \overline{  \mathbf{lift}^{  \ell  }  \ottnt{f}  }  & =  \lambda  \ottmv{x}  :    \mathcal{T}_{ \ell } \:   \overline{ \ottnt{A} }     .   \mathbf{bind} ^{ \ell } \:  \ottmv{y}  =  \ottmv{x}  \: \mathbf{in} \:   \mathbf{eta} ^{ \ell }   (     \overline{ \ottnt{f} }    \:  \ottmv{y}   )     \hspace{3pt} [ \text{Here, }  \ottnt{f}  :   \ottnt{A}  \to  \ottnt{B}   ] \\
 \overline{  \mathbf{join}^{  \ell_{{\mathrm{1}}}  ,  \ell_{{\mathrm{2}}}  }  \ottnt{a}  }  & =  \mathbf{bind} ^{ \ell_{{\mathrm{1}}} } \:  \ottmv{x}  =    \overline{ \ottnt{a} }    \: \mathbf{in} \:   \mathbf{bind} ^{ \ell_{{\mathrm{2}}} } \:  \ottmv{y}  =  \ottmv{x}  \: \mathbf{in} \:   \mathbf{eta} ^{   \ell_{{\mathrm{1}}}  \vee  \ell_{{\mathrm{2}}}   }  \ottmv{y}    \\
 \overline{  \mathbf{up}^{  \ell_{{\mathrm{1}}}  ,  \ell_{{\mathrm{2}}}  }  \ottnt{a}  }  & =  \mathbf{bind} ^{ \ell_{{\mathrm{1}}} } \:  \ottmv{x}  =    \overline{ \ottnt{a} }    \: \mathbf{in} \:   \mathbf{eta} ^{ \ell_{{\mathrm{2}}} }  \ottmv{x}  
\end{align*}
\caption{Translation function from GMC to DCC (excerpt)}
\label{GMCtoDCC}
\end{figure} 



\subsection{Is DCC Just a Graded Monadic Calculus?}
\label{JustMonadic}

Now that GMC can be translated into DCC, can we go the other way around? Let's explore this question. To translate DCC to GMC, we would need to translate the $ \mathbf{bind} $ construct. 
We may attempt a translation for $ \mathbf{bind} $ of DCC using $ \mathbf{bind} $ of GMC. However, note that the signature of $ \mathbf{bind} $ in GMC is: $  T_{  \ell_{{\mathrm{1}}}  } \:  \ottnt{A}   \to     (   \ottnt{A}  \to   T_{  \ell_{{\mathrm{2}}}  } \:  \ottnt{B}    )   \to   T_{    \ell_{{\mathrm{1}}}  \vee  \ell_{{\mathrm{2}}}    } \:  \ottnt{B}    $ whereas that of $ \mathbf{bind} $ in DCC is: $  \mathcal{T}_{ \ell_{{\mathrm{1}}} } \:  \ottnt{A}   \to   (   \ottnt{A}  \to  \ottnt{B}   )   \to \{  \ell_{{\mathrm{1}}}  \sqsubseteq  \ottnt{B}  \} \to B$. For a successful translation, one needs to show that, if $ \ell_{{\mathrm{1}}}  \sqsubseteq  \ottnt{B} $, then there exists a function $\ottnt{j}$ of type $ (    T_{  \ell_{{\mathrm{1}}}  } \:   \underline{ \ottnt{B} }    \to   \underline{ \ottnt{B} }    ) $. In case such a function exists, for $ \ottnt{a}  :   \mathcal{T}_{ \ell_{{\mathrm{1}}} } \:  \ottnt{A}  $ and $ \ottnt{f}  :   \ottnt{A}  \to  \ottnt{B}  $, one can get $  (   \ottnt{j}  \:   (    (   \mathbf{lift}^{  \ell_{{\mathrm{1}}}  }   \underline{ \ottnt{f} }    )   \:   \underline{ \ottnt{a} }    )    )   :   \underline{ \ottnt{B} }  $. (Here, $\underline{\phantom{a}}$ denotes a possible translation of DCC to GMC.) 

We attempt to define $ \ottnt{j}  :    T_{  \ell_{{\mathrm{1}}}  } \:   \underline{ \ottnt{B} }    \to   \underline{ \ottnt{B} }   $ via structural recursion on the judgement $ \ell_{{\mathrm{1}}}  \sqsubseteq  \ottnt{B} $. The interesting cases are \rref{Prot-Monad,Prot-Already}. 
\begin{itemize}
\item \Rref{Prot-Monad}. Here, we have $ \ell  \sqsubseteq   \mathcal{T}_{ \ell' } \:  \ottnt{B}  $ where $ \ell  \sqsubseteq  \ell' $. Need to define $ \ottnt{j}  :    T_{  \ell  } \:   T_{  \ell'  } \:   \underline{ \ottnt{B} }     \to   T_{  \ell'  } \:   \underline{ \ottnt{B} }    $. But, $  \ottmv{x}  :   T_{  \ell  } \:   T_{  \ell'  } \:   \underline{ \ottnt{B} }      \vdash   \mathbf{join}^{  \ell  ,  \ell'  }  \ottmv{x}   :   T_{  \ell'  } \:   \underline{ \ottnt{B} }   $ because $ \ell  \vee  \ell'  = \ell'$. 
\item \Rref{Prot-Already}. Here, we have $ \ell  \sqsubseteq   \mathcal{T}_{ \ell' } \:  \ottnt{B}  $ where $ \ell  \sqsubseteq  \ottnt{B} $. Need to define $ \ottnt{j}  :    T_{  \ell  } \:   T_{  \ell'  } \:   \underline{ \ottnt{B} }     \to   T_{  \ell'  } \:   \underline{ \ottnt{B} }    $. Since $ \ell  \sqsubseteq  \ottnt{B} $, the hypothesis gives us a function $ \ottnt{j_{{\mathrm{0}}}}  :    T_{  \ell  } \:   \underline{ \ottnt{B} }    \to   \underline{ \ottnt{B} }   $. But, now we are stuck! Lifting this function can only give us: $  \mathbf{lift}^{  \ell'  }  \ottnt{j_{{\mathrm{0}}}}   :    T_{  \ell'  } \:   T_{  \ell  } \:   \underline{ \ottnt{B} }     \to   T_{  \ell'  } \:   \underline{ \ottnt{B} }    $, not exactly what we need. 

Here, we could, for instance, add a non-standard flip-rule to GMC like: ``from $ \Gamma  \vdash  \ottnt{a}  :   T_{  \ell_{{\mathrm{1}}}  } \:   T_{  \ell_{{\mathrm{2}}}  } \:  \ottnt{A}   $, derive $ \Gamma  \vdash   \mathbf{flip}^{ \ell_{{\mathrm{1}}} , \ell_{{\mathrm{2}}} }  \ottnt{a}   :   T_{  \ell_{{\mathrm{2}}}  } \:   T_{  \ell_{{\mathrm{1}}}  } \:  \ottnt{A}   $'' and thereafter translate DCC into it. But such an exercise would defeat our whole purpose because then, GMC would no longer be a graded monadic calculus. Note that \citet{persdcc} includes such a rule in his language SDCC, which is shown to be equivalent to (the terminating fragment of) DCC.   
\end{itemize} 

So we see that DCC is not just a graded monadic calculus. The \rref{Prot-Already} makes it something more than that. This rule enables one to flip the modal type constructors. In DCC, from $ \Gamma  \vdash  \ottnt{a}  :   \mathcal{T}_{ \ell_{{\mathrm{1}}} } \:   \mathcal{T}_{ \ell_{{\mathrm{2}}} } \:  \ottnt{A}   $, one can derive $ \Gamma  \vdash   \mathbf{bind} ^{ \ell_{{\mathrm{1}}} } \:  \ottmv{x}  =  \ottnt{a}  \: \mathbf{in} \:   \mathbf{bind} ^{ \ell_{{\mathrm{2}}} } \:  \ottmv{y}  =  \ottmv{x}  \: \mathbf{in} \:   \mathbf{eta} ^{ \ell_{{\mathrm{2}}} }   \mathbf{eta} ^{ \ell_{{\mathrm{1}}} }  \ottmv{y}      :   \mathcal{T}_{ \ell_{{\mathrm{2}}} } \:   \mathcal{T}_{ \ell_{{\mathrm{1}}} } \:  \ottnt{A}   $, using \rref{Prot-Already}. Such a derivation is not possible in a general monadic calculus.

However, if the calculus is also comonadic in addition to being monadic, such a derivation is possible. From $ \mathcal{T}_{ \ell_{{\mathrm{1}}} } \:   \mathcal{T}_{ \ell_{{\mathrm{2}}} } \:  \ottnt{A}  $, using monadic join, we can get $ \mathcal{T}_{   \ell_{{\mathrm{1}}}  \vee  \ell_{{\mathrm{2}}}   } \:  \ottnt{A} $, which is same as $ \mathcal{T}_{   \ell_{{\mathrm{2}}}  \vee  \ell_{{\mathrm{1}}}   } \:  \ottnt{A} $, from which we can get $ \mathcal{T}_{ \ell_{{\mathrm{2}}} } \:   \mathcal{T}_{ \ell_{{\mathrm{1}}} } \:  \ottnt{A}  $, using comonadic fork. So it seems that DCC has some comonadic flavour to it. But is DCC a graded comonadic calculus? In order to address this question, we first need to build the theory of a graded comonadic calculus.

\section{Graded Comonadic Calculus} \label{secgcc}

Soon after \citet{moggi} showed that computational effects can be understood in terms of monads, \citet{brookes} showed that intensional behaviour of programs, for example, the number of steps necessary for reduction, can be understood in terms of comonads. While monads can model how programs affect the environment, comonads can model how the environment affects programs. Comonads, with necessary extra structure, have been used by \citet{uustalu} and \citet{petricek1}, among others, to model various notions of environment-dependent computation like resource usage of programs, computation on streams, etc. Several calculi \citep{ghica,petricek,brunel} have been developed to provide a general account of such environment-dependent computation. These calculi are parametrized by semiring-like structures, and are modelled using semiring-graded comonads, possibly including additional structure. In this section, we forgo extra structures and present a graded comonadic calculus that is the dual of the graded monadic calculus presented in Section \ref{secgmc}.  

The Graded Comonadic Calculus (GCC), similar to GMC, is parametrized by a preordered monoid $\mathcal{M}$. In addition to the types and terms of standard $\lambda$-calculus, GCC has a graded comonadic type $ D_{ m } \:  \ottnt{A} $ and terms related to it, shown below.
\begin{align*}
\text{types}, A, B & ::= \ldots \: | \:  D_{ m } \:  \ottnt{A}  \\
\text{terms}, a, b, f, g & ::= \ldots \: | \:  \mathbf{extr} \:  \ottnt{a}  \: | \:  \mathbf{lift}^{ m }  \ottnt{f}  \: | \:  \mathbf{fork}^{ m_{{\mathrm{1}}} , m_{{\mathrm{2}}} }  \ottnt{a}  \: | \:  \mathbf{up}^{ m_{{\mathrm{1}}} , m_{{\mathrm{2}}} }  \ottnt{a} 
\end{align*} 
Now we look at the typing rules and the equational theory of the calculus.

\subsection{Type System and Equational Theory}

The typing rules for terms related to the graded comonadic type are presented in Figure \ref{typGCC}.
\begin{figure}[h]
\drules[C]{$ \Gamma  \vdash  \ottnt{a}  :  \ottnt{A} $}{Typing rules}{Extract,Fmap,Fork,Up}
\caption{Typing rules of GCC (excerpt)}
\label{typGCC}
\end{figure} 
The \rref{C-Extract,C-Fmap,C-Fork} are generalizations of the corresponding rules for the ungraded comonadic type. The \rref{C-Up}, like \rref{M-Up}, relaxes the grade on the type. If $\mathcal{M}$ is the trivial preordered monoid, then the above rules degenerate to the standard typing rules for comonads. Note that \rref{C-Fmap,C-Up} are essentially the same as \rref{M-Fmap,M-Up} respectively whereas \rref{C-Extract,C-Fork} are like `inverses' of \rref{M-Return,M-Join} respectively.

The equational theory of the calculus appears in Figure \ref{eqGCC} (we omit the $\beta\eta$-rules of  standard $\lambda$-calculus). For presenting the rules, we use the shorthand notation: $ \ottnt{f}  \:  \leftindex^{ m_{{\mathrm{2}}} }{\ll}\!\! =^{ m_{{\mathrm{1}}} }  \ottnt{a}  \triangleq   (   \mathbf{lift}^{ m_{{\mathrm{1}}} }  \ottnt{f}   )   \:   (   \mathbf{fork}^{ m_{{\mathrm{1}}} , m_{{\mathrm{2}}} }  \ottnt{a}   )  $ where $ \ottnt{a}  :   D_{   m_{{\mathrm{1}}}  \cdot  m_{{\mathrm{2}}}   } \:  \ottnt{A}  $ and $ \ottnt{f}  :     D_{ m_{{\mathrm{2}}} } \:  \ottnt{A}    \to  \ottnt{B}  $. Note that $  \_   \:  \leftindex^{ m_{{\mathrm{2}}} }{\ll}\!\! =^{ m_{{\mathrm{1}}} }   \_  $ is a graded-$\mathbf{extend}$ operator.
\begin{figure}
\hspace*{-1cm}
\begin{align}
 \mathbf{lift}^{ m }   (   \lambda  \ottmv{x}  .  \ottmv{x}   )   & \equiv  \lambda  \ottmv{x}  .  \ottmv{x}  \label{eq:cidn}\\
 \mathbf{lift}^{ m }   (    \lambda  \ottmv{x}  .  \ottnt{g}   \:   (   \ottnt{f}  \:  \ottmv{x}   )    )   & \equiv   \lambda  \ottmv{x}  .   (   \mathbf{lift}^{ m }  \ottnt{g}   )    \:   (    (   \mathbf{lift}^{ m }  \ottnt{f}   )   \:  \ottmv{x}   )   \label{eq:ccomp}\\
 \mathbf{up}^{ m_{{\mathrm{1}}} , m_{{\mathrm{1}}} }  \ottnt{a}  & \equiv \ottnt{a} \\
 \mathbf{up}^{ m_{{\mathrm{2}}} , m_{{\mathrm{3}}} }   (   \mathbf{up}^{ m_{{\mathrm{1}}} , m_{{\mathrm{2}}} }  \ottnt{a}   )   & \equiv  \mathbf{up}^{ m_{{\mathrm{1}}} , m_{{\mathrm{3}}} }  \ottnt{a}  \\
\hspace*{-25pt}  \ottnt{f}  \:  \leftindex^{ m_{{\mathrm{2}}} }{\ll}\!\! =^{ m'_{{\mathrm{1}}} }   (   \mathbf{up}^{   m_{{\mathrm{1}}}  \cdot  m_{{\mathrm{2}}}   ,   m'_{{\mathrm{1}}}  \cdot  m_{{\mathrm{2}}}   }  \ottnt{a}   )   & \equiv  \mathbf{up}^{ m_{{\mathrm{1}}} , m'_{{\mathrm{1}}} }   (   \ottnt{f}  \:  \leftindex^{ m_{{\mathrm{2}}} }{\ll}\!\! =^{ m_{{\mathrm{1}}} }  \ottnt{a}   )   \label{eq:cnatl}\\ 
\hspace*{-25pt}  \ottnt{f}  \:  \leftindex^{ m'_{{\mathrm{2}}} }{\ll}\!\! =^{ m_{{\mathrm{1}}} }   (   \mathbf{up}^{   m_{{\mathrm{1}}}  \cdot  m_{{\mathrm{2}}}   ,   m_{{\mathrm{1}}}  \cdot  m'_{{\mathrm{2}}}   }  \ottnt{a}   )   & \equiv   (    \lambda  \ottmv{x}  .  \ottnt{f}   \:   (   \mathbf{up}^{ m_{{\mathrm{2}}} , m'_{{\mathrm{2}}} }  \ottmv{x}   )    )   \:  \leftindex^{ m_{{\mathrm{2}}} }{\ll}\!\! =^{ m_{{\mathrm{1}}} }  \ottnt{a}  \label{eq:cnatr}\\
 \mathbf{extr} \:   (   \ottnt{f}  \:  \leftindex^{ m_{{\mathrm{2}}} }{\ll}\!\! =^{ \ottsym{1} }  \ottnt{a}   )   & \equiv  \ottnt{f}  \:  \ottnt{a}  \label{eq:cidl}\\
  (   \lambda  \ottmv{x}  .   \mathbf{extr} \:  \ottmv{x}    )   \:  \leftindex^{ \ottsym{1} }{\ll}\!\! =^{ m_{{\mathrm{2}}} }  \ottnt{a}  & \equiv \ottnt{a} \\
\hspace*{-25pt}  \ottnt{g}  \:  \leftindex^{ m_{{\mathrm{2}}} }{\ll}\!\! =^{ m_{{\mathrm{1}}} }   (   \ottnt{f}  \:  \leftindex^{ m_{{\mathrm{3}}} }{\ll}\!\! =^{   m_{{\mathrm{1}}}  \cdot  m_{{\mathrm{2}}}   }  \ottnt{a}   )   & \equiv   (    \lambda  \ottmv{x}  .  \ottnt{g}   \:   (   \ottnt{f}  \:  \leftindex^{ m_{{\mathrm{3}}} }{\ll}\!\! =^{ m_{{\mathrm{2}}} }  \ottmv{x}   )    )   \:  \leftindex^{   m_{{\mathrm{2}}}  \cdot  m_{{\mathrm{3}}}   }{\ll}\!\! =^{ m_{{\mathrm{1}}} }  \ottnt{a}    \label{eq:cassoc}
\end{align}
\caption{Equality rules of GCC (excerpt)}
\label{eqGCC}
\end{figure}

The first four rules are same as their counterparts in GMC. The next two rules correspond to commutativity of $\mathbf{extend}$ with $\mathbf{up}$. The two rules after that correspond to $\mathbf{extr}$ being the left and the right identity of $\mathbf{extend}$. The last rule corresponds to associativity of $\mathbf{extend}$.  

Next, we want to interpret the calculus in a suitable category. Similar to GMC, the types of standard $\lambda$-calculus can be interpreted in any bicartesian closed category. To interpret the graded comonadic type, we need a strong graded comonad, the dual of a strong graded monad we saw earlier. We briefly go over the definition of a strong graded comonad and present the categorical model thereafter.

\subsection{Graded Comonad and Categorical Model}

While a graded monad is a kind of lax monoidal functor, a graded comonad is a kind of oplax monoidal functor.  An oplax monoidal functor from a monoidal category $(M,\otimes_M,1_M)$ to a monoidal category $(N,\otimes_N,1_N)$ is nothing but a lax monoidal functor from $(M^{\text{op}},\otimes_M,1_M)$ to $(N^{\text{op}},\otimes_N,1_N)$. Now, given a preordered monoid $\mathcal{M}$, an $ \mathcal{M} $-graded comonad over a category $\Ct$ is an oplax monoidal functor from $\Ca( \mathcal{M} )$ to $\E$. In order to interpret GCC, we need to add strength to the graded comonad. We follow the same strategy as before and define a strong $\mathcal{M}$-graded comonad over a cartesian category $\Ct$ as an oplax monoidal functor from $\Ca({\mathcal{M}})$ to $\E^s$. We use a strong graded comonad to build the categorical model of GCC. 

Let $ \mathcal{M}  = (M, \_\cdot\_  , 1 ,  \leq )$ be the preordered monoid parametrizing the calculus. Let $\Ct$ be any bicartesian closed category. Let $( \mathbf{D} ,\delta,\epsilon)$ be a strong $\mathcal{M}$-graded comonad over $\Ct$. Then, the interpretation, $\llbracket \_ \rrbracket$, of types and terms of GCC is as given in Figure \ref{intGCC}.
\begin{figure}[h]
\begin{align*}
 \llbracket    D_{ m } \:  \ottnt{A}    \rrbracket  & =  \mathbf{D} _{m} \llbracket  \ottnt{A}  \rrbracket   &
 \llbracket   \mathbf{extr} \:  \ottnt{a}   \rrbracket  & = \epsilon \circ  \llbracket  \ottnt{a}  \rrbracket  \\
 \llbracket   \mathbf{lift}^{ m }  \ottnt{f}   \rrbracket  & = \Lambda ( \mathbf{D} _{m}(\Lambda^{-1}  \llbracket  \ottnt{f}  \rrbracket ) \circ t^{ \mathbf{D} _{m}}) &
 \llbracket   \mathbf{fork}^{ m_{{\mathrm{1}}} , m_{{\mathrm{2}}} }  \ottnt{a}   \rrbracket  & = \delta^{m_{{\mathrm{1}}},m_{{\mathrm{2}}}} \circ  \llbracket  \ottnt{a}  \rrbracket  \\
 \llbracket   \mathbf{up}^{ m_{{\mathrm{1}}} , m_{{\mathrm{2}}} }  \ottnt{a}   \rrbracket  & =  \mathbf{D} ^{ m_{{\mathrm{1}}}   \leq   m_{{\mathrm{2}}} } \circ  \llbracket  \ottnt{a}  \rrbracket  
\end{align*}
\caption{Interpretation of GCC (excerpt)}
\label{intGCC}
\end{figure}

Note that $\epsilon$ is a natural transformation from $ \mathbf{D} _1$ to $\Id$ and $\delta^{m_{{\mathrm{1}}},m_{{\mathrm{2}}}}$ are morphisms from $ \mathbf{D} _{ m_{{\mathrm{1}}}  \cdot  m_{{\mathrm{2}}} }$ to $ \mathbf{D} _{m_{{\mathrm{1}}}} \circ  \mathbf{D} _{m_{{\mathrm{2}}}}$, natural in both $m_{{\mathrm{1}}}$ and $m_{{\mathrm{2}}}$. By reasoning along the lines of Theorem \ref{gmcsound}, we can show that the above interpretation provides a sound model for GCC.
\begin{theorem} \label{gccsound}
If $ \Gamma  \vdash  \ottnt{a}  :  \ottnt{A} $ in GCC, then $ \llbracket  \ottnt{a}  \rrbracket  \in \text{Hom}_{\Ct} ( \llbracket  \Gamma  \rrbracket ,  \llbracket  \ottnt{A}  \rrbracket )$. Further, if $ \Gamma  \vdash  \ottnt{a_{{\mathrm{1}}}}  :  \ottnt{A} $ and $ \Gamma  \vdash  \ottnt{a_{{\mathrm{2}}}}  :  \ottnt{A} $ such that $\ottnt{a_{{\mathrm{1}}}} \equiv \ottnt{a_{{\mathrm{2}}}}$ in GCC, then $ \llbracket  \ottnt{a_{{\mathrm{1}}}}  \rrbracket  =  \llbracket  \ottnt{a_{{\mathrm{2}}}}  \rrbracket  \in \text{Hom}_{\Ct} ( \llbracket  \Gamma  \rrbracket ,  \llbracket  \ottnt{A}  \rrbracket )$.
\end{theorem}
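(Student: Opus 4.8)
The plan is to prove this by induction on derivations, exactly dualizing the proof of Theorem~\ref{gmcsound}: a strong $\mathcal{M}$-graded comonad is by definition an oplax monoidal functor $\mathbf{D} : \Ca(\mathcal{M}) \to \E^s$, i.e.\ the formal dual of a strong graded monad, so the two arguments run in parallel with every $\mu$ replaced by $\delta$, every $\eta$ by $\epsilon$, and every arrow in $\E^s$ reversed.

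\emph{Typing preservation.} First I would prove the first assertion by structural induction on the derivation of $ \Gamma  \vdash  \ottnt{a}  :  \ottnt{A} $. The rules inherited from the simply-typed $\lambda$-calculus hold because $\Ct$ is bicartesian closed, just as in the usual interpretation. For the four rules concerning the graded comonadic type, \rref{C-Extract,C-Fmap,C-Fork,C-Up}, it suffices to read off the clauses of Figure~\ref{intGCC} and check domains and codomains: $\epsilon$ is a natural transformation $\mathbf{D}_1 \Rightarrow \Id$, each $\delta^{m_1,m_2}$ is a morphism $\mathbf{D}_{m_1\cdot m_2} \Rightarrow \mathbf{D}_{m_1}\circ\mathbf{D}_{m_2}$, each $\mathbf{D}^{m_1 \le m_2}$ is a natural transformation $\mathbf{D}_{m_1}\Rightarrow\mathbf{D}_{m_2}$, $t^{\mathbf{D}_m}$ is the strength of $\mathbf{D}_m$, and $\Lambda,\Lambda^{-1}$ are the currying/uncurrying bijections of the cartesian closed structure; composing these with the morphism supplied by the induction hypothesis yields a morphism of the required type. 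In particular \rref{C-Fmap} is well defined precisely because $\mathbf{D}_m$ is \emph{strong}.

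\emph{Soundness of the equational theory.} Next I would prove the second assertion by induction on the derivation of $\ottnt{a_{{\mathrm{1}}}} \equiv \ottnt{a_{{\mathrm{2}}}}$. Reflexivity, symmetry, transitivity and the congruence rules are immediate, since equality of morphisms in $\Ct$ is an equivalence relation compatible with composition, pairing, case-analysis and currying, and $\llbracket - \rrbracket$ is defined compositionally. The $\beta\eta$-equalities of the $\lambda$-calculus hold in any bicartesian closed category. It then remains to verify the equations of Figure~\ref{eqGCC}. Equations~(\ref{eq:cidn}) and~(\ref{eq:ccomp}) follow from functoriality of $\mathbf{D}_m$ (the latter also using coherence of the strength $t^{\mathbf{D}_m}$ with composition, Figure~\ref{strong}); the two equations for $\mathbf{up}$ follow because $\mathbf{D}$ is a functor on $\Ca(\mathcal{M})$ and hence preserves identities and composites; equations~(\ref{eq:cnatl}) and~(\ref{eq:cnatr}) follow from naturality of $\delta$ in its first and second arguments respectively, together with the fact that $\delta$ and the $\mathbf{D}^{m\le m'}$ are morphisms of $\E^s$, i.e.\ \emph{strong} natural transformations (Figure~\ref{strongN}), which is what licenses commuting them past the strengths hidden inside $\mathbf{lift}$; equation~(\ref{eq:cidl}) and its right-identity companion follow from the two counit triangles, the dual of the triangle in Figure~\ref{laxDiag}; and equation~(\ref{eq:cassoc}) follows from the coassociativity square, the dual of the square in Figure~\ref{laxDiag}. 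Throughout, the strength axioms of Figures~\ref{strong} and~\ref{strongN} are invoked wherever a $\mathbf{lift}$ must be commuted past a $\mathbf{fork}$, an $\mathbf{extr}$ or an $\mathbf{up}$.

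\emph{Expected difficulty.} There is no essential obstacle — the whole argument is a mechanical dualization of Theorem~\ref{gmcsound} — but the one place care is needed is the strength bookkeeping. Because $ \mathbf{lift}^{ m }  \ottnt{f} $ is interpreted as $\Lambda(\mathbf{D}_m(\Lambda^{-1}\llbracket \ottnt{f}\rrbracket)\circ t^{\mathbf{D}_m})$, the equations linking $\mathbf{lift}$ with $\mathbf{fork}$ and $\mathbf{up}$ force one to unfold the composite strength carried by $\mathbf{D}_{m_1}\circ\mathbf{D}_{m_2}$ as an object of $\E^s$ and to invoke the strong-naturality square of Figure~\ref{strongN} for $\delta$ and for the $\mathbf{D}^{m\le m'}$. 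The subtlety is purely one of variance: since $\mathbf{D}$ is \emph{oplax}, $\delta^{m_1,m_2}$ points from $\mathbf{D}_{m_1\cdot m_2}$ to $\mathbf{D}_{m_1}\circ\mathbf{D}_{m_2}$ and $\epsilon$ from $\mathbf{D}_1$ to $\Id$, so one must consistently use the dual of each diagram in Figure~\ref{laxDiag}; getting these directions right is the only genuine checkpoint.
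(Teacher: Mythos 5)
Your proposal is correct and is essentially the paper's own argument: the paper proves this theorem in one line, "Follows from Theorem \ref{gmcsound} by duality," and your write-up is just that dualization carried out explicitly (every $\mu$ to $\delta$, $\eta$ to $\epsilon$, lax triangles/square to their oplax counterparts, with the same strength bookkeeping as in the monadic case). Nothing further is needed.
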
  
Now that we have a graded comonadic calculus with us, we can test the comonadic character of DCC. In the next section, we explore the relation between DCC and GCC. In particular, we ask: with appropriate restrictions, can we translate GCC into DCC?

\section{DCC and GCC} \label{secdccgcc}

In Section \ref{JustMonadic}, we saw that DCC is not just a graded monadic calculus. The \rref{Prot-Already} lends a comonadic character to it. But does it make DCC a graded comonadic calculus? In other words, over the class of bounded join-semilattices, can we translate GCC into DCC? We can translate $ D_{  \ell  } \:  \ottnt{A} $ to $ \mathcal{T}_{ \ell } \:  \ottnt{A} $. The constructs $\mathbf{lift}$ and $\mathbf{up}$ can be translated as in Figure \ref{GMCtoDCC}. But to translate $\mathbf{extr}$ and $\mathbf{fork}$, we need to relook at the protection judgement.

\subsection{Protection Judgement, Revisited} 

To translate $\mathbf{extr}$ and $\mathbf{fork}$, we need to be able to construct functions having types $  \mathcal{T}_{  \bot  } \:  \ottnt{A}   \to  \ottnt{A} $ and $   \mathcal{T}_{   \ell_{{\mathrm{1}}}  \vee  \ell_{{\mathrm{2}}}   } \:  \ottnt{A}    \to   \mathcal{T}_{ \ell_{{\mathrm{1}}} } \:   \mathcal{T}_{ \ell_{{\mathrm{2}}} } \:  \ottnt{A}   $ respectively, for an arbitrary type $\ottnt{A}$. However, given the formulation of DCC by \citet{dcc}, such a construction is not possible. This is so because in order to construct a function having type $  \mathcal{T}_{  \bot  } \:  \ottnt{A}   \to  \ottnt{A} $, we need to show that: $  \bot   \sqsubseteq  \ottnt{A} $, for an arbitrary $\ottnt{A}$. The protection rules do not allow such a derivation. Similarly, in order to construct a function having type $   \mathcal{T}_{   \ell_{{\mathrm{1}}}  \vee  \ell_{{\mathrm{2}}}   } \:  \ottnt{A}    \to   \mathcal{T}_{ \ell_{{\mathrm{1}}} } \:   \mathcal{T}_{ \ell_{{\mathrm{2}}} } \:  \ottnt{A}   $, we need to show that: $  \ell_{{\mathrm{1}}}  \vee  \ell_{{\mathrm{2}}}   \sqsubseteq   \mathcal{T}_{ \ell_{{\mathrm{1}}} } \:   \mathcal{T}_{ \ell_{{\mathrm{2}}} } \:  \ottnt{A}   $, for an arbitrary $\ottnt{A}$. Again, such a derivation is not allowed by the protection rules.

However, from a dependency perspective, $  \bot   \sqsubseteq  \ottnt{A} $ and $  \ell_{{\mathrm{1}}}  \vee  \ell_{{\mathrm{2}}}   \sqsubseteq   \mathcal{T}_{ \ell_{{\mathrm{1}}} } \:   \mathcal{T}_{ \ell_{{\mathrm{2}}} } \:  \ottnt{A}   $ are sound judgements. The judgement $  \bot   \sqsubseteq  \ottnt{A} $ is sound because: $ \bot $ is the lowest security level and as such, the terms of any type are at least as secure as $ \bot $. The judgement $  \ell_{{\mathrm{1}}}  \vee  \ell_{{\mathrm{2}}}   \sqsubseteq   \mathcal{T}_{ \ell_{{\mathrm{1}}} } \:   \mathcal{T}_{ \ell_{{\mathrm{2}}} } \:  \ottnt{A}   $ is sound because: $ \mathcal{T}_{ \ell_{{\mathrm{1}}} } \:   \mathcal{T}_{ \ell_{{\mathrm{2}}} } \:  \ottnt{A}  $ is at least as secure as $\ell_{{\mathrm{1}}}$ and $ \mathcal{T}_{ \ell_{{\mathrm{1}}} } \:   \mathcal{T}_{ \ell_{{\mathrm{2}}} } \:  \ottnt{A}  $ is also at least as secure as $\ell_{{\mathrm{2}}}$, so it must be at least as secure as $ \ell_{{\mathrm{1}}}  \vee  \ell_{{\mathrm{2}}} $. This reasoning is supported by the lattice model of \citet{denning1}.  

Now the above judgements are not only sound, but also desirable. Compared to $\ottnt{A}$, the type $ \mathcal{T}_{  \bot  } \:  \ottnt{A} $ offers no extra protection. So, it makes sense to allow programs like the one shown below. 
\begin{center}
$  \ottmv{x}  :   \mathcal{T}_{  \bot  } \:  \ottnt{A}    \vdash   \mathbf{bind} ^{  \bot  } \:  \ottmv{y}  =  \ottmv{x}  \: \mathbf{in} \:  \ottmv{y}   :  \ottnt{A} $ 
\end{center}
Next, the type $ \mathcal{T}_{ \ell_{{\mathrm{1}}} } \:   \mathcal{T}_{ \ell_{{\mathrm{2}}} } \:  \ottnt{A}  $ offers no less protection than the type $ \mathcal{T}_{   \ell_{{\mathrm{1}}}  \vee  \ell_{{\mathrm{2}}}   } \:  \ottnt{A} $. So, programs like:
\begin{center}
$  \ottmv{x}  :   \mathcal{T}_{   \ell_{{\mathrm{1}}}  \vee  \ell_{{\mathrm{2}}}   } \:  \ottnt{A}    \vdash   \mathbf{bind} ^{   \ell_{{\mathrm{1}}}  \vee  \ell_{{\mathrm{2}}}   } \:  \ottmv{y}  =  \ottmv{x}  \: \mathbf{in} \:   \mathbf{eta} ^{ \ell_{{\mathrm{1}}} }   \mathbf{eta} ^{ \ell_{{\mathrm{2}}} }  \ottmv{y}     :   \mathcal{T}_{ \ell_{{\mathrm{1}}} } \:   \mathcal{T}_{ \ell_{{\mathrm{2}}} } \:  \ottnt{A}   $
\end{center}
should be allowed. 

Allowing programs like the one above has some interesting consequences. For example, consider the lattice shown below. 
\begin{center}
\begin{tikzcd}[row sep=tiny,column sep=tiny]
& \top \\
& \ell_{{\mathrm{3}}} \arrow[u,dash] \\
\ell_{{\mathrm{21}}} \arrow[ur,dash] & & \ell_{{\mathrm{22}}} \arrow[ul,dash] \\
\ell_{{\mathrm{11}}} \arrow[u,dash] & & \ell_{{\mathrm{12}}} \arrow[u,dash] \\
& \bot \arrow[ul,dash] \arrow[ur,dash]
\end{tikzcd}
\end{center}
Here, $ \ell_{{\mathrm{11}}}  \vee  \ell_{{\mathrm{12}}}  = \ell_{{\mathrm{3}}}$. On specializing the example program, we have:
\begin{center}
 $  \ottmv{x}  :   \mathcal{T}_{ \ell_{{\mathrm{3}}} } \:  \ottnt{A}    \vdash   \mathbf{bind} ^{ \ell_{{\mathrm{3}}} } \:  \ottmv{y}  =  \ottmv{x}  \: \mathbf{in} \:   \mathbf{eta} ^{ \ell_{{\mathrm{11}}} }   \mathbf{eta} ^{ \ell_{{\mathrm{12}}} }  \ottmv{y}     :   \mathcal{T}_{ \ell_{{\mathrm{11}}} } \:   \mathcal{T}_{ \ell_{{\mathrm{12}}} } \:  \ottnt{A}   $.
\end{center}
This program shows that we can observe $\ell_{{\mathrm{3}}}$-level values in an environment simultaneously protected by $\ell_{{\mathrm{11}}}$ and $\ell_{{\mathrm{12}}}$. Two points are worth noting here.
\begin{itemize}
\item First, $\ell_{{\mathrm{11}}}$ and $\ell_{{\mathrm{12}}}$ together offer much much more protection than either of them individually. Observe that neither $\ell_{{\mathrm{11}}}$ nor $\ell_{{\mathrm{12}}}$ individually offer as much protection as either $\ell_{{\mathrm{21}}}$ or $\ell_{{\mathrm{22}}}$. But, $\ell_{{\mathrm{11}}}$ and $\ell_{{\mathrm{12}}}$ together offer more protection than both $\ell_{{\mathrm{21}}}$ and $\ell_{{\mathrm{22}}}$. Behind this observation, lies a fundamental security principle, the principle that forms the basis of applications like two-factor authentication, two-man rule, etc. It may be phrased in terms of the age-old proverb: the whole is more than just the sum of its parts.
\item Second, $\ell_{{\mathrm{3}}}$ is compromised if $\ell_{{\mathrm{11}}}$ and $\ell_{{\mathrm{12}}}$ are simultaneously compromised. This is so because with a simultaneous access to $\ell_{{\mathrm{11}}}$ and $\ell_{{\mathrm{12}}}$, one has access to $\ell_{{\mathrm{3}}}$ and all levels below it, even $\ell_{{\mathrm{21}}}$ and $\ell_{{\mathrm{22}}}$. It may look counter-intuitive but that just shows the power of simultaneous access and simultaneous protection.   
\end{itemize}
To enable such reasoning within the calculus, we add the following rules to the protection judgement of DCC.
\drules[Prot]{$ \ell  \sqsubseteq  \ottnt{A} $}{Extended Protection Rules}{Minimum,Combine}
As a side note, we shall point out that in the categorical model of DCC given by \citet{dcc}, for an arbitrary $\ottnt{A}$, the interpretations of $ \mathcal{T}_{  \bot  } \:  \ottnt{A} $ and $ \mathcal{T}_{   \ell_{{\mathrm{1}}}  \vee  \ell_{{\mathrm{2}}}   } \:  \ottnt{A} $ are the same as those of $\ottnt{A}$ and $ \mathcal{T}_{ \ell_{{\mathrm{1}}} } \:   \mathcal{T}_{ \ell_{{\mathrm{2}}} } \:  \ottnt{A}  $ respectively. So, DCC extended with the above rules enjoys the same categorical model. For the sake of precision, we shall call DCC extended with these rules \ED{}. The equational theory of \ED{} is defined in the same way as that of DCC. So then, DCC is a proper sub-language of \ED{}. Since DCC is graded monadic, so is \ED{}. Owing to the reasons described above, DCC is not graded comonadic. But \ED{} is graded comonadic, as we see next.

\subsection{\ED{} is a Graded Comonadic Calculus}

Over the class of bounded join-semilattices, we can translate GCC into \ED{}. Let $ \mathcal{L}  = (L,\vee,\bot)$ be a bounded join-semilattice. Then, the translation $\overline{\phantom{a}}$ is given in Figure \ref{GCCtoDCC}. Note the role played by \rref{Prot-Minimum,Prot-Combine} in the translation of $ \mathbf{extr} \:  \ottnt{a} $ and $ \mathbf{fork}^{  \ell_{{\mathrm{1}}}  ,  \ell_{{\mathrm{2}}}  }  \ottnt{a} $ respectively. The next theorem shows that this translation preserves typing and meaning.
\begin{figure}[h]
\begin{align*}
 \overline{   D_{  \ell  } \:  \ottnt{A}   }  & =  \mathcal{T}_{ \ell } \:   \overline{ \ottnt{A} }   \\
 \overline{  \mathbf{extr} \:  \ottnt{a}  }  & =  \mathbf{bind} ^{  \bot  } \:  \ottmv{x}  =   \overline{ \ottnt{a} }   \: \mathbf{in} \:  \ottmv{x}  \\
 \overline{  \mathbf{lift}^{  \ell  }  \ottnt{f}  }  & =  \lambda  \ottmv{x}  :    \mathcal{T}_{ \ell } \:   \overline{ \ottnt{A} }     .   \mathbf{bind} ^{ \ell } \:  \ottmv{y}  =  \ottmv{x}  \: \mathbf{in} \:   \mathbf{eta} ^{ \ell }   (     \overline{ \ottnt{f} }    \:  \ottmv{y}   )     \hspace{3pt} [ \text{Here, }  \ottnt{f}  :   \ottnt{A}  \to  \ottnt{B}   ] \\
 \overline{  \mathbf{fork}^{  \ell_{{\mathrm{1}}}  ,  \ell_{{\mathrm{2}}}  }  \ottnt{a}  }  & =  \mathbf{bind} ^{   \ell_{{\mathrm{1}}}  \vee  \ell_{{\mathrm{2}}}   } \:  \ottmv{x}  =    \overline{ \ottnt{a} }    \: \mathbf{in} \:   \mathbf{eta} ^{ \ell_{{\mathrm{1}}} }   \mathbf{eta} ^{ \ell_{{\mathrm{2}}} }  \ottmv{x}    \\
 \overline{  \mathbf{up}^{  \ell_{{\mathrm{1}}}  ,  \ell_{{\mathrm{2}}}  }  \ottnt{a}  }  & =  \mathbf{bind} ^{ \ell_{{\mathrm{1}}} } \:  \ottmv{x}  =    \overline{ \ottnt{a} }    \: \mathbf{in} \:   \mathbf{eta} ^{ \ell_{{\mathrm{2}}} }  \ottmv{x}  
\end{align*}
\caption{Translation function from GCC to \ED (excerpt)}
\label{GCCtoDCC}
\end{figure}
\begin{theorem} \label{GCCtoDCCe}
If $ \Gamma  \vdash  \ottnt{a}  :  \ottnt{A} $ in GCC($ \mathcal{L} $), then $  \overline{  \Gamma  }   \vdash   \overline{ \ottnt{a} }   :   \overline{ \ottnt{A} }  $ in \ED{}($ \mathcal{L} $). Further, if $ \Gamma  \vdash  \ottnt{a_{{\mathrm{1}}}}  :  \ottnt{A} $ and $ \Gamma  \vdash  \ottnt{a_{{\mathrm{2}}}}  :  \ottnt{A} $ such that $\ottnt{a_{{\mathrm{1}}}} \equiv \ottnt{a_{{\mathrm{2}}}}$ in GCC($ \mathcal{L} $), then $ \overline{ \ottnt{a_{{\mathrm{1}}}} }  \simeq  \overline{ \ottnt{a_{{\mathrm{2}}}} } $ in \ED{}($ \mathcal{L} $).
\end{theorem}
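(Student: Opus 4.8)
I would prove both halves by structural induction, closely following the proof of Theorem~\ref{GMCtoDCCTh} for GMC. The typing statement I would establish by induction on the derivation of $\Gamma \vdash a : A$ in GCC($\mathcal{L}$), and the equational statement by induction on the derivation of $a_1 \equiv a_2$. For the typing part, all the cases for the standard $\lambda$-calculus constructs are immediate, since $\overline{\phantom{a}}$ acts homomorphically on them and the relevant typing rules of GCC and \ED{} agree; so the real work is confined to the four rules \rref{C-Extract,C-Fmap,C-Fork,C-Up}.

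For those four cases I would just check that the terms prescribed in Figure~\ref{GCCtoDCC} are typable in \ED{}, under the translation $\overline{D_{m}\,A} = \mathcal{T}_{m}\,\overline{A}$. The rules \rref{C-Fmap,C-Up} translate verbatim as in the GMC case, and their translations are well typed by the same argument — the side condition $\ell \sqsubseteq \mathcal{T}_{\ell}\,\overline{B}$ comes from \rref{Prot-Monad} together with reflexivity of $\sqsubseteq$. The genuinely new cases are \rref{C-Extract} and \rref{C-Fork}. The translation of $\mathbf{extr}\,a$ is $\mathbf{bind}^{\bot}\,x = \overline{a}\,\mathbf{in}\,x$, which typechecks exactly because $\bot \sqsubseteq \overline{A}$ holds for every type by \rref{Prot-Minimum}; and the translation of $\mathbf{fork}^{\ell_1,\ell_2}\,a$ is $\mathbf{bind}^{\ell_1 \vee \ell_2}\,x = \overline{a}\,\mathbf{in}\,\mathbf{eta}^{\ell_1}\mathbf{eta}^{\ell_2}\,x$, which requires the derivation $\ell_1 \vee \ell_2 \sqsubseteq \mathcal{T}_{\ell_1}\,\mathcal{T}_{\ell_2}\,\overline{A}$. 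I would obtain this by applying \rref{Prot-Combine} to $\ell_1 \sqsubseteq \mathcal{T}_{\ell_1}\,\mathcal{T}_{\ell_2}\,\overline{A}$ (from \rref{Prot-Monad}) and $\ell_2 \sqsubseteq \mathcal{T}_{\ell_1}\,\mathcal{T}_{\ell_2}\,\overline{A}$ (from \rref{Prot-Already} applied to $\ell_2 \sqsubseteq \mathcal{T}_{\ell_2}\,\overline{A}$). Assembling this protection derivation is the one delicate point of the proof and is precisely where the extra rules of \ED{} are needed; plain DCC cannot derive either side condition.

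For the equational part, recall that equality in \ED{} is decided by erasure: $a_1 \simeq a_2$ iff $\lfloor a_1 \rfloor \equiv \lfloor a_2 \rfloor$ as plain $\lambda$-terms, with $\lfloor \mathbf{eta}^{\ell}\,a \rfloor = \lfloor a \rfloor$ and $\lfloor \mathbf{bind}^{\ell}\,x = a\,\mathbf{in}\,b \rfloor = \lfloor b \rfloor\{\lfloor a \rfloor/x\}$. The first thing I would do is compute $\lfloor \overline{\phantom{a}} \rfloor$ on the GCC-specific constructs: $\mathbf{extr}$, $\mathbf{fork}$, and $\mathbf{up}$ all erase to the identity, whereas $\mathbf{lift}^{\ell}\,f$ erases to $\lambda x.\,\lfloor\overline{f}\rfloor\,x$, which is $\eta$-equal to $\lfloor\overline{f}\rfloor$; hence the graded $\mathbf{extend}$ term $(\mathbf{lift}^{\ell_1}\,f)\,(\mathbf{fork}^{\ell_1,\ell_2}\,a)$ erases, up to $\beta\eta$, to the plain application $\lfloor\overline{f}\rfloor\,\lfloor\overline{a}\rfloor$. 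Since $\overline{\phantom{a}}$ and $\lfloor\cdot\rfloor$ are each defined by structural recursion and commute with substitution, it then suffices to check that $\lfloor\overline{\phantom{a}}\rfloor$ carries each axiom of Figure~\ref{eqGCC} to a $\beta\eta$-equality, and the congruence closure handles the rest. These checks are routine: \eqref{eq:cidn} and \eqref{eq:ccomp} reduce to $\beta\eta$ on the erased $\mathbf{lift}$; the two $\mathbf{up}$-laws and \eqref{eq:cnatl}, \eqref{eq:cnatr} are trivial once $\mathbf{up}$ has been erased away; and \eqref{eq:cidl}, the right-identity law, and \eqref{eq:cassoc} collapse to $\beta$-steps on plain applications, just as their GMC counterparts do in Theorem~\ref{GMCtoDCCTh}. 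The only step I expect to require real care is the \rref{C-Fork} typing case described above; everything else is mechanical once the erasure computations are in hand.
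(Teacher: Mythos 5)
Your proposal matches the paper's proof: the typing half proceeds by induction with only \rref{C-Extract} and \rref{C-Fork} requiring new work, discharged via \rref{Prot-Minimum} and via \rref{Prot-Combine} applied to \rref{Prot-Monad} and \rref{Prot-Already} exactly as you describe, and the equational half is reduced to checking that erasure of the translated terms yields $\beta\eta$-equal plain $\lambda$-terms. The paper is in fact terser than you are on the equational part, simply asserting the erasure claim, so your more explicit case analysis is a superset of what it records.
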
 
Earlier, we showed that DCC is a graded monadic calculus by translating GMC into it. However, we couldn't translate DCC into GMC because DCC has some comonadic character to it. Thereafter, we designed a graded comonadic calculus, GCC. Using GCC, we put the comonadic character of DCC to test. We could not translate GCC to DCC, thus showing that DCC is not fully comonadic. Next, we found that such a translation is not possible only because DCC does not allow certain derivations that are both sound and desirable. We extended DCC to \ED{} to allow these derivations and found that we can translate GCC (and GMC) into \ED{}.  Now, can we go the other way around and translate \ED{} into a calculus built up using just GMC and GCC?

\section{Graded Monadic Comonadic Calculus} \label{secgmcc}

\subsection{The Calculus}

The Graded Monadic Comonadic Calculus (GMCC) combines the Graded Monadic Calculus (GMC) and the Graded Comonadic Calculus (GCC) into a single system. We can view it as an extension of the standard simply-typed $\lambda$-calculus with a graded type constructor $S_{m}$, which behaves both like a graded monadic type constructor, $T_{m}$, and a graded comonadic type constructor, $D_{m}$. The calculus has as terms the union of those of GMC and GCC. The typing rules of the calculus include the rules of GMC and GCC (shown in Figures \ref{typGMC} and \ref{typGCC} respectively) with $S_{m}$ replacing $T_{m}$ and $D_{m}$.

The equational theory of the calculus is generated by the equational theories of GMC and GCC (presented in Figures \ref{eqGMC} and \ref{eqGCC} respectively) along with the following additional rules: $ \mathbf{extr} \:   (   \ottkw{ret}  \:  \ottnt{a}   )   \equiv \ottnt{a}$ and $ \ottkw{ret}  \:   (   \mathbf{extr} \:  \ottnt{a}   )   \equiv \ottnt{a}$ and $ \mathbf{fork}^{ m_{{\mathrm{1}}} , m_{{\mathrm{2}}} }   (   \mathbf{join}^{ m_{{\mathrm{1}}} , m_{{\mathrm{2}}} }  \ottnt{a}   )   \equiv \ottnt{a}$ and $ \mathbf{join}^{ m_{{\mathrm{1}}} , m_{{\mathrm{2}}} }   (   \mathbf{fork}^{ m_{{\mathrm{1}}} , m_{{\mathrm{2}}} }  \ottnt{a}   )   \equiv \ottnt{a}$. These additional rules ensure that the monadic and the comonadic fragments of the calculus behave well with respect to one another. 
\subsection{Categorical Model} \label{GMCCModel}

GMCC enjoys a nice categorical model, as we show next.\\ We interpret the graded type constructor of the calculus as a kind of strong monoidal functor \citep{maclane}. A strong monoidal functor from a monoidal category $(M,\otimes_M,1_M)$ to a monoidal category $(N,\otimes_N,1_N)$ is a lax monoidal functor $(F,F_2,F_0) : (M,\otimes_M,1_M) \to (N,\otimes_N,1_N)$ where $F_0$ and $F_2 (X , Y)$ are invertible for all $X, Y \in \text{Obj}(M)$. Thus, for a strong monoidal functor $S : (M,\otimes_M,1_M) \to (N,\otimes_N,1_N)$, we have: $S (1_M) \cong 1_N$ and $S (X \otimes_M Y) \cong S(X) \otimes_N S(Y)$ for all $X, Y \in \text{Obj}(M)$. Note that if these isomorphisms are identities, then the functor is said to be strict. Further, note that the word `strong' in `strong monoidal functor' and in `strong endofunctor' refer to different properties.

Let $ \mathcal{M}  = (M, \_\cdot\_  , 1 ,  \leq )$ be the preordered monoid parametrizing the calculus. Let $\Ct$ be any bicartesian closed category. Let $\mathbf{S}$ be a strong monoidal functor from $\Ca({\mathcal{M}})$ to $\E^s$. Then $\mathbf{S}$ is both a strong $\mathcal{M}$-graded monad over $\Ct$ and a strong $\mathcal{M}$-graded comonad over $\Ct$. With regard to $\mathbf{S}$, let $\mu,\eta,\delta,\epsilon$ denote the corresponding natural transformations. Then,
\begin{align*}
\epsilon \circ \eta & = \text{id} & \eta \circ \epsilon & = \text{id} \\
\delta \circ \mu & = \text{id} & \mu \circ \delta & = \text{id}
\end{align*} 

We interpret $ S_{ m } \:  \ottnt{A} $ as: $ \llbracket   S_{ m } \:  \ottnt{A}   \rrbracket  = \mathbf{S}_{m}  \llbracket  \ottnt{A}  \rrbracket $. The terms are interpreted as in Figures \ref{intGMC} and \ref{intGCC}.\\ This gives us a sound interpretation of the calculus. 
\begin{theorem} \label{gmccsound}
If $ \Gamma  \vdash  \ottnt{a}  :  \ottnt{A} $ in GMCC, then $ \llbracket  \ottnt{a}  \rrbracket  \in \text{Hom}_{\Ct} ( \llbracket  \Gamma  \rrbracket ,  \llbracket  \ottnt{A}  \rrbracket )$. Further, if $ \Gamma  \vdash  \ottnt{a_{{\mathrm{1}}}}  :  \ottnt{A} $ and $ \Gamma  \vdash  \ottnt{a_{{\mathrm{2}}}}  :  \ottnt{A} $ such that $\ottnt{a_{{\mathrm{1}}}} \equiv \ottnt{a_{{\mathrm{2}}}}$ in GMCC, then $ \llbracket  \ottnt{a_{{\mathrm{1}}}}  \rrbracket  =  \llbracket  \ottnt{a_{{\mathrm{2}}}}  \rrbracket  \in \text{Hom}_{\Ct} ( \llbracket  \Gamma  \rrbracket ,  \llbracket  \ottnt{A}  \rrbracket )$.
\end{theorem}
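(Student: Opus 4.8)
The plan is to mirror the two-part template behind Theorems~\ref{gmcsound} and~\ref{gccsound}, leaning on the fact that the interpreting functor $\mathbf{S}$ is \emph{simultaneously} a strong $\mathcal{M}$-graded monad and a strong $\mathcal{M}$-graded comonad over $\Ct$. Being a strong monoidal functor from $\Ca(\mathcal{M})$ to $\E^s$, $\mathbf{S}$ is in particular lax monoidal, hence a strong graded monad with unit $\eta$ and multiplication $\mu$ supplied by its structure maps; inverting those (invertible) structure maps exhibits $\mathbf{S}$ as oplax monoidal, hence as a strong graded comonad with counit $\epsilon$ and comultiplication $\delta$. The upshot is that both interpretation tables, Figures~\ref{intGMC} and~\ref{intGCC}, become available for GMCC once $\T$ and $\mathbf{D}$ are uniformly replaced by $\mathbf{S}$, and I would take $\llbracket S_{m}\,A \rrbracket = \mathbf{S}_{m}\llbracket A \rrbracket$ as the interpretation of the graded type constructor.

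First I would prove the typing statement by induction on the derivation of $\Gamma \vdash a : A$. The rules of GMCC are the standard $\lambda$-calculus rules together with those of Figures~\ref{typGMC} and~\ref{typGCC} (with $S_{m}$ in place of $T_{m}$ and $D_{m}$). The $\lambda$-calculus cases are discharged as usual in the bicartesian closed category $\Ct$; the cases \rref{M-Return,M-Fmap,M-Join,M-Up} go through verbatim as in the proof of Theorem~\ref{gmcsound}, reading $\mathbf{S}$ for $\T$; and \rref{C-Extract,C-Fmap,C-Fork,C-Up} verbatim as in Theorem~\ref{gccsound}, reading $\mathbf{S}$ for $\mathbf{D}$. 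The only genuine check is that the two tables overlap on $\mathbf{lift}^{m}f$ and on $\mathbf{up}^{m_1,m_2}a$; there is no clash, because $\mathbf{S}_{m}$ is a single object of $\E^s$ carrying a unique strength $t^{\mathbf{S}_{m}}$ (and $\mathbf{S}^{m_1 \leq m_2}$ is just $\mathbf{S}$ applied to the unique arrow $m_1 \to m_2$), so both prescriptions yield literally the same morphism. Hence $\llbracket a \rrbracket$ is a well-defined element of $\text{Hom}_{\Ct}(\llbracket \Gamma \rrbracket, \llbracket A \rrbracket)$.

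Next I would prove that $a_1 \equiv a_2$ implies $\llbracket a_1 \rrbracket = \llbracket a_2 \rrbracket$, by induction on the derivation of the equality. The GMCC equality is the congruence generated by the equations of Figure~\ref{eqGMC}, the equations of Figure~\ref{eqGCC}, and the four bridging equations $\mathbf{extr}\,(\ottkw{ret}\,a) \equiv a$, $\ottkw{ret}\,(\mathbf{extr}\,a) \equiv a$, $\mathbf{fork}^{m_1,m_2}(\mathbf{join}^{m_1,m_2}\,a) \equiv a$ and $\mathbf{join}^{m_1,m_2}(\mathbf{fork}^{m_1,m_2}\,a) \equiv a$. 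The reflexivity, symmetry, transitivity and congruence cases are the usual compositional argument: $\llbracket - \rrbracket$ sends a term to a morphism built functorially out of the interpretations of its immediate subterms. The equations of Figure~\ref{eqGMC} hold because $\mathbf{S}$, as a strong graded monad, validates them (this is the content of the soundness argument behind Theorem~\ref{gmcsound}); dually for Figure~\ref{eqGCC} via Theorem~\ref{gccsound}. For the four bridging equations I would simply unfold the interpretations and cancel using the identities recorded just before the statement: $\mathbf{extr}\,(\ottkw{ret}\,a)$ becomes $\epsilon \circ \eta \circ \llbracket a \rrbracket = \llbracket a \rrbracket$ since $\epsilon \circ \eta = \text{id}$; $\ottkw{ret}\,(\mathbf{extr}\,a)$ becomes $\eta \circ \epsilon \circ \llbracket a \rrbracket = \llbracket a \rrbracket$ since $\eta \circ \epsilon = \text{id}$; $\mathbf{fork}^{m_1,m_2}(\mathbf{join}^{m_1,m_2}\,a)$ becomes $\delta^{m_1,m_2} \circ \mu^{m_1,m_2} \circ \llbracket a \rrbracket = \llbracket a \rrbracket$ since $\delta \circ \mu = \text{id}$; and symmetrically for the last one using $\mu \circ \delta = \text{id}$.

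The conceptual heart of the argument, and the step I expect to demand the most care, is the structural fact that a strong monoidal functor $\mathbf{S}:\Ca(\mathcal{M}) \to \E^s$ really does underlie, at once, a strong graded monad and a strong graded comonad whose unit/counit and multiplication/comultiplication are mutually inverse --- that is, that the lax structure and the oplax structure obtained by inverting it both satisfy their coherence axioms (Figure~\ref{laxDiag} and its dual) and are compatible with the ambient strengths in the sense of Figures~\ref{strong} and~\ref{strongN}. Granting this (and hence the four displayed identities, which then become immediate), everything downstream is routine bookkeeping: the monadic and comonadic cases transport directly from the proofs of Theorems~\ref{gmcsound} and~\ref{gccsound}, and only the overlap on $\mathbf{lift}$/$\mathbf{up}$ and the four bridging equations require fresh --- and short --- verification.
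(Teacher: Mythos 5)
Your proposal is correct and follows essentially the same route as the paper: the paper's proof likewise observes that a strong monoidal functor is both lax and oplax monoidal, so the typing and equational cases transport directly from Theorems~\ref{gmcsound} and~\ref{gccsound}, with the four bridging equations discharged by the identities $\epsilon\circ\eta=\text{id}$, $\eta\circ\epsilon=\text{id}$, $\delta\circ\mu=\text{id}$, $\mu\circ\delta=\text{id}$ recorded in Section~\ref{GMCCModel}. Your additional check that the two interpretation tables agree on $\mathbf{lift}$ and $\mathbf{up}$ is a reasonable point of care that the paper leaves implicit.
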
 

The theorem above shows that given a preordered monoid $\mathcal{M}$, any bicartesian closed category $\Ct$ together with a strong monoidal functor from $\Ca({\mathcal{M}})$ to $\E^s$ provides a sound model for GMCC($ \mathcal{M} $). In addition to soundness, GMCC($ \mathcal{M} $) also enjoys completeness with respect to its class of categorical models. Formally, we can show:

\begin{theorem} \label{gmcccomplete}
Given any preordered monoid $ \mathcal{M} $, for typing derivations $ \Gamma  \vdash  \ottnt{a_{{\mathrm{1}}}}  :  \ottnt{A} $ and $ \Gamma  \vdash  \ottnt{a_{{\mathrm{2}}}}  :  \ottnt{A} $ in GMCC($ \mathcal{M} $), if $ \llbracket  \ottnt{a_{{\mathrm{1}}}}  \rrbracket  =  \llbracket  \ottnt{a_{{\mathrm{2}}}}  \rrbracket $ in all models of GMCC($ \mathcal{M} $), then $\ottnt{a_{{\mathrm{1}}}} \equiv \ottnt{a_{{\mathrm{2}}}}$ is derivable in GMCC($ \mathcal{M} $).
\end{theorem}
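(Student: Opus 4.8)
The plan is to prove completeness by the standard \emph{term model} (classifying category) argument: build a model out of the syntax itself and show that the interpretation into that model sends a term to (the equivalence class of) itself, so that equality of denotations in \emph{all} models — in particular in this one — forces syntactic equality. Concretely, let $\Ct_{\mathrm{GMCC}}$ be the category whose objects are the types of GMCC($\mathcal{M}$) and whose morphisms $A \to B$ are equivalence classes, modulo $\equiv$, of terms $x : A \vdash b : B$, with composition given by capture-avoiding substitution and identities by variables. First I would check that this is a well-defined category (using the substitution lemma and the $\beta\eta$-laws), and then, along entirely routine lines (cf.\ the Lambek–Scott treatment of the simply-typed $\lambda$-calculus core), that it is bicartesian closed: $\ottkw{Unit}$ and $\mathbf{Void}$ give the terminal and initial objects, $A \times B$ and $A + B$ the products and coproducts, $A \to B$ the exponentials, and contexts $\Gamma$ are interpreted as iterated products in the usual way.

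Next I would equip $\Ct_{\mathrm{GMCC}}$ with the graded structure required by Section~\ref{GMCCModel}, namely a strong monoidal functor $\mathbf{S} : \Ca(\mathcal{M}) \to \E^s$. On objects $\mathbf{S}$ sends $m$ to the endofunctor $A \mapsto S_{m}\,A$ with action on morphisms given by $\mathbf{lift}^{m}$ (functoriality being exactly \eqref{eq:idn} and \eqref{eq:comp}); on the unique morphism $m_1 \le m_2$ it gives the natural transformation $\mathbf{up}^{m_1,m_2}$; the unitor is $\mathbf{S}_0 = \ottkw{ret} : \Id \to S_1$ and the laxator is $\mathbf{S}_2(m_1,m_2) = \mathbf{join}^{m_1,m_2} : S_{m_1}\circ S_{m_2} \to S_{m_1 \cdot m_2}$. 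Each $S_m$ carries a syntactically definable strength, $t^{S_m}_{A,B} = \lambda p.\,\bigl(\mathbf{lift}^{m}(\lambda y.\,(\mathbf{proj}_1\,p,\,y))\bigr)\,(\mathbf{proj}_2\,p)$, and the strong-endofunctor diagrams of Figure~\ref{strong}, the strong-natural-transformation diagram of Figure~\ref{strongN} for $\mathbf{up}$, the lax-monoidal coherences of Figure~\ref{laxDiag} (which are the unit laws \eqref{eq:idl}, \eqref{eq:idr} and associativity \eqref{eq:assoc}) and naturality in the type argument and in $m_1,m_2$ (from \eqref{eq:natl}, \eqref{eq:natr} and the corresponding congruence/naturality rules of the full theory, of which the figures show only an excerpt) all reduce to instances of the GMC and GCC equational theories. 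Crucially, the four extra GMCC equations $\mathbf{extr}\,(\ottkw{ret}\,a)\equiv a$, $\ottkw{ret}\,(\mathbf{extr}\,a)\equiv a$, $\mathbf{fork}^{m_1,m_2}(\mathbf{join}^{m_1,m_2}\,a)\equiv a$, $\mathbf{join}^{m_1,m_2}(\mathbf{fork}^{m_1,m_2}\,a)\equiv a$ say precisely that $\ottkw{ret}/\mathbf{extr}$ and $\mathbf{join}/\mathbf{fork}$ are mutually inverse, so $\mathbf{S}_0$ and $\mathbf{S}_2$ are invertible and $\mathbf{S}$ is \emph{strong} monoidal; it is therefore automatically both a strong graded monad and a strong graded comonad, and $(\Ct_{\mathrm{GMCC}},\mathbf{S})$ is a bona fide model of GMCC($\mathcal{M}$).

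The heart of the argument is the \emph{coincidence lemma}: for every derivation $\Gamma \vdash a : A$, the denotation $\llbracket a \rrbracket$ computed in $(\Ct_{\mathrm{GMCC}},\mathbf{S})$ equals the morphism $\llbracket\Gamma\rrbracket \to \llbracket A\rrbracket$ represented by $a$ itself. I would prove this by induction on the typing derivation. The $\lambda$-calculus cases are mechanical given the chosen bicartesian closed structure; for $\ottkw{ret}$, $\mathbf{join}$, $\mathbf{fork}$, $\mathbf{extr}$ and $\mathbf{up}$, the interpretation clauses of Figures~\ref{intGMC} and~\ref{intGCC} simply postcompose with $\eta$, $\mu$, $\delta$, $\epsilon$ or $\mathbf{S}^{m_1 \le m_2}$, each of which in the term model is the class of the corresponding syntactic constructor applied to a variable, so the induction hypothesis closes the case. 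The one genuinely delicate case is $\mathbf{lift}^{m}$, whose interpretation is $\Lambda(S_m(\Lambda^{-1}\llbracket f\rrbracket)\circ t^{S_m})$: here I would unfold $\Lambda$, $\Lambda^{-1}$ and the syntactic strength $t^{S_m}$ and use \eqref{eq:comp} (plus the $\beta\eta$-laws) to rewrite the composite to the class of $\mathbf{lift}^m\,f$. With the coincidence lemma in hand the theorem follows immediately: if $\llbracket a_1\rrbracket = \llbracket a_2\rrbracket$ in all models then in particular in $(\Ct_{\mathrm{GMCC}},\mathbf{S})$, and there this equation says exactly that $a_1$ and $a_2$ represent the same morphism, i.e.\ $a_1 \equiv a_2$ is derivable.

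I expect the main obstacle to be the $\mathbf{lift}$ case of the coincidence lemma together with the verification that the syntactically defined strengths $t^{S_m}$ really satisfy the strong-endofunctor axioms and that $\mathbf{up}$ is a strong natural transformation: this is where currying, weakening and substitution interact with the graded constructors, and making the coherence diagrams line up with the available equational rules — rather than any conceptual novelty — is where the real work lies. A secondary point of caution is that the full equational theory must supply all the naturality and congruence rules invoked above (naturality of $\mu$, $\delta$, the strengths and $\mathbf{up}$ in their object arguments); since Figures~\ref{eqGMC} and~\ref{eqGCC} are explicitly excerpts I would first pin down these rules and record the needed instances as auxiliary lemmas before carrying out the main induction.
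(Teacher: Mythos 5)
Your proposal is correct and follows essentially the same route as the paper's own proof: the paper builds the classifying category $\mathds{F}$ (with morphisms given by closed terms of arrow type modulo $\beta\eta$, a trivial repackaging of your terms-in-context presentation), defines the very same syntactic strong monoidal functor — $\mathbf{lift}^m$ for functorial action, $\mathbf{up}$ on order morphisms, $\ottkw{ret}$/$\mathbf{join}$ as unitor/laxator with the identical strength term $\lambda x.\,(\mathbf{lift}^{m}(\lambda y.(\mathbf{proj}_1\,x,y)))\,(\mathbf{proj}_2\,x)$, with invertibility from the four extra GMCC equations — and then proves your coincidence lemma ($\llbracket \Gamma \vdash b : B\rrbracket_{(\mathds{F},\mathbb{S})} = \lambda x.\,b\{\mathbf{proj}_i^n\,x/x_i\}$) by induction before concluding via the generic model. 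You have also correctly located where the real work lies (the $\mathbf{lift}$ case and the strength/naturality coherences), which is exactly where the paper spends most of its effort.
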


We use entirely standard term-model techniques \citep{jacob} to prove completeness. First, we construct the classifying category and thereafter, the generic model in the classifying category. The generic model equates only the terms that are equal in the calculus. So, if the interpretations of two GMCC($ \mathcal{M} $)-terms are equal in all models (and therefore in the generic model too), then these terms are equal in the calculus as well. Further, we can also show that:

\begin{theorem}\label{gmccgeneric}
The generic model satisfies the universal property.
\end{theorem}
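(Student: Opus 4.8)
The plan is to carry out the standard term-model argument of categorical logic \citep{jacob}, now relative to the notion of model introduced in Section~\ref{GMCCModel}. Recall that the classifying category $\mathcal{C}$ used for Theorem~\ref{gmcccomplete} has the types of GMCC($ \mathcal{M} $) as objects and, as morphisms $ \ottnt{A}  \to  \ottnt{B} $, the provable-equality classes of terms $ \ottmv{x}  :  \ottnt{A}  \vdash  \ottnt{b}  :  \ottnt{B} $; it is bicartesian closed and carries a strong monoidal functor from $\Ca( \mathcal{M} )$ to $\mathbf{End}^{s}_{\mathcal{C}}$ sending $m$ to the endofunctor $ \ottnt{A}  \mapsto  S_{ m } \:  \ottnt{A} $ — that is, it is itself a model of GMCC($ \mathcal{M} $), namely the generic model $G$. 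On objects $G$ is the identity, and it sends a term $ \ottmv{x}  :  \ottnt{A}  \vdash  \ottnt{b}  :  \ottnt{B} $ to its own equivalence class. The universal property to establish is: for every bicartesian closed category $\D$ equipped with a strong monoidal functor $\Ca( \mathcal{M} ) \to \mathbf{End}^{s}_{\D}$ — that is, every model $M$ of GMCC($ \mathcal{M} $) in $\D$ — there is a functor $F_{M} : \mathcal{C} \to \D$ preserving the bicartesian closed structure and the graded constructor (together with its monadic and comonadic structure maps) up to coherent isomorphism, with $F_{M} \circ G = M$, and $F_{M}$ is unique up to a unique such structure-preserving natural isomorphism.

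First I would define $F_{M}$ directly from the interpretation function of Section~\ref{GMCCModel}: on objects $F_{M}( \ottnt{A} ) =  \llbracket  \ottnt{A}  \rrbracket _{M}$, and on a morphism represented by $ \ottmv{x}  :  \ottnt{A}  \vdash  \ottnt{b}  :  \ottnt{B} $, $F_{M}([ \ottnt{b} ]) =  \llbracket  \ottnt{b}  \rrbracket _{M}$. Well-definedness on morphisms is exactly the equational half of soundness (Theorem~\ref{gmccsound}): provably equal terms have equal interpretations. Functoriality follows from the interpretation of $ \ottmv{x}  :  \ottnt{A}  \vdash  \ottmv{x}  :  \ottnt{A} $ as an identity and from the substitution lemma for $ \llbracket  \_  \rrbracket _{M}$, since composition in $\mathcal{C}$ is term substitution. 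Preservation of the bicartesian closed structure is the routine check that $ \llbracket  \_  \rrbracket _{M}$ carries product, sum, unit, void and arrow types to the corresponding objects of $\D$ and the pairing/projection, injection/case, and $\lambda$/application terms to the universal maps; preservation of the graded type $ S_{ m } \:  \ottnt{A} $ is the analogous check using the interpretations of $ \ottkw{ret} ,\mathbf{extr},\mathbf{lift},\mathbf{join},\mathbf{fork},\mathbf{up}$ in Figures~\ref{intGMC} and~\ref{intGCC}. Finally $F_{M} \circ G = M$ is immediate, since $G$ is the identity on objects and sends a term to its class.

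For uniqueness, let $F : \mathcal{C} \to \D$ be any structure-preserving functor with $F \circ G \cong M$. On objects $F$ is pinned down up to coherent isomorphism: every type is built from base types by $ \_  \times  \_ $, $ \_  +  \_ $, $ \_  \to  \_ $ and $ S_{ m } \:  \_ $, each of which $F$ must carry to its $\D$-counterpart, so there is a natural family $ F( \ottnt{A} )  \cong   \llbracket  \ottnt{A}  \rrbracket _{M} $; transporting $F$ along it yields a functor agreeing with $F_{M}$ on objects. On morphisms one then inducts on the syntax of terms: each term former is the image under the interpretation of a specific categorical operation that $F$ must preserve, so $F$ is determined on every morphism once fixed on objects, whence $F = F_{M}$. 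Composing the two steps gives the unique structure-preserving isomorphism $F \cong F_{M}$.

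The step I expect to be the main obstacle is making ``preserves the graded constructor up to coherent isomorphism'' precise and verifying it. The subtlety is that the graded modality lives in a category of strong endofunctors and strong natural transformations rather than in the base category, so structure preservation must be formulated at the level of the strong monoidal functors $\Ca( \mathcal{M} ) \to \mathbf{End}^{s}$: one has to check that $F_{M}$ induces a functor $\mathbf{End}^{s}_{\mathcal{C}} \to \mathbf{End}^{s}_{\D}$ compatible with the chosen strengths and with $F_{M}$ itself, and that this induced functor carries the classifying strong monoidal functor to the one packaged in $M$. Chasing the coherence diagrams for lax/oplax and strong monoidal functors (Figures~\ref{laxDiag} and~\ref{strong}), together with the naturality of $\mu,\eta,\delta,\epsilon$ and the four extra GMCC equations relating $ \ottkw{ret} /\mathbf{extr}$ and $ \mathbf{join} /\mathbf{fork}$, is the bulk of the work; the remainder mirrors the proof of Theorem~\ref{gmcccomplete} and is bookkeeping.
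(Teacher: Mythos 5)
Your plan is workable, but it proves a different --- and strictly heavier --- universal property than the paper does. You quantify over functors preserving the full bicartesian closed structure and the graded modality up to coherent isomorphism, with $F \circ G \cong M$ and uniqueness up to a unique structure-preserving isomorphism; the coherence bookkeeping you flag as the main obstacle is then genuinely the bulk of the work. The paper instead formulates the property with only \emph{finite-product-preserving} functors and \emph{strict} equality: given a model $W'$ in a bicartesian closed category $\mathds{D}$, there is a unique finite-product-preserving functor $F$ from the classifying category to $\mathds{D}$ with $F(G)=W'$, where $F(W)$ denotes the model obtained by post-composing the interpretation maps of $W$ with $F$ (finite products are all that is needed to transport the interpretation of contexts). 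This buys two simplifications. First, existence requires no preservation of exponentials, coproducts or the graded constructor at all: one defines $F(A)=\llbracket A\rrbracket_{W'}$ and $F(a)=\llbracket x:A\vdash a\,x:B\rrbracket_{W'}$, checks functoriality and product preservation, and gets well-definedness from soundness of $W'$, exactly as in your existence step. Second, uniqueness is immediate and on the nose: $G$ is the identity on objects and every morphism of the classifying category is the equivalence class of a term, so the condition $F(G)=W'$ already determines $F$ everywhere, with no induction over type or term structure and no coherence chasing. Your own uniqueness argument could be shortened the same way --- $F\circ G\cong M$ alone pins $F$ down up to the given isomorphism, without ever invoking structure preservation.
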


The above theorem implies that any model of GMCC($ \mathcal{M} $) can be factored through the generic model. We don't explore the consequences of this theorem here, but leave it for future work. 
 
In this section, we have seen that GMCC is sound and complete with respect to its class of categorical models. In the next section, we explore the relation between GMCC and \ED{}.

\section{GMCC and \ED{}} \label{secgmccdcce}

We saw that over the class of bounded join-semilattices, we can translate both GMC and GCC into \ED{}. In fact, over the same class of structures, we can go further and translate GMCC into \ED{} following the translations presented in Figures \ref{GMCtoDCC} and \ref{GCCtoDCC}.
\begin{theorem} \label{DCCComplete}
If $ \Gamma  \vdash  \ottnt{a}  :  \ottnt{A} $ in GMCC($ \mathcal{L} $), then $  \overline{  \Gamma  }   \vdash   \overline{ \ottnt{a} }   :   \overline{ \ottnt{A} }  $ in \ED{}($ \mathcal{L} $). Further, if $ \Gamma  \vdash  \ottnt{a_{{\mathrm{1}}}}  :  \ottnt{A} $ and $ \Gamma  \vdash  \ottnt{a_{{\mathrm{2}}}}  :  \ottnt{A} $ such that $\ottnt{a_{{\mathrm{1}}}} \equiv \ottnt{a_{{\mathrm{2}}}}$ in GMCC($ \mathcal{L} $), then $ \overline{ \ottnt{a_{{\mathrm{1}}}} }  \simeq  \overline{ \ottnt{a_{{\mathrm{2}}}} } $ in \ED{}($ \mathcal{L} $).
\end{theorem}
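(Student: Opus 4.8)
The plan is to prove the theorem by extending the translations of Figures~\ref{GMCtoDCC} and~\ref{GCCtoDCC} to all of GMCC and then transporting, essentially verbatim, the arguments behind Theorems~\ref{GMCtoDCCTh} and~\ref{GCCtoDCCe}. First I would check that $\overline{\phantom{a}}$ is well-defined on GMCC: set $\overline{S_{m}\,A} = \mathcal{T}_{\ell}\,\overline{A}$, translate the monadic constructs $\mathbf{ret}$, $\mathbf{join}$ (and $\mathbf{up}$, $\mathbf{lift}$) exactly as in Figure~\ref{GMCtoDCC} and the comonadic constructs $\mathbf{extr}$, $\mathbf{fork}$ (and $\mathbf{up}$, $\mathbf{lift}$) exactly as in Figure~\ref{GCCtoDCC}, noting that $\mathbf{lift}^{m}$ and $\mathbf{up}^{m_1,m_2}$ are assigned the same translation in both figures, so there is no clash; the standard $\lambda$-fragment maps homomorphically to itself. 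I would also record two structural facts used throughout: $\overline{\phantom{a}}$ is compositional on term contexts, $\overline{C[a]} = \overline{C}[\,\overline{a}\,]$, and $\simeq$ is itself a congruence on the terms of \ED{} (because erasure distributes over the term formers in a substitutive way and $\equiv$ on plain $\lambda$-terms is a congruence closed under substitution).

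For type preservation I would proceed by induction on the GMCC typing derivation. Since the typing rules of GMCC are exactly those of GMC and GCC with $S_{m}$ replacing $T_{m}$ and $D_{m}$, every case is already available: the standard $\lambda$-cases are immediate; the cases \rref{M-Return,M-Fmap,M-Join,M-Up} go through verbatim as in Theorem~\ref{GMCtoDCCTh}; and the cases \rref{C-Extract,C-Fmap,C-Fork,C-Up} go through as in Theorem~\ref{GCCtoDCCe}, where \rref{C-Extract} relies on the derivation $\bot \sqsubseteq \overline{A}$ supplied by \rref{Prot-Minimum} and \rref{C-Fork} relies on $\ell_1 \vee \ell_2 \sqsubseteq \mathcal{T}_{\ell_1}\mathcal{T}_{\ell_2}\overline{A}$ supplied by \rref{Prot-Combine}. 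This is precisely the point at which \ED{}, rather than plain DCC, is needed as the target.

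For meaning preservation I would induct on the derivation of $a_1 \equiv a_2$ in GMCC($\mathcal{L}$), establishing $\lfloor\overline{a_1}\rfloor \equiv \lfloor\overline{a_2}\rfloor$, i.e. $\overline{a_1} \simeq \overline{a_2}$. The equivalence- and congruence-closure cases are routine given compositionality of $\overline{\phantom{a}}$ and that $\simeq$ is a congruence. For the base equations, those of Figure~\ref{eqGMC} are discharged by Theorem~\ref{GMCtoDCCTh} and those of Figure~\ref{eqGCC} by Theorem~\ref{GCCtoDCCe}; it remains to check the four interaction rules $\mathbf{extr}\,(\mathbf{ret}\,a) \equiv a$, $\mathbf{ret}\,(\mathbf{extr}\,a) \equiv a$, $\mathbf{fork}^{m_1,m_2}(\mathbf{join}^{m_1,m_2}\,a) \equiv a$ and $\mathbf{join}^{m_1,m_2}(\mathbf{fork}^{m_1,m_2}\,a) \equiv a$. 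For each, one translates both sides and observes that, since $\lfloor\mathbf{eta}^{\ell}\,b\rfloor = \lfloor b\rfloor$ and $\lfloor\mathbf{bind}^{\ell}\,x = b\;\mathbf{in}\;c\rfloor = \lfloor c\rfloor\{\lfloor b\rfloor/x\}$, all the inserted monadic and comonadic scaffolding erases away and both sides reduce to $\lfloor\overline{a}\rfloor$.

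I do not expect a genuine obstacle: the statement is essentially the conjunction of Theorems~\ref{GMCtoDCCTh} and~\ref{GCCtoDCCe} together with the short erasure computation for the four interaction equations. The only points that require care are confirming that the two translation tables agree on their shared constructs, so that the merged translation is genuinely well-defined on GMCC, and the bookkeeping for the interaction rules, which have no counterpart in GMC or GCC in isolation --- though since erasure annihilates all the modal structure, these rules collapse to trivial substitution identities.
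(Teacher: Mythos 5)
Your proposal is correct and follows essentially the same route as the paper, whose proof of this theorem is simply ``Follows from Theorems \ref{GMCtoDCCTh} and \ref{GCCtoDCCe}.'' You are in fact somewhat more thorough than the paper: you explicitly verify that the two translation tables agree on $\mathbf{lift}$ and $\mathbf{up}$ and that the four monad--comonad interaction equations erase to trivial identities, points the paper leaves implicit.
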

We now go the other way around and translate \ED{}($ \mathcal{L} $) into GMCC($ \mathcal{L} $).

Note here that \ED{} has a very liberal definition of equality, inherited from DCC. Two \ED{} terms are equal if, after erasure, they are equal as $\lambda$-terms. So $  \mathbf{eta} ^{ \ell }  \ottnt{a}   \simeq  \ottnt{a} $ for any \ED{}-term $\ottnt{a}$. The same is not true in general in GMCC. For example, $ \mathbf{up}^{   \bot   ,  \ell  }   (   \ottkw{ret}  \:  \ottnt{a}   )  $ may not be equal to $\ottnt{a}$. To capture the notion of \ED{}-equality in GMCC, we need to define a similar relation between GMCC terms. For GMCC terms $\ottnt{a_{{\mathrm{1}}}}$ and $\ottnt{a_{{\mathrm{2}}}}$, we say $ \ottnt{a_{{\mathrm{1}}}}  \simeq  \ottnt{a_{{\mathrm{2}}}} $ if and only if $ \lfloor  \ottnt{a_{{\mathrm{1}}}}  \rfloor $ and $ \lfloor  \ottnt{a_{{\mathrm{2}}}}  \rfloor $, the plain $\lambda$-term counterparts of $\ottnt{a_{{\mathrm{1}}}}$ and $\ottnt{a_{{\mathrm{2}}}}$ respectively, are equal as $\lambda$-terms. The erasure operation $\lfloor-\rfloor$  strips away the constructors $\mathbf{ret},\mathbf{join}, \mathbf{extr}, \mathbf{fork}, \mathbf{lift}$ and $\mathbf{up}$ along with the grade annotations. For example, $ \lfloor   \mathbf{join}^{  \ell_{{\mathrm{1}}}  ,  \ell_{{\mathrm{2}}}  }  \ottnt{a}   \rfloor  =  \lfloor  \ottnt{a}  \rfloor $. 

Coming back to the translation from \ED{} to GMCC, it is straightforward for types. The constructor $\mathcal{T}_{\ell}$ gets translated to $S_{\ell}$. The translation for terms requires the following lemma. We use $\underline{\phantom{a}}$ to denote the translation function.
\begin{lemma} \label{protectId}
If $ \ell  \sqsubseteq  \ottnt{B} $, then there exists a term $  \emptyset   \vdash  \ottnt{j}  :    S_{  \ell  } \:    \underline{ \ottnt{B} }     \to   \underline{ \ottnt{B} }   $ such that $ \lfloor  \ottnt{j}  \rfloor  \equiv  \lambda  \ottmv{x}  .  \ottmv{x} $.
\end{lemma}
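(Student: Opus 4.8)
The plan is to proceed by structural induction on the derivation of $ \ell  \sqsubseteq  \ottnt{B} $, building the required term $\ottnt{j} : S_{\ell}\:\underline{\ottnt{B}} \to \underline{\ottnt{B}}$ case by case and simultaneously checking that its erasure is $\beta\eta$-equal to the identity. The cases to handle are \rref{Prot-Prod}, \rref{Prot-Fun}, \rref{Prot-Monad}, \rref{Prot-Already}, and the two new rules \rref{Prot-Minimum} and \rref{Prot-Combine}. The base case \rref{Prot-Minimum} (where $\ell = \bot$) is discharged by $\mathbf{extr}$: since $S$ is monadic-comonadic, $\mathbf{extr} : S_{\bot}\:\underline{\ottnt{B}} \to \underline{\ottnt{B}}$ works whenever $\bot$ is the unit $1$ of the monoid, and its erasure is the identity by definition of $\lfloor-\rfloor$. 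For \rref{Prot-Prod} and \rref{Prot-Fun} the inductive hypotheses give component-wise ``unprotecting'' maps which are combined using the strength/functoriality of $S$ and the cartesian-closed structure of $\Ct$ (concretely: use $\mathbf{lift}$ and projections for products, and for functions push the $S$ inside using strength, pre- and post-compose with the returned $S_{\ell}$ on the domain and the IH on the codomain); in each case the erasure computes to $\lambda x. x$ by the $\beta\eta$-laws, since erasure kills all the monadic/comonadic scaffolding.

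The interesting cases are \rref{Prot-Monad} and \rref{Prot-Already}, exactly the ones flagged in Section \ref{JustMonadic} as the obstruction to translating DCC into a purely monadic GMC. For \rref{Prot-Monad}, we have $ \ell  \sqsubseteq   \mathcal{T}_{ \ell' } \:  \ottnt{B} $ with $ \ell  \sqsubseteq  \ell' $, so $\ell \vee \ell' = \ell'$, and the term is $\lambda x.\, \mathbf{join}^{\ell,\ell'}\,x : S_{\ell}\:S_{\ell'}\:\underline{\ottnt{B}} \to S_{\ell'}\:\underline{\ottnt{B}}$ (possibly composed with an $\mathbf{up}$ to adjust the grade $\ell \vee \ell'$ to $\ell'$), whose erasure is the identity since $\lfloor \mathbf{join}^{\ell,\ell'} x\rfloor = \lfloor x \rfloor$. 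For \rref{Prot-Already}, we have $ \ell  \sqsubseteq   \mathcal{T}_{ \ell' } \:  \ottnt{B} $ with $ \ell  \sqsubseteq  \ottnt{B} $; the IH gives $\ottnt{j_0} : S_{\ell}\:\underline{\ottnt{B}} \to \underline{\ottnt{B}}$ with $\lfloor \ottnt{j_0}\rfloor \equiv \lambda x.x$, and we need $S_{\ell}\:S_{\ell'}\:\underline{\ottnt{B}} \to S_{\ell'}\:\underline{\ottnt{B}}$. Here the comonadic structure saves us: first apply $\mathbf{join}^{\ell,\ell'}$ to get $S_{\ell \vee \ell'}\:\underline{\ottnt{B}}$, rewrite the grade via commutativity $\ell \vee \ell' = \ell' \vee \ell$ and an $\mathbf{up}$, then apply $\mathbf{fork}^{\ell',\ell}$ to land in $S_{\ell'}\:S_{\ell}\:\underline{\ottnt{B}}$, and finally apply $\mathbf{lift}^{\ell'}\,\ottnt{j_0}$ to reach $S_{\ell'}\:\underline{\ottnt{B}}$. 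The erasure of this composite is $\lfloor \ottnt{j_0}\rfloor \equiv \lambda x.x$ since erasure strips $\mathbf{join}$, $\mathbf{fork}$, $\mathbf{up}$, and $\mathbf{lift}$. The case \rref{Prot-Combine}, where $ \ell_{{\mathrm{1}}}  \vee  \ell_{{\mathrm{2}}}   \sqsubseteq  \ottnt{B} $ follows from $ \ell_{{\mathrm{1}}}  \sqsubseteq  \ottnt{B} $ and $ \ell_{{\mathrm{2}}}  \sqsubseteq  \ottnt{B} $, is handled by composing the two IH maps with a $\mathbf{fork}^{\ell_1,\ell_2}$ that splits $S_{\ell_1 \vee \ell_2}$ into $S_{\ell_1}\:S_{\ell_2}$, again with erasure collapsing to the identity.

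The main obstacle I anticipate is not any single case but the bookkeeping around grades and the $\mathbf{up}$ coercions: one must repeatedly invoke that $\bot = 1$ and that $\vee$ is commutative-associative-idempotent to make the grade subscripts line up, and insert $\mathbf{up}$ maps at the right places, all while keeping the erasure provably equal to $\lambda x.x$ (which, thankfully, is automatic because $\lfloor-\rfloor$ annihilates every new constructor). A secondary subtlety is that in \rref{Prot-Already} and \rref{Prot-Combine} the construction genuinely needs \emph{both} $\mathbf{join}$ and $\mathbf{fork}$ — i.e. it crucially uses that $S$ is simultaneously a graded monad and a graded comonad with $\mu$ and $\delta$ mutually inverse — so the lemma would fail for GMC or GCC alone; this is precisely the point the paper is building toward, so the proof should make that dependence visible. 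The functoriality, strength, and cartesian-closure facts used for \rref{Prot-Prod} and \rref{Prot-Fun} are entirely routine.
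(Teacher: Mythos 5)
Your proposal matches the paper's proof essentially case for case: the same induction on the protection derivation, $\mathbf{extr}$ for \rref{Prot-Minimum}, $\mathbf{join}$ for \rref{Prot-Monad}, the $\mathbf{join}$-then-$\mathbf{fork}$-then-$\mathbf{lift}$ composite for \rref{Prot-Already}, $\mathbf{lift}$ with projections (resp.\ strength and application) for \rref{Prot-Prod} (resp.\ \rref{Prot-Fun}), and a $\mathbf{fork}$ combining the two inductive maps for \rref{Prot-Combine}, with erasure trivially collapsing to $\lambda x.x$ throughout. The only nuance is that \rref{Prot-Combine} as stated in the paper concludes $\ell \sqsubseteq A$ for any $\ell \sqsubseteq \ell_{{\mathrm{1}}} \vee \ell_{{\mathrm{2}}}$ rather than just for $\ell_{{\mathrm{1}}} \vee \ell_{{\mathrm{2}}}$ itself, so the paper prefixes an $\mathbf{up}^{\ell,\,\ell_{{\mathrm{1}}} \vee \ell_{{\mathrm{2}}}}$ before the $\mathbf{fork}$ --- exactly the kind of grade coercion you already flagged as routine bookkeeping.
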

The above lemma is key to the translation. Note that in Section \ref{JustMonadic}, while trying to translate DCC to GMC, we could not prove a lemma like this one. 
Now, with the above lemma, we can translate \ED{} terms to GMCC terms, as shown below. The function $\ottnt{j}$ used in the translation is as given by Lemma \ref{protectId}.
\[  \underline{  \mathbf{eta} ^{ \ell }  \ottnt{a}  }   =  \mathbf{up}^{   \bot   ,  \ell  }   (   \ottkw{ret}  \:   \underline{ \ottnt{a} }    )   \hspace*{30pt}  \underline{  \mathbf{bind} ^{ \ell } \:  \ottmv{x}  =  \ottnt{a}  \: \mathbf{in} \:  \ottnt{b}  }   =  \ottnt{j}  \:   (    (   \mathbf{lift}^{  \ell  }   (   \lambda  \ottmv{x}  .   \underline{ \ottnt{b} }    )    )   \:   \underline{ \ottnt{a} }    )   \]
This translation preserves typing and meaning. 
\begin{theorem} \label{DCCSound}
If $ \Gamma  \vdash  \ottnt{a}  :  \ottnt{A} $ in \ED{}($ \mathcal{L} $), then $  \underline{  \Gamma  }   \vdash   \underline{ \ottnt{a} }   :   \underline{ \ottnt{A} }  $ in GMCC($ \mathcal{L} $). Further, if $ \Gamma  \vdash  \ottnt{a_{{\mathrm{1}}}}  :  \ottnt{A} $ and $ \Gamma  \vdash  \ottnt{a_{{\mathrm{2}}}}  :  \ottnt{A} $ such that $\ottnt{a_{{\mathrm{1}}}} \simeq \ottnt{a_{{\mathrm{2}}}}$ in \ED{}($ \mathcal{L} $), then $ \underline{ \ottnt{a_{{\mathrm{1}}}} }  \simeq  \underline{ \ottnt{a_{{\mathrm{2}}}} } $ in GMCC($ \mathcal{L} $).
\end{theorem}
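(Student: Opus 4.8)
The plan is to establish the two halves by induction, with Lemma~\ref{protectId} carrying all the real weight; once it is granted, the rest is bookkeeping.

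\emph{Typing preservation.} I would induct on the derivation of $\Gamma \vdash a : A$ in \ED{}($\mathcal{L}$). Every rule of the standard $\lambda$-calculus fragment is immediate, since $\underline{\phantom{a}}$ acts as the identity there, the GMCC typing rules subsume those of the simply-typed $\lambda$-calculus, and context translation is pointwise (so $\underline{\Gamma, x{:}A} = \underline{\Gamma}, x{:}\underline{A}$). Only two cases are interesting. For \rref{DCC-Eta}: the induction hypothesis gives $\underline{\Gamma} \vdash \underline{a} : \underline{A}$; \rref{M-Return} then yields $\underline{\Gamma} \vdash \ottkw{ret}\:\underline{a} : S_{\bot}\:\underline{A}$ (the monoid unit of $\mathcal{L}$ being $\bot$), and since $\bot \leq \ell$, \rref{M-Up} yields $\underline{\Gamma} \vdash \mathbf{up}^{\bot,\ell}(\ottkw{ret}\:\underline{a}) : S_{\ell}\:\underline{A} = \underline{\mathcal{T}_{\ell}\:A}$. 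For \rref{DCC-Bind}: the induction hypotheses give $\underline{\Gamma} \vdash \underline{a} : S_{\ell}\:\underline{A}$ and $\underline{\Gamma}, x{:}\underline{A} \vdash \underline{b} : \underline{B}$; \rref{M-Fmap} gives $\underline{\Gamma} \vdash \mathbf{lift}^{\ell}(\lambda x.\underline{b}) : S_{\ell}\:\underline{A} \to S_{\ell}\:\underline{B}$, hence $\underline{\Gamma} \vdash (\mathbf{lift}^{\ell}(\lambda x.\underline{b}))\:\underline{a} : S_{\ell}\:\underline{B}$; the side premise $\ell \sqsubseteq B$ lets us invoke Lemma~\ref{protectId} for a closed $j : S_{\ell}\:\underline{B} \to \underline{B}$, and weakening gives $\underline{\Gamma} \vdash j\:((\mathbf{lift}^{\ell}(\lambda x.\underline{b}))\:\underline{a}) : \underline{B}$, which is exactly the translation. (A harmless caveat: because the $\mathbf{bind}$ syntax does not record its protection derivation, $\underline{\phantom{a}}$ is really a function on typing derivations, not on raw terms.)

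\emph{Meaning preservation.} The crux is a single transparency observation: for every well-typed \ED{} term $a$, the GMCC-erasure of its translation agrees with its \ED{}-erasure, that is, $\lfloor \underline{a} \rfloor \equiv \lfloor a \rfloor$ as plain $\lambda$-terms. I would prove this by induction on $a$, using the routine sublemma that erasure commutes with substitution. The $\lambda$-calculus cases follow from the induction hypotheses on subterms by congruence of $\equiv$. For $\mathbf{eta}^{\ell}\:a$, erasure strips both $\mathbf{up}$ and $\ottkw{ret}$, so $\lfloor \underline{\mathbf{eta}^{\ell}\:a} \rfloor = \lfloor \underline{a} \rfloor \equiv \lfloor a \rfloor = \lfloor \mathbf{eta}^{\ell}\:a \rfloor$. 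For $\mathbf{bind}^{\ell}\:x = a\:\mathbf{in}\:b$, erasing the translation and stripping $\mathbf{lift}$ yields $\lfloor j \rfloor\:((\lambda x.\lfloor \underline{b} \rfloor)\:\lfloor \underline{a} \rfloor)$; since $\lfloor j \rfloor \equiv \lambda x.x$ by Lemma~\ref{protectId}, $\beta$-reduction gives $\lfloor \underline{b} \rfloor\{\lfloor \underline{a} \rfloor / x\}$, which by the induction hypotheses, the substitution sublemma, and congruence equals $\lfloor b \rfloor\{\lfloor a \rfloor / x\} = \lfloor \mathbf{bind}^{\ell}\:x = a\:\mathbf{in}\:b \rfloor$. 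With this in hand, the second claim is immediate: if $a_{1} \simeq a_{2}$ in \ED{}, i.e.\ $\lfloor a_{1} \rfloor \equiv \lfloor a_{2} \rfloor$, then $\lfloor \underline{a_{1}} \rfloor \equiv \lfloor a_{1} \rfloor \equiv \lfloor a_{2} \rfloor \equiv \lfloor \underline{a_{2}} \rfloor$, so $\underline{a_{1}} \simeq \underline{a_{2}}$ in GMCC; and this is insensitive to the choice of derivation, since every valid $j$ erases to the identity.

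I do not anticipate a genuine obstacle: Lemma~\ref{protectId} is precisely the ingredient that was missing in Section~\ref{JustMonadic} when translating plain DCC into GMC, and with it both halves are mechanical. The points that deserve a little care are the well-definedness remark above, the check that the monoid unit $\bot$ makes the grade on $\ottkw{ret}$ line up with $\mathbf{up}^{\bot,\ell}$, and the auxiliary commutation of each erasure with substitution — none of which is hard.
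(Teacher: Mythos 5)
Your proposal matches the paper's proof essentially step for step: typing preservation by induction with Lemma~\ref{protectId} handling the $\mathbf{bind}$ case via \rref{M-Fmap} and application of $j$, and meaning preservation via the key observation that $\lfloor \underline{a} \rfloor \equiv \lfloor a \rfloor$ (using $\lfloor j \rfloor \equiv \lambda x.x$), from which the second claim follows by chaining erasure equivalences. The extra caveats you note (well-definedness on derivations, commutation of erasure with substitution) are sensible but do not change the argument.
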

Theorems \ref{DCCSound} and \ref{DCCComplete} together show that over the class of bounded join-semilattices, GMCC and \ED{}, seen as dependency calculi, are equivalent. Thus, GMCC is a generalization of the Dependency Core Calculus, DCC. Hence, dependency analysis, at least to the extent DCC is capable of, can be done using just a graded monadic comonadic calculus. This connection of dependency analysis to GMCC is important because:
\begin{itemize}
\item Dependency analysis can now benefit from a wider variety of categorical models. Some of these models may provide simpler proofs of correctness. In fact, using our categorical models, we show, in a straightforward manner, that dependency analyses in \ED{} and \lc{} are correct.
\item More dependency calculi can now be unified under a common framework. As a proof of concept, we show that the binding-time analysing calculus \lc{} of \citet{lambdacirc} can be encoded into \Ge{}, an extension of GMCC. Note that \lc{} can not be translated into DCC \citep{dcc}.
\item The non-standard \textbf{bind}-rule of DCC can be replaced with standard monadic and comonadic typing rules. This finding provides insight into the categorical basis of the \textbf{bind}-rule of DCC. (See Section \ref{modeldcce})
\item GMCC is formed combining GMC and its dual, GCC. GMC can be seen as a restriction of the Explicit Subeffecting Calculus of \citet{katsumata}. The Explicit Subeffecting Calculus is a general system for analysing effects. This clean connection between dependency analysis and effect analysis promises to be a fertile ground for new ideas, especially in the intersection of dependency, effect and coeffect analyses. 
\end{itemize}

Before closing this section, we want to make some remarks on the equivalence between GMCC and \ED{}. From theorems \ref{DCCSound} and \ref{DCCComplete}, we see that GMCC and \ED{} are equivalent upto erasure. We may ask: can this equivalence be made stronger? The answer is yes. In the next section, we shall show that (over the class of bounded join-semilattices) GMCC and \ED{} are semantically equivalent too. We can also show something stronger on the syntactic side as well:

\begin{theorem} \label{GMCCround}
Let $ \Gamma  \vdash  \ottnt{a}  :  \ottnt{A} $ be any derivation in GMCC($ \mathcal{L} $). Then, $  \underline{  \overline{ \ottnt{a} }  }   \equiv  \ottnt{a} $.
\end{theorem}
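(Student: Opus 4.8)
The plan is to prove Theorem~\ref{GMCCround} by structural induction on the derivation $ \Gamma  \vdash  \ottnt{a}  :  \ottnt{A} $ in GMCC($ \mathcal{L} $), showing that the round trip $\underline{\overline{\phantom{a}}}$ is the identity up to the GMCC equational theory. The cases for the standard $\lambda$-calculus constructs are immediate, since both translations act homomorphically on these and leave them untouched; the interesting cases are the monadic constructs $ \ottkw{ret} $, $ \mathbf{join} $, $ \mathbf{up} $, $ \mathbf{lift} $ coming from GMC and the comonadic constructs $ \mathbf{extr} $, $ \mathbf{fork} $ coming from GCC. For each of these I would first unfold $\overline{\phantom{a}}$ using Figures~\ref{GMCtoDCC} and \ref{GCCtoDCC} to obtain an \ED{} term built out of $ \mathbf{eta} $ and $ \mathbf{bind} $, then unfold $\underline{\phantom{a}}$ on that term using the clauses $ \underline{  \mathbf{eta} ^{ \ell }  \ottnt{a}  }  =  \mathbf{up}^{   \bot   ,  \ell  }   (   \ottkw{ret}  \:   \underline{ \ottnt{a} }    )  $ and $ \underline{  \mathbf{bind} ^{ \ell } \:  \ottmv{x}  =  \ottnt{a}  \: \mathbf{in} \:  \ottnt{b}  }  =  \ottnt{j}  \:   (    (   \mathbf{lift}^{  \ell  }   (   \lambda  \ottmv{x}  .   \underline{ \ottnt{b} }    )    )   \:   \underline{ \ottnt{a} }    )  $, and finally use the GMCC equations --- the monad/comonad laws of Figures~\ref{eqGMC} and \ref{eqGCC} plus the four bridging equations $ \mathbf{extr} \:   (   \ottkw{ret}  \:  \ottnt{a}   )   \equiv \ottnt{a}$, $ \ottkw{ret}  \:   (   \mathbf{extr} \:  \ottnt{a}   )   \equiv \ottnt{a}$, $ \mathbf{fork}^{ m_{{\mathrm{1}}} , m_{{\mathrm{2}}} }   (   \mathbf{join}^{ m_{{\mathrm{1}}} , m_{{\mathrm{2}}} }  \ottnt{a}   )   \equiv \ottnt{a}$, $ \mathbf{join}^{ m_{{\mathrm{1}}} , m_{{\mathrm{2}}} }   (   \mathbf{fork}^{ m_{{\mathrm{1}}} , m_{{\mathrm{2}}} }  \ottnt{a}   )   \equiv \ottnt{a}$ --- to collapse the result back to $\ottnt{a}$.

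The crucial ingredient is controlling the function $\ottnt{j}$ produced by Lemma~\ref{protectId} that appears whenever a $ \mathbf{bind} $ is translated. The statement of that lemma only records $ \lfloor  \ottnt{j}  \rfloor  \equiv  \lambda  \ottmv{x}  .  \ottmv{x} $ up to erasure, but for Theorem~\ref{GMCCround} I will need to know how $\ottnt{j}$ behaves \emph{in GMCC}, not just after erasure --- specifically, how it interacts with $ \ottkw{ret} $, $ \mathbf{join} $, $ \mathbf{extr} $, and $ \mathbf{fork} $ on the particular protection derivations arising from the translation of GMCC source terms. The key observation is that $\overline{\phantom{a}}$ only ever produces $ \mathbf{bind} $s at types whose protection derivations are of a very restricted shape: the translation of $ \mathbf{join} $ and $ \mathbf{up} $ binds at a type of the form $ S_{  \ell'  } \:  \underline{ \ottnt{B} } $ with the protection derivation using \rref{Prot-Monad}, and the translation of $ \mathbf{extr} $ and $ \mathbf{fork} $ binds at $\ell = \bot$ (handled by \rref{Prot-Minimum}) or uses \rref{Prot-Combine}. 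So I would prove, as a strengthening of or companion lemma to Lemma~\ref{protectId}, that on precisely these protection derivations the $\ottnt{j}$ constructed there is GMCC-equal to the appropriate combination of $ \mathbf{join} $, $ \mathbf{fork} $, $ \mathbf{extr} $ --- e.g.\ for \rref{Prot-Monad} with $ \ell  \sqsubseteq  \ell' $, $\ottnt{j} \equiv \lambda \ottmv{x} . \mathbf{join}^{\ell,\ell'} \ottmv{x}$ (since $ \ell  \vee  \ell'  = \ell'$), matching exactly the recursive construction sketched in Section~\ref{JustMonadic}. With such a characterization in hand, each inductive case reduces to a short equational-reasoning chain: for instance, $ \underline{ \overline{  \mathbf{join}^{  \ell_{{\mathrm{1}}}  ,  \ell_{{\mathrm{2}}}  }  \ottnt{a}  } }  $ unfolds to a term provably equal (using \eqref{eq:assoc}-style reassociation, the bridging equations, and the characterization of $\ottnt{j}$) to $ \mathbf{join}^{ \ell_{{\mathrm{1}}} , \ell_{{\mathrm{2}}} }   \underline{ \overline{ \ottnt{a} } }  $, which is $ \mathbf{join}^{ \ell_{{\mathrm{1}}} , \ell_{{\mathrm{2}}} }  \ottnt{a} $ by the induction hypothesis and congruence.

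I expect the main obstacle to be exactly this bookkeeping around $\ottnt{j}$: Lemma~\ref{protectId} is stated only up to erasure, so a faithful proof of Theorem~\ref{GMCCround} either requires revisiting its proof to extract a sharper, GMCC-level description of the witness term along the lines above, or requires reformulating the induction so that the hypothesis carries enough information about the shape of the protection derivations that $\overline{\phantom{a}}$ can generate. The comonadic cases ($ \mathbf{extr} $, $ \mathbf{fork} $) will lean heavily on the bridging equations $ \mathbf{extr} \:   (   \ottkw{ret}  \:  \ottnt{a}   )   \equiv \ottnt{a}$ and $ \mathbf{fork}^{ m_{{\mathrm{1}}} , m_{{\mathrm{2}}} }   (   \mathbf{join}^{ m_{{\mathrm{1}}} , m_{{\mathrm{2}}} }  \ottnt{a}   )   \equiv \ottnt{a}$, since after translating through $ \mathbf{bind} $ and $ \mathbf{eta} $ and back these are the only equations that let a $ \mathbf{join} $ produced by a $ \mathbf{bind} $-translation annihilate against a $ \mathbf{fork} $ or let a $ \ottkw{ret} $ annihilate against an $ \mathbf{extr} $; verifying that the grades line up so that these equations are applicable is the place where the bounded-join-semilattice identities ($ \ell  \vee  \ell  = \ell$, $ \bot  \vee  \ell  = \ell$, commutativity and associativity of $\vee$) get used. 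The remaining work is routine congruence reasoning and careful tracking of the $ \mathbf{up} $-coercions introduced by $ \underline{  \mathbf{eta} ^{ \ell }  \ottnt{a}  }  =  \mathbf{up}^{   \bot   ,  \ell  }   (   \ottkw{ret}  \:   \underline{ \ottnt{a} }    )  $, which are absorbed using the naturality and transitivity equations \eqref{eq:refl}--\eqref{eq:natr}.
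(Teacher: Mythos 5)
Your plan matches the paper's proof: it proceeds by induction on the typing derivation, unfolds both translations, and — exactly as you anticipate — relies on the explicit GMCC-level witnesses $j^{\ell}_{\underline{B}}$ constructed by recursion on the protection derivation in the proof of Lemma~\ref{protectId} (e.g.\ $\lambda x.\,\mathbf{join}^{\ell,\ell'}x$ for \rref{Prot-Monad}, $\lambda x.\,\mathbf{extr}\:x$ for \rref{Prot-Minimum}), rather than the erasure-level statement alone. The remaining equational collapse in each case, including the heavy use of the bridging equations and derived laws in the $\mathbf{fork}$ case, is carried out just as you describe.
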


The above theorem says that any GMCC($ \mathcal{L} $)-term, after a round trip to \ED{}($ \mathcal{L} $), is equal (not just equal upto erasure) to itself. When going the other way, we can prove the following weaker result: if $ \Gamma  \vdash  \ottnt{a}  :  \ottnt{A} $ is a derivation in \ED{}($ \mathcal{L} $), then $ \overline{  \underline{ \ottnt{a} }  }  \simeq a $. Note that we are forced to use equality upto erasure here because we don't have an alternative equational theory for \ED{}. 
 
Next, we develop the categorical semantics for \ED{}. 

\section{\ED{}: Categorical Semantics} \label{nonint}

\citet{dcc} provide a categorical model for DCC and prove noninterference using that model. Several other authors \citep{tse-zdancewic,igarashi,ahmed,algehed} have provided alternative proofs of noninterference for DCC using various techniques, including parametricity. In this section, we present a class of categorical models for \ED{}, in the style of GMCC, and show noninterference for the calculus using these models. We also establish semantic equivalence between \ED{} and GMCC and explain the non-standard \textbf{bind}-rule of DCC in terms of standard category-theoretic concepts.

\subsection{Categorical Models for \ED{}} \label{modeldcce}

Given that GMCC and \ED{}, considered as dependency calculi, are equivalent, we can simply use models of GMCC to interpret \ED{}. However, in this section, we build models for \ED{} from first principles and show them to be computationally adequate with respect to a call-by-value semantics. The original operational semantics of DCC, presented by \citet{dcc}, is somewhat ad hoc from a categorical perspective because according to this semantics, $ \mathbf{eta} ^{ \ell }  \ottnt{a} $ converts to $\ottnt{a}$. If we interpret $\mathbf{eta}$ as the unit of a monad, this conversion would require us to interpret the unit as the identity natural transformation, something that is not very general. On the other hand, a call-by-value semantics or a call-by-name semantics of the calculus \citep{tse-zdancewic} is quite general from a category theoretic perspective. According to a call-by-value semantics, 
$\mathbf{bind}$-expressions convert in the following manner:
\drules[CBV]{$ \vdash  \ottnt{a}  \leadsto  \ottnt{a'} $}{Operational Semantics}{BindLeft,BindBeta} 

Given $\leadsto$, we can define the multistep reduction relation, $\leadsto^{\ast}$, in the usual way. A point to note here is that though we use a call-by-value semantics for \ED{}, we could have used a call-by-name semantics as well. \ED{} being a terminating calculus, the choice of one evaluation strategy over the other does not lead to significant differences in metatheory. We chose call-by-value over call-by-name because the former allows reductions underneath the $\mathbf{eta}$s.



Given a bounded join-semilattice, $ \mathcal{L}  = (L,\vee,\bot)$, and any bicartesian closed category $\Ct$, we interpret the graded type constructor $\mathcal{T}$ of \ED{}($ \mathcal{L} $) as a strong monoidal functor $\mathbf{S}$ from $\Ca( \mathcal{L} )$ to $\ES$. Formally, $ \llbracket   \mathcal{T}_{ \ell } \:  \ottnt{A}   \rrbracket  =  \mathbf{S}_{\ell}  \llbracket  \ottnt{A}  \rrbracket $. Note that since $\vee$ is idempotent, the triple $(\mathbf{S}_{\ell},\mathbf{S}^{  \bot   \sqsubseteq  \ell } \circ \eta,\mu^{\ell,\ell})$, for any $\ell \in L$, is a monad. Further, since $\mu^{\ell,\ell}$ is invertible, such a monad is also idempotent. 

For interpreting terms, we need the following lemma.
\begin{lemma}\label{lemmak}
If $ \ell  \sqsubseteq  \ottnt{A} $, then $\exists$ an isomorphism $k : \mathbf{S}_{\ell}  \llbracket  \ottnt{A}  \rrbracket  \to  \llbracket  \ottnt{A}  \rrbracket $. \\ Further, $k \circ \overline{\eta} = id_{ \llbracket  \ottnt{A}  \rrbracket }$ and $\overline{\eta} \circ k =  \text{id}_{  \mathbf{S}_{  \ell  }   \llbracket  \ottnt{A}  \rrbracket   } $ where $\overline{\eta} \triangleq \mathbf{S}^{  \bot   \sqsubseteq  \ell } \circ \eta :  \llbracket  \ottnt{A}  \rrbracket  \to \mathbf{S}_{\ell}  \llbracket  \ottnt{A}  \rrbracket $.
\end{lemma}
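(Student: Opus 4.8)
The plan is to prove the lemma by structural induction on the derivation of $ \ell  \sqsubseteq  \ottnt{A} $, establishing in fact the sharper statement that the unit component $\overline{\eta}^{\ell}_{ \llbracket  \ottnt{A}  \rrbracket } := \mathbf{S}^{ \bot  \sqsubseteq  \ell } \circ \eta_{ \llbracket  \ottnt{A}  \rrbracket } :  \llbracket  \ottnt{A}  \rrbracket  \to \mathbf{S}_{\ell} \llbracket  \ottnt{A}  \rrbracket $ is itself an isomorphism. Taking $k := (\overline{\eta}^{\ell}_{ \llbracket  \ottnt{A}  \rrbracket })^{-1}$ then gives both displayed equations $k \circ \overline{\eta} = \I$ and $\overline{\eta} \circ k = \I$ for free (and forces $k$ to be this morphism, since two-sided inverses are unique). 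Call an object $X$ of $\Ct$ \emph{$\ell$-local} if $\overline{\eta}^{\ell}_{X}$ is invertible; the goal is thus that $ \ell  \sqsubseteq  \ottnt{A} $ implies $ \llbracket  \ottnt{A}  \rrbracket $ is $\ell$-local.

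Before the induction I would record four facts about the strong monoidal functor $\mathbf{S} : \Ca(\mathcal{L}) \to \ES$. \textbf{(i)} Since $\mathbf{S}$ is \emph{strong} (not merely lax) monoidal, $\eta$, $\epsilon$, and every $\mu^{\ell_1,\ell_2}$, $\delta^{\ell_1,\ell_2}$ are invertible; in particular $\mathbf{S}_{\ell_1 \vee \ell_2}X \cong \mathbf{S}_{\ell_1}\mathbf{S}_{\ell_2}X$, naturally in $X$. \textbf{(ii)} As already noted in the text, for each $\ell$ the triple $(\mathbf{S}_{\ell}, \overline{\eta}^{\ell}, \mu^{\ell,\ell})$ is an \emph{idempotent} monad, whose unit is exactly $\overline{\eta}^{\ell}$. \textbf{(iii)} For an idempotent monad $(S,u,m)$, a unit component $u_Z$ that is a split mono is automatically iso: if $r \circ u_Z = \I$, then using naturality of $u$ and the idempotent-monad identity $u_{SZ} = S u_Z$ (both equal $m_Z^{-1}$) one gets $u_Z \circ r = S r \circ u_{SZ} = S r \circ S u_Z = S(r \circ u_Z) = \I$. \textbf{(iv)} $\ell$-locality is preserved by isomorphism, by naturality of $\overline{\eta}^{\ell}$. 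Combining (ii)--(iv): every object of the form $\mathbf{S}_{\ell}Z$ is $\ell$-local, since $\mu^{\ell,\ell}$ retracts $\overline{\eta}^{\ell}_{\mathbf{S}_{\ell}Z}$ by a unit law of the monad.

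The induction then runs case by case. \rref{Prot-Minimum} ($ \bot  \sqsubseteq  \ottnt{A} $): $\overline{\eta}^{\bot} = \mathbf{S}^{ \bot  \sqsubseteq  \bot } \circ \eta = \eta$, an iso by (i). \rref{Prot-Monad} ($ \ell  \sqsubseteq  \ell' $, hence $\ell \vee \ell' = \ell'$): $\mathbf{S}_{\ell'} \llbracket  \ottnt{A}  \rrbracket  = \mathbf{S}_{\ell \vee \ell'} \llbracket  \ottnt{A}  \rrbracket  \cong \mathbf{S}_{\ell}(\mathbf{S}_{\ell'} \llbracket  \ottnt{A}  \rrbracket )$ by (i), which is $\ell$-local, so apply (iv). \rref{Prot-Already}: the IH gives $ \llbracket  \ottnt{A}  \rrbracket  \cong \mathbf{S}_{\ell} \llbracket  \ottnt{A}  \rrbracket $; applying $\mathbf{S}_{\ell'}$ and then (i), $\mathbf{S}_{\ell'} \llbracket  \ottnt{A}  \rrbracket  \cong \mathbf{S}_{\ell'}\mathbf{S}_{\ell} \llbracket  \ottnt{A}  \rrbracket  \cong \mathbf{S}_{\ell' \vee \ell} \llbracket  \ottnt{A}  \rrbracket  = \mathbf{S}_{\ell \vee \ell'} \llbracket  \ottnt{A}  \rrbracket  \cong \mathbf{S}_{\ell}(\mathbf{S}_{\ell'} \llbracket  \ottnt{A}  \rrbracket )$, again $\ell$-local, so apply (iv). \rref{Prot-Combine}: the IH makes $ \llbracket  \ottnt{A}  \rrbracket $ both $\ell_1$-local and $\ell_2$-local; transporting $\ell_1$-locality along the iso $\overline{\eta}^{\ell_2}_{ \llbracket  \ottnt{A}  \rrbracket } :  \llbracket  \ottnt{A}  \rrbracket  \cong \mathbf{S}_{\ell_2} \llbracket  \ottnt{A}  \rrbracket $ via (iv) shows $\mathbf{S}_{\ell_2} \llbracket  \ottnt{A}  \rrbracket $ is $\ell_1$-local, whence $ \llbracket  \ottnt{A}  \rrbracket  \cong \mathbf{S}_{\ell_2} \llbracket  \ottnt{A}  \rrbracket  \cong \mathbf{S}_{\ell_1}\mathbf{S}_{\ell_2} \llbracket  \ottnt{A}  \rrbracket  \cong \mathbf{S}_{\ell_1 \vee \ell_2} \llbracket  \ottnt{A}  \rrbracket $, which is of the form $\mathbf{S}_{\ell_1 \vee \ell_2}Z$ and hence $(\ell_1 \vee \ell_2)$-local; apply (iv) once more.

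The two structural cases amount to showing that the full subcategory of $\ell$-local objects — the reflective subcategory associated with the idempotent monad $\mathbf{S}_{\ell}$ — is closed under finite products and exponentials. For \rref{Prot-Prod}, with $X =  \llbracket  \ottnt{A}  \rrbracket $ and $Y =  \llbracket  \ottnt{B}  \rrbracket $ both $\ell$-local by IH, the morphism $\langle (\overline{\eta}^{\ell}_{X})^{-1} \circ \mathbf{S}_{\ell}\pi_1,\ (\overline{\eta}^{\ell}_{Y})^{-1} \circ \mathbf{S}_{\ell}\pi_2 \rangle : \mathbf{S}_{\ell}(X \times Y) \to X \times Y$ is, componentwise by naturality of $\overline{\eta}^{\ell}$, a left inverse of $\overline{\eta}^{\ell}_{X \times Y}$, so (iii) gives an iso. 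For \rref{Prot-Fun}, with $Y =  \llbracket  \ottnt{B}  \rrbracket $ $\ell$-local by IH and $W =  \llbracket  \ottnt{A}  \rrbracket $, I would use that $\overline{\eta}^{\ell}$ is a \emph{strong} natural transformation $\Id \to \mathbf{S}_{\ell}$ (a composite of morphisms of $\ES$), so that its compatibility square with the strength $t^{\mathbf{S}_{\ell}}$ lets one check that $\Lambda\bigl((\overline{\eta}^{\ell}_{Y})^{-1} \circ \mathbf{S}_{\ell}(\mathrm{ev}_{W,Y}) \circ t'\bigr)$ — where $t'$ is the right-handed strength of $\mathbf{S}_{\ell}$ obtained from $t^{\mathbf{S}_{\ell}}$ by the symmetry of $\times$ — is a left inverse of $\overline{\eta}^{\ell}_{Y^{W}}$, whence (iii) applies again. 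I expect \rref{Prot-Fun} to be the main obstacle: it is the single case that genuinely exercises the strength of $\mathbf{S}_{\ell}$ and the strong-naturality of $\overline{\eta}^{\ell}$, and the transposition bookkeeping with the symmetry-twisted strength and the axioms of Figure \ref{strong} has to be carried out carefully; the product case and all four modal/base cases are, by contrast, essentially formal consequences of (i)--(iv).
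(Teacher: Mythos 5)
Your proposal is correct and follows essentially the same route as the paper: the paper's Lemma \ref{Ap2} constructs, by induction on $ \ell  \sqsubseteq  \ottnt{A} $, exactly the retractions $k$ you describe (including the symmetry-twisted strength in the function case), and its Lemma \ref{Ap3} is precisely your fact (iii) — the unit of the idempotent monad $(\mathbf{S}_{\ell}, \overline{\eta}, \mu^{\ell,\ell})$, once split mono, is two-sided invertible. The only point to tighten is \rref{Prot-Combine}: you end by concluding that $ \llbracket  \ottnt{A}  \rrbracket $ is $(\ell_1 \vee \ell_2)$-local, whereas the goal is $\ell$-locality for $ \ell  \sqsubseteq   \ell_{{\mathrm{1}}}  \vee  \ell_{{\mathrm{2}}}  $; this needs one further invocation of your \rref{Prot-Monad} argument ($\mathbf{S}_{\ell_1 \vee \ell_2}Z \cong \mathbf{S}_{\ell}\mathbf{S}_{\ell_1 \vee \ell_2}Z$), which is immediate from your facts (i) and (iv) but should be stated.
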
 


Using the above lemma, we interpret terms as:
\begin{align*}
 \llbracket   \mathbf{eta} ^{ \ell }  \ottnt{a}   \rrbracket  & = \mathbf{S}^{  \bot   \sqsubseteq  \ell } \circ \eta \circ  \llbracket  \ottnt{a}  \rrbracket  \\
 \llbracket   \mathbf{bind} ^{ \ell } \:  \ottmv{x}  =  \ottnt{a}  \: \mathbf{in} \:  \ottnt{b}   \rrbracket  & = k \circ \mathbf{S}_{\ell}  \llbracket  \ottnt{b}  \rrbracket  \circ \text{t}^{\mathbf{S}_{\ell}} \circ \langle \I ,  \llbracket  \ottnt{a}  \rrbracket  \rangle
\end{align*}
For later reference note that whenever we need to be precise, we use $\llbracket \_ \rrbracket_{(\Ct,\mathbf{S})}$ to refer to the interpretation of \ED($ \mathcal{L} $) in category $\Ct$ using $\mathbf{S}$. The above interpretation of \ED{} is sound, as we see next.
\begin{theorem}\label{dcceWD}
If $ \Gamma  \vdash  \ottnt{a}  :  \ottnt{A} $ in \ED{}, then $ \llbracket  \ottnt{a}  \rrbracket  \in \text{Hom}_{\Ct} ( \llbracket  \Gamma  \rrbracket  ,  \llbracket  \ottnt{A}  \rrbracket )$.
\end{theorem}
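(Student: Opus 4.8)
The plan is to proceed by induction on the typing derivation $\Gamma \vdash a : A$ in $\text{DCC}_e(\mathcal{L})$, relying on Lemma \ref{lemmak} together with the usual (routine) weakening and substitution lemmas for the interpretation $\llbracket - \rrbracket_{(\Ct,\mathbf{S})}$. The cases for the rules inherited from the simply-typed $\lambda$-calculus are handled exactly as in the standard soundness proof: since $\Ct$ is bicartesian closed, each rule is interpreted by the evident structural morphism --- pairing and projections for $\times$, copairing and injections for $+$, currying and evaluation for $\to$, and the universal maps for $\top$ and the initial object --- and these compose to morphisms of the stated homsets. So the only genuine work lies in the two rules \rref{DCC-Eta} and \rref{DCC-Bind}.

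For \rref{DCC-Eta}, the induction hypothesis gives $\llbracket a \rrbracket : \llbracket \Gamma \rrbracket \to \llbracket A \rrbracket$. Because $\bot$ is the monoid unit $1$ of $\Ca(\mathcal{L})$, the unit $\eta$ of $\mathbf{S}$ is a natural transformation $\Id \to \mathbf{S}_\bot$ and $\mathbf{S}^{\bot \sqsubseteq \ell}$ is a morphism $\mathbf{S}_\bot \to \mathbf{S}_\ell$ in $\ES$; composing, $\mathbf{S}^{\bot \sqsubseteq \ell} \circ \eta_{\llbracket A \rrbracket} \circ \llbracket a \rrbracket$ is a morphism $\llbracket \Gamma \rrbracket \to \mathbf{S}_\ell \llbracket A \rrbracket = \llbracket \mathcal{T}_\ell A \rrbracket$, which is exactly what the defining equation for $\llbracket \mathbf{eta}^\ell\, a \rrbracket$ produces.

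For \rref{DCC-Bind}, the premises are $\Gamma \vdash a : \mathcal{T}_\ell A$, $\Gamma, x : A \vdash b : B$, and $\ell \sqsubseteq B$. The induction hypotheses yield $\llbracket a \rrbracket : \llbracket \Gamma \rrbracket \to \mathbf{S}_\ell \llbracket A \rrbracket$ and $\llbracket b \rrbracket : \llbracket \Gamma \rrbracket \times \llbracket A \rrbracket \to \llbracket B \rrbracket$. Then $\langle \I, \llbracket a \rrbracket \rangle$ goes $\llbracket \Gamma \rrbracket \to \llbracket \Gamma \rrbracket \times \mathbf{S}_\ell \llbracket A \rrbracket$, the strength $\text{t}^{\mathbf{S}_\ell}$ of $\mathbf{S}_\ell$ carries this to $\mathbf{S}_\ell(\llbracket \Gamma \rrbracket \times \llbracket A \rrbracket)$, functoriality of $\mathbf{S}_\ell$ applied to $\llbracket b \rrbracket$ lands in $\mathbf{S}_\ell \llbracket B \rrbracket$, and finally Lemma \ref{lemmak} --- applicable precisely because $\ell \sqsubseteq B$ --- supplies the isomorphism $k : \mathbf{S}_\ell \llbracket B \rrbracket \to \llbracket B \rrbracket$. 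Composing these four arrows gives $k \circ \mathbf{S}_\ell \llbracket b \rrbracket \circ \text{t}^{\mathbf{S}_\ell} \circ \langle \I, \llbracket a \rrbracket \rangle : \llbracket \Gamma \rrbracket \to \llbracket B \rrbracket$, matching the definition of $\llbracket \mathbf{bind}^\ell\, x = a\ \mathbf{in}\ b \rrbracket$.

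The one genuinely delicate point --- which I expect to be the main obstacle --- is that $\llbracket - \rrbracket$ is defined by recursion on typing derivations, so well-definedness requires the result to be independent of the chosen derivation; concretely, in the \rref{DCC-Bind} case the morphism $k$ must not depend on which derivation of $\ell \sqsubseteq B$ we pick. This is exactly what the second clause of Lemma \ref{lemmak} buys us: it identifies $k$ as the two-sided inverse of $\overline{\eta} = \mathbf{S}^{\bot \sqsubseteq \ell} \circ \eta$, a morphism that makes no reference to any protection derivation, so $k = \overline{\eta}^{-1}$ is canonical. Granting Lemma \ref{lemmak} and the routine weakening/substitution lemmas, the induction then closes with only straightforward bookkeeping.
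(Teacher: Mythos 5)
Your proof is correct and follows essentially the same route as the paper's: induction on the typing derivation, with the $\lambda$-calculus cases handled by the bicartesian closed structure, the \rref{DCC-Eta} case by $\mathbf{S}^{\bot \sqsubseteq \ell} \circ \eta$, and the \rref{DCC-Bind} case by composing the strength, $\mathbf{S}_\ell \llbracket b \rrbracket$, and the morphism $k$ supplied by Lemma~\ref{lemmak}. Your closing observation that the second clause of Lemma~\ref{lemmak} makes $k$ canonical (independent of the chosen protection derivation) is a worthwhile point the paper leaves implicit, but it does not alter the argument.
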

\begin{theorem}\label{dcceOE}
If $\Gamma \vdash a : A$ in \ED{} and $ \vdash  \ottnt{a}  \leadsto  \ottnt{a'} $, then $\llbracket a \rrbracket = \llbracket a’ \rrbracket$.
\end{theorem}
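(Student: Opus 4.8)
The plan is to argue by induction on the derivation of $ \vdash  \ottnt{a}  \leadsto  \ottnt{a'} $, which is generated by \rref{CBV-BindLeft,CBV-BindBeta} together with the usual call-by-value congruence rules for the $\lambda$-calculus fragment. Since $\llbracket \_ \rrbracket$ of every term is, by Theorem \ref{dcceWD}, a composite of morphisms built from the interpretations of the immediate subterms, the congruence cases are immediate: for each such rule the redex occurs as a single factor $\llbracket c \rrbracket$ of a composite, the other factors are unchanged, and by inversion on $ \Gamma  \vdash  \ottnt{a}  :  \ottnt{A} $ the subterm $c$ is well typed in the appropriate context, so the induction hypothesis $\llbracket c \rrbracket = \llbracket c' \rrbracket$ gives $\llbracket a \rrbracket = \llbracket a' \rrbracket$. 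Concretely for \rref{CBV-BindLeft}, $\llbracket \mathbf{bind}^{\ell}\, x = c\, \mathbf{in}\, b \rrbracket = k \circ \mathbf{S}_{\ell} \llbracket b \rrbracket \circ \text{t}^{\mathbf{S}_{\ell}} \circ \langle \I, \llbracket c \rrbracket \rangle$, and replacing $\llbracket c \rrbracket$ by the equal $\llbracket c' \rrbracket$ changes nothing.

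The one substantive case is \rref{CBV-BindBeta}, where $a = \mathbf{bind}^{\ell}\, x = \mathbf{eta}^{\ell}\, v\, \mathbf{in}\, b$ with $v$ a value (the two occurrences of $\ell$ coincide because \ED{} has no coercion on $\mathcal{T}_{\ell}$, so this is forced by well-typedness) and $a' = b\{v/x\}$. Inverting $ \Gamma  \vdash  \ottnt{a}  :  \ottnt{A} $ through \rref{DCC-Bind} and then \rref{DCC-Eta} yields $ \ell  \sqsubseteq  \ottnt{A} $, a type $C$ with $\Gamma \vdash v : C$, and $\Gamma, x : C \vdash b : A$. Writing $\overline{\eta} = \mathbf{S}^{  \bot   \sqsubseteq  \ell } \circ \eta$ as in Lemma \ref{lemmak} and using $\llbracket \mathbf{eta}^{\ell}\, v \rrbracket = \overline{\eta} \circ \llbracket v \rrbracket$, we compute
\[ \llbracket a \rrbracket = k \circ \mathbf{S}_{\ell} \llbracket b \rrbracket \circ \text{t}^{\mathbf{S}_{\ell}} \circ \langle \I, \overline{\eta} \circ \llbracket v \rrbracket \rangle = k \circ \mathbf{S}_{\ell} \llbracket b \rrbracket \circ \text{t}^{\mathbf{S}_{\ell}} \circ (\I \times \overline{\eta}) \circ \langle \I, \llbracket v \rrbracket \rangle . \]
Since the whole interpretation lives in $\ES$, $\overline{\eta}$ is a \emph{strong} natural transformation from $\Id$ to $\mathbf{S}_{\ell}$ (a composite of the strong maps $\eta$ and $\mathbf{S}^{  \bot   \sqsubseteq  \ell }$), so the defining square of a strong natural transformation in Figure \ref{strongN}, applied with the trivial strength of $\Id$, gives $\text{t}^{\mathbf{S}_{\ell}} \circ (\I \times \overline{\eta}) = \overline{\eta}$. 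Then ordinary naturality of $\overline{\eta}$ at $\llbracket b \rrbracket$ gives $\mathbf{S}_{\ell} \llbracket b \rrbracket \circ \overline{\eta} = \overline{\eta} \circ \llbracket b \rrbracket$, so $\llbracket a \rrbracket = k \circ \overline{\eta} \circ \llbracket b \rrbracket \circ \langle \I, \llbracket v \rrbracket \rangle$. Finally $ \ell  \sqsubseteq  \ottnt{A} $ lets us invoke Lemma \ref{lemmak} to cancel $k \circ \overline{\eta} = \text{id}_{\llbracket A \rrbracket}$, leaving $\llbracket a \rrbracket = \llbracket b \rrbracket \circ \langle \I, \llbracket v \rrbracket \rangle$, which by the standard substitution lemma for $\llbracket \_ \rrbracket$ equals $\llbracket b\{v/x\} \rrbracket = \llbracket a' \rrbracket$.

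The main obstacle is exactly this \rref{CBV-BindBeta} case, and within it the crux is having Lemma \ref{lemmak} available: the cancellation $k \circ \overline{\eta} = \text{id}$ is what makes the non-standard $\mathbf{bind}$-rule of DCC --- the one that does \emph{not} re-wrap the return type in $\mathcal{T}_{\ell}$ --- semantically sound, and it rests on $\mathbf{S}_{\ell}$ restricting to an \emph{idempotent} monad on the $\ell$-protected types, i.e. on invertibility of $\mu^{\ell,\ell}$ together with $ \ell  \sqsubseteq  \ottnt{A} $. The remaining ingredients --- the substitution lemma for the interpretation, ordinary naturality of $\overline{\eta}$, and the strength identity $\text{t}^{\mathbf{S}_{\ell}} \circ (\I \times \overline{\eta}) = \overline{\eta}$ --- are routine once one works inside $\ES$, and I would either cite them or fold them into a short preliminary lemma. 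A minor point to verify during the induction is that inversion on the typing of a $\mathbf{bind}$ redex genuinely delivers both the side condition $ \ell  \sqsubseteq  \ottnt{A} $ and the matching labels; in \ED{} this follows immediately from the shapes of \rref{DCC-Bind} and \rref{DCC-Eta}.
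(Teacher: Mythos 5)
Your proposal is correct and follows essentially the same route as the paper's proof: congruence cases by the induction hypothesis, and for \rref{CBV-BindBeta} the chain of strength of $\overline{\eta}=\mathbf{S}^{ \bot \sqsubseteq \ell}\circ\eta$ (the paper just factors this into two strong-naturality steps, one for $\eta$ and one for $\mathbf{S}^{ \bot \sqsubseteq \ell}$), ordinary naturality, and the cancellation $k\circ\overline{\eta}=\text{id}$ from Lemma \ref{lemmak}. No gaps.
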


Computational adequacy with respect to call-by-value operational semantics follows as a corollary of the above theorem. Below, we assume that the categorical interpretation is injective for ground types. In particular, $ \llbracket   \mathbf{true}   \rrbracket  \neq  \llbracket   \mathbf{false}   \rrbracket $, where $ \mathbf{true}  \triangleq  \mathbf{inj}_1 \:   \ottkw{unit}   :  \mathbf{Bool} $ and $ \mathbf{false}  \triangleq  \mathbf{inj}_2 \:   \ottkw{unit}   :  \mathbf{Bool} $, where $ \mathbf{Bool}  \triangleq   \ottkw{Unit}   +   \ottkw{Unit}  $.

\begin{theorem} \label{dcceCA}
Let the interpretation $\llbracket \_ \rrbracket_{(\Ct,\mathbf{S})}$ be injective for ground types.
\begin{itemize}
\item Let $ \Gamma  \vdash  \ottnt{b}  :   \mathbf{Bool}  $ and $\ottmv{v}$ be a value of type $ \mathbf{Bool} $. If $ \llbracket  \ottnt{b}  \rrbracket _{(\Ct,\mathbf{S})} =  \llbracket  \ottmv{v}  \rrbracket _{(\Ct,\mathbf{S})}$, then $ \vdash  \ottnt{b}  \leadsto^{\ast}  \ottmv{v} $. \vspace*{3pt}
\item Fix some $\ell \in L$. Let $ \Gamma  \vdash  \ottnt{b}  :   \mathcal{T}_{ \ell } \:   \mathbf{Bool}   $ and $\ottmv{v}$ be a value of type $ \mathcal{T}_{ \ell } \:   \mathbf{Bool}  $. Suppose, the morphisms $\overline{\eta}_X \triangleq   \mathbf{S}^{   \bot    \sqsubseteq   \ell  }_{ \ottnt{X} }   \circ   \eta_{ \ottnt{X} }  $ are mono for any $X \in \text{Obj} (\Ct)$.  Now, if $ \llbracket  \ottnt{b}  \rrbracket _{(\Ct,\mathbf{S})} =  \llbracket  \ottmv{v}  \rrbracket _{(\Ct,\mathbf{S})}$, then $ \vdash  \ottnt{b}  \leadsto^{\ast}  \ottmv{v} $. 
\end{itemize}
\end{theorem}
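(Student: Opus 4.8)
The plan is to read this as a computational adequacy statement and obtain it from three ingredients: normalization of \ED{} together with progress and canonical forms; semantic soundness of the operational semantics, i.e.\ Theorem~\ref{dcceOE}; and the hypothesized injectivity of $\llbracket\_\rrbracket_{(\Ct,\mathbf{S})}$ on $ \mathbf{Bool} $ (and, for the second item, that $\overline{\eta}_X$ is mono). The one genuine import is that every closed term of ground type reduces to a value: since \ED{} is terminating (as remarked in Section~\ref{modeldcce}), any closed $ \vdash  \ottnt{b}  :  \ottnt{A} $ admits a $\leadsto$-sequence ending in a normal form, and a routine canonical-forms argument (induction on the normal form, using that a well-typed closed normal form is a value of its type) shows that a closed normal form of type $ \mathbf{Bool} $ is $ \mathbf{true} $ or $ \mathbf{false} $, while a closed normal form of type $\mathcal{T}_{\ell}\, \mathbf{Bool} $ is $\mathbf{eta}^{\ell}\,v$ with $v \in \{ \mathbf{true} , \mathbf{false} \}$. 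The only point of care is that \ED{} reduces underneath $\mathbf{eta}$ (which is why call-by-value was chosen), so the canonical-forms lemma for $\mathcal{T}_{\ell}\, \mathbf{Bool} $ must genuinely force the body of a value to be itself a value; this is fine, since a $\mathbf{bind}$ whose scrutinee is already an $\mathbf{eta}$ is a \rref{CBV-BindBeta}-redex, so no non-$\mathbf{eta}$ head form can survive at that type.

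\emph{The $ \mathbf{Bool} $ case.} We may assume $\ottnt{b}$ closed (the reduction relation is on closed terms; equivalently $\Gamma$ is empty), with $ \llbracket  \ottnt{b}  \rrbracket  =  \llbracket  v  \rrbracket $ and $v \in \{ \mathbf{true} , \mathbf{false} \}$. By the paragraph above, $ \vdash  \ottnt{b}  \leadsto^{\ast}  v' $ for some $v' \in \{ \mathbf{true} , \mathbf{false} \}$. Iterating Theorem~\ref{dcceOE} along this sequence gives $ \llbracket  \ottnt{b}  \rrbracket  =  \llbracket  v'  \rrbracket $, hence $ \llbracket  v'  \rrbracket  =  \llbracket  v  \rrbracket $; injectivity of $\llbracket\_\rrbracket_{(\Ct,\mathbf{S})}$ on ground types (equivalently $ \llbracket   \mathbf{true}   \rrbracket  \neq  \llbracket   \mathbf{false}   \rrbracket $) forces $v' = v$, so $ \vdash  \ottnt{b}  \leadsto^{\ast}  v $.

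\emph{The $\mathcal{T}_{\ell}\, \mathbf{Bool} $ case.} Write $v = \mathbf{eta}^{\ell}\,w$ with $w \in \{ \mathbf{true} , \mathbf{false} \}$. As before, $ \vdash  \ottnt{b}  \leadsto^{\ast}  v' $ with $v' = \mathbf{eta}^{\ell}\,w'$ and $w' \in \{ \mathbf{true} , \mathbf{false} \}$, and Theorem~\ref{dcceOE} gives $ \llbracket  \ottnt{b}  \rrbracket  =  \llbracket  v'  \rrbracket $. By the interpretation of $\mathbf{eta}$ we have $ \llbracket  v'  \rrbracket  = \overline{\eta}_{\llbracket\mathbf{Bool}\rrbracket} \circ  \llbracket  w'  \rrbracket $ and $ \llbracket  v  \rrbracket  = \overline{\eta}_{\llbracket\mathbf{Bool}\rrbracket} \circ  \llbracket  w  \rrbracket $. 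Combining with $ \llbracket  \ottnt{b}  \rrbracket  =  \llbracket  v  \rrbracket $ yields $\overline{\eta}_{\llbracket\mathbf{Bool}\rrbracket} \circ  \llbracket  w'  \rrbracket  = \overline{\eta}_{\llbracket\mathbf{Bool}\rrbracket} \circ  \llbracket  w  \rrbracket $, and since $\overline{\eta}_{\llbracket\mathbf{Bool}\rrbracket}$ is assumed mono we cancel it to get $ \llbracket  w'  \rrbracket  =  \llbracket  w  \rrbracket $. Injectivity on ground types again gives $w' = w$, hence $v' = v$ and $ \vdash  \ottnt{b}  \leadsto^{\ast}  v $.

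The main obstacle is the normalization-plus-progress package that supplies ``$\ottnt{b}$ reduces to a value'' — the semantic content above is just one mono-cancellation per bullet. Normalization of \ED{} cannot simply be pulled back along erasure to the simply-typed $\lambda$-calculus, since (depending on how $\mathbf{bind}$ is erased) \rref{CBV-BindBeta}-steps need not be witnessed as genuine $\beta$-steps; I would therefore either invoke termination of \ED{} as stated in Section~\ref{modeldcce}, or establish it directly (e.g.\ by a reducibility argument, or by combining the erasure simulation with a measure bounding the $\mathbf{bind}$-reductions), after which progress and canonical forms follow by a routine induction on typing.
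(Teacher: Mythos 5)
Your proposal is correct and follows essentially the same route as the paper's proof: strong normalization and type soundness of \ED{} produce a value $v_0$ with $ \vdash  \ottnt{b}  \leadsto^{\ast}  \ottmv{v_{{\mathrm{0}}}} $, Theorem~\ref{dcceOE} gives $ \llbracket  \ottmv{v_{{\mathrm{0}}}}  \rrbracket  =  \llbracket  \ottmv{v}  \rrbracket $, and injectivity on ground types (plus cancellation of the mono $\overline{\eta}$ in the $ \mathcal{T}_{ \ell } \:   \mathbf{Bool}  $ case) forces $\ottmv{v_{{\mathrm{0}}}} = \ottmv{v}$. The only difference is presentational: you spell out the canonical-forms and normalization package that the paper simply asserts.
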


Now that GMCC($ \mathcal{L} $) and \ED{}($ \mathcal{L} $) enjoy the same class of models, we can show the two calculi are exactly equivalent over these models.

\begin{theorem}\label{gmccdcce}
Let $ \mathcal{L} $ be a bounded join-semilattice.
\begin{itemize}
\item If $ \Gamma  \vdash  \ottnt{a}  :  \ottnt{A} $ in GMCC($ \mathcal{L} $), then $ \llbracket   \overline{ \ottnt{a} }   \rrbracket  =  \llbracket  \ottnt{a}  \rrbracket $.
\item If $ \Gamma  \vdash  \ottnt{a}  :  \ottnt{A} $ in \ED{}($ \mathcal{L} $), then $ \llbracket   \underline{ \ottnt{a} }   \rrbracket  =  \llbracket  \ottnt{a}  \rrbracket $.
\end{itemize}
\end{theorem}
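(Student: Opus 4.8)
I would prove the second bullet directly by induction on the \ED{}-typing derivation and obtain the first as a corollary. Fix a bicartesian closed $\Ct$ and a strong monoidal $\mathbf{S} : \Ca(\mathcal{L}) \to \E^s$, so that $\mathbf{S}$ is simultaneously a strong graded monad (with $\eta$, $\mu$) and a strong graded comonad (with $\epsilon$, $\delta$), and — since $\mathbf{S}$ is strong monoidal — $\eta,\epsilon$ are mutually inverse and $\mu,\delta$ are mutually inverse. On the simply-typed $\lambda$-calculus fragment the translation $\underline{\cdot}$ is homomorphic and the GMCC- and \ED{}-interpretations coincide on the bicartesian closed structure of $\Ct$, so those cases follow from the induction hypotheses. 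Only \rref{DCC-Eta} and \rref{DCC-Bind} require work. For \rref{DCC-Eta}, $\underline{\mathbf{eta}^{\ell} a} = \mathbf{up}^{\bot,\ell}(\mathbf{ret}\ \underline{a})$ is interpreted in GMCC as $\mathbf{S}^{\bot \sqsubseteq \ell} \circ \eta \circ \llbracket \underline{a} \rrbracket$, which by the induction hypothesis equals $\mathbf{S}^{\bot \sqsubseteq \ell} \circ \eta \circ \llbracket a \rrbracket = \llbracket \mathbf{eta}^{\ell} a \rrbracket$. For \rref{DCC-Bind}, unfolding $\underline{\mathbf{bind}^{\ell} x = a\ \mathbf{in}\ b} = j\ ((\mathbf{lift}^{\ell}(\lambda x.\underline{b}))\ \underline{a})$ through Figures \ref{intGMC} and \ref{intGCC} gives $\llbracket j \rrbracket \circ \mathbf{S}_{\ell}(\llbracket \underline{b} \rrbracket) \circ t^{\mathbf{S}_{\ell}} \circ \langle \I, \llbracket \underline{a} \rrbracket \rangle$, while the \ED{}-interpretation of the source term is the same expression with $\llbracket j \rrbracket$ replaced by the isomorphism $k$ of Lemma \ref{lemmak} built from the same derivation of $\ell \sqsubseteq B$. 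So, modulo the induction hypotheses, this case reduces to the following.

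\textbf{Key lemma.} For every derivation of $\ell \sqsubseteq B$, the GMCC-interpretation $\llbracket j \rrbracket_{(\Ct,\mathbf{S})}$ of the term $j$ produced by Lemma \ref{protectId} equals the morphism $k$ produced by Lemma \ref{lemmak}. I would prove this by induction on the protection derivation, laying the two parallel recursions side by side. The cases \rref{Prot-Prod} and \rref{Prot-Fun} use functoriality of $\times$ and of exponentials in $\Ct$ together with the strength axioms (Figure \ref{strong}); \rref{Prot-Monad} uses $\llbracket \mathbf{join}^{\ell,\ell'} \rrbracket = \mu^{\ell,\ell'}$ with $\ell \vee \ell' = \ell'$; \rref{Prot-Already} lifts the inductively obtained term via $\llbracket \mathbf{lift}^{\ell'} \rrbracket$; \rref{Prot-Minimum} uses that (at $\ell = \bot$) $\overline{\eta}$ is invertible with inverse $\epsilon$; and \rref{Prot-Combine} uses that $\mu^{\ell_1,\ell_2}$ is invertible with inverse $\delta^{\ell_1,\ell_2}$. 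In each case the identities $k \circ \overline{\eta} = \I$ and $\overline{\eta} \circ k = \I$ from Lemma \ref{lemmak} pin $k$ down uniquely, and the GMCC term of Lemma \ref{protectId} is, by construction, assembled from exactly the constructors whose semantics realise that composite.

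\textbf{First bullet.} Given $\Gamma \vdash a : A$ in GMCC($\mathcal{L}$), Theorem \ref{DCCComplete} yields $\overline{\Gamma} \vdash \overline{a} : \overline{A}$ in \ED{}($\mathcal{L}$), so the second bullet applied to $\overline{a}$ gives $\llbracket \underline{\overline{a}} \rrbracket = \llbracket \overline{a} \rrbracket$. Theorem \ref{GMCCround} gives $\underline{\overline{a}} \equiv a$ in GMCC($\mathcal{L}$), whence $\llbracket \underline{\overline{a}} \rrbracket = \llbracket a \rrbracket$ by soundness of the GMCC-interpretation (Theorem \ref{gmccsound}). Chaining these equalities gives $\llbracket \overline{a} \rrbracket = \llbracket a \rrbracket$. (Alternatively one can prove the first bullet by a direct induction on the GMCC derivation, checking the six graded constructors against Figures \ref{GMCtoDCC} and \ref{GCCtoDCC}; the same invertibility facts $\mu = \delta^{-1}$ and $\eta = \epsilon^{-1}$ drive each case, and \rref{Prot-Minimum} and \rref{Prot-Combine} are exactly what make the translations of $\mathbf{extr}$ and $\mathbf{fork}$ well typed.)

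\textbf{Main obstacle.} I expect the Key lemma to be the crux: it is the point at which the two independently defined recursions — Lemma \ref{protectId} in GMCC syntax and Lemma \ref{lemmak} in $\Ct$ — must be shown to compute "the same thing". Once both constructions are written out explicitly, the induction is essentially bookkeeping, but it carries all the structural content: it is precisely here that the protection judgement and the non-standard \textbf{bind}-rule of DCC get reconciled with the standard monadic and comonadic operations of GMCC via the strong-monoidal isomorphisms of $\mathbf{S}$.
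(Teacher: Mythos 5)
Your proposal is correct, and on the crux of the theorem it coincides with the paper's argument: the paper proves the second bullet as Theorem \ref{equiv2} by induction on the \ED{}-derivation, the only nontrivial cases being \rref{DCC-Eta,DCC-Bind}, and the bind case is discharged by exactly your Key lemma, which appears as Lemma \ref{jk} (stated in uncurried form, $k^{\ell}_{\llbracket A\rrbracket} = \Lambda^{-1}\llbracket j^{\ell}_{\underline{A}}\rrbracket \circ \langle \langle\rangle, \mathrm{id}\rangle$, since $j$ is a closed term of function type) and is proved, as you predict, by running the recursions of Lemmas \ref{protectId} and \ref{lemmak} side by side over the protection derivation. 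Where you diverge is the first bullet: the paper proves it directly as Theorem \ref{equiv1}, by a second induction over the six graded constructors of GMCC that matches the translations of Figures \ref{GMCtoDCC} and \ref{GCCtoDCC} against the semantics through a sequence of diagram chases (again leaning on Lemma \ref{lemmak}); you instead obtain it as a corollary of the second bullet together with Theorem \ref{GMCCround} and soundness (Theorem \ref{gmccsound}). Your route is legitimate and cheaper on the semantic side, since it replaces the diagram chases of Theorem \ref{equiv1} by the purely syntactic round-trip computation already carried out for Theorem \ref{GMCCround}; the price is that the first bullet then depends on that round-trip theorem, whereas in the paper the two bullets are established independently. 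Your further observation that $k$ is pinned down as the unique two-sided inverse of $\overline{\eta}$, so that each case of the key lemma only needs $\llbracket j\rrbracket \circ \overline{\eta} = \mathrm{id}$, is a genuine streamlining that the paper does not exploit: it computes $\llbracket j\rrbracket$ and $k$ explicitly in every case and compares them.
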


The categorical semantics of \ED{} help in understanding the non-standard \textbf{bind}-rule of DCC. By lemma \ref{lemmak}, if $ \ell  \sqsubseteq  \ottnt{A} $, then $ \mathbf{S}_{  \ell  }   \llbracket  \ottnt{A}  \rrbracket   \cong  \llbracket  \ottnt{A}  \rrbracket $. In fact, if $ \ell  \sqsubseteq  \ottnt{A} $, then $ \llbracket  \ottnt{A}  \rrbracket $ is the carrier of an $\mathbf{S}_{\ell}$-algebra. What this means is that the protection judgement is a syntactic mechanism for picking out the appropriate monad algebras. This insight explains the signature of \rref{DCC-Bind}: $  \mathcal{T}_{ \ell } \:  \ottnt{A}   \to   (   \ottnt{A}  \to  \ottnt{B}   )   \to \{  \ell  \sqsubseteq  \ottnt{B}  \} \to B$. If $ \ell  \sqsubseteq  \ottnt{B} $, then $ \llbracket  \ottnt{B}  \rrbracket $ is the carrier of an $\mathbf{S}_{\ell}$-algebra. As such, the return type of the rule can be $\ottnt{B}$, in lieu of $ \mathcal{T}_{ \ell } \:  \ottnt{B} $. The following theorem characterizes the protection judgement in terms of monad algebras.

\begin{theorem} \label{dcceEM}
If $ \ell  \sqsubseteq  \ottnt{A} $ in \ED{}, then $( \llbracket  \ottnt{A}  \rrbracket , k)$ is an $\mathbf{S}_{\ell}$-algebra.\\
Further, if $ \ell  \sqsubseteq  \ottnt{A} $ and $ \ell  \sqsubseteq  \ottnt{B} $, then for any $f \in \text{Hom}_{\Ct} ( \llbracket  \ottnt{A}  \rrbracket  ,  \llbracket  \ottnt{B}  \rrbracket )$, $f$ is an $\mathbf{S}_{\ell}$-algebra morphism.\\
Hence, the full subcategory of $\Ct{}$ with $\text{Obj} := \{  \llbracket  \ottnt{A}  \rrbracket  \: | \:  \ell  \sqsubseteq  \ottnt{A}  \}$ is also a full subcategory of the Eilenberg-Moore category, $\Ct{}^{\mathbf{S}_{\ell}}$.
\end{theorem}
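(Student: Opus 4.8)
The plan is to reduce everything to two inputs: first, the observation recorded just before Lemma~\ref{lemmak} that for each $\ell \in L$ the triple $(\mathbf{S}_{\ell}, \overline{\eta}, \mu^{\ell,\ell})$ is an \emph{idempotent} monad, so that $\mu^{\ell,\ell}$ is invertible; and second, the isomorphism $k$ supplied by Lemma~\ref{lemmak} together with its two defining equations $k \circ \overline{\eta}_{\llbracket A \rrbracket} = \text{id}_{\llbracket A \rrbracket}$ and $\overline{\eta}_{\llbracket A \rrbracket} \circ k = \text{id}_{\mathbf{S}_{\ell} \llbracket A \rrbracket}$. For the first claim I would check the two Eilenberg--Moore laws for $(\llbracket A \rrbracket, k)$ directly. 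The unit law is literally the first equation of Lemma~\ref{lemmak}. For the multiplication law $k \circ \mathbf{S}_{\ell}(k) = k \circ \mu^{\ell,\ell}_{\llbracket A \rrbracket}$, since $k$ is invertible it suffices to prove $\mathbf{S}_{\ell}(k) = \mu^{\ell,\ell}_{\llbracket A \rrbracket}$. Naturality of $\overline{\eta}$ applied to the morphism $k$ gives $\overline{\eta}_{\llbracket A \rrbracket} \circ k = \mathbf{S}_{\ell}(k) \circ \overline{\eta}_{\mathbf{S}_{\ell} \llbracket A \rrbracket}$, and the left-hand side is $\text{id}_{\mathbf{S}_{\ell}\llbracket A \rrbracket}$ by Lemma~\ref{lemmak}, so $\mathbf{S}_{\ell}(k)$ is a retraction of $\overline{\eta}_{\mathbf{S}_{\ell} \llbracket A \rrbracket}$; the monad unit law gives that $\mu^{\ell,\ell}_{\llbracket A \rrbracket}$ is another retraction of the same morphism; and because $\mu^{\ell,\ell}$ is invertible, $\overline{\eta}_{\mathbf{S}_{\ell} \llbracket A \rrbracket}$ is in fact an isomorphism (a section of an iso is its inverse), whence its retraction is unique and the two agree. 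Equivalently one may simply invoke the standard characterization of idempotent monads: an object carries an algebra structure exactly when its unit component is invertible, and the structure map is then forced to be the inverse.

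For the second claim, write $k_A$ and $k_B$ for the isomorphisms of Lemma~\ref{lemmak} associated with $\ell \sqsubseteq A$ and $\ell \sqsubseteq B$, and let $f \colon \llbracket A \rrbracket \to \llbracket B \rrbracket$ be arbitrary. I would run a three-step diagram chase: naturality of $\overline{\eta}$ at $f$ gives $\overline{\eta}_{\llbracket B \rrbracket} \circ f = \mathbf{S}_{\ell}(f) \circ \overline{\eta}_{\llbracket A \rrbracket}$; precomposing with $k_A$ and using $\overline{\eta}_{\llbracket A \rrbracket} \circ k_A = \text{id}$ yields $\overline{\eta}_{\llbracket B \rrbracket} \circ f \circ k_A = \mathbf{S}_{\ell}(f)$; and precomposing that with $k_B$ and using $k_B \circ \overline{\eta}_{\llbracket B \rrbracket} = \text{id}$ yields $f \circ k_A = k_B \circ \mathbf{S}_{\ell}(f)$, which is precisely the condition that $f$ be an $\mathbf{S}_{\ell}$-algebra morphism $(\llbracket A \rrbracket, k_A) \to (\llbracket B \rrbracket, k_B)$.

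The final sentence then follows formally. By the first claim the object-level assignment $\llbracket A \rrbracket \mapsto (\llbracket A \rrbracket, k)$ lands in $\Ct^{\mathbf{S}_{\ell}}$, and since on an object of idempotent-monad-algebra type the algebra structure is unique, this assignment is well defined on the set $\{ \llbracket A \rrbracket \mid \ell \sqsubseteq A \}$; the forgetful functor $\Ct^{\mathbf{S}_{\ell}} \to \Ct$ recovers the underlying object, so composing gives a faithful functor from the full subcategory of $\Ct$ on these objects. By the second claim every $\Ct$-morphism between two such objects is already an $\mathbf{S}_{\ell}$-algebra morphism, and conversely any algebra morphism is in particular a $\Ct$-morphism, so the functor is full. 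Hence it identifies $\{ \llbracket A \rrbracket \mid \ell \sqsubseteq A \}$ with a full subcategory of $\Ct^{\mathbf{S}_{\ell}}$.

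The one mildly delicate point is the multiplication law in the first claim: unlike the unit law it is not directly handed to us by Lemma~\ref{lemmak}, and it genuinely uses idempotency --- the invertibility of $\mu^{\ell,\ell}$ --- to pin down $\mathbf{S}_{\ell}(k)$. Everything else is naturality of $\overline{\eta}$ combined with the two identities of Lemma~\ref{lemmak}.
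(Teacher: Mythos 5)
Your proof is correct and follows essentially the same route as the paper: the unit law is read off from Lemma~\ref{lemmak}, the multiplication law is extracted from idempotency (invertibility of $\mu^{\ell,\ell}$) together with the two identities for $k$, and the algebra-morphism claim is the same three-step chase through naturality of $\overline{\eta}$. The only local difference is in the multiplication law, where you pin down $\mathbf{S}_{\ell}(k) = \mu^{\ell,\ell}_{\llbracket A \rrbracket}$ exactly, via uniqueness of retractions of the isomorphism $\overline{\eta}_{\mathbf{S}_{\ell}\llbracket A \rrbracket}$, whereas the paper derives only the weaker identity $k \circ \mathbf{S}_{\ell}(k) = k \circ \mu^{\ell,\ell}_{\llbracket A \rrbracket}$ by an equational chain starting from the unit law; both hinge on the same facts.
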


Next, we use these models to prove noninterference for \ED{}.

\subsection{Proof of Noninterference}

Two functors in $\ES$ are crucial to our proof of non-interference. One of them is the identity functor, $\Id$. The other is the terminal functor, denoted by $\ast$, the functor which maps all objects to $\top$, the terminal object of the category and all morphisms to $\langle \rangle$. Now, for every $\ell \in L$, we define a strong monoidal functor $\mathbf{S}^\ell$ from $\Ca( \mathcal{L} )$ to $\ES{}$ as follows.
\begin{align*}
\mathbf{S}^{\ell}(\ell') = \begin{cases} 
                            \Id , & \text{ if }  \ell'  \sqsubseteq  \ell  \hspace{10pt}\\
                            \ast , & \text{ otherwise } \hspace*{10pt}
                            \end{cases}
\mathbf{S}^{\ell}( \ell_{{\mathrm{1}}}  \sqsubseteq  \ell_{{\mathrm{2}}} ) & = \begin{cases}
                            \I , & \text{ if }  \ell_{{\mathrm{2}}}  \sqsubseteq  \ell  \\
                            \langle \rangle , & \text{ otherwise } 
                            \end{cases} 
\end{align*}
The following points are worth noting here.
\begin{itemize}
\item $\mathbf{S}^{\ell}( \bot ) = \Id$ for any $\ell \in L$. Then, for every $\mathbf{S}^{\ell}$, $\eta = \epsilon = \I$ .
\item Fix some $\ell_{{\mathrm{0}}} \in L$. Now, for any $\ell_{{\mathrm{1}}}, \ell_{{\mathrm{2}}} \in L$, we have, $\mathbf{S}^{\ell_{{\mathrm{0}}}}(\ell_{{\mathrm{1}}}) \circ \mathbf{S}^{\ell_{{\mathrm{0}}}}(\ell_{{\mathrm{2}}}) = \mathbf{S}^{\ell_{{\mathrm{0}}}}( \ell_{{\mathrm{1}}}  \vee  \ell_{{\mathrm{2}}} )$. Then, $\mu^{\ell_{{\mathrm{1}}},\ell_{{\mathrm{2}}}} = \delta^{\ell_{{\mathrm{1}}},\ell_{{\mathrm{2}}}} = \I$. Therefore, the $\mathbf{S}^{\ell}$s are all strict monoidal functors.
\end{itemize}
Now, for any bicartesian closed category $\Ct$, any strong monoidal functor from $\Ca( \mathcal{L} )$ to $\ES{}$ provides a computationally adequate interpretation of \ED{}($ \mathcal{L} $), provided the interpretation for  ground types is injective. As such, 
\begin{theorem}\label{dcceSlA}
$\llbracket \_ \rrbracket_{(\mathbf{Set},\mathbf{S}^{\ell})}$, for any $\ell \in L$, is a computationally adequate interpretation of \ED{}($ \mathcal{L} $).
\end{theorem}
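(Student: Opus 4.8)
The plan is to obtain Theorem~\ref{dcceSlA} by instantiating, at the concrete model $(\mathbf{Set},\mathbf{S}^{\ell})$, the general results already established for models of $\text{DCC}_e$: Theorems~\ref{dcceWD}, \ref{dcceOE}, \ref{dcceCA}, together with Lemma~\ref{lemmak}. Those results, stated for an arbitrary bicartesian closed category $\Ct$ and an arbitrary strong monoidal functor $\mathbf{S} : \Ca(\mathcal{L}) \to \ES$, already give that $\llbracket \_ \rrbracket_{(\Ct,\mathbf{S})}$ is sound for typing (\ref{dcceWD}) and is preserved by the call-by-value reduction $\leadsto$ (\ref{dcceOE}), and that if $\llbracket \_ \rrbracket_{(\Ct,\mathbf{S})}$ is injective on ground types then equality of denotations at a ground type forces a common reduct (\ref{dcceCA}) --- i.e.\ the interpretation is computationally adequate. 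So it suffices to check three things for $(\mathbf{Set},\mathbf{S}^{\ell})$: (i)~$\mathbf{Set}$ is bicartesian closed; (ii)~$\mathbf{S}^{\ell}$ is a strong monoidal functor from $\Ca(\mathcal{L})$ to $\ES$; and (iii)~$\llbracket \_ \rrbracket_{(\mathbf{Set},\mathbf{S}^{\ell})}$ is injective on ground types.

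Point~(i) is standard and I would simply cite it. For point~(ii) I would unfold the definition of $\mathbf{S}^{\ell}$ given above. Since $\Ca(\mathcal{L})$ is a preorder category, functoriality amounts to checking that the assignment respects reflexivity and transitivity of $\sqsubseteq$, which is immediate once one notes that any morphism into the terminal endofunctor $\ast$ is the unique one $\langle\rangle$, so any two parallel natural transformations into $\ast$ coincide. The values of $\mathbf{S}^{\ell}$ lie in $\ES$ because $\Id$ and $\ast$ are strong endofunctors of $(\mathbf{Set},\times,\top)$ --- with strengths $\I$ and $\langle\rangle$ respectively, the coherence diagrams of Figure~\ref{strong} being trivial since every map into $\top$ is unique --- and $\I$ and $\langle\rangle$ are strong natural transformations (Figure~\ref{strongN} is likewise trivial). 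Strong monoidality is exactly the two bulleted observations already recorded when $\mathbf{S}^{\ell}$ was introduced: $\mathbf{S}^{\ell}(\bot) = \Id$ and $\mathbf{S}^{\ell}(\ell_1) \circ \mathbf{S}^{\ell}(\ell_2) = \mathbf{S}^{\ell}(\ell_1 \vee \ell_2)$ hold on the nose, so $F_0$ and every $F_2$ may be taken to be identities; in fact $\mathbf{S}^{\ell}$ is strict monoidal, with $\eta = \epsilon = \mu = \delta = \I$.

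Point~(iii) carries whatever content the theorem has. The ground types are the first-order datatypes built from the unit and empty types by $\times$ and $+$, and I would prove, by induction on such a type $\ottnt{A}$, that $\llbracket \ottnt{A} \rrbracket_{(\mathbf{Set},\mathbf{S}^{\ell})}$ is an honest finite set and that the assignment of denotations to closed values of type $\ottnt{A}$ is a bijection onto $\llbracket \ottnt{A} \rrbracket$. The base cases are the one-point set for the unit type (whose only closed value is the unit term) and the empty set for the empty type (no closed values); the inductive steps use $\llbracket \ottnt{A} \times \ottnt{B} \rrbracket = \llbracket \ottnt{A} \rrbracket \times \llbracket \ottnt{B} \rrbracket$ and $\llbracket \ottnt{A} + \ottnt{B} \rrbracket = \llbracket \ottnt{A} \rrbracket + \llbracket \ottnt{B} \rrbracket$ in $\mathbf{Set}$, which match the pairing and injection value formers. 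In particular $\llbracket \mathbf{Bool} \rrbracket$ is a two-element set and $\llbracket \mathbf{true} \rrbracket \neq \llbracket \mathbf{false} \rrbracket$, as required by Theorem~\ref{dcceCA}. Assembling (i)--(iii) with the general results then yields the claim.

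The step I expect to be the main --- indeed essentially the only --- obstacle is getting point~(iii) stated correctly: one must ensure that the class of \emph{ground types} over which injectivity is asserted excludes the types $\mathcal{T}_{\ell'}\ottnt{A}$ with $\ell' \not\sqsubseteq \ell$, because $\mathbf{S}^{\ell}$ sends such a type to $\top$ and the interpretation is manifestly \emph{not} injective there --- this collapse being precisely the device exploited in the noninterference argument that follows. (Relatedly, for an observation at a type $\mathcal{T}_{\ell'}\mathbf{Bool}$ the mono side condition in the second clause of Theorem~\ref{dcceCA} holds for $(\mathbf{Set},\mathbf{S}^{\ell})$ exactly when $\ell' \sqsubseteq \ell$ --- there $\overline{\eta}$ is an identity --- and fails otherwise, which is again the intended behaviour.) Everything else is routine, the verifications in~(ii) being easy precisely because every hom-set of $\mathbf{Set}$ into $\top$ is a singleton.
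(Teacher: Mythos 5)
Your proposal is correct and follows essentially the same route as the paper: the paper likewise reduces the claim to checking that $\mathbf{S}^{\ell}$ is a (strict) strong monoidal functor from $\Ca(\mathcal{L})$ to $\ES$ via the same two-case analysis on $\ell_1,\ell_2 \sqsubseteq \ell$, invoking the general adequacy result for models with ground-type-injective interpretations. The only difference is that the paper simply asserts injectivity on ground types, whereas you spell out the induction and correctly flag that the ground types must exclude the collapsed $\mathcal{T}_{\ell'}$-types --- a point the paper leaves implicit.
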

Next, we explain the intuition behind these strong monoidal functors. $\mathbf{S}^{\ell}$ keeps untouched all information at levels $\ell'$ where $ \ell'  \sqsubseteq  \ell $ but blacks out all information at every other level. So, $\mathbf{S}^{\ell}$ corresponds to the view of an observer at level $\ell$. We can formalize this intuition. Suppose $\neg ( \ell''  \sqsubseteq  \ell )$, i.e. $\ell$ should not depend upon $\ell''$. Then, if $ \ell''  \sqsubseteq  \ottnt{A} $, the terms of type $A$ should not be visible to an observer at level $\ell$. In other words, if $ \ell''  \sqsubseteq  \ottnt{A} $, $\mathbf{S}^{\ell}$ should black out all information from type $A$. This is indeed the case, as we see next.
\begin{lemma}\label{dcceSlObs}
If $ \ell''  \sqsubseteq  \ottnt{A} $ and $\neg ( \ell''  \sqsubseteq  \ell )$, then $ \llbracket  \ottnt{A}  \rrbracket _{(\Ct, \mathbf{S}^{  \ell  } )} \cong \top$. 
\end{lemma}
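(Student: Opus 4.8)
The plan is to argue by induction on the derivation of $\ell'' \sqsubseteq A$, using the protection rules of \ED{} (\rref{Prot-Prod,Prot-Fun,Prot-Monad,Prot-Already,Prot-Minimum,Prot-Combine}). Writing $\mathbf{S}$ for $\mathbf{S}^{\ell}$, the fact I would use throughout is that each endofunctor $\mathbf{S}_{\ell'}$ is either $\Id$ or $\ast$, and both of these send the terminal object $\top$ to itself, so $\mathbf{S}_{\ell'}(\top) = \top$; since functors preserve isomorphisms, this gives $\mathbf{S}_{\ell'} \llbracket B \rrbracket \cong \top$ whenever $\llbracket B \rrbracket \cong \top$.

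First I would dispatch the cases in which the induction hypothesis simply passes through a type former, using that $\top$ is absorbing. For \rref{Prot-Prod} the hypotheses $\ell'' \sqsubseteq A_1$ and $\ell'' \sqsubseteq A_2$ give $\llbracket A_1 \rrbracket \cong \top \cong \llbracket A_2 \rrbracket$ by induction, hence $\llbracket A_1 \times A_2 \rrbracket = \llbracket A_1 \rrbracket \times \llbracket A_2 \rrbracket \cong \top \times \top \cong \top$. For \rref{Prot-Fun} the hypothesis $\ell'' \sqsubseteq A_2$ gives $\llbracket A_2 \rrbracket \cong \top$, hence $\llbracket A_1 \to A_2 \rrbracket = \llbracket A_2 \rrbracket^{\llbracket A_1 \rrbracket} \cong \top^{\llbracket A_1 \rrbracket} \cong \top$. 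For \rref{Prot-Already}, with $A = \mathcal{T}_{\ell'} B$ and hypothesis $\ell'' \sqsubseteq B$, the induction hypothesis gives $\llbracket B \rrbracket \cong \top$, and then $\llbracket \mathcal{T}_{\ell'} B \rrbracket = \mathbf{S}_{\ell'} \llbracket B \rrbracket \cong \mathbf{S}_{\ell'}(\top) = \top$ by the observation above.

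Then I would handle the cases that genuinely invoke the side condition $\neg(\ell'' \sqsubseteq \ell)$. The \rref{Prot-Minimum} case is vacuous, since there $\ell'' = \bot$ and $\bot \sqsubseteq \ell$ always holds. For \rref{Prot-Monad}, with $A = \mathcal{T}_{\ell'} B$ and hypothesis $\ell'' \sqsubseteq \ell'$, transitivity of $\sqsubseteq$ forces $\neg(\ell' \sqsubseteq \ell)$ --- otherwise $\ell'' \sqsubseteq \ell$ --- so by the definition of $\mathbf{S}^{\ell}$ we get $\mathbf{S}_{\ell'} = \ast$, whence $\llbracket \mathcal{T}_{\ell'} B \rrbracket = \ast \llbracket B \rrbracket = \top$. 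For \rref{Prot-Combine}, with $\ell'' = \ell_1 \vee \ell_2$ and hypotheses $\ell_1 \sqsubseteq A$ and $\ell_2 \sqsubseteq A$, not both $\ell_i$ can be below $\ell$ --- else $\ell_1 \vee \ell_2 \sqsubseteq \ell$ --- so, without loss of generality, $\neg(\ell_1 \sqsubseteq \ell)$, and applying the induction hypothesis to the sub-derivation $\ell_1 \sqsubseteq A$ gives $\llbracket A \rrbracket \cong \top$ directly.

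The argument is essentially a routine case analysis, so I do not anticipate a real obstacle; the two spots that need a moment's care are (i) noticing that \rref{Prot-Minimum} cannot fire under the standing hypothesis, because $\bot$ lies below every level, and (ii) in the \rref{Prot-Already} case, arguing via functoriality of $\mathbf{S}_{\ell'}$ together with $\mathbf{S}_{\ell'}(\top) = \top$, rather than case-splitting on whether $\mathbf{S}_{\ell'}$ happens to be $\Id$ or $\ast$.
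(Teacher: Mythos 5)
Your proposal is correct and matches the paper's proof essentially step for step: the same induction on the protection judgement, the same handling of \rref{Prot-Monad} via transitivity forcing $\mathbf{S}^{\ell}(\ell') = \ast$, the same contradiction in \rref{Prot-Minimum}, and the same case split in \rref{Prot-Combine}. The only cosmetic difference is that you justify the \rref{Prot-Already} step by noting explicitly that both $\Id$ and $\ast$ fix $\top$, which the paper leaves implicit.
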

The above lemma takes us to our noninterference theorem. Recall the test of correctness for DCC from Section \ref{secdcc}: if $ \ell  \sqsubseteq  \ottnt{A} $ and $\neg( \ell  \sqsubseteq  \ell' )$, then the terms of $\ottnt{A}$ are not be visible at $\ell'$. We prove correctness for \ED{} by formulating this test as:
\begin{theorem} \label{dcceNI}
Let $ \mathcal{L}  = (L,\vee,\bot)$ be the parametrizing semilattice.
\begin{itemize}
\item Suppose $\ell \in L$ such that $\neg ( \ell  \sqsubseteq   \bot  )$. Let $ \ell  \sqsubseteq  \ottnt{A} $. Let $  \emptyset   \vdash  \ottnt{f}  :   \ottnt{A}  \to   \mathbf{Bool}   $ and $  \emptyset   \vdash  \ottnt{a_{{\mathrm{1}}}}  :  \ottnt{A} $ and $  \emptyset   \vdash  \ottnt{a_{{\mathrm{2}}}}  :  \ottnt{A} $. Then, $ \vdash   \ottnt{f}  \:  \ottnt{a_{{\mathrm{1}}}}   \leadsto^{\ast}  \ottmv{v} $ if and only if $ \vdash   \ottnt{f}  \:  \ottnt{a_{{\mathrm{2}}}}   \leadsto^{\ast}  \ottmv{v} $, where $\ottmv{v}$ is a value of type $ \mathbf{Bool} $.

\vspace*{3pt}

\item Suppose $\ell , \ell' \in L$ such that $\neg ( \ell  \sqsubseteq  \ell' )$. Let $ \ell  \sqsubseteq  \ottnt{A} $. Let $  \emptyset   \vdash  \ottnt{f}  :   \ottnt{A}  \to   \mathcal{T}_{ \ell' } \:   \mathbf{Bool}    $ and $  \emptyset   \vdash  \ottnt{a_{{\mathrm{1}}}}  :  \ottnt{A} $ and $  \emptyset   \vdash  \ottnt{a_{{\mathrm{2}}}}  :  \ottnt{A} $. Then, $ \vdash   \ottnt{f}  \:  \ottnt{a_{{\mathrm{1}}}}   \leadsto^{\ast}  \ottmv{v} $ if and only if $ \vdash   \ottnt{f}  \:  \ottnt{a_{{\mathrm{2}}}}   \leadsto^{\ast}  \ottmv{v} $, where $\ottmv{v}$ is a value of type $ \mathcal{T}_{ \ell' } \:   \mathbf{Bool}  $.
\end{itemize}
\end{theorem}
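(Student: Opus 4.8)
The plan is to derive noninterference as a consequence of the computationally adequate semantics built from the view-functors $\mathbf{S}^{\ell}$, together with the fact (Lemma~\ref{dcceSlObs}) that such a functor blacks out any type protected by a level it cannot see. I would handle the two bullets uniformly: the first is essentially the special case $\ell' = \bot$ of the second, once we observe that $ \mathbf{Bool} $ and $ \mathcal{T}_{ \bot } \:  \mathbf{Bool} $ have the same interpretation under $\mathbf{S}^{\bot}$ (indeed under any $\mathbf{S}^{\ell}$, since $\mathbf{S}^{\ell}(\bot) = \Id$). So I will concentrate on the second bullet and remark at the end how the first follows.

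First I would fix the observer level to be $\ell'$ and work with the interpretation $\llbracket\_\rrbracket_{(\mathbf{Set},\mathbf{S}^{\ell'})}$, which by Theorem~\ref{dcceSlA} is computationally adequate for \ED{}($ \mathcal{L} $) and injective on ground types. Since $\neg(\ell \sqsubseteq \ell')$ and $ \ell  \sqsubseteq  \ottnt{A} $, Lemma~\ref{dcceSlObs} gives $ \llbracket  \ottnt{A}  \rrbracket _{(\mathbf{Set},\mathbf{S}^{\ell'})} \cong \top$, i.e. the interpretation of $A$ is a singleton set. Therefore $ \llbracket  \ottnt{a_{{\mathrm{1}}}}  \rrbracket  =  \llbracket  \ottnt{a_{{\mathrm{2}}}}  \rrbracket $ in this model, as both are morphisms from $\llbracket\emptyset\rrbracket \cong \top$ into a singleton. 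By Theorem~\ref{dcceWD} (soundness of the interpretation) and functoriality/compositionality of $\llbracket\_\rrbracket$, it follows that $ \llbracket   \ottnt{f}  \:  \ottnt{a_{{\mathrm{1}}}}   \rrbracket  = \llbracket\ottnt{f}\rrbracket \circ \llbracket\ottnt{a_{{\mathrm{1}}}}\rrbracket = \llbracket\ottnt{f}\rrbracket \circ \llbracket\ottnt{a_{{\mathrm{2}}}}\rrbracket =  \llbracket   \ottnt{f}  \:  \ottnt{a_{{\mathrm{2}}}}   \rrbracket $. Now suppose $ \vdash   \ottnt{f}  \:  \ottnt{a_{{\mathrm{1}}}}   \leadsto^{\ast}  \ottmv{v} $ for a value $\ottmv{v}$ of type $ \mathcal{T}_{ \ell' } \:   \mathbf{Bool} $. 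By Theorem~\ref{dcceOE} (invariance of the interpretation under $\leadsto$, extended to $\leadsto^{\ast}$ by induction on the reduction length) we get $ \llbracket   \ottnt{f}  \:  \ottnt{a_{{\mathrm{1}}}}   \rrbracket  = \llbracket\ottmv{v}\rrbracket$, hence $ \llbracket   \ottnt{f}  \:  \ottnt{a_{{\mathrm{2}}}}   \rrbracket  = \llbracket\ottmv{v}\rrbracket$. Finally I would invoke Theorem~\ref{dcceCA} (computational adequacy): since \ED{} is terminating, $ \ottnt{f}  \:  \ottnt{a_{{\mathrm{2}}}} $ reduces to some value $\ottmv{v'}$ of type $ \mathcal{T}_{ \ell' } \:   \mathbf{Bool} $, and then by the same invariance $\llbracket\ottmv{v'}\rrbracket = \llbracket\ottmv{v}\rrbracket$; the adequacy clause for $ \mathcal{T}_{ \ell' } \:   \mathbf{Bool} $ (whose hypothesis that $\overline{\eta}_X$ is mono holds here because under $\mathbf{S}^{\ell'}$ we have $\eta = \I$, so $\overline{\eta} = \mathbf{S}^{\ell'}(\bot \sqsubseteq \ell')$, which is either $\I$ or $\langle\rangle$ into a singleton — mono in both cases) then forces $\ottmv{v'} = \ottmv{v}$, so $ \vdash   \ottnt{f}  \:  \ottnt{a_{{\mathrm{2}}}}   \leadsto^{\ast}  \ottmv{v} $. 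The converse direction is symmetric. For the first bullet, one repeats the argument with observer level $\bot$: $\neg(\ell \sqsubseteq \bot)$ and $ \ell  \sqsubseteq  \ottnt{A} $ again give $ \llbracket  \ottnt{A}  \rrbracket _{(\mathbf{Set},\mathbf{S}^{\bot})} \cong \top$, and adequacy at the ground type $ \mathbf{Bool} $ (Theorem~\ref{dcceCA}, first clause) closes the argument.

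I expect the main obstacle to be organizational rather than deep: stitching together the three semantic ingredients (soundness, reduction-invariance, adequacy) with the combinatorial Lemma~\ref{dcceSlObs} so that the adequacy hypotheses are genuinely met at the type $ \mathcal{T}_{ \ell' } \:   \mathbf{Bool} $. The one place that needs care is verifying that $\overline{\eta}_X$ is mono for the specific functor $\mathbf{S}^{\ell'}$ — this is where the concrete description of $\mathbf{S}^{\ell}$ on morphisms (either $\I$ or the unique map $\langle\rangle$ to $\top$) does the work, and it is worth spelling out rather than leaving implicit. A secondary subtlety is that $\leadsto$-invariance (Theorem~\ref{dcceOE}) is stated for a single step, so I would explicitly note the trivial induction lifting it to $\leadsto^{\ast}$. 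Everything else — termination of \ED{}, compositionality of the interpretation on application, and injectivity on ground types — is either already in the excerpt or immediate.
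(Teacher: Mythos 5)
Your proof is correct and follows the paper's own argument essentially verbatim: the presence--absence test via $\mathbf{S}^{\ell'}$ (resp.\ $\mathbf{S}^{\bot}$ for the first bullet), Theorem~\ref{dcceSlA} for adequacy of that model, Lemma~\ref{dcceSlObs} to collapse $\llbracket A\rrbracket$ to $\top$ so that $\llbracket f\,a_1\rrbracket = \llbracket f\,a_2\rrbracket$, and Theorem~\ref{dcceCA} to transfer this semantic equality back to reduction behaviour. One small nit: $\overline{\eta} = \mathbf{S}^{\ell'}(\bot\sqsubseteq\ell')\circ\eta$ is in fact always $\I$ (because $\ell'\sqsubseteq\ell'$), so your hedge that a $\langle\rangle$ case would also be mono is unnecessary — and would not hold automatically, since a map into $\top$ is mono only when its source is subterminal.
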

As corollary of the above theorem, we can show that $  \emptyset   \vdash  \ottnt{f}  :    \mathcal{T}_{  \mathbf{Secret}  } \:  \ottnt{A_{{\mathrm{1}}}}   \to   \mathbf{Bool}   $ and $  \emptyset   \vdash  \ottnt{f'}  :    (   \ottnt{A_{{\mathrm{1}}}}  \to   \mathcal{T}_{  \mathbf{Secret}  } \:  \ottnt{A_{{\mathrm{2}}}}    )   \to   \mathbf{Bool}   $, for all types $\ottnt{A_{{\mathrm{1}}}}$ and $\ottnt{A_{{\mathrm{2}}}}$, are constant functions. We can also show that $  \emptyset   \vdash  \ottnt{f''}  :    \mathcal{T}_{ \ell_{{\mathrm{1}}} } \:  \ottnt{A}   \to   \mathcal{T}_{ \ell_{{\mathrm{2}}} } \:   \mathbf{Bool}    $, for any type $\ottnt{A}$, is a constant function, whenever $\neg( \ell_{{\mathrm{1}}}  \sqsubseteq  \ell_{{\mathrm{2}}} )$.

We use the $\mathbf{S}^{\ell}$s to prove the noninterference theorem above. This technique relies on the observation that for two entities $E_1$ and $E_2$, if $E_1$ can be present when $E_2$ is absent, then $E_1$ \textit{does not depend} upon $E_2$. We call this technique the \textit{presence-absence test}. In the next section, we shall use the same technique to prove correctness of binding-time analysis in \lc{}.

\section{Binding-Time Calculus, \lc{}} \label{lcirc}

\lc{} \citep{lambdacirc} is a foundational calculus for binding-time analysis, lying at the heart of state-of-the-art metaprogramming languages like MetaOCaml \citep{meta}. \lc{} is essentially a dependency calculus that ensures early stage computations do not depend upon later stage ones. One might expect that DCC, being a core calculus of dependency, would subsume \lc{}. However, \citet{dcc} noted that \lc{} cannot be translated into DCC. One reason behind this shortcoming is that DCC does not fully utilize the power of comonadic aspect of dependency analysis, as we discussed before. We extended DCC to \ED{} to include the comonadic aspect of dependency analysis. This extension opens up the possibility of \lc{} being translated to \ED{}. In this section, we explore this possibility. We first review the calculus \lc{}, thereafter present a categorical model leading to an alternative proof of correctness and finally show how we can translate \lc{} into our graded monadic comonadic system.

\subsection{The Calculus \lc{}}

\lc{} is simply-typed $\lambda$-calculus extended with a `next time' type constructor, $\bc$. Intuitively, $ \bigcirc  \ottnt{A} $ is the type of terms to be computed upon the `next time'. The calculus models staged computation, with an earlier stage manipulating programs from later stage as data. For a time-ordered normalization, the calculus needs to ensure that computation from an earlier stage \textit{does not depend} upon computation from a later stage. To model such a notion of independence of the past from the future, \citet{lambdacirc} uses temporal logic. 
 In \lc{}, time is discretized as instants or moments, represented by natural numbers. For example, $0$ denotes the present moment, $0'$ denotes the next moment and so on. 
We now look at the calculus formally, as presented by \citet{lambdacirc}.

The grammar of \lc{} appears in Figure \ref{lcgrammar}, typing rules in Figure \ref{lctyping} and the equational theory in Figure \ref{lcequality}. The typing judgement $ \Gamma  \vdash  \ottnt{a}  :^{ n }  \ottnt{A} $ intuitively means that $\ottnt{a}$ is available at time instant $n$, provided the variables in $\Gamma$ are available at their respective time instants. Note that \lc{} does not have sum types; we include them here for the sake of having non-trivial ground types.
\begin{figure}
\begin{align*}
\text{Types}, \ottnt{A} , \ottnt{B} & ::=  K  \: | \:  \ottnt{A}  \to  \ottnt{B}  \: | \:  \bigcirc  \ottnt{A}  \: | \:  \ottkw{Unit}  \: | \:  \ottnt{A_{{\mathrm{1}}}}  +  \ottnt{A_{{\mathrm{2}}}}  \\
\text{Terms}, \ottnt{a}, \ottnt{b} & ::= \ottmv{x} \: | \:  \lambda  \ottmv{x}  :  \ottnt{A}  .  \ottnt{b}  \: | \:  \ottnt{b}  \:  \ottnt{a}  \: | \:  \mathbf{next} \:  \ottnt{a}  \: | \:  \mathbf{prev} \:  \ottnt{a}  \: | \:  \ottkw{unit}  \: | \:  \mathbf{inj}_1 \:  \ottnt{a_{{\mathrm{1}}}}  \: | \:  \mathbf{inj}_2 \:  \ottnt{a_{{\mathrm{2}}}}  \: | \:  \mathbf{case} \:  \ottnt{a}  \: \mathbf{of} \:  \ottnt{b_{{\mathrm{1}}}}  \: ; \:  \ottnt{b_{{\mathrm{2}}}}  \\
\text{Contexts}, \Gamma & ::=  \emptyset  \: | \:  \Gamma  ,   \ottmv{x}  :^{  n  }  \ottnt{A}  
\end{align*}
\caption{Grammar of \lc{}}
\label{lcgrammar}
\end{figure}
\begin{figure}
\drules[LC]{$ \Gamma  \vdash  \ottnt{a}  :^{ n }  \ottnt{A} $}{Typing}{Var,Lam,App,InjOne,Case,Next,Prev}
\caption{Typing Rules of \lc{} (Excerpt)}
\label{lctyping}
\end{figure} 
\begin{figure}
\begin{align*}
  (   \lambda  \ottmv{x}  :  \ottnt{A}  .  \ottnt{b}   )   \:  \ottnt{a}  & \equiv  \ottnt{b}  \{  \ottnt{a}  /  \ottmv{x}  \}  & \ottnt{b} & \equiv   \lambda  \ottmv{x}  :  \ottnt{A}  .  \ottnt{b}   \:  \ottmv{x}  \\
 \mathbf{prev} \:   (   \mathbf{next} \:  \ottnt{a}   )   & \equiv \ottnt{a} & \ottnt{a} & \equiv  \mathbf{next} \:   (   \mathbf{prev} \:  \ottnt{a}   )   \\
\mathbf{case} \: (\mathbf{inj}_i \: a_i)  \: \mathbf{of} \: \ottnt{b_{{\mathrm{1}}}} \: ; \: \ottnt{b_{{\mathrm{2}}}} & \equiv b_i \: a_i &  \ottnt{b}  \:  \ottnt{a}  & \equiv   \mathbf{case} \:  \ottnt{a}  \: \mathbf{of} \:    \lambda  \ottmv{x_{{\mathrm{1}}}}  .  \ottnt{b}   \:   (   \mathbf{inj}_1 \:  \ottmv{x_{{\mathrm{1}}}}   )    \: ; \:   \lambda  \ottmv{x_{{\mathrm{2}}}}  .  \ottnt{b}    \:   (   \mathbf{inj}_2 \:  \ottmv{x_{{\mathrm{2}}}}   )   
\end{align*}
\caption{Equality rules of \lc{} (Excerpt)}
\label{lcequality}
\end{figure}

With this background on \lc{}, let us now build categorical models for the calculus.

\subsection{Categorical Models for \lc{}} \label{seclcCat}

The motivation for categorical models of \lc{}, in the style of GMCC, comes from the observation that \rref{LC-Next,LC-Prev} are like \rref{C-Fork,M-Join} respectively. Here, we can think of $\mathcal{N} = (\mathbb{N},+,0)$ with discrete ordering to be the parametrizing preordered monoid. Then, $\bc$ is like  $S_{ 0' }$, where $S$ is the graded modal type constructor. \lc{} and GMCC($ \mathcal{N} $) share several similarities, but there is a crucial difference between the two calculi. In \lc{}, the types $ \bigcirc^{ n }   (   \ottnt{A}  \to  \ottnt{B}   )  $ and $  \bigcirc^{ n }  \ottnt{A}   \to   \bigcirc^{ n }  \ottnt{B}  $ (where $\bigcirc^n$ is the operator $\bigcirc$ applied $n$ times) are isomorphic whereas in GMCC($ \mathcal{N} $), the types $ S_{  n  } \:   (   \ottnt{A}  \to  \ottnt{B}   )  $ and $  S_{  n  } \:  \ottnt{A}   \to   S_{  n  } \:  \ottnt{B}  $ are not necessarily isomorphic. Owing to this difference, we need to modify our models in order to interpret \lc{}.  More precisely, unlike $S_n$, we cannot model $\bigcirc$ using any strong endofunctor, but require cartesian closed endofunctors. So next, we define the category of cartesian closed endofunctors. 

Let $\Ct$ be a cartesian closed category. An endofunctor $F : \Ct \to \Ct$ is said to be finite-product-preserving if and only if the morphisms $p_{\top} \triangleq \langle \rangle : F(\top) \to \top$ and $p_{X,Y} \triangleq \langle F \pi_1 , F \pi_2 \rangle : F (X \times Y) \to F X \times F Y$, for $X , Y \in \text{Obj}(\Ct)$, have inverses. A finite-product-preserving endofunctor $F : \Ct \to \Ct$ is said to be cartesian closed if and only if the morphisms $q_{X,Y} \triangleq \Lambda (F ( \text{app} ) \circ p^{-1}_{Y^X,X}) : F(Y^X) \to (F Y)^{(F X)}$, for $X , Y \in \text{Obj}(\Ct)$, have inverses. The cartesian closed endofunctors of $\Ct$, with natural transformations as morphisms, form a category, $\EC$. Like $\ES{}$, $\EC{}$ is also a strict monoidal category with the monoidal product and the identity object defined in the same way. We use $\EC{}$ to build models for \lc{}.
 
Let $\Ct$ be any bicartesian closed category. Let $\mathbf{S}$ be a strong monoidal functor from $\Ca(\mathcal{N})$ to $\EC$. Then, the interpretation $ \llbracket   \_   \rrbracket $, or more precisely $ \llbracket   \_   \rrbracket _{(\Ct,\mathbf{S})}$, of types and terms is as follows. The modal operator and contexts are interpreted as: 
\[  \llbracket   \bigcirc  \ottnt{A}   \rrbracket  = \mathbf{S}_{ 0' }  \llbracket  \ottnt{A}  \rrbracket  \hspace{20pt}  \llbracket   \emptyset   \rrbracket  = \top \hspace{20pt}  \llbracket    \Gamma  ,   \ottmv{x}  :^{  n  }  \ottnt{A}     \rrbracket  =  \llbracket  \Gamma  \rrbracket  \times \mathbf{S}_{n}  \llbracket  \ottnt{A}  \rrbracket  \] 
Terms are interpreted as:
\begin{align*}
 \llbracket  \ottmv{x}  \rrbracket  & =  \llbracket  \Gamma_{{\mathrm{1}}}  \rrbracket  \times \mathbf{S}_{n}  \llbracket  \ottnt{A}  \rrbracket  \times  \llbracket  \Gamma_{{\mathrm{2}}}  \rrbracket  \xrightarrow{\pi_i} \mathbf{S}_{n}  \llbracket  \ottnt{A}  \rrbracket \\
 \llbracket   \lambda  \ottmv{x}  :  \ottnt{A}  .  \ottnt{b}   \rrbracket  & =  \llbracket  \Gamma  \rrbracket  \xrightarrow{\Lambda  \llbracket  \ottnt{b}  \rrbracket } (\mathbf{S}_{n}  \llbracket  \ottnt{B}  \rrbracket )^{(\mathbf{S}_{n}  \llbracket  \ottnt{A}  \rrbracket )} \xrightarrow{q^{-1}} \mathbf{S}_{n} ( \llbracket  \ottnt{B}  \rrbracket ^{ \llbracket  \ottnt{A}  \rrbracket }) \\
 \llbracket    \ottnt{b}  \:  \ottnt{a}    \rrbracket  & =  \llbracket  \Gamma  \rrbracket  \xrightarrow{\langle q \circ  \llbracket  \ottnt{b}  \rrbracket  ,  \llbracket  \ottnt{a}  \rrbracket  \rangle} (\mathbf{S}_{n}  \llbracket  \ottnt{B}  \rrbracket )^{(\mathbf{S}_{n}  \llbracket  \ottnt{A}  \rrbracket )} \times \mathbf{S}_{n}  \llbracket  \ottnt{A}  \rrbracket  \xrightarrow{\text{app}} \mathbf{S}_{n}  \llbracket  \ottnt{B}  \rrbracket  \\
 \llbracket   \mathbf{next} \:  \ottnt{a}   \rrbracket  & =  \llbracket  \Gamma  \rrbracket  \xrightarrow{ \llbracket  \ottnt{a}  \rrbracket }\mathbf{S}_{ n  +   0'  }  \llbracket  \ottnt{A}  \rrbracket  \xrightarrow{\delta^{n, 0' }} \mathbf{S}_{n} \mathbf{S}_{ 0' }  \llbracket  \ottnt{A}  \rrbracket  \\
 \llbracket   \mathbf{prev} \:  \ottnt{a}   \rrbracket  & =  \llbracket  \Gamma  \rrbracket  \xrightarrow{ \llbracket  \ottnt{a}  \rrbracket } \mathbf{S}_{n} \mathbf{S}_{ 0' }  \llbracket  \ottnt{A}  \rrbracket  \xrightarrow{\mu^{n, 0' }} \mathbf{S}_{ n  +   0'  }  \llbracket  \ottnt{A}  \rrbracket  \\
 \llbracket   \mathbf{inj}_1 \:  \ottnt{a_{{\mathrm{1}}}}   \rrbracket  & =  \llbracket  \Gamma  \rrbracket  \xrightarrow{ \llbracket  \ottnt{a_{{\mathrm{1}}}}  \rrbracket }  \mathbf{S}_{  n  }   \llbracket  \ottnt{A_{{\mathrm{1}}}}  \rrbracket   \xrightarrow{ \mathbf{S}_{  n  }   i_1  }  \mathbf{S}_{  n  }   (    \llbracket  \ottnt{A_{{\mathrm{1}}}}  \rrbracket   +   \llbracket  \ottnt{A_{{\mathrm{2}}}}  \rrbracket    )  \\
 \llbracket    \mathbf{case} \:  \ottnt{a}  \: \mathbf{of} \:  \ottnt{b_{{\mathrm{1}}}}  \: ; \:  \ottnt{b_{{\mathrm{2}}}}    \rrbracket  & =  \llbracket  \Gamma  \rrbracket  \xrightarrow{ \langle   \langle   \llbracket  \ottnt{b_{{\mathrm{1}}}}  \rrbracket   ,   \llbracket  \ottnt{b_{{\mathrm{2}}}}  \rrbracket   \rangle   ,   \llbracket  \ottnt{a}  \rrbracket   \rangle }   (    \mathbf{S}_{  n  }     \llbracket  \ottnt{B}  \rrbracket  ^{  \llbracket  \ottnt{A_{{\mathrm{1}}}}  \rrbracket  }     \times   \mathbf{S}_{  n  }     \llbracket  \ottnt{B}  \rrbracket  ^{  \llbracket  \ottnt{A_{{\mathrm{2}}}}  \rrbracket  }      )   \times   \mathbf{S}_{  n  }   (    \llbracket  \ottnt{A_{{\mathrm{1}}}}  \rrbracket   +   \llbracket  \ottnt{A_{{\mathrm{2}}}}  \rrbracket    )    \\ & \qquad \; \; \, \xrightarrow{p^{-1} \times  \text{id} }   \mathbf{S}_{  n  }   (      \llbracket  \ottnt{B}  \rrbracket  ^{  \llbracket  \ottnt{A_{{\mathrm{1}}}}  \rrbracket  }    \times     \llbracket  \ottnt{B}  \rrbracket  ^{  \llbracket  \ottnt{A_{{\mathrm{2}}}}  \rrbracket  }     )    \times   \mathbf{S}_{  n  }   (    \llbracket  \ottnt{A_{{\mathrm{1}}}}  \rrbracket   +   \llbracket  \ottnt{A_{{\mathrm{2}}}}  \rrbracket    )    \\ & \qquad \; \; \, \xrightarrow{p^{-1}}  \mathbf{S}_{  n  }   (    (      \llbracket  \ottnt{B}  \rrbracket  ^{  \llbracket  \ottnt{A_{{\mathrm{1}}}}  \rrbracket  }    \times     \llbracket  \ottnt{B}  \rrbracket  ^{  \llbracket  \ottnt{A_{{\mathrm{2}}}}  \rrbracket  }     )   \times   (    \llbracket  \ottnt{A_{{\mathrm{1}}}}  \rrbracket   +   \llbracket  \ottnt{A_{{\mathrm{2}}}}  \rrbracket    )    )   \\ & \qquad \; \; \, \cong  \mathbf{S}_{  n  }   (      \llbracket  \ottnt{B}  \rrbracket  ^{    \llbracket  \ottnt{A_{{\mathrm{1}}}}  \rrbracket   +   \llbracket  \ottnt{A_{{\mathrm{2}}}}  \rrbracket    }    \times   (    \llbracket  \ottnt{A_{{\mathrm{1}}}}  \rrbracket   +   \llbracket  \ottnt{A_{{\mathrm{2}}}}  \rrbracket    )    )   \xrightarrow{ \mathbf{S}_{  n  }   \text{app}  }  \mathbf{S}_{  n  }   \llbracket  \ottnt{B}  \rrbracket  
\end{align*}

This gives us a sound interpretation of \lc{}.
\begin{theorem}\label{lcCat}
If $ \Gamma  \vdash  \ottnt{a}  :^{ n }  \ottnt{A} $ in \lc{}, then $ \llbracket  \ottnt{a}  \rrbracket  \in \text{Hom}_{\Ct} ( \llbracket  \Gamma  \rrbracket , \mathbf{S}_{n}  \llbracket  \ottnt{A}  \rrbracket )$. Further, if $ \Gamma  \vdash  \ottnt{a_{{\mathrm{1}}}}  :^{ n }  \ottnt{A} $ and $ \Gamma  \vdash  \ottnt{a_{{\mathrm{2}}}}  :^{ n }  \ottnt{A} $ such that $ \ottnt{a_{{\mathrm{1}}}}  \equiv  \ottnt{a_{{\mathrm{2}}}} $ in \lc{}, then $ \llbracket  \ottnt{a_{{\mathrm{1}}}}  \rrbracket  =  \llbracket  \ottnt{a_{{\mathrm{2}}}}  \rrbracket  \in \text{Hom}_{\Ct} ( \llbracket  \Gamma  \rrbracket , \mathbf{S}_{n}  \llbracket  \ottnt{A}  \rrbracket )$. 
\end{theorem}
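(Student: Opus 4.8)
The plan is to prove both halves by induction: the first assertion by induction on the typing derivation $ \Gamma  \vdash  \ottnt{a}  :^{ n }  \ottnt{A} $, and the second by induction on the derivation of $ \ottnt{a_{{\mathrm{1}}}}  \equiv  \ottnt{a_{{\mathrm{2}}}} $.

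For the first assertion, every clause of the interpretation displayed above is a composite of morphisms of $\Ct$, so it suffices to check that in each case the domains and codomains match up. The cases \rref{LC-Var,LC-Lam,LC-App,LC-InjOne,LC-Case} proceed exactly as for the simply-typed $\lambda$-calculus, except that every type $\ottnt{A}$ is uniformly replaced by $\mathbf{S}_{n}\llbracket \ottnt{A}\rrbracket$ and the structural maps of the cartesian closed structure of $\Ct$ are threaded through the morphisms $p_{X,Y} : \mathbf{S}_{n}(X\times Y)\to \mathbf{S}_{n}X\times \mathbf{S}_{n}Y$ and $q_{X,Y} : \mathbf{S}_{n}(Y^{X})\to(\mathbf{S}_{n}Y)^{(\mathbf{S}_{n}X)}$; these are isomorphisms precisely because each $\mathbf{S}_{n}$ is a cartesian closed endofunctor, i.e.\ an object of $\EC$, which is exactly why $\EC$ (and not merely $\ES$) is the target category. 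The only genuinely new cases are \rref{LC-Next,LC-Prev}, where one uses that $\delta^{n,0'}$ has type $\mathbf{S}_{n+0'}\to \mathbf{S}_{n}\circ\mathbf{S}_{0'}$ and $\mu^{n,0'}$ has type $\mathbf{S}_{n}\circ\mathbf{S}_{0'}\to \mathbf{S}_{n+0'}$; combined with $\llbracket \bigcirc \ottnt{A}\rrbracket = \mathbf{S}_{0'}\llbracket \ottnt{A}\rrbracket$ and $\llbracket \Gamma, \ottmv{x}:^{m}\ottnt{A}\rrbracket = \llbracket\Gamma\rrbracket\times\mathbf{S}_{m}\llbracket\ottnt{A}\rrbracket$, this makes both composites land in the stated hom-sets.

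For the second assertion I would first establish two auxiliary lemmas by induction on typing derivations: a weakening lemma $\llbracket \ottnt{a}\rrbracket_{\Gamma,\ottmv{x}:^{m}\ottnt{A}} = \llbracket \ottnt{a}\rrbracket_{\Gamma}\circ\pi$, and a substitution lemma $\llbracket \ottnt{b}\{\ottnt{a}/\ottmv{x}\}\rrbracket_{\Gamma} = \llbracket \ottnt{b}\rrbracket_{\Gamma,\ottmv{x}:^{m}\ottnt{A}}\circ\langle \text{id}_{\llbracket\Gamma\rrbracket},\llbracket \ottnt{a}\rrbracket_{\Gamma}\rangle$ for $\Gamma\vdash \ottnt{a}:^{m}\ottnt{A}$. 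Granting these, the congruence cases (reflexivity, symmetry, transitivity, compatibility with each term former) are immediate, and the base rules of Figure~\ref{lcequality} dispatch as follows. The $\beta$ and $\eta$ rules for $\to$ follow, after applying the substitution and weakening lemmas, from the universal property of exponentials in $\Ct$ together with $q^{-1}\circ q = \text{id}$; the $\beta$ and $\eta$ rules for $+$ follow similarly from the universal property of coproducts, transported along $p$ and $q$. The rule $\mathbf{prev}\,(\mathbf{next}\,\ottnt{a})\equiv \ottnt{a}$ computes to $\mu^{n,0'}\circ\delta^{n,0'}\circ\llbracket\ottnt{a}\rrbracket=\llbracket\ottnt{a}\rrbracket$, and $\ottnt{a}\equiv \mathbf{next}\,(\mathbf{prev}\,\ottnt{a})$ to $\delta^{n,0'}\circ\mu^{n,0'}\circ\llbracket\ottnt{a}\rrbracket=\llbracket\ottnt{a}\rrbracket$, using the coherence identities $\mu\circ\delta=\text{id}$ and $\delta\circ\mu=\text{id}$ recorded in Section~\ref{GMCCModel}, which hold because $\mathbf{S}$ is a strong monoidal functor.

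The main obstacle is the substitution lemma, for two reasons. First, contexts are graded by time instants and \rref{LC-Next,LC-Prev} shift that grade, so the induction must carefully track at which time each subterm is typed; the point that makes it go through is that substitution in \lc{} replaces a variable $\ottmv{x}:^{m}\ottnt{A}$ by a term typed at exactly time $m$ in the same residual context, so $\langle\text{id},\llbracket\ottnt{a}\rrbracket\rangle : \llbracket\Gamma\rrbracket\to\llbracket\Gamma\rrbracket\times\mathbf{S}_{m}\llbracket\ottnt{A}\rrbracket$ has precisely the domain demanded by $\llbracket\ottnt{b}\rrbracket_{\Gamma,\ottmv{x}:^{m}\ottnt{A}}$. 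Second, the $\mathbf{next}$ and $\mathbf{prev}$ cases require commuting the fixed morphisms $\delta^{n,0'}$ and $\mu^{n,0'}$ past precomposition with $\langle\text{id},\llbracket\ottnt{a}\rrbracket\rangle$; this is immediate since those morphisms sit on the codomain side, but one still has to verify that the grade bookkeeping ($n$ versus $n+0'$) stays consistent across the recursive call. Once the substitution lemma is in hand, the remaining verifications are routine bicartesian-closed-category diagram chases — longest for the $\mathbf{case}$ construct, but structurally identical to the corresponding argument for the simply-typed $\lambda$-calculus.
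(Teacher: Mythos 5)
Your proposal is correct and follows essentially the same route as the paper's proof: induction on the typing derivation for well-definedness (with the cartesian-closed-endofunctor isomorphisms $p$ and $q$ doing the work in the $\lambda$-calculus cases, and $\delta^{n,0'}$, $\mu^{n,0'}$ handling $\mathbf{next}$/$\mathbf{prev}$), followed by a case analysis on the equality rules using $q^{-1}\circ q=\text{id}$, the coproduct/exponential isomorphism for the sum-type rules, and $\mu\circ\delta=\delta\circ\mu=\text{id}$ from strong monoidality. The only difference is one of explicitness: you state the weakening and substitution lemmas up front, whereas the paper uses the identity $\llbracket b\{a/x\}\rrbracket=\llbracket b\rrbracket\circ\langle\text{id},\llbracket a\rrbracket\rangle$ silently in its $\beta$-case computation.
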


Such an interpretation is also computationally adequate, provided it is injective for ground types. The operational semantics for the calculus is assumed to be the call-by-value semantics induced by the $\beta$-rules in Figure \ref{lcequality}. We denote the multi-step reduction relation corresponding to this operational semantics  by $\longmapsto^{\ast}$. Formally, we can state adequacy as:

\begin{theorem}\label{lcCA}
Let $ \Gamma  \vdash  \ottnt{b}  :^{ \ottsym{0} }   \mathbf{Bool}  $ and $\ottmv{v}$ be a value of type $ \mathbf{Bool} $. If $ \llbracket  \ottnt{b}  \rrbracket  =  \llbracket  \ottmv{v}  \rrbracket $ then $ \vdash  \ottnt{b}  \longmapsto^{\ast}  \ottmv{v} $. 
\end{theorem}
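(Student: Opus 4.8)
The plan is to prove Theorem~\ref{lcCA} by a computational adequacy argument built around a \emph{formal approximation relation} (a reducibility predicate), in the same spirit as the adequacy result behind Theorem~\ref{dcceCA}, but adapted to the time-stratified structure of \lc{}. For every closed term $a$ with $\vdash a :^{n} A$ and every global element $d$ of $\mathbf{S}_{n}\llbracket A\rrbracket$, I would define a relation $d \triangleleft^{n}_{A} a$ by induction on the type $A$. At a ground type the relation \emph{is} the adequacy conclusion: $d \triangleleft^{0}_{\mathbf{Bool}} a$ iff ($d = \llbracket \mathbf{true}\rrbracket$ implies $\vdash a \longmapsto^{\ast} \mathbf{true}$) and ($d = \llbracket \mathbf{false}\rrbracket$ implies $\vdash a \longmapsto^{\ast} \mathbf{false}$). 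At $A \to B$ I would use the isomorphism $q$ of $\EC$ to read $d$ as a map on global elements and require that applying it to a $\triangleleft^{n}_{A}$-related argument yield a $\triangleleft^{n}_{B}$-related result; at $\bc A$ I would pass forward one time step, setting $d \triangleleft^{n}_{\bc A} a$ iff $\mu^{n, 0'}(d) \triangleleft^{n+1}_{A} (\mathbf{prev}\,a)$; sum and unit types are handled in the usual way. Each clause is arranged so that the relation is closed under anti-reduction (head expansion), which is legitimate since reduction does not change the denotation (by Theorem~\ref{lcCat}, as every $\beta$-step is an instance of $\equiv$).

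Next I would prove the \textbf{fundamental theorem}: if $\Gamma \vdash a :^{n} A$ and $\sigma$ is a substitution closing $\Gamma$ with $\llbracket \sigma x\rrbracket \triangleleft^{m}_{B} \sigma x$ for every $(x :^{m} B) \in \Gamma$, then $\llbracket a\rrbracket \circ \llbracket\sigma\rrbracket \triangleleft^{n}_{A} a\sigma$, where $\llbracket\sigma\rrbracket$ is the global element of $\llbracket\Gamma\rrbracket$ induced by $\sigma$, by induction on the typing derivation. The cases for variables, abstraction, application, $\mathbf{inj}$ and $\mathbf{case}$ proceed by the clauses at function and sum types together with anti-reduction closure, exactly as in the non-modal setting. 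The cases \rref{LC-Next,LC-Prev} are where the categorical structure is used: the identities $\mu \circ \delta = \mathrm{id}$ and $\delta \circ \mu = \mathrm{id}$ — which hold because $\mathbf{S}$ is a strong monoidal functor into $\EC$ (see Sections~\ref{GMCCModel} and~\ref{seclcCat}) — mirror the $\beta\eta$-pair $\mathbf{prev}(\mathbf{next}\,a) \equiv a \equiv \mathbf{next}(\mathbf{prev}\,a)$ and let me thread the relation between time $n$ and time $n+1$ in agreement with the $\bc$-clause. Specializing the fundamental theorem to $\vdash b :^{0} \mathbf{Bool}$ and the empty substitution then gives $\llbracket b\rrbracket \triangleleft^{0}_{\mathbf{Bool}} b$; since $v$ is a value of type $\mathbf{Bool}$ it is $\mathbf{true}$ or $\mathbf{false}$, and under the standing assumption that the interpretation is injective on ground types the hypothesis $\llbracket b\rrbracket = \llbracket v\rrbracket$ determines which one, so the $\mathbf{Bool}$-clause delivers $\vdash b \longmapsto^{\ast} v$.

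I expect the main obstacle to be getting the time/modal layer of the relation right. One must choose carefully at which time instants the relation is defined, and make the $\bc$-clause and the $\to$-clause cohere with the fact that in \lc{} — unlike in GMCC — the types $\bc^{n}(A \to B)$ and $\bc^{n}A \to \bc^{n}B$ are isomorphic, so the relation has to be transported not merely by $\mu$ and $\delta$ but also by the cartesian-closed-endofunctor isomorphisms $p^{\pm 1}$ and $q^{\pm 1}$ used to interpret $\lambda$, application, $\mathbf{inj}$ and $\mathbf{case}$. The other delicate point is the anti-reduction lemma: because the call-by-value operational semantics of \lc{} performs reductions at all time stages, in particular underneath $\mathbf{next}$, one has to verify that a single $\longmapsto$-step, at whatever level, both leaves the denotation unchanged and preserves membership in $\triangleleft$ — a phenomenon with no counterpart in the adequacy proof for \ED{}. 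Once the relation is arranged so that $\mu$, $\delta$, $p^{\pm 1}$ and $q^{\pm 1}$ all transport it, the remaining cases of the fundamental theorem are routine.
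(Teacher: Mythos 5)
Your proposal takes a genuinely different and much heavier route than the paper does. The paper's proof is about eight lines: it invokes strong normalization and type soundness of \lc{} with respect to $\longmapsto^{\ast}$ (syntactic facts about this terminating calculus, cited rather than re-proved) to obtain a value $v_0$ of type $ \mathbf{Bool} $ with $\vdash b \longmapsto^{\ast} v_0$; soundness of the interpretation (Theorem~\ref{lcCat}, since every reduction step is an instance of $\equiv$) then gives $\llbracket b \rrbracket = \llbracket v_0 \rrbracket$, hence $\llbracket v \rrbracket = \llbracket v_0 \rrbracket$, and injectivity of the interpretation on ground types forces $v = v_0$. No logical relation is needed: once termination is known syntactically, adequacy at ground type collapses to comparing the denotations of two values. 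Your construction would, if carried through, re-derive the termination information the paper simply cites, at the cost of the fundamental theorem and all the transport lemmas for $\mu$, $\delta$, $p^{\pm 1}$ and $q^{\pm 1}$ that you describe.

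There is also a concrete gap in the plan as stated, namely the claim that sum types are ``handled in the usual way.'' In an arbitrary bicartesian closed category a global element of $\mathbf{S}_n(\llbracket A_1 \rrbracket + \llbracket A_2 \rrbracket)$ need not factor through either injection, so the natural clause of the relation at $A_1 + A_2$ (that $d$ decomposes as $\mathbf{S}_n\, i_j \circ d_j$ with $d_j$ related to the component that the scrutinee reduces to) cannot be established for an arbitrary well-typed term without already knowing that it reduces to an injection --- that is, without the strong-normalization fact the paper uses up front. Repairing this requires restricting the class of models (e.g., to $\mathbf{Set}$-like categories where global elements of coproducts do decompose), or a biorthogonality-style closure, or simply importing termination, at which point the logical relation becomes redundant. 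A smaller but related issue is the call-by-value $\beta$-rule: $(\lambda x . e)\, a$ steps only when $a$ is a value, so your function-type clause must additionally guarantee that related terms evaluate to values, which your ground-type clause (a bare conditional) does not supply. I would restructure the argument along the paper's lines: strong normalization plus type soundness, then soundness of the model, then injectivity at ground types.
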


Next, we use these categorical models to show correctness of binding-time analysis in \lc{}.

\subsection{Correctness of Binding-Time Analysis in \lc{}} 

A binding-time analysis is correct if computation from an earlier stage \textit{does not depend} upon computation from a later stage. Here, using the categorical model, we shall show that \lc{} satisfies this condition. We shall use the same \textit{presence-absence test} technique that we used to prove noninterference for \ED{}. The goal is to show that computations at a given stage can proceed when computations from all later stages are blacked out. Towards this end, we present a strong monoidal functor $\mathbf{S}^0$ that keeps untouched all computations at time instant $\ottsym{0}$ but blacks out all computations from every time instant greater than $\ottsym{0}$.
\begin{align*}
\mathbf{S}^0 (n) = \begin{cases}
                  \Id & \text{ if } n = 0 \\
                  \ast & \text{ otherwise}
                  \end{cases}
\end{align*}
Note that $\mathbf{S}^0$ is, in fact, a strict monoidal functor because $\eta = \epsilon = \I$ and $\delta = \mu = \I$. Now, for any bicartesian closed category $\Ct$, any strong monoidal functor from $\Ca(\mathcal{N})$ to $\EC{}$ provides a computationally adequate interpretation of \lc{}, given the interpretation for  ground types is injective. As such, 
\begin{theorem} \label{lcS0}
$\llbracket \_ \rrbracket_{(\mathbf{Set},\mathbf{S}^{\ottsym{0}})}$ is a computationally adequate interpretation of \lc{}.
\end{theorem}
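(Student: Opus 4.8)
The plan is to reduce the statement to Theorem~\ref{lcCA}, which says that any interpretation built over a bicartesian closed category $\Ct$ from a strong monoidal functor $\Ca(\mathcal{N}) \to \EC$ is computationally adequate as soon as it is injective for ground types. So I would check three things: (i) $\mathbf{S}^0$ is a genuine strong monoidal functor from $\Ca(\mathcal{N})$ to $\EC$; (ii) $\mathbf{Set}$ is bicartesian closed (standard); and (iii) the induced interpretation $\llbracket \_ \rrbracket_{(\mathbf{Set},\mathbf{S}^0)}$ is injective for ground types.

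For (i), the first subtask is to show that the two endofunctors in the image of $\mathbf{S}^0$ lie in $\EC$. The identity functor $\Id$ is trivially finite-product-preserving and cartesian closed. For the terminal functor $\ast$, the comparison morphisms $p_{\top} : \ast(\top) \to \top$, $p_{X,Y} : \ast(X\times Y) \to \ast X \times \ast Y$ and $q_{X,Y} : \ast(Y^X) \to (\ast Y)^{(\ast X)}$ all have domain $\top$ and codomain one of $\top$, $\top\times\top$, $\top^{\top}$, each of which is isomorphic to $\top$; hence all three comparison morphisms are isomorphisms and $\ast \in \EC$. The second subtask is functoriality and strict monoidality of $\mathbf{S}^0$: since $\Ca(\mathcal{N})$ carries the discrete order, there are no non-identity morphisms, and the monoidal laws reduce to $\mathbf{S}^0(0) = \Id$ (the identity object of $\EC$) and $\mathbf{S}^0(m)\circ\mathbf{S}^0(n) = \mathbf{S}^0(m+n)$, which hold on the nose because $\Id\circ\Id = \Id$, $\ast\circ\ast = \ast$ and $\Id\circ\ast = \ast = \ast\circ\Id$, covering the three cases $m=n=0$, $m,n>0$, and exactly one of $m,n$ zero. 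Thus $F_0$ and $F_2$ are identities, so $\mathbf{S}^0$ is strict, hence strong, monoidal, with $\eta = \epsilon = \I$ and $\delta = \mu = \I$ as already observed. For (iii), note that at time instant $0$ we have $\mathbf{S}^0_{0} = \Id$, so $\llbracket \mathbf{Bool} \rrbracket_{(\mathbf{Set},\mathbf{S}^0)} = \llbracket \ottkw{Unit} \rrbracket + \llbracket \ottkw{Unit} \rrbracket$ is a two-element set and the two values of $\mathbf{Bool}$ pick out the two distinct coproduct injections, so $\llbracket \mathbf{true} \rrbracket \neq \llbracket \mathbf{false} \rrbracket$. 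With (i)--(iii) in place, Theorem~\ref{lcCA} applies and yields the claim; this mirrors exactly the proof of Theorem~\ref{dcceSlA} for \ED{}.

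The only point that genuinely requires care — and the main difference from the \ED{} case, where the analogous functor only had to land in $\ES{}$ — is verifying that $\ast$ is a \emph{cartesian closed} endofunctor, i.e. that the morphism $q_{X,Y} : \ast(Y^X) \to (\ast Y)^{(\ast X)}$ is invertible; everything else is bookkeeping already sketched in the text around the definition of $\mathbf{S}^0$.
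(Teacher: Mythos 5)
Your proof is correct and takes essentially the same route as the paper's: reduce to Theorem~\ref{lcCA}, note injectivity of the interpretation for ground types, and verify that $\mathbf{S}^0$ is a strict (hence strong) monoidal functor into $\EC$ by the same two-case analysis on whether the indices are zero. The only addition is your explicit check that the terminal functor $\ast$ is a cartesian closed endofunctor, a point the paper's proof leaves implicit.
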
 

The existence of $\mathbf{S}^0$ shows that binding-time analysis in \lc{} is correct. We elaborate on this point below. By the above theorem, computations at time instant $0$ can proceed independently of computations from all later stages. Once the computations from time instant $0$ are done, we can move on to computations from time instant $0'$, which now acts like the new $0$. Then, using the same argument, we can show that computations at time instant $0'$ can proceed independently of all later stage computations. Repeating this argument over and over again, we see that we can normalize \lc{}-expressions in a time-ordered manner. Therefore, binding-time analysis in \lc{} is correct. 

We can formalize the above argument into the following noninterference theorem.
\begin{theorem} \label{lcNI}
Let $  \emptyset   \vdash  \ottnt{f}  :^{ \ottsym{0} }    \bigcirc  \ottnt{A}   \to   \mathbf{Bool}   $ and $  \emptyset   \vdash  \ottnt{b_{{\mathrm{1}}}}  :^{ \ottsym{0} }   \bigcirc  \ottnt{A}  $ and $  \emptyset   \vdash  \ottnt{b_{{\mathrm{2}}}}  :^{ \ottsym{0} }   \bigcirc  \ottnt{A}  $. Then, $ \vdash   \ottnt{f}  \:  \ottnt{b_{{\mathrm{1}}}}   \longmapsto^{\ast}  \ottmv{v} $ if and only if $ \vdash   \ottnt{f}  \:  \ottnt{b_{{\mathrm{2}}}}   \longmapsto^{\ast}  \ottmv{v} $, where $\ottmv{v}$ is a value of type $ \mathbf{Bool} $.
\end{theorem}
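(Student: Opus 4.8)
The plan is to run the \emph{presence-absence test} of the previous subsection with the strong monoidal functor $\mathbf{S}^{0}$ constructed above, interpreting \lc{} in $\mathbf{Set}$. The crucial point is that $\mathbf{S}^{0}$ blacks out everything living under a $\bigcirc$: since $ 0'  \neq \ottsym{0}$, we have $ \llbracket   \bigcirc  \ottnt{A}   \rrbracket _{(\mathbf{Set},\mathbf{S}^{0})} = \mathbf{S}^{0}_{ 0' }  \llbracket  \ottnt{A}  \rrbracket  = \ast  \llbracket  \ottnt{A}  \rrbracket  = \top$, the one-point set, regardless of what $\ottnt{A}$ is. So the two arguments $\ottnt{b_{{\mathrm{1}}}}$ and $\ottnt{b_{{\mathrm{2}}}}$, whose common type has been collapsed, are forced to receive the same denotation; applying $\ottnt{f}$ then yields equal denotations for $ \ottnt{f}  \:  \ottnt{b_{{\mathrm{1}}}} $ and $ \ottnt{f}  \:  \ottnt{b_{{\mathrm{2}}}} $, and computational adequacy of $\mathbf{S}^{0}$ transfers this equality back to the operational side.

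Carrying this out, I would first note that $  \emptyset   \vdash  \ottnt{b_{{\mathrm{1}}}}  :^{ \ottsym{0} }   \bigcirc  \ottnt{A} $ and $  \emptyset   \vdash  \ottnt{b_{{\mathrm{2}}}}  :^{ \ottsym{0} }   \bigcirc  \ottnt{A} $, so by Theorem~\ref{lcCat} both $ \llbracket  \ottnt{b_{{\mathrm{1}}}}  \rrbracket $ and $ \llbracket  \ottnt{b_{{\mathrm{2}}}}  \rrbracket $, taken in $(\mathbf{Set},\mathbf{S}^{0})$, lie in $\text{Hom}_{\mathbf{Set}}( \llbracket   \emptyset   \rrbracket , \mathbf{S}^{0}_{\ottsym{0}}  \llbracket   \bigcirc  \ottnt{A}   \rrbracket )$; since $\mathbf{S}^{0}_{\ottsym{0}} = \Id$ this codomain is $ \llbracket   \bigcirc  \ottnt{A}   \rrbracket  = \top$, and a hom-set into a terminal object is a singleton, whence $ \llbracket  \ottnt{b_{{\mathrm{1}}}}  \rrbracket  =  \llbracket  \ottnt{b_{{\mathrm{2}}}}  \rrbracket $. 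By the inductive definition of the interpretation (the clause for application), $ \llbracket    \ottnt{f}  \:  \ottnt{b_{{\mathrm{1}}}}    \rrbracket  =  \llbracket    \ottnt{f}  \:  \ottnt{b_{{\mathrm{2}}}}    \rrbracket $. Now assume $ \vdash   \ottnt{f}  \:  \ottnt{b_{{\mathrm{1}}}}   \longmapsto^{\ast}  \ottmv{v} $; then $  \emptyset   \vdash   \ottnt{f}  \:  \ottnt{b_{{\mathrm{1}}}}   :^{ \ottsym{0} }   \mathbf{Bool} $ by \rref{LC-App}, and since $\longmapsto^{\ast}$ is contained in the equational theory of \lc{} and $  \emptyset   \vdash  \ottmv{v}  :^{ \ottsym{0} }   \mathbf{Bool} $ (by subject reduction), Theorem~\ref{lcCat} gives $ \llbracket    \ottnt{f}  \:  \ottnt{b_{{\mathrm{1}}}}    \rrbracket  =  \llbracket  \ottmv{v}  \rrbracket $, hence $ \llbracket    \ottnt{f}  \:  \ottnt{b_{{\mathrm{2}}}}    \rrbracket  =  \llbracket  \ottmv{v}  \rrbracket $. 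Finally $ \llbracket   \_   \rrbracket _{(\mathbf{Set},\mathbf{S}^{0})}$ is computationally adequate by Theorem~\ref{lcS0}, so Theorem~\ref{lcCA} applied to $  \emptyset   \vdash   \ottnt{f}  \:  \ottnt{b_{{\mathrm{2}}}}   :^{ \ottsym{0} }   \mathbf{Bool} $ yields $ \vdash   \ottnt{f}  \:  \ottnt{b_{{\mathrm{2}}}}   \longmapsto^{\ast}  \ottmv{v} $. The converse implication is obtained by interchanging $\ottnt{b_{{\mathrm{1}}}}$ and $\ottnt{b_{{\mathrm{2}}}}$.

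Most of the work is already done: soundness of the interpretation, that reduction refines equality, that $\mathbf{S}^{0}$ is a legitimate (indeed strict) monoidal functor into $\EC$, and injectivity of the $\mathbf{Set}$-interpretation on ground types are all supplied by the development above (Theorems~\ref{lcCat}, \ref{lcCA}, \ref{lcS0}), so this statement is essentially a corollary of the adequacy of $\mathbf{S}^{0}$. The one place needing care is the choice of functor and the bookkeeping of time instants: the entity to be hidden sits one step in the future, at instant $ 0' $, so it is essential that $\mathbf{S}^{0}$ sends $ 0' $ (and every later instant) to $\ast$ while leaving instant $\ottsym{0}$ untouched, and one must verify that $ \ottnt{f}  \:  \ottnt{b_{{\mathrm{1}}}} $ and $ \ottnt{f}  \:  \ottnt{b_{{\mathrm{2}}}} $ remain typeable at time instant $\ottsym{0}$ so that Theorem~\ref{lcCA}, which is stated only for that instant, applies. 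I expect this bookkeeping --- rather than any hard computation --- to be the main point to get right.
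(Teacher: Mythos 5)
Your proposal is correct and follows essentially the same route as the paper: interpret in $(\mathbf{Set},\mathbf{S}^{0})$, observe that $ \llbracket   \bigcirc  \ottnt{A}   \rrbracket  = \top$ forces $ \llbracket    \ottnt{f}  \:  \ottnt{b_{{\mathrm{1}}}}    \rrbracket  =  \llbracket    \ottnt{f}  \:  \ottnt{b_{{\mathrm{2}}}}    \rrbracket $, and conclude by computational adequacy (Theorems~\ref{lcS0} and~\ref{lcCA}). The extra bookkeeping you spell out (subject reduction, reduction being contained in the equational theory) is exactly what the paper's appeal to Theorem~\ref{lcCA} packages up.
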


The presence-absence test provides a simple yet powerful method for proving correctness of dependency analyses. We used it to show correctness of \ED{} and \lc{} in a very straightforward manner. To put it in perspective, \citet{lambdacirc} requires 10 journal pages to establish correctness of \lc{} using syntactic methods whereas our proof of correctness for \lc{} follows almost immediately from the soundness theorem. This shows that the presence-absence test may be a useful tool for establishing correctness of dependency calculi. 

Next, we show how to translate \lc{} into a graded monadic comonadic framework.

\subsection{Can We Translate \lc{} to GMCC?} \label{lcgmcc}

Here, we consider how we might translate \lc{} to GMCC. We can instantiate the parametrizing preordered monoid $\mathcal{M}$ to $\mathcal{N} = (\mathbb{N},+,0)$ and we may translate $\bc$ as: $ \widehat{  \bigcirc  \ottnt{A}  }  =  S_{   0'   } \:   \widehat{ \ottnt{A} }  $. We can translate contexts as:
\[  \widehat{   \emptyset   }  =  \emptyset  \hspace*{20pt}  \widehat{    \Gamma  ,   \ottmv{x}  :^{  n  }  \ottnt{A}     }  =   \widehat{  \Gamma  }   ,   \ottmv{x}  :   S_{  n  } \:   \widehat{ \ottnt{A} }     \]
A typing judgement $ \Gamma  \vdash  \ottnt{a}  :^{ n }  \ottnt{A} $ can then be translated as: $  \widehat{  \Gamma  }   \vdash   \widehat{ \ottnt{a} }   :   S_{  n  } \:   \widehat{ \ottnt{A} }   $. For the modal terms, we have:
\[  \widehat{  \mathbf{next} \:  \ottnt{a}  }  =  \mathbf{fork}^{  n  ,   0'   }   \widehat{ \ottnt{a} }   \hspace*{20pt}  \widehat{  \mathbf{prev} \:  \ottnt{a}  }  =  \mathbf{join}^{  n  ,   0'   }   \widehat{ \ottnt{a} }   \]

This translation works well for the modal constructs; however, we run into a problem when dealing with functions and applications. The problem is that with the above translation, it is not possible to show  typing is preserved in case of functions and applications. The reason behind this problem is the difference between \lc{} and GMCC we referred to earlier: the types $ \bigcirc^{ n }   (   \ottnt{A}  \to  \ottnt{B}   )  $ and $  \bigcirc^{ n }  \ottnt{A}   \to   \bigcirc^{ n }  \ottnt{B}  $ are isomorphic in \lc{} but not necessarily in GMCC.

This difference arises from the fact that in \lc{}, the grades pervade all the typing rules, including the ones for functions and applications, while in GMCC, they are restricted to the monadic and the comonadic typing rules. We could have designed GMCC by permeating the grades along all the typing rules in lieu of restricting them to a fragment of the calculus. We avoided such a design for the sake of simplicity. However, now that we understand the calculus, we can consider the implications of such a design choice. In the next section, we explore this design choice and present \Ge{}, GMCC extended with graded contexts and graded typing judgements.

\section{\Ge{}} \label{secgmcce}

\subsection{The Calculus}

Like GMCC, \Ge{} is parametrized by an arbitrary preordered monoid, $ \mathcal{M} $. The types of the calculus are the same as those of GMCC. With respect to terms, \Ge{} differs from GMCC in having only the following two term-level constructs (in lieu of \textbf{ret}, \textbf{extr}, etc.) for introducing and eliminating the modal type. 
\[ \text{terms}, a, b ::= \ldots (\lambda\text{-calculus terms}) \: | \:  \mathbf{split}^{ m }  \ottnt{a}  \: | \:  \mathbf{merge}^{ m }  \ottnt{a}  \]
\Ge{} also differs from GMCC wrt contexts and typing judgements, both of which are graded in \Ge{}, like in \lc{}. The typing judgement of \Ge{}, $ \Gamma  \vdash  \ottnt{a}  :^{ m }  \ottnt{A} $, intuitively means that $\ottnt{a}$ can be observed at $m$, provided the variables in $\Gamma$ are observable at their respective grades. The typing rules of the calculus appear in Figure \ref{gmccetyp}.  The rules pertaining to standard $\lambda$-calculus terms are as expected. The \rref{E-Split,E-Merge} introduce and eliminate the modal type. These rules are similar to \rref{C-Fork} and \rref{M-Join} respectively. The \rref{E-Up}, akin to \rref{M-Up,C-Up}, implicitly relaxes the grade at which a term is observed.

The equational theory of the calculus is induced by the $\beta\eta$-rules of standard $\lambda$-calculus along with the following $\beta\eta$-rule for modal terms.
\[  \mathbf{merge}^{ m }   (   \mathbf{split}^{ m }  \ottnt{a}   )   \equiv \ottnt{a} \hspace*{25pt} \ottnt{a}  \equiv  \mathbf{split}^{ m }   (   \mathbf{merge}^{ m }  \ottnt{a}   )   \]

\begin{figure}
\drules[E]{$ \Gamma  \vdash  \ottnt{a}  :^{ m }  \ottnt{A} $}{Typing}{Var,Lam,App,Pair,Proj,Inj,Case,Split,Merge,Unit,Up}
\caption{Typing Rules of \Ge{}}
\label{gmccetyp}
\end{figure}

The graded presentation of the calculus leads to some interesting consequences. Unlike GMCC, this calculus  enjoys the following properties.
\begin{prop} \label{propiso}
Let $ \mathcal{M} $ be any preordered monoid. Then, in \Ge{}($ \mathcal{M} $),
\begin{itemize}
\item The types $ S_{ m } \:   \ottkw{Unit}  $ and $ \ottkw{Unit} $ are isomorphic.
\item The types $ S_{ m } \:   (   \ottnt{A_{{\mathrm{1}}}}  \times  \ottnt{A_{{\mathrm{2}}}}   )  $ and $  S_{ m } \:  \ottnt{A_{{\mathrm{1}}}}   \times   S_{ m } \:  \ottnt{A_{{\mathrm{2}}}}  $, for all types $\ottnt{A_{{\mathrm{1}}}}$ and $\ottnt{A_{{\mathrm{2}}}}$, are isomorphic.
\item The types $ S_{ m } \:   (   \ottnt{A}  \to  \ottnt{B}   )  $ and $  S_{ m } \:  \ottnt{A}   \to   S_{ m } \:  \ottnt{B}  $, for all types $\ottnt{A}$ and $\ottnt{B}$, are isomorphic. 
\end{itemize}
\end{prop}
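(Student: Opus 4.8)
The plan is to prove the three isomorphisms uniformly: for each pair of types $X, Y$ I will exhibit a closed \Ge{}($\mathcal{M}$)-term $\vdash f :^{1} X \to Y$ and a closed \Ge{}($\mathcal{M}$)-term $\vdash g :^{1} Y \to X$, read off well-typedness of both from \rref{E-Split,E-Merge} (which move a term between observation grade $p$ and grade $p \cdot m$) together with \rref{E-Lam,E-App,E-Pair,E-Proj,E-Unit}, and then verify $g \circ f \equiv \mathrm{id}_X$ and $f \circ g \equiv \mathrm{id}_Y$ using only the $\beta\eta$-equations for $\ottkw{Unit}$, products and functions together with the two modal rules $\mathbf{merge}^m ( \mathbf{split}^m\, a ) \equiv a$ and $a \equiv \mathbf{split}^m ( \mathbf{merge}^m\, a )$.

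For $\ottkw{Unit}$ I would take $f \triangleq \lambda x : S_m\, \ottkw{Unit}.\, \ottkw{unit}$ and $g \triangleq \lambda x : \ottkw{Unit}.\, \mathbf{split}^m\, \ottkw{unit}$, the inner $\ottkw{unit}$ being typeable at grade $m$ by \rref{E-Unit}; then $f \circ g \equiv \mathrm{id}$ is immediate from $\ottkw{Unit}$-extensionality and $g \circ f \equiv \mathrm{id}$ follows from $x \equiv \mathbf{split}^m ( \mathbf{merge}^m\, x )$ and $\mathbf{merge}^m\, x \equiv \ottkw{unit}$. For binary products I would take $f \triangleq \lambda x.\, ( \mathbf{split}^m ( \mathbf{proj}_1 ( \mathbf{merge}^m\, x ) ),\, \mathbf{split}^m ( \mathbf{proj}_2 ( \mathbf{merge}^m\, x ) ) )$ and $g \triangleq \lambda w.\, \mathbf{split}^m\, ( \mathbf{merge}^m ( \mathbf{proj}_1\, w ),\, \mathbf{merge}^m ( \mathbf{proj}_2\, w ) )$, and discharge the two round-trips with projection-$\beta$, surjective pairing, and the two $\mathbf{split}/\mathbf{merge}$ rules.

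The function case is where the graded judgements are essential; it is precisely the isomorphism that holds in \Ge{} but fails in GMCC, and in GMCC one cannot even write the relevant terms. I would take
\[ f \triangleq \lambda x : S_m\, ( A \to B ).\, \lambda y : S_m\, A.\, \mathbf{split}^m \big( ( \mathbf{merge}^m\, x )\, ( \mathbf{merge}^m\, y ) \big) \]
and
\[ g \triangleq \lambda h : ( S_m\, A \to S_m\, B ).\, \mathbf{split}^m \big( \lambda z : A.\, \mathbf{merge}^m\, ( h\, ( \mathbf{split}^m\, z ) ) \big). \]
Typing goes through because $x$, bound at grade $1$, becomes $\mathbf{merge}^m\, x$ at grade $m$ by \rref{E-Merge}, and dually $z$, which lives at grade $m$, becomes $\mathbf{split}^m\, z$ at grade $1$ by \rref{E-Split}, so that $h$ (bound at grade $1$) may be applied to it — this regrading along multiplication by $m$ is exactly what \rref{E-Split,E-Merge} supply and what \rref{M-Join,C-Fork} in GMCC cannot, since there the grade is not carried by the judgement and no rule relaxes the grade along multiplication. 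For $g \circ f$: $\beta$-reduction together with $\mathbf{merge}^m ( \mathbf{split}^m\, z ) \equiv z$ collapses $g ( f\, x )$ to $\mathbf{split}^m ( \lambda z.\, ( \mathbf{merge}^m\, x )\, z )$, whence function-$\eta$ followed by $\mathbf{split}^m ( \mathbf{merge}^m\, x ) \equiv x$ gives $g ( f\, x ) \equiv x$, i.e. $g \circ f \equiv \lambda x.\, x$. The case $f \circ g$ is symmetric: $\beta$-reduction with $\mathbf{merge}^m ( \mathbf{split}^m\, (-) ) \equiv (-)$ and $\mathbf{split}^m ( \mathbf{merge}^m\, (-) ) \equiv (-)$ reduces $f ( g\, h )$ to $\lambda y.\, h\, y$, and function-$\eta$ finishes it.

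I do not expect a genuinely hard step: the equational theory and soundness of \Ge{} are already in hand, and the whole argument is a matter of writing down the three pairs of terms and pushing $\beta\eta$ plus the two $\mathbf{split}/\mathbf{merge}$ rules. The one place that needs care is the grade bookkeeping in the function case — checking that every occurrence of $\mathbf{split}^m$ and $\mathbf{merge}^m$ sits at a grade for which \rref{E-Split} or \rref{E-Merge} applies (each uses the split $1 \cdot m = m$) and that \rref{E-Lam,E-App} are instantiated at matching grades. Once the terms typecheck, the equational verification is a short calculation.
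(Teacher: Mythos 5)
Your proposal is correct and matches the paper's proof essentially verbatim: all three pairs of witnessing terms ($\lambda x.\,\ottkw{unit}$ / $\lambda y.\,\mathbf{split}^{m}\,\ottkw{unit}$ for $\ottkw{Unit}$, the $\mathbf{split}$/$\mathbf{merge}$-wrapped projections and pairing for products, and the two function-case terms) are exactly those the paper writes down, and the round-trip calculations use the same $\beta\eta$ rules plus the two modal equations. The only cosmetic difference is that the paper exhibits the isomorphism at an arbitrary ambient grade $m'$ (using $m'\cdot m$ in \rref{E-Split,E-Merge}) rather than fixing $m'=1$ as you do, but your derivations are uniform in the ambient grade, so nothing changes.
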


The third property above reminds us of the isomorphism between types $ \bigcirc^{ n }   (   \ottnt{A}  \to  \ottnt{B}   )  $ and $  \bigcirc^{ n }  \ottnt{A}   \to   \bigcirc^{ n }  \ottnt{B}  $ in \lc{}. Recall that we could not translate \lc{} into GMCC because such an isomorphism does not hold in general in GMCC. We shall see, \Ge{}, that satisfies these isomorphisms, can readily capture \lc{}. 

Next, we build categorical models for \Ge{}. The models are similar to those of GMCC; however, as in the case of \lc{}, we need to use cartesian closed endofunctors in lieu of just strong endofunctors. Let $ \mathcal{M} $ be the parametrizing structure. Let $\Ct$ be any bicartesian closed category. Further, let $\mathbf{S}$ be a strong monoidal functor from $\Ca(\mathcal{M})$ to $\EC$. Then, the interpretation $ \llbracket   \_   \rrbracket $ of types, contexts and terms is as follows.
\[  \llbracket   S_{ m } \:  \ottnt{A}   \rrbracket  =  \mathbf{S}_{ m }   \llbracket  \ottnt{A}  \rrbracket   \hspace{20pt}  \llbracket   \emptyset   \rrbracket  = \top \hspace{20pt}  \llbracket    \Gamma  ,   \ottmv{x}  :^{ m }  \ottnt{A}     \rrbracket  =  \llbracket  \Gamma  \rrbracket  \times \mathbf{S}_{m}  \llbracket  \ottnt{A}  \rrbracket  \]
\begin{align*}
 \llbracket   \mathbf{split}^{ m_{{\mathrm{2}}} }  \ottnt{a}   \rrbracket  & =  \llbracket  \Gamma  \rrbracket  \xrightarrow{ \llbracket  \ottnt{a}  \rrbracket }  \mathbf{S}_{  m_{{\mathrm{1}}}  \cdot  m_{{\mathrm{2}}}  }   \llbracket  \ottnt{A}  \rrbracket   \xrightarrow{ \delta^{ m_{{\mathrm{1}}} , m_{{\mathrm{2}}} }_{  \llbracket  \ottnt{A}  \rrbracket  } }  \mathbf{S}_{ m_{{\mathrm{1}}} }   \mathbf{S}_{ m_{{\mathrm{2}}} }   \llbracket  \ottnt{A}  \rrbracket    \\
 \llbracket   \mathbf{merge}^{ m_{{\mathrm{2}}} }  \ottnt{a}   \rrbracket  & =  \llbracket  \Gamma  \rrbracket  \xrightarrow{ \llbracket  \ottnt{a}  \rrbracket }  \mathbf{S}_{ m_{{\mathrm{1}}} }   \mathbf{S}_{ m_{{\mathrm{2}}} }   \llbracket  \ottnt{A}  \rrbracket    \xrightarrow{ \mu^{ m_{{\mathrm{1}}} , m_{{\mathrm{2}}} }_{  \llbracket  \ottnt{A}  \rrbracket  } }  \mathbf{S}_{  m_{{\mathrm{1}}}  \cdot  m_{{\mathrm{2}}}  }   \llbracket  \ottnt{A}  \rrbracket  
\end{align*}

Note that the types and terms of the standard $\lambda$-calculus are interpreted as in the case of \lc{}.\\ This gives us a sound interpretation of the calculus.

\begin{theorem}\label{gmcceSound}
If $ \Gamma  \vdash  \ottnt{a}  :^{ m }  \ottnt{A} $ in \Ge{}, then $ \llbracket  \ottnt{a}  \rrbracket  \in \text{Hom}_{\Ct} ( \llbracket  \Gamma  \rrbracket , \mathbf{S}_{m}  \llbracket  \ottnt{A}  \rrbracket )$. Further, if $ \Gamma  \vdash  \ottnt{a_{{\mathrm{1}}}}  :^{ m }  \ottnt{A} $ and $ \Gamma  \vdash  \ottnt{a_{{\mathrm{2}}}}  :^{ m }  \ottnt{A} $ such that $ \ottnt{a_{{\mathrm{1}}}}  \equiv  \ottnt{a_{{\mathrm{2}}}} $ in \Ge{}, then $ \llbracket  \ottnt{a_{{\mathrm{1}}}}  \rrbracket  =  \llbracket  \ottnt{a_{{\mathrm{2}}}}  \rrbracket  \in \text{Hom}_{\Ct} ( \llbracket  \Gamma  \rrbracket , \mathbf{S}_{m}  \llbracket  \ottnt{A}  \rrbracket )$. 
\end{theorem}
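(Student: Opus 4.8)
The plan is to prove the two assertions by induction, reusing as much as possible the argument already carried out for Theorem~\ref{lcCat}, since the interpretation of the $\lambda$-calculus fragment of \Ge{} is literally that of \lc{}; what is genuinely new here is the presence of product types, an arbitrary preordered monoid $\mathcal{M}$ in place of $\mathcal{N}$, and the silent \rref{E-Up} rule. Before running the main induction I would establish the usual auxiliary lemmas for the graded calculus: a weakening lemma (extending $\Gamma$ by a fresh variable amounts to precomposition with a projection) and a substitution lemma (if $\Gamma \vdash a :^m A$ and $\Gamma, x :^m A \vdash b :^{m'} B$, then $\llbracket b\{a/x\} \rrbracket = \llbracket b \rrbracket \circ \langle \text{id}, \llbracket a \rrbracket \rangle$), both by straightforward induction on derivations; these are needed for the equational part.

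\emph{Part 1 (well-definedness).} I would proceed by induction on the derivation of $\Gamma \vdash a :^m A$. The interpretation of every syntactic form is a composite of morphisms of $\Ct$, so the only thing to check is that source and target objects match up once the induction hypotheses are plugged in. For the $\lambda$-calculus rules this is exactly the bookkeeping of Theorem~\ref{lcCat}, using that each $\mathbf{S}_m$ is a cartesian closed endofunctor, so the canonical maps $p$ (product preservation) and $q$ (exponential preservation) are invertible; the rules for pairs and projections are handled similarly with $p$. For \rref{E-Split}, from $\llbracket a \rrbracket : \llbracket \Gamma \rrbracket \to \mathbf{S}_{m_1 \cdot m_2}\llbracket A \rrbracket$ one postcomposes $\delta^{m_1,m_2}_{\llbracket A \rrbracket}$ to land in $\mathbf{S}_{m_1}\mathbf{S}_{m_2}\llbracket A \rrbracket = \mathbf{S}_{m_1}\llbracket S_{m_2} A \rrbracket$; \rref{E-Merge} is symmetric via $\mu^{m_1,m_2}$; and \rref{E-Up} postcomposes the natural transformation $\mathbf{S}^{m_1 \le m_2} : \mathbf{S}_{m_1} \to \mathbf{S}_{m_2}$, which exists because $\mathbf{S}$ is a functor on $\Ca(\mathcal{M})$.

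\emph{Part 2 (equational soundness).} This I would prove by induction on the derivation of $a_1 \equiv a_2$. The equivalence and congruence-closure cases are immediate because $\llbracket \cdot \rrbracket$ is compositional. The $\beta\eta$-rules of the simply-typed $\lambda$-calculus (including those for pairs) hold in any bicartesian closed category and carry over verbatim, the $\beta$-rule using the substitution lemma and the $\eta$-rules using the universal properties of products and exponentials, together with the coherence built into $\mathbf{S}$ being a strong monoidal functor into $\EC$ (so $p$ and $q$ interact correctly with the structural isomorphisms). For the modal $\beta$-rule, $\llbracket \mathbf{merge}^{m_2}(\mathbf{split}^{m_2} a) \rrbracket = \mu^{m_1,m_2} \circ \delta^{m_1,m_2} \circ \llbracket a \rrbracket = \llbracket a \rrbracket$, and for the $\eta$-rule, $\llbracket \mathbf{split}^{m_2}(\mathbf{merge}^{m_2} a) \rrbracket = \delta^{m_1,m_2} \circ \mu^{m_1,m_2} \circ \llbracket a \rrbracket = \llbracket a \rrbracket$, using precisely the identities $\mu \circ \delta = \text{id}$ and $\delta \circ \mu = \text{id}$ recorded in Section~\ref{GMCCModel}, which hold because $F_0$ and $F_2$ are invertible for a strong monoidal functor. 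Since the interpretation is really defined on typing \emph{derivations} and \rref{E-Up} is silent, I would also note that different derivations of the same judgement yield the same morphism, a coherence fact following from functoriality of $\mathbf{S}$ (preservation of identities and composites of the order morphisms of $\Ca(\mathcal{M})$), so that both statements are well posed.

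\emph{Main obstacle.} The technically delicate part is not the modal rules — those collapse instantly to invertibility of $\delta$ and $\mu$ — but the graded bookkeeping: formulating the weakening and substitution lemmas so that they commute with every occurrence of the $\mathbf{S}_m$-structure maps, and checking that the function $\beta\eta$-laws survive the detour through $q$ and $q^{-1}$ exactly as in the proof of Theorem~\ref{lcCat}. I expect that, modulo reusing the \lc{} calculation, the remaining work is routine.
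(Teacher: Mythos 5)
Your proposal is correct and follows essentially the same route as the paper's proof: induction on the typing derivation reusing the Theorem~\ref{lcCat} computations for the $\lambda$-calculus fragment, with \rref{E-Split}, \rref{E-Merge} and \rref{E-Up} handled by postcomposing $\delta^{m_1,m_2}$, $\mu^{m_1,m_2}$ and $\mathbf{S}^{m_1\le m_2}$ respectively, and the modal $\beta\eta$-laws collapsing to $\mu\circ\delta=\text{id}$ and $\delta\circ\mu=\text{id}$. Your additional remarks on the substitution lemma and on coherence of the interpretation across derivations (owing to the silent \rref{E-Up}) are sound and make explicit two points the paper leaves implicit.
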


Next, we show that both \ED{} and \lc{} can be translated to \Ge{}.

\subsection{Translations from \ED{} and \lc{} to \Ge{}}

We know that, over the class of bounded join-semilattices, \ED{} is equivalent to GMCC. Given that GMCC is quite close to \Ge{}, we shall use GMCC as the source language for our translation. Let $ \mathcal{L}  = (L,\vee,\bot)$ be an arbitrary join-semilattice. Then, the translation $\widetilde{\phantom{a}}$ from GMCC($ \mathcal{L} $) to \Ge{}($ \mathcal{L} $) is as follows. For types, $ \widetilde{ \ottnt{A} }  = \ottnt{A}$. For terms,
\begin{align*}
 \widetilde{  \ottkw{ret}  \:  \ottnt{a}  }  & =  \mathbf{split}^{   \bot   }   \widetilde{ \ottnt{a} }   &  \widetilde{  \mathbf{extr} \:  \ottnt{a}  }  & =  \mathbf{merge}^{   \bot   }   \widetilde{ \ottnt{a} }   \\
 \widetilde{  \mathbf{join}^{  \ell_{{\mathrm{1}}}  ,  \ell_{{\mathrm{2}}}  }  \ottnt{a}  }  & =  \mathbf{split}^{   \ell_{{\mathrm{1}}}  \vee  \ell_{{\mathrm{2}}}   }   (   \mathbf{merge}^{  \ell_{{\mathrm{2}}}  }   (   \mathbf{merge}^{  \ell_{{\mathrm{1}}}  }   \widetilde{ \ottnt{a} }    )    )   &  \widetilde{  \mathbf{fork}^{  \ell_{{\mathrm{1}}}  ,  \ell_{{\mathrm{2}}}  }  \ottnt{a}  }  & =  \mathbf{split}^{  \ell_{{\mathrm{1}}}  }   (   \mathbf{split}^{  \ell_{{\mathrm{2}}}  }   (   \mathbf{merge}^{   \ell_{{\mathrm{1}}}  \vee  \ell_{{\mathrm{2}}}   }   \widetilde{ \ottnt{a} }    )    )   \\
 \widetilde{  \mathbf{lift}^{  \ell  }  \ottnt{f}  }  & =  \lambda  \ottmv{x}  .   \mathbf{split}^{  \ell  }   (    \widetilde{ \ottnt{f} }   \:   (   \mathbf{merge}^{  \ell  }  \ottmv{x}   )    )    &  \widetilde{  \mathbf{up}^{  \ell_{{\mathrm{1}}}  ,  \ell_{{\mathrm{2}}}  }  \ottnt{a}  }  & =  \mathbf{split}^{  \ell_{{\mathrm{2}}}  }   (   \mathbf{merge}^{  \ell_{{\mathrm{1}}}  }   \widetilde{ \ottnt{a} }    )  
\end{align*}

The standard $\lambda$-calculus terms are translated in the expected manner. This translation is sound, as we see next. Below we use $ \Gamma ^{ m } $ to denote the graded counterpart of $\Gamma$ where every assumption is held at $m$.

\begin{theorem} \label{gmcc2gmcce}
If $ \Gamma  \vdash  \ottnt{a}  :  \ottnt{A} $ in GMCC($ \mathcal{L} $), then $  \Gamma ^{   \bot   }   \vdash   \widetilde{ \ottnt{a} }   :^{   \bot   }  \ottnt{A} $ in \Ge{}($ \mathcal{L} $). Further, if $ \Gamma  \vdash  \ottnt{a_{{\mathrm{1}}}}  :  \ottnt{A} $ and $ \Gamma  \vdash  \ottnt{a_{{\mathrm{2}}}}  :  \ottnt{A} $ such that $ \ottnt{a_{{\mathrm{1}}}}  \equiv  \ottnt{a_{{\mathrm{2}}}} $ in GMCC($ \mathcal{L} $), then $  \widetilde{ \ottnt{a_{{\mathrm{1}}}} }   \equiv   \widetilde{ \ottnt{a_{{\mathrm{2}}}} }  $ in \Ge{}($ \mathcal{L} $).
\end{theorem}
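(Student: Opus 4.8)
The plan is to prove the two halves of the statement by induction on derivations: type preservation by induction on the GMCC($ \mathcal{L} $) typing derivation $ \Gamma  \vdash  \ottnt{a}  :  \ottnt{A} $, and the ``Further'' clause by induction on the derivation of $ \ottnt{a_{{\mathrm{1}}}}  \equiv  \ottnt{a_{{\mathrm{2}}}} $ in GMCC($ \mathcal{L} $). Throughout I would use that a bounded join-semilattice, viewed as a preordered monoid, has $\bot$ as its unit and $\vee$ as its multiplication, so that $ \bot  \vee  m  = m = m \vee  \bot $ and $ \bot  \sqsubseteq  m $ for every $m$. This is exactly what makes the $\lambda$-calculus fragment translate transparently: the translation keeps every such subterm at grade $\bot$, and \rref{E-Up} is always available to relax $\bot$ to any grade when needed. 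Since $\widetilde{\ottnt{A}} = \ottnt{A}$ and $\widetilde{\phantom{a}}$ sends $S_{m}$ to $S_{m}$, types line up automatically, and the context on the \Ge{} side is $ \Gamma ^{ \bot } $ throughout.

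For type preservation, the cases for the standard $\lambda$-calculus constructs are immediate from \rref{E-Var,E-Lam,E-App,E-Pair,E-Proj,E-Inj,E-Case,E-Unit} with the observation grade fixed at $\bot$. The interesting cases are the modal ones. For $ \ottkw{ret}  \:  \ottnt{a} $ and $ \mathbf{extr} \:  \ottnt{a} $ one applies \rref{E-Split} and \rref{E-Merge} with the decomposition $ \bot  =  \bot  \vee  \bot $. For $ \mathbf{join}^{ \ell_{{\mathrm{1}}} , \ell_{{\mathrm{2}}} }  \ottnt{a} $ and $ \mathbf{fork}^{ \ell_{{\mathrm{1}}} , \ell_{{\mathrm{2}}} }  \ottnt{a} $ one chains \rref{E-Merge} / \rref{E-Split} following the grade decompositions displayed in the definition of the translation (e.g. $ \ell_{{\mathrm{1}}}  \vee  \ell_{{\mathrm{2}}}  =  \bot  \vee  ( \ell_{{\mathrm{1}}}  \vee  \ell_{{\mathrm{2}}} ) $, and $ \ell_{{\mathrm{1}}}  \vee  \ell_{{\mathrm{2}}} $ seen as $ \ell_{{\mathrm{1}}}  \vee  \ell_{{\mathrm{2}}} $ with second factor $\ell_{{\mathrm{2}}}$). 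For $ \mathbf{lift}^{ \ell }  \ottnt{f} $ one uses \rref{E-Lam} (the bound variable is at grade $\bot$), then \rref{E-Merge} to reach grade $\ell$, \rref{E-Up} to move the inductively typed $\widetilde{\ottnt{f}}$ from grade $\bot$ to grade $\ell$, \rref{E-App}, and finally \rref{E-Split} to return to grade $\bot$. For $ \mathbf{up}^{ \ell_{{\mathrm{1}}} , \ell_{{\mathrm{2}}} }  \ottnt{a} $ (where $ \ell_{{\mathrm{1}}}  \sqsubseteq  \ell_{{\mathrm{2}}} $) one uses \rref{E-Merge} to reach grade $\ell_{{\mathrm{1}}}$, \rref{E-Up} to reach grade $\ell_{{\mathrm{2}}}$, and \rref{E-Split} to come back to grade $\bot$.

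For equality preservation, the congruence and equivalence-closure rules are handled by the induction hypothesis; this uses that $\widetilde{\phantom{a}}$ commutes with capture-avoiding substitution (the only binder introduced by the translation, inside $\widetilde{ \mathbf{lift}^{ \ell }  \ottnt{f} }$, is on a fresh variable). The $\beta\eta$-rules of the $\lambda$-calculus fragment map to the corresponding \Ge{} rules. The four bridging equations of GMCC ($ \mathbf{extr} \:   (   \ottkw{ret}  \:  \ottnt{a}   )   \equiv \ottnt{a}$, $ \ottkw{ret}  \:   (   \mathbf{extr} \:  \ottnt{a}   )   \equiv \ottnt{a}$, $ \mathbf{fork}^{ m_{{\mathrm{1}}} , m_{{\mathrm{2}}} }   (   \mathbf{join}^{ m_{{\mathrm{1}}} , m_{{\mathrm{2}}} }  \ottnt{a}   )   \equiv \ottnt{a}$, $ \mathbf{join}^{ m_{{\mathrm{1}}} , m_{{\mathrm{2}}} }   (   \mathbf{fork}^{ m_{{\mathrm{1}}} , m_{{\mathrm{2}}} }  \ottnt{a}   )   \equiv \ottnt{a}$) translate to cancellations of $\mathbf{split}$ against $\mathbf{merge}$, provable from the two \Ge{} axioms $ \mathbf{merge}^{ m }   (   \mathbf{split}^{ m }  \ottnt{a}   )   \equiv \ottnt{a}$ and $\ottnt{a}  \equiv  \mathbf{split}^{ m }   (   \mathbf{merge}^{ m }  \ottnt{a}   ) $ together with $\beta$ and the identifications $ \bot  \vee  m  = m$, $ \ell_{{\mathrm{1}}}  \vee  \ell_{{\mathrm{2}}}  =  \ell_{{\mathrm{2}}}  \vee  \ell_{{\mathrm{1}}} $. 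For each GMC equation of Figure~\ref{eqGMC} and each GCC equation of Figure~\ref{eqGCC} I would first unfold the graded $\mathbf{bind}$/$\mathbf{extend}$ shorthands and the translations of $ \mathbf{lift} $, $ \mathbf{join} $, $ \mathbf{fork} $, $ \mathbf{up} $, then normalise both sides using $\beta$, $\eta$ and the two cancellation axioms. To keep the bookkeeping manageable I would isolate a few auxiliary \Ge{} lemmas: that $\widetilde{ \mathbf{lift}^{ \ell }   (   \lambda  \ottmv{x}  .  \ottmv{x}   )  } \equiv  \lambda  \ottmv{x}  .  \ottmv{x} $ and $\widetilde{ \mathbf{lift}^{ \ell } }$ respects composition; that $\widetilde{ \mathbf{join}^{ \ell_{{\mathrm{1}}} , \ell_{{\mathrm{2}}} } }$ and $\widetilde{ \mathbf{fork}^{ \ell_{{\mathrm{1}}} , \ell_{{\mathrm{2}}} } }$ are mutually inverse (this already disposes of (\ref{eq:idn}), (\ref{eq:comp}), their GCC analogues (\ref{eq:cidn}), (\ref{eq:ccomp}), and the last two bridging equations); and a \emph{coherence lemma} stating that the evident \Ge{}-terms of type $  S_{  m_{{\mathrm{1}}}  \cdot  m_{{\mathrm{2}}}   } \:  \ottnt{A}   \to   S_{ m_{{\mathrm{1}}} } \:   S_{ m_{{\mathrm{2}}} } \:  \ottnt{A}   $ built from $\mathbf{split}$ and $\mathbf{merge}$ agree, and likewise for triple nestings under either bracketing.

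I expect the main obstacle to be the naturality equations (\ref{eq:natl}), (\ref{eq:natr}), (\ref{eq:cnatl}), (\ref{eq:cnatr}) and the associativity equations (\ref{eq:assoc}), (\ref{eq:cassoc}): after unfolding, both sides become long alternating strings of $\mathbf{split}$s and $\mathbf{merge}$s wrapped around $\beta$-redexes, and the grades of all intermediate subterms must be tracked precisely so that each $ \mathbf{merge}^{ m }   (   \mathbf{split}^{ m }  \ottnt{a}   ) $ / $ \mathbf{split}^{ m }   (   \mathbf{merge}^{ m }  \ottnt{a}   ) $ cancellation is applied at a well-typed position, occasionally after an invisible use of \rref{E-Up} to line up grades related by $\sqsubseteq$. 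This is where the coherence lemma pays off, since it lets one rebracket nested $S$'s without re-deriving the cancellations each time. Once these cases are dispatched, the remaining GMC/GCC equations (reflexivity and transitivity of $ \mathbf{up} $, and the unit laws for $\mathbf{bind}$/$\mathbf{extend}$) are short calculations of the same flavour.
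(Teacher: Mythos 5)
Your proposal is correct and follows essentially the same route as the paper's proof: induction on the typing derivation for type preservation (with exactly the grade decompositions and uses of \rref{E-Split}, \rref{E-Merge}, \rref{E-Up} you describe), and inversion on the equality derivation for the second half, unfolding the translation and discharging each GMC/GCC equation by $\beta$-reduction together with the $\mathbf{split}$/$\mathbf{merge}$ cancellation axioms. The only difference is organizational — the paper carries out the naturality and associativity cases as direct (long) calculations rather than factoring them through the auxiliary coherence lemmas you propose.
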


One might wonder here whether we can go the other way around and translate \Ge{}($ \mathcal{L} $) to GMCC($ \mathcal{L} $). Though the two calculi are very similar, such a translation is not possible, owing to Proposition \ref{propiso}. More concretely, observe that the type $ S_{   \mathbf{Secret}   } \:   (    S_{   \mathbf{Secret}   } \:   \mathbf{Bool}    \to   \mathbf{Bool}    )  $ has four distinct terms in \Ge{}($ \mathcal{L}_2 $) but only two distinct terms in GMCC($ \mathcal{L}_2 $) ($ \mathcal{L}_2  \triangleq  \mathbf{Public}  \sqsubset  \mathbf{Secret} $).

Next, we return to our incomplete translation of \lc{} from Section \ref{lcgmcc}. Though we couldn't translate \lc{} to GMCC($ \mathcal{N} $), we can now easily translate it to \Ge{}($ \mathcal{N} $). The translation for the modal types and terms is as follows: 
\[  \widehat{  \bigcirc  \ottnt{A}  }  =  S_{   0'   } \:   \widehat{ \ottnt{A} }   \hspace*{15pt}  \widehat{  \mathbf{next} \:  \ottnt{a}  }  =  \mathbf{split}^{   0'   }   \widehat{ \ottnt{a} }   \hspace*{15pt}  \widehat{  \mathbf{prev} \:  \ottnt{a}  }  =  \mathbf{merge}^{   0'   }   \widehat{ \ottnt{a} }   \]
This translation is sound, as we see next. Below, $ \widehat{  \Gamma  } $ denotes $\Gamma$ with the types of the assumptions translated. 

\begin{theorem}\label{lc2gmcce}
If $ \Gamma  \vdash  \ottnt{a}  :^{ n }  \ottnt{A} $ in \lc{}, then $  \widehat{  \Gamma  }   \vdash   \widehat{ \ottnt{a} }   :^{  n  }   \widehat{ \ottnt{A} }  $ in \Ge{}($ \mathcal{N} $). Further, if $ \Gamma  \vdash  \ottnt{a_{{\mathrm{1}}}}  :^{ n }  \ottnt{A} $ and $ \Gamma  \vdash  \ottnt{a_{{\mathrm{2}}}}  :^{ n }  \ottnt{A} $ such that $ \ottnt{a_{{\mathrm{1}}}}  \equiv  \ottnt{a_{{\mathrm{2}}}} $ in \lc{}, then $  \widehat{ \ottnt{a_{{\mathrm{1}}}} }   \equiv   \widehat{ \ottnt{a_{{\mathrm{2}}}} }  $ in \Ge{}($ \mathcal{N} $).
\end{theorem}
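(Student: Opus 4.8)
The plan is to prove the two halves of the theorem — preservation of typing and preservation of equality — by a straightforward structural induction, following the same template as the earlier translation-soundness results such as Theorem~\ref{gmcc2gmcce}. The one algebraic observation that makes the modal cases work is that in $\mathcal{N} = (\mathbb{N},+,0)$ we have $n \cdot 0' = n + 1 = n'$, so the monoid operation of $\mathcal{N}$, applied at the generator $0'$, realises exactly the ``successor'' operation that \lc{}'s time-indexed judgements use implicitly.

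For typing preservation I would induct on the derivation of $\Gamma \vdash a :^n A$ in \lc{}. The cases for the $\lambda$-calculus fragment (\rref{LC-Var}, \rref{LC-Lam}, \rref{LC-App}, \rref{LC-InjOne}, \rref{LC-Case}, and the unit rule) are immediate: $\widehat{\phantom{a}}$ is the identity on these term constructors, it commutes with the corresponding type formers ($\widehat{A \to B} = \widehat{A} \to \widehat{B}$, $\widehat{A_1 + A_2} = \widehat{A_1} + \widehat{A_2}$, and so on), and the matching \Ge{} rules \rref{E-Var}, \rref{E-Lam}, \rref{E-App}, \rref{E-Inj}, \rref{E-Case}, \rref{E-Unit} have the same shape and leave the grade $n$ untouched, just as in \lc{}; since $\mathcal{N}$ carries the discrete order, \rref{E-Up} is never needed, which fits the fact that \lc{} has no subgrading. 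The interesting cases are \rref{LC-Next} and \rref{LC-Prev}. For \rref{LC-Next}, the premise $\Gamma \vdash a :^{n'} A$ gives, by induction, $\widehat{\Gamma} \vdash \widehat{a} :^{n'} \widehat{A}$; reading $n'$ as $n \cdot 0'$ and applying \rref{E-Split} with $m_1 := n$ and $m_2 := 0'$ yields $\widehat{\Gamma} \vdash \mathbf{split}^{0'}\widehat{a} :^n S_{0'}\widehat{A}$, which is exactly $\widehat{\Gamma} \vdash \widehat{\mathbf{next}\ a} :^n \widehat{\bigcirc A}$. Dually, for \rref{LC-Prev}, the premise $\Gamma \vdash a :^n \bigcirc A$ gives $\widehat{\Gamma} \vdash \widehat{a} :^n S_{0'}\widehat{A}$, and \rref{E-Merge} with $m_1 := n$, $m_2 := 0'$ yields $\widehat{\Gamma} \vdash \mathbf{merge}^{0'}\widehat{a} :^{n \cdot 0'}\widehat{A}$, i.e.\ $\widehat{\Gamma} \vdash \widehat{\mathbf{prev}\ a} :^{n'}\widehat{A}$.

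For equality preservation I would first record the substitution lemma $\widehat{b\{a/x\}} = \widehat{b}\{\widehat{a}/x\}$, which is really a syntactic identity proved by an easy induction on $b$ (since $\widehat{\phantom{a}}$ is a compositional term homomorphism that fixes variables), and note that $\widehat{\phantom{a}}$ respects $\alpha$-equivalence. Then I would check that each generating equation of \lc{} (Figure~\ref{lcequality}) is carried to a derivable \Ge{}-equation: the $\beta\eta$-laws for functions and for sums translate to the corresponding $\beta\eta$-laws of \Ge{}, with the $\beta$-directions using the substitution lemma; the modal rule $\mathbf{prev}(\mathbf{next}\ a) \equiv a$ translates to $\mathbf{merge}^{0'}(\mathbf{split}^{0'}\widehat{a}) \equiv \widehat{a}$, an instance of the modal $\beta$-rule of \Ge{}; and $a \equiv \mathbf{next}(\mathbf{prev}\ a)$ translates to $\widehat{a} \equiv \mathbf{split}^{0'}(\mathbf{merge}^{0'}\widehat{a})$, an instance of the modal $\eta$-rule. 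Since \Ge{}-equality is a congruence and $\widehat{\phantom{a}}$ is compositional, the congruence closure taken in \lc{} is transported to the congruence closure in \Ge{}, which completes the argument.

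I do not expect a genuine obstacle here; this is the most routine of the translation theorems, precisely because \Ge{} was designed to carry the graded judgements of \lc{} and, by Proposition~\ref{propiso}, to enjoy the isomorphism $S_m(A \to B) \cong S_m A \to S_m B$ whose failure blocked the GMCC translation in Section~\ref{lcgmcc}. The only places that need a little care are (i) lining up the monoid operation of $\mathcal{N}$, evaluated at $0'$, with the implicit successor in \lc{}'s time stamps in the \rref{LC-Next} and \rref{LC-Prev} cases, and (ii) phrasing the substitution lemma sharply enough to drive the $\beta$-cases of the equality part.
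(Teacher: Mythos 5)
Your proposal is correct and follows the same route as the paper, whose own proof simply states that the first part is by induction on the typing derivation and the second by inversion on the equality judgement; your write-up supplies exactly the details that argument presupposes (the identification $n\cdot 0'=n'$ in $\mathcal{N}$ for the \rref{LC-Next}/\rref{LC-Prev} cases via \rref{E-Split}/\rref{E-Merge}, the substitution lemma for the $\beta$-cases, and the modal $\beta\eta$-laws of \Ge{} for the $\mathbf{prev}$/$\mathbf{next}$ equations). No gaps.
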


We see that both \ED{} and \lc{} can be soundly translated into \Ge{}. Hence, \Ge{} is more general in its analysis of dependencies. \Ge{} shares some similarities with the sealing calculus \citep{igarashi}, which also subsumes the terminating fragment of DCC. Akin to \textbf{split} and \textbf{merge} in \Ge{}, the sealing calculus uses \textbf{seal} and \textbf{unseal} to introduce and eliminate the modal type. However, the sealing calculus is less general because it works for lattices only whereas \Ge{} works for any preordered monoid.

\section{Discussions and Related Work} \label{secdiscuss}

\subsection{Nontermination}

Till now, we did not include nontermination in any of our calculi. DCC, as presented by \citet{dcc}, includes nonterminating computations. So here we discuss how we can add nontermination to GMCC and \Ge{}. One of the features of these calculi is that they can be decomposed into a standard $\lambda$-calculus fragment and a modal fragment. The categorical models for the calculi also reflect the separation between the two fragments. The $\lambda$-calculus fragment is modelled by a bicartesian closed category, $\Ct$, whereas the modal fragment is modelled using monoidal functors from the parametrizing monoid to the category of endofunctors over $\Ct$. This separation between the two fragments makes it easier to add nontermination to these calculi. To include nontermination in these calculi, we just need to add the modal fragment on top of an already nonterminating calculus, for example, $\lambda$-calculus with pointed types. The nonterminating calculus can then be modelled by an appropriate category $\D$, for example, $\mathbf{Cpo}$, and the modal fragment by monoidal functors from the parametrizing monoid to the category of endofunctors over $\D$. Owing to the separation between the fragments, we conjecture that the proofs of the theorems would carry over smoothly to the new setting.




\subsection{Algehed's SDCC}

\citet{persdcc} is similar in spirit to our work. \citet{persdcc} designs a calculus, SDCC, that is equivalent to (the terminating fragment of) DCC. SDCC has the same types as DCC. However, it replaces the $\mathbf{bind}$ construct of DCC with four new constructs: $\mu, map, up$ and $com$. The constructs $\mu,map$ and $up$ serve the same purpose as $\mathbf{join},\mathbf{lift}$ and $\mathbf{up}$ respectively.  
The construct $com$ is what separates SDCC from our work. While \citet{persdcc} went with $com$ and designed a calculus equivalent to DCC, we went with $\mathbf{extr}$ and $\mathbf{fork}$ and designed a calculus that is equivalent to an extension of DCC. This choice helped us realize the power of comonadic aspect of dependency analysis.

\subsection{Relational Semantics of DCC, Revisited}

\citet{dcc} present a relational categorical model for DCC. They interpret each of the $\mathcal{T}_{\ell}$s as a separate monad on a base category, $\mathcal{DC}$. However, their interpretation can also be phrased in terms of a strong monoidal functor from $\Ca( \mathcal{L} )$ to $\mathbf{End}^{s}_{\mathcal{DC}}$. In other words, the model given by \citet{dcc} is an instance of the general class of models admitted by \ED{}, and hence DCC. 

In this regard, we want to point out a technical problem with the category $\mathcal{DC}$. Since $\mathcal{DC}$ models simply-typed $\lambda$-calculus, it should be cartesian closed. \citet{dcc} claim that it is so. However, contrary to their claim, the category $\mathcal{DC}$ is not cartesian closed. The exponential object in $\mathcal{DC}$ does not satisfy the universal property, unless the relations in the definition of $\text{Obj}(\mathcal{DC})$ are restricted to reflexive ones only. Note that such modified objects are nothing but classified sets that \citet{kavvos} uses to build a categorical model of DCC based on cohesion. How the cohesion-based models relate to our graded models of DCC is something we would like to explore in future.

\section{Conclusion}

Dependency analysis is as much an analysis of dependence as of independence. By controlling the flow of information, it aims to ensure that certain entities are independent of certain other entities while being dependent upon yet other entities. To control flow, it needs to make use of unidirectional devices, devices that allow flow along one direction but block along the other. The two simple yet robust unidirectional devices in programming languages are monads and comonads. Monads allow inflow but block outflow whereas comonads allow outflow but block inflow. When used together, they ensure that flow respects the constraints imposed upon it by a predetermined structure, like a preordered monoid. Such a controlled flow can be used for a variety of purposes like enforcing security constraints, analysing binding-time, etc. 

\begin{acks}
I was supported by the National Science Foundation under Grant Nos. 2006535 and 1703835.\\ I would like to thank Dominic Orchard, Benjamin Bergman, Stephanie Weirich and the anonymous referees for their feedback and suggestions. I dedicate this paper to Master Felice Macera of the Penn TaeKwonDo Club, who has been a mentor par excellence. 
\end{acks}

\newpage
\bibliography{Biblio.bib}


\begin{thebibliography}{40}


\ifx \showCODEN    \undefined \def \showCODEN     #1{\unskip}     \fi
\ifx \showDOI      \undefined \def \showDOI       #1{#1}\fi
\ifx \showISBNx    \undefined \def \showISBNx     #1{\unskip}     \fi
\ifx \showISBNxiii \undefined \def \showISBNxiii  #1{\unskip}     \fi
\ifx \showISSN     \undefined \def \showISSN      #1{\unskip}     \fi
\ifx \showLCCN     \undefined \def \showLCCN      #1{\unskip}     \fi
\ifx \shownote     \undefined \def \shownote      #1{#1}          \fi
\ifx \showarticletitle \undefined \def \showarticletitle #1{#1}   \fi
\ifx \showURL      \undefined \def \showURL       {\relax}        \fi
\providecommand\bibfield[2]{#2}
\providecommand\bibinfo[2]{#2}
\providecommand\natexlab[1]{#1}
\providecommand\showeprint[2][]{arXiv:#2}

\bibitem[\protect\citeauthoryear{Abadi, Banerjee, Heintze, and Riecke}{Abadi
  et~al\mbox{.}}{1999}]%
        {dcc}
\bibfield{author}{\bibinfo{person}{Mart\'{\i}n Abadi}, \bibinfo{person}{Anindya
  Banerjee}, \bibinfo{person}{Nevin Heintze}, {and} \bibinfo{person}{Jon~G.
  Riecke}.} \bibinfo{year}{1999}\natexlab{}.
\newblock \showarticletitle{A Core Calculus of Dependency}. In
  \bibinfo{booktitle}{\emph{Proceedings of the 26th ACM SIGPLAN-SIGACT
  Symposium on Principles of Programming Languages}} (San Antonio, Texas, USA)
  \emph{(\bibinfo{series}{POPL '99})}. \bibinfo{publisher}{Association for
  Computing Machinery}, \bibinfo{address}{New York, NY, USA},
  \bibinfo{pages}{147–160}.
\newblock
\showISBNx{1581130953}
\urldef\tempurl%
\url{https://doi.org/10.1145/292540.292555}
\showDOI{\tempurl}


\bibitem[\protect\citeauthoryear{Algehed}{Algehed}{2018}]%
        {persdcc}
\bibfield{author}{\bibinfo{person}{Maximilian Algehed}.}
  \bibinfo{year}{2018}\natexlab{}.
\newblock \showarticletitle{A Perspective on the Dependency Core Calculus}. In
  \bibinfo{booktitle}{\emph{Proceedings of the 13th Workshop on Programming
  Languages and Analysis for Security}} (Toronto, Canada)
  \emph{(\bibinfo{series}{PLAS '18})}. \bibinfo{publisher}{Association for
  Computing Machinery}, \bibinfo{address}{New York, NY, USA},
  \bibinfo{pages}{24–28}.
\newblock
\showISBNx{9781450359931}
\urldef\tempurl%
\url{https://doi.org/10.1145/3264820.3264823}
\showDOI{\tempurl}


\bibitem[\protect\citeauthoryear{Algehed and Bernardy}{Algehed and
  Bernardy}{2019}]%
        {algehed}
\bibfield{author}{\bibinfo{person}{Maximilian Algehed} {and}
  \bibinfo{person}{Jean-Philippe Bernardy}.} \bibinfo{year}{2019}\natexlab{}.
\newblock \showarticletitle{Simple Noninterference from Parametricity}.
\newblock \bibinfo{journal}{\emph{Proc. ACM Program. Lang.}}
  \bibinfo{volume}{3}, \bibinfo{number}{ICFP}, Article \bibinfo{articleno}{89}
  (\bibinfo{date}{July} \bibinfo{year}{2019}), \bibinfo{numpages}{22}~pages.
\newblock
\urldef\tempurl%
\url{https://doi.org/10.1145/3341693}
\showDOI{\tempurl}


\bibitem[\protect\citeauthoryear{Bowman and Ahmed}{Bowman and Ahmed}{2015}]%
        {ahmed}
\bibfield{author}{\bibinfo{person}{William~J. Bowman} {and}
  \bibinfo{person}{Amal Ahmed}.} \bibinfo{year}{2015}\natexlab{}.
\newblock \showarticletitle{Noninterference for Free}.
\newblock \bibinfo{journal}{\emph{SIGPLAN Not.}} \bibinfo{volume}{50},
  \bibinfo{number}{9} (\bibinfo{date}{Aug.} \bibinfo{year}{2015}),
  \bibinfo{pages}{101–113}.
\newblock
\showISSN{0362-1340}
\urldef\tempurl%
\url{https://doi.org/10.1145/2858949.2784733}
\showDOI{\tempurl}


\bibitem[\protect\citeauthoryear{Brookes and Geva}{Brookes and Geva}{1991}]%
        {brookes}
\bibfield{author}{\bibinfo{person}{Stephen Brookes} {and} \bibinfo{person}{Shai
  Geva}.} \bibinfo{year}{1991}\natexlab{}.
\newblock \showarticletitle{Computational Comonads and Intensional Semantics}.
  \bibinfo{publisher}{Cambridge Univ. Press}, \bibinfo{pages}{1--44}.
\newblock


\bibitem[\protect\citeauthoryear{Brunel, Gaboardi, Mazza, and Zdancewic}{Brunel
  et~al\mbox{.}}{2014}]%
        {brunel}
\bibfield{author}{\bibinfo{person}{Alo\"{\i}s Brunel}, \bibinfo{person}{Marco
  Gaboardi}, \bibinfo{person}{Damiano Mazza}, {and} \bibinfo{person}{Steve
  Zdancewic}.} \bibinfo{year}{2014}\natexlab{}.
\newblock \showarticletitle{A Core Quantitative Coeffect Calculus}. In
  \bibinfo{booktitle}{\emph{Proceedings of the 23rd European Symposium on
  Programming Languages and Systems - Volume 8410}}.
  \bibinfo{publisher}{Springer-Verlag}, \bibinfo{address}{Berlin, Heidelberg},
  \bibinfo{pages}{351–370}.
\newblock
\showISBNx{9783642548321}
\urldef\tempurl%
\url{https://doi.org/10.1007/978-3-642-54833-8_19}
\showDOI{\tempurl}


\bibitem[\protect\citeauthoryear{Calcagno, Taha, Huang, and Leroy}{Calcagno
  et~al\mbox{.}}{2003}]%
        {meta}
\bibfield{author}{\bibinfo{person}{Cristiano Calcagno}, \bibinfo{person}{Walid
  Taha}, \bibinfo{person}{Liwen Huang}, {and} \bibinfo{person}{Xavier Leroy}.}
  \bibinfo{year}{2003}\natexlab{}.
\newblock \showarticletitle{Implementing Multi-stage Languages Using ASTs,
  Gensym, and Reflection}. In \bibinfo{booktitle}{\emph{Generative Programming
  and Component Engineering}}, \bibfield{editor}{\bibinfo{person}{Frank
  Pfenning} {and} \bibinfo{person}{Yannis Smaragdakis}} (Eds.).
  \bibinfo{publisher}{Springer Berlin Heidelberg}, \bibinfo{address}{Berlin,
  Heidelberg}, \bibinfo{pages}{57--76}.
\newblock
\showISBNx{978-3-540-39815-8}


\bibitem[\protect\citeauthoryear{Choudhury}{Choudhury}{2022}]%
        {gmcc}
\bibfield{author}{\bibinfo{person}{Pritam Choudhury}.}
  \bibinfo{year}{2022}\natexlab{}.
\newblock \showarticletitle{Monadic and Comonadic Aspects of Dependency
  Analysis}.
\newblock \bibinfo{journal}{\emph{Proc. ACM Program. Lang.}}
  \bibinfo{volume}{6}, \bibinfo{number}{OOPSLA2}, Article
  \bibinfo{articleno}{172} (\bibinfo{date}{Oct.} \bibinfo{year}{2022}),
  \bibinfo{numpages}{29}~pages.
\newblock
\urldef\tempurl%
\url{https://doi.org/10.1145/3563335}
\showDOI{\tempurl}


\bibitem[\protect\citeauthoryear{Choudhury, Eades, and Weirich}{Choudhury
  et~al\mbox{.}}{2022}]%
        {ddc}
\bibfield{author}{\bibinfo{person}{Pritam Choudhury}, \bibinfo{person}{Harley
  Eades}, {and} \bibinfo{person}{Stephanie Weirich}.}
  \bibinfo{year}{2022}\natexlab{}.
\newblock \showarticletitle{A Dependent Dependency Calculus}. In
  \bibinfo{booktitle}{\emph{Programming Languages and Systems}},
  \bibfield{editor}{\bibinfo{person}{Ilya Sergey}} (Ed.).
  \bibinfo{publisher}{Springer International Publishing},
  \bibinfo{address}{Cham}, \bibinfo{pages}{403--430}.
\newblock
\showISBNx{978-3-030-99336-8}


\bibitem[\protect\citeauthoryear{Davies}{Davies}{2017}]%
        {lambdacirc}
\bibfield{author}{\bibinfo{person}{Rowan Davies}.}
  \bibinfo{year}{2017}\natexlab{}.
\newblock \showarticletitle{A Temporal Logic Approach to Binding-Time
  Analysis}.
\newblock \bibinfo{journal}{\emph{J. ACM}} \bibinfo{volume}{64},
  \bibinfo{number}{1}, Article \bibinfo{articleno}{1} (\bibinfo{date}{mar}
  \bibinfo{year}{2017}), \bibinfo{numpages}{45}~pages.
\newblock
\showISSN{0004-5411}
\urldef\tempurl%
\url{https://doi.org/10.1145/3011069}
\showDOI{\tempurl}


\bibitem[\protect\citeauthoryear{Denning}{Denning}{1976}]%
        {denning1}
\bibfield{author}{\bibinfo{person}{Dorothy~E. Denning}.}
  \bibinfo{year}{1976}\natexlab{}.
\newblock \showarticletitle{A Lattice Model of Secure Information Flow}.
\newblock \bibinfo{journal}{\emph{Commun. ACM}} \bibinfo{volume}{19},
  \bibinfo{number}{5} (\bibinfo{date}{May} \bibinfo{year}{1976}),
  \bibinfo{pages}{236–243}.
\newblock
\showISSN{0001-0782}
\urldef\tempurl%
\url{https://doi.org/10.1145/360051.360056}
\showDOI{\tempurl}


\bibitem[\protect\citeauthoryear{Denning and Denning}{Denning and
  Denning}{1977}]%
        {denning2}
\bibfield{author}{\bibinfo{person}{Dorothy~E. Denning} {and}
  \bibinfo{person}{Peter~J. Denning}.} \bibinfo{year}{1977}\natexlab{}.
\newblock \showarticletitle{Certification of Programs for Secure Information
  Flow}.
\newblock \bibinfo{journal}{\emph{Commun. ACM}} \bibinfo{volume}{20},
  \bibinfo{number}{7} (\bibinfo{date}{July} \bibinfo{year}{1977}),
  \bibinfo{pages}{504–513}.
\newblock
\showISSN{0001-0782}
\urldef\tempurl%
\url{https://doi.org/10.1145/359636.359712}
\showDOI{\tempurl}


\bibitem[\protect\citeauthoryear{Eilenberg and Kelly}{Eilenberg and
  Kelly}{1966}]%
        {kelly}
\bibfield{author}{\bibinfo{person}{Samuel Eilenberg} {and}
  \bibinfo{person}{G.~Max Kelly}.} \bibinfo{year}{1966}\natexlab{}.
\newblock \showarticletitle{Closed Categories}. In
  \bibinfo{booktitle}{\emph{Proceedings of the Conference on Categorical
  Algebra}}, \bibfield{editor}{\bibinfo{person}{S.~Eilenberg},
  \bibinfo{person}{D.~K. Harrison}, \bibinfo{person}{S.~MacLane}, {and}
  \bibinfo{person}{H.~R{\"o}hrl}} (Eds.). \bibinfo{publisher}{Springer Berlin
  Heidelberg}, \bibinfo{address}{Berlin, Heidelberg},
  \bibinfo{pages}{421--562}.
\newblock
\showISBNx{978-3-642-99902-4}


\bibitem[\protect\citeauthoryear{Fujii}{Fujii}{2019}]%
        {fujii}
\bibfield{author}{\bibinfo{person}{Soichiro Fujii}.}
  \bibinfo{year}{2019}\natexlab{}.
\newblock \bibinfo{title}{A 2-Categorical Study of Graded and Indexed Monads}.
\newblock
\newblock
\showeprint[arxiv]{1904.08083}~[math.CT]


\bibitem[\protect\citeauthoryear{Ghica and Smith}{Ghica and Smith}{2014}]%
        {ghica}
\bibfield{author}{\bibinfo{person}{Dan~R. Ghica} {and} \bibinfo{person}{Alex~I.
  Smith}.} \bibinfo{year}{2014}\natexlab{}.
\newblock \showarticletitle{Bounded Linear Types in a Resource Semiring}. In
  \bibinfo{booktitle}{\emph{Programming Languages and Systems}},
  \bibfield{editor}{\bibinfo{person}{Zhong Shao}} (Ed.).
  \bibinfo{publisher}{Springer Berlin Heidelberg}, \bibinfo{address}{Berlin,
  Heidelberg}, \bibinfo{pages}{331--350}.
\newblock
\showISBNx{978-3-642-54833-8}


\bibitem[\protect\citeauthoryear{Gifford and Lucassen}{Gifford and
  Lucassen}{1986}]%
        {gifford}
\bibfield{author}{\bibinfo{person}{David~K. Gifford} {and}
  \bibinfo{person}{John~M. Lucassen}.} \bibinfo{year}{1986}\natexlab{}.
\newblock \showarticletitle{Integrating Functional and Imperative Programming}.
  In \bibinfo{booktitle}{\emph{Proceedings of the 1986 ACM Conference on LISP
  and Functional Programming}} (Cambridge, Massachusetts, USA)
  \emph{(\bibinfo{series}{LFP '86})}. \bibinfo{publisher}{Association for
  Computing Machinery}, \bibinfo{address}{New York, NY, USA},
  \bibinfo{pages}{28–38}.
\newblock
\showISBNx{0897912004}
\urldef\tempurl%
\url{https://doi.org/10.1145/319838.319848}
\showDOI{\tempurl}


\bibitem[\protect\citeauthoryear{Gl\"{u}ck and J\o{}rgensen}{Gl\"{u}ck and
  J\o{}rgensen}{1995}]%
        {gluck}
\bibfield{author}{\bibinfo{person}{Robert Gl\"{u}ck} {and}
  \bibinfo{person}{Jesper J\o{}rgensen}.} \bibinfo{year}{1995}\natexlab{}.
\newblock \showarticletitle{Efficient Multi-Level Generating Extensions for
  Program Specialization}. In \bibinfo{booktitle}{\emph{Proceedings of the 7th
  International Symposium on Programming Languages: Implementations, Logics and
  Programs}} \emph{(\bibinfo{series}{PLILPS '95})}.
  \bibinfo{publisher}{Springer-Verlag}, \bibinfo{address}{Berlin, Heidelberg},
  \bibinfo{pages}{259–278}.
\newblock
\showISBNx{354060359X}


\bibitem[\protect\citeauthoryear{Goguen and Meseguer}{Goguen and
  Meseguer}{1982}]%
        {goguen}
\bibfield{author}{\bibinfo{person}{J.~A. Goguen} {and} \bibinfo{person}{J.
  Meseguer}.} \bibinfo{year}{1982}\natexlab{}.
\newblock \showarticletitle{Security Policies and Security Models}. In
  \bibinfo{booktitle}{\emph{1982 IEEE Symposium on Security and Privacy}}.
  \bibinfo{pages}{11--11}.
\newblock


\bibitem[\protect\citeauthoryear{Gomard and Jones}{Gomard and Jones}{1991}]%
        {gomard}
\bibfield{author}{\bibinfo{person}{Carsten~K. Gomard} {and}
  \bibinfo{person}{Neil~D. Jones}.} \bibinfo{year}{1991}\natexlab{}.
\newblock \showarticletitle{A partial evaluator for the untyped
  lambda-calculus}.
\newblock \bibinfo{journal}{\emph{Journal of Functional Programming}}
  \bibinfo{volume}{1}, \bibinfo{number}{1} (\bibinfo{year}{1991}),
  \bibinfo{pages}{21–69}.
\newblock
\urldef\tempurl%
\url{https://doi.org/10.1017/S0956796800000058}
\showDOI{\tempurl}


\bibitem[\protect\citeauthoryear{Hatcliff and Danvy}{Hatcliff and
  Danvy}{1997}]%
        {hatcliff}
\bibfield{author}{\bibinfo{person}{John Hatcliff} {and}
  \bibinfo{person}{Olivier Danvy}.} \bibinfo{year}{1997}\natexlab{}.
\newblock \showarticletitle{A computational formalization for partial
  evaluation}.
\newblock \bibinfo{journal}{\emph{Mathematical Structures in Computer Science}}
  \bibinfo{volume}{7}, \bibinfo{number}{5} (\bibinfo{year}{1997}),
  \bibinfo{pages}{507–541}.
\newblock
\urldef\tempurl%
\url{https://doi.org/10.1017/S0960129597002405}
\showDOI{\tempurl}


\bibitem[\protect\citeauthoryear{Heintze and Riecke}{Heintze and
  Riecke}{1998}]%
        {slam}
\bibfield{author}{\bibinfo{person}{Nevin Heintze} {and} \bibinfo{person}{Jon~G.
  Riecke}.} \bibinfo{year}{1998}\natexlab{}.
\newblock \showarticletitle{The SLam Calculus: Programming with Secrecy and
  Integrity}. In \bibinfo{booktitle}{\emph{Proceedings of the 25th ACM
  SIGPLAN-SIGACT Symposium on Principles of Programming Languages}} (San Diego,
  California, USA) \emph{(\bibinfo{series}{POPL '98})}.
  \bibinfo{publisher}{Association for Computing Machinery},
  \bibinfo{address}{New York, NY, USA}, \bibinfo{pages}{365–377}.
\newblock
\showISBNx{0897919793}
\urldef\tempurl%
\url{https://doi.org/10.1145/268946.268976}
\showDOI{\tempurl}


\bibitem[\protect\citeauthoryear{Jacobs}{Jacobs}{1999}]%
        {jacob}
\bibfield{author}{\bibinfo{person}{Bart Jacobs}.}
  \bibinfo{year}{1999}\natexlab{}.
\newblock \bibinfo{booktitle}{\emph{Categorical Logic and Type Theory}}.
\newblock \bibinfo{publisher}{Elsevier}, \bibinfo{address}{Amsterdam, The
  Netherlands}.
\newblock


\bibitem[\protect\citeauthoryear{Katsumata}{Katsumata}{2014}]%
        {katsumata}
\bibfield{author}{\bibinfo{person}{Shin-ya Katsumata}.}
  \bibinfo{year}{2014}\natexlab{}.
\newblock \showarticletitle{Parametric Effect Monads and Semantics of Effect
  Systems}.
\newblock \bibinfo{journal}{\emph{SIGPLAN Not.}} \bibinfo{volume}{49},
  \bibinfo{number}{1} (\bibinfo{date}{jan} \bibinfo{year}{2014}),
  \bibinfo{pages}{633–645}.
\newblock
\showISSN{0362-1340}
\urldef\tempurl%
\url{https://doi.org/10.1145/2578855.2535846}
\showDOI{\tempurl}


\bibitem[\protect\citeauthoryear{Kavvos}{Kavvos}{2019}]%
        {kavvos}
\bibfield{author}{\bibinfo{person}{G.~A. Kavvos}.}
  \bibinfo{year}{2019}\natexlab{}.
\newblock \showarticletitle{Modalities, Cohesion, and Information Flow}.
\newblock \bibinfo{journal}{\emph{Proc. ACM Program. Lang.}}
  \bibinfo{volume}{3}, \bibinfo{number}{POPL}, Article \bibinfo{articleno}{20}
  (\bibinfo{date}{jan} \bibinfo{year}{2019}), \bibinfo{numpages}{29}~pages.
\newblock
\urldef\tempurl%
\url{https://doi.org/10.1145/3290333}
\showDOI{\tempurl}


\bibitem[\protect\citeauthoryear{Kock}{Kock}{1970}]%
        {kock1}
\bibfield{author}{\bibinfo{person}{Anders Kock}.}
  \bibinfo{year}{1970}\natexlab{}.
\newblock \showarticletitle{Monads on symmetric monoidal closed categories}.
\newblock   \bibinfo{volume}{21} (\bibinfo{year}{1970}),
  \bibinfo{pages}{1--10}.
\newblock
\urldef\tempurl%
\url{https://doi.org/10.1007/BF01220868}
\showURL{%
\tempurl}


\bibitem[\protect\citeauthoryear{Kock}{Kock}{1972}]%
        {kock2}
\bibfield{author}{\bibinfo{person}{Anders Kock}.}
  \bibinfo{year}{1972}\natexlab{}.
\newblock \showarticletitle{Strong Functors and Monoidal Monads}.
\newblock   \bibinfo{volume}{23} (\bibinfo{year}{1972}),
  \bibinfo{pages}{113--120}.
\newblock
\urldef\tempurl%
\url{https://doi.org/10.1007/BF01304852}
\showURL{%
\tempurl}


\bibitem[\protect\citeauthoryear{MacLane}{MacLane}{1971}]%
        {maclane}
\bibfield{author}{\bibinfo{person}{Saunders MacLane}.}
  \bibinfo{year}{1971}\natexlab{}.
\newblock \bibinfo{booktitle}{\emph{Categories for the Working Mathematician}}.
\newblock \bibinfo{publisher}{Springer-Verlag}, \bibinfo{address}{New York}.
  ix+262 pages.
\newblock
\newblock
\shownote{Graduate Texts in Mathematics, Vol. 5.}


\bibitem[\protect\citeauthoryear{Moggi}{Moggi}{1991}]%
        {moggi}
\bibfield{author}{\bibinfo{person}{Eugenio Moggi}.}
  \bibinfo{year}{1991}\natexlab{}.
\newblock \showarticletitle{Notions of computation and monads}.
\newblock \bibinfo{journal}{\emph{Information and Computation}}
  \bibinfo{volume}{93}, \bibinfo{number}{1} (\bibinfo{year}{1991}),
  \bibinfo{pages}{55--92}.
\newblock
\showISSN{0890-5401}
\urldef\tempurl%
\url{https://www.sciencedirect.com/science/article/pii/0890540191900524}
\showURL{%
\tempurl}
\newblock
\shownote{Selections from 1989 IEEE Symposium on Logic in Computer Science.}


\bibitem[\protect\citeauthoryear{Palsberg and \O{}rb\ae{}k}{Palsberg and
  \O{}rb\ae{}k}{1995}]%
        {trust}
\bibfield{author}{\bibinfo{person}{Jens Palsberg} {and} \bibinfo{person}{Peter
  \O{}rb\ae{}k}.} \bibinfo{year}{1995}\natexlab{}.
\newblock \showarticletitle{Trust in the Lambda-Calculus}. In
  \bibinfo{booktitle}{\emph{Proceedings of the Second International Symposium
  on Static Analysis}} \emph{(\bibinfo{series}{SAS '95})}.
  \bibinfo{publisher}{Springer-Verlag}, \bibinfo{address}{Berlin, Heidelberg},
  \bibinfo{pages}{314–329}.
\newblock
\showISBNx{3540603603}


\bibitem[\protect\citeauthoryear{Petricek, Orchard, and Mycroft}{Petricek
  et~al\mbox{.}}{2013}]%
        {petricek1}
\bibfield{author}{\bibinfo{person}{Tomas Petricek}, \bibinfo{person}{Dominic
  Orchard}, {and} \bibinfo{person}{Alan Mycroft}.}
  \bibinfo{year}{2013}\natexlab{}.
\newblock \showarticletitle{Coeffects: Unified Static Analysis of
  Context-Dependence}. In \bibinfo{booktitle}{\emph{Automata, Languages, and
  Programming}}. \bibinfo{publisher}{Springer Berlin Heidelberg},
  \bibinfo{address}{Berlin, Heidelberg}, \bibinfo{pages}{385--397}.
\newblock
\showISBNx{978-3-642-39212-2}


\bibitem[\protect\citeauthoryear{Petricek, Orchard, and Mycroft}{Petricek
  et~al\mbox{.}}{2014}]%
        {petricek}
\bibfield{author}{\bibinfo{person}{Tomas Petricek}, \bibinfo{person}{Dominic
  Orchard}, {and} \bibinfo{person}{Alan Mycroft}.}
  \bibinfo{year}{2014}\natexlab{}.
\newblock \showarticletitle{Coeffects: A calculus of context-dependent
  computation}. In \bibinfo{booktitle}{\emph{Proceedings of International
  Conference on Functional Programming}} (Gothenburg, Sweden)
  \emph{(\bibinfo{series}{ICFP 2014})}.
\newblock


\bibitem[\protect\citeauthoryear{Shikuma and Igarashi}{Shikuma and
  Igarashi}{2006}]%
        {igarashi}
\bibfield{author}{\bibinfo{person}{Naokata Shikuma} {and}
  \bibinfo{person}{Atsushi Igarashi}.} \bibinfo{year}{2006}\natexlab{}.
\newblock \showarticletitle{Proving Noninterference by a Fully Complete
  Translation to the Simply Typed $\lambda$-Calculus}. In
  \bibinfo{booktitle}{\emph{Proceedings of the 11th Asian Computing Science
  Conference on Advances in Computer Science: Secure Software and Related
  Issues}} (Tokyo, Japan) \emph{(\bibinfo{series}{ASIAN'06})}.
  \bibinfo{publisher}{Springer-Verlag}, \bibinfo{address}{Berlin, Heidelberg},
  \bibinfo{pages}{301–315}.
\newblock
\showISBNx{3540775048}


\bibitem[\protect\citeauthoryear{Smith}{Smith}{2007}]%
        {smith}
\bibfield{author}{\bibinfo{person}{Geoffrey Smith}.}
  \bibinfo{year}{2007}\natexlab{}.
\newblock \showarticletitle{Principles of Secure Information Flow Analysis}. In
  \bibinfo{booktitle}{\emph{Malware Detection}},
  \bibfield{editor}{\bibinfo{person}{Mihai Christodorescu},
  \bibinfo{person}{Somesh Jha}, \bibinfo{person}{Douglas Maughan},
  \bibinfo{person}{Dawn Song}, {and} \bibinfo{person}{Cliff Wang}} (Eds.).
  \bibinfo{publisher}{Springer US}, \bibinfo{address}{Boston, MA},
  \bibinfo{pages}{291--307}.
\newblock
\showISBNx{978-0-387-44599-1}


\bibitem[\protect\citeauthoryear{Tang and Jouvelot}{Tang and Jouvelot}{1995}]%
        {tang}
\bibfield{author}{\bibinfo{person}{Yan~Mei Tang} {and} \bibinfo{person}{Pierre
  Jouvelot}.} \bibinfo{year}{1995}\natexlab{}.
\newblock \showarticletitle{Effect Systems with Subtyping}. In
  \bibinfo{booktitle}{\emph{In ACM Conference on Partial Evaluation and Program
  Manipulation}}. \bibinfo{publisher}{ACM Press}, \bibinfo{pages}{45--53}.
\newblock


\bibitem[\protect\citeauthoryear{Tip}{Tip}{1995}]%
        {tip}
\bibfield{author}{\bibinfo{person}{Frank Tip}.}
  \bibinfo{year}{1995}\natexlab{}.
\newblock \showarticletitle{A Survey of Program Slicing Techniques}.
\newblock \bibinfo{journal}{\emph{Journal of Programming Languages}}
  \bibinfo{volume}{3} (\bibinfo{year}{1995}), \bibinfo{pages}{121--189}.
\newblock


\bibitem[\protect\citeauthoryear{Tofte and Talpin}{Tofte and Talpin}{1997}]%
        {memory}
\bibfield{author}{\bibinfo{person}{Mads Tofte} {and}
  \bibinfo{person}{Jean-Pierre Talpin}.} \bibinfo{year}{1997}\natexlab{}.
\newblock \showarticletitle{Region-Based Memory Management}.
\newblock \bibinfo{journal}{\emph{Inf. Comput.}} \bibinfo{volume}{132},
  \bibinfo{number}{2} (\bibinfo{date}{feb} \bibinfo{year}{1997}),
  \bibinfo{pages}{109–176}.
\newblock
\showISSN{0890-5401}
\urldef\tempurl%
\url{https://doi.org/10.1006/inco.1996.2613}
\showDOI{\tempurl}


\bibitem[\protect\citeauthoryear{Tse and Zdancewic}{Tse and Zdancewic}{2004}]%
        {tse-zdancewic}
\bibfield{author}{\bibinfo{person}{Stephen Tse} {and} \bibinfo{person}{Steve
  Zdancewic}.} \bibinfo{year}{2004}\natexlab{}.
\newblock \showarticletitle{Translating Dependency into Parametricity}. In
  \bibinfo{booktitle}{\emph{Proceedings of the Ninth ACM SIGPLAN International
  Conference on Functional Programming}} (Snow Bird, UT, USA)
  \emph{(\bibinfo{series}{ICFP '04})}. \bibinfo{publisher}{Association for
  Computing Machinery}, \bibinfo{address}{New York, NY, USA},
  \bibinfo{pages}{115–125}.
\newblock
\showISBNx{1581139055}
\urldef\tempurl%
\url{https://doi.org/10.1145/1016850.1016868}
\showDOI{\tempurl}


\bibitem[\protect\citeauthoryear{Uustalu and Vene}{Uustalu and Vene}{2008}]%
        {uustalu}
\bibfield{author}{\bibinfo{person}{Tarmo Uustalu} {and} \bibinfo{person}{Varmo
  Vene}.} \bibinfo{year}{2008}\natexlab{}.
\newblock \showarticletitle{Comonadic Notions of Computation}.
\newblock \bibinfo{journal}{\emph{Electronic Notes in Theoretical Computer
  Science}} \bibinfo{volume}{203}, \bibinfo{number}{5} (\bibinfo{year}{2008}),
  \bibinfo{pages}{263--284}.
\newblock
\showISSN{1571-0661}
\urldef\tempurl%
\url{https://doi.org/10.1016/j.entcs.2008.05.029}
\showDOI{\tempurl}
\newblock
\shownote{Proceedings of the Ninth Workshop on Coalgebraic Methods in Computer
  Science (CMCS 2008).}


\bibitem[\protect\citeauthoryear{Volpano, Irvine, and Smith}{Volpano
  et~al\mbox{.}}{1996}]%
        {volpano}
\bibfield{author}{\bibinfo{person}{Dennis Volpano}, \bibinfo{person}{Cynthia
  Irvine}, {and} \bibinfo{person}{Geoffrey Smith}.}
  \bibinfo{year}{1996}\natexlab{}.
\newblock \showarticletitle{A Sound Type System for Secure Flow Analysis}.
\newblock \bibinfo{journal}{\emph{J. Comput. Secur.}} \bibinfo{volume}{4},
  \bibinfo{number}{2–3} (\bibinfo{date}{jan} \bibinfo{year}{1996}),
  \bibinfo{pages}{167–187}.
\newblock
\showISSN{0926-227X}


\bibitem[\protect\citeauthoryear{Wadler and Thiemann}{Wadler and
  Thiemann}{2003}]%
        {wadler}
\bibfield{author}{\bibinfo{person}{Philip Wadler} {and} \bibinfo{person}{Peter
  Thiemann}.} \bibinfo{year}{2003}\natexlab{}.
\newblock \showarticletitle{The Marriage of Effects and Monads}.
\newblock \bibinfo{journal}{\emph{ACM Trans. Comput. Logic}}
  \bibinfo{volume}{4}, \bibinfo{number}{1} (\bibinfo{date}{jan}
  \bibinfo{year}{2003}), \bibinfo{pages}{1–32}.
\newblock
\showISSN{1529-3785}
\urldef\tempurl%
\url{https://doi.org/10.1145/601775.601776}
\showDOI{\tempurl}


\end{thebibliography}

\ifextended
\newpage

\appendix

\section{Proofs of lemmas/theorems stated in Section \ref{secgmc}}

\begin{theorem}[Theorem \ref{gmcsound}] \label{GMCPrf}
If $ \Gamma  \vdash  \ottnt{a}  :  \ottnt{A} $ in GMC, then $ \llbracket  \ottnt{a}  \rrbracket  \in \text{Hom}_{\Ct} ( \llbracket  \Gamma  \rrbracket ,  \llbracket  \ottnt{A}  \rrbracket )$. Further, if $ \Gamma  \vdash  \ottnt{a_{{\mathrm{1}}}}  :  \ottnt{A} $ and $ \Gamma  \vdash  \ottnt{a_{{\mathrm{2}}}}  :  \ottnt{A} $ such that $\ottnt{a_{{\mathrm{1}}}} \equiv \ottnt{a_{{\mathrm{2}}}}$ in GMC, then $ \llbracket  \ottnt{a_{{\mathrm{1}}}}  \rrbracket  =  \llbracket  \ottnt{a_{{\mathrm{2}}}}  \rrbracket  \in \text{Hom}_{\Ct} ( \llbracket  \Gamma  \rrbracket ,  \llbracket  \ottnt{A}  \rrbracket )$.
\end{theorem}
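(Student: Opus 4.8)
The plan is to prove both halves of the statement simultaneously by induction on the typing derivation $ \Gamma  \vdash  \ottnt{a}  :  \ottnt{A} $ and on the derivation of $\ottnt{a_{{\mathrm{1}}}} \equiv \ottnt{a_{{\mathrm{2}}}}$ respectively. For the first half, the $\lambda$-calculus fragment is entirely standard: in any bicartesian closed category, variables, abstractions, applications, pairs, projections, injections, case and abort are interpreted in the usual way, and well-typedness of the interpretation is immediate from the induction hypothesis and the universal properties of products, coproducts and exponentials. For the four modal rules, I would check each using the data of the strong graded monad $(\T,\mu,\eta)$: for \rref{M-Return}, $\eta$ is a natural transformation $\Id \to \T_1$ so $\eta \circ \llbracket a \rrbracket$ has the right domain and codomain; for \rref{M-Join}, $\mu^{m_1,m_2} : \T_{m_1}\circ\T_{m_2} \to \T_{m_1\cdot m_2}$ composes correctly; for \rref{M-Up}, $\T^{m_1 \le m_2}$ is a natural transformation $\T_{m_1} \to \T_{m_2}$; and for \rref{M-Fmap}, the strength $t^{\T_m}$ together with currying/uncurrying gives $\Lambda(\T_m(\Lambda^{-1}\llbracket f \rrbracket) \circ t^{\T_m})$ the exponential type $\llbracket T_m A \to T_m B \rrbracket$. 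This is the routine bookkeeping that the paper has already sketched in the paragraph preceding the theorem.

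For the second half, I would induct on the derivation of $\ottnt{a_{{\mathrm{1}}}} \equiv \ottnt{a_{{\mathrm{2}}}}$. Since equality is defined as the congruent equivalence relation generated by the $\beta\eta$-rules and the rules of Figure \ref{eqGMC}, there are three kinds of cases: the congruence/reflexivity/symmetry/transitivity closure rules, the standard $\beta\eta$-rules, and the eleven modal equations (\ref{eq:idn})–(\ref{eq:assoc}). The closure rules are handled by the induction hypothesis together with the fact that $\llbracket \_ \rrbracket$ is defined compositionally (so congruence is preserved) and that $=$ in $\Ct$ is an equivalence relation. The $\beta\eta$-rules hold because $\Ct$ is bicartesian closed. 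For the modal equations, I would invoke exactly the justifications the paper lists: (\ref{eq:idn}) and (\ref{eq:comp}) from functoriality of each $\T_m$; (\ref{eq:refl}) and (\ref{eq:trans}) from functoriality of $\T$ as a functor on $\Ca(\mathcal{M})$; (\ref{eq:natl}) and (\ref{eq:natr}) from naturality of $\mu$ in its two components; (\ref{eq:idl}) and (\ref{eq:idr}) from the left and right unit laws (the commuting triangles of Figure \ref{laxDiag}); and (\ref{eq:assoc}) from the associativity square of Figure \ref{laxDiag}. Crucially, since the $\mathbf{bind}$ shorthand unfolds to $\mathbf{join}^{m_1,m_2}((\mathbf{lift}^{m_1} f)\: a)$, each such equation, once the abbreviation is expanded, becomes a composite of the structure maps $\mu$, $\eta$, $\T^{(-)}$, $t^{\T_m}$, and currying isomorphisms, and one verifies the required commuting diagram by pasting the coherence diagrams of Figures \ref{laxDiag}, \ref{strong}, and \ref{strongN}.

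The main obstacle will be the equations (\ref{eq:natl})–(\ref{eq:assoc}) that mix $\mathbf{bind}$ with $\mathbf{up}$ and with $\mathbf{lift}$, because there the naturality squares for $\mu$ must be combined with the strength axioms (Figure \ref{strong}) and the strong-naturality axiom for $\mu$ (Figure \ref{strongN}), after uncurrying through the $\mathbf{lift}$ components. Concretely, to handle (\ref{eq:assoc}) one must show that two different ways of iterating $\mu$ and $\mathbf{lift}$ agree; expanding the $\mathbf{bind}$ notation turns this into a diagram involving $\mu^{m_1\cdot m_2,m_3} \circ \T_{m_1\cdot m_2}(\ldots)$ versus $\mu^{m_1,m_2\cdot m_3}\circ\T_{m_1}(\ldots)$, which closes by the associativity square together with naturality of $\mu$ and the interchange law for the strengths $t^{\T_m}$. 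I would organize this by first proving a small lemma expressing $\llbracket a \mathbin{{}^{m_1}{\gg}\!\!={}^{m_2}} f \rrbracket$ as an explicit composite in $\Ct$, and then reducing each of (\ref{eq:natl})–(\ref{eq:assoc}) to a diagram chase using that normal form; the rest is mechanical. An auxiliary point worth stating up front is a substitution lemma, $\llbracket b\{a/x\} \rrbracket = \llbracket b \rrbracket \circ \langle \I, \llbracket a \rrbracket \rangle$ (suitably phrased for the ambient context), which is needed to validate the $\beta$-rule and is proved by a straightforward induction on $b$.
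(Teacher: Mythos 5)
Your proposal is correct and follows essentially the same route as the paper's proof: induction on the typing derivation for well-definedness, then case analysis on the generating equations, expanding the $\mathbf{bind}$ shorthand into an explicit composite of $\mu$, $\eta$, $\mathbf{T}^{(-)}$, the strengths and the currying isomorphisms, and closing each case by pasting the lax-monoidal, strength, and strong-naturality diagrams. The one point worth making explicit when you write it out is that equations (\ref{eq:idl}) and (\ref{eq:natl}) also need $\eta$ and $\mathbf{T}^{m_1 \leq m_1'}$ to be \emph{strong} natural transformations (not just natural), exactly as your final paragraph anticipates.
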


\begin{proof}
Let $ \Gamma  \vdash  \ottnt{a}  :  \ottnt{A} $. We show $ \llbracket  \ottnt{a}  \rrbracket  \in \text{Hom}_{\Ct} ( \llbracket  \Gamma  \rrbracket ,  \llbracket  \ottnt{A}  \rrbracket )$ by induction on the typing derivation.
\begin{itemize}
\item $\lambda$-calculus. Standard.

\item \Rref{M-Return}. Have: $ \Gamma  \vdash   \ottkw{ret}  \:  \ottnt{a}   :   T_{ \ottsym{1} } \:  \ottnt{A}  $ where $ \Gamma  \vdash  \ottnt{a}  :  \ottnt{A} $.\\
By IH, $ \llbracket  \ottnt{a}  \rrbracket  \in \text{Hom}_{\Ct} ( \llbracket  \Gamma  \rrbracket  ,  \llbracket  \ottnt{A}  \rrbracket )$.\\
Now, $ \llbracket   \ottkw{ret}  \:  \ottnt{a}   \rrbracket  =  \llbracket  \Gamma  \rrbracket  \xrightarrow{ \llbracket  \ottnt{a}  \rrbracket }  \llbracket  \ottnt{A}  \rrbracket  \xrightarrow{\eta_{ \llbracket  \ottnt{A}  \rrbracket }} \T_{\ottsym{1}}  \llbracket  \ottnt{A}  \rrbracket $.\\
Therefore, $ \llbracket   \ottkw{ret}  \:  \ottnt{a}   \rrbracket  \in \text{Hom}_{\Ct} ( \llbracket  \Gamma  \rrbracket  ,  \llbracket   T_{ \ottsym{1} } \:  \ottnt{A}   \rrbracket )$.

\item \Rref{M-Fmap}. Have: $ \Gamma  \vdash   \mathbf{lift}^{ m }  \ottnt{f}   :    T_{ m } \:  \ottnt{A}   \to   T_{ m } \:  \ottnt{B}   $ where $ \Gamma  \vdash  \ottnt{f}  :   \ottnt{A}  \to  \ottnt{B}  $.\\
By IH, $ \llbracket  \Gamma  \rrbracket  \xrightarrow{ \llbracket  \ottnt{f}  \rrbracket }  \llbracket  \ottnt{B}  \rrbracket ^{ \llbracket  \ottnt{A}  \rrbracket }$.\\
Now, $ \llbracket   \mathbf{lift}^{ m }  \ottnt{f}   \rrbracket  = \Lambda\Big(  \llbracket  \Gamma  \rrbracket  \times \T_m  \llbracket  \ottnt{A}  \rrbracket  \xrightarrow{t^{\T_{m}}} \T_m ( \llbracket  \Gamma  \rrbracket  \times  \llbracket  \ottnt{A}  \rrbracket ) \xrightarrow{\T_m (\Lambda^{-1}  \llbracket  \ottnt{f}  \rrbracket )} \T_m  \llbracket  \ottnt{B}  \rrbracket  \Big)$.\\
Therefore, $ \llbracket   \mathbf{lift}^{ m }  \ottnt{f}   \rrbracket  \in \text{Hom}_{\Ct} ( \llbracket  \Gamma  \rrbracket ,  \llbracket     T_{ m } \:  \ottnt{A}   \to   T_{ m } \:  \ottnt{B}     \rrbracket )$.

\item \Rref{M-Join}. Have: $ \Gamma  \vdash   \mathbf{join}^{ m_{{\mathrm{1}}} , m_{{\mathrm{2}}} }  \ottnt{a}   :   T_{  m_{{\mathrm{1}}}  \cdot  m_{{\mathrm{2}}}  } \:  \ottnt{A}  $ where $ \Gamma  \vdash  \ottnt{a}  :   T_{ m_{{\mathrm{1}}} } \:   T_{ m_{{\mathrm{2}}} } \:  \ottnt{A}   $.\\
By IH, $ \llbracket  \Gamma  \rrbracket  \xrightarrow{ \llbracket  \ottnt{a}  \rrbracket } \T_{m_{{\mathrm{1}}}} \T_{m_{{\mathrm{2}}}}  \llbracket  \ottnt{A}  \rrbracket $.\\
Now, $ \llbracket   \mathbf{join}^{ m_{{\mathrm{1}}} , m_{{\mathrm{2}}} }  \ottnt{a}   \rrbracket  =  \llbracket  \Gamma  \rrbracket  \xrightarrow{ \llbracket  \ottnt{a}  \rrbracket } \T_{m_{{\mathrm{1}}}} \T_{m_{{\mathrm{2}}}}  \llbracket  \ottnt{A}  \rrbracket  \xrightarrow{\mu^{m_{{\mathrm{1}}},m_{{\mathrm{2}}}}_{ \llbracket  \ottnt{A}  \rrbracket }} \T_{ m_{{\mathrm{1}}}  \cdot  m_{{\mathrm{2}}} }  \llbracket  \ottnt{A}  \rrbracket $.\\
Therefore, $ \llbracket   \mathbf{join}^{ m_{{\mathrm{1}}} , m_{{\mathrm{2}}} }  \ottnt{a}   \rrbracket  \in \text{Hom}_{\Ct} ( \llbracket  \Gamma  \rrbracket ,  \llbracket   T_{  m_{{\mathrm{1}}}  \cdot  m_{{\mathrm{2}}}  } \:  \ottnt{A}   \rrbracket )$.

\item \Rref{M-Up}. Have: $ \Gamma  \vdash   \mathbf{up}^{ m_{{\mathrm{1}}} , m_{{\mathrm{2}}} }  \ottnt{a}   :   T_{ m_{{\mathrm{2}}} } \:  \ottnt{A}  $ where $ \Gamma  \vdash  \ottnt{a}  :   T_{ m_{{\mathrm{1}}} } \:  \ottnt{A}  $ and $ m_{{\mathrm{1}}}   \leq   m_{{\mathrm{2}}} $.\\
By IH, $ \llbracket  \Gamma  \rrbracket  \xrightarrow{ \llbracket  \ottnt{a}  \rrbracket } \T_{m_{{\mathrm{1}}}}  \llbracket  \ottnt{A}  \rrbracket $.\\
Now, $ \llbracket   \mathbf{up}^{ m_{{\mathrm{1}}} , m_{{\mathrm{2}}} }  \ottnt{a}   \rrbracket  =  \llbracket  \Gamma  \rrbracket  \xrightarrow{ \llbracket  \ottnt{a}  \rrbracket } \T_{m_{{\mathrm{1}}}}  \llbracket  \ottnt{A}  \rrbracket  \xrightarrow{\T^{ m_{{\mathrm{1}}}   \leq   m_{{\mathrm{2}}} }_{ \llbracket  \ottnt{A}  \rrbracket }} \T_{m_{{\mathrm{2}}}}  \llbracket  \ottnt{A}  \rrbracket $.\\
Therefore, $ \llbracket   \mathbf{up}^{ m_{{\mathrm{1}}} , m_{{\mathrm{2}}} }  \ottnt{a}   \rrbracket  \in \text{Hom}_{\Ct} ( \llbracket  \Gamma  \rrbracket ,  \llbracket   T_{ m_{{\mathrm{2}}} } \:  \ottnt{A}   \rrbracket )$.
\end{itemize}

\vspace*{10pt}

Next, we show that if $ \Gamma  \vdash  \ottnt{a_{{\mathrm{1}}}}  :  \ottnt{A} $ and $ \Gamma  \vdash  \ottnt{a_{{\mathrm{2}}}}  :  \ottnt{A} $ such that $\ottnt{a_{{\mathrm{1}}}} \equiv \ottnt{a_{{\mathrm{2}}}}$ in GMC, then $ \llbracket  \ottnt{a_{{\mathrm{1}}}}  \rrbracket  =  \llbracket  \ottnt{a_{{\mathrm{2}}}}  \rrbracket $.\\
By inversion on $\ottnt{a_{{\mathrm{1}}}} \equiv \ottnt{a_{{\mathrm{2}}}}$:

\begin{itemize}

\item $\lambda$-calculus. Standard.

\item $ \mathbf{lift}^{ m }   (   \lambda  \ottmv{x}  .  \ottmv{x}   )   \equiv  \lambda  \ottmv{x}  .  \ottmv{x} $.\\
Given: $ \Gamma  \vdash   \mathbf{lift}^{ m }   (   \lambda  \ottmv{x}  :  \ottnt{A}  .  \ottmv{x}   )    :    T_{ m } \:  \ottnt{A}   \to   T_{ m } \:  \ottnt{A}   $ and $ \Gamma  \vdash   \lambda  \ottmv{x}  :   T_{ m } \:  \ottnt{A}   .  \ottmv{x}   :    T_{ m } \:  \ottnt{A}   \to   T_{ m } \:  \ottnt{A}   $.\\
Now, \begin{align*}
& \hspace*{10pt}  \llbracket   \mathbf{lift}^{ m }   (   \lambda  \ottmv{x}  :  \ottnt{A}  .  \ottmv{x}   )    \rrbracket  \\
& = \Lambda \Big( \T_m (\Lambda^{-1}  \llbracket   \lambda  \ottmv{x}  :  \ottnt{A}  .  \ottmv{x}   \rrbracket ) \circ t^{\T_m}_{ \llbracket  \Gamma  \rrbracket ,  \llbracket  \ottnt{A}  \rrbracket } \Big) \\
& = \Lambda (\T_m \pi_2 \circ t^{\T_m}_{ \llbracket  \Gamma  \rrbracket , \llbracket  \ottnt{A}  \rrbracket }) \\
& = \Lambda \pi_2 \hspace*{2pt} [\text{By naturality and strength}]\\
& =  \llbracket   \lambda  \ottmv{x}  :   T_{ m } \:  \ottnt{A}   .  \ottmv{x}   \rrbracket .
\end{align*}

\item $ \mathbf{lift}^{ m }   (    \lambda  \ottmv{x}  .  \ottnt{g}   \:   (   \ottnt{f}  \:  \ottmv{x}   )    )   \equiv   \lambda  \ottmv{x}  .   (   \mathbf{lift}^{ m }  \ottnt{g}   )    \:   (    (   \mathbf{lift}^{ m }  \ottnt{f}   )   \:  \ottmv{x}   )  $.\\
Given: $ \Gamma  \vdash   \mathbf{lift}^{ m }   (    \lambda  \ottmv{x}  :  \ottnt{A}  .  \ottnt{g}   \:   (   \ottnt{f}  \:  \ottmv{x}   )    )    :    T_{ m } \:  \ottnt{A}   \to   T_{ m } \:  \ottnt{C}   $ and $ \Gamma  \vdash    \lambda  \ottmv{x}  :   T_{ m } \:  \ottnt{A}   .   (   \mathbf{lift}^{ m }  \ottnt{g}   )    \:   (    (   \mathbf{lift}^{ m }  \ottnt{f}   )   \:  \ottmv{x}   )    :    T_{ m } \:  \ottnt{A}   \to   T_{ m } \:  \ottnt{C}   $ where $ \Gamma  \vdash  \ottnt{f}  :   \ottnt{A}  \to  \ottnt{B}  $ and $ \Gamma  \vdash  \ottnt{g}  :   \ottnt{B}  \to  \ottnt{C}  $.\\
Now, 
\begin{align*}
&\hspace*{10pt}  \llbracket   \mathbf{lift}^{ m }   (    \lambda  \ottmv{x}  :  \ottnt{A}  .  \ottnt{g}   \:   (   \ottnt{f}  \:  \ottmv{x}   )    )    \rrbracket  \\
& =  \Lambda \Big(     \mathbf{T}_{ m }   (   \Lambda^{-1}   \llbracket     \lambda  \ottmv{x}  :  \ottnt{A}  .  \ottnt{g}   \:   (   \ottnt{f}  \:  \ottmv{x}   )     \rrbracket    )    \circ   t^{\mathbf{T}_{ m } }_{  \llbracket  \Gamma  \rrbracket  ,   \llbracket  \ottnt{A}  \rrbracket  }     \Big)  \\ 
& =  \Lambda \Big(     \mathbf{T}_{ m }   (    \Lambda^{-1}    \llbracket  \ottnt{g}  \rrbracket     \circ   \langle   \pi_1   ,   \Lambda^{-1}    \llbracket  \ottnt{f}  \rrbracket     \rangle    )    \circ   t^{\mathbf{T}_{ m } }_{  \llbracket  \Gamma  \rrbracket  ,   \llbracket  \ottnt{A}  \rrbracket  }     \Big) .
\end{align*} 
Next,
\begin{align*}
&\hspace*{10pt}  \llbracket     \lambda  \ottmv{x}  :   T_{ m } \:  \ottnt{A}   .   (   \mathbf{lift}^{ m }  \ottnt{g}   )    \:   (    (   \mathbf{lift}^{ m }  \ottnt{f}   )   \:  \ottmv{x}   )     \rrbracket  \\
& =  \Lambda \Big(     \Lambda^{-1}    \llbracket   \mathbf{lift}^{ m }  \ottnt{g}   \rrbracket     \circ   \langle   \pi_1   ,   \Lambda^{-1}    \mathbf{lift}^{ m }  \ottnt{f}     \rangle     \Big)  \\
& =  \Lambda \Big(      \mathbf{T}_{ m }   (   \Lambda^{-1}   \llbracket  \ottnt{g}  \rrbracket    )    \circ    t^{\mathbf{T}_{ m } }_{  \llbracket  \Gamma  \rrbracket  ,   \llbracket  \ottnt{B}  \rrbracket  }     \circ   \langle   \pi_1   ,    \mathbf{T}_{ m }   (   \Lambda^{-1}   \llbracket  \ottnt{f}  \rrbracket    )    \circ   t^{\mathbf{T}_{ m } }_{  \llbracket  \Gamma  \rrbracket  ,   \llbracket  \ottnt{A}  \rrbracket  }    \rangle     \Big) .   
\end{align*}
The morphisms above are equal because the squares below commute:
\begin{figure}[h]
\begin{tikzcd}[row sep = 3.5 em, column sep = 4.5 em]
  \llbracket  \Gamma  \rrbracket   \times   \mathbf{T}_{ m }   \llbracket  \ottnt{A}  \rrbracket    \arrow{d}{ t^{\mathbf{T}_{ m } }_{  \llbracket  \Gamma  \rrbracket  ,   \llbracket  \ottnt{A}  \rrbracket  } } \arrow{r}{ \langle   \pi_1   ,   t^{\mathbf{T}_{ m } }_{  \llbracket  \Gamma  \rrbracket  ,   \llbracket  \ottnt{A}  \rrbracket  }   \rangle } &   \llbracket  \Gamma  \rrbracket   \times   \mathbf{T}_{ m }   (    \llbracket  \Gamma  \rrbracket   \times   \llbracket  \ottnt{A}  \rrbracket    )    \arrow{d}{ t^{\mathbf{T}_{ m } }_{   \llbracket  \Gamma  \rrbracket   ,     \llbracket  \Gamma  \rrbracket   \times   \llbracket  \ottnt{A}  \rrbracket    } } \arrow{r}{   \text{id}_{  \llbracket  \Gamma  \rrbracket  }    \times   \mathbf{T}_{ m }   (   \Lambda^{-1}   \llbracket  \ottnt{f}  \rrbracket    )   } &   \llbracket  \Gamma  \rrbracket   \times   \mathbf{T}_{ m }   \llbracket  \ottnt{B}  \rrbracket    \arrow{d}{ t^{\mathbf{T}_{ m } }_{  \llbracket  \Gamma  \rrbracket  ,   \llbracket  \ottnt{B}  \rrbracket  } }\\
 \mathbf{T}_{ m }   (    \llbracket  \Gamma  \rrbracket   \times   \llbracket  \ottnt{A}  \rrbracket    )   \arrow{r}{ \mathbf{T}_{ m }   \langle   \pi_1   ,   \text{id}_{    \llbracket  \Gamma  \rrbracket   \times   \llbracket  \ottnt{A}  \rrbracket    }   \rangle  } &  \mathbf{T}_{ m }   (    \llbracket  \Gamma  \rrbracket   \times   (    \llbracket  \Gamma  \rrbracket   \times   \llbracket  \ottnt{A}  \rrbracket    )    )   \arrow{r}{ \mathbf{T}_{ m }   (     \text{id}_{  \llbracket  \Gamma  \rrbracket  }    \times    \Lambda^{-1}   \llbracket  \ottnt{f}  \rrbracket      )  } &  \mathbf{T}_{ m }   (    \llbracket  \Gamma  \rrbracket   \times   \llbracket  \ottnt{B}  \rrbracket    )  
\end{tikzcd}
\end{figure}

\vspace*{1pt}

The right square commutes by naturality whereas the left one commutes because the diagram below commutes:
\begin{figure}[h]
\begin{tikzcd}[row sep = 3 em, column sep = 3.8 em]
  \llbracket  \Gamma  \rrbracket   \times   \mathbf{T}_{ m }   \llbracket  \ottnt{A}  \rrbracket    \arrow{rr}{ t^{\mathbf{T}_{ m } }_{  \llbracket  \Gamma  \rrbracket  ,   \llbracket  \ottnt{A}  \rrbracket  } } \arrow{d}{  \langle   \text{id}_{  \llbracket  \Gamma  \rrbracket  }   ,   \text{id}_{  \llbracket  \Gamma  \rrbracket  }   \rangle   \times   \mathbf{T}_{ m }   \text{id}_{  \llbracket  \ottnt{A}  \rrbracket  }   } & &  \mathbf{T}_{ m }   (    \llbracket  \Gamma  \rrbracket   \times   \llbracket  \ottnt{A}  \rrbracket    )   \arrow{d}{ \mathbf{T}_{ m }   (    \langle   \text{id}_{  \llbracket  \Gamma  \rrbracket  }   ,   \text{id}_{  \llbracket  \Gamma  \rrbracket  }   \rangle   \times   \text{id}_{  \llbracket  \ottnt{A}  \rrbracket  }    )  }\\
  (    \llbracket  \Gamma  \rrbracket   \times   \llbracket  \Gamma  \rrbracket    )   \times   \mathbf{T}_{ m }   \llbracket  \ottnt{A}  \rrbracket    \arrow{rr}{ t^{\mathbf{T}_{ m } }_{    \llbracket  \Gamma  \rrbracket   \times   \llbracket  \Gamma  \rrbracket    ,   \llbracket  \ottnt{A}  \rrbracket  } } \arrow{d}{ \alpha^{-1}_{  \llbracket  \Gamma  \rrbracket  ,  \llbracket  \Gamma  \rrbracket  ,  \mathbf{T}_{ m }   \llbracket  \ottnt{A}  \rrbracket   } } & &  \mathbf{T}_{ m }   (    (    \llbracket  \Gamma  \rrbracket   \times   \llbracket  \Gamma  \rrbracket    )   \times   \llbracket  \ottnt{A}  \rrbracket    )   \arrow{d}{ \mathbf{T}_{ m }   \alpha^{-1}_{  \llbracket  \Gamma  \rrbracket  ,  \llbracket  \Gamma  \rrbracket  ,  \llbracket  \ottnt{A}  \rrbracket  }  }\\
  \llbracket  \Gamma  \rrbracket   \times   (    \llbracket  \Gamma  \rrbracket   \times   \mathbf{T}_{ m }   \llbracket  \ottnt{A}  \rrbracket     )   \arrow{r}{  \text{id}_{  \llbracket  \Gamma  \rrbracket  }   \times   t^{\mathbf{T}_{ m } }_{  \llbracket  \Gamma  \rrbracket  ,   \llbracket  \ottnt{A}  \rrbracket  }  } &   \llbracket  \Gamma  \rrbracket   \times   \mathbf{T}_{ m }   (    \llbracket  \Gamma  \rrbracket   \times   \llbracket  \ottnt{A}  \rrbracket    )    \arrow{r}{ t^{\mathbf{T}_{ m } }_{  \llbracket  \Gamma  \rrbracket  ,     \llbracket  \Gamma  \rrbracket   \times   \llbracket  \ottnt{A}  \rrbracket    } } &  \mathbf{T}_{ m }   (    \llbracket  \Gamma  \rrbracket   \times   (    \llbracket  \Gamma  \rrbracket   \times   \llbracket  \ottnt{A}  \rrbracket    )    )    
\end{tikzcd}
\caption{Commutative diagram}
\label{prfcd0}
\end{figure}

\vspace*{1pt}

The square above commutes by naturality whereas the rectangle below commutes by strength.

\item $ \mathbf{up}^{ m_{{\mathrm{1}}} , m_{{\mathrm{1}}} }  \ottnt{a}  \equiv \ottnt{a}$.\\
Given: $ \Gamma  \vdash   \mathbf{up}^{ m_{{\mathrm{1}}} , m_{{\mathrm{1}}} }  \ottnt{a}   :   T_{ m_{{\mathrm{1}}} } \:  \ottnt{A}  $ where $ \Gamma  \vdash  \ottnt{a}  :   T_{ m_{{\mathrm{1}}} } \:  \ottnt{A}  $.\\
Since $\mathbf{T}$ is a functor, $ \mathbf{T}^{ m_{{\mathrm{1}}}  \leq  m_{{\mathrm{1}}} }_{  \llbracket  \ottnt{A}  \rrbracket  }  =  \text{id}_{  \mathbf{T}_{ m_{{\mathrm{1}}} }   \llbracket  \ottnt{A}  \rrbracket   } $.\\
Therefore, $ \llbracket    \mathbf{up}^{ m_{{\mathrm{1}}} , m_{{\mathrm{1}}} }  \ottnt{a}    \rrbracket  =    \mathbf{T}^{ m_{{\mathrm{1}}}  \leq  m_{{\mathrm{1}}} }_{  \llbracket  \ottnt{A}  \rrbracket  }    \circ   \llbracket  \ottnt{a}  \rrbracket   =  \llbracket  \ottnt{a}  \rrbracket $.

\item $ \mathbf{up}^{ m_{{\mathrm{2}}} , m_{{\mathrm{3}}} }   (   \mathbf{up}^{ m_{{\mathrm{1}}} , m_{{\mathrm{2}}} }  \ottnt{a}   )   \equiv  \mathbf{up}^{ m_{{\mathrm{1}}} , m_{{\mathrm{3}}} }  \ottnt{a} $.\\
Given: $ \Gamma  \vdash   \mathbf{up}^{ m_{{\mathrm{2}}} , m_{{\mathrm{3}}} }   (   \mathbf{up}^{ m_{{\mathrm{1}}} , m_{{\mathrm{2}}} }  \ottnt{a}   )    :   T_{ m_{{\mathrm{3}}} } \:  \ottnt{A}  $ and $ \Gamma  \vdash   \mathbf{up}^{ m_{{\mathrm{1}}} , m_{{\mathrm{3}}} }  \ottnt{a}   :   T_{ m_{{\mathrm{3}}} } \:  \ottnt{A}  $ where $ \Gamma  \vdash  \ottnt{a}  :   T_{ m_{{\mathrm{1}}} } \:  \ottnt{A}  $ and $ m_{{\mathrm{1}}}   \leq   m_{{\mathrm{2}}} $ and $ m_{{\mathrm{2}}}   \leq   m_{{\mathrm{3}}} $.\\
Since $\mathbf{T}$ is a functor, $   \mathbf{T}^{ m_{{\mathrm{2}}}  \leq  m_{{\mathrm{3}}} }_{  \llbracket  \ottnt{A}  \rrbracket  }    \circ    \mathbf{T}^{ m_{{\mathrm{1}}}  \leq  m_{{\mathrm{2}}} }_{  \llbracket  \ottnt{A}  \rrbracket  }    =  \mathbf{T}^{ m_{{\mathrm{1}}}  \leq  m_{{\mathrm{3}}} }_{  \llbracket  \ottnt{A}  \rrbracket  } $.\\
Therefore, $ \llbracket    \mathbf{up}^{ m_{{\mathrm{2}}} , m_{{\mathrm{3}}} }   (   \mathbf{up}^{ m_{{\mathrm{1}}} , m_{{\mathrm{2}}} }  \ottnt{a}   )     \rrbracket  =     \mathbf{T}^{ m_{{\mathrm{2}}}  \leq  m_{{\mathrm{3}}} }_{  \llbracket  \ottnt{A}  \rrbracket  }    \circ    \mathbf{T}^{ m_{{\mathrm{1}}}  \leq  m_{{\mathrm{2}}} }_{  \llbracket  \ottnt{A}  \rrbracket  }     \circ   \llbracket  \ottnt{a}  \rrbracket   =    \mathbf{T}^{ m_{{\mathrm{1}}}  \leq  m_{{\mathrm{3}}} }_{  \llbracket  \ottnt{A}  \rrbracket  }    \circ   \llbracket  \ottnt{a}  \rrbracket   =  \llbracket   \mathbf{up}^{ m_{{\mathrm{1}}} , m_{{\mathrm{3}}} }  \ottnt{a}   \rrbracket $. 

\item $  (   \mathbf{up}^{ m_{{\mathrm{1}}} , m'_{{\mathrm{1}}} }  \ottnt{a}   )   \:  \leftindex^{ m'_{{\mathrm{1}}} }{\gg}\!\! =^{ m_{{\mathrm{2}}} }  \ottnt{f}  \equiv  \mathbf{up}^{  m_{{\mathrm{1}}}  \cdot  m_{{\mathrm{2}}}  ,  m'_{{\mathrm{1}}}  \cdot  m_{{\mathrm{2}}}  }   (   \ottnt{a}  \:  \leftindex^{ m_{{\mathrm{1}}} }{\gg}\!\! =^{ m_{{\mathrm{2}}} }  \ottnt{f}   )  $.\\
Given: $ \Gamma  \vdash    (   \mathbf{up}^{ m_{{\mathrm{1}}} , m'_{{\mathrm{1}}} }  \ottnt{a}   )   \:  \leftindex^{ m'_{{\mathrm{1}}} }{\gg}\!\! =^{ m_{{\mathrm{2}}} }  \ottnt{f}   :   T_{  m'_{{\mathrm{1}}}  \cdot  m_{{\mathrm{2}}}  } \:  \ottnt{B}  $ and $ \Gamma  \vdash   \mathbf{up}^{  m_{{\mathrm{1}}}  \cdot  m_{{\mathrm{2}}}  ,  m'_{{\mathrm{1}}}  \cdot  m_{{\mathrm{2}}}  }   (   \ottnt{a}  \:  \leftindex^{ m_{{\mathrm{1}}} }{\gg}\!\! =^{ m_{{\mathrm{2}}} }  \ottnt{f}   )    :   T_{  m'_{{\mathrm{1}}}  \cdot  m_{{\mathrm{2}}}  } \:  \ottnt{B}  $ where $ \Gamma  \vdash  \ottnt{a}  :   T_{ m_{{\mathrm{1}}} } \:  \ottnt{A}  $ and $ \Gamma  \vdash  \ottnt{f}  :   \ottnt{A}  \to   T_{ m_{{\mathrm{2}}} } \:  \ottnt{B}   $ and $ m_{{\mathrm{1}}}   \leq   m'_{{\mathrm{1}}} $.\\
Now,
\begin{align*}
 &\hspace*{10pt}  \llbracket     (   \mathbf{up}^{ m_{{\mathrm{1}}} , m'_{{\mathrm{1}}} }  \ottnt{a}   )   \:  \leftindex^{ m'_{{\mathrm{1}}} }{\gg}\!\! =^{ m_{{\mathrm{2}}} }  \ottnt{f}    \rrbracket  \\
 & =  \llbracket    \mathbf{join}^{ m'_{{\mathrm{1}}} , m_{{\mathrm{2}}} }   (    (   \mathbf{lift}^{ m'_{{\mathrm{1}}} }  \ottnt{f}   )   \:   (   \mathbf{up}^{ m_{{\mathrm{1}}} , m'_{{\mathrm{1}}} }  \ottnt{a}   )    )     \rrbracket  \\
 & =   \mu^{ m'_{{\mathrm{1}}} , m_{{\mathrm{2}}} }_{  \llbracket  \ottnt{B}  \rrbracket  }   \circ   \llbracket     (   \mathbf{lift}^{ m'_{{\mathrm{1}}} }  \ottnt{f}   )   \:   (   \mathbf{up}^{ m_{{\mathrm{1}}} , m'_{{\mathrm{1}}} }  \ottnt{a}   )     \rrbracket  \\
 & =    \mu^{ m'_{{\mathrm{1}}} , m_{{\mathrm{2}}} }_{  \llbracket  \ottnt{B}  \rrbracket  }   \circ   \text{app}    \circ   \langle   \llbracket   \mathbf{lift}^{ m'_{{\mathrm{1}}} }  \ottnt{f}   \rrbracket   ,   \llbracket   \mathbf{up}^{ m_{{\mathrm{1}}} , m'_{{\mathrm{1}}} }  \ottnt{a}   \rrbracket   \rangle   \\
 & =    \mu^{ m'_{{\mathrm{1}}} , m_{{\mathrm{2}}} }_{  \llbracket  \ottnt{B}  \rrbracket  }   \circ   \text{app}    \circ   \langle   \Lambda \Big(     \mathbf{T}_{ m'_{{\mathrm{1}}} }   (   \Lambda^{-1}   \llbracket  \ottnt{f}  \rrbracket    )    \circ   t^{\mathbf{T}_{ m'_{{\mathrm{1}}} } }_{  \llbracket  \Gamma  \rrbracket  ,   \llbracket  \ottnt{A}  \rrbracket  }     \Big)   ,     \mathbf{T}^{ m_{{\mathrm{1}}}  \leq  m'_{{\mathrm{1}}} }_{  \llbracket  \ottnt{A}  \rrbracket  }    \circ   \llbracket  \ottnt{a}  \rrbracket    \rangle   \\
 & =     \mu^{ m'_{{\mathrm{1}}} , m_{{\mathrm{2}}} }_{  \llbracket  \ottnt{B}  \rrbracket  }   \circ   \mathbf{T}_{ m'_{{\mathrm{1}}} }   (   \Lambda^{-1}   \llbracket  \ottnt{f}  \rrbracket    )     \circ   t^{\mathbf{T}_{ m'_{{\mathrm{1}}} } }_{  \llbracket  \Gamma  \rrbracket  ,   \llbracket  \ottnt{A}  \rrbracket  }    \circ   \langle   \text{id}_{  \llbracket  \Gamma  \rrbracket  }   ,     \mathbf{T}^{ m_{{\mathrm{1}}}  \leq  m'_{{\mathrm{1}}} }_{  \llbracket  \ottnt{A}  \rrbracket  }    \circ   \llbracket  \ottnt{a}  \rrbracket    \rangle  \\
 & =      \mu^{ m'_{{\mathrm{1}}} , m_{{\mathrm{2}}} }_{  \llbracket  \ottnt{B}  \rrbracket  }   \circ   \mathbf{T}_{ m'_{{\mathrm{1}}} }   (   \Lambda^{-1}   \llbracket  \ottnt{f}  \rrbracket    )     \circ   t^{\mathbf{T}_{ m'_{{\mathrm{1}}} } }_{  \llbracket  \Gamma  \rrbracket  ,   \llbracket  \ottnt{A}  \rrbracket  }    \circ   (    \text{id}_{  \llbracket  \Gamma  \rrbracket  }   \times   \mathbf{T}^{ m_{{\mathrm{1}}}  \leq  m'_{{\mathrm{1}}} }_{  \llbracket  \ottnt{A}  \rrbracket  }    )    \circ   \langle   \text{id}_{  \llbracket  \Gamma  \rrbracket  }   ,   \llbracket  \ottnt{a}  \rrbracket   \rangle  .
\end{align*}
Next,
\begin{align*}
&\hspace*{10pt}  \llbracket   \mathbf{up}^{  m_{{\mathrm{1}}}  \cdot  m_{{\mathrm{2}}}  ,  m'_{{\mathrm{1}}}  \cdot  m_{{\mathrm{2}}}  }   (   \ottnt{a}  \:  \leftindex^{ m_{{\mathrm{1}}} }{\gg}\!\! =^{ m_{{\mathrm{2}}} }  \ottnt{f}   )    \rrbracket  \\
& =   \mathbf{T}^{  m_{{\mathrm{1}}}  \cdot  m_{{\mathrm{2}}}   \leq   m'_{{\mathrm{1}}}  \cdot  m_{{\mathrm{2}}}  }_{  \llbracket  \ottnt{B}  \rrbracket  }   \circ   \llbracket    \ottnt{a}  \:  \leftindex^{ m_{{\mathrm{1}}} }{\gg}\!\! =^{ m_{{\mathrm{2}}} }  \ottnt{f}    \rrbracket   \\
& =   \mathbf{T}^{  m_{{\mathrm{1}}}  \cdot  m_{{\mathrm{2}}}   \leq   m'_{{\mathrm{1}}}  \cdot  m_{{\mathrm{2}}}  }_{  \llbracket  \ottnt{B}  \rrbracket  }   \circ   \llbracket    \mathbf{join}^{ m_{{\mathrm{1}}} , m_{{\mathrm{2}}} }   (    (   \mathbf{lift}^{ m_{{\mathrm{1}}} }  \ottnt{f}   )   \:  \ottnt{a}   )     \rrbracket  \\
& =    \mathbf{T}^{  m_{{\mathrm{1}}}  \cdot  m_{{\mathrm{2}}}   \leq   m'_{{\mathrm{1}}}  \cdot  m_{{\mathrm{2}}}  }_{  \llbracket  \ottnt{B}  \rrbracket  }   \circ   \mu^{ m_{{\mathrm{1}}} , m_{{\mathrm{2}}} }_{  \llbracket  \ottnt{B}  \rrbracket  }    \circ   \llbracket     (   \mathbf{lift}^{ m_{{\mathrm{1}}} }  \ottnt{f}   )   \:  \ottnt{a}    \rrbracket   \\
& =     \mathbf{T}^{  m_{{\mathrm{1}}}  \cdot  m_{{\mathrm{2}}}   \leq   m'_{{\mathrm{1}}}  \cdot  m_{{\mathrm{2}}}  }_{  \llbracket  \ottnt{B}  \rrbracket  }   \circ   \mu^{ m_{{\mathrm{1}}} , m_{{\mathrm{2}}} }_{  \llbracket  \ottnt{B}  \rrbracket  }    \circ   \text{app}    \circ   \langle   \Lambda \Big(     \mathbf{T}_{ m_{{\mathrm{1}}} }   (   \Lambda^{-1}   \llbracket  \ottnt{f}  \rrbracket    )    \circ   t^{\mathbf{T}_{ m_{{\mathrm{1}}} } }_{  \llbracket  \Gamma  \rrbracket  ,   \llbracket  \ottnt{A}  \rrbracket  }     \Big)   ,   \llbracket  \ottnt{a}  \rrbracket   \rangle  \\
& =      \mathbf{T}^{  m_{{\mathrm{1}}}  \cdot  m_{{\mathrm{2}}}   \leq   m'_{{\mathrm{1}}}  \cdot  m_{{\mathrm{2}}}  }_{  \llbracket  \ottnt{B}  \rrbracket  }   \circ   \mu^{ m_{{\mathrm{1}}} , m_{{\mathrm{2}}} }_{  \llbracket  \ottnt{B}  \rrbracket  }    \circ   \mathbf{T}_{ m_{{\mathrm{1}}} }   (   \Lambda^{-1}   \llbracket  \ottnt{f}  \rrbracket    )     \circ   t^{\mathbf{T}_{ m_{{\mathrm{1}}} } }_{  \llbracket  \Gamma  \rrbracket  ,   \llbracket  \ottnt{A}  \rrbracket  }    \circ   \langle   \text{id}_{  \llbracket  \Gamma  \rrbracket  }   ,   \llbracket  \ottnt{a}  \rrbracket   \rangle  .  
\end{align*}

The morphisms above are equal because the diagram below commutes:

\begin{figure}[h]
\begin{tikzcd}[row sep = 4 em, column sep = 4 em]
  \llbracket  \Gamma  \rrbracket   \times   \mathbf{T}_{ m_{{\mathrm{1}}} }   \llbracket  \ottnt{A}  \rrbracket    \arrow{d}{  \text{id}_{  \llbracket  \Gamma  \rrbracket  }   \times   \mathbf{T}^{ m_{{\mathrm{1}}}  \leq  m'_{{\mathrm{1}}} }_{  \llbracket  \ottnt{A}  \rrbracket  }  } \arrow{r}{ t^{\mathbf{T}_{ m_{{\mathrm{1}}} } }_{  \llbracket  \Gamma  \rrbracket  ,   \llbracket  \ottnt{A}  \rrbracket  } } &  \mathbf{T}_{ m_{{\mathrm{1}}} }   (    \llbracket  \Gamma  \rrbracket   \times   \llbracket  \ottnt{A}  \rrbracket    )   \arrow{d}{ \mathbf{T}^{ m_{{\mathrm{1}}}  \leq  m'_{{\mathrm{1}}} }_{    \llbracket  \Gamma  \rrbracket   \times   \llbracket  \ottnt{A}  \rrbracket    } } \arrow{r}{ \mathbf{T}_{ m_{{\mathrm{1}}} }   (   \Lambda^{-1}   \llbracket  \ottnt{f}  \rrbracket    )  } &  \mathbf{T}_{ m_{{\mathrm{1}}} }    \mathbf{T}_{ m_{{\mathrm{2}}} }   \llbracket  \ottnt{B}  \rrbracket     \arrow{d}{ \mathbf{T}^{ m_{{\mathrm{1}}}  \leq  m'_{{\mathrm{1}}} }_{   \mathbf{T}_{ m_{{\mathrm{2}}} }   \llbracket  \ottnt{B}  \rrbracket    } } \arrow{r}{ \mu^{ m_{{\mathrm{1}}} , m_{{\mathrm{2}}} }_{  \llbracket  \ottnt{B}  \rrbracket  } } &  \mathbf{T}_{  m_{{\mathrm{1}}}  \cdot  m_{{\mathrm{2}}}  }   \llbracket  \ottnt{B}  \rrbracket   \arrow{d}{ \mathbf{T}^{  m_{{\mathrm{1}}}  \cdot  m_{{\mathrm{2}}}   \leq   m'_{{\mathrm{1}}}  \cdot  m_{{\mathrm{2}}}  }_{  \llbracket  \ottnt{B}  \rrbracket  } } \\
  \llbracket  \Gamma  \rrbracket   \times   \mathbf{T}_{ m'_{{\mathrm{1}}} }   \llbracket  \ottnt{A}  \rrbracket    \arrow{r}{ t^{\mathbf{T}_{ m'_{{\mathrm{1}}} } }_{  \llbracket  \Gamma  \rrbracket  ,   \llbracket  \ottnt{A}  \rrbracket  } } &  \mathbf{T}_{ m'_{{\mathrm{1}}} }   (    \llbracket  \Gamma  \rrbracket   \times   \llbracket  \ottnt{A}  \rrbracket    )   \arrow{r}{ \mathbf{T}_{ m'_{{\mathrm{1}}} }   (   \Lambda^{-1}   \llbracket  \ottnt{f}  \rrbracket    )  } &  \mathbf{T}_{ m'_{{\mathrm{1}}} }    \mathbf{T}_{ m_{{\mathrm{2}}} }   \llbracket  \ottnt{B}  \rrbracket     \arrow{r}{ \mu^{ m'_{{\mathrm{1}}} , m_{{\mathrm{2}}} }_{  \llbracket  \ottnt{B}  \rrbracket  } } &  \mathbf{T}_{  m'_{{\mathrm{1}}}  \cdot  m_{{\mathrm{2}}}  }   \llbracket  \ottnt{B}  \rrbracket  
\end{tikzcd}
\end{figure}

The leftmost square commutes because $\mathbf{T}^{ m_{{\mathrm{1}}}   \leq   m'_{{\mathrm{1}}} }$ is a \textit{strong} natural transformation; the middle one commutes because $\mathbf{T}^{ m_{{\mathrm{1}}}   \leq   m'_{{\mathrm{1}}} }$ is a natural transformation; the rightmost one commutes because $\mu$ is natural in its first component.

\item $ \ottnt{a}  \:  \leftindex^{ m_{{\mathrm{1}}} }{\gg}\!\! =^{ m'_{{\mathrm{2}}} }   (   \lambda  \ottmv{x}  .   \mathbf{up}^{ m_{{\mathrm{2}}} , m'_{{\mathrm{2}}} }  \ottnt{b}    )   \equiv  \mathbf{up}^{   m_{{\mathrm{1}}}  \cdot  m_{{\mathrm{2}}}   ,   m_{{\mathrm{1}}}  \cdot  m'_{{\mathrm{2}}}   }   (   \ottnt{a}  \:  \leftindex^{ m_{{\mathrm{1}}} }{\gg}\!\! =^{ m_{{\mathrm{2}}} }   \lambda  \ottmv{x}  .  \ottnt{b}    )  $.\\
Given: $ \Gamma  \vdash   \ottnt{a}  \:  \leftindex^{ m_{{\mathrm{1}}} }{\gg}\!\! =^{ m'_{{\mathrm{2}}} }   (   \lambda  \ottmv{x}  :  \ottnt{A}  .   \mathbf{up}^{ m_{{\mathrm{2}}} , m'_{{\mathrm{2}}} }  \ottnt{b}    )    :   T_{  m_{{\mathrm{1}}}  \cdot  m'_{{\mathrm{2}}}  } \:  \ottnt{B}  $ and $ \Gamma  \vdash   \mathbf{up}^{  m_{{\mathrm{1}}}  \cdot  m_{{\mathrm{2}}}  ,  m_{{\mathrm{1}}}  \cdot  m'_{{\mathrm{2}}}  }   (   \ottnt{a}  \:  \leftindex^{ m_{{\mathrm{1}}} }{\gg}\!\! =^{ m_{{\mathrm{2}}} }   \lambda  \ottmv{x}  :  \ottnt{A}  .  \ottnt{b}    )    :   T_{  m_{{\mathrm{1}}}  \cdot  m'_{{\mathrm{2}}}  } \:  \ottnt{B}  $ where $ \Gamma  \vdash  \ottnt{a}  :   T_{ m_{{\mathrm{1}}} } \:  \ottnt{A}  $ and $  \Gamma  ,   \ottmv{x}  :  \ottnt{A}    \vdash  \ottnt{b}  :   T_{ m_{{\mathrm{2}}} } \:  \ottnt{B}  $.\\
Now,
\begin{align*}
&\hspace*{10pt}  \llbracket    \ottnt{a}  \:  \leftindex^{ m_{{\mathrm{1}}} }{\gg}\!\! =^{ m'_{{\mathrm{2}}} }   (   \lambda  \ottmv{x}  :  \ottnt{A}  .   \mathbf{up}^{ m_{{\mathrm{2}}} , m'_{{\mathrm{2}}} }  \ottnt{b}    )     \rrbracket  \\
& =  \llbracket    \mathbf{join}^{ m_{{\mathrm{1}}} , m'_{{\mathrm{2}}} }   (    (   \mathbf{lift}^{ m_{{\mathrm{1}}} }   (   \lambda  \ottmv{x}  :  \ottnt{A}  .   \mathbf{up}^{ m_{{\mathrm{2}}} , m'_{{\mathrm{2}}} }  \ottnt{b}    )    )   \:  \ottnt{a}   )     \rrbracket  \\
& =   \mu^{ m_{{\mathrm{1}}} , m'_{{\mathrm{2}}} }_{  \llbracket  \ottnt{B}  \rrbracket  }   \circ   \llbracket     (   \mathbf{lift}^{ m_{{\mathrm{1}}} }   (   \lambda  \ottmv{x}  :  \ottnt{A}  .   \mathbf{up}^{ m_{{\mathrm{2}}} , m'_{{\mathrm{2}}} }  \ottnt{b}    )    )   \:  \ottnt{a}    \rrbracket   \\
& =    \mu^{ m_{{\mathrm{1}}} , m'_{{\mathrm{2}}} }_{  \llbracket  \ottnt{B}  \rrbracket  }   \circ   \text{app}    \circ   \langle   \Lambda \Big(     \mathbf{T}_{ m_{{\mathrm{1}}} }   (   \Lambda^{-1}   \llbracket    \lambda  \ottmv{x}  :  \ottnt{A}  .   \mathbf{up}^{ m_{{\mathrm{2}}} , m'_{{\mathrm{2}}} }  \ottnt{b}     \rrbracket    )    \circ   t^{\mathbf{T}_{ m_{{\mathrm{1}}} } }_{  \llbracket  \Gamma  \rrbracket  ,   \llbracket  \ottnt{A}  \rrbracket  }     \Big)   ,   \llbracket  \ottnt{a}  \rrbracket   \rangle   \\
& =     \mu^{ m_{{\mathrm{1}}} , m'_{{\mathrm{2}}} }_{  \llbracket  \ottnt{B}  \rrbracket  }   \circ   \mathbf{T}_{ m_{{\mathrm{1}}} }   (    \mathbf{T}^{ m_{{\mathrm{2}}}  \leq  m'_{{\mathrm{2}}} }_{  \llbracket  \ottnt{B}  \rrbracket  }   \circ   \llbracket  \ottnt{b}  \rrbracket    )     \circ   t^{\mathbf{T}_{ m_{{\mathrm{1}}} } }_{  \llbracket  \Gamma  \rrbracket  ,   \llbracket  \ottnt{A}  \rrbracket  }    \circ   \langle   \text{id}_{  \llbracket  \Gamma  \rrbracket  }   ,   \llbracket  \ottnt{a}  \rrbracket   \rangle   \\
& =      \mu^{ m_{{\mathrm{1}}} , m'_{{\mathrm{2}}} }_{  \llbracket  \ottnt{B}  \rrbracket  }   \circ   \mathbf{T}_{ m_{{\mathrm{1}}} }    \mathbf{T}^{ m_{{\mathrm{2}}}  \leq  m'_{{\mathrm{2}}} }_{  \llbracket  \ottnt{B}  \rrbracket  }      \circ   \mathbf{T}_{ m_{{\mathrm{1}}} }   \llbracket  \ottnt{b}  \rrbracket     \circ   t^{\mathbf{T}_{ m_{{\mathrm{1}}} } }_{  \llbracket  \Gamma  \rrbracket  ,   \llbracket  \ottnt{A}  \rrbracket  }    \circ   \langle   \text{id}_{  \llbracket  \Gamma  \rrbracket  }   ,   \llbracket  \ottnt{a}  \rrbracket   \rangle  .
\end{align*}
Next,
\begin{align*}
&\hspace*{10pt}  \llbracket    \mathbf{up}^{  m_{{\mathrm{1}}}  \cdot  m_{{\mathrm{2}}}  ,  m_{{\mathrm{1}}}  \cdot  m'_{{\mathrm{2}}}  }   (   \ottnt{a}  \:  \leftindex^{ m_{{\mathrm{1}}} }{\gg}\!\! =^{ m_{{\mathrm{2}}} }   \lambda  \ottmv{x}  :  \ottnt{A}  .  \ottnt{b}    )     \rrbracket  \\
& =   \mathbf{T}^{  m_{{\mathrm{1}}}  \cdot  m_{{\mathrm{2}}}   \leq   m_{{\mathrm{1}}}  \cdot  m'_{{\mathrm{2}}}  }_{  \llbracket  \ottnt{B}  \rrbracket  }   \circ   \llbracket    \mathbf{join}^{ m_{{\mathrm{1}}} , m_{{\mathrm{2}}} }   (    (   \mathbf{lift}^{ m_{{\mathrm{1}}} }   (   \lambda  \ottmv{x}  :  \ottnt{A}  .  \ottnt{b}   )    )   \:  \ottnt{a}   )     \rrbracket   \\
& =     \mathbf{T}^{  m_{{\mathrm{1}}}  \cdot  m_{{\mathrm{2}}}   \leq   m_{{\mathrm{1}}}  \cdot  m'_{{\mathrm{2}}}  }_{  \llbracket  \ottnt{B}  \rrbracket  }   \circ   \mu^{ m_{{\mathrm{1}}} , m_{{\mathrm{2}}} }_{  \llbracket  \ottnt{B}  \rrbracket  }    \circ   \text{app}    \circ   \langle   \Lambda \Big(     \mathbf{T}_{ m_{{\mathrm{1}}} }   (   \Lambda^{-1}   \llbracket    \lambda  \ottmv{x}  :  \ottnt{A}  .  \ottnt{b}    \rrbracket    )    \circ   t^{\mathbf{T}_{ m_{{\mathrm{1}}} } }_{  \llbracket  \Gamma  \rrbracket  ,   \llbracket  \ottnt{A}  \rrbracket  }     \Big)   ,   \llbracket  \ottnt{a}  \rrbracket   \rangle   \\
& =      \mathbf{T}^{  m_{{\mathrm{1}}}  \cdot  m_{{\mathrm{2}}}   \leq   m_{{\mathrm{1}}}  \cdot  m'_{{\mathrm{2}}}  }_{  \llbracket  \ottnt{B}  \rrbracket  }   \circ   \mu^{ m_{{\mathrm{1}}} , m_{{\mathrm{2}}} }_{  \llbracket  \ottnt{B}  \rrbracket  }    \circ   \mathbf{T}_{ m_{{\mathrm{1}}} }   \llbracket  \ottnt{b}  \rrbracket     \circ   t^{\mathbf{T}_{ m_{{\mathrm{1}}} } }_{  \llbracket  \Gamma  \rrbracket  ,   \llbracket  \ottnt{A}  \rrbracket  }    \circ   \langle   \text{id}_{  \llbracket  \Gamma  \rrbracket  }   ,   \llbracket  \ottnt{a}  \rrbracket   \rangle  . 
\end{align*}
The two morphisms above are equal because $\mu$ is natural in its second component.

\item $  (   \ottkw{ret}  \:  \ottnt{a}   )   \:  \leftindex^{ \ottsym{1} }{\gg}\!\! =^{ m }  \ottnt{f}  \equiv  \ottnt{f}  \:  \ottnt{a} $.\\
Given: $ \Gamma  \vdash    (   \ottkw{ret}  \:  \ottnt{a}   )   \:  \leftindex^{ \ottsym{1} }{\gg}\!\! =^{ m }  \ottnt{f}   :   T_{ m } \:  \ottnt{B}  $ and $ \Gamma  \vdash   \ottnt{f}  \:  \ottnt{a}   :   T_{ m } \:  \ottnt{B}  $ where $ \Gamma  \vdash  \ottnt{a}  :  \ottnt{A} $ and $ \Gamma  \vdash  \ottnt{f}  :   \ottnt{A}  \to   T_{ m } \:  \ottnt{B}   $.\\
Now,
\begin{align*}
&  \llbracket     (   \ottkw{ret}  \:  \ottnt{a}   )   \:  \leftindex^{ \ottsym{1} }{\gg}\!\! =^{ m }  \ottnt{f}    \rrbracket  \\
= &  \llbracket    \mathbf{join}^{ \ottsym{1} , m }   (    (   \mathbf{lift}^{ \ottsym{1} }  \ottnt{f}   )   \:   (   \ottkw{ret}  \:  \ottnt{a}   )    )     \rrbracket \\
= &    \mu^{ \ottsym{1} , m }_{  \llbracket  \ottnt{B}  \rrbracket  }   \circ   \text{app}    \circ   \langle   \llbracket   \mathbf{lift}^{ \ottsym{1} }  \ottnt{f}   \rrbracket   ,   \llbracket   \ottkw{ret}  \:  \ottnt{a}   \rrbracket   \rangle   \\
= &    \mu^{ \ottsym{1} , m }_{  \llbracket  \ottnt{B}  \rrbracket  }   \circ   \text{app}    \circ   \langle   \Lambda \Big(     \mathbf{T}_{ \ottsym{1} }   (   \Lambda^{-1}   \llbracket  \ottnt{f}  \rrbracket    )    \circ   t^{\mathbf{T}_{ \ottsym{1} } }_{  \llbracket  \Gamma  \rrbracket  ,   \llbracket  \ottnt{A}  \rrbracket  }     \Big)   ,    \eta_{  \llbracket  \ottnt{A}  \rrbracket  }   \circ   \llbracket  \ottnt{a}  \rrbracket    \rangle   \\
= &      \mu^{ \ottsym{1} , m }_{  \llbracket  \ottnt{B}  \rrbracket  }   \circ   \mathbf{T}_{ \ottsym{1} }   (   \Lambda^{-1}   \llbracket  \ottnt{f}  \rrbracket    )     \circ   t^{\mathbf{T}_{ \ottsym{1} } }_{  \llbracket  \Gamma  \rrbracket  ,   \llbracket  \ottnt{A}  \rrbracket  }    \circ   (    \text{id}_{  \llbracket  \Gamma  \rrbracket  }   \times   \eta_{  \llbracket  \ottnt{A}  \rrbracket  }    )    \circ   \langle   \text{id}_{  \llbracket  \Gamma  \rrbracket  }   ,   \llbracket  \ottnt{a}  \rrbracket   \rangle  . 
\end{align*}
And,
\begin{align*}
 \llbracket    \ottnt{f}  \:  \ottnt{a}    \rrbracket  =   \Lambda^{-1}   \llbracket  \ottnt{f}  \rrbracket    \circ   \langle   \text{id}_{  \llbracket  \Gamma  \rrbracket  }   ,   \llbracket  \ottnt{a}  \rrbracket   \rangle  .
\end{align*}
The above two morphisms are equal because the diagram in Figure \ref{prfcd1} commutes. The left square commutes because $\eta$ is a \textit{strong} natural transformation; the square to the right commutes because $\eta$ is a natural transformation; the triangle to the right commutes because $\mathbf{T}$ is a lax monoidal functor. Also, note $ t^{\textbf{Id} }_{  \llbracket  \Gamma  \rrbracket  ,  \llbracket  \ottnt{A}  \rrbracket  }  =  \text{id}_{    \llbracket  \Gamma  \rrbracket   \times   \llbracket  \ottnt{A}  \rrbracket    } $.

\begin{figure}
\begin{tikzcd}[row sep = 3.5 em, column sep = 4 em]
  \llbracket  \Gamma  \rrbracket   \times   \llbracket  \ottnt{A}  \rrbracket   \arrow{d}{  \text{id}_{  \llbracket  \Gamma  \rrbracket  }   \times   \eta_{  \llbracket  \ottnt{A}  \rrbracket  }  } \arrow{r}{ t^{\textbf{Id} }_{  \llbracket  \Gamma  \rrbracket  ,  \llbracket  \ottnt{A}  \rrbracket  } } & 
  \llbracket  \Gamma  \rrbracket   \times   \llbracket  \ottnt{A}  \rrbracket   \arrow{d}{ \eta_{    \llbracket  \Gamma  \rrbracket   \times   \llbracket  \ottnt{A}  \rrbracket    } } \arrow{r}{ \Lambda^{-1}   \llbracket  \ottnt{f}  \rrbracket  } & 
 \mathbf{T}_{ m }   \llbracket  \ottnt{B}  \rrbracket   \arrow{d}{ \eta_{   \mathbf{T}_{ m }   \llbracket  \ottnt{B}  \rrbracket    } } \arrow{rd}{ \text{id}_{   \mathbf{T}_{ m }   \llbracket  \ottnt{B}  \rrbracket    } }  & \\
  \llbracket  \Gamma  \rrbracket   \times   \mathbf{T}_{ \ottsym{1} }   \llbracket  \ottnt{A}  \rrbracket    \arrow{r}{ t^{\mathbf{T}_{ \ottsym{1} } }_{  \llbracket  \Gamma  \rrbracket  ,   \llbracket  \ottnt{A}  \rrbracket  } } &  \mathbf{T}_{ \ottsym{1} }   (    \llbracket  \Gamma  \rrbracket   \times   \llbracket  \ottnt{A}  \rrbracket    )   \arrow{r}{ \mathbf{T}_{ \ottsym{1} }   (   \Lambda^{-1}   \llbracket  \ottnt{f}  \rrbracket    )  } &  \mathbf{T}_{ \ottsym{1} }    \mathbf{T}_{ m }   \llbracket  \ottnt{B}  \rrbracket     \arrow{r}{ \mu^{ \ottsym{1} , m }_{  \llbracket  \ottnt{B}  \rrbracket  } } &  \mathbf{T}_{ m }   \llbracket  \ottnt{B}  \rrbracket  
\end{tikzcd}
\caption{Commutative diagram}
\label{prfcd1}
\end{figure} 

\item $ \ottnt{a}  \:  \leftindex^{ m_{{\mathrm{1}}} }{\gg}\!\! =^{ \ottsym{1} }   (   \lambda  \ottmv{x}  :  \ottnt{A}  .   \ottkw{ret}  \:  \ottmv{x}    )   \equiv \ottnt{a}$.\\
Given: $ \Gamma  \vdash   \ottnt{a}  \:  \leftindex^{ m_{{\mathrm{1}}} }{\gg}\!\! =^{ \ottsym{1} }   (   \lambda  \ottmv{x}  :  \ottnt{A}  .   \ottkw{ret}  \:  \ottmv{x}    )    :   T_{ m_{{\mathrm{1}}} } \:  \ottnt{A}  $ and $ \Gamma  \vdash  \ottnt{a}  :   T_{ m_{{\mathrm{1}}} } \:  \ottnt{A}  $.\\
Now,
\begin{align*}
&  \llbracket    \ottnt{a}  \:  \leftindex^{ m_{{\mathrm{1}}} }{\gg}\!\! =^{ \ottsym{1} }   (   \lambda  \ottmv{x}  :  \ottnt{A}  .   \ottkw{ret}  \:  \ottmv{x}    )     \rrbracket  \\
= &  \llbracket    \mathbf{join}^{ m_{{\mathrm{1}}} , \ottsym{1} }   (    (   \mathbf{lift}^{ m_{{\mathrm{1}}} }   (   \lambda  \ottmv{x}  :  \ottnt{A}  .   \ottkw{ret}  \:  \ottmv{x}    )    )   \:  \ottnt{a}   )     \rrbracket  \\
= &    \mu^{ m_{{\mathrm{1}}} , \ottsym{1} }_{  \llbracket  \ottnt{A}  \rrbracket  }   \circ   \text{app}    \circ   \langle   \Lambda \Big(     \mathbf{T}_{ m_{{\mathrm{1}}} }   (   \Lambda^{-1}   \llbracket    \lambda  \ottmv{x}  :  \ottnt{A}  .   \ottkw{ret}  \:  \ottmv{x}     \rrbracket    )    \circ   t^{\mathbf{T}_{ m_{{\mathrm{1}}} } }_{  \llbracket  \Gamma  \rrbracket  ,   \llbracket  \ottnt{A}  \rrbracket  }     \Big)   ,   \llbracket  \ottnt{a}  \rrbracket   \rangle   \\
= &     \mu^{ m_{{\mathrm{1}}} , \ottsym{1} }_{  \llbracket  \ottnt{A}  \rrbracket  }   \circ   \mathbf{T}_{ m_{{\mathrm{1}}} }   (    \eta_{  \llbracket  \ottnt{A}  \rrbracket  }   \circ   \pi_2    )     \circ   t^{\mathbf{T}_{ m_{{\mathrm{1}}} } }_{  \llbracket  \Gamma  \rrbracket  ,   \llbracket  \ottnt{A}  \rrbracket  }    \circ   \langle   \text{id}_{  \llbracket  \Gamma  \rrbracket  }   ,   \llbracket  \ottnt{a}  \rrbracket   \rangle   \\
= &      \mu^{ m_{{\mathrm{1}}} , \ottsym{1} }_{  \llbracket  \ottnt{A}  \rrbracket  }   \circ   \mathbf{T}_{ m_{{\mathrm{1}}} }   \eta_{  \llbracket  \ottnt{A}  \rrbracket  }     \circ   \mathbf{T}_{ m_{{\mathrm{1}}} }   \pi_2     \circ   t^{\mathbf{T}_{ m_{{\mathrm{1}}} } }_{  \llbracket  \Gamma  \rrbracket  ,   \llbracket  \ottnt{A}  \rrbracket  }    \circ   \langle   \text{id}_{  \llbracket  \Gamma  \rrbracket  }   ,   \llbracket  \ottnt{a}  \rrbracket   \rangle   \\
= &     \mu^{ m_{{\mathrm{1}}} , \ottsym{1} }_{  \llbracket  \ottnt{A}  \rrbracket  }   \circ   \mathbf{T}_{ m_{{\mathrm{1}}} }   \eta_{  \llbracket  \ottnt{A}  \rrbracket  }     \circ   \pi_2    \circ   \langle   \text{id}_{  \llbracket  \Gamma  \rrbracket  }   ,   \llbracket  \ottnt{a}  \rrbracket   \rangle   \hspace*{5pt} \text{[By naturality and strength]}\\
= &    \mu^{ m_{{\mathrm{1}}} , \ottsym{1} }_{  \llbracket  \ottnt{A}  \rrbracket  }   \circ   \mathbf{T}_{ m_{{\mathrm{1}}} }   \eta_{  \llbracket  \ottnt{A}  \rrbracket  }     \circ   \llbracket  \ottnt{a}  \rrbracket   \\
= &  \llbracket  \ottnt{a}  \rrbracket  \hspace*{5pt} [\because \mathbf{T} \text{ is a lax monoidal functor}].
\end{align*}

\item $  (   \ottnt{a}  \:  \leftindex^{ m_{{\mathrm{1}}} }{\gg}\!\! =^{ m_{{\mathrm{2}}} }  \ottnt{f}   )   \:  \leftindex^{  m_{{\mathrm{1}}}  \cdot  m_{{\mathrm{2}}}  }{\gg}\!\! =^{ m_{{\mathrm{3}}} }  \ottnt{g}  \equiv  \ottnt{a}  \:  \leftindex^{ m_{{\mathrm{1}}} }{\gg}\!\! =^{  m_{{\mathrm{2}}}  \cdot  m_{{\mathrm{3}}}  }   (   \lambda  \ottmv{x}  :  \ottnt{A}  .   (    \ottnt{f}  \:  \ottmv{x}   \:  \leftindex^{ m_{{\mathrm{2}}} }{\gg}\!\! =^{ m_{{\mathrm{3}}} }  \ottnt{g}   )    )  $.\\
Given: $ \Gamma  \vdash    (   \ottnt{a}  \:  \leftindex^{ m_{{\mathrm{1}}} }{\gg}\!\! =^{ m_{{\mathrm{2}}} }  \ottnt{f}   )   \:  \leftindex^{  m_{{\mathrm{1}}}  \cdot  m_{{\mathrm{2}}}  }{\gg}\!\! =^{ m_{{\mathrm{3}}} }  \ottnt{g}   :   T_{   m_{{\mathrm{1}}}  \cdot  m_{{\mathrm{2}}}   \cdot  m_{{\mathrm{3}}}  } \:  \ottnt{C}  $ and $ \Gamma  \vdash   \ottnt{a}  \:  \leftindex^{ m_{{\mathrm{1}}} }{\gg}\!\! =^{  m_{{\mathrm{2}}}  \cdot  m_{{\mathrm{3}}}  }   (   \lambda  \ottmv{x}  :  \ottnt{A}  .   (    \ottnt{f}  \:  \ottmv{x}   \:  \leftindex^{ m_{{\mathrm{2}}} }{\gg}\!\! =^{ m_{{\mathrm{3}}} }  \ottnt{g}   )    )    :   T_{   m_{{\mathrm{1}}}  \cdot  m_{{\mathrm{2}}}   \cdot  m_{{\mathrm{3}}}  } \:  \ottnt{C}  $ where $ \Gamma  \vdash  \ottnt{a}  :   T_{ m_{{\mathrm{1}}} } \:  \ottnt{A}  $ and $ \Gamma  \vdash  \ottnt{f}  :   \ottnt{A}  \to   T_{ m_{{\mathrm{2}}} } \:  \ottnt{B}   $ and $ \Gamma  \vdash  \ottnt{g}  :   \ottnt{B}  \to   T_{ m_{{\mathrm{3}}} } \:  \ottnt{C}   $.

\begin{figure}
\hspace*{-1cm}
\begin{tikzcd}[row sep = 3.5 em, column sep = 4.5 em]
  \llbracket  \Gamma  \rrbracket   \times   \mathbf{T}_{ m_{{\mathrm{1}}} }   \mathbf{T}_{ m_{{\mathrm{2}}} }   \llbracket  \ottnt{B}  \rrbracket     \arrow{d}{  \text{id}_{  \llbracket  \Gamma  \rrbracket  }   \times   \mu^{ m_{{\mathrm{1}}} , m_{{\mathrm{2}}} }_{  \llbracket  \ottnt{B}  \rrbracket  }  } \arrow{r}{  \mathbf{T}_{ m_{{\mathrm{1}}} }    t^{\mathbf{T}_{ m_{{\mathrm{2}}} } }_{  \llbracket  \Gamma  \rrbracket  ,   \llbracket  \ottnt{B}  \rrbracket  }     \circ   t^{\mathbf{T}_{ m_{{\mathrm{1}}} } }_{  \llbracket  \Gamma  \rrbracket  ,    \mathbf{T}_{ m_{{\mathrm{2}}} }   \llbracket  \ottnt{B}  \rrbracket    }  } &  \mathbf{T}_{ m_{{\mathrm{1}}} }   \mathbf{T}_{ m_{{\mathrm{2}}} }   (    \llbracket  \Gamma  \rrbracket   \times   \llbracket  \ottnt{B}  \rrbracket    )    \arrow{d}{ \mu^{ m_{{\mathrm{1}}} , m_{{\mathrm{2}}} }_{    \llbracket  \Gamma  \rrbracket   \times   \llbracket  \ottnt{B}  \rrbracket    } } \arrow{r}{ \mathbf{T}_{ m_{{\mathrm{1}}} }    \mathbf{T}_{ m_{{\mathrm{2}}} }   (   \Lambda^{-1}   \llbracket  \ottnt{g}  \rrbracket    )    } &  \mathbf{T}_{ m_{{\mathrm{1}}} }   \mathbf{T}_{ m_{{\mathrm{2}}} }   \mathbf{T}_{ m_{{\mathrm{3}}} }   \llbracket  \ottnt{C}  \rrbracket     \arrow{d}{ \mu^{ m_{{\mathrm{1}}} , m_{{\mathrm{2}}} }_{   \mathbf{T}_{ m_{{\mathrm{3}}} }   \llbracket  \ottnt{C}  \rrbracket    } } \arrow{r}{ \mathbf{T}_{ m_{{\mathrm{1}}} }    \mu^{ m_{{\mathrm{2}}} , m_{{\mathrm{3}}} }_{  \llbracket  \ottnt{C}  \rrbracket  }   } &  \mathbf{T}_{ m_{{\mathrm{1}}} }   \mathbf{T}_{  m_{{\mathrm{2}}}  \cdot  m_{{\mathrm{3}}}  }   \llbracket  \ottnt{C}  \rrbracket    \arrow{d}{ \mu^{ m_{{\mathrm{1}}} ,  m_{{\mathrm{2}}}  \cdot  m_{{\mathrm{3}}}  }_{  \llbracket  \ottnt{C}  \rrbracket  } } \\
  \llbracket  \Gamma  \rrbracket   \times   \mathbf{T}_{  m_{{\mathrm{1}}}  \cdot  m_{{\mathrm{2}}}  }   \llbracket  \ottnt{B}  \rrbracket    \arrow{r}{ t^{\mathbf{T}_{  m_{{\mathrm{1}}}  \cdot  m_{{\mathrm{2}}}  } }_{  \llbracket  \Gamma  \rrbracket  ,   \llbracket  \ottnt{B}  \rrbracket  } } &  \mathbf{T}_{  m_{{\mathrm{1}}}  \cdot  m_{{\mathrm{2}}}  }   (    \llbracket  \Gamma  \rrbracket   \times   \llbracket  \ottnt{B}  \rrbracket    )   \arrow{r}{ \mathbf{T}_{  m_{{\mathrm{1}}}  \cdot  m_{{\mathrm{2}}}  }   (   \Lambda^{-1}   \llbracket  \ottnt{g}  \rrbracket    )  } &  \mathbf{T}_{  m_{{\mathrm{1}}}  \cdot  m_{{\mathrm{2}}}  }   \mathbf{T}_{ m_{{\mathrm{3}}} }   \llbracket  \ottnt{C}  \rrbracket    \arrow{r}{ \mu^{  m_{{\mathrm{1}}}  \cdot  m_{{\mathrm{2}}}  , m_{{\mathrm{3}}} }_{  \llbracket  \ottnt{C}  \rrbracket  } } &  \mathbf{T}_{   m_{{\mathrm{1}}}  \cdot  m_{{\mathrm{2}}}   \cdot  m_{{\mathrm{3}}}  }   \llbracket  \ottnt{C}  \rrbracket  
\end{tikzcd}
\caption{Commutative diagram}
\label{prfcd2}
\end{figure}

Now,
\begin{align*}
&  \llbracket    \ottnt{a}  \:  \leftindex^{ m_{{\mathrm{1}}} }{\gg}\!\! =^{  m_{{\mathrm{2}}}  \cdot  m_{{\mathrm{3}}}  }   (   \lambda  \ottmv{x}  :  \ottnt{A}  .   (    \ottnt{f}  \:  \ottmv{x}   \:  \leftindex^{ m_{{\mathrm{2}}} }{\gg}\!\! =^{ m_{{\mathrm{3}}} }  \ottnt{g}   )    )     \rrbracket  \\
= &  \llbracket    \mathbf{join}^{ m_{{\mathrm{1}}} ,  m_{{\mathrm{2}}}  \cdot  m_{{\mathrm{3}}}  }   (    (   \mathbf{lift}^{ m_{{\mathrm{1}}} }   (   \lambda  \ottmv{x}  :  \ottnt{A}  .   (    \ottnt{f}  \:  \ottmv{x}   \:  \leftindex^{ m_{{\mathrm{2}}} }{\gg}\!\! =^{ m_{{\mathrm{3}}} }  \ottnt{g}   )    )    )   \:  \ottnt{a}   )     \rrbracket  \\
= &    \mu^{ m_{{\mathrm{1}}} ,  m_{{\mathrm{2}}}  \cdot  m_{{\mathrm{3}}}  }_{  \llbracket  \ottnt{C}  \rrbracket  }   \circ   \text{app}    \circ   \langle   \llbracket    \mathbf{lift}^{ m_{{\mathrm{1}}} }   (   \lambda  \ottmv{x}  :  \ottnt{A}  .   (    \ottnt{f}  \:  \ottmv{x}   \:  \leftindex^{ m_{{\mathrm{2}}} }{\gg}\!\! =^{ m_{{\mathrm{3}}} }  \ottnt{g}   )    )     \rrbracket   ,   \llbracket  \ottnt{a}  \rrbracket   \rangle   \\
= &    \mu^{ m_{{\mathrm{1}}} ,  m_{{\mathrm{2}}}  \cdot  m_{{\mathrm{3}}}  }_{  \llbracket  \ottnt{C}  \rrbracket  }   \circ   \text{app}    \circ   \langle   \Lambda \Big(     \mathbf{T}_{ m_{{\mathrm{1}}} }   (   \Lambda^{-1}   \llbracket    \lambda  \ottmv{x}  :  \ottnt{A}  .   (    \ottnt{f}  \:  \ottmv{x}   \:  \leftindex^{ m_{{\mathrm{2}}} }{\gg}\!\! =^{ m_{{\mathrm{3}}} }  \ottnt{g}   )     \rrbracket    )    \circ   t^{\mathbf{T}_{ m_{{\mathrm{1}}} } }_{  \llbracket  \Gamma  \rrbracket  ,   \llbracket  \ottnt{A}  \rrbracket  }     \Big)   ,   \llbracket  \ottnt{a}  \rrbracket   \rangle   \\
= &     \mu^{ m_{{\mathrm{1}}} ,  m_{{\mathrm{2}}}  \cdot  m_{{\mathrm{3}}}  }_{  \llbracket  \ottnt{C}  \rrbracket  }   \circ   \mathbf{T}_{ m_{{\mathrm{1}}} }   (   \Lambda^{-1}   \llbracket    \lambda  \ottmv{x}  :  \ottnt{A}  .   (    \ottnt{f}  \:  \ottmv{x}   \:  \leftindex^{ m_{{\mathrm{2}}} }{\gg}\!\! =^{ m_{{\mathrm{3}}} }  \ottnt{g}   )     \rrbracket    )     \circ   t^{\mathbf{T}_{ m_{{\mathrm{1}}} } }_{  \llbracket  \Gamma  \rrbracket  ,   \llbracket  \ottnt{A}  \rrbracket  }    \circ   \langle   \text{id}_{  \llbracket  \Gamma  \rrbracket  }   ,   \llbracket  \ottnt{a}  \rrbracket   \rangle   \\
= &     \mu^{ m_{{\mathrm{1}}} ,  m_{{\mathrm{2}}}  \cdot  m_{{\mathrm{3}}}  }_{  \llbracket  \ottnt{C}  \rrbracket  }   \circ   \mathbf{T}_{ m_{{\mathrm{1}}} }   \llbracket     \ottnt{f}  \:  \ottmv{x}   \:  \leftindex^{ m_{{\mathrm{2}}} }{\gg}\!\! =^{ m_{{\mathrm{3}}} }  \ottnt{g}    \rrbracket     \circ   t^{\mathbf{T}_{ m_{{\mathrm{1}}} } }_{  \llbracket  \Gamma  \rrbracket  ,   \llbracket  \ottnt{A}  \rrbracket  }    \circ   \langle   \text{id}_{  \llbracket  \Gamma  \rrbracket  }   ,   \llbracket  \ottnt{a}  \rrbracket   \rangle   \\
= &     \mu^{ m_{{\mathrm{1}}} ,  m_{{\mathrm{2}}}  \cdot  m_{{\mathrm{3}}}  }_{  \llbracket  \ottnt{C}  \rrbracket  }   \circ   \mathbf{T}_{ m_{{\mathrm{1}}} }   \llbracket    \mathbf{join}^{ m_{{\mathrm{2}}} , m_{{\mathrm{3}}} }   (    (   \mathbf{lift}^{ m_{{\mathrm{2}}} }  \ottnt{g}   )   \:   (   \ottnt{f}  \:  \ottmv{x}   )    )     \rrbracket     \circ   t^{\mathbf{T}_{ m_{{\mathrm{1}}} } }_{  \llbracket  \Gamma  \rrbracket  ,   \llbracket  \ottnt{A}  \rrbracket  }    \circ   \langle   \text{id}_{  \llbracket  \Gamma  \rrbracket  }   ,   \llbracket  \ottnt{a}  \rrbracket   \rangle   \\
= &     \mu^{ m_{{\mathrm{1}}} ,  m_{{\mathrm{2}}}  \cdot  m_{{\mathrm{3}}}  }_{  \llbracket  \ottnt{C}  \rrbracket  }   \circ   \mathbf{T}_{ m_{{\mathrm{1}}} }   (     \mu^{ m_{{\mathrm{2}}} , m_{{\mathrm{3}}} }_{  \llbracket  \ottnt{C}  \rrbracket  }   \circ   \text{app}    \circ   \langle   \llbracket    \mathbf{lift}^{ m_{{\mathrm{2}}} }  \ottnt{g}    \rrbracket   ,   \llbracket    \ottnt{f}  \:  \ottmv{x}    \rrbracket   \rangle    )     \circ   t^{\mathbf{T}_{ m_{{\mathrm{1}}} } }_{  \llbracket  \Gamma  \rrbracket  ,   \llbracket  \ottnt{A}  \rrbracket  }    \circ   \langle   \text{id}_{  \llbracket  \Gamma  \rrbracket  }   ,   \llbracket  \ottnt{a}  \rrbracket   \rangle   \\
= &      \mu^{ m_{{\mathrm{1}}} ,  m_{{\mathrm{2}}}  \cdot  m_{{\mathrm{3}}}  }_{  \llbracket  \ottnt{C}  \rrbracket  }   \circ   \mathbf{T}_{ m_{{\mathrm{1}}} }    \mu^{ m_{{\mathrm{2}}} , m_{{\mathrm{3}}} }_{  \llbracket  \ottnt{C}  \rrbracket  }      \circ   \mathbf{T}_{ m_{{\mathrm{1}}} }    \mathbf{T}_{ m_{{\mathrm{2}}} }   (    \Lambda^{-1}   \llbracket  \ottnt{g}  \rrbracket    \circ   (    \pi_1   \times   \text{id}    )    )       \circ   \mathbf{T}_{ m_{{\mathrm{1}}} }    t^{\mathbf{T}_{ m_{{\mathrm{2}}} } }_{    \llbracket  \Gamma  \rrbracket   \times   \llbracket  \ottnt{A}  \rrbracket    ,   \llbracket  \ottnt{B}  \rrbracket  }      \circ   \mathbf{T}_{ m_{{\mathrm{1}}} }   \langle   \text{id}   ,   \Lambda^{-1}   \llbracket  \ottnt{f}  \rrbracket    \rangle     \\ & \hspace*{30pt} \circ   t^{\mathbf{T}_{ m_{{\mathrm{1}}} } }_{  \llbracket  \Gamma  \rrbracket  ,   \llbracket  \ottnt{A}  \rrbracket  }   \circ   \langle   \text{id}   ,   \llbracket  \ottnt{a}  \rrbracket   \rangle  \\
= &       \mu^{ m_{{\mathrm{1}}} ,  m_{{\mathrm{2}}}  \cdot  m_{{\mathrm{3}}}  }_{  \llbracket  \ottnt{C}  \rrbracket  }   \circ   \mathbf{T}_{ m_{{\mathrm{1}}} }    \mu^{ m_{{\mathrm{2}}} , m_{{\mathrm{3}}} }_{  \llbracket  \ottnt{C}  \rrbracket  }      \circ   \mathbf{T}_{ m_{{\mathrm{1}}} }    \mathbf{T}_{ m_{{\mathrm{2}}} }   (   \Lambda^{-1}   \llbracket  \ottnt{g}  \rrbracket    )       \circ   \mathbf{T}_{ m_{{\mathrm{1}}} }    t^{\mathbf{T}_{ m_{{\mathrm{2}}} } }_{  \llbracket  \Gamma  \rrbracket  ,   \llbracket  \ottnt{B}  \rrbracket  }      \circ   \mathbf{T}_{ m_{{\mathrm{1}}} }   (    \pi_1   \times   \text{id}    )     \circ   \mathbf{T}_{ m_{{\mathrm{1}}} }   \langle   \text{id}   ,   \Lambda^{-1}   \llbracket  \ottnt{f}  \rrbracket    \rangle    \\ & \hspace*{30pt} \circ   t^{\mathbf{T}_{ m_{{\mathrm{1}}} } }_{  \llbracket  \Gamma  \rrbracket  ,   \llbracket  \ottnt{A}  \rrbracket  }   \circ   \langle   \text{id}   ,   \llbracket  \ottnt{a}  \rrbracket   \rangle   \hspace*{7pt}
 [\text{By naturality of }t^{\mathbf{T}_{m_2}} ].
\end{align*}

Next,
\begin{align*}
&  \llbracket     (   \ottnt{a}  \:  \leftindex^{ m_{{\mathrm{1}}} }{\gg}\!\! =^{ m_{{\mathrm{2}}} }  \ottnt{f}   )   \:  \leftindex^{  m_{{\mathrm{1}}}  \cdot  m_{{\mathrm{2}}}  }{\gg}\!\! =^{ m_{{\mathrm{3}}} }  \ottnt{g}    \rrbracket  \\
= &  \llbracket    \mathbf{join}^{  m_{{\mathrm{1}}}  \cdot  m_{{\mathrm{2}}}  , m_{{\mathrm{3}}} }   (    (   \mathbf{lift}^{  m_{{\mathrm{1}}}  \cdot  m_{{\mathrm{2}}}  }  \ottnt{g}   )   \:   (   \ottnt{a}  \:  \leftindex^{ m_{{\mathrm{1}}} }{\gg}\!\! =^{ m_{{\mathrm{2}}} }  \ottnt{f}   )    )     \rrbracket \\
= &    \mu^{  m_{{\mathrm{1}}}  \cdot  m_{{\mathrm{2}}}  , m_{{\mathrm{3}}} }_{  \llbracket  \ottnt{C}  \rrbracket  }   \circ   \text{app}    \circ   \langle   \llbracket    \mathbf{lift}^{  m_{{\mathrm{1}}}  \cdot  m_{{\mathrm{2}}}  }  \ottnt{g}    \rrbracket   ,   \llbracket    \ottnt{a}  \:  \leftindex^{ m_{{\mathrm{1}}} }{\gg}\!\! =^{ m_{{\mathrm{2}}} }  \ottnt{f}    \rrbracket   \rangle   \\
= &    \mu^{  m_{{\mathrm{1}}}  \cdot  m_{{\mathrm{2}}}  , m_{{\mathrm{3}}} }_{  \llbracket  \ottnt{C}  \rrbracket  }   \circ   \text{app}    \circ   \langle   \Lambda \Big(     \mathbf{T}_{  m_{{\mathrm{1}}}  \cdot  m_{{\mathrm{2}}}  }   (   \Lambda^{-1}   \llbracket  \ottnt{g}  \rrbracket    )    \circ   t^{\mathbf{T}_{  m_{{\mathrm{1}}}  \cdot  m_{{\mathrm{2}}}  } }_{  \llbracket  \Gamma  \rrbracket  ,   \llbracket  \ottnt{B}  \rrbracket  }     \Big)   ,   \llbracket    \ottnt{a}  \:  \leftindex^{ m_{{\mathrm{1}}} }{\gg}\!\! =^{ m_{{\mathrm{2}}} }  \ottnt{f}    \rrbracket   \rangle   \\
= &     \mu^{  m_{{\mathrm{1}}}  \cdot  m_{{\mathrm{2}}}  , m_{{\mathrm{3}}} }_{  \llbracket  \ottnt{C}  \rrbracket  }   \circ   \mathbf{T}_{  m_{{\mathrm{1}}}  \cdot  m_{{\mathrm{2}}}  }   (   \Lambda^{-1}   \llbracket  \ottnt{g}  \rrbracket    )     \circ   t^{\mathbf{T}_{  m_{{\mathrm{1}}}  \cdot  m_{{\mathrm{2}}}  } }_{  \llbracket  \Gamma  \rrbracket  ,   \llbracket  \ottnt{B}  \rrbracket  }    \circ   \langle   \text{id}_{  \llbracket  \Gamma  \rrbracket  }   ,   \llbracket    \mathbf{join}^{ m_{{\mathrm{1}}} , m_{{\mathrm{2}}} }   (    (   \mathbf{lift}^{ m_{{\mathrm{1}}} }  \ottnt{f}   )   \:  \ottnt{a}   )     \rrbracket   \rangle   \\
= &    \mu^{  m_{{\mathrm{1}}}  \cdot  m_{{\mathrm{2}}}  , m_{{\mathrm{3}}} }_{  \llbracket  \ottnt{C}  \rrbracket  }   \circ   \mathbf{T}_{  m_{{\mathrm{1}}}  \cdot  m_{{\mathrm{2}}}  }   (   \Lambda^{-1}   \llbracket  \ottnt{g}  \rrbracket    )     \circ   t^{\mathbf{T}_{  m_{{\mathrm{1}}}  \cdot  m_{{\mathrm{2}}}  } }_{  \llbracket  \Gamma  \rrbracket  ,   \llbracket  \ottnt{B}  \rrbracket  }   \\ & \hspace*{30pt} \circ  \langle   \text{id}_{  \llbracket  \Gamma  \rrbracket  }   ,      \mu^{ m_{{\mathrm{1}}} , m_{{\mathrm{2}}} }_{  \llbracket  \ottnt{B}  \rrbracket  }   \circ   \mathbf{T}_{ m_{{\mathrm{1}}} }   (   \Lambda^{-1}   \llbracket  \ottnt{f}  \rrbracket    )     \circ   t^{\mathbf{T}_{ m_{{\mathrm{1}}} } }_{  \llbracket  \Gamma  \rrbracket  ,   \llbracket  \ottnt{A}  \rrbracket  }    \circ   \langle   \text{id}_{  \llbracket  \Gamma  \rrbracket  }   ,   \llbracket  \ottnt{a}  \rrbracket   \rangle    \rangle  \\
= &     \mu^{  m_{{\mathrm{1}}}  \cdot  m_{{\mathrm{2}}}  , m_{{\mathrm{3}}} }_{  \llbracket  \ottnt{C}  \rrbracket  }   \circ   \mathbf{T}_{  m_{{\mathrm{1}}}  \cdot  m_{{\mathrm{2}}}  }   (   \Lambda^{-1}   \llbracket  \ottnt{g}  \rrbracket    )     \circ   t^{\mathbf{T}_{  m_{{\mathrm{1}}}  \cdot  m_{{\mathrm{2}}}  } }_{  \llbracket  \Gamma  \rrbracket  ,   \llbracket  \ottnt{B}  \rrbracket  }    \circ   (    \text{id}_{  \llbracket  \Gamma  \rrbracket  }   \times   \mu^{ m_{{\mathrm{1}}} , m_{{\mathrm{2}}} }_{  \llbracket  \ottnt{B}  \rrbracket  }    )   \\ & \hspace*{30pt} \circ  \langle   \text{id}_{  \llbracket  \Gamma  \rrbracket  }   ,     \mathbf{T}_{ m_{{\mathrm{1}}} }   (   \Lambda^{-1}   \llbracket  \ottnt{f}  \rrbracket    )    \circ   t^{\mathbf{T}_{ m_{{\mathrm{1}}} } }_{  \llbracket  \Gamma  \rrbracket  ,   \llbracket  \ottnt{A}  \rrbracket  }    \circ   \langle   \text{id}_{  \llbracket  \Gamma  \rrbracket  }   ,   \llbracket  \ottnt{a}  \rrbracket   \rangle    \rangle  \\
= &      \mu^{  m_{{\mathrm{1}}}  \cdot  m_{{\mathrm{2}}}  , m_{{\mathrm{3}}} }_{  \llbracket  \ottnt{C}  \rrbracket  }   \circ   \mathbf{T}_{  m_{{\mathrm{1}}}  \cdot  m_{{\mathrm{2}}}  }   (   \Lambda^{-1}   \llbracket  \ottnt{g}  \rrbracket    )     \circ   t^{\mathbf{T}_{  m_{{\mathrm{1}}}  \cdot  m_{{\mathrm{2}}}  } }_{  \llbracket  \Gamma  \rrbracket  ,   \llbracket  \ottnt{B}  \rrbracket  }    \circ   (    \text{id}   \times   \mu^{ m_{{\mathrm{1}}} , m_{{\mathrm{2}}} }_{  \llbracket  \ottnt{B}  \rrbracket  }    )    \circ   \Big(    \text{id}   \times   (    \mathbf{T}_{ m_{{\mathrm{1}}} }   (   \Lambda^{-1}   \llbracket  \ottnt{f}  \rrbracket    )    \circ   t^{\mathbf{T}_{ m_{{\mathrm{1}}} } }_{  \llbracket  \Gamma  \rrbracket  ,   \llbracket  \ottnt{A}  \rrbracket  }    )    \Big)   \\ & \hspace*{30pt} \circ   \langle   \pi_1   ,   \text{id}   \rangle   \circ   \langle   \text{id}   ,   \llbracket  \ottnt{a}  \rrbracket   \rangle   \\
= &      \mu^{ m_{{\mathrm{1}}} ,  m_{{\mathrm{2}}}  \cdot  m_{{\mathrm{3}}}  }_{  \llbracket  \ottnt{C}  \rrbracket  }   \circ   \mathbf{T}_{ m_{{\mathrm{1}}} }    \mu^{ m_{{\mathrm{2}}} , m_{{\mathrm{3}}} }_{  \llbracket  \ottnt{C}  \rrbracket  }      \circ   \mathbf{T}_{ m_{{\mathrm{1}}} }    \mathbf{T}_{ m_{{\mathrm{2}}} }   (   \Lambda^{-1}   \llbracket  \ottnt{g}  \rrbracket    )       \circ   \mathbf{T}_{ m_{{\mathrm{1}}} }    t^{\mathbf{T}_{ m_{{\mathrm{2}}} } }_{  \llbracket  \Gamma  \rrbracket  ,   \llbracket  \ottnt{B}  \rrbracket  }      \circ   t^{\mathbf{T}_{ m_{{\mathrm{1}}} } }_{  \llbracket  \Gamma  \rrbracket  ,    \mathbf{T}_{ m_{{\mathrm{2}}} }   \llbracket  \ottnt{B}  \rrbracket    }   \\
& \hspace{30pt} \circ    \Big(    \text{id}   \times   (    \mathbf{T}_{ m_{{\mathrm{1}}} }   (   \Lambda^{-1}   \llbracket  \ottnt{f}  \rrbracket    )    \circ   t^{\mathbf{T}_{ m_{{\mathrm{1}}} } }_{  \llbracket  \Gamma  \rrbracket  ,   \llbracket  \ottnt{A}  \rrbracket  }    )    \Big)   \circ   \langle   \pi_1   ,   \text{id}   \rangle    \circ   \langle   \text{id}   ,   \llbracket  \ottnt{a}  \rrbracket   \rangle   \hspace*{5pt} \text{[By Figure \ref{prfcd2}]}
\end{align*}

\vspace*{10pt}

\begin{figure}
\begin{tikzcd}[row sep = 3 em, column sep = 3 em]
  \llbracket  \Gamma  \rrbracket   \times   \mathbf{T}_{ m_{{\mathrm{1}}} }   \llbracket  \ottnt{A}  \rrbracket    \arrow{r}{ t^{\mathbf{T}_{ m_{{\mathrm{1}}} } }_{  \llbracket  \Gamma  \rrbracket  ,   \llbracket  \ottnt{A}  \rrbracket  } } \arrow{d}{ \langle   \pi_1   ,   t^{\mathbf{T}_{ m_{{\mathrm{1}}} } }_{  \llbracket  \Gamma  \rrbracket  ,   \llbracket  \ottnt{A}  \rrbracket  }   \rangle }  &  \mathbf{T}_{ m_{{\mathrm{1}}} }   (    \llbracket  \Gamma  \rrbracket   \times   \llbracket  \ottnt{A}  \rrbracket    )   \arrow{r}{ \mathbf{T}_{ m_{{\mathrm{1}}} }   \langle   \text{id}   ,   \Lambda^{-1}   \llbracket  \ottnt{f}  \rrbracket    \rangle  } \arrow{d}{ \mathbf{T}_{ m_{{\mathrm{1}}} }   \langle   \pi_1   ,   \text{id}   \rangle  } &  \mathbf{T}_{ m_{{\mathrm{1}}} }   (    (    \llbracket  \Gamma  \rrbracket   \times   \llbracket  \ottnt{A}  \rrbracket    )   \times   \mathbf{T}_{ m_{{\mathrm{2}}} }   \llbracket  \ottnt{B}  \rrbracket     )   \arrow{d}{ \mathbf{T}_{ m_{{\mathrm{1}}} }   \langle    \pi_1   \circ   \pi_1    ,   \pi_2   \rangle  } \\
  \llbracket  \Gamma  \rrbracket   \times   \mathbf{T}_{ m_{{\mathrm{1}}} }   (    \llbracket  \Gamma  \rrbracket   \times   \llbracket  \ottnt{A}  \rrbracket    )    \arrow{r}{ t^{\mathbf{T}_{ m_{{\mathrm{1}}} } }_{  \llbracket  \Gamma  \rrbracket  ,     \llbracket  \Gamma  \rrbracket   \times   \llbracket  \ottnt{A}  \rrbracket    } } \arrow{dr}[below left]{  \text{id}   \times   \mathbf{T}_{ m_{{\mathrm{1}}} }   (   \Lambda^{-1}   \llbracket  \ottnt{f}  \rrbracket    )   } &  \mathbf{T}_{ m_{{\mathrm{1}}} }   (    \llbracket  \Gamma  \rrbracket   \times   (    \llbracket  \Gamma  \rrbracket   \times   \llbracket  \ottnt{A}  \rrbracket    )    )   \arrow{r}{ \mathbf{T}_{ m_{{\mathrm{1}}} }   (    \text{id}   \times   \Lambda^{-1}   \llbracket  \ottnt{f}  \rrbracket     )  } &  \mathbf{T}_{ m_{{\mathrm{1}}} }   (    \llbracket  \Gamma  \rrbracket   \times   \mathbf{T}_{ m_{{\mathrm{2}}} }   \llbracket  \ottnt{B}  \rrbracket     )   \\
&   \llbracket  \Gamma  \rrbracket   \times   \mathbf{T}_{ m_{{\mathrm{1}}} }   \mathbf{T}_{ m_{{\mathrm{2}}} }   \llbracket  \ottnt{B}  \rrbracket     \arrow{ur}[below right]{  t^{\mathbf{T}_{ m_{{\mathrm{1}}} } }_{  \llbracket  \Gamma  \rrbracket  ,    \mathbf{T}_{ m_{{\mathrm{2}}} }   \llbracket  \ottnt{B}  \rrbracket    }  } &
\end{tikzcd}
\caption{Commutative diagram}
\label{prfcd3}
\end{figure}

The diagram in Figure \ref{prfcd2} commutes: the leftmost square commutes because $\mu^{m_1,m_2}$ is a \textit{strong} natural transformation; the middle one commutes because $\mu^{m_1,m_2}$ is a natural transformation; the rightmost one commutes because $\mathbf{T}$ is a lax monoidal functor.\\

Next, to show that the above morphisms are equal, we just need to show that:

\begin{center}
$   t^{\mathbf{T}_{ m_{{\mathrm{1}}} } }_{  \llbracket  \Gamma  \rrbracket  ,    \mathbf{T}_{ m_{{\mathrm{2}}} }   \llbracket  \ottnt{B}  \rrbracket    }   \circ   \Big(    \text{id}   \times   (    \mathbf{T}_{ m_{{\mathrm{1}}} }   (   \Lambda^{-1}   \llbracket  \ottnt{f}  \rrbracket    )    \circ   t^{\mathbf{T}_{ m_{{\mathrm{1}}} } }_{  \llbracket  \Gamma  \rrbracket  ,   \llbracket  \ottnt{A}  \rrbracket  }    )    \Big)    \circ   \langle   \pi_1   ,   \text{id}   \rangle   =    \mathbf{T}_{ m_{{\mathrm{1}}} }   (    \pi_1   \times   \text{id}    )    \circ   \mathbf{T}_{ m_{{\mathrm{1}}} }   \langle   \text{id}   ,   \Lambda^{-1}   \llbracket  \ottnt{f}  \rrbracket    \rangle     \circ   t^{\mathbf{T}_{ m_{{\mathrm{1}}} } }_{  \llbracket  \Gamma  \rrbracket  ,   \llbracket  \ottnt{A}  \rrbracket  }  $
\end{center}

These two morphisms are equal by the commutative diagram in Figure \ref{prfcd3}. The diagram in Figure \ref{prfcd3} commutes: the bottom subfigure commutes by naturality of $t^{\mathbf{T}_{m_1}}$; the right one commutes by properties of products; the left one commutes by the commutativity of the diagram in Figure \ref{prfcd0}.
\end{itemize}
\end{proof}

\section{Proofs of lemmas/theorems stated in Section \ref{secdccgmc}}

\begin{theorem}[Theorem \ref{GMCtoDCCTh}] \label{GMC2DCC}
If $ \Gamma  \vdash  \ottnt{a}  :  \ottnt{A} $ in GMC($ \mathcal{L} $), then $  \overline{  \Gamma  }   \vdash   \overline{ \ottnt{a} }   :   \overline{ \ottnt{A} }  $ in DCC($ \mathcal{L} $). Further, if $ \Gamma  \vdash  \ottnt{a_{{\mathrm{1}}}}  :  \ottnt{A} $ and $ \Gamma  \vdash  \ottnt{a_{{\mathrm{2}}}}  :  \ottnt{A} $ such that $\ottnt{a_{{\mathrm{1}}}} \equiv \ottnt{a_{{\mathrm{2}}}}$ in GMC($ \mathcal{L} $), then $ \overline{ \ottnt{a_{{\mathrm{1}}}} }  \simeq  \overline{ \ottnt{a_{{\mathrm{2}}}} } $ in DCC($ \mathcal{L} $).
\end{theorem}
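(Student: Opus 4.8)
The plan is to prove the two halves of the statement separately: first type preservation by induction on the GMC derivation, then preservation of equality by reducing to the generating equations of GMC's equational theory.

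For type preservation I proceed by induction on the derivation of $\Gamma \vdash a : A$. The simply-typed $\lambda$-calculus rules are immediate, since on that fragment $\overline{\cdot}$ is the identity and commutes with $\to,\times,+$. The four interesting cases are \rref{M-Return,M-Fmap,M-Join,M-Up}. In each case I unfold the relevant clause of Figure~\ref{GMCtoDCC}, appeal to the induction hypothesis on the immediate subterms, and then apply \rref{DCC-Eta,DCC-Bind}. The one thing needing care is discharging the protection side-condition of \rref{DCC-Bind}. Here I use that, over a bounded join-semilattice seen as a preordered monoid, the monoid product is $\vee$, the unit is $\bot$, and the preorder is the semilattice order $\sqsubseteq$, so the GMC grade arithmetic translates directly (e.g.\ $\overline{T_{1}A}=\mathcal{T}_{\bot}\overline{A}$ and $\overline{T_{m_1\cdot m_2}A}=\mathcal{T}_{\ell_1\vee\ell_2}\overline{A}$). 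Concretely: for \rref{M-Return} the translated $\mathbf{eta}^{\bot}$ typechecks with no obligation; for \rref{M-Fmap} the inner bind needs $\ell\sqsubseteq\mathcal{T}_{\ell}\overline{B}$, which follows from \rref{Prot-Monad} and reflexivity of $\sqsubseteq$; for \rref{M-Join} the two nested binds need $\ell_2\sqsubseteq\mathcal{T}_{\ell_1\vee\ell_2}\overline{A}$ and $\ell_1\sqsubseteq\mathcal{T}_{\ell_1\vee\ell_2}\overline{A}$, both from \rref{Prot-Monad} since $\ell_i\sqsubseteq\ell_1\vee\ell_2$; and for \rref{M-Up} the bind needs $\ell_1\sqsubseteq\mathcal{T}_{\ell_2}\overline{A}$, from \rref{Prot-Monad} using the hypothesis $\ell_1\sqsubseteq\ell_2$ of \rref{M-Up}. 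A routine check then confirms the translated terms land at the advertised types.

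For preservation of equality, recall that $a_1\simeq a_2$ in DCC means $\lfloor a_1\rfloor\equiv\lfloor a_2\rfloor$ as plain $\lambda$-terms, with $\lfloor\mathbf{eta}^{\ell}a\rfloor=\lfloor a\rfloor$ and $\lfloor\mathbf{bind}^{\ell}x=a\ \mathbf{in}\ b\rfloor=\lfloor b\rfloor\{\lfloor a\rfloor/x\}$. Since plain $\beta\eta$-equality is a congruence closed under substitution and erasure is defined by structural recursion, $\simeq$ is a congruence on DCC terms; moreover $\overline{\cdot}$ is compositional and satisfies $\overline{b\{a/x\}}=\overline{b}\{\overline{a}/x\}$ by an easy induction. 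Hence it suffices to check, for each generating equation of GMC (the $\beta\eta$-rules and the rules of Figure~\ref{eqGMC}), that the two sides have $\beta\eta$-equal images under $\lfloor\overline{\cdot}\rfloor$. The $\beta\eta$-rules follow from compositionality. For the rest I first record the collapses $\lfloor\overline{\mathbf{ret}\,a}\rfloor=\lfloor\overline{a}\rfloor$, $\lfloor\overline{\mathbf{join}^{\ell_1,\ell_2}a}\rfloor=\lfloor\overline{a}\rfloor$, $\lfloor\overline{\mathbf{up}^{\ell_1,\ell_2}a}\rfloor=\lfloor\overline{a}\rfloor$, and $\lfloor\overline{\mathbf{lift}^{\ell}f}\rfloor=\lambda x.\,\lfloor\overline{f}\rfloor\,x\equiv_{\eta}\lfloor\overline{f}\rfloor$, so that the graded $\mathbf{bind}$ shorthand of Figure~\ref{eqGMC} erases, through $\overline{\cdot}$, to $\lfloor\overline{f}\rfloor\,\lfloor\overline{a}\rfloor$ up to $\beta\eta$. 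With these, each equation of Figure~\ref{eqGMC} — identity and composition for $\mathbf{lift}$ (\ref{eq:idn},\ref{eq:comp}), reflexivity and transitivity of $\mathbf{up}$ (\ref{eq:refl},\ref{eq:trans}), the $\mathbf{up}$/$\mathbf{bind}$ commutations (\ref{eq:natl},\ref{eq:natr}), the unit laws (\ref{eq:idl},\ref{eq:idr}), and associativity (\ref{eq:assoc}) — reduces to a $\beta\eta$-identity between plain $\lambda$-terms, which I verify directly.

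The main obstacle is the bookkeeping around the protection judgement in the typing half: one must observe that every application of \rref{DCC-Bind} forced by the translation yields an obligation of the shape $\ell\sqsubseteq\mathcal{T}_{\ell'}\overline{C}$ with $\ell\sqsubseteq\ell'$, so that \rref{Prot-Monad} (together with reflexivity and transitivity of $\sqsubseteq$) always suffices — this is precisely where the translation relies on DCC being at least graded-monadic. The equality half is by comparison essentially mechanical once the erasure collapses above are in hand; the only mild annoyance there is tracking substitutions when erasing nested $\mathbf{bind}$s in the associativity and naturality equations.
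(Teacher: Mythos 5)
Your proof is correct and follows essentially the same route as the paper's: induction on the typing derivation for the first half, discharging every protection obligation via \rref{Prot-Monad} exactly as you describe, and an erasure argument for the second half. In fact you supply more detail on the equality half than the paper does (the paper merely asserts that $\lfloor\overline{a_1}\rfloor \equiv \lfloor\overline{a_2}\rfloor$ follows), and your observation that all modal constructs erase to identities up to $\beta\eta$ is precisely the reason that assertion holds.
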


\begin{proof}
By induction on $ \Gamma  \vdash  \ottnt{a}  :  \ottnt{A} $.
\begin{itemize}
\item $\lambda$-calculus. By IH.
\item \Rref{M-Return}. Have: $ \Gamma  \vdash   \ottkw{ret}  \:  \ottnt{a}   :   T_{   \bot   } \:  \ottnt{A}  $ where $ \Gamma  \vdash  \ottnt{a}  :  \ottnt{A} $.\\
By IH, $  \overline{  \Gamma  }   \vdash   \overline{ \ottnt{a} }   :   \overline{ \ottnt{A} }  $. Therefore, $  \overline{  \Gamma  }   \vdash   \mathbf{eta} ^{  \bot  }   \overline{ \ottnt{a} }    :   \mathcal{T}_{  \bot  } \:   \overline{ \ottnt{A} }   $.
\item \Rref{M-Fmap}. Have: $ \Gamma  \vdash   \mathbf{lift}^{  \ell  }  \ottnt{f}   :    T_{  \ell  } \:  \ottnt{A}   \to   T_{  \ell  } \:  \ottnt{B}   $ where $ \Gamma  \vdash  \ottnt{f}  :   \ottnt{A}  \to  \ottnt{B}  $.\\
By IH, $  \overline{  \Gamma  }   \vdash   \overline{ \ottnt{f} }   :    \overline{ \ottnt{A} }   \to   \overline{ \ottnt{B} }   $.\\
Now, $$\infer[\textsc{(Lam)}]{  \overline{  \Gamma  }   \vdash   \lambda  \ottmv{x}  :   \mathcal{T}_{ \ell } \:   \overline{ \ottnt{A} }    .   \mathbf{bind} ^{ \ell } \:  \ottmv{y}  =  \ottmv{x}  \: \mathbf{in} \:   \mathbf{eta} ^{ \ell }   (    \overline{ \ottnt{f} }   \:  \ottmv{y}   )      :    \mathcal{T}_{ \ell } \:   \overline{ \ottnt{A} }    \to   \mathcal{T}_{ \ell } \:   \overline{ \ottnt{B} }    }
          {\infer[\textsc{(Bind)}]{   \overline{  \Gamma  }   ,   \ottmv{x}  :   \mathcal{T}_{ \ell } \:   \overline{ \ottnt{A} }      \vdash   \mathbf{bind} ^{ \ell } \:  \ottmv{y}  =  \ottmv{x}  \: \mathbf{in} \:   \mathbf{eta} ^{ \ell }   (    \overline{ \ottnt{f} }   \:  \ottmv{y}   )     :   \mathcal{T}_{ \ell } \:   \overline{ \ottnt{B} }   }
           {\infer[\textsc{(Var)}]{   \overline{  \Gamma  }   ,   \ottmv{x}  :   \mathcal{T}_{ \ell } \:   \overline{ \ottnt{A} }      \vdash  \ottmv{x}  :   \mathcal{T}_{ \ell } \:   \overline{ \ottnt{A} }   }{} &
            \infer[\textsc{(Eta)}]{   \overline{  \Gamma  }   ,     \ottmv{x}  :   \mathcal{T}_{ \ell } \:   \overline{ \ottnt{A} }     ,   \ottmv{y}  :   \overline{ \ottnt{A} }       \vdash   \mathbf{eta} ^{ \ell }   (    \overline{ \ottnt{f} }   \:  \ottmv{y}   )    :   \mathcal{T}_{ \ell } \:   \overline{ \ottnt{B} }   }
             {\infer[\textsc{(App)}]{   \overline{  \Gamma  }   ,     \ottmv{x}  :   \mathcal{T}_{ \ell } \:   \overline{ \ottnt{A} }     ,   \ottmv{y}  :   \overline{ \ottnt{A} }       \vdash    \overline{ \ottnt{f} }   \:  \ottmv{y}   :   \overline{ \ottnt{B} }  }{} } 
                & \infer[\textsc{(Prot-Monad)}]{ \ell  \sqsubseteq   \mathcal{T}_{ \ell } \:   \overline{ \ottnt{B} }   }{ \ell  \sqsubseteq  \ell } } }
      $$
\item \Rref{M-Join}. Have $ \Gamma  \vdash   \mathbf{join}^{  \ell_{{\mathrm{1}}}  ,  \ell_{{\mathrm{2}}}  }  \ottnt{a}   :   T_{   \ell_{{\mathrm{1}}}  \vee  \ell_{{\mathrm{2}}}   } \:  \ottnt{A}  $ where $ \Gamma  \vdash  \ottnt{a}  :   T_{  \ell_{{\mathrm{1}}}  } \:   T_{  \ell_{{\mathrm{2}}}  } \:  \ottnt{A}   $.\\
By IH, $  \overline{  \Gamma  }   \vdash   \overline{ \ottnt{a} }   :   \mathcal{T}_{ \ell_{{\mathrm{1}}} } \:   \mathcal{T}_{ \ell_{{\mathrm{2}}} } \:   \overline{ \ottnt{A} }    $.\\
Now, $$\infer[\textsc{(Bind)}]{  \overline{  \Gamma  }   \vdash   \mathbf{bind} ^{ \ell_{{\mathrm{1}}} } \:  \ottmv{x}  =   \overline{ \ottnt{a} }   \: \mathbf{in} \:   \mathbf{bind} ^{ \ell_{{\mathrm{2}}} } \:  \ottmv{y}  =  \ottmv{x}  \: \mathbf{in} \:   \mathbf{eta} ^{  \ell_{{\mathrm{1}}}  \vee  \ell_{{\mathrm{2}}}  }  \ottmv{y}     :   \mathcal{T}_{  \ell_{{\mathrm{1}}}  \vee  \ell_{{\mathrm{2}}}  } \:   \overline{ \ottnt{A} }   }
        {\infer[\textsc{(IH)}]{  \overline{  \Gamma  }   \vdash   \overline{ \ottnt{a} }   :   \mathcal{T}_{ \ell_{{\mathrm{1}}} } \:   \mathcal{T}_{ \ell_{{\mathrm{2}}} } \:   \overline{ \ottnt{A} }    }{}
         &
         \infer[\textsc{(Bind)}]{   \overline{  \Gamma  }   ,   \ottmv{x}  :   \mathcal{T}_{ \ell_{{\mathrm{2}}} } \:   \overline{ \ottnt{A} }      \vdash   \mathbf{bind} ^{ \ell_{{\mathrm{2}}} } \:  \ottmv{y}  =  \ottmv{x}  \: \mathbf{in} \:   \mathbf{eta} ^{  \ell_{{\mathrm{1}}}  \vee  \ell_{{\mathrm{2}}}  }  \ottmv{y}    :   \mathcal{T}_{  \ell_{{\mathrm{1}}}  \vee  \ell_{{\mathrm{2}}}  } \:   \overline{ \ottnt{A} }   }
            {\infer[\textsc{(Var)}]{   \overline{  \Gamma  }   ,   \ottmv{x}  :   \mathcal{T}_{ \ell_{{\mathrm{2}}} } \:   \overline{ \ottnt{A} }      \vdash  \ottmv{x}  :   \mathcal{T}_{ \ell_{{\mathrm{2}}} } \:   \overline{ \ottnt{A} }   }{}
             &
             \infer[\textsc{(Eta)}]{   \overline{  \Gamma  }   ,     \ottmv{x}  :   \mathcal{T}_{ \ell_{{\mathrm{2}}} } \:   \overline{ \ottnt{A} }     ,   \ottmv{y}  :   \overline{ \ottnt{A} }       \vdash   \mathbf{eta} ^{  \ell_{{\mathrm{1}}}  \vee  \ell_{{\mathrm{2}}}  }  \ottmv{y}   :   \mathcal{T}_{  \ell_{{\mathrm{1}}}  \vee  \ell_{{\mathrm{2}}}  } \:   \overline{ \ottnt{A} }   }
              {\infer[\textsc{(Var)}]{   \overline{  \Gamma  }   ,     \ottmv{x}  :   \mathcal{T}_{ \ell_{{\mathrm{2}}} } \:   \overline{ \ottnt{A} }     ,   \ottmv{y}  :   \overline{ \ottnt{A} }       \vdash  \ottmv{y}  :   \overline{ \ottnt{A} }  }{} }
             }
         }
      $$
The above derivation uses the judgements $ \ell_{{\mathrm{2}}}  \sqsubseteq   \mathcal{T}_{  \ell_{{\mathrm{1}}}  \vee  \ell_{{\mathrm{2}}}  } \:   \overline{ \ottnt{A} }   $ and $ \ell_{{\mathrm{1}}}  \sqsubseteq   \mathcal{T}_{  \ell_{{\mathrm{1}}}  \vee  \ell_{{\mathrm{2}}}  } \:   \overline{ \ottnt{A} }   $ on the first and the second applications of the bind rule respectively.

\item \Rref{M-Up}. Have: $ \Gamma  \vdash   \mathbf{up}^{  \ell_{{\mathrm{1}}}  ,  \ell_{{\mathrm{2}}}  }  \ottnt{a}   :   T_{  \ell_{{\mathrm{2}}}  } \:  \ottnt{A}  $ where $ \Gamma  \vdash  \ottnt{a}  :   T_{  \ell_{{\mathrm{1}}}  } \:  \ottnt{A}  $ and $ \ell_{{\mathrm{1}}}  \sqsubseteq  \ell_{{\mathrm{2}}} $.\\
By IH, $  \overline{  \Gamma  }   \vdash   \overline{ \ottnt{a} }   :   \mathcal{T}_{ \ell_{{\mathrm{1}}} } \:   \overline{ \ottnt{A} }   $.\\
Now, $$\infer[\textsc{(Bind)}]{  \overline{  \Gamma  }   \vdash   \mathbf{bind} ^{ \ell_{{\mathrm{1}}} } \:  \ottmv{x}  =   \overline{ \ottnt{a} }   \: \mathbf{in} \:   \mathbf{eta} ^{ \ell_{{\mathrm{2}}} }  \ottmv{x}    :   \mathcal{T}_{ \ell_{{\mathrm{2}}} } \:   \overline{ \ottnt{A} }   }
                                {\infer[\textsc{(IH)}]{  \overline{  \Gamma  }   \vdash   \overline{ \ottnt{a} }   :   \mathcal{T}_{ \ell_{{\mathrm{1}}} } \:   \overline{ \ottnt{A} }   }{}
                                 &
                                 \infer[\textsc{(Eta)}]{   \overline{  \Gamma  }   ,   \ottmv{x}  :   \overline{ \ottnt{A} }     \vdash   \mathbf{eta} ^{ \ell_{{\mathrm{2}}} }  \ottmv{x}   :   \mathcal{T}_{ \ell_{{\mathrm{2}}} } \:   \overline{ \ottnt{A} }   }{   \overline{  \Gamma  }   ,   \ottmv{x}  :   \overline{ \ottnt{A} }     \vdash  \ottmv{x}  :   \overline{ \ottnt{A} }  } 
                                 &
                                 \infer[\textsc{(Prot-Monad)}]{ \ell_{{\mathrm{1}}}  \sqsubseteq   \mathcal{T}_{ \ell_{{\mathrm{2}}} } \:   \overline{ \ottnt{A} }   }{ \ell_{{\mathrm{1}}}  \sqsubseteq  \ell_{{\mathrm{2}}} }
                                 }
   $$.
\end{itemize}

Now, for $ \Gamma  \vdash  \ottnt{a_{{\mathrm{1}}}}  :  \ottnt{A} $ and $ \Gamma  \vdash  \ottnt{a_{{\mathrm{2}}}}  :  \ottnt{A} $, if $\ottnt{a_{{\mathrm{1}}}} \equiv \ottnt{a_{{\mathrm{2}}}}$, then $ \lfloor   \overline{ \ottnt{a_{{\mathrm{1}}}} }   \rfloor $ and $ \lfloor   \overline{ \ottnt{a_{{\mathrm{2}}}} }   \rfloor $ are equal as $\lambda$-terms. Hence, $ \overline{ \ottnt{a_{{\mathrm{1}}}} }  \simeq  \overline{ \ottnt{a_{{\mathrm{2}}}} } $ in DCC. 
\end{proof}


\section{Proofs of lemmas/theorems stated in Section \ref{secgcc}}

\begin{theorem}[Theorem \ref{gccsound}]\label{GCCPrf}
If $ \Gamma  \vdash  \ottnt{a}  :  \ottnt{A} $ in GCC, then $ \llbracket  \ottnt{a}  \rrbracket  \in \text{Hom}_{\Ct} ( \llbracket  \Gamma  \rrbracket ,  \llbracket  \ottnt{A}  \rrbracket )$. Further, if $ \Gamma  \vdash  \ottnt{a_{{\mathrm{1}}}}  :  \ottnt{A} $ and $ \Gamma  \vdash  \ottnt{a_{{\mathrm{2}}}}  :  \ottnt{A} $ such that $\ottnt{a_{{\mathrm{1}}}} \equiv \ottnt{a_{{\mathrm{2}}}}$ in GCC, then $ \llbracket  \ottnt{a_{{\mathrm{1}}}}  \rrbracket  =  \llbracket  \ottnt{a_{{\mathrm{2}}}}  \rrbracket  \in \text{Hom}_{\Ct} ( \llbracket  \Gamma  \rrbracket ,  \llbracket  \ottnt{A}  \rrbracket )$.
\end{theorem}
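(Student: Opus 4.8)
The plan is to mirror, dually, the proof of Theorem~\ref{gmcsound} (which the excerpt carries out in full in its appendix). Since a strong $\mathcal{M}$-graded comonad is an oplax monoidal functor from $\Ca(\mathcal{M})$ to $\E^s$, i.e.\ a lax monoidal functor on the opposite categories, every statement about $(\T,\mu,\eta)$ used in the soundness proof for GMC has a formal dual for $(\mathbf{D},\delta,\epsilon)$: $\delta^{m_1,m_2}$ is natural in both arguments, the oplax counit/associativity coherences hold, $\mathbf{D}$ preserves identities and composition, and $\mathbf{D}_m$ carries a strength $t^{\mathbf{D}_m}$ satisfying the diagrams of Figure~\ref{strong}. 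So the first thing I would do is state precisely this dualization and observe that it lets us transport the appendix proof of Theorem~\ref{GMCPrf} verbatim.

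First I would prove the typing part: by induction on the derivation $\Gamma \vdash a : A$. The $\lambda$-calculus cases are standard. For \rref{C-Extract} we use that $\epsilon$ is a natural transformation $\mathbf{D}_1 \to \Id$, so $\llbracket \mathbf{extr}\:a\rrbracket = \epsilon \circ \llbracket a\rrbracket$ is a morphism $\llbracket\Gamma\rrbracket \to \llbracket A\rrbracket$; for \rref{C-Fmap} we use the strength, exactly as in the \rref{M-Fmap} case, with $\mathbf{D}_m$ in place of $\T_m$; for \rref{C-Fork} we use that $\delta^{m_1,m_2} : \mathbf{D}_{m_1\cdot m_2} \to \mathbf{D}_{m_1}\circ\mathbf{D}_{m_2}$; and for \rref{C-Up} we use that $\mathbf{D}^{m_1\le m_2}$ is a natural transformation $\mathbf{D}_{m_1}\to\mathbf{D}_{m_2}$. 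Each case is the formal dual of the corresponding GMC case.

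Next I would prove the equational soundness part: given $\Gamma\vdash a_1:A$, $\Gamma\vdash a_2:A$ with $a_1\equiv a_2$, show $\llbracket a_1\rrbracket = \llbracket a_2\rrbracket$, by inversion on the equality derivation. Equations \eqref{eq:cidn} and \eqref{eq:ccomp} follow because each $\mathbf{D}_m$ is a functor (same commuting-square/strength argument as for \eqref{eq:idn}, \eqref{eq:comp}); the two $\mathbf{up}$-reflexivity/transitivity rules follow because $\mathbf{D}$ is a functor; \eqref{eq:cnatl} and \eqref{eq:cnatr} follow from naturality of $\delta$ in its first and second components respectively (dual to the $\mu$-naturality used for \eqref{eq:natl}, \eqref{eq:natr}), together with $\mathbf{D}^{m_1\le m_2}$ being a \emph{strong} natural transformation; the $\mathbf{extr}$-left/right-identity rules follow from the oplax counit coherences (the commuting triangles of the oplax monoidal structure, dual to Figure~\ref{laxDiag}); and \eqref{eq:cassoc} follows from the oplax associativity coherence (dual to the commuting square of Figure~\ref{laxDiag}). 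In each case one unfolds the definition of the graded-$\mathbf{extend}$ operator $\ottnt{f}\:\leftindex^{m_2}{\ll}\!\!=^{m_1}\ottnt{a}$ through Figure~\ref{intGCC} and reduces to the relevant commuting diagram, exactly as the appendix does for the graded-$\mathbf{bind}$ operator.

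The main obstacle, as in the GMC case, is the bookkeeping around strength: several equalities (notably \eqref{eq:ccomp}, \eqref{eq:cnatl}, \eqref{eq:cnatr}, \eqref{eq:cidl}) require chaining the strength coherence diagrams of Figure~\ref{strong} with naturality squares and product manipulations, so the diagrams of Figures~\ref{prfcd0}--\ref{prfcd3} in the GMC proof must be dualized and re-checked. There is no conceptual difficulty, but care is needed because the cartesian monoidal structure $(\times,\top)$ is not self-dual in an obvious way — the strength still goes $X\otimes F(Y)\to F(X\otimes Y)$ for both monad and comonad, since $\E^s$ is defined with a fixed (non-opposite) strength, so the strength arguments carry over literally rather than dually. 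Once this is observed, the proof is a routine transcription, and I would present it by reference: ``The proof is the exact dual of that of Theorem~\ref{gmcsound}, replacing $(\T,\mu,\eta)$ by $(\mathbf{D},\delta,\epsilon)$ and the lax monoidal coherences of Figure~\ref{laxDiag} by their oplax counterparts,'' spelling out only the \rref{C-Extract}, \rref{C-Fork} typing cases and the $\mathbf{extr}$-identity and \eqref{eq:cassoc} equality cases as representative samples.
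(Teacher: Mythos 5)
Your proposal is correct and takes essentially the same approach as the paper, whose entire proof of this theorem is the one-line remark that it follows from Theorem~\ref{gmcsound} by duality. Your additional observation that the strength is \emph{not} dualized (it still has type $X \otimes F(Y) \to F(X \otimes Y)$ for the comonad, since $\E^s$ fixes a single notion of strength) is a genuine and worthwhile refinement of the paper's terse justification.
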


\begin{proof}
Follows from Theorem \ref{GMCPrf} by duality.
\end{proof}


\section{Proofs of lemmas/theorems stated in Section \ref{secdccgcc}}

\begin{theorem}[Theorem \ref{GCCtoDCCe}]\label{GCC2DCCe}
If $ \Gamma  \vdash  \ottnt{a}  :  \ottnt{A} $ in GCC($ \mathcal{L} $), then $  \overline{  \Gamma  }   \vdash   \overline{ \ottnt{a} }   :   \overline{ \ottnt{A} }  $ in \ED{}($ \mathcal{L} $). Further, if $ \Gamma  \vdash  \ottnt{a_{{\mathrm{1}}}}  :  \ottnt{A} $ and $ \Gamma  \vdash  \ottnt{a_{{\mathrm{2}}}}  :  \ottnt{A} $ such that $\ottnt{a_{{\mathrm{1}}}} \equiv \ottnt{a_{{\mathrm{2}}}}$ in GCC($ \mathcal{L} $), then $ \overline{ \ottnt{a_{{\mathrm{1}}}} }  \simeq  \overline{ \ottnt{a_{{\mathrm{2}}}} } $ in \ED{}($ \mathcal{L} $).
\end{theorem}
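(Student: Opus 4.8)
The plan is to prove both halves of Theorem~\ref{GCCtoDCCe} by induction on the GCC($\mathcal{L}$) typing derivation $\Gamma \vdash a : A$, following the same template as the proof of Theorem~\ref{GMCtoDCCTh}. The $\lambda$-calculus cases are discharged directly by the induction hypothesis. For the modal constructors, observe that $\overline{\mathbf{lift}^{\ell} f}$ and $\overline{\mathbf{up}^{\ell_1,\ell_2} a}$ have exactly the translations used in the GMC case, so their typing derivations are the ones already exhibited for \rref{M-Fmap} and \rref{M-Up}: each is a $\mathbf{bind}$-expression whose body is an $\mathbf{eta}$, and each typechecks via \rref{DCC-Eta} together with \rref{Prot-Monad} (supplied with $\ell \sqsubseteq \ell$ in the $\mathbf{lift}$ case and with the side condition $\ell_1 \sqsubseteq \ell_2$ in the $\mathbf{up}$ case).

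The two cases that genuinely use the passage from DCC to \ED{} are \rref{C-Extract} and \rref{C-Fork}. In the first, $a : D_1 A$, so $\overline{a} : \mathcal{T}_{\bot} \overline{A}$ (recall the unit $1 = \bot$ in the semilattice-as-monoid), and $\overline{\mathbf{extr}\ a} = \mathbf{bind}^{\bot}\ x = \overline{a}\ \mathbf{in}\ x$. Applying \rref{DCC-Bind} with body the variable $x : \overline{A}$ requires the protection judgement $\bot \sqsubseteq \overline{A}$, which is precisely an instance of \rref{Prot-Minimum}. In the second, $a : D_{\ell_1 \vee \ell_2} A$, so $\overline{a} : \mathcal{T}_{\ell_1 \vee \ell_2} \overline{A}$, and $\overline{\mathbf{fork}^{\ell_1,\ell_2} a} = \mathbf{bind}^{\ell_1 \vee \ell_2}\ x = \overline{a}\ \mathbf{in}\ \mathbf{eta}^{\ell_1} \mathbf{eta}^{\ell_2} x$; here \rref{DCC-Bind} demands $\ell_1 \vee \ell_2 \sqsubseteq \mathcal{T}_{\ell_1} \mathcal{T}_{\ell_2} \overline{A}$. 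I would derive this in three moves: $\ell_1 \sqsubseteq \mathcal{T}_{\ell_1} \mathcal{T}_{\ell_2} \overline{A}$ by \rref{Prot-Monad} from $\ell_1 \sqsubseteq \ell_1$; then $\ell_2 \sqsubseteq \mathcal{T}_{\ell_2} \overline{A}$ by \rref{Prot-Monad} and hence $\ell_2 \sqsubseteq \mathcal{T}_{\ell_1} \mathcal{T}_{\ell_2} \overline{A}$ by \rref{Prot-Already}; and finally \rref{Prot-Combine} joins the two into $\ell_1 \vee \ell_2 \sqsubseteq \mathcal{T}_{\ell_1} \mathcal{T}_{\ell_2} \overline{A}$. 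This \rref{C-Fork} step is the one that cannot go through under the original protection rules of \citet{dcc}, so it is the heart of the argument.

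For meaning preservation, recall that \ED{}-equality is erasure equality: $\overline{a_1} \simeq \overline{a_2}$ iff $\lfloor \overline{a_1} \rfloor \equiv \lfloor \overline{a_2} \rfloor$ as plain $\lambda$-terms. I would introduce the evident erasure $\lfloor - \rfloor$ on GCC terms that deletes $\mathbf{extr},\ \mathbf{lift}^{m},\ \mathbf{fork}^{m_1,m_2},\ \mathbf{up}^{m_1,m_2}$ together with their grade annotations, and then establish two routine structural facts. First, for every GCC term $a$, $\lfloor \overline{a} \rfloor \equiv \lfloor a \rfloor$ modulo $\beta\eta$: for instance $\lfloor \overline{\mathbf{extr}\ a} \rfloor = x\{\lfloor \overline{a} \rfloor / x\} = \lfloor \overline{a} \rfloor$, the $\mathbf{fork}$ and $\mathbf{up}$ translations likewise erase to $\lfloor \overline{a} \rfloor$ after the substitution induced by erasing $\mathbf{bind}$, and the $\mathbf{lift}$ translation erases to $\lambda x.\, \lfloor \overline{f} \rfloor\, x$, which is $\eta$-equal to $\lfloor \overline{f} \rfloor$. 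Second, every equation in Figure~\ref{eqGCC}, once both sides are erased, collapses to a $\beta\eta$-equality of $\lambda$-terms, since each comonadic constructor vanishes and what remains is either literally the same term or one $\beta$- or $\eta$-step away. Composing these: if $a_1 \equiv a_2$ in GCC then (by congruence and the second fact) $\lfloor a_1 \rfloor \equiv \lfloor a_2 \rfloor$, hence (by the first fact) $\lfloor \overline{a_1} \rfloor \equiv \lfloor \overline{a_2} \rfloor$, hence $\overline{a_1} \simeq \overline{a_2}$ in \ED{}. The only real obstacle is the typing half, and within it the \rref{C-Fork} derivation above; the equational half is mechanical precisely because \ED{}'s erasure-based equational theory is so coarse that every comonadic equation becomes trivial after erasure.
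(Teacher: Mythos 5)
Your proposal is correct and follows essentially the same route as the paper's own proof: the same induction, the same reuse of the GMC cases for $\mathbf{lift}$ and $\mathbf{up}$, the identical protection derivations for $\mathbf{extr}$ (via \rref{Prot-Minimum}) and for $\mathbf{fork}$ (via \rref{Prot-Combine} applied to \rref{Prot-Monad} and \rref{Prot-Already}), and the same erasure argument for meaning preservation. The only difference is that you spell out the equational half in more detail than the paper's one-line conclusion, which is harmless.
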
 

\begin{proof}
By induction on $ \Gamma  \vdash  \ottnt{a}  :  \ottnt{A} $. Only the cases $\mathbf{extr}$ and $\mathbf{fork}$ are new; for the other cases, follow the proof of Theorem \ref{GMC2DCC}.
\begin{itemize}

\item \Rref{C-Extract}. Have: $ \Gamma  \vdash   \mathbf{extr} \:  \ottnt{a}   :  \ottnt{A} $ where $ \Gamma  \vdash  \ottnt{a}  :   D_{   \bot   } \:  \ottnt{A}  $.\\
By IH, $  \overline{  \Gamma  }   \vdash   \overline{ \ottnt{a} }   :   \mathcal{T}_{  \bot  } \:   \overline{ \ottnt{A} }   $.\\
Now, $$\infer[\textsc{(Bind)}]{  \overline{  \Gamma  }   \vdash   \mathbf{bind} ^{  \bot  } \:  \ottmv{x}  =   \overline{ \ottnt{a} }   \: \mathbf{in} \:  \ottmv{x}   :   \overline{ \ottnt{A} }  }
        {\infer[\textsc{(IH)}]{  \overline{  \Gamma  }   \vdash   \overline{ \ottnt{a} }   :   \mathcal{T}_{  \bot  } \:   \overline{ \ottnt{A} }   }{}
        &
        \infer[\textsc{(Var)}]{   \overline{  \Gamma  }   ,   \ottmv{x}  :   \overline{ \ottnt{A} }     \vdash  \ottmv{x}  :   \overline{ \ottnt{A} }  }{}
        &
        \infer[\textsc{(Prot-Minimum)}]{  \bot   \sqsubseteq   \overline{ \ottnt{A} }  }{}
        }
     $$

\item \Rref{C-Fork}. Have: $ \Gamma  \vdash   \mathbf{fork}^{  \ell_{{\mathrm{1}}}  ,  \ell_{{\mathrm{2}}}  }  \ottnt{a}   :   D_{  \ell_{{\mathrm{1}}}  } \:    D_{  \ell_{{\mathrm{2}}}  } \:  \ottnt{A}    $ where $ \Gamma  \vdash  \ottnt{a}  :   D_{   \ell_{{\mathrm{1}}}  \vee  \ell_{{\mathrm{2}}}   } \:  \ottnt{A}  $.\\
By IH, $  \overline{  \Gamma  }   \vdash   \overline{ \ottnt{a} }   :   \mathcal{T}_{  \ell_{{\mathrm{1}}}  \vee  \ell_{{\mathrm{2}}}  } \:   \overline{ \ottnt{A} }   $.\\
Now, $$\mkern-18mu \infer[\textsc{(Bind)}]{  \overline{  \Gamma  }   \vdash   \mathbf{bind} ^{  \ell_{{\mathrm{1}}}  \vee  \ell_{{\mathrm{2}}}  } \:  \ottmv{x}  =   \overline{ \ottnt{a} }   \: \mathbf{in} \:   \mathbf{eta} ^{ \ell_{{\mathrm{1}}} }   \mathbf{eta} ^{ \ell_{{\mathrm{2}}} }  \ottmv{x}     :   \mathcal{T}_{ \ell_{{\mathrm{1}}} } \:   \mathcal{T}_{ \ell_{{\mathrm{2}}} } \:   \overline{ \ottnt{A} }    }
          { \infer[\textsc{(IH)}]{  \overline{  \Gamma  }   \vdash   \overline{ \ottnt{a} }   :   \mathcal{T}_{  \ell_{{\mathrm{1}}}  \vee  \ell_{{\mathrm{2}}}  } \:   \overline{ \ottnt{A} }   }{}
          &
            \infer[\textsc{(Eta)}]{   \overline{  \Gamma  }   ,   \ottmv{x}  :   \overline{ \ottnt{A} }     \vdash   \mathbf{eta} ^{ \ell_{{\mathrm{1}}} }   \mathbf{eta} ^{ \ell_{{\mathrm{2}}} }  \ottmv{x}    :   \mathcal{T}_{ \ell_{{\mathrm{1}}} } \:   \mathcal{T}_{ \ell_{{\mathrm{2}}} } \:   \overline{ \ottnt{A} }    }
              {\infer[\textsc{(Eta)}]{   \overline{  \Gamma  }   ,   \ottmv{x}  :   \overline{ \ottnt{A} }     \vdash   \mathbf{eta} ^{ \ell_{{\mathrm{2}}} }  \ottmv{x}   :   \mathcal{T}_{ \ell_{{\mathrm{2}}} } \:   \overline{ \ottnt{A} }   }
               {\infer[\textsc{(Var)}]{   \overline{  \Gamma  }   ,   \ottmv{x}  :   \overline{ \ottnt{A} }     \vdash  \ottmv{x}  :   \overline{ \ottnt{A} }  }{}}
              }
          &
            \infer[\textsc{(Comb)}]{  \ell_{{\mathrm{1}}}  \vee  \ell_{{\mathrm{2}}}   \sqsubseteq   \mathcal{T}_{ \ell_{{\mathrm{1}}} } \:   \mathcal{T}_{ \ell_{{\mathrm{2}}} } \:   \overline{ \ottnt{A} }    }
              {\infer[\textsc{(Mon)}]{ \ell_{{\mathrm{1}}}  \sqsubseteq   \mathcal{T}_{ \ell_{{\mathrm{1}}} } \:   \mathcal{T}_{ \ell_{{\mathrm{2}}} } \:   \overline{ \ottnt{A} }    }{}
              &
              \infer[\textsc{(Aldy)}]{ \ell_{{\mathrm{2}}}  \sqsubseteq   \mathcal{T}_{ \ell_{{\mathrm{1}}} } \:   \mathcal{T}_{ \ell_{{\mathrm{2}}} } \:   \overline{ \ottnt{A} }    }
               {\infer[\textsc{(Mon)}]{ \ell_{{\mathrm{2}}}  \sqsubseteq   \mathcal{T}_{ \ell_{{\mathrm{2}}} } \:   \overline{ \ottnt{A} }   }{}}}
          }
     $$ 
\end{itemize}

Now, for $ \Gamma  \vdash  \ottnt{a_{{\mathrm{1}}}}  :  \ottnt{A} $ and $ \Gamma  \vdash  \ottnt{a_{{\mathrm{2}}}}  :  \ottnt{A} $, if $\ottnt{a_{{\mathrm{1}}}} \equiv \ottnt{a_{{\mathrm{2}}}}$, then $ \lfloor   \overline{ \ottnt{a_{{\mathrm{1}}}} }   \rfloor $ and $ \lfloor   \overline{ \ottnt{a_{{\mathrm{2}}}} }   \rfloor $ are equal as $\lambda$-terms. Hence, $ \overline{ \ottnt{a_{{\mathrm{1}}}} }  \simeq  \overline{ \ottnt{a_{{\mathrm{2}}}} } $ in \ED{}.
\end{proof}


\section{Proofs of lemmas/theorems stated in Section \ref{secgmcc}}

\begin{theorem}[Theorem \ref{gmccsound}]
If $ \Gamma  \vdash  \ottnt{a}  :  \ottnt{A} $ in GMCC, then $ \llbracket  \ottnt{a}  \rrbracket  \in \text{Hom}_{\Ct} ( \llbracket  \Gamma  \rrbracket ,  \llbracket  \ottnt{A}  \rrbracket )$. Further, if $ \Gamma  \vdash  \ottnt{a_{{\mathrm{1}}}}  :  \ottnt{A} $ and $ \Gamma  \vdash  \ottnt{a_{{\mathrm{2}}}}  :  \ottnt{A} $ such that $\ottnt{a_{{\mathrm{1}}}} \equiv \ottnt{a_{{\mathrm{2}}}}$ in GMCC, then $ \llbracket  \ottnt{a_{{\mathrm{1}}}}  \rrbracket  =  \llbracket  \ottnt{a_{{\mathrm{2}}}}  \rrbracket  \in \text{Hom}_{\Ct} ( \llbracket  \Gamma  \rrbracket ,  \llbracket  \ottnt{A}  \rrbracket )$.
\end{theorem}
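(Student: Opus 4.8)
The plan is to prove this in exactly the same shape as Theorem~\ref{GMCPrf} (the soundness proof for GMC), since GMCC is literally the union of GMC and GCC with four extra equations, and its model is a strong monoidal functor $\mathbf{S}$ which is simultaneously a strong graded monad $(\mathbf{S},\mu,\eta)$ and a strong graded comonad $(\mathbf{S},\delta,\epsilon)$ satisfying the four invertibility identities $\epsilon\circ\eta=\text{id}$, $\eta\circ\epsilon=\text{id}$, $\delta\circ\mu=\text{id}$, $\mu\circ\delta=\text{id}$. First I would establish the well-typedness half: given a GMCC derivation $ \Gamma  \vdash  \ottnt{a}  :  \ottnt{A} $, induct on the typing derivation. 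The $\lambda$-calculus cases are standard (bicartesian closed structure of $\Ct$); the \rref{M-Return,M-Fmap,M-Join,M-Up} cases are handled exactly as in Theorem~\ref{GMCPrf} reading $\mathbf{S}_m$ for $\mathbf{T}_m$; the \rref{C-Extract,C-Fmap,C-Fork,C-Up} cases are handled dually, reading $\mathbf{S}_m$ for $\mathbf{D}_m$ and using $\epsilon,\delta$ in place of $\eta,\mu$, exactly as in Theorem~\ref{GCCPrf}. Since $\mathbf{S}$ carries all of $t^{\mathbf{S}_m}$, $\eta$, $\mu$, $\epsilon$, $\delta$, each term former composes to a morphism of the required hom-set.

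Next I would do the equational-soundness half: assume $ \Gamma  \vdash  \ottnt{a_{{\mathrm{1}}}}  :  \ottnt{A} $ and $ \Gamma  \vdash  \ottnt{a_{{\mathrm{2}}}}  :  \ottnt{A} $ with $\ottnt{a_{{\mathrm{1}}}} \equiv \ottnt{a_{{\mathrm{2}}}}$ in GMCC, and induct on the derivation of the equality (it is a congruence, so the congruence/reflexivity/symmetry/transitivity closure cases are immediate from the induction hypothesis together with functoriality of $\llbracket - \rrbracket$ under the term constructors). For the axioms: the $\beta\eta$-rules of the $\lambda$-calculus are standard; equations \eqref{eq:idn}--\eqref{eq:assoc} are discharged verbatim by the argument of Theorem~\ref{GMCPrf} (functoriality of $\mathbf{S}_m$, functoriality of $\mathbf{S}$ on the order category, naturality of $\mu$ in each component, the lax-monoidal triangle/square laws, and the strength axioms of Figure~\ref{strong}); equations \eqref{eq:cidn}--\eqref{eq:cassoc} are discharged by the dual argument from Theorem~\ref{GCCPrf} (oplax-monoidal laws, naturality of $\delta$, strength). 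The only genuinely new obligation is the four bridging equations $ \mathbf{extr} \:   (   \ottkw{ret}  \:  \ottnt{a}   )   \equiv \ottnt{a}$, $ \ottkw{ret}  \:   (   \mathbf{extr} \:  \ottnt{a}   )   \equiv \ottnt{a}$, $ \mathbf{fork}^{ m_{{\mathrm{1}}} , m_{{\mathrm{2}}} }   (   \mathbf{join}^{ m_{{\mathrm{1}}} , m_{{\mathrm{2}}} }  \ottnt{a}   )   \equiv \ottnt{a}$, $ \mathbf{join}^{ m_{{\mathrm{1}}} , m_{{\mathrm{2}}} }   (   \mathbf{fork}^{ m_{{\mathrm{1}}} , m_{{\mathrm{2}}} }  \ottnt{a}   )   \equiv \ottnt{a}$. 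These unfold under $\llbracket-\rrbracket$ to $\epsilon\circ\eta\circ\llbracket \ottnt{a}\rrbracket$, $\eta\circ\epsilon\circ\llbracket \ottnt{a}\rrbracket$, $\delta^{m_1,m_2}\circ\mu^{m_1,m_2}\circ\llbracket \ottnt{a}\rrbracket$, $\mu^{m_1,m_2}\circ\delta^{m_1,m_2}\circ\llbracket \ottnt{a}\rrbracket$ respectively, so each reduces immediately to one of the four invertibility identities recorded in Section~\ref{GMCCModel}, hence equals $\llbracket \ottnt{a}\rrbracket$.

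I do not expect a serious obstacle here: the whole point of the GMCC model being a strong monoidal (not merely lax/oplax) functor is precisely to make these four bridging equations hold definitionally in the semantics. The mild care needed is bookkeeping — confirming that the $\mu,\eta$ extracted from $\mathbf{S}$ as a graded monad and the $\delta,\epsilon$ extracted from $\mathbf{S}$ as a graded comonad are mutually inverse componentwise in the way the equations require (which is exactly what "$F_0$ and $F_2(X,Y)$ invertible" delivers, $\eta$ and $\epsilon$ coming from $F_0^{\pm 1}$ composed with the order-map, $\mu$ and $\delta$ from $F_2^{\pm 1}$), and that the monadic and comonadic subderivations really are closed under the shared term formers so the induction hypotheses line up. Since nothing in the GMC or GCC soundness proof used any property of $\mathbf{T}$ or $\mathbf{D}$ that fails for $\mathbf{S}$, those parts transfer with no new ideas; I would therefore present the proof mostly by citing Theorems~\ref{GMCPrf} and~\ref{GCCPrf} and then spelling out only the four bridging-equation verifications in full.
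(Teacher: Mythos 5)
Your proposal matches the paper's own proof: the paper likewise discharges both halves by citing Theorems~\ref{GMCPrf} and~\ref{GCCPrf} (a strong monoidal functor being both lax and oplax monoidal) and then appeals to the four identities $\epsilon\circ\eta=\text{id}$, $\eta\circ\epsilon=\text{id}$, $\delta\circ\mu=\text{id}$, $\mu\circ\delta=\text{id}$ from Section~\ref{GMCCModel} for the bridging equations. Your version just spells out slightly more of the bookkeeping; no gap.
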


\begin{proof}
Let $ \Gamma  \vdash  \ottnt{a}  :  \ottnt{A} $. Then, by Theorems \ref{GMCPrf} and \ref{GCCPrf}, $ \llbracket  \ottnt{a}  \rrbracket  \in \text{Hom}_{\Ct} ( \llbracket  \Gamma  \rrbracket ,  \llbracket  \ottnt{A}  \rrbracket )$ because a strong monoidal functor is also a lax and an oplax monoidal functor. For $ \Gamma  \vdash  \ottnt{a_{{\mathrm{1}}}}  :  \ottnt{A} $ and $ \Gamma  \vdash  \ottnt{a_{{\mathrm{2}}}}  :  \ottnt{A} $, if $\ottnt{a_{{\mathrm{1}}}} \equiv \ottnt{a_{{\mathrm{2}}}}$, then $ \llbracket  \ottnt{a_{{\mathrm{1}}}}  \rrbracket  =  \llbracket  \ottnt{a_{{\mathrm{2}}}}  \rrbracket $ by Theorems \ref{GMCPrf} and \ref{GCCPrf} and the equations listed in Section \ref{GMCCModel}. 

\end{proof}


\begin{theorem}[Theorem \ref{gmcccomplete}]
Given any preordered monoid $ \mathcal{M} $, for typing derivations $ \Gamma  \vdash  \ottnt{a_{{\mathrm{1}}}}  :  \ottnt{A} $ and $ \Gamma  \vdash  \ottnt{a_{{\mathrm{2}}}}  :  \ottnt{A} $ in GMCC($ \mathcal{M} $), if $ \llbracket  \ottnt{a_{{\mathrm{1}}}}  \rrbracket  =  \llbracket  \ottnt{a_{{\mathrm{2}}}}  \rrbracket $ in all models of GMCC($ \mathcal{M} $), then $\ottnt{a_{{\mathrm{1}}}} \equiv \ottnt{a_{{\mathrm{2}}}}$ is derivable in GMCC($ \mathcal{M} $).
\end{theorem}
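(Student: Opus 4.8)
The plan is to use the standard term-model construction, following \citet{jacob}. The argument has the following shape: I will exhibit one particular model of GMCC($ \mathcal{M} $), the \emph{generic model}, built inside a syntactically defined \emph{classifying category} $\Ct_{\text{GMCC}}$, with the defining property that the interpretation $ \llbracket   \_   \rrbracket $ in this model sends a term to (essentially) its own $\equiv$-equivalence class. Granted such a model, completeness is immediate: if $ \llbracket  \ottnt{a_{{\mathrm{1}}}}  \rrbracket  =  \llbracket  \ottnt{a_{{\mathrm{2}}}}  \rrbracket $ in \emph{all} models, then in particular this holds in the generic model, so the $\equiv$-classes of $\ottnt{a_{{\mathrm{1}}}}$ and $\ottnt{a_{{\mathrm{2}}}}$ coincide, i.e. $\ottnt{a_{{\mathrm{1}}}} \equiv \ottnt{a_{{\mathrm{2}}}}$ is derivable in GMCC($ \mathcal{M} $).

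First I would construct $\Ct_{\text{GMCC}}$. Its objects are the types of GMCC($ \mathcal{M} $) (equivalently, one-variable contexts); a morphism $\ottnt{A} \to \ottnt{B}$ is the $\equiv$-equivalence class of a term $ \ottmv{x}  :  \ottnt{A}   \vdash  \ottnt{b}  :  \ottnt{B} $, with identity $[\ottmv{x}]$ and composition given by capture-avoiding substitution. Because the equational theory is a congruence compatible with substitution (by construction), this is well defined, and using only the $\lambda$-calculus fragment together with its $\beta\eta$-rules one checks in the usual way that $\Ct_{\text{GMCC}}$ is bicartesian closed, with $\times$, $+$, $ \ottkw{Unit} $, $ \mathbf{Void} $ and the exponential interpreted by the corresponding type formers. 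This part is entirely routine.

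Next I would equip $\Ct_{\text{GMCC}}$ with the modal structure, i.e. produce a strong monoidal functor $\mathbf{S}$ from $\Ca( \mathcal{M} )$ to $\ES$ (now taken over $\Ct_{\text{GMCC}}$). On objects, $\mathbf{S}_{m}\ottnt{A} \triangleq  S_{ m } \:  \ottnt{A} $; on a morphism $[\ottnt{b}]$, $\mathbf{S}_{m}[\ottnt{b}] \triangleq [ \mathbf{lift}^{ m }   (   \lambda  \ottmv{x}  .  \ottnt{b}   )  ]$, with functoriality from equations (\ref{eq:idn}) and (\ref{eq:comp}); the action on $ m_{{\mathrm{1}}}   \leq   m_{{\mathrm{2}}} $ is given by $ \mathbf{up}^{ m_{{\mathrm{1}}} , m_{{\mathrm{2}}} } $, with functoriality from (\ref{eq:refl}) and (\ref{eq:trans}). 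The tensorial strength of each $\mathbf{S}_{m}$ is definable using $\mathbf{lift}$ and the cartesian structure (it is exactly the strength implicit in the interpretation of $\mathbf{lift}$ in Figure \ref{intGMC}). The lax structure maps $\mu,\eta$ are realized by $\mathbf{join}$ and $\mathbf{ret}$, the oplax structure maps $\delta,\epsilon$ by $\mathbf{fork}$ and $\mathbf{extr}$; naturality and the unit/associativity coherences of Figure \ref{laxDiag} (and their comonadic duals) follow from equations (\ref{eq:natl})–(\ref{eq:assoc}) and the corresponding GCC equations, while invertibility of these structure maps — the requirement that makes $\mathbf{S}$ \emph{strong} monoidal rather than merely lax and oplax — is precisely the content of the four extra GMCC equations $ \mathbf{extr} \:   (   \ottkw{ret}  \:  \ottnt{a}   )   \equiv \ottnt{a}$, $ \ottkw{ret}  \:   (   \mathbf{extr} \:  \ottnt{a}   )   \equiv \ottnt{a}$, $ \mathbf{fork}^{ m_{{\mathrm{1}}} , m_{{\mathrm{2}}} }   (   \mathbf{join}^{ m_{{\mathrm{1}}} , m_{{\mathrm{2}}} }  \ottnt{a}   )   \equiv \ottnt{a}$ and $ \mathbf{join}^{ m_{{\mathrm{1}}} , m_{{\mathrm{2}}} }   (   \mathbf{fork}^{ m_{{\mathrm{1}}} , m_{{\mathrm{2}}} }  \ottnt{a}   )   \equiv \ottnt{a}$. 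With $\mathbf{S}$ in hand, the triple $(\Ct_{\text{GMCC}},\mathbf{S})$ is a model by Theorem \ref{gmccsound}, and the key lemma — proved by induction on the typing derivation, matching each clause of the interpretation in Figures \ref{intGMC} and \ref{intGCC} against the corresponding syntactic combinator, using a substitution lemma and $\beta\eta$ in the $\lambda$-fragment plus the modal equations — states that $ \llbracket  \ottnt{a}  \rrbracket $ in this model is the $\equiv$-class of $\ottnt{a}$. This delivers completeness as above (and, incidentally, sets up Theorem \ref{gmccgeneric}).

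The main obstacle is the third step: verifying that $\mathbf{S}$ really is a strong monoidal functor into $\ES$, which bundles together (i) the strength axioms of Figure \ref{strong} for each $\mathbf{S}_{m}$, (ii) $\mu,\eta$ being \emph{strong} natural transformations satisfying the lax-monoidal coherences, (iii) the dual statements for $\delta,\epsilon$, and (iv) $\mu,\eta$ being invertible with inverses $\delta,\epsilon$. Each reduces to a specific GMCC equation, but managing the strength bookkeeping in the coherence diagrams involving $t^{\mathbf{S}_{m}}$, and confirming that the chosen syntactic definitions validate them using \emph{only} the listed axioms, is where essentially all the work lies; the classifying-category construction and the final logical step are routine by comparison.
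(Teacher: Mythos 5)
Your proposal matches the paper's own proof essentially step for step: the paper builds the classifying category $\mathds{F}$ (with $\text{Hom}_{\mathds{F}}(A,B)$ the closed terms of type $A \to B$ modulo $\beta\eta$, equivalent to your one-free-variable presentation), equips it with a strong monoidal functor $\mathbb{S}$ given syntactically by $\mathbf{lift}$, $\mathbf{up}$, $\mathbf{ret}$/$\mathbf{join}$ and $\mathbf{extr}$/$\mathbf{fork}$, verifies the strength, naturality, coherence and invertibility conditions from exactly the equations you cite, and then proves by induction that the generic interpretation of a term is its own $\equiv$-class before concluding. You have also correctly located where the real work lies, namely the verification that $\mathbb{S}$ is a strong monoidal functor into $\ES$.
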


\begin{proof}
We use standard term-model construction for proving this theorem. First, fix the preordered monoid, $ \mathcal{M} $. Next, construct the freely generated bicartesian closed category, $ \mathds{F} $, from the syntax of GMCC($ \mathcal{M} $), as follows:\\
$ \text{Obj}(\mathds{F}) , A, B ::=  \ottkw{Unit}  \: | \:  \mathbf{Void}  \: | \:  \ottnt{A}  \times  \ottnt{B}  \: | \:  \ottnt{A}  +  \ottnt{B}  \: | \:  \ottnt{A}  \to  \ottnt{B}  \: | \:  S_{ m } \:  \ottnt{A} $.\\
$ \text{Hom}_{\mathds{F} } ( \ottnt{A}  ,  \ottnt{B} )  = \{ t \: | \:   \emptyset   \vdash  \ottnt{t}  :   \ottnt{A}  \to  \ottnt{B}   \} / =_{\beta\eta}$.\\
The objects of $ \mathds{F} $ are the types of GMCC($ \mathcal{M} $) while the morphisms are the terms of GMCC($ \mathcal{M} $) quotiented by $\beta\eta$-equivalence. This is the classifying category of GMCC($ \mathcal{M} $).

Now, we define a strong monoidal functor $\mathbb{S}$ from $\Ca( \mathcal{M} )$ to $ \mathbf{End}_{\mathds{F} }^{\text{s} } $.\\
$\mathbb{S}(m) := \mathbb{S}_m$ where $ \mathbb{S}_{ m } \:  \ottnt{A}  :=  S_{ m } \:  \ottnt{A} $ and $\mathbb{S}_m (A \xrightarrow{t} B) :=  S_{ m } \:  \ottnt{A}  \xrightarrow{ \mathbf{lift}^{ m }  \ottnt{t} }  S_{ m } \:  \ottnt{B} $.\\
$\mathbb{S}( m_{{\mathrm{1}}}   \leq   m_{{\mathrm{2}}} ) := \mathbb{S}_{m_{{\mathrm{1}}}} \xrightarrow{\mathbb{S}^{ m_{{\mathrm{1}}}   \leq   m_{{\mathrm{2}}} }} \mathbb{S}_{m_{{\mathrm{2}}}}$ where $\mathbb{S}^{ m_{{\mathrm{1}}}   \leq   m_{{\mathrm{2}}} }_A :=  \lambda  \ottmv{x}  :   S_{ m_{{\mathrm{1}}} } \:  \ottnt{A}   .   \mathbf{up}^{ m_{{\mathrm{1}}} , m_{{\mathrm{2}}} }  \ottmv{x}  $.\\

Need to check that $\mathbb{S}$ is well-defined. In other words, need to show that $\mathbb{S}_m$s are strong endofunctors and $\mathbb{S}^{ m_{{\mathrm{1}}}   \leq   m_{{\mathrm{2}}} }$s are strong natural transformations.

$\mathbb{S}_m$ is a functor because:\\
For $A \in  \text{Obj}(\mathds{F}) $, $ \mathbb{S}_{ m } \:   \text{id}_{ \ottnt{A} }   =  \mathbf{lift}^{ m }   \text{id}_{ \ottnt{A} }   =  \mathbf{lift}^{ m }   (   \lambda  \ottmv{x}  :  \ottnt{A}  .  \ottmv{x}   )   =  \lambda  \ottmv{x}  :   S_{ m } \:  \ottnt{A}   .  \ottmv{x}  =  \text{id}_{  S_{ m } \:  \ottnt{A}  } $.\\
For $f \in  \text{Hom}_{\mathds{F} } ( \ottnt{A}  ,  \ottnt{B} ) $ and $g \in  \text{Hom}_{\mathds{F} } ( \ottnt{B}  ,  \ottnt{C} ) $, \\ $ \mathbb{S}_{ m } \:   (   \ottnt{g}  \circ  \ottnt{f}   )   =  \mathbf{lift}^{ m }   (   \ottnt{g}  \circ  \ottnt{f}   )   =  \mathbf{lift}^{ m }   (    \lambda  \ottmv{x}  .  \ottnt{g}   \:   (   \ottnt{f}  \:  \ottmv{x}   )    )   =   \lambda  \ottmv{x}  .   (   \mathbf{lift}^{ m }  \ottnt{g}   )    \:   (    (   \mathbf{lift}^{ m }  \ottnt{f}   )   \:  \ottmv{x}   )   =   \mathbb{S}_{ m } \:  \ottnt{g}   \circ   \mathbb{S}_{ m } \:  \ottnt{f}  $.\\

Now we define strength of $\mathbb{S}_m$, $t^{\mathbb{S}_m}$.\\
We have, $  \ottmv{x}  :   \ottnt{A}  \times   S_{ m } \:  \ottnt{B}     \vdash   \lambda  \ottmv{y}  .   (   \mathbf{proj}_1 \:  \ottmv{x}   ,  \ottmv{y}  )    :    \ottnt{B}  \to  \ottnt{A}   \times  \ottnt{B}  $.\\
Then, $  \ottmv{x}  :   \ottnt{A}  \times   S_{ m } \:  \ottnt{B}     \vdash   \mathbf{lift}^{ m }   (   \lambda  \ottmv{y}  .   (   \mathbf{proj}_1 \:  \ottmv{x}   ,  \ottmv{y}  )    )    :    S_{ m } \:  \ottnt{B}   \to   S_{ m } \:   (   \ottnt{A}  \times  \ottnt{B}   )    $.\\
And, $  \ottmv{x}  :   \ottnt{A}  \times   S_{ m } \:  \ottnt{B}     \vdash    (   \mathbf{lift}^{ m }   (   \lambda  \ottmv{y}  .   (   \mathbf{proj}_1 \:  \ottmv{x}   ,  \ottmv{y}  )    )    )   \:   (   \mathbf{proj}_2 \:  \ottmv{x}   )    :   S_{ m } \:   (   \ottnt{A}  \times  \ottnt{B}   )   $.\\
So, $  \emptyset   \vdash    \lambda  \ottmv{x}  .   (   \mathbf{lift}^{ m }   (   \lambda  \ottmv{y}  .   (   \mathbf{proj}_1 \:  \ottmv{x}   ,  \ottmv{y}  )    )    )    \:   (   \mathbf{proj}_2 \:  \ottmv{x}   )    :    \ottnt{A}  \times   S_{ m } \:  \ottnt{B}    \to   S_{ m } \:   (   \ottnt{A}  \times  \ottnt{B}   )    $.\\
We define: $ t^{\mathbb{S}_{ m } }_{ \ottnt{A} ,  \ottnt{B} }  \triangleq   \lambda  \ottmv{x}  .   (   \mathbf{lift}^{ m }   (   \lambda  \ottmv{y}  .   (   \mathbf{proj}_1 \:  \ottmv{x}   ,  \ottmv{y}  )    )    )    \:   (   \mathbf{proj}_2 \:  \ottmv{x}   )  $.\\
Check that $ t^{\mathbb{S}_{ m } }_{ \ottnt{A} ,  \ottnt{B} } $ is natural in both $A$ and $B$:\\

The left diagram in Figure \ref{strcd} commutes because:
\begin{align*}
&   t^{\mathbb{S}_{ m } }_{ \ottnt{A'} ,  \ottnt{B} }   \circ   (   \ottnt{f}  \times   \text{id}    )    & & \;\;   \mathbb{S}_{ m } \:   (   \ottnt{f}  \times   \text{id}    )    \circ   t^{\mathbb{S}_{ m } }_{ \ottnt{A} ,  \ottnt{B} }  \\ 
= &   \lambda  \ottmv{x}  .   t^{\mathbb{S}_{ m } }_{ \ottnt{A'} ,  \ottnt{B} }    \:   (    (   \ottnt{f}  \times   \text{id}    )   \:  \ottmv{x}   )   & & =   \lambda  \ottmv{x}  .   (   \mathbf{lift}^{ m }   (   \ottnt{f}  \times   \text{id}    )    )    \:   (    (   \mathbf{lift}^{ m }   (   \lambda  \ottmv{y}  .   (   \mathbf{proj}_1 \:  \ottmv{x}   ,  \ottmv{y}  )    )    )   \:   (   \mathbf{proj}_2 \:  \ottmv{x}   )    )  \\
= &   \lambda  \ottmv{x}  .   t^{\mathbb{S}_{ m } }_{ \ottnt{A'} ,  \ottnt{B} }    \:   (   \ottnt{f}  \:   \mathbf{proj}_1 \:  \ottmv{x}    ,   \mathbf{proj}_2 \:  \ottmv{x}   )   & & =   \lambda  \ottmv{x}  .   (    \lambda  \ottmv{z}  .   (   \mathbf{lift}^{ m }   (   \ottnt{f}  \times   \text{id}    )    )    \:   (    (   \mathbf{lift}^{ m }   (   \lambda  \ottmv{y}  .   (   \mathbf{proj}_1 \:  \ottmv{x}   ,  \ottmv{y}  )    )    )   \:  \ottmv{z}   )    )    \:   (   \mathbf{proj}_2 \:  \ottmv{x}   )   \\
= &   \lambda  \ottmv{x}  .   (   \mathbf{lift}^{ m }   (   \lambda  \ottmv{y}  .   (   \ottnt{f}  \:   \mathbf{proj}_1 \:  \ottmv{x}    ,  \ottmv{y}  )    )    )    \:   (   \mathbf{proj}_2 \:  \ottmv{x}   )   & & =   \lambda  \ottmv{x}  .   (   \mathbf{lift}^{ m }   (    \lambda  \ottmv{z}  .   (   \ottnt{f}  \times   \text{id}    )    \:   (    (   \lambda  \ottmv{y}  .   (   \mathbf{proj}_1 \:  \ottmv{x}   ,  \ottmv{y}  )    )   \:  \ottmv{z}   )    )    )    \:   (   \mathbf{proj}_2 \:  \ottmv{x}   )   \\
& & & =   \lambda  \ottmv{x}  .   (   \mathbf{lift}^{ m }   (   \lambda  \ottmv{z}  .   (   \ottnt{f}  \:   \mathbf{proj}_1 \:  \ottmv{x}    ,  \ottmv{z}  )    )    )    \:   (   \mathbf{proj}_2 \:  \ottmv{x}   )  
\end{align*}

The right diagram in Figure \ref{strcd} commutes because:
\begin{align*}
\mkern-28mu &    t^{\mathbb{S}_{ m } }_{ \ottnt{A} ,  \ottnt{B'} }   \circ   \text{id}    \times   \mathbb{S}_{ m } \:  \ottnt{g}   & &\;\;   \mathbb{S}_{ m } \:   (    \text{id}   \times  \ottnt{g}   )    \circ   t^{\mathbb{S}_{ m } }_{ \ottnt{A} ,  \ottnt{B} }   \\
\mkern-28mu = &   \lambda  \ottmv{x}  .   t^{\mathbb{S}_{ m } }_{ \ottnt{A} ,  \ottnt{B'} }    \:   (   \mathbf{proj}_1 \:  \ottmv{x}   ,    (   \mathbf{lift}^{ m }  \ottnt{g}   )   \:   (   \mathbf{proj}_2 \:  \ottmv{x}   )    )   & & =    \lambda  \ottmv{x}  .   (   \mathbf{lift}^{ m }   (    \text{id}   \times  \ottnt{g}   )    )    \:   (   \mathbf{lift}^{ m }   (   \lambda  \ottmv{y}  .   (   \mathbf{proj}_1 \:  \ottmv{x}   ,  \ottmv{y}  )    )    )    \:   (   \mathbf{proj}_2 \:  \ottmv{x}   )   \\
\mkern-28mu = &   \lambda  \ottmv{x}  .   (   \mathbf{lift}^{ m }   (   \lambda  \ottmv{y}  .   (   \mathbf{proj}_1 \:  \ottmv{x}   ,  \ottmv{y}  )    )    )    \:   (    (   \mathbf{lift}^{ m }  \ottnt{g}   )   \:   (   \mathbf{proj}_2 \:  \ottmv{x}   )    )   & & =   \lambda  \ottmv{x}  .   (     \lambda  \ottmv{z}  .   (    \mathbf{lift}^{ m }   \text{id}    \times  \ottnt{g}   )    \:   (   \mathbf{lift}^{ m }   \lambda  \ottmv{y}  .   (   \mathbf{proj}_1 \:  \ottmv{x}   ,  \ottmv{y}  )     )    \:  \ottmv{z}   )    \:   \mathbf{proj}_2 \:  \ottmv{x}  \\
\mkern-28mu = &   \lambda  \ottmv{x}  .   (    \lambda  \ottmv{z}  .   (   \mathbf{lift}^{ m }   \lambda  \ottmv{y}  .   (   \mathbf{proj}_1 \:  \ottmv{x}   ,  \ottmv{y}  )     )    \:   (    (   \mathbf{lift}^{ m }  \ottnt{g}   )   \:  \ottmv{z}   )    )    \:   \mathbf{proj}_2 \:  \ottmv{x}   & & =   \lambda  \ottmv{x}  .   (   \mathbf{lift}^{ m }   (     \lambda  \ottmv{z}  .   (    \text{id}   \times  \ottnt{g}   )    \:   (   \lambda  \ottmv{y}  .   (   \mathbf{proj}_1 \:  \ottmv{x}   ,  \ottmv{y}  )    )    \:  \ottmv{z}   )    )    \:   (   \mathbf{proj}_2 \:  \ottmv{x}   )   \\
\mkern-28mu = &   \lambda  \ottmv{x}  .   (   \mathbf{lift}^{ m }   (    \lambda  \ottmv{z}  .   (   \lambda  \ottmv{y}  .   (   \mathbf{proj}_1 \:  \ottmv{x}   ,  \ottmv{y}  )    )    \:   (   \ottnt{g}  \:  \ottmv{z}   )    )    )    \:   (   \mathbf{proj}_2 \:  \ottmv{x}   )   & & =   \lambda  \ottmv{x}  .   (   \mathbf{lift}^{ m }   (    \lambda  \ottmv{z}  .   (    \text{id}   \times  \ottnt{g}   )    \:   (   \mathbf{proj}_1 \:  \ottmv{x}   ,  \ottmv{z}  )    )    )    \:   (   \mathbf{proj}_2 \:  \ottmv{x}   )   \\ 
\mkern-28mu = &   \lambda  \ottmv{x}  .   (   \mathbf{lift}^{ m }   (   \lambda  \ottmv{z}  .   (   \mathbf{proj}_1 \:  \ottmv{x}   ,   \ottnt{g}  \:  \ottmv{z}   )    )    )    \:   (   \mathbf{proj}_2 \:  \ottmv{x}   )   & & =   \lambda  \ottmv{x}  .   (   \mathbf{lift}^{ m }   (   \lambda  \ottmv{z}  .   (   \mathbf{proj}_1 \:  \ottmv{x}   ,   \ottnt{g}  \:  \ottmv{z}   )    )    )    \:   (   \mathbf{proj}_2 \:  \ottmv{x}   )  
\end{align*}

\begin{figure}
\centering
\begin{subfigure}{0.4\textwidth}
\begin{tikzcd}
 \ottnt{A}  \times   S_{ m } \:  \ottnt{B}   \arrow{r}{ t^{\mathbb{S}_{ m } }_{ \ottnt{A} ,  \ottnt{B} } } \arrow{d}{ \ottnt{f}  \times   \text{id}  } &  S_{ m } \:   (   \ottnt{A}  \times  \ottnt{B}   )   \arrow{d}{ \mathbb{S}_{ m } \:   (   \ottnt{f}  \times   \text{id}    )  } \\
 \ottnt{A'}  \times   S_{ m } \:  \ottnt{B}   \arrow{r}{ t^{\mathbb{S}_{ m } }_{ \ottnt{A'} ,  \ottnt{B} } } &  S_{ m } \:   (   \ottnt{A'}  \times  \ottnt{B}   )   
\end{tikzcd}
\end{subfigure}
\begin{subfigure}{0.4\textwidth}
\begin{tikzcd}
 \ottnt{A}  \times   S_{ m } \:  \ottnt{B}   \arrow{r}{ t^{\mathbb{S}_{ m } }_{ \ottnt{A} ,  \ottnt{B} } } \arrow{d}{  \text{id}   \times   \mathbb{S}_{ m } \:  \ottnt{g}  } &  S_{ m } \:   (   \ottnt{A}  \times  \ottnt{B}   )   \arrow{d}{ \mathbb{S}_{ m } \:   (    \text{id}   \times  \ottnt{g}   )  } \\
 \ottnt{A}  \times   S_{ m } \:  \ottnt{B'}   \arrow{r}{ t^{\mathbb{S}_{ m } }_{ \ottnt{A} ,  \ottnt{B'} } } &  S_{ m } \:   (   \ottnt{A}  \times  \ottnt{B'}   )  
\end{tikzcd}
\end{subfigure}
\caption{Commutative diagram}
\label{strcd}
\end{figure} 

Now check that $t^{\mathbb{S}_m}$ satisfies the axioms for strength.

\begin{figure}
\centering
\begin{subfigure}{0.35\textwidth}
\begin{tikzcd}
  (    \ottnt{A}  \times  \ottnt{B}    )   \times    \mathbb{S}_{ m } \:  \ottnt{C}    \arrow{rr}{ t^{\mathbb{S}_{ m } }_{  \ottnt{A}  \times  \ottnt{B}  ,  \ottnt{C} } } \arrow{d}{ \alpha^{-1} } & &   \mathbb{S}_{ m } \:   (    (   \ottnt{A}  \times  \ottnt{B}   )   \times  \ottnt{C}   )   \arrow{d}{ \mathbb{S}_{ m } \:   \alpha^{-1}  } \\
  \ottnt{A}   \times   (    \ottnt{B}   \times    \mathbb{S}_{ m } \:  \ottnt{C}     )   \arrow{r}{  \text{id}   \times   t^{\mathbb{S}_{ m } }_{ \ottnt{B} ,  \ottnt{C} }  } &   \ottnt{A}   \times    \mathbb{S}_{ m } \:   (   \ottnt{B}  \times  \ottnt{C}   )     \arrow{r}{ t^{\mathbb{S}_{ m } }_{ \ottnt{A} ,   \ottnt{B}  \times  \ottnt{C}  } } &  \mathbb{S}_{ m } \:   (   \ottnt{A}  \times   (   \ottnt{B}  \times  \ottnt{C}   )    )  
\end{tikzcd}
\end{subfigure}
\hfill
\begin{subfigure}{0.35\textwidth}
\begin{tikzcd}
  \qquad\quad     \top   \times   \mathbb{S}_{ m } \:  \ottnt{A}     \arrow{r}{ t^{\mathbb{S}_{ m } }_{   \top   ,   \ottnt{A}  } } \arrow{dr}{ \pi_2 } &  \mathbb{S}_{ m }   (     \top    \times   \ottnt{A}    )   \arrow{d}{ \mathbb{S}_{ m } \:   \pi_2  }\\
&  \mathbb{S}_{ m } \:  \ottnt{A} 
\end{tikzcd}
\end{subfigure}
\caption{Commutative diagram}
\label{strcd2}
\end{figure} 

The left diagram in Figure \ref{strcd2} commutes because:

\begin{align*}
&    t^{\mathbb{S}_{ m } }_{ \ottnt{A} ,   \ottnt{B}  \times  \ottnt{C}  }   \circ   (    \text{id}   \times   t^{\mathbb{S}_{ m } }_{ \ottnt{B} ,  \ottnt{C} }    )    \circ   \alpha^{-1}   \\
= &    \lambda  \ottmv{x}  .   t^{\mathbb{S}_{ m } }_{ \ottnt{A} ,   \ottnt{B}  \times  \ottnt{C}  }    \:   (    \text{id}   \times   t^{\mathbb{S}_{ m } }_{ \ottnt{B} ,  \ottnt{C} }    )    \:   (   \mathbf{proj}_1 \:   \mathbf{proj}_1 \:  \ottmv{x}    ,   (   \mathbf{proj}_2 \:   \mathbf{proj}_1 \:  \ottmv{x}    ,   \mathbf{proj}_2 \:  \ottmv{x}   )   )   \\
= &   \lambda  \ottmv{x}  .   t^{\mathbb{S}_{ m } }_{ \ottnt{A} ,   \ottnt{B}  \times  \ottnt{C}  }    \:   (   \mathbf{proj}_1 \:   \mathbf{proj}_1 \:  \ottmv{x}    ,    (   \mathbf{lift}^{ m }   (   \lambda  \ottmv{y}  .   (   \mathbf{proj}_2 \:   \mathbf{proj}_1 \:  \ottmv{x}    ,  \ottmv{y}  )    )    )   \:   (   \mathbf{proj}_2 \:  \ottmv{x}   )    )   \\
= &    \lambda  \ottmv{x}  .   (   \mathbf{lift}^{ m }   (   \lambda  \ottmv{z}  .   (   \mathbf{proj}_1 \:   \mathbf{proj}_1 \:  \ottmv{x}    ,  \ottmv{z}  )    )    )    \:   (   \mathbf{lift}^{ m }   (   \lambda  \ottmv{y}  .   (   \mathbf{proj}_2 \:   \mathbf{proj}_1 \:  \ottmv{x}    ,  \ottmv{y}  )    )    )    \:   (   \mathbf{proj}_2 \:  \ottmv{x}   )   \\
= &   \lambda  \ottmv{x}  .   (     \lambda  \ottmv{w}  .   (   \mathbf{lift}^{ m }   (   \lambda  \ottmv{z}  .   (   \mathbf{proj}_1 \:   \mathbf{proj}_1 \:  \ottmv{x}    ,  \ottmv{z}  )    )    )    \:   (   \mathbf{lift}^{ m }   (   \lambda  \ottmv{y}  .   (   \mathbf{proj}_2 \:   \mathbf{proj}_1 \:  \ottmv{x}    ,  \ottmv{y}  )    )    )    \:  \ottmv{w}   )    \:   (   \mathbf{proj}_2 \:  \ottmv{x}   )   \\
= &   \lambda  \ottmv{x}  .   (   \mathbf{lift}^{ m }   (     \lambda  \ottmv{w}  .   (   \lambda  \ottmv{z}  .   (   \mathbf{proj}_1 \:   \mathbf{proj}_1 \:  \ottmv{x}    ,  \ottmv{z}  )    )    \:   (   \lambda  \ottmv{y}  .   (   \mathbf{proj}_2 \:   \mathbf{proj}_1 \:  \ottmv{x}    ,  \ottmv{y}  )    )    \:  \ottmv{w}   )    )    \:   (   \mathbf{proj}_2 \:  \ottmv{x}   )   \\
= &   \lambda  \ottmv{x}  .   (   \mathbf{lift}^{ m }   (    \lambda  \ottmv{w}  .   (   \lambda  \ottmv{z}  .   (   \mathbf{proj}_1 \:   \mathbf{proj}_1 \:  \ottmv{x}    ,  \ottmv{z}  )    )    \:   (   \mathbf{proj}_2 \:   \mathbf{proj}_1 \:  \ottmv{x}    ,  \ottmv{w}  )    )    )    \:   (   \mathbf{proj}_2 \:  \ottmv{x}   )   \\
= &   \lambda  \ottmv{x}  .   (   \mathbf{lift}^{ m }   (   \lambda  \ottmv{w}  .   (   \mathbf{proj}_1 \:   \mathbf{proj}_1 \:  \ottmv{x}    ,   (   \mathbf{proj}_2 \:   \mathbf{proj}_1 \:  \ottmv{x}    ,  \ottmv{w}  )   )    )    )    \:   (   \mathbf{proj}_2 \:  \ottmv{x}   )  
\end{align*}
and
\begin{align*}
&   \mathbb{S}_{ m } \:   \alpha^{-1}    \circ   t^{\mathbb{S}_{ m } }_{  \ottnt{A}  \times  \ottnt{B}  ,  \ottnt{C} }   \\
= &    \lambda  \ottmv{x}  .   \mathbb{S}_{ m } \:   \alpha^{-1}     \:   (   \mathbf{lift}^{ m }   (   \lambda  \ottmv{y}  .   (   \mathbf{proj}_1 \:  \ottmv{x}   ,  \ottmv{y}  )    )    )    \:   (   \mathbf{proj}_2 \:  \ottmv{x}   )   \\
= &    \lambda  \ottmv{x}  .   (   \mathbf{lift}^{ m }   (   \lambda  \ottmv{y}  .   (   \mathbf{proj}_1 \:   \mathbf{proj}_1 \:  \ottmv{y}    ,   (   \mathbf{proj}_2 \:   \mathbf{proj}_1 \:  \ottmv{y}    ,   \mathbf{proj}_2 \:  \ottmv{y}   )   )    )    )    \:   (   \mathbf{lift}^{ m }   (   \lambda  \ottmv{y}  .   (   \mathbf{proj}_1 \:  \ottmv{x}   ,  \ottmv{y}  )    )    )    \:   (   \mathbf{proj}_2 \:  \ottmv{x}   )   \\
= &   \lambda  \ottmv{x}  .   (     \lambda  \ottmv{z}  .   (   \mathbf{lift}^{ m }   (   \lambda  \ottmv{y}  .   (   \mathbf{proj}_1 \:   \mathbf{proj}_1 \:  \ottmv{y}    ,   (   \mathbf{proj}_2 \:   \mathbf{proj}_1 \:  \ottmv{y}    ,   \mathbf{proj}_2 \:  \ottmv{y}   )   )    )    )    \:   (   \mathbf{lift}^{ m }   (   \lambda  \ottmv{y}  .   (   \mathbf{proj}_1 \:  \ottmv{x}   ,  \ottmv{y}  )    )    )    \:  \ottmv{z}   )    \:   (   \mathbf{proj}_2 \:  \ottmv{x}   )   \\
= &   \lambda  \ottmv{x}  .   (   \mathbf{lift}^{ m }   (     \lambda  \ottmv{z}  .   (   \lambda  \ottmv{y}  .   (   \mathbf{proj}_1 \:   \mathbf{proj}_1 \:  \ottmv{y}    ,   (   \mathbf{proj}_2 \:   \mathbf{proj}_1 \:  \ottmv{y}    ,   \mathbf{proj}_2 \:  \ottmv{y}   )   )    )    \:   (   \lambda  \ottmv{y}  .   (   \mathbf{proj}_1 \:  \ottmv{x}   ,  \ottmv{y}  )    )    \:  \ottmv{z}   )    )    \:   (   \mathbf{proj}_2 \:  \ottmv{x}   )   \\
= &   \lambda  \ottmv{x}  .   (   \mathbf{lift}^{ m }   (    \lambda  \ottmv{z}  .   (   \lambda  \ottmv{y}  .   (   \mathbf{proj}_1 \:   \mathbf{proj}_1 \:  \ottmv{y}    ,   (   \mathbf{proj}_2 \:   \mathbf{proj}_1 \:  \ottmv{y}    ,   \mathbf{proj}_2 \:  \ottmv{y}   )   )    )    \:   (   \mathbf{proj}_1 \:  \ottmv{x}   ,  \ottmv{z}  )    )    )    \:   (   \mathbf{proj}_2 \:  \ottmv{x}   )   \\
= &   \lambda  \ottmv{x}  .   (   \mathbf{lift}^{ m }   (   \lambda  \ottmv{z}  .   (   \mathbf{proj}_1 \:   \mathbf{proj}_1 \:  \ottmv{x}    ,   (   \mathbf{proj}_2 \:   \mathbf{proj}_1 \:  \ottmv{x}    ,  \ottmv{z}  )   )    )    )    \:   (   \mathbf{proj}_2 \:  \ottmv{x}   )  .
\end{align*}

The right diagram in Figure \ref{strcd2} commutes because:
\begin{align*}
&  \mathbb{S}_{ m } \:   \pi_2   \circ  t^{\mathbb{S}_{ m } }_{   \top   ,   \ottnt{A}  }  \\
= &    \lambda  \ottmv{x}  .   (   \mathbf{lift}^{ m }   (   \lambda  \ottmv{y}  .   \mathbf{proj}_2 \:  \ottmv{y}    )    )    \:   (   \mathbf{lift}^{ m }   (   \lambda  \ottmv{y}  .   (   \mathbf{proj}_1 \:  \ottmv{x}   ,  \ottmv{y}  )    )    )    \:   (   \mathbf{proj}_2 \:  \ottmv{x}   )   \\
= &   \lambda  \ottmv{x}  .   (     \lambda  \ottmv{z}  .   (   \mathbf{lift}^{ m }   (   \lambda  \ottmv{y}  .   \mathbf{proj}_2 \:  \ottmv{y}    )    )    \:   (   \mathbf{lift}^{ m }   (   \lambda  \ottmv{y}  .   (   \mathbf{proj}_1 \:  \ottmv{x}   ,  \ottmv{y}  )    )    )    \:  \ottmv{z}   )    \:   (   \mathbf{proj}_2 \:  \ottmv{x}   )   \\
= &   \lambda  \ottmv{x}  .   (   \mathbf{lift}^{ m }   (     \lambda  \ottmv{z}  .   (   \lambda  \ottmv{y}  .   \mathbf{proj}_2 \:  \ottmv{y}    )    \:   (   \lambda  \ottmv{y}  .   (   \mathbf{proj}_1 \:  \ottmv{x}   ,  \ottmv{y}  )    )    \:  \ottmv{z}   )    )    \:   (   \mathbf{proj}_2 \:  \ottmv{x}   )   \\
= &   \lambda  \ottmv{x}  .   (   \mathbf{lift}^{ m }   (    \lambda  \ottmv{z}  .   (   \lambda  \ottmv{y}  .   \mathbf{proj}_2 \:  \ottmv{y}    )    \:   (   \mathbf{proj}_1 \:  \ottmv{x}   ,  \ottmv{z}  )    )    )    \:   (   \mathbf{proj}_2 \:  \ottmv{x}   )   \\
= &   \lambda  \ottmv{x}  .   (   \mathbf{lift}^{ m }   (   \lambda  \ottmv{z}  .  \ottmv{z}   )    )    \:   (   \mathbf{proj}_2 \:  \ottmv{x}   )   \\
= &   \lambda  \ottmv{x}  .   (   \lambda  \ottmv{z}  .  \ottmv{z}   )    \:   (   \mathbf{proj}_2 \:  \ottmv{x}   )   \\
= &  \lambda  \ottmv{x}  .   \mathbf{proj}_2 \:  \ottmv{x}   \\
= &  \pi_2 .
\end{align*}
 
So $\mathbb{S}^m$ is a strong endofunctor.\\ Now, we need to check that $\mathbb{S}^{ m_{{\mathrm{1}}}   \leq   m_{{\mathrm{2}}} }$ is a strong natural transformation. To show this, we shall use the following equality: 
\begin{equation} \label{lifteqn}
 \mathbf{lift}^{ m }  \ottnt{f}  =  \lambda  \ottmv{y}  .   (   \ottmv{y}  \:  \leftindex^{ m }{\gg}\!\! =^{ \ottsym{1} }   \lambda  \ottmv{x}  .   \ottkw{ret}  \:   (   \ottnt{f}  \:  \ottmv{x}   )      )  
\end{equation}
\begin{align*}
&  \lambda  \ottmv{y}  .   (   \ottmv{y}  \:  \leftindex^{ m }{\gg}\!\! =^{ \ottsym{1} }   \lambda  \ottmv{x}  .   \ottkw{ret}  \:   (   \ottnt{f}  \:  \ottmv{x}   )      )   \\
= &   \lambda  \ottmv{y}  .   \mathbf{join}^{ m , \ottsym{1} }   (   \mathbf{lift}^{ m }   (   \lambda  \ottmv{x}  .   \ottkw{ret}  \:   (   \ottnt{f}  \:  \ottmv{x}   )     )    )     \:  \ottmv{y}  \\
= &   \lambda  \ottmv{y}  .   \mathbf{join}^{ m , \ottsym{1} }   (   \mathbf{lift}^{ m }   (    \lambda  \ottmv{x}  .   (   \lambda  \ottmv{z}  .   \ottkw{ret}  \:  \ottmv{z}    )    \:   (   \ottnt{f}  \:  \ottmv{x}   )    )    )     \:  \ottmv{y}  \\
= &   \lambda  \ottmv{y}  .   \mathbf{join}^{ m , \ottsym{1} }   (     \lambda  \ottmv{x}  .   (   \mathbf{lift}^{ m }   (   \lambda  \ottmv{z}  .   \ottkw{ret}  \:  \ottmv{z}    )    )    \:   (   \mathbf{lift}^{ m }  \ottnt{f}   )    \:  \ottmv{x}   )     \:  \ottmv{y}  \\
= &    \lambda  \ottmv{y}  .   \mathbf{join}^{ m , \ottsym{1} }   (   \mathbf{lift}^{ m }   (   \lambda  \ottmv{z}  .   \ottkw{ret}  \:  \ottmv{z}    )    )     \:   (   \mathbf{lift}^{ m }  \ottnt{f}   )    \:  \ottmv{y}  \\
= &    \lambda  \ottmv{y}  .   (   \mathbf{lift}^{ m }  \ottnt{f}   )    \:  \ottmv{y}   \:  \leftindex^{ m }{\gg}\!\! =^{ \ottsym{1} }   \lambda  \ottmv{z}  .   \ottkw{ret}  \:  \ottmv{z}    \\
= &   \lambda  \ottmv{y}  .   (   \mathbf{lift}^{ m }  \ottnt{f}   )    \:  \ottmv{y}  \hspace*{3pt} [\text{By Equation (\ref{eq:idr})}]\\
= &  \mathbf{lift}^{ m }  \ottnt{f} 
\end{align*}

As a side note, we can dualize the above argument to show that:
\begin{equation} \label{lifteq2}
 \mathbf{lift}^{ m }  \ottnt{f}  =  \lambda  \ottmv{y}  .   (    (    \lambda  \ottmv{x}  .  \ottnt{f}   \:   (   \mathbf{extr} \:  \ottmv{x}   )    )   \:  \leftindex^{ \ottsym{1} }{\ll}\!\! =^{ m }  \ottmv{y}   )  
\end{equation}

\begin{figure}
\centering
\begin{subfigure}{0.4\textwidth}
\begin{tikzcd}
 S_{ m_{{\mathrm{1}}} } \:  \ottnt{A}  \arrow{r}{ \mathbb{S}^{ m_{{\mathrm{1}}}  \leq  m_{{\mathrm{2}}} }_{ \ottnt{A} } } \arrow{d}{ \mathbb{S}_{ m_{{\mathrm{1}}} } \:  \ottnt{f} } &  S_{ m_{{\mathrm{2}}} } \:  \ottnt{A}  \arrow{d}{ \mathbb{S}_{ m_{{\mathrm{2}}} } \:  \ottnt{f} } \\
 S_{ m_{{\mathrm{1}}} } \:  \ottnt{B}  \arrow{r}{ \mathbb{S}^{ m_{{\mathrm{1}}}  \leq  m_{{\mathrm{2}}} }_{ \ottnt{B} } } &  S_{ m_{{\mathrm{2}}} } \:  \ottnt{B}  
\end{tikzcd}
\end{subfigure}
\begin{subfigure}{0.4\textwidth}
\begin{tikzcd}
 \ottnt{A}  \times   S_{ m_{{\mathrm{1}}} } \:  \ottnt{B}   \arrow{d}{ t^{\mathbb{S}_{ m_{{\mathrm{1}}} } }_{ \ottnt{A} ,  \ottnt{B} } } \arrow{r}{  \text{id}   \times   \mathbb{S}^{ m_{{\mathrm{1}}}  \leq  m_{{\mathrm{2}}} }_{ \ottnt{B} }  } &  \ottnt{A}  \times   S_{ m_{{\mathrm{2}}} } \:  \ottnt{B}   \arrow{d}{ t^{\mathbb{S}_{ m_{{\mathrm{2}}} } }_{ \ottnt{A} ,  \ottnt{B} } } \\
 S_{ m_{{\mathrm{1}}} } \:   (   \ottnt{A}  \times  \ottnt{B}   )   \arrow{r}{ \mathbb{S}^{ m_{{\mathrm{1}}}  \leq  m_{{\mathrm{2}}} }_{   \ottnt{A}  \times  \ottnt{B}   } } &  S_{ m_{{\mathrm{2}}} } \:   (   \ottnt{A}  \times  \ottnt{B}   )  
\end{tikzcd}
\end{subfigure}
\caption{Commutative diagram}
\label{strcd3}
\end{figure} 

The left diagram in Figure \ref{strcd3} commutes because:

\begin{align*}
&   \mathbb{S}_{ m_{{\mathrm{2}}} } \:  \ottnt{f}   \circ   \mathbb{S}^{ m_{{\mathrm{1}}}  \leq  m_{{\mathrm{2}}} }_{ \ottnt{A} }   \\
= &   \lambda  \ottmv{x}  .   (   \mathbf{lift}^{ m_{{\mathrm{2}}} }  \ottnt{f}   )    \:   (   \mathbf{up}^{ m_{{\mathrm{1}}} , m_{{\mathrm{2}}} }  \ottmv{x}   )   \\
= &   \lambda  \ottmv{x}  .   (   \lambda  \ottmv{y}  .   (   \ottmv{y}  \:  \leftindex^{ m_{{\mathrm{2}}} }{\gg}\!\! =^{ \ottsym{1} }   \lambda  \ottmv{z}  .   \ottkw{ret}  \:   (   \ottnt{f}  \:  \ottmv{z}   )      )    )    \:   (   \mathbf{up}^{ m_{{\mathrm{1}}} , m_{{\mathrm{2}}} }  \ottmv{x}   )   \\
= &   \lambda  \ottmv{x}  .   \mathbf{up}^{ m_{{\mathrm{1}}} , m_{{\mathrm{2}}} }  \ottmv{x}    \:  \leftindex^{ m_{{\mathrm{2}}} }{\gg}\!\! =^{ \ottsym{1} }   \lambda  \ottmv{z}  .   \ottkw{ret}  \:   (   \ottnt{f}  \:  \ottmv{z}   )     \\
= &  \lambda  \ottmv{x}  .   \mathbf{up}^{ m_{{\mathrm{1}}} , m_{{\mathrm{2}}} }   (   \ottmv{x}  \:  \leftindex^{ m_{{\mathrm{1}}} }{\gg}\!\! =^{ \ottsym{1} }   \lambda  \ottmv{z}  .   \ottkw{ret}  \:   (   \ottnt{f}  \:  \ottmv{z}   )      )    \hspace{3pt} [\text{By Equation (\ref{eq:natl})}]\\
= &  \lambda  \ottmv{x}  .   \mathbf{up}^{ m_{{\mathrm{1}}} , m_{{\mathrm{2}}} }   (    (   \mathbf{lift}^{ m_{{\mathrm{1}}} }  \ottnt{f}   )   \:  \ottmv{x}   )    \\
= &   \mathbb{S}^{ m_{{\mathrm{1}}}  \leq  m_{{\mathrm{2}}} }_{ \ottnt{B} }   \circ   \mathbb{S}_{ m_{{\mathrm{1}}} } \:  \ottnt{f}  
\end{align*}

The right diagram in Figure \ref{strcd3} commutes because:

\begin{align*}
&    t^{\mathbb{S}_{ m_{{\mathrm{2}}} } }_{ \ottnt{A} ,  \ottnt{B} }   \circ   \text{id}    \times   \mathbb{S}^{ m_{{\mathrm{1}}}  \leq  m_{{\mathrm{2}}} }_{ \ottnt{B} }   \\
= &   \lambda  \ottmv{x}  .   t^{\mathbb{S}_{ m_{{\mathrm{2}}} } }_{ \ottnt{A} ,  \ottnt{B} }    \:   (   \mathbf{proj}_1 \:  \ottmv{x}   ,   \mathbf{up}^{ m_{{\mathrm{1}}} , m_{{\mathrm{2}}} }   \mathbf{proj}_2 \:  \ottmv{x}    )   \\
= &   \lambda  \ottmv{x}  .   (   \mathbf{lift}^{ m_{{\mathrm{2}}} }   (   \lambda  \ottmv{y}  .   (   \mathbf{proj}_1 \:  \ottmv{x}   ,  \ottmv{y}  )    )    )    \:   (   \mathbf{up}^{ m_{{\mathrm{1}}} , m_{{\mathrm{2}}} }   \mathbf{proj}_2 \:  \ottmv{x}    )   \\
= &   \lambda  \ottmv{x}  .   \mathbf{up}^{ m_{{\mathrm{1}}} , m_{{\mathrm{2}}} }   \mathbf{proj}_2 \:  \ottmv{x}     \:  \leftindex^{ m_{{\mathrm{2}}} }{\gg}\!\! =^{ \ottsym{1} }   \lambda  \ottmv{z}  .   \ottkw{ret}  \:   (    (   \lambda  \ottmv{y}  .   (   \mathbf{proj}_1 \:  \ottmv{x}   ,  \ottmv{y}  )    )   \:  \ottmv{z}   )     \\
= &   \lambda  \ottmv{x}  .   \mathbf{up}^{ m_{{\mathrm{1}}} , m_{{\mathrm{2}}} }   \mathbf{proj}_2 \:  \ottmv{x}     \:  \leftindex^{ m_{{\mathrm{2}}} }{\gg}\!\! =^{ \ottsym{1} }   \lambda  \ottmv{z}  .   \ottkw{ret}  \:   (   \mathbf{proj}_1 \:  \ottmv{x}   ,  \ottmv{z}  )     \\
= &  \lambda  \ottmv{x}  .   \mathbf{up}^{ m_{{\mathrm{1}}} , m_{{\mathrm{2}}} }   (    \mathbf{proj}_2 \:  \ottmv{x}   \:  \leftindex^{ m_{{\mathrm{1}}} }{\gg}\!\! =^{ \ottsym{1} }   \lambda  \ottmv{z}  .   \ottkw{ret}  \:   (   \mathbf{proj}_1 \:  \ottmv{x}   ,  \ottmv{z}  )      )    \hspace{3pt} [\text{By Equation (\ref{eq:natl})}] \\
= &  \lambda  \ottmv{x}  .   \mathbf{up}^{ m_{{\mathrm{1}}} , m_{{\mathrm{2}}} }   (    (   \lambda  \ottmv{y}  .   (   \ottmv{y}  \:  \leftindex^{ m_{{\mathrm{1}}} }{\gg}\!\! =^{ \ottsym{1} }   \lambda  \ottmv{z}  .   \ottkw{ret}  \:   (    (   \lambda  \ottmv{w}  .   (   \mathbf{proj}_1 \:  \ottmv{x}   ,  \ottmv{w}  )    )   \:  \ottmv{z}   )      )    )   \:   (   \mathbf{proj}_2 \:  \ottmv{x}   )    )    \\
= &  \lambda  \ottmv{x}  .   \mathbf{up}^{ m_{{\mathrm{1}}} , m_{{\mathrm{2}}} }   (    (   \mathbf{lift}^{ m_{{\mathrm{1}}} }   (   \lambda  \ottmv{w}  .   (   \mathbf{proj}_1 \:  \ottmv{x}   ,  \ottmv{w}  )    )    )   \:   (   \mathbf{proj}_2 \:  \ottmv{x}   )    )    \\
= &   \mathbb{S}^{ m_{{\mathrm{1}}}  \leq  m_{{\mathrm{2}}} }_{   \ottnt{A}  \times  \ottnt{B}   }   \circ   t^{\mathbb{S}_{ m_{{\mathrm{1}}} } }_{ \ottnt{A} ,  \ottnt{B} }  
\end{align*}

This shows that $\mathbb{S}^{ m_{{\mathrm{1}}}   \leq   m_{{\mathrm{2}}} }$ is a strong natural transformation. Hence $\mathbb{S}$ is well-defined.\\ Next we show that $\mathbb{S}$ is indeed a functor.\\
We have: $ \mathbb{S}^{ m  \leq  m }_{ \ottnt{A} }  =  \lambda  \ottmv{x}  .   \mathbf{up}^{ m , m }  \ottmv{x}   =  \lambda  \ottmv{x}  .  \ottmv{x}  =  \text{id}_{  \mathbb{S}_{ m } \:  \ottnt{A}  } $.\\
And, $  \mathbb{S}^{ m_{{\mathrm{2}}}  \leq  m_{{\mathrm{3}}} }_{ \ottnt{A} }   \circ   \mathbb{S}^{ m_{{\mathrm{1}}}  \leq  m_{{\mathrm{2}}} }_{ \ottnt{A} }   =  \lambda  \ottmv{x}  .   \mathbf{up}^{ m_{{\mathrm{2}}} , m_{{\mathrm{3}}} }   (   \mathbf{up}^{ m_{{\mathrm{1}}} , m_{{\mathrm{2}}} }  \ottmv{x}   )    =  \lambda  \ottmv{x}  .   \mathbf{up}^{ m_{{\mathrm{1}}} , m_{{\mathrm{3}}} }  \ottmv{x}   =  \mathbb{S}^{ m_{{\mathrm{1}}}  \leq  m_{{\mathrm{3}}} }_{ \ottnt{A} } $.\\
Hence, $\mathbb{S}$ is a functor. Next we show that $\mathbb{S}$ is strong monoidal.\\

Define: $\eta : \Id{} \to \mathbb{S}_{1}$ and $\mu^{m_{{\mathrm{1}}},m_{{\mathrm{2}}}} : \mathbb{S}_{m_{{\mathrm{1}}}} \circ \mathbb{S}_{m_{{\mathrm{2}}}} \to \mathbb{S}_{ m_{{\mathrm{1}}}  \cdot  m_{{\mathrm{2}}} }$ as:\\
$ \eta_{ \ottnt{A} }  =  \lambda  \ottmv{x}  .   \ottkw{ret}  \:  \ottmv{x}  $ and $ \mu^{ m_{{\mathrm{1}}} , m_{{\mathrm{2}}} }_{ \ottnt{A} }  =  \lambda  \ottmv{x}  .   \mathbf{join}^{ m_{{\mathrm{1}}} , m_{{\mathrm{2}}} }  \ottmv{x}  $.

Need to check that $\eta$ and $\mu^{m_{{\mathrm{1}}},m_{{\mathrm{2}}}}$ are strong natural transformations.

The diagrams in Figure \ref{strcd4} commute because:
\begin{align*}
&   \mathbb{S}_{ \ottsym{1} } \:  \ottnt{f}   \circ   \eta_{ \ottnt{A} }   & &\;\;\;\;   t^{\mathbb{S}_{ \ottsym{1} } }_{ \ottnt{A} ,  \ottnt{B} }   \circ   (    \text{id}   \times   \eta_{ \ottnt{B} }    )   \\
= &   \lambda  \ottmv{x}  .   (   \mathbf{lift}^{ \ottsym{1} }  \ottnt{f}   )    \:   (   \ottkw{ret}  \:  \ottmv{x}   )   & & =   \lambda  \ottmv{x}  .   (   \mathbf{lift}^{ \ottsym{1} }   (   \lambda  \ottmv{y}  .   (   \mathbf{proj}_1 \:  \ottmv{x}   ,  \ottmv{y}  )    )    )    \:   (   \ottkw{ret}  \:   \mathbf{proj}_2 \:  \ottmv{x}    )   \\
= &   \lambda  \ottmv{x}  .   \ottkw{ret}  \:  \ottmv{x}    \:  \leftindex^{ \ottsym{1} }{\gg}\!\! =^{ \ottsym{1} }   \lambda  \ottmv{y}  .   \ottkw{ret}  \:   (   \ottnt{f}  \:  \ottmv{y}   )     & & =   \lambda  \ottmv{x}  .   \ottkw{ret}  \:   \mathbf{proj}_2 \:  \ottmv{x}     \:  \leftindex^{ \ottsym{1} }{\gg}\!\! =^{ \ottsym{1} }   \lambda  \ottmv{z}  .   \ottkw{ret}  \:   (    (   \lambda  \ottmv{y}  .   (   \mathbf{proj}_1 \:  \ottmv{x}   ,  \ottmv{y}  )    )   \:  \ottmv{z}   )     \\
= &   \lambda  \ottmv{x}  .   (   \lambda  \ottmv{y}  .   \ottkw{ret}  \:   (   \ottnt{f}  \:  \ottmv{y}   )     )    \:  \ottmv{x}  & & =   \lambda  \ottmv{x}  .   \ottkw{ret}  \:   \mathbf{proj}_2 \:  \ottmv{x}     \:  \leftindex^{ \ottsym{1} }{\gg}\!\! =^{ \ottsym{1} }   \lambda  \ottmv{z}  .   \ottkw{ret}  \:   (   \mathbf{proj}_1 \:  \ottmv{x}   ,  \ottmv{z}  )     \\
= &  \lambda  \ottmv{x}  .   \ottkw{ret}  \:   (   \ottnt{f}  \:  \ottmv{x}   )    & & =   \lambda  \ottmv{x}  .   (   \lambda  \ottmv{z}  .   \ottkw{ret}  \:   (   \mathbf{proj}_1 \:  \ottmv{x}   ,  \ottmv{z}  )     )    \:   (   \mathbf{proj}_2 \:  \ottmv{x}   )   \\
= &   \eta_{ \ottnt{B} }   \circ  \ottnt{f}  & & =  \lambda  \ottmv{x}  .   \ottkw{ret}  \:   (   \mathbf{proj}_1 \:  \ottmv{x}   ,   \mathbf{proj}_2 \:  \ottmv{x}   )    =  \lambda  \ottmv{x}  .   \ottkw{ret}  \:  \ottmv{x}   =  \eta_{   \ottnt{A}  \times  \ottnt{B}   }  
\end{align*}

This shows that $\eta$ is a strong natural transformation.\\
Next, we show that $\mu^{m_{{\mathrm{1}}},m_{{\mathrm{2}}}}$ is a strong natural transformation.\\ Note that
\begin{equation} \label{joineqn}
 \ottnt{a}  \:  \leftindex^{ m_{{\mathrm{1}}} }{\gg}\!\! =^{ m_{{\mathrm{2}}} }   \lambda  \ottmv{y}  .  \ottmv{y}   =  \mathbf{join}^{ m_{{\mathrm{1}}} , m_{{\mathrm{2}}} }   (    (   \mathbf{lift}^{ m_{{\mathrm{1}}} }   (   \lambda  \ottmv{y}  .  \ottmv{y}   )    )   \:  \ottnt{a}   )   =  \mathbf{join}^{ m_{{\mathrm{1}}} , m_{{\mathrm{2}}} }   (    (   \lambda  \ottmv{y}  .  \ottmv{y}   )   \:  \ottnt{a}   )   =  \mathbf{join}^{ m_{{\mathrm{1}}} , m_{{\mathrm{2}}} }  \ottnt{a} 
\end{equation}

Now, the left diagram in Figure \ref{strcd5} commutes because:
\begin{align*}
&   \mu^{ m_{{\mathrm{1}}} , m_{{\mathrm{2}}} }_{ \ottnt{B} }   \circ   \mathbb{S}_{ m_{{\mathrm{1}}} } \:   \mathbb{S}_{ m_{{\mathrm{2}}} } \:  \ottnt{f}    \\
= &  \lambda  \ottmv{x}  .   \mathbf{join}^{ m_{{\mathrm{1}}} , m_{{\mathrm{2}}} }   (    (   \mathbf{lift}^{ m_{{\mathrm{1}}} }   (   \mathbf{lift}^{ m_{{\mathrm{2}}} }  \ottnt{f}   )    )   \:  \ottmv{x}   )    \\
= &    \lambda  \ottmv{x}  .  \ottmv{x}   \:  \leftindex^{ m_{{\mathrm{1}}} }{\gg}\!\! =^{ m_{{\mathrm{2}}} }   \lambda  \ottmv{y}  .   (   \mathbf{lift}^{ m_{{\mathrm{2}}} }  \ottnt{f}   )     \:  \ottmv{y}  \\
= &   \lambda  \ottmv{x}  .  \ottmv{x}   \:  \leftindex^{ m_{{\mathrm{1}}} }{\gg}\!\! =^{ m_{{\mathrm{2}}} }   \lambda  \ottmv{y}  .   (   \ottmv{y}  \:  \leftindex^{ m_{{\mathrm{2}}} }{\gg}\!\! =^{ \ottsym{1} }   \lambda  \ottmv{z}  .   \ottkw{ret}  \:   (   \ottnt{f}  \:  \ottmv{z}   )      )    \\
= &   \lambda  \ottmv{x}  .   (   \ottmv{x}  \:  \leftindex^{ m_{{\mathrm{1}}} }{\gg}\!\! =^{ m_{{\mathrm{2}}} }   \lambda  \ottmv{y}  .  \ottmv{y}    )    \:  \leftindex^{  m_{{\mathrm{1}}}  \cdot  m_{{\mathrm{2}}}  }{\gg}\!\! =^{ \ottsym{1} }   \lambda  \ottmv{z}  .   \ottkw{ret}  \:   (   \ottnt{f}  \:  \ottmv{z}   )     \hspace*{2pt} [\text{By Equation (\ref{eq:assoc})}]\\
= &   \lambda  \ottmv{x}  .   \mathbf{join}^{ m_{{\mathrm{1}}} , m_{{\mathrm{2}}} }  \ottmv{x}    \:  \leftindex^{  m_{{\mathrm{1}}}  \cdot  m_{{\mathrm{2}}}  }{\gg}\!\! =^{ \ottsym{1} }   \lambda  \ottmv{z}  .   \ottkw{ret}  \:   (   \ottnt{f}  \:  \ottmv{z}   )     \\
= &   \lambda  \ottmv{x}  .   (   \mathbf{lift}^{  m_{{\mathrm{1}}}  \cdot  m_{{\mathrm{2}}}  }  \ottnt{f}   )    \:   (   \mathbf{join}^{ m_{{\mathrm{1}}} , m_{{\mathrm{2}}} }  \ottmv{x}   )   \\
= &   \mathbb{S}_{  m_{{\mathrm{1}}}  \cdot  m_{{\mathrm{2}}}  } \:  \ottnt{f}   \circ   \mu^{ m_{{\mathrm{1}}} , m_{{\mathrm{2}}} }_{ \ottnt{A} }  
\end{align*}

The right diagram in Figure \ref{strcd5} commutes because:

\begin{align*}
&   \mu^{ m_{{\mathrm{1}}} , m_{{\mathrm{2}}} }_{   \ottnt{A}  \times  \ottnt{B}   }   \circ   \mathbb{S}_{ m_{{\mathrm{1}}} } \:   t^{\mathbb{S}_{ m_{{\mathrm{2}}} } }_{ \ottnt{A} ,  \ottnt{B} }    \circ  t^{\mathbb{S}_{ m_{{\mathrm{1}}} } }_{  \ottnt{A}  ,    \mathbb{S}_{ m_{{\mathrm{2}}} } \:  \ottnt{B}   }  \\
= &     \lambda  \ottmv{x}  .   (   \lambda  \ottmv{y}  .   \mathbf{join}^{ m_{{\mathrm{1}}} , m_{{\mathrm{2}}} }  \ottmv{y}    )    \:   (   \mathbf{lift}^{ m_{{\mathrm{1}}} }   t^{\mathbb{S}_{ m_{{\mathrm{2}}} } }_{ \ottnt{A} ,  \ottnt{B} }    )    \:   (    \lambda  \ottmv{z}  .   (   \mathbf{lift}^{ m_{{\mathrm{1}}} }   (   \lambda  \ottmv{y}  .   (   \mathbf{proj}_1 \:  \ottmv{z}   ,  \ottmv{y}  )    )    )    \:   (   \mathbf{proj}_2 \:  \ottmv{z}   )    )    \:  \ottmv{x}  \\
= &     \lambda  \ottmv{x}  .   (   \lambda  \ottmv{y}  .   \mathbf{join}^{ m_{{\mathrm{1}}} , m_{{\mathrm{2}}} }  \ottmv{y}    )    \:   (   \mathbf{lift}^{ m_{{\mathrm{1}}} }   t^{\mathbb{S}_{ m_{{\mathrm{2}}} } }_{ \ottnt{A} ,  \ottnt{B} }    )    \:   (   \mathbf{lift}^{ m_{{\mathrm{1}}} }   (   \lambda  \ottmv{y}  .   (   \mathbf{proj}_1 \:  \ottmv{x}   ,  \ottmv{y}  )    )    )    \:   (   \mathbf{proj}_2 \:  \ottmv{x}   )   \\
= &    \lambda  \ottmv{x}  .   (   \lambda  \ottmv{y}  .   \mathbf{join}^{ m_{{\mathrm{1}}} , m_{{\mathrm{2}}} }  \ottmv{y}    )    \:   (     \lambda  \ottmv{z}  .   (   \mathbf{lift}^{ m_{{\mathrm{1}}} }   t^{\mathbb{S}_{ m_{{\mathrm{2}}} } }_{ \ottnt{A} ,  \ottnt{B} }    )    \:   (   \mathbf{lift}^{ m_{{\mathrm{1}}} }   (   \lambda  \ottmv{y}  .   (   \mathbf{proj}_1 \:  \ottmv{x}   ,  \ottmv{y}  )    )    )    \:  \ottmv{z}   )    \:   (   \mathbf{proj}_2 \:  \ottmv{x}   )   \\
= &    \lambda  \ottmv{x}  .   (   \lambda  \ottmv{y}  .   \mathbf{join}^{ m_{{\mathrm{1}}} , m_{{\mathrm{2}}} }  \ottmv{y}    )    \:   (   \mathbf{lift}^{ m_{{\mathrm{1}}} }   (    \lambda  \ottmv{z}  .   t^{\mathbb{S}_{ m_{{\mathrm{2}}} } }_{ \ottnt{A} ,  \ottnt{B} }    \:   (   \mathbf{proj}_1 \:  \ottmv{x}   ,  \ottmv{z}  )    )    )    \:   (   \mathbf{proj}_2 \:  \ottmv{x}   )   \\
= &    \lambda  \ottmv{x}  .   (   \lambda  \ottmv{y}  .   \mathbf{join}^{ m_{{\mathrm{1}}} , m_{{\mathrm{2}}} }  \ottmv{y}    )    \:   (   \mathbf{lift}^{ m_{{\mathrm{1}}} }   (    \lambda  \ottmv{z}  .   (    \lambda  \ottmv{u}  .   (   \mathbf{lift}^{ m_{{\mathrm{2}}} }   (   \lambda  \ottmv{v}  .   (   \mathbf{proj}_1 \:  \ottmv{u}   ,  \ottmv{v}  )    )    )    \:   (   \mathbf{proj}_2 \:  \ottmv{u}   )    )    \:   (   \mathbf{proj}_1 \:  \ottmv{x}   ,  \ottmv{z}  )    )    )    \:   (   \mathbf{proj}_2 \:  \ottmv{x}   )   \\
= &  \lambda  \ottmv{x}  .   \mathbf{join}^{ m_{{\mathrm{1}}} , m_{{\mathrm{2}}} }   (    (   \mathbf{lift}^{ m_{{\mathrm{1}}} }   (    \lambda  \ottmv{z}  .   (   \mathbf{lift}^{ m_{{\mathrm{2}}} }   (   \lambda  \ottmv{v}  .   (   \mathbf{proj}_1 \:  \ottmv{x}   ,  \ottmv{v}  )    )    )    \:  \ottmv{z}   )    )   \:   (   \mathbf{proj}_2 \:  \ottmv{x}   )    )    \\
= &    \lambda  \ottmv{x}  .   \mathbf{proj}_2 \:  \ottmv{x}    \:  \leftindex^{ m_{{\mathrm{1}}} }{\gg}\!\! =^{ m_{{\mathrm{2}}} }   \lambda  \ottmv{z}  .   (   \mathbf{lift}^{ m_{{\mathrm{2}}} }   (   \lambda  \ottmv{v}  .   (   \mathbf{proj}_1 \:  \ottmv{x}   ,  \ottmv{v}  )    )    )     \:  \ottmv{z}  \\
= &   \lambda  \ottmv{x}  .   \mathbf{proj}_2 \:  \ottmv{x}    \:  \leftindex^{ m_{{\mathrm{1}}} }{\gg}\!\! =^{ m_{{\mathrm{2}}} }   \lambda  \ottmv{z}  .   (   \ottmv{z}  \:  \leftindex^{ m_{{\mathrm{2}}} }{\gg}\!\! =^{ \ottsym{1} }   \lambda  \ottmv{u}  .   \ottkw{ret}  \:   (   \mathbf{proj}_1 \:  \ottmv{x}   ,  \ottmv{u}  )      )    \\
= &   \lambda  \ottmv{x}  .   (    \mathbf{proj}_2 \:  \ottmv{x}   \:  \leftindex^{ m_{{\mathrm{1}}} }{\gg}\!\! =^{ m_{{\mathrm{2}}} }   (   \lambda  \ottmv{y}  .  \ottmv{y}   )    )    \:  \leftindex^{  m_{{\mathrm{1}}}  \cdot  m_{{\mathrm{2}}}  }{\gg}\!\! =^{ \ottsym{1} }   \lambda  \ottmv{u}  .   \ottkw{ret}  \:   (   \mathbf{proj}_1 \:  \ottmv{x}   ,  \ottmv{u}  )     \hspace*{3pt} [\text{By Equation (\ref{eq:assoc})}] \\
= &   \lambda  \ottmv{x}  .   (   \mathbf{join}^{ m_{{\mathrm{1}}} , m_{{\mathrm{2}}} }   (   \mathbf{proj}_2 \:  \ottmv{x}   )    )    \:  \leftindex^{  m_{{\mathrm{1}}}  \cdot  m_{{\mathrm{2}}}  }{\gg}\!\! =^{ \ottsym{1} }   \lambda  \ottmv{u}  .   \ottkw{ret}  \:   (   \mathbf{proj}_1 \:  \ottmv{x}   ,  \ottmv{u}  )     
\end{align*}

and

\begin{align*}
&    t^{\mathbb{S}_{  m_{{\mathrm{1}}}  \cdot  m_{{\mathrm{2}}}  } }_{ \ottnt{A} ,  \ottnt{B} }   \circ   \text{id}    \times   \mu^{ m_{{\mathrm{1}}} , m_{{\mathrm{2}}} }_{ \ottnt{B} }   \\
= &   \lambda  \ottmv{x}  .   (    \lambda  \ottmv{z}  .   (   \mathbf{lift}^{  m_{{\mathrm{1}}}  \cdot  m_{{\mathrm{2}}}  }   (   \lambda  \ottmv{y}  .   (   \mathbf{proj}_1 \:  \ottmv{z}   ,  \ottmv{y}  )    )    )    \:   (   \mathbf{proj}_2 \:  \ottmv{z}   )    )    \:   (   \mathbf{proj}_1 \:  \ottmv{x}   ,   \mathbf{join}^{ m_{{\mathrm{1}}} , m_{{\mathrm{2}}} }   \mathbf{proj}_2 \:  \ottmv{x}    )   \\
= &   \lambda  \ottmv{x}  .   (   \mathbf{lift}^{  m_{{\mathrm{1}}}  \cdot  m_{{\mathrm{2}}}  }   (   \lambda  \ottmv{y}  .   (   \mathbf{proj}_1 \:  \ottmv{x}   ,  \ottmv{y}  )    )    )    \:   (   \mathbf{join}^{ m_{{\mathrm{1}}} , m_{{\mathrm{2}}} }   \mathbf{proj}_2 \:  \ottmv{x}    )   \\
= &   \lambda  \ottmv{x}  .   (   \mathbf{join}^{ m_{{\mathrm{1}}} , m_{{\mathrm{2}}} }   (   \mathbf{proj}_2 \:  \ottmv{x}   )    )    \:  \leftindex^{  m_{{\mathrm{1}}}  \cdot  m_{{\mathrm{2}}}  }{\gg}\!\! =^{ \ottsym{1} }   \lambda  \ottmv{u}  .   \ottkw{ret}  \:   (   \mathbf{proj}_1 \:  \ottmv{x}   ,  \ottmv{u}  )    . 
\end{align*}

\begin{figure}
\centering
\begin{subfigure}{0.4\textwidth}
\begin{tikzcd}
\ottnt{A} \arrow{r}{ \eta_{ \ottnt{A} } } \arrow{d}{\ottnt{f}} &  \mathbb{S}_{ \ottsym{1} } \:  \ottnt{A}  \arrow{d}{ \mathbb{S}_{ \ottsym{1} } \:  \ottnt{f} } \\
\ottnt{B} \arrow{r}{ \eta_{ \ottnt{B} } } &  \mathbb{S}_{ \ottsym{1} } \:  \ottnt{B}  
\end{tikzcd}
\end{subfigure}
\begin{subfigure}{0.4\textwidth}
\begin{tikzcd}
 \ottnt{A}  \times  \ottnt{B}  \arrow{r}{  \text{id}   \times   \eta_{ \ottnt{B} }  } \arrow{d}{ \text{id} } &   \ottnt{A}   \times    \mathbb{S}_{ \ottsym{1} } \:  \ottnt{B}    \arrow{d}{ t^{\mathbb{S}_{ \ottsym{1} } }_{ \ottnt{A} ,  \ottnt{B} } } \\
 \ottnt{A}  \times  \ottnt{B}  \arrow{r}{ \eta_{   \ottnt{A}  \times  \ottnt{B}   } } &  \mathbb{S}_{ \ottsym{1} } \:   (   \ottnt{A}  \times  \ottnt{B}   )   
\end{tikzcd}
\end{subfigure}
\caption{Commutative diagram}
\label{strcd4}
\end{figure} 

\begin{figure}
\centering
\begin{subfigure}{0.4\textwidth}
\begin{tikzcd}
 \mathbb{S}_{ m_{{\mathrm{1}}} } \:   \mathbb{S}_{ m_{{\mathrm{2}}} } \:  \ottnt{A}   \arrow{r}{ \mu^{ m_{{\mathrm{1}}} , m_{{\mathrm{2}}} }_{ \ottnt{A} } } \arrow{d}{ \mathbb{S}_{ m_{{\mathrm{1}}} } \:   \mathbb{S}_{ m_{{\mathrm{2}}} } \:  \ottnt{f}  } &  \mathbb{S}_{  m_{{\mathrm{1}}}  \cdot  m_{{\mathrm{2}}}  } \:  \ottnt{A}  \arrow{d}{ \mathbb{S}_{  m_{{\mathrm{1}}}  \cdot  m_{{\mathrm{2}}}  } \:  \ottnt{f} } \\
 \mathbb{S}_{ m_{{\mathrm{1}}} } \:   \mathbb{S}_{ m_{{\mathrm{2}}} } \:  \ottnt{B}   \arrow{r}{ \mu^{ m_{{\mathrm{1}}} , m_{{\mathrm{2}}} }_{ \ottnt{B} } } &  \mathbb{S}_{  m_{{\mathrm{1}}}  \cdot  m_{{\mathrm{2}}}  } \:  \ottnt{B}  
\end{tikzcd}
\end{subfigure}
\begin{subfigure}{0.4\textwidth}
\begin{tikzcd}
  \ottnt{A}   \times    \mathbb{S}_{ m_{{\mathrm{1}}} } \:   \mathbb{S}_{ m_{{\mathrm{2}}} } \:  \ottnt{B}     \arrow{r}{  \text{id}   \times   \mu^{ m_{{\mathrm{1}}} , m_{{\mathrm{2}}} }_{ \ottnt{B} }  } \arrow{d}[left]{ \mathbb{S}_{ m_{{\mathrm{1}}} } \:   t^{\mathbb{S}_{ m_{{\mathrm{2}}} } }_{ \ottnt{A} ,  \ottnt{B} }   \circ  t^{\mathbb{S}_{ m_{{\mathrm{1}}} } }_{  \ottnt{A}  ,    \mathbb{S}_{ m_{{\mathrm{2}}} } \:  \ottnt{B}   } } &   \ottnt{A}   \times    \mathbb{S}_{  m_{{\mathrm{1}}}  \cdot  m_{{\mathrm{2}}}  } \:  \ottnt{B}    \arrow{d}{ t^{\mathbb{S}_{  m_{{\mathrm{1}}}  \cdot  m_{{\mathrm{2}}}  } }_{ \ottnt{A} ,  \ottnt{B} } } \\
 \mathbb{S}_{ m_{{\mathrm{1}}} } \:   \mathbb{S}_{ m_{{\mathrm{2}}} } \:   (   \ottnt{A}  \times  \ottnt{B}   )    \arrow{r}{ \mu^{ m_{{\mathrm{1}}} , m_{{\mathrm{2}}} }_{   \ottnt{A}  \times  \ottnt{B}   } } &  \mathbb{S}_{  m_{{\mathrm{1}}}  \cdot  m_{{\mathrm{2}}}  } \:   (   \ottnt{A}  \times  \ottnt{B}   )   
\end{tikzcd}
\end{subfigure}
\caption{Commutative diagram}
\label{strcd5}
\end{figure} 

This shows that $\mu^{m_{{\mathrm{1}}},m_{{\mathrm{2}}}}$ is a strong natural transformation.\\
Next, we show that $\mu^{m_{{\mathrm{1}}},m_{{\mathrm{2}}}}$ is natural in both $m_{{\mathrm{1}}}$ and $m_{{\mathrm{2}}}$.

\begin{figure}
\centering
\begin{subfigure}{0.4\textwidth}
\begin{tikzcd}
 \mathbb{S}_{ m_{{\mathrm{1}}} } \:   \mathbb{S}_{ m_{{\mathrm{2}}} } \:  \ottnt{A}   \arrow{r}{ \mu^{ m_{{\mathrm{1}}} , m_{{\mathrm{2}}} }_{ \ottnt{A} } } \arrow{d}[left]{ \mathbb{S}^{ m_{{\mathrm{1}}}  \leq  m'_{{\mathrm{1}}} }_{  \mathbb{S}_{ m_{{\mathrm{2}}} } \:  \ottnt{A}  } } &  \mathbb{S}_{  m_{{\mathrm{1}}}  \cdot  m_{{\mathrm{2}}}  } \:  \ottnt{A}  \arrow{d}{ \mathbb{S}^{  m_{{\mathrm{1}}}  \cdot  m_{{\mathrm{2}}}   \leq   m'_{{\mathrm{1}}}  \cdot  m_{{\mathrm{2}}}  }_{ \ottnt{A} } } \\
 \mathbb{S}_{ m'_{{\mathrm{1}}} } \:   \mathbb{S}_{ m_{{\mathrm{2}}} } \:  \ottnt{A}   \arrow{r}{ \mu^{ m'_{{\mathrm{1}}} , m_{{\mathrm{2}}} }_{ \ottnt{A} } } &  \mathbb{S}_{  m'_{{\mathrm{1}}}  \cdot  m_{{\mathrm{2}}}  } \:  \ottnt{A}  
\end{tikzcd}
\end{subfigure}
\begin{subfigure}{0.4\textwidth}
\begin{tikzcd}
 \mathbb{S}_{ m_{{\mathrm{1}}} } \:   \mathbb{S}_{ m_{{\mathrm{2}}} } \:  \ottnt{A}   \arrow{r}{ \mu^{ m_{{\mathrm{1}}} , m_{{\mathrm{2}}} }_{ \ottnt{A} } } \arrow{d}[left]{ \mathbb{S}_{ m_{{\mathrm{1}}} } \:   \mathbb{S}^{ m_{{\mathrm{2}}}  \leq  m'_{{\mathrm{2}}} }_{ \ottnt{A} }  } &  \mathbb{S}_{  m_{{\mathrm{1}}}  \cdot  m_{{\mathrm{2}}}  } \:  \ottnt{A}  \arrow{d}{ \mathbb{S}^{  m_{{\mathrm{1}}}  \cdot  m_{{\mathrm{2}}}   \leq   m_{{\mathrm{1}}}  \cdot  m'_{{\mathrm{2}}}  }_{ \ottnt{A} } } \\
 \mathbb{S}_{ m_{{\mathrm{1}}} } \:   \mathbb{S}_{ m'_{{\mathrm{2}}} } \:  \ottnt{A}   \arrow{r}{ \mu^{ m_{{\mathrm{1}}} , m'_{{\mathrm{2}}} }_{ \ottnt{A} } } &  \mathbb{S}_{  m_{{\mathrm{1}}}  \cdot  m'_{{\mathrm{2}}}  } \:  \ottnt{A}  
\end{tikzcd}
\end{subfigure}
\caption{Commutative diagram}
\label{strcd6}
\end{figure} 

The diagrams in Figure \ref{strcd6} commute because:

\begin{align*}
&   \mu^{ m'_{{\mathrm{1}}} , m_{{\mathrm{2}}} }_{ \ottnt{A} }   \circ   \mathbb{S}^{ m_{{\mathrm{1}}}  \leq  m'_{{\mathrm{1}}} }_{  \mathbb{S}_{ m_{{\mathrm{2}}} } \:  \ottnt{A}  }   & &\;\;\;\;   \mu^{ m_{{\mathrm{1}}} , m'_{{\mathrm{2}}} }_{ \ottnt{A} }   \circ   \mathbb{S}_{ m_{{\mathrm{1}}} } \:   \mathbb{S}^{ m_{{\mathrm{2}}}  \leq  m'_{{\mathrm{2}}} }_{ \ottnt{A} }    \\
= &  \lambda  \ottmv{x}  .   \mathbf{join}^{ m'_{{\mathrm{1}}} , m_{{\mathrm{2}}} }   (   \mathbf{up}^{ m_{{\mathrm{1}}} , m'_{{\mathrm{1}}} }  \ottmv{x}   )    & & =   \lambda  \ottmv{x}  .   \mathbf{join}^{ m_{{\mathrm{1}}} , m'_{{\mathrm{2}}} }   (   \mathbf{lift}^{ m_{{\mathrm{1}}} }   (   \lambda  \ottmv{y}  .   \mathbf{up}^{ m_{{\mathrm{2}}} , m'_{{\mathrm{2}}} }  \ottmv{y}    )    )     \:  \ottmv{x}  \\
= &   \lambda  \ottmv{x}  .   \mathbf{up}^{ m_{{\mathrm{1}}} , m'_{{\mathrm{1}}} }  \ottmv{x}    \:  \leftindex^{ m'_{{\mathrm{1}}} }{\gg}\!\! =^{ m_{{\mathrm{2}}} }   \lambda  \ottmv{y}  .  \ottmv{y}   & & =   \lambda  \ottmv{x}  .  \ottmv{x}   \:  \leftindex^{ m_{{\mathrm{1}}} }{\gg}\!\! =^{ m'_{{\mathrm{2}}} }   (   \lambda  \ottmv{y}  .   \mathbf{up}^{ m_{{\mathrm{2}}} , m'_{{\mathrm{2}}} }  \ottmv{y}    )   \\
= &  \lambda  \ottmv{x}  .   \mathbf{up}^{  m_{{\mathrm{1}}}  \cdot  m_{{\mathrm{2}}}  ,  m'_{{\mathrm{1}}}  \cdot  m_{{\mathrm{2}}}  }   (   \ottmv{x}  \:  \leftindex^{ m_{{\mathrm{1}}} }{\gg}\!\! =^{ m_{{\mathrm{2}}} }   \lambda  \ottmv{y}  .  \ottmv{y}    )    \hspace*{2pt} [\text{By Eqn. (\ref{eq:natl})}] & & =  \lambda  \ottmv{x}  .   \mathbf{up}^{  m_{{\mathrm{1}}}  \cdot  m_{{\mathrm{2}}}  ,  m_{{\mathrm{1}}}  \cdot  m'_{{\mathrm{2}}}  }   (   \ottmv{x}  \:  \leftindex^{ m_{{\mathrm{1}}} }{\gg}\!\! =^{ m_{{\mathrm{2}}} }   \lambda  \ottmv{y}  .  \ottmv{y}    )    \hspace*{2pt} [\text{By Eqn. (\ref{eq:natr})}]\\
= &  \lambda  \ottmv{x}  .   \mathbf{up}^{  m_{{\mathrm{1}}}  \cdot  m_{{\mathrm{2}}}  ,  m'_{{\mathrm{1}}}  \cdot  m_{{\mathrm{2}}}  }   (   \mathbf{join}^{ m_{{\mathrm{1}}} , m_{{\mathrm{2}}} }  \ottmv{x}   )    & & =  \lambda  \ottmv{x}  .   \mathbf{up}^{  m_{{\mathrm{1}}}  \cdot  m_{{\mathrm{2}}}  ,  m_{{\mathrm{1}}}  \cdot  m'_{{\mathrm{2}}}  }   (   \mathbf{join}^{ m_{{\mathrm{1}}} , m_{{\mathrm{2}}} }  \ottmv{x}   )   \\
= &   \mathbb{S}^{  m_{{\mathrm{1}}}  \cdot  m_{{\mathrm{2}}}   \leq   m'_{{\mathrm{1}}}  \cdot  m_{{\mathrm{2}}}  }_{ \ottnt{A} }   \circ   \mu^{ m_{{\mathrm{1}}} , m_{{\mathrm{2}}} }_{ \ottnt{A} }   & & =   \mathbb{S}^{  m_{{\mathrm{1}}}  \cdot  m_{{\mathrm{2}}}   \leq   m_{{\mathrm{1}}}  \cdot  m'_{{\mathrm{2}}}  }_{ \ottnt{A} }   \circ   \mu^{ m_{{\mathrm{1}}} , m_{{\mathrm{2}}} }_{ \ottnt{A} }  
\end{align*}

This shows that $\mu^{m_{{\mathrm{1}}},m_{{\mathrm{2}}}}$ is natural in both $m_{{\mathrm{1}}}$ and $m_{{\mathrm{2}}}$.\\ Now we show that $\mathbb{S}$ satisfies the axioms for lax monoidal functor.

\begin{figure}
\centering
\begin{subfigure}{0.4\textwidth}
\begin{tikzcd}
 \mathbb{S}_{ \ottsym{1} } \:   \mathbb{S}_{ m } \:  \ottnt{A}   \arrow{dr}[below left]{ \mu^{ \ottsym{1} , m }_{ \ottnt{A} } } &  \mathbb{S}_{ m } \:  \ottnt{A}  \arrow{d}{ \text{id} } \arrow{l}[above]{ \eta_{  \mathbb{S}_{ m } \:  \ottnt{A}  } } \arrow{r}{ \mathbb{S}_{ m } \:   \eta_{ \ottnt{A} }  } &  \mathbb{S}_{ m } \:   \mathbb{S}_{ \ottsym{1} } \:  \ottnt{A}   \arrow{dl}{ \mu^{ m , \ottsym{1} }_{ \ottnt{A} } } \\
&  \mathbb{S}_{ m } \:  \ottnt{A}  &
\end{tikzcd}
\end{subfigure}
\begin{subfigure}{0.4\textwidth}
\begin{tikzcd}
 \mathbb{S}_{ m_{{\mathrm{1}}} } \:   \mathbb{S}_{ m_{{\mathrm{2}}} } \:   \mathbb{S}_{ m_{{\mathrm{3}}} } \:  \ottnt{A}    \arrow{r}{ \mathbb{S}_{ m_{{\mathrm{1}}} } \:   \mu^{ m_{{\mathrm{2}}} , m_{{\mathrm{3}}} }_{ \ottnt{A} }  } \arrow{d}[left]{ \mu^{ m_{{\mathrm{1}}} , m_{{\mathrm{2}}} }_{  \mathbb{S}_{ m_{{\mathrm{3}}} } \:  \ottnt{A}  } } &  \mathbb{S}_{ m_{{\mathrm{1}}} } \:   \mathbb{S}_{  m_{{\mathrm{2}}}  \cdot  m_{{\mathrm{3}}}  } \:  \ottnt{A}   \arrow{d}{ \mu^{ m_{{\mathrm{1}}} ,  m_{{\mathrm{2}}}  \cdot  m_{{\mathrm{3}}}  }_{ \ottnt{A} } } \\
 \mathbb{S}_{  m_{{\mathrm{1}}}  \cdot  m_{{\mathrm{2}}}  } \:   \mathbb{S}_{ m_{{\mathrm{3}}} } \:  \ottnt{A}   \arrow{r}{ \mu^{  m_{{\mathrm{1}}}  \cdot  m_{{\mathrm{2}}}  , m_{{\mathrm{3}}} }_{ \ottnt{A} } } &  \mathbb{S}_{  m_{{\mathrm{1}}}  \cdot    m_{{\mathrm{2}}}  \cdot  m_{{\mathrm{3}}}    } \:  \ottnt{A}  
\end{tikzcd}
\end{subfigure}
\caption{Commutative diagram}
\label{strcd7}
\end{figure} 

The left diagram in Figure \ref{strcd7} commutes because:
\begin{align*}
&   \mu^{ \ottsym{1} , m }_{ \ottnt{A} }   \circ   \eta_{  \mathbb{S}_{ m } \:  \ottnt{A}  }   & & \; \;   \mu^{ m , \ottsym{1} }_{ \ottnt{A} }   \circ   \mathbb{S}_{ m } \:   \eta_{ \ottnt{A} }    \\
= &  \lambda  \ottmv{x}  .   \mathbf{join}^{ \ottsym{1} , m }   (   \ottkw{ret}  \:  \ottmv{x}   )    & & =   \lambda  \ottmv{x}  .   \mathbf{join}^{ m , \ottsym{1} }   (   \mathbf{lift}^{ m }   (   \lambda  \ottmv{y}  .   \ottkw{ret}  \:  \ottmv{y}    )    )     \:  \ottmv{x}  \\ 
= &   \lambda  \ottmv{x}  .   \ottkw{ret}  \:  \ottmv{x}    \:  \leftindex^{ \ottsym{1} }{\gg}\!\! =^{ m }   (   \lambda  \ottmv{y}  .  \ottmv{y}   )   & & =   \lambda  \ottmv{x}  .  \ottmv{x}   \:  \leftindex^{ m }{\gg}\!\! =^{ \ottsym{1} }   \lambda  \ottmv{y}  .   \ottkw{ret}  \:  \ottmv{y}    \\ 
= &   \lambda  \ottmv{x}  .   (   \lambda  \ottmv{y}  .  \ottmv{y}   )    \:  \ottmv{x}  \hspace*{2pt} [\text{By Eqn. (\ref{eq:idl})}] & & =  \lambda  \ottmv{x}  .  \ottmv{x}  \hspace*{2pt} [\text{By Eqn. (\ref{eq:idr})}] \\
= &  \lambda  \ottmv{x}  .  \ottmv{x}  =  \text{id}  & & =  \text{id} 
\end{align*}

The right diagram in Figure \ref{strcd7} commutes because:
\begin{align*}
&   \mu^{ m_{{\mathrm{1}}} ,  m_{{\mathrm{2}}}  \cdot  m_{{\mathrm{3}}}  }_{ \ottnt{A} }   \circ   \mathbb{S}_{ m_{{\mathrm{1}}} } \:   \mu^{ m_{{\mathrm{2}}} , m_{{\mathrm{3}}} }_{ \ottnt{A} }    \\
= &   \lambda  \ottmv{x}  .   \mathbf{join}^{ m_{{\mathrm{1}}} ,  m_{{\mathrm{2}}}  \cdot  m_{{\mathrm{3}}}  }   (   \mathbf{lift}^{ m_{{\mathrm{1}}} }   (   \lambda  \ottmv{y}  .   \mathbf{join}^{ m_{{\mathrm{2}}} , m_{{\mathrm{3}}} }  \ottmv{y}    )    )     \:  \ottmv{x}  \\
= &   \lambda  \ottmv{x}  .  \ottmv{x}   \:  \leftindex^{ m_{{\mathrm{1}}} }{\gg}\!\! =^{  m_{{\mathrm{2}}}  \cdot  m_{{\mathrm{3}}}  }   \lambda  \ottmv{y}  .   \mathbf{join}^{ m_{{\mathrm{2}}} , m_{{\mathrm{3}}} }  \ottmv{y}    \\
= &   \lambda  \ottmv{x}  .  \ottmv{x}   \:  \leftindex^{ m_{{\mathrm{1}}} }{\gg}\!\! =^{  m_{{\mathrm{2}}}  \cdot  m_{{\mathrm{3}}}  }   \lambda  \ottmv{y}  .   (   \ottmv{y}  \:  \leftindex^{ m_{{\mathrm{2}}} }{\gg}\!\! =^{ m_{{\mathrm{3}}} }   \lambda  \ottmv{z}  .  \ottmv{z}    )    \\
= &   \lambda  \ottmv{x}  .   (   \ottmv{x}  \:  \leftindex^{ m_{{\mathrm{1}}} }{\gg}\!\! =^{ m_{{\mathrm{2}}} }   (   \lambda  \ottmv{w}  .  \ottmv{w}   )    )    \:  \leftindex^{  m_{{\mathrm{1}}}  \cdot  m_{{\mathrm{2}}}  }{\gg}\!\! =^{ m_{{\mathrm{3}}} }   \lambda  \ottmv{z}  .  \ottmv{z}   \hspace*{2pt} [\text{By Eqn. (\ref{eq:assoc})}] \\
= &  \lambda  \ottmv{x}  .   \mathbf{join}^{  m_{{\mathrm{1}}}  \cdot  m_{{\mathrm{2}}}  , m_{{\mathrm{3}}} }   (   \ottmv{x}  \:  \leftindex^{ m_{{\mathrm{1}}} }{\gg}\!\! =^{ m_{{\mathrm{2}}} }   \lambda  \ottmv{w}  .  \ottmv{w}    )    \\
= &  \lambda  \ottmv{x}  .   \mathbf{join}^{  m_{{\mathrm{1}}}  \cdot  m_{{\mathrm{2}}}  , m_{{\mathrm{3}}} }   (   \mathbf{join}^{ m_{{\mathrm{1}}} , m_{{\mathrm{2}}} }  \ottmv{x}   )    \\
= &   \mu^{  m_{{\mathrm{1}}}  \cdot  m_{{\mathrm{2}}}  , m_{{\mathrm{3}}} }_{ \ottnt{A} }   \circ   \mu^{ m_{{\mathrm{1}}} , m_{{\mathrm{2}}} }_{  \mathbb{S}_{ m_{{\mathrm{3}}} } \:  \ottnt{A}  }  
\end{align*}

Hence $\mathbb{S}$ is a lax monoidal functor.\\
Now $\mathbb{S}$ is also a strong monoidal functor because $\eta$ and $\mu^{m_{{\mathrm{1}}},m_{{\mathrm{2}}}}$ are isomorphisms.\\
Define: $\epsilon : \mathbb{S}_1 \to \Id{}$ and $\delta^{m_{{\mathrm{1}}},m_{{\mathrm{2}}}} : \mathbb{S}_{ m_{{\mathrm{1}}}  \cdot  m_{{\mathrm{2}}} } \to \mathbb{S}_{m_{{\mathrm{1}}}} \circ \mathbb{S}_{m_{{\mathrm{2}}}}$ as:\\
$\epsilon_A =  \lambda  \ottmv{x}  .   \mathbf{extr} \:  \ottmv{x}  $ and $\delta^{m_{{\mathrm{1}}},m_{{\mathrm{2}}}}_A =  \lambda  \ottmv{x}  .   \mathbf{fork}^{ m_{{\mathrm{1}}} , m_{{\mathrm{2}}} }  \ottmv{x}  $.\\
By dualizing the arguments presented above, we can show that $\epsilon$ and $\delta^{m_{{\mathrm{1}}},m_{{\mathrm{2}}}}$ are strong natural transformations.\\
Further, $\eta$ and $\epsilon$ are inverses of one another because $\eta_A \circ \epsilon_A =  \lambda  \ottmv{x}  .   \ottkw{ret}  \:   (   \mathbf{extr} \:  \ottmv{x}   )    =  \lambda  \ottmv{x}  .  \ottmv{x} $ and $\epsilon_A \circ \eta_A =  \lambda  \ottmv{x}  .   \mathbf{extr} \:   (   \ottkw{ret}  \:  \ottmv{x}   )    =  \lambda  \ottmv{x}  .  \ottmv{x} $.\\
Similarly, $\mu^{m_{{\mathrm{1}}},m_{{\mathrm{2}}}}$ and $\delta^{m_{{\mathrm{1}}},m_{{\mathrm{2}}}}$ are inverses of one another.\\

Hence, $\mathbb{S}$ is a strong monoidal functor.\\
As such, $\llbracket \_ \rrbracket_{( \mathds{F} ,\mathbb{S})}$ provides a sound interpretation of GMCC($ \mathcal{M} $). This is the generic model of GMCC($ \mathcal{M} $).\\

Now we show that if $ \Gamma  \vdash  \ottnt{b}  :  \ottnt{B} $ in GMCC($ \mathcal{M} $), then $\llbracket  \Gamma  \vdash  \ottnt{b}  :  \ottnt{B}  \rrbracket_{( \mathds{F} ,\mathbb{S})} = \lambda x . b \{ \text{proj}_i^n \:  x / x_i \} \in \text{Hom}_{\mathds{F}} (\Pi A_i , B)$ where $\Gamma = x_1 : A_1 , x_2 : A_2 , \ldots , x_n : A_n$ and $\Pi A_i = A_1 \times A_2 \times \ldots \times A_n$. For $n \in \mathbb{N}$ and $1 \leq i \leq n$, $\mathbf{proj}_i^n \: x$ is defined as:
\begin{align*}
\mathbf{proj}_1^1 \: x & = x \\
\mathbf{proj}_i^2 \: x & = \mathbf{proj}_i \: x  \hspace*{3pt} \text{for } i = 1, 2\\
\mathbf{proj}_i^{n+1} \: x & = \mathbf{proj}_i^n \: ( \mathbf{proj}_1 \:  \ottmv{x} ) \hspace*{3pt} \text{for } 2 \leq n \text{ and } 1 \leq i \leq n \\
\mathbf{proj}_{n+1}^{n+1} \: x & =  \mathbf{proj}_2 \:  \ottmv{x}  \hspace{3pt} \text{for } 2 \leq n
\end{align*}

The proof is by induction on $ \Gamma  \vdash  \ottnt{b}  :  \ottnt{B} $. 
\begin{itemize}
\item $\lambda$-calculus. Standard.
\item \Rref{MC-Return}. Have: $ \Gamma  \vdash   \ottkw{ret}  \:  \ottnt{b}   :   S_{ \ottsym{1} } \:  \ottnt{B}  $ where $ \Gamma  \vdash  \ottnt{b}  :  \ottnt{B} $.\\
By IH, $ \llbracket  \Gamma  \vdash  \ottnt{b}  :  \ottnt{B}  \rrbracket  =   \lambda  \ottmv{x}  .  \ottnt{b}   \:   \{ \mathbf{proj}_i^n \:  \ottmv{x}  /  \ottmv{x} _i \}  $. \\
Now, $ \llbracket  \Gamma  \vdash   \ottkw{ret}  \:  \ottnt{b}   :   S_{ \ottsym{1} } \:  \ottnt{B}   \rrbracket  =   \eta_{ \ottnt{B} }   \circ   \llbracket  \Gamma  \vdash  \ottnt{b}  :  \ottnt{B}  \rrbracket   =   \lambda  \ottmv{x}  .   (   \ottkw{ret}  \:  \ottnt{b}   )    \:   \{ \mathbf{proj}_i^n \:  \ottmv{x}  /  \ottmv{x} _i \}  $.
\item \Rref{MC-Extract}. Similar to \rref{MC-Return}.
\item \Rref{MC-Join}. Have: $ \Gamma  \vdash   \mathbf{join}^{ m_{{\mathrm{1}}} , m_{{\mathrm{2}}} }  \ottnt{b}   :   S_{  m_{{\mathrm{1}}}  \cdot  m_{{\mathrm{2}}}  } \:  \ottnt{B}  $ where $ \Gamma  \vdash  \ottnt{b}  :   S_{ m_{{\mathrm{1}}} } \:   S_{ m_{{\mathrm{2}}} } \:  \ottnt{B}   $.\\
By IH, $ \llbracket  \Gamma  \vdash  \ottnt{b}  :  \ottnt{B}  \rrbracket  =   \lambda  \ottmv{x}  .  \ottnt{b}   \:   \{ \mathbf{proj}_i^n \:  \ottmv{x}  /  \ottmv{x} _i \}  $. \\
Now, $ \llbracket  \Gamma  \vdash   \mathbf{join}^{ m_{{\mathrm{1}}} , m_{{\mathrm{2}}} }  \ottnt{b}   :   S_{  m_{{\mathrm{1}}}  \cdot  m_{{\mathrm{2}}}  } \:  \ottnt{B}   \rrbracket  =   \mu^{ m_{{\mathrm{1}}} , m_{{\mathrm{2}}} }_{ \ottnt{B} }   \circ   \llbracket  \Gamma  \vdash  \ottnt{b}  :  \ottnt{B}  \rrbracket   =   \lambda  \ottmv{x}  .   (   \mathbf{join}^{ m_{{\mathrm{1}}} , m_{{\mathrm{2}}} }  \ottnt{b}   )    \:   \{ \mathbf{proj}_i^n \:  \ottmv{x}  /  \ottmv{x} _i \}  $.
\item \Rref{MC-Fork}. Similar to \rref{MC-Join}.
\item \Rref{MC-Fmap}. Have: $ \Gamma  \vdash   \mathbf{lift}^{ m }  \ottnt{f}   :    S_{ m } \:  \ottnt{B}   \to   S_{ m } \:  \ottnt{C}   $ where $ \Gamma  \vdash  \ottnt{f}  :   \ottnt{B}  \to  \ottnt{C}  $.\\
By IH, $ \llbracket  \Gamma  \vdash  \ottnt{f}  :   \ottnt{B}  \to  \ottnt{C}   \rrbracket  =   \lambda  \ottmv{x}  .  \ottnt{f}   \:   \{ \mathbf{proj}_i^n \:  \ottmv{x}  /  \ottmv{x} _i \}  $.\\
Now, \begin{align*}
&  \llbracket  \Gamma  \vdash   \mathbf{lift}^{ m }  \ottnt{f}   :    S_{ m } \:  \ottnt{B}   \to   S_{ m } \:  \ottnt{C}    \rrbracket  \\
= &  \Lambda \Big(     \mathbb{S}_{ m } \:   (   \Lambda^{-1}   \llbracket  \Gamma  \vdash  \ottnt{f}  :   \ottnt{B}  \to  \ottnt{C}   \rrbracket    )    \circ   t^{\mathbb{S}_{ m } }_{ \Gamma ,  \ottnt{B} }     \Big)  \\
= &   \lambda  \ottmv{y}  .   \lambda  \ottmv{z}  .   (    \mathbb{S}_{ m } \:   (   \Lambda^{-1}   \llbracket  \Gamma  \vdash  \ottnt{f}  :   \ottnt{B}  \to  \ottnt{C}   \rrbracket    )    \circ   t^{\mathbb{S}_{ m } }_{ \Gamma ,  \ottnt{B} }    )     \:   (  \ottmv{y}  ,  \ottmv{z}  )   \\
= &    \lambda  \ottmv{y}  .   \lambda  \ottmv{z}  .   (   \mathbf{lift}^{ m }   (   \Lambda^{-1}   \llbracket  \Gamma  \vdash  \ottnt{f}  :   \ottnt{B}  \to  \ottnt{C}   \rrbracket    )    )     \:   (   \mathbf{lift}^{ m }   (   \lambda  \ottmv{w}  .   (  \ottmv{y}  ,  \ottmv{w}  )    )    )    \:  \ottmv{z}  \\
= &  \lambda  \ottmv{y}  .   \mathbf{lift}^{ m }   (    \lambda  \ottmv{z}  .   (   \Lambda^{-1}   \llbracket  \Gamma  \vdash  \ottnt{f}  :   \ottnt{B}  \to  \ottnt{C}   \rrbracket    )    \:   (  \ottmv{y}  ,  \ottmv{z}  )    )    \\
= &  \lambda  \ottmv{y}  .   \mathbf{lift}^{ m }   (    \lambda  \ottmv{z}  .   (    \lambda  \ottmv{u}  .   (    \llbracket  \Gamma  \vdash  \ottnt{f}  :   \ottnt{B}  \to  \ottnt{C}   \rrbracket   \:   \mathbf{proj}_1 \:  \ottmv{u}    )    \:   (   \mathbf{proj}_2 \:  \ottmv{u}   )    )    \:   (  \ottmv{y}  ,  \ottmv{z}  )    )    \\
= &  \lambda  \ottmv{y}  .   \mathbf{lift}^{ m }   (    \lambda  \ottmv{z}  .   (    \llbracket  \Gamma  \vdash  \ottnt{f}  :   \ottnt{B}  \to  \ottnt{C}   \rrbracket   \:  \ottmv{y}   )    \:  \ottmv{z}   )    \\
= &  \lambda  \ottmv{y}  .   \mathbf{lift}^{ m }   (    \llbracket  \Gamma  \vdash  \ottnt{f}  :   \ottnt{B}  \to  \ottnt{C}   \rrbracket   \:  \ottmv{y}   )    \\
= &  \lambda  \ottmv{x}  .   \mathbf{lift}^{ m }   (   \ottnt{f}  \:   \{ \mathbf{proj}_i^n \:  \ottmv{x}  /  \ottmv{x} _i \}    )    \\
= &   \lambda  \ottmv{x}  .   (   \mathbf{lift}^{ m }  \ottnt{f}   )    \:   \{ \mathbf{proj}_i^n \:  \ottmv{x}  /  \ottmv{x} _i \}  
\end{align*}
\item \Rref{MC-Up}. Have: $ \Gamma  \vdash   \mathbf{up}^{ m_{{\mathrm{1}}} , m_{{\mathrm{2}}} }  \ottnt{b}   :   S_{ m_{{\mathrm{2}}} } \:  \ottnt{B}  $ where $ \Gamma  \vdash  \ottnt{b}  :   S_{ m_{{\mathrm{1}}} } \:  \ottnt{B}  $ and $ m_{{\mathrm{1}}}   \leq   m_{{\mathrm{2}}} $.\\
By IH, $ \llbracket  \Gamma  \vdash  \ottnt{b}  :  \ottnt{B}  \rrbracket  =   \lambda  \ottmv{x}  .  \ottnt{b}   \:   \{ \mathbf{proj}_i^n \:  \ottmv{x}  /  \ottmv{x} _i \}  $. \\
Now, $ \llbracket  \Gamma  \vdash   \mathbf{up}^{ m_{{\mathrm{1}}} , m_{{\mathrm{2}}} }  \ottnt{b}   :   S_{ m_{{\mathrm{2}}} } \:  \ottnt{B}   \rrbracket  =   \mathbb{S}^{ m_{{\mathrm{1}}}  \leq  m_{{\mathrm{2}}} }_{ \ottnt{B} }   \circ   \llbracket  \Gamma  \vdash  \ottnt{b}  :  \ottnt{B}  \rrbracket   =   \lambda  \ottmv{x}  .   (   \mathbf{up}^{ m_{{\mathrm{1}}} , m_{{\mathrm{2}}} }  \ottnt{b}   )    \:   \{ \mathbf{proj}_i^n \:  \ottmv{x}  /  \ottmv{x} _i \}  $.
\end{itemize} 

Hence, $\llbracket  \Gamma  \vdash  \ottnt{b}  :  \ottnt{B}  \rrbracket_{( \mathds{F} ,\mathbb{S})} = \lambda x . b \{ \text{proj}_i^n \:  x / x_i \}$.\\ Now, let $ \Gamma  \vdash  \ottnt{b_{{\mathrm{1}}}}  :  \ottnt{B} $ and $ \Gamma  \vdash  \ottnt{b_{{\mathrm{2}}}}  :  \ottnt{B} $ such that $ \llbracket  \ottnt{b_{{\mathrm{1}}}}  \rrbracket  =  \llbracket  \ottnt{b_{{\mathrm{2}}}}  \rrbracket $ in all models of GMCC($ \mathcal{M} $). Then $ \llbracket  \ottnt{b_{{\mathrm{1}}}}  \rrbracket _{( \mathds{F} ,\mathbb{S})} =  \llbracket  \ottnt{b_{{\mathrm{2}}}}  \rrbracket _{( \mathds{F} ,\mathbb{S})}$. Hence, $  \lambda  \ottmv{x}  .  \ottnt{b_{{\mathrm{1}}}}   \:   \{ \mathbf{proj}_i^n \:  \ottmv{x}  /  \ottmv{x} _i \}   \equiv   \lambda  \ottmv{x}  .  \ottnt{b_{{\mathrm{2}}}}   \:   \{ \mathbf{proj}_i^n \:  \ottmv{x}  /  \ottmv{x} _i \}  $. Therefore, $\ottnt{b_{{\mathrm{1}}}} \equiv \ottnt{b_{{\mathrm{2}}}}$.\\
This completes the proof.
\end{proof}


\begin{theorem}[Theorem \ref{gmccgeneric}]
The generic model satisfies the universal property.
\end{theorem}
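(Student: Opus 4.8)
The plan is to show that the generic model $\llbracket \_ \rrbracket_{(\mathds{F},\mathbb{S})}$, constructed in the proof of Theorem~\ref{gmcccomplete}, is universal among models of GMCC($\mathcal{M}$) in the standard sense of categorical logic: every model factors through it essentially uniquely. Concretely, I would fix a target model, say a bicartesian closed category $\mathcal{E}$ together with a strong monoidal functor $\mathbf{S}'$ from $\Ca(\mathcal{M})$ to $\mathbf{End}^{\text{s}}_{\mathcal{E}}$, and exhibit a structure-preserving functor $H : \mathds{F} \to \mathcal{E}$ — one that preserves finite products, coproducts, exponentials (up to coherent isomorphism), and commutes with the graded modal structure, i.e. $H \circ \mathbb{S}_m \cong \mathbf{S}'_m \circ H$ compatibly with $\mathbb{S}^{m_1 \leq m_2}$, $\eta$, $\mu$, $\epsilon$, $\delta$ — such that the interpretation of GMCC($\mathcal{M}$) in $\mathcal{E}$ obtained via $\mathbf{S}'$ agrees (up to the canonical iso) with $H \circ \llbracket \_ \rrbracket_{(\mathds{F},\mathbb{S})}$; and then show $H$ is unique up to a unique monoidal natural isomorphism.

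The key steps, in order: (1) Recall that the objects of $\mathds{F}$ are exactly the types of GMCC($\mathcal{M}$) built freely from $\ottkw{Unit}, \mathbf{Void}, \times, +, \to, S_m$, and that its morphisms are closed terms modulo $\beta\eta$; so any structure-preserving $H$ is \emph{forced} on objects by $H(\ottnt{A} \to \ottnt{B}) = H\ottnt{A} \Rightarrow H\ottnt{B}$, $H(S_m \ottnt{A}) = \mathbf{S}'_m (H\ottnt{A})$, etc., starting from $H$ on base types being determined by the interpretation of those base types in $\mathcal{E}$. This gives existence and uniqueness of $H$ on objects. (2) Define $H$ on morphisms by sending (the $\beta\eta$-class of) a closed term $t : \ottnt{A} \to \ottnt{B}$ to its interpretation $\llbracket t \rrbracket_{(\mathcal{E},\mathbf{S}')} : H\ottnt{A} \to H\ottnt{B}$; this is well-defined precisely by the soundness theorem (Theorem~\ref{gmccsound}) applied in $\mathcal{E}$, which guarantees $\beta\eta$-equal (indeed GMCC-equal) terms receive equal interpretations. (3) Check functoriality: $H(\text{id}) = \text{id}$ and $H(g \circ f) = Hg \circ Hf$, which follow from the standard substitution lemma for the interpretation plus the fact that composition in $\mathds{F}$ is term substitution. (4) Check that $H$ preserves all the relevant structure — this reduces to observing that the product/coproduct/exponential structure of $\mathds{F}$ is \emph{by construction} the syntactic one, and that $\mathbb{S}$, $\eta$, $\mu$, $\epsilon$, $\delta$ were defined in the proof of Theorem~\ref{gmcccomplete} using exactly the term constructors $\mathbf{lift}, \mathbf{up}, \mathbf{ret}, \mathbf{join}, \mathbf{extr}, \mathbf{fork}$, whose interpretations in $\mathcal{E}$ are precisely the components of $\mathbf{S}'$'s monoidal structure; so $H$ intertwines them essentially by definition-chasing. (5) For uniqueness, suppose $H'$ is another structure-preserving functor compatible with the interpretation; then on objects $H' = H$ by the forcing argument in step (1), and on a morphism $t$, compatibility with the interpretation forces $H' t = \llbracket t \rrbracket_{(\mathcal{E},\mathbf{S}')} = H t$, up to the coherence isomorphisms, which are themselves pinned down uniquely.

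I expect the main obstacle to be step~(4): making fully precise what ``structure-preserving'' and ``the generic model satisfies the universal property'' mean, and then verifying that all the coherence data — the comparison isomorphisms $H(\ottnt{A} \times \ottnt{B}) \cong H\ottnt{A} \times H\ottnt{B}$, $H(S_m\ottnt{A}) \cong \mathbf{S}'_m(H\ottnt{A})$, etc. — fit together coherently and are respected by the graded monadic/comonadic transformations, especially because $\mathbb{S}$ is only \emph{strong} (not strict) monoidal and likewise for $\mathbf{S}'$, so a certain amount of bookkeeping with the structural isomorphisms of $\mathbf{End}^{\text{s}}_{\mathcal{E}}$ is unavoidable. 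The remaining steps are routine once the correct notion of morphism of models is chosen; indeed the whole argument is an instance of the standard fact that a term model built from a type theory is initial (equivalently, generic) in its category of models, so I would lean on the reference \citep{jacob} for the general template and only spell out the modal bookkeeping in detail.
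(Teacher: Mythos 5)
Your proposal is correct and follows essentially the same term-model initiality template as the paper: define the functor on objects by the interpretation of types and on morphisms by the interpretation of terms, obtain well-definedness from soundness (Theorem~\ref{gmccsound}), check functoriality via the substitution lemma, and force uniqueness from compatibility with the generic interpretation. The one substantive divergence is in the formulation, and it dissolves the obstacle you flag in step (4): the paper states the universal property strictly --- for every model $W'$ in a bicartesian closed category $\mathds{D}$ there is a \emph{unique} finite-product-preserving functor $F$ from the classifying category to $\mathds{D}$ with $F(\mathit{G}) = W'$ on the nose, not up to coherent isomorphism. Because the generic model interprets each type literally as itself and each judgement $\Gamma \vdash a : A$ as $\lambda x.\, a\{\mathrm{proj}_i^n\,x / x_i\}$, the equation $F(\mathit{G}) = W'$ already pins $F$ down exactly on all objects and morphisms, so there are no comparison isomorphisms to track; moreover $F$ is only asked to preserve finite products (needed to reindex contexts), while preservation of exponentials, coproducts, and the graded modal structure is carried implicitly by the condition $F(\mathit{G}) = W'$ rather than imposed on $F$. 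Your ``up to iso'' variant with a richer notion of structure-preserving functor is a legitimate (and in some ways more robust) statement, but it obliges you to fix a 2-categorical notion of morphism of models and verify the coherence bookkeeping, which the paper's strict formulation deliberately avoids.
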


\begin{proof}

Let the parametrizing monoid be $ \mathcal{M} $. For GMCC($ \mathcal{M} $), let $ \mathit{Cl} $ be the classifying category and let $ \mathit{G} $ be the generic model. 

First, we show how to define models of GMCC($ \mathcal{M} $) in other categories using models of GMCC($ \mathcal{M} $) in $ \mathit{Cl} $. Let $ \mathds{D} $ be any bicartesian closed category and let $F$ be a finite-product-preserving functor from $ \mathit{Cl} $ to $ \mathds{D} $. Now, given any model $W$ of GMCC($ \mathcal{M} $) in $ \mathit{Cl} $, we can define a model $F(W)$ of GMCC($ \mathcal{M} $) in $ \mathds{D} $ as follows:
\begin{align*}
 \llbracket  \ottnt{A}  \rrbracket _{F(W)} & \triangleq F ( \llbracket  \ottnt{A}  \rrbracket _W) \\
 \llbracket  \Gamma  \vdash  \ottnt{a}  :  \ottnt{A}  \rrbracket _{F(W)} & \triangleq \Pi \llbracket A_i \rrbracket_{F(W)} (= \Pi F (\llbracket A_i \rrbracket_W)) \xrightarrow{p^{-1}} F (\Pi \llbracket A_i \rrbracket_W) \xrightarrow{F( \llbracket  \ottnt{a}  \rrbracket _W)} F ( \llbracket  \ottnt{A}  \rrbracket _W) 
\end{align*}
where $\Gamma = x_1 : A_1 , x_2 : A_2 , \ldots , x_n : A_n$ and $p$ is a witness of the finite-product-preserving property of $F$.\\ Next, we check that $F(W)$ is indeed a model of GMCC($ \mathcal{M} $).

If $ \Gamma  \vdash  \ottnt{a}  :  \ottnt{A} $ in GMCC($ \mathcal{M} $), then $ \llbracket  \Gamma  \vdash  \ottnt{a}  :  \ottnt{A}  \rrbracket _W \in \text{Hom}_{ \mathit{Cl} } ( \llbracket  \Gamma  \rrbracket _W,  \llbracket  \ottnt{A}  \rrbracket _W)$.\\ Therefore, $ \llbracket  \Gamma  \vdash  \ottnt{a}  :  \ottnt{A}  \rrbracket _{F(W)} \in \text{Hom}_{ \mathds{D} } ( \llbracket  \Gamma  \rrbracket _{F(W)} ,  \llbracket  \ottnt{A}  \rrbracket _{F(W)})$. Hence, $F(W)$ is a well-defined model of GMCC($ \mathcal{M} $). 

Now, say $ \Gamma  \vdash  \ottnt{a_{{\mathrm{1}}}}  :  \ottnt{A} $ and $ \Gamma  \vdash  \ottnt{a_{{\mathrm{2}}}}  :  \ottnt{A} $ such that $ \ottnt{a_{{\mathrm{1}}}}  \equiv  \ottnt{a_{{\mathrm{2}}}} $ in GMCC($ \mathcal{M} $). \\
Then, $ \llbracket  \Gamma  \vdash  \ottnt{a_{{\mathrm{1}}}}  :  \ottnt{A}  \rrbracket _W =  \llbracket  \Gamma  \vdash  \ottnt{a_{{\mathrm{2}}}}  :  \ottnt{A}  \rrbracket _W \in \text{Hom}_{ \mathit{Cl} } ( \llbracket  \Gamma  \rrbracket _W ,  \llbracket  \ottnt{A}  \rrbracket _W)$. \\
So, $F ( \llbracket  \Gamma  \vdash  \ottnt{a_{{\mathrm{1}}}}  :  \ottnt{A}  \rrbracket _W) \circ q = F ( \llbracket  \Gamma  \vdash  \ottnt{a_{{\mathrm{2}}}}  :  \ottnt{A}  \rrbracket _W) \circ q \in \text{Hom}_{ \mathds{D} } ( \llbracket  \Gamma  \rrbracket _{F(W)} ,  \llbracket  \ottnt{A}  \rrbracket _{F(W)} )$. \\
Or, $ \llbracket  \Gamma  \vdash  \ottnt{a_{{\mathrm{1}}}}  :  \ottnt{A}  \rrbracket _{F(W)} =  \llbracket  \Gamma  \vdash  \ottnt{a_{{\mathrm{2}}}}  :  \ottnt{A}  \rrbracket _{F(W)} \in \text{Hom}_{ \mathds{D} } ( \llbracket  \Gamma  \rrbracket _{F(W)} ,  \llbracket  \ottnt{A}  \rrbracket _{F(W)})$.\\
Hence, $F(W)$ is a sound model of GMCC($ \mathcal{M} $).

Next, we show the universal property: for any given model $W'$ of GMCC($ \mathcal{M} $) in any bicartesian closed category $ \mathds{D} $, there exists a \textit{unique} finite-product-preserving functor, $F$, from $ \mathit{Cl} $ to $ \mathds{D} $ such that $F( \mathit{G} ) = W'$.

Given $W'$, we define a functor, $F$, from $ \mathit{Cl} $ to $ \mathds{D} $, as follows:
\begin{align*}
F (A) & \triangleq  \llbracket  \ottnt{A}  \rrbracket _{W'} \\
F (A \xrightarrow{a} B) & \triangleq  \llbracket   \ottmv{x}  :  \ottnt{A}   \vdash    \ottnt{a}  \:  \ottmv{x}    :  \ottnt{B}  \rrbracket _{W'}
\end{align*} 
Since $A \xrightarrow{a} B$ in $ \mathit{Cl} $, we know that $  \emptyset   \vdash  \ottnt{a}  :   \ottnt{A}  \to  \ottnt{B}  $ in GMCC($ \mathcal{M} $).\\
Therefore, $  \ottmv{x}  :  \ottnt{A}   \vdash   \ottnt{a}  \:  \ottmv{x}   :  \ottnt{B} $ in GMCC($ \mathcal{M} $). So, $ \llbracket   \ottmv{x}  :  \ottnt{A}   \vdash    \ottnt{a}  \:  \ottmv{x}    :  \ottnt{B}  \rrbracket _{W'} \in \text{Hom}_{ \mathds{D} } ( \llbracket  \ottnt{A}  \rrbracket _{W'} ,  \llbracket  \ottnt{B}  \rrbracket _{W'} )$.\\
Here, we need to check that the definition of $F$ respects the equivalence relation on morphisms in $ \mathit{Cl} $. In other words, we need to check that if $  \emptyset   \vdash  \ottnt{a_{{\mathrm{1}}}}  :   \ottnt{A}  \to  \ottnt{B}  $ and $  \emptyset   \vdash  \ottnt{a_{{\mathrm{2}}}}  :   \ottnt{A}  \to  \ottnt{B}  $ such that $ \ottnt{a_{{\mathrm{1}}}}  \equiv  \ottnt{a_{{\mathrm{2}}}} $ in GMCC($ \mathcal{M} $), then $ \llbracket   \ottmv{x}  :  \ottnt{A}   \vdash    \ottnt{a_{{\mathrm{1}}}}  \:  \ottmv{x}    :  \ottnt{B}  \rrbracket _{W'} =  \llbracket   \ottmv{x}  :  \ottnt{A}   \vdash    \ottnt{a_{{\mathrm{2}}}}  \:  \ottmv{x}    :  \ottnt{B}  \rrbracket _{W'}$. But this is true because $W'$ is a sound model of GMCC($ \mathcal{M} $).\\
So $F$ is well-defined.

Next, we check that $F$ is indeed a functor.\\
\[ F (A \xrightarrow{\text{id}} A) =  \llbracket   \ottmv{x}  :  \ottnt{A}   \vdash  \ottmv{x}  :  \ottnt{A}  \rrbracket _{W'} =  \llbracket  \ottnt{A}  \rrbracket _{W'} \xrightarrow{\text{id}}  \llbracket  \ottnt{A}  \rrbracket _{W'} \]
\begin{align*}
& F (A \xrightarrow{f} B \xrightarrow{g} C) \\
= &  \llbracket   \ottmv{x}  :  \ottnt{A}   \vdash     \ottnt{g}  \:  \ottnt{f}   \:  \ottmv{x}    :  \ottnt{C}  \rrbracket _{W'} \\
= &  \text{app}  \circ  \langle  \llbracket   \ottmv{x}  :  \ottnt{A}   \vdash  \ottnt{g}  :   \ottnt{B}  \to  \ottnt{C}   \rrbracket _{W'} ,  \llbracket   \ottmv{x}  :  \ottnt{A}   \vdash    \ottnt{f}  \:  \ottmv{x}    :  \ottnt{B}  \rrbracket _{W'} \rangle \\
= &  \text{app}  \circ  \langle  \llbracket   \ottmv{x}  :  \ottnt{A}   \vdash  \ottnt{g}  :   \ottnt{B}  \to  \ottnt{C}   \rrbracket _{W'} ,  \text{app}  \circ \langle  \llbracket   \ottmv{x}  :  \ottnt{A}   \vdash   \ottnt{f}   :   \ottnt{A}  \to  \ottnt{B}   \rrbracket _{W'} ,  \llbracket   \ottmv{x}  :  \ottnt{A}   \vdash  \ottmv{x}  :  \ottnt{A}  \rrbracket _{W'} \rangle \rangle \\
= &  \text{app}  \circ  \langle  \llbracket   \emptyset   \vdash  \ottnt{g}  :   \ottnt{B}  \to  \ottnt{C}   \rrbracket _{W'} \circ \langle \rangle ,  \text{app}  \circ \langle  \llbracket   \emptyset   \vdash   \ottnt{f}   :   \ottnt{A}  \to  \ottnt{B}   \rrbracket _{W'} \circ \langle \rangle ,  \text{id}  \rangle \rangle \\
= &  \text{app}  \circ \langle  \llbracket   \emptyset   \vdash  \ottnt{g}  :   \ottnt{B}  \to  \ottnt{C}   \rrbracket _{W'} \circ \langle \rangle ,  \text{id}  \rangle \circ  \text{app}  \circ \langle  \llbracket   \emptyset   \vdash  \ottnt{f}  :   \ottnt{A}  \to  \ottnt{B}   \rrbracket _{W'} \circ \langle \rangle ,  \text{id}  \rangle \\
= &  \llbracket   \ottmv{y}  :  \ottnt{B}   \vdash    \ottnt{g}  \:  \ottmv{y}    :  \ottnt{C}  \rrbracket _{W'} \circ  \llbracket   \ottmv{x}  :  \ottnt{A}   \vdash    \ottnt{f}  \:  \ottmv{x}    :  \ottnt{B}  \rrbracket _{W'} \\
= & F (B \xrightarrow{g} C) \circ F (A \xrightarrow{f} B)
\end{align*}

Observe that $F(\ottnt{A_{{\mathrm{1}}}} \times \ottnt{A_{{\mathrm{2}}}}) =  \llbracket   \ottnt{A_{{\mathrm{1}}}}  \times  \ottnt{A_{{\mathrm{2}}}}   \rrbracket _{W'} =  \llbracket  \ottnt{A_{{\mathrm{1}}}}  \rrbracket _{W'} \times  \llbracket  \ottnt{A_{{\mathrm{2}}}}  \rrbracket _{W'} = F \ottnt{A_{{\mathrm{1}}}} \times F \ottnt{A_{{\mathrm{2}}}}$.\\
Hence, $F$ is a finite-product-preserving functor. 

Now, we show that $F( \mathit{G} ) = W'$.\\
$ \llbracket  \ottnt{A}  \rrbracket _{F( \mathit{G} )} = F  ( \llbracket  \ottnt{A}  \rrbracket _{ \mathit{G} }) = F (A) =  \llbracket  \ottnt{A}  \rrbracket _{W'}.$\\
Further, \begin{align*}
 &  \llbracket  \Gamma  \vdash  \ottnt{a}  :  \ottnt{A}  \rrbracket _{F( \mathit{G} )} \\
= & F ( \llbracket  \Gamma  \vdash  \ottnt{a}  :  \ottnt{A}  \rrbracket _{ \mathit{G} }) \\ 
= & F (\lambda x . a \{ \text{proj}_i^n \: x / x_i \}) \hspace*{6pt} [ \text{Here}, \Gamma = x_1 : A_1 , x_2 : A_2, \ldots, x_n : A_n ]\\
= & \llbracket x : \Pi A_i \vdash a \{ \text{proj}_i^n \: x / x_i \} : A \rrbracket_{W'} \\
= &  \llbracket  \Gamma  \vdash  \ottnt{a}  :  \ottnt{A}  \rrbracket _{W'}.
\end{align*}
Therefore, $F( \mathit{G} ) = W'$.

Next, we show the uniqueness property. Let $F'$ be a finite-product-preserving functor from $ \mathit{Cl} $ to $ \mathds{D} $ such that $F'( \mathit{G} ) = W'$. Need to show that $F' = F$.

Now, $F'(A) = F' ( \llbracket  \ottnt{A}  \rrbracket _{ \mathit{G} }) =  \llbracket  \ottnt{A}  \rrbracket _{W'} = F (A)$.\\
Also, $F'(A \xrightarrow{a} B) = F' ( \llbracket   \ottmv{x}  :  \ottnt{A}   \vdash    \ottnt{a}  \:  \ottmv{x}    :  \ottnt{B}  \rrbracket _{ \mathit{G} }) =  \llbracket   \ottmv{x}  :  \ottnt{A}   \vdash    \ottnt{a}  \:  \ottmv{x}    :  \ottnt{B}  \rrbracket _{W'} = F(A \xrightarrow{a} B)$.\\
Therefore, $F' = F$.\\
This completes the proof.
\end{proof}


\section{Proofs of lemmas/theorems stated in Section \ref{secgmccdcce}}

\begin{theorem}[Theorem \ref{DCCComplete}]
If $ \Gamma  \vdash  \ottnt{a}  :  \ottnt{A} $ in GMCC($ \mathcal{L} $), then $  \overline{  \Gamma  }   \vdash   \overline{ \ottnt{a} }   :   \overline{ \ottnt{A} }  $ in \ED{}($ \mathcal{L} $). Further, if $ \Gamma  \vdash  \ottnt{a_{{\mathrm{1}}}}  :  \ottnt{A} $ and $ \Gamma  \vdash  \ottnt{a_{{\mathrm{2}}}}  :  \ottnt{A} $ such that $\ottnt{a_{{\mathrm{1}}}} \equiv \ottnt{a_{{\mathrm{2}}}}$ in GMCC($ \mathcal{L} $), then $ \overline{ \ottnt{a_{{\mathrm{1}}}} }  \simeq  \overline{ \ottnt{a_{{\mathrm{2}}}} } $ in \ED{}($ \mathcal{L} $).
\end{theorem}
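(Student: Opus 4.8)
The plan is to prove Theorem~\ref{DCCComplete} by chaining together the translations we have already established. Recall that the translation $\overline{\phantom{a}}$ from GMCC to \ED{} is simply the union of the translation from GMC to DCC (Figure~\ref{GMCtoDCC}) and the translation from GCC to \ED{} (Figure~\ref{GCCtoDCC}), with the graded modal type constructor $S_m$ sent to $\mathcal{T}_m$, and with the standard $\lambda$-calculus fragment translated homomorphically. So the theorem is essentially the ``merge'' of Theorems~\ref{GMCtoDCCTh} and~\ref{GCCtoDCCe}: the first handles $\mathbf{ret},\mathbf{join},\mathbf{lift},\mathbf{up}$, and the second handles $\mathbf{extr},\mathbf{fork}$ (and again $\mathbf{lift},\mathbf{up}$, consistently).

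The proof proceeds in two parts. \textbf{Typing preservation.} By induction on the derivation $ \Gamma  \vdash  \ottnt{a}  :  \ottnt{A} $ in GMCC($ \mathcal{L} $). The cases for standard $\lambda$-calculus constructs follow directly from the induction hypothesis, since the translation commutes with these constructs and \ED{} contains all the standard $\lambda$-calculus rules. The cases for \rref{M-Return,M-Fmap,M-Join,M-Up} are exactly those treated in the proof of Theorem~\ref{GMCtoDCCTh} (note that although that theorem was stated for GMC$\to$DCC, the target derivations only ever use \rref{Prot-Monad} among the protection rules, so they remain valid in \ED{}, which extends DCC). The cases for \rref{C-Extract,C-Fork} are exactly those treated in the proof of Theorem~\ref{GCCtoDCCe}, using \rref{Prot-Minimum,Prot-Combine} respectively. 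The \rref{C-Fmap,C-Up} cases coincide with the \rref{M-Fmap,M-Up} cases since the translations agree. Hence $  \overline{  \Gamma  }   \vdash   \overline{ \ottnt{a} }   :   \overline{ \ottnt{A} }  $ in \ED{}($ \mathcal{L} $).

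\textbf{Meaning preservation.} Suppose $ \Gamma  \vdash  \ottnt{a_{{\mathrm{1}}}}  :  \ottnt{A} $ and $ \Gamma  \vdash  \ottnt{a_{{\mathrm{2}}}}  :  \ottnt{A} $ with $\ottnt{a_{{\mathrm{1}}}} \equiv \ottnt{a_{{\mathrm{2}}}}$ in GMCC($ \mathcal{L} $). By definition, $\simeq$ on \ED{}-terms is equality after erasure to plain $\lambda$-terms; so it suffices to show $ \lfloor   \overline{ \ottnt{a_{{\mathrm{1}}}} }   \rfloor   \equiv   \lfloor   \overline{ \ottnt{a_{{\mathrm{2}}}} }   \rfloor $ as $\lambda$-terms. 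The key observation is that erasure-after-translation coincides with the GMCC erasure $\lfloor-\rfloor$ (the one stripping $\mathbf{ret},\mathbf{join},\mathbf{extr},\mathbf{fork},\mathbf{lift},\mathbf{up}$ and grades): one checks, by a routine induction on term structure, that $ \lfloor   \overline{ \ottnt{a} }   \rfloor   \equiv   \lfloor  \ottnt{a}  \rfloor $ for every GMCC-term $\ottnt{a}$ --- e.g. $ \lfloor   \overline{  \mathbf{join}^{ \ell_{{\mathrm{1}}} , \ell_{{\mathrm{2}}} }  \ottnt{a}  }   \rfloor  =  \lfloor   \mathbf{bind} ^{ \ell_{{\mathrm{1}}} } \:  \ottmv{x}  =  \overline{ \ottnt{a} }  \: \mathbf{in} \:   \mathbf{bind} ^{ \ell_{{\mathrm{2}}} } \:  \ottmv{y}  =  \ottmv{x}  \: \mathbf{in} \:   \mathbf{eta} ^{ \ell_{{\mathrm{1}}}  \vee  \ell_{{\mathrm{2}}} }  \ottmv{y}     \rfloor   \equiv   \lfloor   \overline{ \ottnt{a} }   \rfloor   \equiv   \lfloor  \ottnt{a}  \rfloor $, and similarly for $\mathbf{fork},\mathbf{extr}$, which all collapse under both erasure operations. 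Now the main work is to show that GMCC-equality implies erasure-equality, i.e. $\ottnt{a_{{\mathrm{1}}}} \equiv \ottnt{a_{{\mathrm{2}}}}$ in GMCC implies $ \lfloor  \ottnt{a_{{\mathrm{1}}}}  \rfloor   \equiv   \lfloor  \ottnt{a_{{\mathrm{2}}}}  \rfloor $; this is proved by induction on the derivation of GMCC-equality, checking each generating rule (the $\beta\eta$-rules of $\lambda$-calculus, the equations in Figures~\ref{eqGMC} and~\ref{eqGCC}, and the four extra GMCC equations $ \mathbf{extr} \:   (   \ottkw{ret}  \:  \ottnt{a}   )   \equiv \ottnt{a}$ etc.) --- each one becomes either a reflexivity or a $\beta\eta$-step or an identity after erasure. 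Combining, $ \lfloor   \overline{ \ottnt{a_{{\mathrm{1}}}} }   \rfloor   \equiv   \lfloor  \ottnt{a_{{\mathrm{1}}}}  \rfloor   \equiv   \lfloor  \ottnt{a_{{\mathrm{2}}}}  \rfloor   \equiv   \lfloor   \overline{ \ottnt{a_{{\mathrm{2}}}} }   \rfloor $, so $ \overline{ \ottnt{a_{{\mathrm{1}}}} }  \simeq  \overline{ \ottnt{a_{{\mathrm{2}}}} } $ in \ED{}($ \mathcal{L} $).

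The main obstacle is the meaning-preservation part, and specifically the case analysis showing that every GMCC equation erases to a $\lambda$-calculus equality --- the equations involving $\mathbf{lift}$ (e.g. (\ref{eq:comp})) and the $\mathbf{bind}$-associativity (\ref{eq:assoc}) require a little care with substitution under erasure, and one must also verify that erasure is a congruence so the induction on the equivalence closure goes through. Everything else is bookkeeping that piggybacks on Theorems~\ref{GMCtoDCCTh} and~\ref{GCCtoDCCe}.
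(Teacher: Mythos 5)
Your proposal is correct and follows essentially the same route as the paper: the paper's proof of this theorem is literally ``Follows from Theorems~\ref{GMCtoDCCTh} and~\ref{GCCtoDCCe},'' i.e.\ the typing cases are inherited from those two results (with the GMC-targeted derivations remaining valid in \ED{} since it extends DCC), and meaning preservation reduces to the observation that GMCC-equal terms have $\beta\eta$-equal erasures and that erasure commutes with the translation. Your write-up merely makes explicit the bookkeeping that the paper leaves implicit.
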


\begin{proof}
Follows from Theorems \ref{GMC2DCC} and \ref{GCC2DCCe}.
\end{proof}


\begin{lemma}[Lemma \ref{protectId}]\label{ProtT}
If $ \ell  \sqsubseteq  \ottnt{B} $, then there exists a term $  \emptyset   \vdash   j^{ \ell }_{  \underline{ \ottnt{B} }  }   :    S_{  \ell  } \:    \underline{ \ottnt{B} }     \to   \underline{ \ottnt{B} }   $ such that $ \lfloor    j^{ \ell }_{  \underline{ \ottnt{B} }  }    \rfloor  \equiv  \lambda  \ottmv{x}  .  \ottmv{x} $.
\end{lemma}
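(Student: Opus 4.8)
The plan is to proceed by structural induction on the derivation of the protection judgement $ \ell  \sqsubseteq  \ottnt{B} $. For each rule that can conclude $ \ell  \sqsubseteq  \ottnt{B} $, I will exhibit the term $ j^{ \ell }_{  \underline{ \ottnt{B} }  }$ of type $  S_{  \ell  } \:    \underline{ \ottnt{B} }     \to   \underline{ \ottnt{B} } $ and check that its erasure is $\beta\eta$-equal to the identity. Since the protection rules of \ED{} are \rref{Prot-Prod,Prot-Fun,Prot-Monad,Prot-Already,Prot-Minimum,Prot-Combine}, there are six cases. The erasure condition $ \lfloor    j^{ \ell }_{  \underline{ \ottnt{B} }  }    \rfloor  \equiv  \lambda  \ottmv{x}  .  \ottmv{x} $ is what makes the construction delicate: it forces each $j$ to be built only out of the modal constructors $\mathbf{ret},\mathbf{extr},\mathbf{join},\mathbf{fork},\mathbf{lift},\mathbf{up}$ (which all erase away) together with $\lambda$-calculus plumbing that collapses to the identity after erasure.

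First I would handle the base cases. For \rref{Prot-Minimum}, where $ \bot   \sqsubseteq  \ottnt{B} $ for arbitrary $\ottnt{B}$, I take $j \triangleq  \lambda  \ottmv{x}  .   \mathbf{extr} \:  \ottmv{x} $, using that $S_{ \bot }$ carries the comonadic counit $\mathbf{extr} :  S_{  \bot   } \:  \ottnt{B}   \to  \ottnt{B} $; its erasure is $ \lambda  \ottmv{x}  .  \ottmv{x} $. For \rref{Prot-Already}, where $ \ell  \sqsubseteq   \mathcal{T}_{ \ell' } \:  \ottnt{B} $ because $ \ell  \sqsubseteq  \ottnt{B} $, the IH gives $j_0 :  S_{  \ell  } \:    \underline{ \ottnt{B} }     \to   \underline{ \ottnt{B} } $ with $ \lfloor  \ottnt{j_{{\mathrm{0}}}}  \rfloor  \equiv  \lambda  \ottmv{x}  .  \ottmv{x} $; I then set $j \triangleq  \lambda  \ottmv{x}  .   \mathbf{split}$-free composite — concretely $ \lambda  \ottmv{x}  .  ( \mathbf{lift}^{  \ell'  }  \ottnt{j_{{\mathrm{0}}}} ) \: ( \mathbf{fork}^{ \ell , \ell' }  \ottmv{x} )$, which has type $  S_{  \ell  } \:   S_{  \ell'  } \:   \underline{ \ottnt{B} }     \to   S_{  \ell'  } \:   \underline{ \ottnt{B} } $ — note this is exactly where the comonadic $\mathbf{fork}$ rescues the construction that failed in Section~\ref{JustMonadic}. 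Its erasure is $ \lambda  \ottmv{x}  .   \lfloor  \ottnt{j_{{\mathrm{0}}}}  \rfloor   \:  \ottmv{x}  \equiv  \lambda  \ottmv{x}  .  \ottmv{x} $.

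For \rref{Prot-Monad}, where $ \ell  \sqsubseteq   \mathcal{T}_{ \ell' } \:  \ottnt{B} $ because $ \ell  \sqsubseteq  \ell' $, I use $ \mathbf{join}^{  \ell  ,  \ell'  }$ together with $\mathbf{up}$ to move $ \ell  \vee  \ell' $ to $\ell'$, as sketched in the excerpt; the erasure is again the identity. For \rref{Prot-Prod} with $ \ell  \sqsubseteq   \ottnt{B_{{\mathrm{1}}}}  \times  \ottnt{B_{{\mathrm{2}}}} $ from $ \ell  \sqsubseteq  \ottnt{B_{{\mathrm{1}}}} $ and $ \ell  \sqsubseteq  \ottnt{B_{{\mathrm{2}}}} $, I use the isomorphism-flavoured combinator that pushes $S_\ell$ through a product (projecting twice, applying $\mathbf{lift}$ of each projection, then pairing and applying the two inductively-given $j_i$); its erasure is $ \lambda  \ottmv{x}  .   (  \mathbf{proj}_1 \:  \ottmv{x}  ,  \mathbf{proj}_2 \:  \ottmv{x}  )  \equiv  \lambda  \ottmv{x}  .  \ottmv{x} $. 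For \rref{Prot-Fun} with $ \ell  \sqsubseteq   \ottnt{A}  \to  \ottnt{B} $ from $ \ell  \sqsubseteq  \ottnt{B} $, I $\lambda$-abstract an argument, $\mathbf{lift}$ the application map, and postcompose with the $j$ for $\ottnt{B}$; erasure collapses to $ \lambda  \ottmv{x}  .   \lambda  \ottmv{y}  .  \ottmv{x}   \:  \ottmv{y}  \equiv  \lambda  \ottmv{x}  .  \ottmv{x} $. Finally, \rref{Prot-Combine} with $ \ell_{{\mathrm{1}}}  \vee  \ell_{{\mathrm{2}}}   \sqsubseteq  \ottnt{B} $ from $ \ell_{{\mathrm{1}}}  \sqsubseteq  \ottnt{B} $ and $ \ell_{{\mathrm{2}}}  \sqsubseteq  \ottnt{B} $: split $S_{ \ell_{{\mathrm{1}}}  \vee  \ell_{{\mathrm{2}}} }$ via $\mathbf{fork}$ into $S_{\ell_{{\mathrm{1}}}} S_{\ell_{{\mathrm{2}}}}$, then apply $\mathbf{lift}$ of $j^{\ell_{{\mathrm{2}}}}$ followed by $j^{\ell_{{\mathrm{1}}}}$ (both from the IH), giving the identity after erasure. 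The main obstacle is bookkeeping rather than conceptual: one must verify in the \rref{Prot-Monad,Prot-Already,Prot-Combine} cases that the various $\mathbf{up}$ coercions typecheck, which amounts to checking idempotency identities like $ \ell  \vee  \ell'  = \ell'$ under the hypotheses; and one must carefully track that the $\lambda$-calculus scaffolding in \rref{Prot-Prod,Prot-Fun} really does erase to $ \lambda  \ottmv{x}  .  \ottmv{x} $ using $\beta\eta$, which is routine but needs the congruence and $\eta$ rules applied in the right order.
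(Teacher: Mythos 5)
Your overall strategy is exactly the paper's: induction on the derivation of $\ell \sqsubseteq B$, with $\mathbf{extr}$ for \textsc{Prot-Minimum}, $\mathbf{join}$ for \textsc{Prot-Monad}, lifted projections and lifted application for \textsc{Prot-Prod} and \textsc{Prot-Fun}, and the comonadic $\mathbf{fork}$ doing the essential work in \textsc{Prot-Already} and \textsc{Prot-Combine}; the erasure claims are discharged the same way in both proofs.

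Two concrete slips, both in the cases that matter most. In \textsc{Prot-Already} your term $\lambda x.\,(\mathbf{lift}^{\ell'} j_0)\,(\mathbf{fork}^{\ell,\ell'}\, x)$ does not typecheck: $x$ has type $S_{\ell}\,S_{\ell'}\,\underline{B}$, whereas $\mathbf{fork}^{\ell,\ell'}$ expects an argument of type $S_{\ell\vee\ell'}\,\underline{B}$, and would in any case produce $S_{\ell}\,S_{\ell'}\,\underline{B}$, which is the wrong order for $\mathbf{lift}^{\ell'} j_0 : S_{\ell'}\,S_{\ell}\,\underline{B} \to S_{\ell'}\,\underline{B}$. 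You must first collapse with $\mathbf{join}^{\ell,\ell'}$ and then re-expand with the indices swapped, i.e.\ take $\lambda x.\,(\mathbf{lift}^{\ell'} j_0)\,(\mathbf{fork}^{\ell',\ell}(\mathbf{join}^{\ell,\ell'} x))$, which typechecks precisely because $\ell\vee\ell' = \ell'\vee\ell$; this commutativity-mediated flip from $\mathbf{join}$ to $\mathbf{fork}$ is the entire content of the case. Second, \textsc{Prot-Combine} concludes $\ell \sqsubseteq B$ for any $\ell$ with $\ell \sqsubseteq \ell_1\vee\ell_2$, not only for $\ell = \ell_1\vee\ell_2$ as you state it, so your term must begin with the coercion $\mathbf{up}^{\ell,\,\ell_1\vee\ell_2}$ before applying $\mathbf{fork}^{\ell_1,\ell_2}$; since $\mathbf{up}$ and $\mathbf{join}$ are erased, the identity-after-erasure claim survives both repairs. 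With those corrections the proof coincides with the paper's.
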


\begin{proof}
By induction on $ \ell  \sqsubseteq  \ottnt{B} $.\\
\begin{itemize}

\item \Rref{Prot-Prod}. Have: $ \ell  \sqsubseteq   \ottnt{A}  \times  \ottnt{B}  $ where $ \ell  \sqsubseteq  \ottnt{A} $ and $ \ell  \sqsubseteq  \ottnt{B} $.\\
By IH, $\exists$ $  \emptyset   \vdash   j^{ \ell }_{  \underline{ \ottnt{A} }  }   :    S_{  \ell  } \:   \underline{ \ottnt{A} }    \to   \underline{ \ottnt{A} }   $ and $  \emptyset   \vdash   j^{ \ell }_{  \underline{ \ottnt{B} }  }   :    S_{  \ell  } \:   \underline{ \ottnt{B} }    \to   \underline{ \ottnt{B} }   $ such that $ \lfloor    j^{ \ell }_{  \underline{ \ottnt{A} }  }    \rfloor  \equiv  \lambda  \ottmv{x}  .  \ottmv{x} $ and $ \lfloor    j^{ \ell }_{  \underline{ \ottnt{B} }  }    \rfloor  \equiv  \lambda  \ottmv{x}  .  \ottmv{x} $.\\
Now,
$$\infer[\textsc{(Lam)}]{  \emptyset   \vdash   \lambda  \ottmv{z}  .   (     j^{ \ell }_{  \underline{ \ottnt{A} }  }    \:   (    (   \mathbf{lift}^{  \ell  }   (   \lambda  \ottmv{y}  .   \mathbf{proj}_1 \:  \ottmv{y}    )    )   \:  \ottmv{z}   )    ,     j^{ \ell }_{  \underline{ \ottnt{B} }  }    \:   (    (   \mathbf{lift}^{  \ell  }   (   \lambda  \ottmv{y}  .   \mathbf{proj}_2 \:  \ottmv{y}    )    )   \:  \ottmv{z}   )    )    :     S_{  \ell  } \:   (    \underline{ \ottnt{A} }   \times   \underline{ \ottnt{B} }    )    \to   \underline{ \ottnt{A} }    \times   \underline{ \ottnt{B} }   }
           {\infer[\textsc{(Pair)}]{  \ottmv{z}  :   S_{  \ell  } \:   (    \underline{ \ottnt{A} }   \times   \underline{ \ottnt{B} }    )     \vdash   (     j^{ \ell }_{  \underline{ \ottnt{A} }  }    \:   (    (   \mathbf{lift}^{  \ell  }   (   \lambda  \ottmv{y}  .   \mathbf{proj}_1 \:  \ottmv{y}    )    )   \:  \ottmv{z}   )    ,     j^{ \ell }_{  \underline{ \ottnt{B} }  }    \:   (    (   \mathbf{lift}^{  \ell  }   (   \lambda  \ottmv{y}  .   \mathbf{proj}_2 \:  \ottmv{y}    )    )   \:  \ottmv{z}   )    )   :    \underline{ \ottnt{A} }   \times   \underline{ \ottnt{B} }   }
                 {\infer[\textsc{(IH)}]{  \ottmv{z}  :   S_{  \ell  } \:   (    \underline{ \ottnt{A} }   \times   \underline{ \ottnt{B} }    )     \vdash     j^{ \ell }_{  \underline{ \ottnt{A} }  }    \:   (    (   \mathbf{lift}^{  \ell  }   (   \lambda  \ottmv{y}  .   \mathbf{proj}_1 \:  \ottmv{y}    )    )   \:  \ottmv{z}   )    :   \underline{ \ottnt{A} }  }
                    {\infer[\textsc{(App)}]{  \ottmv{z}  :   S_{  \ell  } \:   (    \underline{ \ottnt{A} }   \times   \underline{ \ottnt{B} }    )     \vdash    (   \mathbf{lift}^{  \ell  }   (   \lambda  \ottmv{y}  .   \mathbf{proj}_1 \:  \ottmv{y}    )    )   \:  \ottmv{z}   :   S_{  \ell  } \:   \underline{ \ottnt{A} }   }
                       {\infer[\textsc{(Fmap)}]{  \emptyset   \vdash   \mathbf{lift}^{  \ell  }   (   \lambda  \ottmv{y}  .   \mathbf{proj}_1 \:  \ottmv{y}    )    :    S_{  \ell  } \:   (    \underline{ \ottnt{A} }   \times   \underline{ \ottnt{B} }    )    \to   S_{  \ell  } \:   \underline{ \ottnt{A} }    }
                         {\infer[]{  \emptyset   \vdash   \lambda  \ottmv{y}  .   \mathbf{proj}_1 \:  \ottmv{y}    :     \underline{ \ottnt{A} }   \times   \underline{ \ottnt{B} }    \to   \underline{ \ottnt{A} }   }{}}
                       }
                    }
                 }  
              }
$$

Note that we omit the derivation of $  \ottmv{z}  :   S_{  \ell  } \:   (    \underline{ \ottnt{A} }   \times   \underline{ \ottnt{B} }    )     \vdash     j^{ \ell }_{  \underline{ \ottnt{B} }  }    \:   (    (   \mathbf{lift}^{  \ell  }   (   \lambda  \ottmv{y}  .   \mathbf{proj}_2 \:  \ottmv{y}    )    )   \:  \ottmv{z}   )    :   \underline{ \ottnt{B} }  $, which is similar to that of $  \ottmv{z}  :   S_{  \ell  } \:   (    \underline{ \ottnt{A} }   \times   \underline{ \ottnt{B} }    )     \vdash     j^{ \ell }_{  \underline{ \ottnt{A} }  }    \:   (    (   \mathbf{lift}^{  \ell  }   (   \lambda  \ottmv{y}  .   \mathbf{proj}_1 \:  \ottmv{y}    )    )   \:  \ottmv{z}   )    :   \underline{ \ottnt{A} }  $.\\

Therefore, $ j^{ \ell }_{    \underline{ \ottnt{A} }   \times   \underline{ \ottnt{B} }    }  =  \lambda  \ottmv{z}  .   (     j^{ \ell }_{  \underline{ \ottnt{A} }  }    \:   (    (   \mathbf{lift}^{  \ell  }   (   \lambda  \ottmv{y}  .   \mathbf{proj}_1 \:  \ottmv{y}    )    )   \:  \ottmv{z}   )    ,     j^{ \ell }_{  \underline{ \ottnt{B} }  }    \:   (    (   \mathbf{lift}^{  \ell  }   (   \lambda  \ottmv{y}  .   \mathbf{proj}_2 \:  \ottmv{y}    )    )   \:  \ottmv{z}   )    )  $ and $ \lfloor    j^{ \ell }_{    \underline{ \ottnt{A} }   \times   \underline{ \ottnt{B} }    }    \rfloor  \equiv  \lambda  \ottmv{x}  .  \ottmv{x} $.

\item \Rref{Prot-Fun}. Have: $ \ell  \sqsubseteq   \ottnt{A}  \to  \ottnt{B}  $ where $ \ell  \sqsubseteq  \ottnt{B} $.\\
By IH, $\exists$ $  \emptyset   \vdash   j^{ \ell }_{  \underline{ \ottnt{B} }  }   :    S_{  \ell  } \:   \underline{ \ottnt{B} }    \to   \underline{ \ottnt{B} }   $ such that $ \lfloor    j^{ \ell }_{  \underline{ \ottnt{B} }  }    \rfloor  \equiv  \lambda  \ottmv{x}  .  \ottmv{x} $.\\
Now,
$$
\infer[\textsc{(Lam)}]{  \emptyset   \vdash    \lambda  \ottmv{z}  .   \lambda  \ottmv{y}  .   j^{ \ell }_{  \underline{ \ottnt{B} }  }     \:   (    (   \mathbf{lift}^{  \ell  }   (    \lambda  \ottmv{x}  .  \ottmv{x}   \:  \ottmv{y}   )    )   \:  \ottmv{z}   )    :    S_{  \ell  } \:   (    \underline{ \ottnt{A} }   \to   \underline{ \ottnt{B} }    )    \to   (    \underline{ \ottnt{A} }   \to   \underline{ \ottnt{B} }    )   }
      {\infer[\textsc{(IH)}]{   \ottmv{z}  :   S_{  \ell  } \:   (    \underline{ \ottnt{A} }   \to   \underline{ \ottnt{B} }    )     ,   \ottmv{y}  :   \underline{ \ottnt{A} }     \vdash    j^{ \ell }_{  \underline{ \ottnt{B} }  }   \:   (    (   \mathbf{lift}^{  \ell  }   (    \lambda  \ottmv{x}  .  \ottmv{x}   \:  \ottmv{y}   )    )   \:  \ottmv{z}   )    :   \underline{ \ottnt{B} }  }
        {\infer[\textsc{(App)}]{   \ottmv{z}  :   S_{  \ell  } \:   (    \underline{ \ottnt{A} }   \to   \underline{ \ottnt{B} }    )     ,   \ottmv{y}  :   \underline{ \ottnt{A} }     \vdash    (   \mathbf{lift}^{  \ell  }   (    \lambda  \ottmv{x}  .  \ottmv{x}   \:  \ottmv{y}   )    )   \:  \ottmv{z}   :   S_{  \ell  } \:   \underline{ \ottnt{B} }   }
           {\infer[\textsc{(Fmap)}]{  \ottmv{y}  :   \underline{ \ottnt{A} }    \vdash   \mathbf{lift}^{  \ell  }   (    \lambda  \ottmv{x}  .  \ottmv{x}   \:  \ottmv{y}   )    :    S_{  \ell  } \:   (    \underline{ \ottnt{A} }   \to   \underline{ \ottnt{B} }    )    \to   S_{  \ell  } \:   \underline{ \ottnt{B} }    }
              {\infer[]{  \ottmv{y}  :   \underline{ \ottnt{A} }    \vdash    \lambda  \ottmv{x}  .  \ottmv{x}   \:  \ottmv{y}   :    (    \underline{ \ottnt{A} }   \to   \underline{ \ottnt{B} }    )   \to   \underline{ \ottnt{B} }   }{}
              }
           }
        }
      }
$$

Therefore, $ j^{ \ell }_{    \underline{ \ottnt{A} }   \to   \underline{ \ottnt{B} }    }  =   \lambda  \ottmv{z}  .   \lambda  \ottmv{y}  .   j^{ \ell }_{  \underline{ \ottnt{B} }  }     \:   (    (   \mathbf{lift}^{  \ell  }   (    \lambda  \ottmv{x}  .  \ottmv{x}   \:  \ottmv{y}   )    )   \:  \ottmv{z}   )  $ and $ \lfloor    j^{ \ell }_{    \underline{ \ottnt{A} }   \to   \underline{ \ottnt{B} }    }    \rfloor  \equiv  \lambda  \ottmv{x}  .  \ottmv{x} $.

\item \Rref{Prot-Monad}. Have: $ \ell_{{\mathrm{1}}}  \sqsubseteq   T_{  \ell_{{\mathrm{2}}}  } \:  \ottnt{A}  $ where $ \ell_{{\mathrm{1}}}  \sqsubseteq  \ell_{{\mathrm{2}}} $.\\
Now, $  \emptyset   \vdash   \lambda  \ottmv{x}  .   \mathbf{join}^{  \ell_{{\mathrm{1}}}  ,  \ell_{{\mathrm{2}}}  }  \ottmv{x}    :    S_{  \ell_{{\mathrm{1}}}  } \:   S_{  \ell_{{\mathrm{2}}}  } \:   \underline{ \ottnt{B} }     \to   S_{  \ell_{{\mathrm{2}}}  } \:   \underline{ \ottnt{B} }    $.\\
Therefore, $ j^{ \ell_{{\mathrm{1}}} }_{   S_{  \ell_{{\mathrm{2}}}  } \:   \underline{ \ottnt{B} }    }  =  \lambda  \ottmv{x}  .   \mathbf{join}^{  \ell_{{\mathrm{1}}}  ,  \ell_{{\mathrm{2}}}  }  \ottmv{x}  $ and $  \lfloor    j^{ \ell_{{\mathrm{1}}} }_{   S_{  \ell_{{\mathrm{2}}}  } \:   \underline{ \ottnt{B} }    }    \rfloor  \equiv  \lambda  \ottmv{x}  .  \ottmv{x} $.

\item \Rref{Prot-Already}. Have: $ \ell  \sqsubseteq   T_{  \ell'  } \:  \ottnt{A}  $ where $ \ell  \sqsubseteq  \ottnt{A} $.\\
By IH, $\exists$ $  \emptyset   \vdash   j^{ \ell }_{  \underline{ \ottnt{A} }  }   :    S_{  \ell  } \:   \underline{ \ottnt{A} }    \to   \underline{ \ottnt{A} }   $ such that $ \lfloor    j^{ \ell }_{  \underline{ \ottnt{A} }  }    \rfloor  \equiv  \lambda  \ottmv{x}  .  \ottmv{x} $.\\
Now,
$$
\infer[\textsc{(Lam)}]{  \emptyset   \vdash    \lambda  \ottmv{x}  .   (   \mathbf{lift}^{  \ell'  }   j^{ \ell }_{  \underline{ \ottnt{A} }  }    )    \:   (   \mathbf{fork}^{  \ell'  ,  \ell  }   (   \mathbf{join}^{  \ell  ,  \ell'  }  \ottmv{x}   )    )    :    S_{  \ell  } \:   S_{  \ell'  } \:   \underline{ \ottnt{A} }     \to   S_{  \ell'  } \:   \underline{ \ottnt{A} }    }
  {\infer[\textsc{(App)}]{  \ottmv{x}  :   S_{  \ell  } \:   S_{  \ell'  } \:   \underline{ \ottnt{A} }      \vdash    (   \mathbf{lift}^{  \ell'  }   j^{ \ell }_{  \underline{ \ottnt{A} }  }    )   \:   (   \mathbf{fork}^{  \ell'  ,  \ell  }   (   \mathbf{join}^{  \ell  ,  \ell'  }  \ottmv{x}   )    )    :   S_{  \ell'  } \:   \underline{ \ottnt{A} }   }
     {\infer[\textsc{(Fmap)}]{  \emptyset   \vdash   \mathbf{lift}^{  \ell'  }   j^{ \ell }_{  \underline{ \ottnt{A} }  }    :    S_{  \ell'  } \:   S_{  \ell  } \:   \underline{ \ottnt{A} }     \to   S_{  \ell'  } \:   \underline{ \ottnt{A} }    }
      {\infer[\textsc{(IH)}]{  \emptyset   \vdash   j^{ \ell }_{  \underline{ \ottnt{A} }  }   :    S_{  \ell  } \:   \underline{ \ottnt{A} }    \to   \underline{ \ottnt{A} }   }{}}
        &
      \infer[\textsc{(Fork)}]{  \ottmv{x}  :   S_{  \ell  } \:   S_{  \ell'  } \:   \underline{ \ottnt{A} }      \vdash   \mathbf{fork}^{  \ell'  ,  \ell  }   (   \mathbf{join}^{  \ell  ,  \ell'  }  \ottmv{x}   )    :   S_{  \ell'  } \:   S_{  \ell  } \:   \underline{ \ottnt{A} }    }
       {\infer[\textsc{(Join)}]{  \ottmv{x}  :   S_{  \ell  } \:   S_{  \ell'  } \:   \underline{ \ottnt{A} }      \vdash   \mathbf{join}^{  \ell  ,  \ell'  }  \ottmv{x}   :   S_{   \ell  \vee  \ell'   } \:   \underline{ \ottnt{A} }   }{}}
     }
  }
$$
Notice the flip going from $\mathbf{join}$ to $\mathbf{fork}$ in the above derivation!\\

Therefore, $ j^{ \ell }_{   S_{  \ell'  } \:   \underline{ \ottnt{A} }    }  =   \lambda  \ottmv{x}  .   (   \mathbf{lift}^{  \ell'  }   j^{ \ell }_{  \underline{ \ottnt{A} }  }    )    \:   (   \mathbf{fork}^{  \ell'  ,  \ell  }   (   \mathbf{join}^{  \ell  ,  \ell'  }  \ottmv{x}   )    )  $ and $ \lfloor    j^{ \ell }_{   S_{  \ell'  } \:   \underline{ \ottnt{A} }    }    \rfloor  \equiv  \lambda  \ottmv{x}  .  \ottmv{x} $. 

\item \Rref{Prot-Minimum}. Have: $  \bot   \sqsubseteq  \ottnt{A} $.\\
Now, $  \emptyset   \vdash   \lambda  \ottmv{x}  .   \mathbf{extr} \:  \ottmv{x}    :    S_{   \bot   } \:   \underline{ \ottnt{A} }    \to   \underline{ \ottnt{A} }   $.\\
Therefore, $ j^{  \bot  }_{  \underline{ \ottnt{A} }  }  =  \lambda  \ottmv{x}  .   \mathbf{extr} \:  \ottmv{x}  $ and $ \lfloor    j^{  \bot  }_{  \underline{ \ottnt{A} }  }    \rfloor  \equiv  \lambda  \ottmv{x}  .  \ottmv{x} $.

\item \Rref{Prot-Combine}. Have: $ \ell  \sqsubseteq  \ottnt{A} $ where $ \ell_{{\mathrm{1}}}  \sqsubseteq  \ottnt{A} $ and $ \ell_{{\mathrm{2}}}  \sqsubseteq  \ottnt{A} $ and $ \ell  \sqsubseteq   \ell_{{\mathrm{1}}}  \vee  \ell_{{\mathrm{2}}}  $.\\
By IH, $\exists$ $  \emptyset   \vdash   j^{ \ell_{{\mathrm{1}}} }_{  \underline{ \ottnt{A} }  }   :    S_{  \ell_{{\mathrm{1}}}  } \:   \underline{ \ottnt{A} }    \to   \underline{ \ottnt{A} }   $ and $  \emptyset   \vdash   j^{ \ell_{{\mathrm{2}}} }_{  \underline{ \ottnt{A} }  }   :    S_{  \ell_{{\mathrm{2}}}  } \:   \underline{ \ottnt{A} }    \to   \underline{ \ottnt{A} }   $ such that $ \lfloor    j^{ \ell_{{\mathrm{1}}} }_{  \underline{ \ottnt{A} }  }    \rfloor  \equiv  \lambda  \ottmv{x}  .  \ottmv{x} $ and $ \lfloor    j^{ \ell_{{\mathrm{2}}} }_{  \underline{ \ottnt{A} }  }    \rfloor  \equiv  \lambda  \ottmv{x}  .  \ottmv{x} $.\\
Now,
$$
\infer[\textsc{(Lam)}]{  \emptyset   \vdash    \lambda  \ottmv{x}  .   j^{ \ell_{{\mathrm{1}}} }_{  \underline{ \ottnt{A} }  }    \:   (    (   \mathbf{lift}^{  \ell_{{\mathrm{1}}}  }   j^{ \ell_{{\mathrm{2}}} }_{  \underline{ \ottnt{A} }  }    )   \:   (   \mathbf{fork}^{  \ell_{{\mathrm{1}}}  ,  \ell_{{\mathrm{2}}}  }   (   \mathbf{up}^{  \ell  ,   \ell_{{\mathrm{1}}}  \vee  \ell_{{\mathrm{2}}}   }  \ottmv{x}   )    )    )    :    S_{  \ell  } \:   \underline{ \ottnt{A} }    \to   \underline{ \ottnt{A} }   }
 {\infer[\textsc{(IH)}]{  \ottmv{x}  :   S_{  \ell  } \:   \underline{ \ottnt{A} }     \vdash    j^{ \ell_{{\mathrm{1}}} }_{  \underline{ \ottnt{A} }  }   \:   (    (   \mathbf{lift}^{  \ell_{{\mathrm{1}}}  }   j^{ \ell_{{\mathrm{2}}} }_{  \underline{ \ottnt{A} }  }    )   \:   (   \mathbf{fork}^{  \ell_{{\mathrm{1}}}  ,  \ell_{{\mathrm{2}}}  }   (   \mathbf{up}^{  \ell  ,   \ell_{{\mathrm{1}}}  \vee  \ell_{{\mathrm{2}}}   }  \ottmv{x}   )    )    )    :   \underline{ \ottnt{A} }  }
   {\infer[\textsc{(App)}]{  \ottmv{x}  :   S_{  \ell  } \:   \underline{ \ottnt{A} }     \vdash    (   \mathbf{lift}^{  \ell_{{\mathrm{1}}}  }   j^{ \ell_{{\mathrm{2}}} }_{  \underline{ \ottnt{A} }  }    )   \:   (   \mathbf{fork}^{  \ell_{{\mathrm{1}}}  ,  \ell_{{\mathrm{2}}}  }   (   \mathbf{up}^{  \ell  ,   \ell_{{\mathrm{1}}}  \vee  \ell_{{\mathrm{2}}}   }  \ottmv{x}   )    )    :   S_{  \ell_{{\mathrm{1}}}  } \:   \underline{ \ottnt{A} }   }
     {\infer[\textsc{(Fmap)}]{  \emptyset   \vdash   \mathbf{lift}^{  \ell_{{\mathrm{1}}}  }   j^{ \ell_{{\mathrm{2}}} }_{  \underline{ \ottnt{A} }  }    :    S_{  \ell_{{\mathrm{1}}}  } \:   S_{  \ell_{{\mathrm{2}}}  } \:   \underline{ \ottnt{A} }     \to   S_{  \ell_{{\mathrm{1}}}  } \:   \underline{ \ottnt{A} }    }
       {\infer[\textsc{(IH)}]{  \emptyset   \vdash   j^{ \ell_{{\mathrm{2}}} }_{  \underline{ \ottnt{A} }  }   :    S_{  \ell_{{\mathrm{2}}}  } \:   \underline{ \ottnt{A} }    \to   \underline{ \ottnt{A} }   }{}}
       &
      \infer[\textsc{(Fork)}]{  \ottmv{x}  :   S_{  \ell  } \:   \underline{ \ottnt{A} }     \vdash   \mathbf{fork}^{  \ell_{{\mathrm{1}}}  ,  \ell_{{\mathrm{2}}}  }   (   \mathbf{up}^{  \ell  ,   \ell_{{\mathrm{1}}}  \vee  \ell_{{\mathrm{2}}}   }  \ottmv{x}   )    :   S_{  \ell_{{\mathrm{1}}}  } \:   S_{  \ell_{{\mathrm{2}}}  } \:   \underline{ \ottnt{A} }    }
        {\infer[\textsc{(Up)}]{  \ottmv{x}  :   S_{  \ell  } \:   \underline{ \ottnt{A} }     \vdash   \mathbf{up}^{  \ell  ,   \ell_{{\mathrm{1}}}  \vee  \ell_{{\mathrm{2}}}   }  \ottmv{x}   :   S_{   \ell_{{\mathrm{1}}}  \vee  \ell_{{\mathrm{2}}}   } \:   \underline{ \ottnt{A} }   }{}}
     }
   }
 }
$$

Therefore, $ j^{ \ell }_{  \underline{ \ottnt{A} }  }  =   \lambda  \ottmv{x}  .   j^{ \ell_{{\mathrm{1}}} }_{  \underline{ \ottnt{A} }  }    \:   (    (   \mathbf{lift}^{  \ell_{{\mathrm{1}}}  }   j^{ \ell_{{\mathrm{2}}} }_{  \underline{ \ottnt{A} }  }    )   \:   (   \mathbf{fork}^{  \ell_{{\mathrm{1}}}  ,  \ell_{{\mathrm{2}}}  }   (   \mathbf{up}^{  \ell  ,   \ell_{{\mathrm{1}}}  \vee  \ell_{{\mathrm{2}}}   }  \ottmv{x}   )    )    )  $ and $ \lfloor    j^{ \ell }_{  \underline{ \ottnt{A} }  }    \rfloor  \equiv  \lambda  \ottmv{x}  .  \ottmv{x} $.

\end{itemize}
\end{proof}


\begin{theorem}[Theorem \ref{DCCSound}]
If $ \Gamma  \vdash  \ottnt{a}  :  \ottnt{A} $ in \ED{}($ \mathcal{L} $), then $  \underline{  \Gamma  }   \vdash   \underline{ \ottnt{a} }   :   \underline{ \ottnt{A} }  $ in GMCC($ \mathcal{L} $). Further, if $ \Gamma  \vdash  \ottnt{a_{{\mathrm{1}}}}  :  \ottnt{A} $ and $ \Gamma  \vdash  \ottnt{a_{{\mathrm{2}}}}  :  \ottnt{A} $ such that $\ottnt{a_{{\mathrm{1}}}} \simeq \ottnt{a_{{\mathrm{2}}}}$ in \ED{}($ \mathcal{L} $), then $ \underline{ \ottnt{a_{{\mathrm{1}}}} }  \simeq  \underline{ \ottnt{a_{{\mathrm{2}}}} } $ in GMCC($ \mathcal{L} $).
\end{theorem}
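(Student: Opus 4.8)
The plan is to proceed by induction on the typing derivation $\Gamma \vdash a : A$ in \ED{}($\mathcal{L}$), establishing simultaneously that the translation $\underline{\phantom{a}}$ preserves typing and that $\lfloor \underline{a} \rfloor \equiv \lfloor a \rfloor$ as plain $\lambda$-terms; the equational claim then falls out almost immediately. Since $\underline{\phantom{a}}$ acts pointwise on contexts and is the identity on the $\lambda$-calculus fragment, the standard $\lambda$-calculus rules are discharged by the induction hypothesis together with congruence of $\equiv$. The only two cases that require genuine work are \rref{DCC-Eta} and \rref{DCC-Bind}.

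For \rref{DCC-Eta}: from $\Gamma \vdash a : A$ the induction hypothesis gives $\underline{\Gamma} \vdash \underline{a} : \underline{A}$, then \rref{M-Return} gives $\underline{\Gamma} \vdash \mathbf{ret}\ \underline{a} : S_{\bot}\ \underline{A}$ (using that $\bot$ is the monoid unit of $\mathcal{L}$), and \rref{M-Up} gives $\underline{\Gamma} \vdash \mathbf{up}^{\bot,\ell}(\mathbf{ret}\ \underline{a}) : S_\ell\ \underline{A}$, since $\bot \sqsubseteq \ell$ always. For \rref{DCC-Bind}: from $\Gamma \vdash a : \mathcal{T}_\ell\ A$, $\Gamma, x{:}A \vdash b : B$ and $\ell \sqsubseteq B$, the induction hypothesis yields $\underline{\Gamma} \vdash \underline{a} : S_\ell\ \underline{A}$ and $\underline{\Gamma}, x{:}\underline{A} \vdash \underline{b} : \underline{B}$; \rref{M-Fmap} then gives $\underline{\Gamma} \vdash \mathbf{lift}^\ell(\lambda x.\underline{b}) : S_\ell\ \underline{A} \to S_\ell\ \underline{B}$, hence $\underline{\Gamma} \vdash (\mathbf{lift}^\ell(\lambda x.\underline{b}))\ \underline{a} : S_\ell\ \underline{B}$. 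Now Lemma \ref{protectId}, applied to the premise $\ell \sqsubseteq B$, supplies a closed term $\emptyset \vdash j : S_\ell\ \underline{B} \to \underline{B}$ with $\lfloor j \rfloor \equiv \lambda x.x$; weakening $j$ into $\underline{\Gamma}$ and applying it yields $\underline{\Gamma} \vdash j\ ((\mathbf{lift}^\ell(\lambda x.\underline{b}))\ \underline{a}) : \underline{B}$, which is exactly the translated return type.

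For the erasure equation, in the \rref{DCC-Eta} case $\lfloor \mathbf{up}^{\bot,\ell}(\mathbf{ret}\ \underline{a})\rfloor = \lfloor \underline{a}\rfloor \equiv \lfloor a\rfloor = \lfloor \mathbf{eta}^\ell a\rfloor$, using that $\lfloor-\rfloor$ strips $\mathbf{up}$ and $\mathbf{ret}$ and the induction hypothesis. In the \rref{DCC-Bind} case $\lfloor j\ ((\mathbf{lift}^\ell(\lambda x.\underline{b}))\ \underline{a})\rfloor = \lfloor j\rfloor\ ((\lambda x.\lfloor\underline{b}\rfloor)\ \lfloor\underline{a}\rfloor) \equiv (\lambda x.x)\ ((\lambda x.\lfloor\underline{b}\rfloor)\ \lfloor\underline{a}\rfloor) \equiv \lfloor\underline{b}\rfloor\{\lfloor\underline{a}\rfloor/x\} \equiv \lfloor b\rfloor\{\lfloor a\rfloor/x\} = \lfloor\mathbf{bind}^\ell\ x = a\ \mathbf{in}\ b\rfloor$, using $\lfloor j\rfloor \equiv \lambda x.x$ from Lemma \ref{protectId}, the induction hypothesis, and compatibility of $\equiv$ with substitution. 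The equational claim is then immediate: if $a_1 \simeq a_2$ in \ED{}($\mathcal{L}$), i.e.\ $\lfloor a_1\rfloor \equiv \lfloor a_2\rfloor$, then $\lfloor\underline{a_1}\rfloor \equiv \lfloor a_1\rfloor \equiv \lfloor a_2\rfloor \equiv \lfloor\underline{a_2}\rfloor$, so $\underline{a_1}\simeq\underline{a_2}$ in GMCC($\mathcal{L}$).

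The main obstacle I anticipate is almost entirely administrative rather than mathematical: ensuring that the protection-judgement premise of \rref{DCC-Bind} lines up exactly with the hypothesis of Lemma \ref{protectId}, that the lemma is invoked so as to produce a term over the \emph{translated} type $\underline{B}$ (not over the original $B$), and that weakening the closed term $j$ into an arbitrary context $\underline{\Gamma}$ is unproblematic. All the substantive content — in particular the flip from $\mathbf{join}$ to $\mathbf{fork}$ needed for the \rref{Prot-Already} case of the protection judgement — has already been packaged into Lemma \ref{protectId}, so once that lemma is available this theorem is a routine structural induction.
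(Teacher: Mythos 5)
Your proposal is correct and follows essentially the same route as the paper: a structural induction handling \rref{DCC-Eta} via $\mathbf{ret}$ followed by $\mathbf{up}^{\bot,\ell}$, handling \rref{DCC-Bind} via $\mathbf{lift}^{\ell}$ and the term $j$ from Lemma \ref{protectId}, and then deriving the equational claim from $\lfloor \underline{a} \rfloor \equiv \lfloor a \rfloor$. The only cosmetic difference is that you run the typing and erasure claims as one simultaneous induction while the paper does them as two successive inductions.
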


\begin{proof}

By induction on $ \Gamma  \vdash  \ottnt{a}  :  \ottnt{A} $.

\begin{itemize}
\item $\lambda$-calculus. By IH.

\item \Rref{DCC-Eta}. Have: $ \Gamma  \vdash   \mathbf{eta} ^{ \ell }  \ottnt{a}   :   \mathcal{T}_{ \ell } \:  \ottnt{A}  $ where $ \Gamma  \vdash  \ottnt{a}  :  \ottnt{A} $.\\
By IH, $  \underline{  \Gamma  }   \vdash   \underline{ \ottnt{a} }   :   \underline{ \ottnt{A} }  $. \\ Next, $ \underline{   \mathbf{eta} ^{ \ell }  \ottnt{a}   }  =  \mathbf{up}^{   \bot   ,  \ell  }   (   \ottkw{ret}  \:   \underline{ \ottnt{a} }    )  $. \\
Now, $  \underline{  \Gamma  }   \vdash   \ottkw{ret}  \:   \underline{ \ottnt{a} }    :   S_{   \bot   } \:   \underline{ \ottnt{A} }   $.\\ So, $  \underline{  \Gamma  }   \vdash   \mathbf{up}^{   \bot   ,  \ell  }   (   \ottkw{ret}  \:   \underline{ \ottnt{a} }    )    :   S_{  \ell  } \:   \underline{ \ottnt{A} }   $.\\

\item \Rref{DCC-Bind}. Have: $ \Gamma  \vdash   \mathbf{bind} ^{ \ell } \:  \ottmv{x}  =  \ottnt{a}  \: \mathbf{in} \:  \ottnt{b}   :  \ottnt{B} $ where $ \Gamma  \vdash  \ottnt{a}  :   \mathcal{T}_{ \ell } \:  \ottnt{A}  $ and $  \Gamma  ,   \ottmv{x}  :  \ottnt{A}    \vdash  \ottnt{b}  :  \ottnt{B} $ and $ \ell  \sqsubseteq  \ottnt{B} $.\\
By IH, $  \underline{  \Gamma  }   \vdash   \underline{ \ottnt{a} }   :   S_{  \ell  } \:   \underline{ \ottnt{A} }   $ and $   \underline{  \Gamma  }   ,   \ottmv{x}  :   \underline{ \ottnt{A} }     \vdash   \underline{ \ottnt{b} }   :   \underline{ \ottnt{B} }  $.\\
Now, since $ \ell  \sqsubseteq  \ottnt{B} $, by Lemma \ref{ProtT}, $\exists$ $  \emptyset   \vdash   j^{ \ell }_{  \underline{ \ottnt{B} }  }   :    S_{  \ell  } \:   \underline{ \ottnt{B} }    \to   \underline{ \ottnt{B} }   $ such that $ \lfloor    j^{ \ell }_{  \underline{ \ottnt{B} }  }    \rfloor  \equiv  \lambda  \ottmv{x}  .  \ottmv{x} $.\\
Next, $  \underline{  \Gamma  }   \vdash   \lambda  \ottmv{x}  .   \underline{ \ottnt{b} }    :    \underline{ \ottnt{A} }   \to   \underline{ \ottnt{B} }   $.\\ So, $  \underline{  \Gamma  }   \vdash   \mathbf{lift}^{  \ell  }   (   \lambda  \ottmv{x}  .   \underline{ \ottnt{b} }    )    :    S_{  \ell  } \:   \underline{ \ottnt{A} }    \to   S_{  \ell  } \:   \underline{ \ottnt{B} }    $.\\
As such, $  \underline{  \Gamma  }   \vdash    (   \mathbf{lift}^{  \ell  }   (   \lambda  \ottmv{x}  .   \underline{ \ottnt{b} }    )    )   \:   \underline{ \ottnt{a} }    :   S_{  \ell  } \:   \underline{ \ottnt{B} }   $.\\ Then, $  \underline{  \Gamma  }   \vdash    j^{ \ell }_{  \underline{ \ottnt{B} }  }   \:   (    (   \mathbf{lift}^{  \ell  }   (   \lambda  \ottmv{x}  .   \underline{ \ottnt{b} }    )    )   \:   \underline{ \ottnt{a} }    )    :   \underline{ \ottnt{B} }  $.   

\end{itemize} 

Now, we can show that, if $ \Gamma  \vdash  \ottnt{a}  :  \ottnt{A} $ in \ED{}($ \mathcal{L} $), then $ \lfloor   \underline{ \ottnt{a} }   \rfloor  \equiv  \lfloor  \ottnt{a}  \rfloor $. The proof is by straightforward induction on the typing judgement. For the $\mathbf{eta}$-case, note that $ \lfloor    \mathbf{up}^{   \bot   ,  \ell  }   (   \ottkw{ret}  \:   \underline{ \ottnt{a} }    )     \rfloor  =  \lfloor    \underline{ \ottnt{a} }    \rfloor $. For the $\mathbf{bind}$-case, note that
$ \lfloor     j^{ \ell }_{  \underline{ \ottnt{B} }  }   \:   (    (   \mathbf{lift}^{  \ell  }   (   \lambda  \ottmv{x}  .   \underline{ \ottnt{b} }    )    )   \:   \underline{ \ottnt{a} }    )     \rfloor  =   \lfloor    j^{ \ell }_{  \underline{ \ottnt{B} }  }    \rfloor   \:   (    (   \lambda  \ottmv{x}  .   \lfloor   \underline{ \ottnt{b} }   \rfloor    )   \:    \lfloor   \underline{ \ottnt{a} }   \rfloor     )   \equiv   (   \lambda  \ottmv{y}  .  \ottmv{y}   )   \:   (    (   \lambda  \ottmv{x}  .   \lfloor   \underline{ \ottnt{b} }   \rfloor    )   \:    \lfloor   \underline{ \ottnt{a} }   \rfloor     )   \equiv   \lfloor   \underline{ \ottnt{b} }   \rfloor   \{   \lfloor   \underline{ \ottnt{a} }   \rfloor   /  \ottmv{x}  \} $.\\

So, for $ \Gamma  \vdash  \ottnt{a_{{\mathrm{1}}}}  :  \ottnt{A} $ and $ \Gamma  \vdash  \ottnt{a_{{\mathrm{2}}}}  :  \ottnt{A} $, if $\ottnt{a_{{\mathrm{1}}}} \simeq \ottnt{a_{{\mathrm{2}}}} $ in \ED{}($ \mathcal{L} $), then we have:
\[  \lfloor   \underline{ \ottnt{a_{{\mathrm{1}}}} }   \rfloor  \equiv  \lfloor  \ottnt{a_{{\mathrm{1}}}}  \rfloor  \equiv  \lfloor  \ottnt{a_{{\mathrm{2}}}}  \rfloor  \equiv  \lfloor   \underline{ \ottnt{a_{{\mathrm{2}}}} }   \rfloor .\]
Hence, $ \underline{ \ottnt{a_{{\mathrm{1}}}} }  \simeq  \underline{ \ottnt{a_{{\mathrm{2}}}} } $ in GMCC($ \mathcal{L} $).

\end{proof}


\begin{theorem}[Theorem \ref{GMCCround}]
Let $ \Gamma  \vdash  \ottnt{a}  :  \ottnt{A} $ be any derivation in GMCC($ \mathcal{L} $). Then, $  \underline{  \overline{ \ottnt{a} }  }   \equiv  \ottnt{a} $.
\end{theorem}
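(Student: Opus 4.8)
The plan is to prove the round-trip identity $\underline{\overline{a}} \equiv a$ by structural induction on the typing derivation $\Gamma \vdash a : A$ in GMCC($\mathcal{L}$). For the standard $\lambda$-calculus fragment the translations $\overline{\phantom{a}}$ (Figure \ref{GMCtoDCC}/\ref{GCCtoDCC}) and $\underline{\phantom{a}}$ act homomorphically, so the inductive hypothesis closes those cases directly once we observe that the GMCC equational theory contains the $\beta\eta$-rules of the $\lambda$-calculus and is a congruence. The real content is in the six modal constructs: $\mathbf{ret}$, $\mathbf{extr}$, $\mathbf{join}$, $\mathbf{fork}$, $\mathbf{lift}$, and $\mathbf{up}$. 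For each, I would compute $\overline{(\cdot)}$ using the tables in Figures \ref{GMCtoDCC} and \ref{GCCtoDCC}, then compute $\underline{(\cdot)}$ of the result using the \ED{}-to-GMCC translation (the clauses for $\mathbf{eta}$ and $\mathbf{bind}$ given just before Theorem \ref{DCCSound}, together with the term $j$ supplied by Lemma \ref{protectId}), and finally show the composite is provably equal in GMCC($\mathcal{L}$) to the original term.

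Concretely, I would proceed as follows. First handle $\overline{\mathbf{ret}\:a} = \mathbf{eta}^{\bot}\,\overline{a}$, so $\underline{\overline{\mathbf{ret}\:a}} = \mathbf{up}^{\bot,\bot}(\mathbf{ret}\:\underline{\overline{a}})$, which reduces to $\mathbf{ret}\:a$ by the reflexivity rule (\ref{eq:refl}) and the IH. Next $\overline{\mathbf{up}^{\ell_1,\ell_2}a} = \mathbf{bind}^{\ell_1}\,x = \overline{a}\,\mathbf{in}\,\mathbf{eta}^{\ell_2}\,x$; translating back, the $\mathbf{bind}$ unfolds to $j^{\ell_1}_{S_{\ell_2}\underline{\overline{A}}}\bigl((\mathbf{lift}^{\ell_1}(\lambda x.\,\mathbf{up}^{\bot,\ell_2}(\mathbf{ret}\:x)))\,\underline{\overline{a}}\bigr)$, where $j$ comes from the \rref{Prot-Monad} case of Lemma \ref{protectId} and unfolds to $\lambda x.\,\mathbf{join}^{\ell_1,\ell_2}x$; simplifying with the lift-composition law (\ref{eq:comp}), the left-unit law (\ref{eq:idl}) and the $\mathbf{up}$-naturality laws (\ref{eq:natl})–(\ref{eq:natr}) should collapse this to $\mathbf{up}^{\ell_1,\ell_2}\,\underline{\overline{a}} \equiv \mathbf{up}^{\ell_1,\ell_2}a$. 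The $\mathbf{join}$, $\mathbf{fork}$, and $\mathbf{lift}$ cases are handled analogously, each time peeling the nested $\mathbf{bind}$/$\mathbf{eta}$ of the \ED{}-term back through the explicit form of $j$ dictated by which protection-judgement rule was used (for $\mathbf{fork}$ this is the \rref{Prot-Combine}/\rref{Prot-Already} branch, which is where the monad/comonad interaction laws $\mathbf{fork}\circ\mathbf{join} \equiv \mathrm{id}$ and $\mathbf{join}\circ\mathbf{fork}\equiv\mathrm{id}$ from the GMCC equational theory get used), and for $\mathbf{extr}$ it is the \rref{Prot-Minimum} branch giving $j = \lambda x.\,\mathbf{extr}\:x$.

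The main obstacle I anticipate is bookkeeping in the $\mathbf{lift}$ and $\mathbf{fork}$ cases: the translation $\overline{\mathbf{lift}^{\ell}f}$ is itself a $\lambda$-abstraction wrapping a $\mathbf{bind}$, so $\underline{\overline{\mathbf{lift}^{\ell}f}}$ is a $\lambda$-abstraction wrapping an application of $j$, and it takes a careful sequence of equational rewrites — $\eta$-expansion/contraction, the functoriality laws (\ref{eq:idn})–(\ref{eq:comp}), the unit laws (\ref{eq:idl})–(\ref{eq:idr}), and the monad/comonad compatibility equations — to bring it back to $\mathbf{lift}^{\ell}\,\underline{\overline{f}} \equiv \mathbf{lift}^{\ell}f$. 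It is worth noting that the precise \emph{shape} of $j$ — not just its erasure — matters here, since we need genuine GMCC-equality ($\equiv$), not merely equality-up-to-erasure ($\simeq$); so I would keep the exact syntactic definitions of $j$ from the proof of Lemma \ref{protectId} at hand and verify that $j^{\bot}_{\underline{B}} \equiv \lambda x.\,\mathbf{extr}\:x$, $j^{\ell_1}_{S_{\ell_2}\underline{B}} \equiv \lambda x.\,\mathbf{join}^{\ell_1,\ell_2}x$, etc., are literally the terms produced by that proof, so that the composites simplify as claimed. Once all six modal cases are dispatched, the congruence closure of $\equiv$ completes the induction.
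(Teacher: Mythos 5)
Your proposal matches the paper's proof: it proceeds by induction on the GMCC typing derivation, unfolds $\overline{(\cdot)}$ and then $\underline{(\cdot)}$ for each modal construct using the explicit syntactic terms $j$ produced in the proof of Lemma \ref{protectId} (e.g.\ $j^{\bot} = \lambda x.\,\mathbf{extr}\:x$ for \rref{Prot-Minimum}, $j^{\ell_1}_{S_{\ell_2}A} = \lambda x.\,\mathbf{join}^{\ell_1,\ell_2}x$ for \rref{Prot-Monad}), and collapses the result via the GMCC equational theory — with the $\mathbf{fork}$ case being the heavy one, relying on the \rref{Prot-Combine}/\rref{Prot-Already} branches and the $\mathbf{join}$/$\mathbf{fork}$ inverse laws, exactly as you anticipate. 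Your emphasis that genuine $\equiv$ (not $\simeq$) forces you to track the exact shape of $j$ is precisely the crux the paper's proof turns on, so the proposal is sound and essentially identical in route.
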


\begin{proof}
By induction on $ \Gamma  \vdash  \ottnt{a}  :  \ottnt{A} $. Note that $ \underline{  \overline{ \ottnt{A} }  }  = \ottnt{A}$.

\begin{itemize}

\item $\lambda$-calculus. By IH.

\item \Rref{MC-Return}. Have: $ \Gamma  \vdash   \ottkw{ret}  \:  \ottnt{a}   :   S_{ \ottsym{1} } \:  \ottnt{A}  $ where $ \Gamma  \vdash  \ottnt{a}  :  \ottnt{A} $.\\
By IH, $  \underline{  \overline{ \ottnt{a} }  }   \equiv  \ottnt{a} $. Now,
\[  \underline{  \overline{   \ottkw{ret}  \:  \ottnt{a}   }  }  =  \underline{   \mathbf{eta} ^{  \bot  }   \overline{ \ottnt{a} }    }  =  \mathbf{up}^{   \bot   ,   \bot   }   (   \ottkw{ret}  \:   \underline{  \overline{ \ottnt{a} }  }    )   \equiv  \ottkw{ret}  \:   \underline{  \overline{ \ottnt{a} }  }   \equiv  \ottkw{ret}  \:  \ottnt{a} . \]

\item \Rref{MC-Fmap}. Have: $ \Gamma  \vdash   \mathbf{lift}^{  \ell  }  \ottnt{f}   :    S_{  \ell  } \:  \ottnt{A}   \to   S_{  \ell  } \:  \ottnt{B}   $ where $ \Gamma  \vdash  \ottnt{f}  :   \ottnt{A}  \to  \ottnt{B}  $.\\
By IH, $  \underline{  \overline{ \ottnt{f} }  }   \equiv  \ottnt{f} $. Now,
\begin{align*}
&  \underline{  \overline{   \mathbf{lift}^{  \ell  }  \ottnt{f}   }  }  \\
= &  \underline{   \lambda  \ottmv{x}  :   \mathcal{T}_{ \ell } \:   \overline{ \ottnt{A} }    .   \mathbf{bind} ^{ \ell } \:  \ottmv{y}  =  \ottmv{x}  \: \mathbf{in} \:   \mathbf{eta} ^{ \ell }   (     \overline{ \ottnt{f} }    \:  \ottmv{y}   )      }  \\
= &   \lambda  \ottmv{x}  .    j^{ \ell }_{   S_{  \ell  } \:  \ottnt{B}   }     \:   (    (   \mathbf{lift}^{  \ell  }   (   \lambda  \ottmv{y}  .   \underline{   \mathbf{eta} ^{ \ell }   (     \overline{ \ottnt{f} }    \:  \ottmv{y}   )    }    )    )   \:  \ottmv{x}   )   \\
= &  \lambda  \ottmv{x}  .   \mathbf{join}^{  \ell  ,  \ell  }   (    (   \mathbf{lift}^{  \ell  }   (   \lambda  \ottmv{y}  .   \mathbf{up}^{   \bot   ,  \ell  }   (   \ottkw{ret}  \:   (     \underline{  \overline{ \ottnt{f} }  }    \:  \ottmv{y}   )    )     )    )   \:  \ottmv{x}   )    \\
\equiv &   \lambda  \ottmv{x}  .  \ottmv{x}   \:  \leftindex^{  \ell  }{\gg}\!\! =^{  \ell  }   \lambda  \ottmv{y}  .   \mathbf{up}^{   \bot   ,  \ell  }   (    \ottkw{ret}  \:  \ottnt{f}   \:  \ottmv{y}   )     \\
\equiv &  \lambda  \ottmv{x}  .   \mathbf{up}^{  \ell  ,  \ell  }   (   \ottmv{x}  \:  \leftindex^{  \ell  }{\gg}\!\! =^{   \bot   }   \lambda  \ottmv{y}  .   \ottkw{ret}  \:   (   \ottnt{f}  \:  \ottmv{y}   )      )    \hspace*{3pt} [\text{By Equation } (\ref{eq:natr})] \\
\equiv &   \lambda  \ottmv{x}  .  \ottmv{x}   \:  \leftindex^{  \ell  }{\gg}\!\! =^{   \bot   }   \lambda  \ottmv{y}  .   \ottkw{ret}  \:   (   \ottnt{f}  \:  \ottmv{y}   )     \\
\equiv &  \mathbf{lift}^{  \ell  }  \ottnt{f}  \hspace*{3pt} [\text{By Equation } (\ref{lifteqn})]
\end{align*} 

\item \Rref{MC-Join}. Have $ \Gamma  \vdash   \mathbf{join}^{  \ell_{{\mathrm{1}}}  ,  \ell_{{\mathrm{2}}}  }  \ottnt{a}   :   S_{   \ell_{{\mathrm{1}}}  \vee  \ell_{{\mathrm{2}}}   } \:  \ottnt{A}  $ where $ \Gamma  \vdash  \ottnt{a}  :   S_{  \ell_{{\mathrm{1}}}  } \:   S_{  \ell_{{\mathrm{2}}}  } \:  \ottnt{A}   $.\\
By IH, $  \underline{  \overline{ \ottnt{a} }  }   \equiv  \ottnt{a} $. Now,
\begin{align*}
&  \underline{  \overline{   \mathbf{join}^{  \ell_{{\mathrm{1}}}  ,  \ell_{{\mathrm{2}}}  }  \ottnt{a}   }  }  \\
= &  \underline{   \mathbf{bind} ^{ \ell_{{\mathrm{1}}} } \:  \ottmv{x}  =   \overline{ \ottnt{a} }   \: \mathbf{in} \:   \mathbf{bind} ^{ \ell_{{\mathrm{2}}} } \:  \ottmv{y}  =  \ottmv{x}  \: \mathbf{in} \:   \mathbf{eta} ^{  \ell_{{\mathrm{1}}}  \vee  \ell_{{\mathrm{2}}}  }  \ottmv{y}     }  \\
= &    j^{ \ell_{{\mathrm{1}}} }_{   S_{   \ell_{{\mathrm{1}}}  \vee  \ell_{{\mathrm{2}}}   } \:  \ottnt{A}   }    \:   (    (   \mathbf{lift}^{  \ell_{{\mathrm{1}}}  }   (   \lambda  \ottmv{x}  .   \underline{   \mathbf{bind} ^{ \ell_{{\mathrm{2}}} } \:  \ottmv{y}  =  \ottmv{x}  \: \mathbf{in} \:   \mathbf{eta} ^{  \ell_{{\mathrm{1}}}  \vee  \ell_{{\mathrm{2}}}  }  \ottmv{y}    }    )    )   \:    \underline{  \overline{ \ottnt{a} }  }     )   \\
\equiv &  \mathbf{join}^{  \ell_{{\mathrm{1}}}  ,   \ell_{{\mathrm{1}}}  \vee  \ell_{{\mathrm{2}}}   }   (    (   \mathbf{lift}^{  \ell_{{\mathrm{1}}}  }   (    \lambda  \ottmv{x}  .    j^{ \ell_{{\mathrm{2}}} }_{   S_{   \ell_{{\mathrm{1}}}  \vee  \ell_{{\mathrm{2}}}   } \:  \ottnt{A}   }     \:   (    (   \mathbf{lift}^{  \ell_{{\mathrm{2}}}  }   (   \lambda  \ottmv{y}  .   \underline{   \mathbf{eta} ^{  \ell_{{\mathrm{1}}}  \vee  \ell_{{\mathrm{2}}}  }  \ottmv{y}   }    )    )   \:  \ottmv{x}   )    )    )   \:  \ottnt{a}   )   \\
= &  \mathbf{join}^{  \ell_{{\mathrm{1}}}  ,   \ell_{{\mathrm{1}}}  \vee  \ell_{{\mathrm{2}}}   }   (    (   \mathbf{lift}^{  \ell_{{\mathrm{1}}}  }   (   \lambda  \ottmv{x}  .   \mathbf{join}^{  \ell_{{\mathrm{2}}}  ,   \ell_{{\mathrm{1}}}  \vee  \ell_{{\mathrm{2}}}   }   (    (   \mathbf{lift}^{  \ell_{{\mathrm{2}}}  }   (   \lambda  \ottmv{y}  .   \mathbf{up}^{   \bot   ,   \ell_{{\mathrm{1}}}  \vee  \ell_{{\mathrm{2}}}   }   (   \ottkw{ret}  \:  \ottmv{y}   )     )    )   \:  \ottmv{x}   )     )    )   \:  \ottnt{a}   )   \\
= &  \mathbf{join}^{  \ell_{{\mathrm{1}}}  ,   \ell_{{\mathrm{1}}}  \vee  \ell_{{\mathrm{2}}}   }   (    (   \mathbf{lift}^{  \ell_{{\mathrm{1}}}  }   (   \lambda  \ottmv{x}  .   (   \ottmv{x}  \:  \leftindex^{  \ell_{{\mathrm{2}}}  }{\gg}\!\! =^{   \ell_{{\mathrm{1}}}  \vee  \ell_{{\mathrm{2}}}   }   (   \lambda  \ottmv{y}  .   \mathbf{up}^{   \bot   ,   \ell_{{\mathrm{1}}}  \vee  \ell_{{\mathrm{2}}}   }   (   \ottkw{ret}  \:  \ottmv{y}   )     )    )    )    )   \:  \ottnt{a}   )   \\
\equiv &  \mathbf{join}^{  \ell_{{\mathrm{1}}}  ,   \ell_{{\mathrm{1}}}  \vee  \ell_{{\mathrm{2}}}   }   (    (   \mathbf{lift}^{  \ell_{{\mathrm{1}}}  }   (   \lambda  \ottmv{x}  .   \mathbf{up}^{  \ell_{{\mathrm{2}}}  ,   \ell_{{\mathrm{1}}}  \vee  \ell_{{\mathrm{2}}}   }   (   \ottmv{x}  \:  \leftindex^{  \ell_{{\mathrm{2}}}  }{\gg}\!\! =^{   \bot   }   \lambda  \ottmv{y}  .   \ottkw{ret}  \:  \ottmv{y}     )     )    )   \:  \ottnt{a}   )   \hspace*{3pt} [\text{By Equation } (\ref{eq:natr})] \\
\equiv &  \mathbf{join}^{  \ell_{{\mathrm{1}}}  ,   \ell_{{\mathrm{1}}}  \vee  \ell_{{\mathrm{2}}}   }   (    (   \mathbf{lift}^{  \ell_{{\mathrm{1}}}  }   (   \lambda  \ottmv{x}  .   \mathbf{up}^{  \ell_{{\mathrm{2}}}  ,   \ell_{{\mathrm{1}}}  \vee  \ell_{{\mathrm{2}}}   }  \ottmv{x}    )    )   \:  \ottnt{a}   )   \hspace*{3pt} [\text{By Equation } (\ref{eq:idr})]\\
= &  \ottnt{a}  \:  \leftindex^{  \ell_{{\mathrm{1}}}  }{\gg}\!\! =^{   \ell_{{\mathrm{1}}}  \vee  \ell_{{\mathrm{2}}}   }   \lambda  \ottmv{x}  .   \mathbf{up}^{  \ell_{{\mathrm{2}}}  ,   \ell_{{\mathrm{1}}}  \vee  \ell_{{\mathrm{2}}}   }  \ottmv{x}    \\
\equiv &  \mathbf{up}^{   \ell_{{\mathrm{1}}}  \vee  \ell_{{\mathrm{2}}}   ,   \ell_{{\mathrm{1}}}  \vee  \ell_{{\mathrm{2}}}   }   (   \ottnt{a}  \:  \leftindex^{  \ell_{{\mathrm{1}}}  }{\gg}\!\! =^{  \ell_{{\mathrm{2}}}  }   \lambda  \ottmv{x}  .  \ottmv{x}    )   \hspace{3pt} [\text{By Equation } (\ref{eq:natr})]\\
\equiv &  \ottnt{a}  \:  \leftindex^{  \ell_{{\mathrm{1}}}  }{\gg}\!\! =^{  \ell_{{\mathrm{2}}}  }   \lambda  \ottmv{x}  .  \ottmv{x}   \equiv  \mathbf{join}^{  \ell_{{\mathrm{1}}}  ,  \ell_{{\mathrm{2}}}  }  \ottnt{a}  \hspace*{3pt} [\text{By Equation } (\ref{joineqn})]
\end{align*}

\item \Rref{MC-Up}. Have: $ \Gamma  \vdash   \mathbf{up}^{  \ell_{{\mathrm{1}}}  ,  \ell_{{\mathrm{2}}}  }  \ottnt{a}   :   S_{  \ell_{{\mathrm{2}}}  } \:  \ottnt{A}  $ where $ \Gamma  \vdash  \ottnt{a}  :   S_{  \ell_{{\mathrm{1}}}  } \:  \ottnt{A}  $ and $ \ell_{{\mathrm{1}}}  \sqsubseteq  \ell_{{\mathrm{2}}} $.\\
By IH, $  \underline{  \overline{ \ottnt{a} }  }   \equiv  \ottnt{a} $. Now,
\begin{align*}
&  \underline{  \overline{   \mathbf{up}^{  \ell_{{\mathrm{1}}}  ,  \ell_{{\mathrm{2}}}  }  \ottnt{a}   }  }  \\
= &  \underline{   \mathbf{bind} ^{ \ell_{{\mathrm{1}}} } \:  \ottmv{x}  =   \overline{ \ottnt{a} }   \: \mathbf{in} \:   \mathbf{eta} ^{ \ell_{{\mathrm{2}}} }  \ottmv{x}    }  \\
= &    j^{ \ell_{{\mathrm{1}}} }_{   S_{  \ell_{{\mathrm{2}}}  } \:  \ottnt{A}   }    \:   (    (   \mathbf{lift}^{  \ell_{{\mathrm{1}}}  }   (   \lambda  \ottmv{x}  .   \underline{   \mathbf{eta} ^{ \ell_{{\mathrm{2}}} }  \ottmv{x}   }    )    )   \:   \underline{  \overline{ \ottnt{a} }  }    )   \\
= &  \mathbf{join}^{  \ell_{{\mathrm{1}}}  ,  \ell_{{\mathrm{2}}}  }   (    (   \mathbf{lift}^{  \ell_{{\mathrm{1}}}  }   (   \lambda  \ottmv{x}  .   \mathbf{up}^{   \bot   ,  \ell_{{\mathrm{2}}}  }   (   \ottkw{ret}  \:  \ottmv{x}   )     )    )   \:   \underline{  \overline{ \ottnt{a} }  }    )   \\
\equiv &  \ottnt{a}  \:  \leftindex^{  \ell_{{\mathrm{1}}}  }{\gg}\!\! =^{  \ell_{{\mathrm{2}}}  }   (   \lambda  \ottmv{x}  .   \mathbf{up}^{   \bot   ,  \ell_{{\mathrm{2}}}  }   (   \ottkw{ret}  \:  \ottmv{x}   )     )   \\
\equiv &  \mathbf{up}^{  \ell_{{\mathrm{1}}}  ,  \ell_{{\mathrm{2}}}  }   (   \ottnt{a}  \:  \leftindex^{  \ell_{{\mathrm{1}}}  }{\gg}\!\! =^{   \bot   }   \lambda  \ottmv{x}  .   \ottkw{ret}  \:  \ottmv{x}     )   \hspace{3pt} [\text{By Equation } (\ref{eq:natr})]\\
\equiv &  \mathbf{up}^{  \ell_{{\mathrm{1}}}  ,  \ell_{{\mathrm{2}}}  }  \ottnt{a}  \hspace*{3pt} [\text{By Equation } (\ref{eq:idr})]
\end{align*}

\item \Rref{MC-Extract}. Have: $ \Gamma  \vdash   \mathbf{extr} \:  \ottnt{a}   :  \ottnt{A} $ where $ \Gamma  \vdash  \ottnt{a}  :   S_{ \ottsym{1} } \:  \ottnt{A}  $.\\
By IH, $  \underline{  \overline{ \ottnt{a} }  }   \equiv  \ottnt{a} $.\\
Now, $ \underline{  \overline{  \mathbf{extr} \:  \ottnt{a}  }  }  =  \underline{   \mathbf{bind} ^{  \bot  } \:  \ottmv{x}  =   \overline{ \ottnt{a} }   \: \mathbf{in} \:  \ottmv{x}   }  =    j^{  \bot  }_{ \ottnt{A} }    \:   (    (   \mathbf{lift}^{   \bot   }   (   \lambda  \ottmv{x}  .  \ottmv{x}   )    )   \:   \underline{  \overline{ \ottnt{a} }  }    )   \equiv    j^{  \bot  }_{ \ottnt{A} }    \:  \ottnt{a}  =  \mathbf{extr} \:  \ottnt{a} $.

\item \Rref{MC-Fork}. Have: $ \Gamma  \vdash   \mathbf{fork}^{  \ell_{{\mathrm{1}}}  ,  \ell_{{\mathrm{2}}}  }  \ottnt{a}   :   S_{  \ell_{{\mathrm{1}}}  } \:   S_{  \ell_{{\mathrm{2}}}  } \:  \ottnt{A}   $ where $ \Gamma  \vdash  \ottnt{a}  :   S_{   \ell_{{\mathrm{1}}}  \vee  \ell_{{\mathrm{2}}}   } \:  \ottnt{A}  $.\\
By IH, $  \underline{  \overline{ \ottnt{a} }  }   \equiv  \ottnt{a} $. Now,
\begin{align*}
&  \underline{  \overline{   \mathbf{fork}^{  \ell_{{\mathrm{1}}}  ,  \ell_{{\mathrm{2}}}  }  \ottnt{a}   }  }  \\
= &  \underline{   \mathbf{bind} ^{  \ell_{{\mathrm{1}}}  \vee  \ell_{{\mathrm{2}}}  } \:  \ottmv{x}  =   \overline{ \ottnt{a} }   \: \mathbf{in} \:   \mathbf{eta} ^{ \ell_{{\mathrm{1}}} }   \mathbf{eta} ^{ \ell_{{\mathrm{2}}} }  \ottmv{x}     }  \\
= &    j^{  \ell_{{\mathrm{1}}}  \vee  \ell_{{\mathrm{2}}}  }_{   S_{  \ell_{{\mathrm{1}}}  } \:   S_{  \ell_{{\mathrm{2}}}  } \:  \ottnt{A}    }    \:   (    (   \mathbf{lift}^{   \ell_{{\mathrm{1}}}  \vee  \ell_{{\mathrm{2}}}   }   (   \lambda  \ottmv{x}  .   \underline{   \mathbf{eta} ^{ \ell_{{\mathrm{1}}} }   \mathbf{eta} ^{ \ell_{{\mathrm{2}}} }  \ottmv{x}    }    )    )   \:    \underline{  \overline{ \ottnt{a} }  }     )   \\
\equiv &    j^{  \ell_{{\mathrm{1}}}  \vee  \ell_{{\mathrm{2}}}  }_{   S_{  \ell_{{\mathrm{1}}}  } \:   S_{  \ell_{{\mathrm{2}}}  } \:  \ottnt{A}    }    \:   (    (   \mathbf{lift}^{   \ell_{{\mathrm{1}}}  \vee  \ell_{{\mathrm{2}}}   }   (   \lambda  \ottmv{x}  .   \mathbf{up}^{   \bot   ,  \ell_{{\mathrm{1}}}  }   \ottkw{ret}  \:   \mathbf{up}^{   \bot   ,  \ell_{{\mathrm{2}}}  }   \ottkw{ret}  \:  \ottmv{x}       )    )   \:  \ottnt{a}   )   \\
\triangleq &    j^{  \ell_{{\mathrm{1}}}  \vee  \ell_{{\mathrm{2}}}  }_{   S_{  \ell_{{\mathrm{1}}}  } \:   S_{  \ell_{{\mathrm{2}}}  } \:  \ottnt{A}    }    \:  \ottnt{t_{{\mathrm{0}}}}  \hspace*{3pt} [\text{Say, } \ottnt{t_{{\mathrm{0}}}} \triangleq   (   \mathbf{lift}^{   \ell_{{\mathrm{1}}}  \vee  \ell_{{\mathrm{2}}}   }   (   \lambda  \ottmv{x}  .   \mathbf{up}^{   \bot   ,  \ell_{{\mathrm{1}}}  }   \ottkw{ret}  \:   \mathbf{up}^{   \bot   ,  \ell_{{\mathrm{2}}}  }   \ottkw{ret}  \:  \ottmv{x}       )    )   \:  \ottnt{a}  ] \\
= &    j^{ \ell_{{\mathrm{1}}} }_{   S_{  \ell_{{\mathrm{1}}}  } \:   S_{  \ell_{{\mathrm{2}}}  } \:  \ottnt{A}    }    \:   (    (   \mathbf{lift}^{  \ell_{{\mathrm{1}}}  }    j^{ \ell_{{\mathrm{2}}} }_{   S_{  \ell_{{\mathrm{1}}}  } \:   S_{  \ell_{{\mathrm{2}}}  } \:  \ottnt{A}    }     )   \:   (   \mathbf{fork}^{  \ell_{{\mathrm{1}}}  ,  \ell_{{\mathrm{2}}}  }  \ottnt{t_{{\mathrm{0}}}}   )    )   \\
= &    j^{ \ell_{{\mathrm{1}}} }_{   S_{  \ell_{{\mathrm{1}}}  } \:   S_{  \ell_{{\mathrm{2}}}  } \:  \ottnt{A}    }    \:   (    (   \mathbf{lift}^{  \ell_{{\mathrm{1}}}  }   (    \lambda  \ottmv{z}  .   (   \mathbf{lift}^{  \ell_{{\mathrm{1}}}  }   j^{ \ell_{{\mathrm{2}}} }_{  S_{  \ell_{{\mathrm{2}}}  } \:  \ottnt{A}  }    )    \:   (   \mathbf{fork}^{  \ell_{{\mathrm{1}}}  ,  \ell_{{\mathrm{2}}}  }   (   \mathbf{join}^{  \ell_{{\mathrm{2}}}  ,  \ell_{{\mathrm{1}}}  }  \ottmv{z}   )    )    )    )   \:   (   \mathbf{fork}^{  \ell_{{\mathrm{1}}}  ,  \ell_{{\mathrm{2}}}  }  \ottnt{t_{{\mathrm{0}}}}   )    )  \\
\equiv &    j^{ \ell_{{\mathrm{1}}} }_{   S_{  \ell_{{\mathrm{1}}}  } \:   S_{  \ell_{{\mathrm{2}}}  } \:  \ottnt{A}    }    \:   (    (     \lambda  \ottmv{z}  .   (   \mathbf{lift}^{  \ell_{{\mathrm{1}}}  }   (   \mathbf{lift}^{  \ell_{{\mathrm{1}}}  }   j^{ \ell_{{\mathrm{2}}} }_{  S_{  \ell_{{\mathrm{2}}}  } \:  \ottnt{A}  }    )    )    \:   (   \mathbf{lift}^{  \ell_{{\mathrm{1}}}  }   (   \lambda  \ottmv{y}  .   \mathbf{fork}^{  \ell_{{\mathrm{1}}}  ,  \ell_{{\mathrm{2}}}  }   (   \mathbf{join}^{  \ell_{{\mathrm{2}}}  ,  \ell_{{\mathrm{1}}}  }  \ottmv{y}   )     )    )    \:  \ottmv{z}   )   \:   (   \mathbf{fork}^{  \ell_{{\mathrm{1}}}  ,  \ell_{{\mathrm{2}}}  }  \ottnt{t_{{\mathrm{0}}}}   )    )   \hspace*{3pt} [\text{By Eqn. } (\ref{eq:ccomp})] \\
\equiv &    j^{ \ell_{{\mathrm{1}}} }_{   S_{  \ell_{{\mathrm{1}}}  } \:   S_{  \ell_{{\mathrm{2}}}  } \:  \ottnt{A}    }    \:   (    (   \mathbf{lift}^{  \ell_{{\mathrm{1}}}  }   (   \mathbf{lift}^{  \ell_{{\mathrm{1}}}  }   j^{ \ell_{{\mathrm{2}}} }_{  S_{  \ell_{{\mathrm{2}}}  } \:  \ottnt{A}  }    )    )   \:   (    (   \mathbf{lift}^{  \ell_{{\mathrm{1}}}  }   (   \lambda  \ottmv{y}  .   \mathbf{fork}^{  \ell_{{\mathrm{1}}}  ,  \ell_{{\mathrm{2}}}  }   (   \mathbf{join}^{  \ell_{{\mathrm{2}}}  ,  \ell_{{\mathrm{1}}}  }  \ottmv{y}   )     )    )   \:   (   \mathbf{fork}^{  \ell_{{\mathrm{1}}}  ,  \ell_{{\mathrm{2}}}  }  \ottnt{t_{{\mathrm{0}}}}   )    )    )   \\
= &    j^{ \ell_{{\mathrm{1}}} }_{   S_{  \ell_{{\mathrm{1}}}  } \:   S_{  \ell_{{\mathrm{2}}}  } \:  \ottnt{A}    }    \:   (    (   \mathbf{lift}^{  \ell_{{\mathrm{1}}}  }   (   \mathbf{lift}^{  \ell_{{\mathrm{1}}}  }   j^{ \ell_{{\mathrm{2}}} }_{  S_{  \ell_{{\mathrm{2}}}  } \:  \ottnt{A}  }    )    )   \:   (    (   \lambda  \ottmv{y}  .   \mathbf{fork}^{  \ell_{{\mathrm{1}}}  ,  \ell_{{\mathrm{2}}}  }   (   \mathbf{join}^{  \ell_{{\mathrm{2}}}  ,  \ell_{{\mathrm{1}}}  }  \ottmv{y}   )     )   \:  \leftindex^{  \ell_{{\mathrm{2}}}  }{\ll}\!\! =^{  \ell_{{\mathrm{1}}}  }  \ottnt{t_{{\mathrm{0}}}}   )    )   \\
\triangleq &    j^{ \ell_{{\mathrm{1}}} }_{   S_{  \ell_{{\mathrm{1}}}  } \:   S_{  \ell_{{\mathrm{2}}}  } \:  \ottnt{A}    }    \:   (    (   \mathbf{lift}^{  \ell_{{\mathrm{1}}}  }   (   \mathbf{lift}^{  \ell_{{\mathrm{1}}}  }   j^{ \ell_{{\mathrm{2}}} }_{  S_{  \ell_{{\mathrm{2}}}  } \:  \ottnt{A}  }    )    )   \:   (    (   \lambda  \ottmv{y}  .   \mathbf{fork}^{  \ell_{{\mathrm{1}}}  ,  \ell_{{\mathrm{2}}}  }   (   \mathbf{join}^{  \ell_{{\mathrm{2}}}  ,  \ell_{{\mathrm{1}}}  }  \ottmv{y}   )     )   \:  \leftindex^{  \ell_{{\mathrm{2}}}  }{\ll}\!\! =^{  \ell_{{\mathrm{1}}}  }   (    (   \mathbf{lift}^{   \ell_{{\mathrm{1}}}  \vee  \ell_{{\mathrm{2}}}   }  \ottnt{f}   )   \:  \ottnt{a}   )    )    )   \\ & [\text{Say, } \ottnt{f} \triangleq  \lambda  \ottmv{x}  .   \mathbf{up}^{   \bot   ,  \ell_{{\mathrm{1}}}  }   \ottkw{ret}  \:   \mathbf{up}^{   \bot   ,  \ell_{{\mathrm{2}}}  }   \ottkw{ret}  \:  \ottmv{x}      ] \\
\equiv &    j^{ \ell_{{\mathrm{1}}} }_{   S_{  \ell_{{\mathrm{1}}}  } \:   S_{  \ell_{{\mathrm{2}}}  } \:  \ottnt{A}    }    \:   (    (   \mathbf{lift}^{  \ell_{{\mathrm{1}}}  }   (   \mathbf{lift}^{  \ell_{{\mathrm{1}}}  }   j^{ \ell_{{\mathrm{2}}} }_{  S_{  \ell_{{\mathrm{2}}}  } \:  \ottnt{A}  }    )    )   \:   (    (   \lambda  \ottmv{y}  .   \mathbf{fork}^{  \ell_{{\mathrm{1}}}  ,  \ell_{{\mathrm{2}}}  }   (   \mathbf{join}^{  \ell_{{\mathrm{2}}}  ,  \ell_{{\mathrm{1}}}  }  \ottmv{y}   )     )   \:  \leftindex^{  \ell_{{\mathrm{2}}}  }{\ll}\!\! =^{  \ell_{{\mathrm{1}}}  }   (    (    \lambda  \ottmv{x}  .  \ottnt{f}   \:   (   \mathbf{extr} \:  \ottmv{x}   )    )   \:  \leftindex^{   \bot   }{\ll}\!\! =^{   \ell_{{\mathrm{1}}}  \vee  \ell_{{\mathrm{2}}}   }  \ottnt{a}   )    )    )   \\ & [\text{By Eqn. } (\ref{lifteq2})] \\
\equiv &    j^{ \ell_{{\mathrm{1}}} }_{   S_{  \ell_{{\mathrm{1}}}  } \:   S_{  \ell_{{\mathrm{2}}}  } \:  \ottnt{A}    }    \:   (    (   \mathbf{lift}^{  \ell_{{\mathrm{1}}}  }   (   \mathbf{lift}^{  \ell_{{\mathrm{1}}}  }   j^{ \ell_{{\mathrm{2}}} }_{  S_{  \ell_{{\mathrm{2}}}  } \:  \ottnt{A}  }    )    )   \:   (    (    \lambda  \ottmv{z}  .   (   \lambda  \ottmv{y}  .   \mathbf{fork}^{  \ell_{{\mathrm{1}}}  ,  \ell_{{\mathrm{2}}}  }   (   \mathbf{join}^{  \ell_{{\mathrm{2}}}  ,  \ell_{{\mathrm{1}}}  }  \ottmv{y}   )     )    \:   (    (    \lambda  \ottmv{x}  .  \ottnt{f}   \:   (   \mathbf{extr} \:  \ottmv{x}   )    )   \:  \leftindex^{   \bot   }{\ll}\!\! =^{  \ell_{{\mathrm{2}}}  }  \ottmv{z}   )    )   \:  \leftindex^{  \ell_{{\mathrm{2}}}  }{\ll}\!\! =^{  \ell_{{\mathrm{1}}}  }  \ottnt{a}   )    )   \\ & [\text{By Eqn. } (\ref{eq:cassoc})] \\
\equiv &    j^{ \ell_{{\mathrm{1}}} }_{   S_{  \ell_{{\mathrm{1}}}  } \:   S_{  \ell_{{\mathrm{2}}}  } \:  \ottnt{A}    }    \:   (    (   \mathbf{lift}^{  \ell_{{\mathrm{1}}}  }   (   \mathbf{lift}^{  \ell_{{\mathrm{1}}}  }   j^{ \ell_{{\mathrm{2}}} }_{  S_{  \ell_{{\mathrm{2}}}  } \:  \ottnt{A}  }    )    )   \:   (    (    \lambda  \ottmv{z}  .   (   \lambda  \ottmv{y}  .   \mathbf{fork}^{  \ell_{{\mathrm{1}}}  ,  \ell_{{\mathrm{2}}}  }   (   \mathbf{join}^{  \ell_{{\mathrm{2}}}  ,  \ell_{{\mathrm{1}}}  }  \ottmv{y}   )     )    \:   (   \mathbf{fork}^{  \ell_{{\mathrm{2}}}  ,  \ell_{{\mathrm{1}}}  }   (   \mathbf{up}^{  \ell_{{\mathrm{2}}}  ,   \ell_{{\mathrm{2}}}  \vee  \ell_{{\mathrm{1}}}   }   \mathbf{fork}^{  \ell_{{\mathrm{2}}}  ,  \ell_{{\mathrm{2}}}  }  \ottmv{z}    )    )    )   \:  \leftindex^{  \ell_{{\mathrm{2}}}  }{\ll}\!\! =^{  \ell_{{\mathrm{1}}}  }  \ottnt{a}   )    )   \\
& [\because    (    \lambda  \ottmv{x}  .  \ottnt{f}   \:   (   \mathbf{extr} \:  \ottmv{x}   )    )   \:  \leftindex^{   \bot   }{\ll}\!\! =^{  \ell_{{\mathrm{2}}}  }  \ottmv{z}   \equiv   \mathbf{fork}^{  \ell_{{\mathrm{2}}}  ,  \ell_{{\mathrm{1}}}  }   (   \mathbf{up}^{  \ell_{{\mathrm{2}}}  ,   \ell_{{\mathrm{2}}}  \vee  \ell_{{\mathrm{1}}}   }   \mathbf{fork}^{  \ell_{{\mathrm{2}}}  ,  \ell_{{\mathrm{2}}}  }  \ottmv{z}    )    \; (\text{See below}) ] \\
\equiv &    j^{ \ell_{{\mathrm{1}}} }_{   S_{  \ell_{{\mathrm{1}}}  } \:   S_{  \ell_{{\mathrm{2}}}  } \:  \ottnt{A}    }    \:   (    (   \mathbf{lift}^{  \ell_{{\mathrm{1}}}  }   (   \mathbf{lift}^{  \ell_{{\mathrm{1}}}  }   j^{ \ell_{{\mathrm{2}}} }_{  S_{  \ell_{{\mathrm{2}}}  } \:  \ottnt{A}  }    )    )   \:   (    (   \lambda  \ottmv{z}  .   \mathbf{fork}^{  \ell_{{\mathrm{1}}}  ,  \ell_{{\mathrm{2}}}  }   \mathbf{join}^{  \ell_{{\mathrm{2}}}  ,  \ell_{{\mathrm{1}}}  }   \mathbf{fork}^{  \ell_{{\mathrm{2}}}  ,  \ell_{{\mathrm{1}}}  }   \mathbf{up}^{  \ell_{{\mathrm{2}}}  ,   \ell_{{\mathrm{2}}}  \vee  \ell_{{\mathrm{1}}}   }   \mathbf{fork}^{  \ell_{{\mathrm{2}}}  ,  \ell_{{\mathrm{2}}}  }  \ottmv{z}        )   \:  \leftindex^{  \ell_{{\mathrm{2}}}  }{\ll}\!\! =^{  \ell_{{\mathrm{1}}}  }  \ottnt{a}   )    )   \\
\equiv &    j^{ \ell_{{\mathrm{1}}} }_{   S_{  \ell_{{\mathrm{1}}}  } \:   S_{  \ell_{{\mathrm{2}}}  } \:  \ottnt{A}    }    \:   (    (   \mathbf{lift}^{  \ell_{{\mathrm{1}}}  }   (   \mathbf{lift}^{  \ell_{{\mathrm{1}}}  }   j^{ \ell_{{\mathrm{2}}} }_{  S_{  \ell_{{\mathrm{2}}}  } \:  \ottnt{A}  }    )    )   \:   (    (   \lambda  \ottmv{z}  .   \mathbf{fork}^{  \ell_{{\mathrm{1}}}  ,  \ell_{{\mathrm{2}}}  }   \mathbf{up}^{  \ell_{{\mathrm{2}}}  ,   \ell_{{\mathrm{2}}}  \vee  \ell_{{\mathrm{1}}}   }   \mathbf{fork}^{  \ell_{{\mathrm{2}}}  ,  \ell_{{\mathrm{2}}}  }  \ottmv{z}      )   \:  \leftindex^{  \ell_{{\mathrm{2}}}  }{\ll}\!\! =^{  \ell_{{\mathrm{1}}}  }  \ottnt{a}   )    )   \\
= &    j^{ \ell_{{\mathrm{1}}} }_{   S_{  \ell_{{\mathrm{1}}}  } \:   S_{  \ell_{{\mathrm{2}}}  } \:  \ottnt{A}    }    \:   (    (   \mathbf{lift}^{  \ell_{{\mathrm{1}}}  }   (   \mathbf{lift}^{  \ell_{{\mathrm{1}}}  }   j^{ \ell_{{\mathrm{2}}} }_{  S_{  \ell_{{\mathrm{2}}}  } \:  \ottnt{A}  }    )    )   \:   (    (   \mathbf{lift}^{  \ell_{{\mathrm{1}}}  }   (   \lambda  \ottmv{z}  .   \mathbf{fork}^{  \ell_{{\mathrm{1}}}  ,  \ell_{{\mathrm{2}}}  }   \mathbf{up}^{  \ell_{{\mathrm{2}}}  ,   \ell_{{\mathrm{2}}}  \vee  \ell_{{\mathrm{1}}}   }   \mathbf{fork}^{  \ell_{{\mathrm{2}}}  ,  \ell_{{\mathrm{2}}}  }  \ottmv{z}      )    )   \:   (   \mathbf{fork}^{  \ell_{{\mathrm{1}}}  ,  \ell_{{\mathrm{2}}}  }  \ottnt{a}   )    )    )   \\
\equiv &    j^{ \ell_{{\mathrm{1}}} }_{   S_{  \ell_{{\mathrm{1}}}  } \:   S_{  \ell_{{\mathrm{2}}}  } \:  \ottnt{A}    }    \:   (    (   \mathbf{lift}^{  \ell_{{\mathrm{1}}}  }   (    \lambda  \ottmv{z}  .   (   \mathbf{lift}^{  \ell_{{\mathrm{1}}}  }   j^{ \ell_{{\mathrm{2}}} }_{  S_{  \ell_{{\mathrm{2}}}  } \:  \ottnt{A}  }    )    \:   (   \mathbf{fork}^{  \ell_{{\mathrm{1}}}  ,  \ell_{{\mathrm{2}}}  }   \mathbf{up}^{  \ell_{{\mathrm{2}}}  ,   \ell_{{\mathrm{2}}}  \vee  \ell_{{\mathrm{1}}}   }   \mathbf{fork}^{  \ell_{{\mathrm{2}}}  ,  \ell_{{\mathrm{2}}}  }  \ottmv{z}     )    )    )   \:   (   \mathbf{fork}^{  \ell_{{\mathrm{1}}}  ,  \ell_{{\mathrm{2}}}  }  \ottnt{a}   )    )   \hspace*{3pt} [\text{By Eqn. } (\ref{eq:ccomp})]\\
= &    j^{ \ell_{{\mathrm{1}}} }_{   S_{  \ell_{{\mathrm{1}}}  } \:   S_{  \ell_{{\mathrm{2}}}  } \:  \ottnt{A}    }    \:   (    (   \mathbf{lift}^{  \ell_{{\mathrm{1}}}  }   (    \lambda  \ottmv{z}  .   (   \mathbf{lift}^{  \ell_{{\mathrm{1}}}  }   (   \lambda  \ottmv{x}  .   \mathbf{join}^{  \ell_{{\mathrm{2}}}  ,  \ell_{{\mathrm{2}}}  }  \ottmv{x}    )    )    \:   (   \mathbf{fork}^{  \ell_{{\mathrm{1}}}  ,  \ell_{{\mathrm{2}}}  }   \mathbf{up}^{  \ell_{{\mathrm{2}}}  ,   \ell_{{\mathrm{2}}}  \vee  \ell_{{\mathrm{1}}}   }   \mathbf{fork}^{  \ell_{{\mathrm{2}}}  ,  \ell_{{\mathrm{2}}}  }  \ottmv{z}     )    )    )   \:   (   \mathbf{fork}^{  \ell_{{\mathrm{1}}}  ,  \ell_{{\mathrm{2}}}  }  \ottnt{a}   )    )   \\
= &    j^{ \ell_{{\mathrm{1}}} }_{   S_{  \ell_{{\mathrm{1}}}  } \:   S_{  \ell_{{\mathrm{2}}}  } \:  \ottnt{A}    }    \:   (    (   \mathbf{lift}^{  \ell_{{\mathrm{1}}}  }   (   \lambda  \ottmv{z}  .   (    (   \lambda  \ottmv{x}  .   \mathbf{join}^{  \ell_{{\mathrm{2}}}  ,  \ell_{{\mathrm{2}}}  }  \ottmv{x}    )   \:  \leftindex^{  \ell_{{\mathrm{2}}}  }{\ll}\!\! =^{  \ell_{{\mathrm{1}}}  }   \mathbf{up}^{  \ell_{{\mathrm{2}}}  ,   \ell_{{\mathrm{2}}}  \vee  \ell_{{\mathrm{1}}}   }   \mathbf{fork}^{  \ell_{{\mathrm{2}}}  ,  \ell_{{\mathrm{2}}}  }  \ottmv{z}     )    )    )   \:   (   \mathbf{fork}^{  \ell_{{\mathrm{1}}}  ,  \ell_{{\mathrm{2}}}  }  \ottnt{a}   )    )  \\
\equiv &    j^{ \ell_{{\mathrm{1}}} }_{   S_{  \ell_{{\mathrm{1}}}  } \:   S_{  \ell_{{\mathrm{2}}}  } \:  \ottnt{A}    }    \:   (    (   \mathbf{lift}^{  \ell_{{\mathrm{1}}}  }   (   \lambda  \ottmv{z}  .   (   \mathbf{up}^{   \bot   ,  \ell_{{\mathrm{1}}}  }   (    \lambda  \ottmv{x}  .   \mathbf{join}^{  \ell_{{\mathrm{2}}}  ,  \ell_{{\mathrm{2}}}  }  \ottmv{x}    \:  \leftindex^{  \ell_{{\mathrm{2}}}  }{\ll}\!\! =^{   \bot   }   \mathbf{fork}^{  \ell_{{\mathrm{2}}}  ,  \ell_{{\mathrm{2}}}  }  \ottmv{z}    )    )    )    )   \:   (   \mathbf{fork}^{  \ell_{{\mathrm{1}}}  ,  \ell_{{\mathrm{2}}}  }  \ottnt{a}   )    )   \hspace*{3pt} [\text{By Eqn. } (\ref{eq:cnatl})]\\
\equiv &    j^{ \ell_{{\mathrm{1}}} }_{   S_{  \ell_{{\mathrm{1}}}  } \:   S_{  \ell_{{\mathrm{2}}}  } \:  \ottnt{A}    }    \:   (    (   \mathbf{lift}^{  \ell_{{\mathrm{1}}}  }   (   \lambda  \ottmv{z}  .   (   \mathbf{up}^{   \bot   ,  \ell_{{\mathrm{1}}}  }   \ottkw{ret}  \:   \mathbf{extr} \:   (    \lambda  \ottmv{x}  .   \mathbf{join}^{  \ell_{{\mathrm{2}}}  ,  \ell_{{\mathrm{2}}}  }  \ottmv{x}    \:  \leftindex^{  \ell_{{\mathrm{2}}}  }{\ll}\!\! =^{   \bot   }   \mathbf{fork}^{  \ell_{{\mathrm{2}}}  ,  \ell_{{\mathrm{2}}}  }  \ottmv{z}    )      )    )    )   \:   (   \mathbf{fork}^{  \ell_{{\mathrm{1}}}  ,  \ell_{{\mathrm{2}}}  }  \ottnt{a}   )    )   \\
\equiv &    j^{ \ell_{{\mathrm{1}}} }_{   S_{  \ell_{{\mathrm{1}}}  } \:   S_{  \ell_{{\mathrm{2}}}  } \:  \ottnt{A}    }    \:   (    (   \mathbf{lift}^{  \ell_{{\mathrm{1}}}  }   (   \lambda  \ottmv{z}  .   (   \mathbf{up}^{   \bot   ,  \ell_{{\mathrm{1}}}  }   \ottkw{ret}  \:   (   \mathbf{join}^{  \ell_{{\mathrm{2}}}  ,  \ell_{{\mathrm{2}}}  }   \mathbf{fork}^{  \ell_{{\mathrm{2}}}  ,  \ell_{{\mathrm{2}}}  }  \ottmv{z}    )     )    )    )   \:   (   \mathbf{fork}^{  \ell_{{\mathrm{1}}}  ,  \ell_{{\mathrm{2}}}  }  \ottnt{a}   )    )   \hspace*{3pt} [\text{By Eqn. } (\ref{eq:cidl})] \\
\equiv &    j^{ \ell_{{\mathrm{1}}} }_{   S_{  \ell_{{\mathrm{1}}}  } \:   S_{  \ell_{{\mathrm{2}}}  } \:  \ottnt{A}    }    \:   (    (   \mathbf{lift}^{  \ell_{{\mathrm{1}}}  }   (   \lambda  \ottmv{z}  .   (   \mathbf{up}^{   \bot   ,  \ell_{{\mathrm{1}}}  }   \ottkw{ret}  \:  \ottmv{z}    )    )    )   \:   (   \mathbf{fork}^{  \ell_{{\mathrm{1}}}  ,  \ell_{{\mathrm{2}}}  }  \ottnt{a}   )    )  \\
= &  \mathbf{join}^{  \ell_{{\mathrm{1}}}  ,  \ell_{{\mathrm{1}}}  }   (    (   \mathbf{lift}^{  \ell_{{\mathrm{1}}}  }   (   \lambda  \ottmv{z}  .   \mathbf{up}^{   \bot   ,  \ell_{{\mathrm{1}}}  }   \ottkw{ret}  \:  \ottmv{z}     )    )   \:   (   \mathbf{fork}^{  \ell_{{\mathrm{1}}}  ,  \ell_{{\mathrm{2}}}  }  \ottnt{a}   )    )   \\
= &   \mathbf{fork}^{  \ell_{{\mathrm{1}}}  ,  \ell_{{\mathrm{2}}}  }  \ottnt{a}   \:  \leftindex^{  \ell_{{\mathrm{1}}}  }{\gg}\!\! =^{  \ell_{{\mathrm{1}}}  }   \lambda  \ottmv{z}  .   \mathbf{up}^{   \bot   ,  \ell_{{\mathrm{1}}}  }   \ottkw{ret}  \:  \ottmv{z}     \\
\equiv &  \mathbf{up}^{  \ell_{{\mathrm{1}}}  ,  \ell_{{\mathrm{1}}}  }   (    \mathbf{fork}^{  \ell_{{\mathrm{1}}}  ,  \ell_{{\mathrm{2}}}  }  \ottnt{a}   \:  \leftindex^{  \ell_{{\mathrm{1}}}  }{\gg}\!\! =^{   \bot   }   \lambda  \ottmv{z}  .   \ottkw{ret}  \:  \ottmv{z}     )   \hspace*{3pt} [\text{By Eqn. } (\ref{eq:natr})] \\
\equiv &   \mathbf{fork}^{  \ell_{{\mathrm{1}}}  ,  \ell_{{\mathrm{2}}}  }  \ottnt{a}   \:  \leftindex^{  \ell_{{\mathrm{1}}}  }{\gg}\!\! =^{   \bot   }   \lambda  \ottmv{z}  .   \ottkw{ret}  \:  \ottmv{z}    \equiv  \mathbf{fork}^{  \ell_{{\mathrm{1}}}  ,  \ell_{{\mathrm{2}}}  }  \ottnt{a}  \hspace*{3pt} [\text{By Eqn. } (\ref{eq:idr}) ]
\end{align*}
Note that:
\begin{align*}
&    \lambda  \ottmv{x}  .  \ottnt{f}   \:   (   \mathbf{extr} \:  \ottmv{x}   )    \:  \leftindex^{   \bot   }{\ll}\!\! =^{  \ell_{{\mathrm{2}}}  }  \ottmv{z}  \\
\equiv &   \lambda  \ottmv{x}  .   \mathbf{up}^{   \bot   ,  \ell_{{\mathrm{1}}}  }   \ottkw{ret}  \:   \mathbf{up}^{   \bot   ,  \ell_{{\mathrm{2}}}  }  \ottmv{x}      \:  \leftindex^{   \bot   }{\ll}\!\! =^{  \ell_{{\mathrm{2}}}  }  \ottmv{z}  \\
\equiv &   \lambda  \ottmv{x}  .   \mathbf{up}^{   \bot   ,  \ell_{{\mathrm{1}}}  }   \ottkw{ret}  \:  \ottmv{x}     \:  \leftindex^{  \ell_{{\mathrm{2}}}  }{\ll}\!\! =^{  \ell_{{\mathrm{2}}}  }   \mathbf{up}^{  \ell_{{\mathrm{2}}}  ,  \ell_{{\mathrm{2}}}  }  \ottmv{z}   \hspace*{3pt} [\text{By Eqn. } (\ref{eq:cnatr})] \\
\equiv &   (   \mathbf{lift}^{  \ell_{{\mathrm{2}}}  }   (   \lambda  \ottmv{x}  .   \mathbf{up}^{   \bot   ,  \ell_{{\mathrm{1}}}  }   \ottkw{ret}  \:  \ottmv{x}     )    )   \:   (   \mathbf{fork}^{  \ell_{{\mathrm{2}}}  ,  \ell_{{\mathrm{2}}}  }  \ottmv{z}   )   \\
\equiv &   (   \lambda  \ottmv{y}  .   \mathbf{up}^{   \bot   ,  \ell_{{\mathrm{1}}}  }  \ottmv{y}    )   \:  \leftindex^{   \bot   }{\ll}\!\! =^{  \ell_{{\mathrm{2}}}  }   \mathbf{fork}^{  \ell_{{\mathrm{2}}}  ,  \ell_{{\mathrm{2}}}  }  \ottmv{z}   \hspace*{3pt} [\text{By Eqn. } (\ref{lifteq2})] \\
\equiv &   \lambda  \ottmv{y}  .  \ottmv{y}   \:  \leftindex^{  \ell_{{\mathrm{1}}}  }{\ll}\!\! =^{  \ell_{{\mathrm{2}}}  }   \mathbf{up}^{  \ell_{{\mathrm{2}}}  ,   \ell_{{\mathrm{2}}}  \vee  \ell_{{\mathrm{1}}}   }   (   \mathbf{fork}^{  \ell_{{\mathrm{2}}}  ,  \ell_{{\mathrm{2}}}  }  \ottmv{z}   )    \hspace*{3pt} [\text{By Eqn. } (\ref{eq:cnatr})] \\
\equiv &  \mathbf{fork}^{  \ell_{{\mathrm{2}}}  ,  \ell_{{\mathrm{1}}}  }   (   \mathbf{up}^{  \ell_{{\mathrm{2}}}  ,   \ell_{{\mathrm{2}}}  \vee  \ell_{{\mathrm{1}}}   }   \mathbf{fork}^{  \ell_{{\mathrm{2}}}  ,  \ell_{{\mathrm{2}}}  }  \ottmv{z}    )    \hspace*{3pt} [\text{By Eqn. } (\ref{eq:cidn})]
\end{align*}

\end{itemize}
\end{proof}


\section{Proofs of lemmas/theorems stated in Section \ref{nonint}}

\begin{lemma}\label{Ap2}
If $\ell \sqsubseteq A$ in \ED{}, then $\exists$ $ k^{ \ell }_{  \llbracket  \ottnt{A}  \rrbracket  }  \in \text{Hom}_{\Ct}( \mathbf{S}_{  \ell  }   \llbracket  \ottnt{A}  \rrbracket  , \llbracket  \ottnt{A}  \rrbracket )$ such that $   k^{ \ell }_{  \llbracket  \ottnt{A}  \rrbracket  }   \circ   \mathbf{S}^{   \bot    \sqsubseteq   \ell  }_{  \llbracket  \ottnt{A}  \rrbracket  }    \circ   \eta_{  \llbracket  \ottnt{A}  \rrbracket  }   =  \text{id}_{  \llbracket  \ottnt{A}  \rrbracket  } $.
\end{lemma}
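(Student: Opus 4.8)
The plan is to prove the statement by structural induction on the derivation of $\ell \sqsubseteq A$, i.e.\ by a case analysis on the last protection rule used: \rref{Prot-Minimum}, \rref{Prot-Prod}, \rref{Prot-Fun}, \rref{Prot-Monad}, \rref{Prot-Already}, \rref{Prot-Combine}. Throughout I abbreviate $\overline{\eta}^{\ell}_{X} \triangleq \mathbf{S}^{\bot \sqsubseteq \ell}_{X} \circ \eta_{X}$, so that the goal in each case is to produce $k^{\ell}_{\llbracket A \rrbracket}$ with $k^{\ell}_{\llbracket A \rrbracket} \circ \overline{\eta}^{\ell}_{\llbracket A \rrbracket} = \text{id}$. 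The toolkit is the following: $\mathbf{S}$ is a \emph{strong} monoidal functor, hence $\eta, \epsilon, \mu^{\ell_1,\ell_2}, \delta^{\ell_1,\ell_2}$ are all isomorphisms, with $\eta$ inverse to $\epsilon$ and $\mu^{\ell_1,\ell_2}$ inverse to $\delta^{\ell_1,\ell_2}$; $\mathbf{S}$ is a functor on $\Ca(\mathcal{L})$, so $\mathbf{S}^{\ell \sqsubseteq \ell} = \text{id}$ and order-witnesses compose; $\mu$ and $\delta$ are natural in both indices, and the unit/counit and associativity laws hold (Figure~\ref{laxDiag} and its dual); $\overline{\eta}^{\ell}$ is a natural transformation $\Id \to \mathbf{S}_{\ell}$ and is even a \emph{strong} natural transformation; and $\vee$ is idempotent with unit $\bot$, so $\ell_1 \sqsubseteq \ell_2$ iff $\ell_1 \vee \ell_2 = \ell_2$. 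This lemma is the ``retraction half'' of Lemma~\ref{lemmak}; once it is in place, the isomorphism claim of Lemma~\ref{lemmak} follows because the monad is idempotent (a retraction of $\overline{\eta}^{\ell}$ is automatically its two-sided inverse).

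For the non-modal rules the witness is the evident structural map and the verification just uses naturality of $\overline{\eta}^{\ell}$ and the IH. In \rref{Prot-Minimum} ($\bot \sqsubseteq A$) take $k = \epsilon_{\llbracket A \rrbracket}$; since $\mathbf{S}^{\bot \sqsubseteq \bot} = \text{id}$ we need only $\epsilon \circ \eta = \text{id}$. In \rref{Prot-Prod} the IH gives $k_i : \mathbf{S}_{\ell}\llbracket A_i \rrbracket \to \llbracket A_i \rrbracket$ with $k_i \circ \overline{\eta}^{\ell} = \text{id}$; take $k = (k_1 \times k_2) \circ \langle \mathbf{S}_{\ell}\pi_1 , \mathbf{S}_{\ell}\pi_2 \rangle$ and use that naturality of $\overline{\eta}^{\ell}$ turns $\langle \mathbf{S}_{\ell}\pi_1 , \mathbf{S}_{\ell}\pi_2 \rangle \circ \overline{\eta}^{\ell}_{\llbracket A_1 \rrbracket \times \llbracket A_2 \rrbracket}$ into $\overline{\eta}^{\ell}_{\llbracket A_1\rrbracket} \times \overline{\eta}^{\ell}_{\llbracket A_2\rrbracket}$. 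In \rref{Prot-Fun} the IH gives $k_B : \mathbf{S}_{\ell}\llbracket B \rrbracket \to \llbracket B \rrbracket$ with $k_B \circ \overline{\eta}^{\ell} = \text{id}$; define $k$ by currying $k_B \circ \mathbf{S}_{\ell}(\text{app}) \circ t^{\mathbf{S}_{\ell}}$ (with the obvious swaps), mirroring the interpretation of $\mathbf{lift}$, and verify $k \circ \overline{\eta}^{\ell} = \text{id}$ by a strength/naturality diagram chase of exactly the same shape as the one used for $\mathbf{ret}$ in the proof of Theorem~\ref{GMCPrf} (cf.\ Figure~\ref{prfcd1}), using that $\overline{\eta}^{\ell}$ is strong together with the strength axioms of Figure~\ref{strong}.

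The modal cases are the heart of the argument. For \rref{Prot-Monad} ($\ell_1 \sqsubseteq \mathcal{T}_{\ell_2} A$ with $\ell_1 \sqsubseteq \ell_2$, so $\llbracket \mathcal{T}_{\ell_2} A \rrbracket = \mathbf{S}_{\ell_2}\llbracket A \rrbracket$ and $\ell_1 \vee \ell_2 = \ell_2$) take $k = \mu^{\ell_1,\ell_2}_{\llbracket A \rrbracket} : \mathbf{S}_{\ell_1}\mathbf{S}_{\ell_2}\llbracket A \rrbracket \to \mathbf{S}_{\ell_2}\llbracket A \rrbracket$; by naturality of $\mu$ in its first index the $\mathbf{S}^{\bot \sqsubseteq \ell_1}$ factor becomes $\mathbf{S}^{\ell_2 \sqsubseteq \ell_2} = \text{id}$, so $\mu^{\ell_1,\ell_2} \circ \overline{\eta}^{\ell_1}_{\mathbf{S}_{\ell_2}\llbracket A \rrbracket}$ collapses to $\mu^{\bot,\ell_2} \circ \eta$, which is $\text{id}$ by the left unit law. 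For \rref{Prot-Already} ($\ell \sqsubseteq \mathcal{T}_{\ell'} A$ with $\ell \sqsubseteq A$) the IH gives $k_0 : \mathbf{S}_{\ell}\llbracket A \rrbracket \to \llbracket A \rrbracket$ with $k_0 \circ \overline{\eta}^{\ell} = \text{id}$; take $k = \mathbf{S}_{\ell'}(k_0) \circ \delta^{\ell',\ell}_{\llbracket A \rrbracket} \circ \mu^{\ell,\ell'}_{\llbracket A \rrbracket}$ (using $\ell \vee \ell' = \ell' \vee \ell$). The verification runs: the \rref{Prot-Monad} computation rewrites $\mu^{\ell,\ell'} \circ \overline{\eta}^{\ell}_{\mathbf{S}_{\ell'}\llbracket A \rrbracket}$ as $\mathbf{S}^{\ell' \sqsubseteq \ell \vee \ell'}_{\llbracket A \rrbracket}$; then naturality of $\delta$ in its second index together with the counit law $\delta^{\ell',\bot} = \mathbf{S}_{\ell'}(\eta)$ shows $\delta^{\ell',\ell} \circ \mathbf{S}^{\ell' \sqsubseteq \ell' \vee \ell}_{\llbracket A \rrbracket} = \mathbf{S}_{\ell'}(\overline{\eta}^{\ell}_{\llbracket A \rrbracket})$; and functoriality of $\mathbf{S}_{\ell'}$ plus the IH give $\mathbf{S}_{\ell'}(k_0) \circ \mathbf{S}_{\ell'}(\overline{\eta}^{\ell}) = \mathbf{S}_{\ell'}(\text{id}) = \text{id}$. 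The \rref{Prot-Combine} case ($\ell_1 \sqsubseteq A$, $\ell_2 \sqsubseteq A$, $\ell \sqsubseteq \ell_1 \vee \ell_2$) is handled the same way: set $k = k_1 \circ \mathbf{S}_{\ell_1}(k_2) \circ \delta^{\ell_1,\ell_2}_{\llbracket A \rrbracket} \circ \mathbf{S}^{\ell \sqsubseteq \ell_1 \vee \ell_2}_{\llbracket A \rrbracket}$, use functoriality to turn $\mathbf{S}^{\ell \sqsubseteq \ell_1 \vee \ell_2} \circ \overline{\eta}^{\ell}$ into $\overline{\eta}^{\ell_1 \vee \ell_2}$, then naturality of $\delta$ plus the counit law to turn $\delta^{\ell_1,\ell_2} \circ \overline{\eta}^{\ell_1 \vee \ell_2}$ into $\mathbf{S}_{\ell_1}(\overline{\eta}^{\ell_2}) \circ \overline{\eta}^{\ell_1}$, and finally the two IHs.

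I expect the main obstacle to be the coherence bookkeeping in \rref{Prot-Already} and \rref{Prot-Combine}: one must push the order-witnesses $\mathbf{S}^{-\sqsubseteq -}$ through the naturality squares for $\mu$ and $\delta$ in precisely the right index, and recognise $\delta^{\ell',\ell} \circ \mathbf{S}^{\ell' \sqsubseteq \ell' \vee \ell} = \mathbf{S}_{\ell'}(\overline{\eta}^{\ell})$ as an instance of the counit law followed by naturality. The \rref{Prot-Fun} case is the other delicate point, as it is the only one that genuinely invokes the strength axioms of Figure~\ref{strong} and the strength of $\overline{\eta}^{\ell}$; I would discharge it by importing the relevant sub-diagram from the proof of Theorem~\ref{GMCPrf} verbatim.
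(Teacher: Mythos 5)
Your proposal is correct and follows essentially the same route as the paper's proof: the same induction on the protection derivation, the same witness in every case (including $\mu^{\ell_1,\ell_2}$ for \rref{Prot-Monad}, $\mathbf{S}_{\ell'}(k_0)\circ\delta^{\ell',\ell}\circ\mu^{\ell,\ell'}$ for \rref{Prot-Already}, and the curried $k_B\circ\mathbf{S}_\ell(\text{app})\circ t^{\mathbf{S}_\ell}$ map for \rref{Prot-Fun}), and the same verification via naturality of $\mu$/$\delta$ in the appropriate index together with the (co)unit laws. Your streamlined treatment of \rref{Prot-Monad} via naturality of $\mu$ in its first index is equivalent to the paper's computation, which reaches the same cancellation through an extra $\delta\circ\mu=\text{id}$ step.
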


\begin{proof}
By induction on $ \ell  \sqsubseteq  \ottnt{A} $.
\begin{itemize}

\item \Rref{Prot-Prod}. Have: $ \ell  \sqsubseteq   \ottnt{A}  \times  \ottnt{B}  $ where $ \ell  \sqsubseteq  \ottnt{A} $ and $ \ell  \sqsubseteq  \ottnt{B} $.\\
By IH, $\exists  k^{ \ell }_{  \llbracket  \ottnt{A}  \rrbracket  }  \in \text{Hom}_{\Ct} ( \mathbf{S}_{  \ell  }   \llbracket  \ottnt{A}  \rrbracket  ,  \llbracket  \ottnt{A}  \rrbracket )$ and $ k^{ \ell }_{  \llbracket  \ottnt{B}  \rrbracket  }  \in \text{Hom}_{\Ct} ( \mathbf{S}_{  \ell  }   \llbracket  \ottnt{B}  \rrbracket  ,  \llbracket  \ottnt{B}  \rrbracket )$ such that\\
$   k^{ \ell }_{  \llbracket  \ottnt{A}  \rrbracket  }   \circ   \mathbf{S}^{   \bot    \sqsubseteq   \ell  }_{  \llbracket  \ottnt{A}  \rrbracket  }    \circ   \eta_{  \llbracket  \ottnt{A}  \rrbracket  }   =  \text{id}_{  \llbracket  \ottnt{A}  \rrbracket  } $ and $   k^{ \ell }_{  \llbracket  \ottnt{B}  \rrbracket  }   \circ   \mathbf{S}^{   \bot    \sqsubseteq   \ell  }_{  \llbracket  \ottnt{B}  \rrbracket  }    \circ   \eta_{  \llbracket  \ottnt{B}  \rrbracket  }   =  \text{id}_{  \llbracket  \ottnt{B}  \rrbracket  } $.\\
Define: $ k^{ \ell }_{    \llbracket  \ottnt{A}  \rrbracket   \times   \llbracket  \ottnt{B}  \rrbracket    }  =  \mathbf{S}_{  \ell  }   (    \llbracket  \ottnt{A}  \rrbracket   \times   \llbracket  \ottnt{B}  \rrbracket    )   \xrightarrow{ \langle   \mathbf{S}_{  \ell  }   \pi_1    ,   \mathbf{S}_{  \ell  }   \pi_2    \rangle }   \mathbf{S}_{  \ell  }   \llbracket  \ottnt{A}  \rrbracket    \times   \mathbf{S}_{  \ell  }   \llbracket  \ottnt{B}  \rrbracket    \xrightarrow{  k^{ \ell }_{  \llbracket  \ottnt{A}  \rrbracket  }   \times   k^{ \ell }_{  \llbracket  \ottnt{B}  \rrbracket  }  }   \llbracket  \ottnt{A}  \rrbracket   \times   \llbracket  \ottnt{B}  \rrbracket  $.\\
Need to show: $   k^{ \ell }_{    \llbracket  \ottnt{A}  \rrbracket   \times   \llbracket  \ottnt{B}  \rrbracket    }   \circ   \mathbf{S}^{   \bot    \sqsubseteq   \ell  }_{    \llbracket  \ottnt{A}  \rrbracket   \times   \llbracket  \ottnt{B}  \rrbracket    }    \circ   \eta_{    \llbracket  \ottnt{A}  \rrbracket   \times   \llbracket  \ottnt{B}  \rrbracket    }   =  \text{id}_{    \llbracket  \ottnt{A}  \rrbracket   \times   \llbracket  \ottnt{B}  \rrbracket    } $.\\

This equation follows from the commutative diagram in Figure \ref{prfcd5}. The diagram in Figure \ref{prfcd5} commutes: the triangle commutes by naturality; the square too commutes by naturality; the circular segment commutes by IH.  

\begin{figure}[h]
\begin{tikzcd}[row sep = 3.5 em, column sep = 4 em]
  \llbracket  \ottnt{A}  \rrbracket   \times   \llbracket  \ottnt{B}  \rrbracket   \arrow{r}{ \eta_{    \llbracket  \ottnt{A}  \rrbracket   \times   \llbracket  \ottnt{B}  \rrbracket    } } \arrow{dr}{  \eta_{  \llbracket  \ottnt{A}  \rrbracket  }   \times   \eta_{  \llbracket  \ottnt{B}  \rrbracket  }  } \ar[ddrr," \text{id}_{    \llbracket  \ottnt{A}  \rrbracket   \times   \llbracket  \ottnt{B}  \rrbracket    } ", bend right=30] &  \mathbf{S}_{   \bot   }   (    \llbracket  \ottnt{A}  \rrbracket   \times   \llbracket  \ottnt{B}  \rrbracket    )   \arrow{r}{ \mathbf{S}^{   \bot    \sqsubseteq   \ell  }_{    \llbracket  \ottnt{A}  \rrbracket   \times   \llbracket  \ottnt{B}  \rrbracket    } } \arrow{d}{ \langle   \mathbf{S}_{   \bot   }   \pi_1    ,   \mathbf{S}_{   \bot   }   \pi_2    \rangle } &  \mathbf{S}_{  \ell  }   (    \llbracket  \ottnt{A}  \rrbracket   \times   \llbracket  \ottnt{B}  \rrbracket    )   \arrow{d}{ \langle   \mathbf{S}_{  \ell  }   \pi_1    ,   \mathbf{S}_{  \ell  }   \pi_2    \rangle } \\
&   \mathbf{S}_{   \bot   }   \llbracket  \ottnt{A}  \rrbracket    \times   \mathbf{S}_{   \bot   }   \llbracket  \ottnt{B}  \rrbracket    \arrow{r}{   \mathbf{S}^{   \bot    \sqsubseteq   \ell  }_{  \llbracket  \ottnt{A}  \rrbracket  }    \times   \mathbf{S}^{   \bot    \sqsubseteq   \ell  }_{  \llbracket  \ottnt{B}  \rrbracket  }  } &   \mathbf{S}_{  \ell  }   \llbracket  \ottnt{A}  \rrbracket    \times   \mathbf{S}_{  \ell  }   \llbracket  \ottnt{B}  \rrbracket    \arrow{d}{  k^{ \ell }_{  \llbracket  \ottnt{A}  \rrbracket  }   \times   k^{ \ell }_{  \llbracket  \ottnt{B}  \rrbracket  }  }\\ 
& &   \llbracket  \ottnt{A}  \rrbracket   \times   \llbracket  \ottnt{B}  \rrbracket  
\end{tikzcd}
\caption{Commutative diagram}
\label{prfcd5}
\end{figure}
 
\item \Rref{Prot-Fun}. Have: $ \ell  \sqsubseteq   \ottnt{A}  \to  \ottnt{B}  $ where $ \ell  \sqsubseteq  \ottnt{B} $.\\
By IH, $\exists  k^{ \ell }_{  \llbracket  \ottnt{B}  \rrbracket  }  \in \text{Hom}_{\Ct}( \mathbf{S}_{  \ell  }   \llbracket  \ottnt{B}  \rrbracket  ,  \llbracket  \ottnt{B}  \rrbracket )$ such that $   k^{ \ell }_{  \llbracket  \ottnt{B}  \rrbracket  }   \circ   \mathbf{S}^{   \bot    \sqsubseteq   \ell  }_{  \llbracket  \ottnt{B}  \rrbracket  }    \circ   \eta_{  \llbracket  \ottnt{B}  \rrbracket  }   =  \text{id}_{  \llbracket  \ottnt{B}  \rrbracket  } $.\\
Define: $ k^{ \ell }_{    \llbracket  \ottnt{B}  \rrbracket  ^{  \llbracket  \ottnt{A}  \rrbracket  }   }  = \Lambda \Big(   \mathbf{S}_{  \ell  }     \llbracket  \ottnt{B}  \rrbracket  ^{  \llbracket  \ottnt{A}  \rrbracket  }     \times   \llbracket  \ottnt{A}  \rrbracket   \xrightarrow{ \langle   \pi_2   ,   \pi_1   \rangle }   \llbracket  \ottnt{A}  \rrbracket   \times   \mathbf{S}_{  \ell  }     \llbracket  \ottnt{B}  \rrbracket  ^{  \llbracket  \ottnt{A}  \rrbracket  }     \xrightarrow{ t^{\mathbf{S}_{  \ell  } }_{  \llbracket  \ottnt{A}  \rrbracket  ,     \llbracket  \ottnt{B}  \rrbracket  ^{  \llbracket  \ottnt{A}  \rrbracket  }   } }  \mathbf{S}_{  \ell  }   (    \llbracket  \ottnt{A}  \rrbracket   \times     \llbracket  \ottnt{B}  \rrbracket  ^{  \llbracket  \ottnt{A}  \rrbracket  }     )   \xrightarrow{ \mathbf{S}_{  \ell  }   \langle   \pi_2   ,   \pi_1   \rangle  }  \mathbf{S}_{  \ell  }   (      \llbracket  \ottnt{B}  \rrbracket  ^{  \llbracket  \ottnt{A}  \rrbracket  }    \times   \llbracket  \ottnt{A}  \rrbracket    )   \xrightarrow{ \mathbf{S}_{  \ell  }   \text{app}  }  \mathbf{S}_{  \ell  }   \llbracket  \ottnt{B}  \rrbracket   \xrightarrow{ k^{ \ell }_{  \llbracket  \ottnt{B}  \rrbracket  } }  \llbracket  \ottnt{B}  \rrbracket  \Big)$.\\

Need to show: $   k^{ \ell }_{    \llbracket  \ottnt{B}  \rrbracket  ^{  \llbracket  \ottnt{A}  \rrbracket  }   }   \circ   \mathbf{S}^{   \bot    \sqsubseteq   \ell  }_{    \llbracket  \ottnt{B}  \rrbracket  ^{  \llbracket  \ottnt{A}  \rrbracket  }   }    \circ   \eta_{    \llbracket  \ottnt{B}  \rrbracket  ^{  \llbracket  \ottnt{A}  \rrbracket  }   }   =  \text{id}_{    \llbracket  \ottnt{B}  \rrbracket  ^{  \llbracket  \ottnt{A}  \rrbracket  }   } $.\\

Now, \begin{align*}
&    k^{ \ell }_{    \llbracket  \ottnt{B}  \rrbracket  ^{  \llbracket  \ottnt{A}  \rrbracket  }   }   \circ   \mathbf{S}^{   \bot    \sqsubseteq   \ell  }_{    \llbracket  \ottnt{B}  \rrbracket  ^{  \llbracket  \ottnt{A}  \rrbracket  }   }    \circ   \eta_{    \llbracket  \ottnt{B}  \rrbracket  ^{  \llbracket  \ottnt{A}  \rrbracket  }   }  \\
= &    \Lambda \Big(        k^{ \ell }_{  \llbracket  \ottnt{B}  \rrbracket  }   \circ   \mathbf{S}_{  \ell  }   \text{app}     \circ   \mathbf{S}_{  \ell  }   \langle   \pi_2   ,   \pi_1   \rangle     \circ   t^{\mathbf{S}_{  \ell  } }_{  \llbracket  \ottnt{A}  \rrbracket  ,     \llbracket  \ottnt{B}  \rrbracket  ^{  \llbracket  \ottnt{A}  \rrbracket  }   }    \circ   \langle   \pi_2   ,   \pi_1   \rangle     \Big)   \circ   \mathbf{S}^{   \bot    \sqsubseteq   \ell  }_{    \llbracket  \ottnt{B}  \rrbracket  ^{  \llbracket  \ottnt{A}  \rrbracket  }   }    \circ   \eta_{    \llbracket  \ottnt{B}  \rrbracket  ^{  \llbracket  \ottnt{A}  \rrbracket  }   }   \\
= &  \Lambda \Big(          k^{ \ell }_{  \llbracket  \ottnt{B}  \rrbracket  }   \circ   \mathbf{S}_{  \ell  }   \text{app}     \circ   \mathbf{S}_{  \ell  }   \langle   \pi_2   ,   \pi_1   \rangle     \circ   t^{\mathbf{S}_{  \ell  } }_{  \llbracket  \ottnt{A}  \rrbracket  ,     \llbracket  \ottnt{B}  \rrbracket  ^{  \llbracket  \ottnt{A}  \rrbracket  }   }    \circ   \langle   \pi_2   ,   \pi_1   \rangle    \circ   (    \mathbf{S}^{   \bot    \sqsubseteq   \ell  }_{    \llbracket  \ottnt{B}  \rrbracket  ^{  \llbracket  \ottnt{A}  \rrbracket  }   }   \circ   \eta_{    \llbracket  \ottnt{B}  \rrbracket  ^{  \llbracket  \ottnt{A}  \rrbracket  }   }    )    \times   \text{id}_{  \llbracket  \ottnt{A}  \rrbracket  }     \Big)  \\
= &  \Lambda \Big(         k^{ \ell }_{  \llbracket  \ottnt{B}  \rrbracket  }   \circ   \mathbf{S}^{   \bot    \sqsubseteq   \ell  }_{  \llbracket  \ottnt{B}  \rrbracket  }    \circ   \eta_{  \llbracket  \ottnt{B}  \rrbracket  }    \circ   \text{app}    \circ   \langle   \pi_2   ,   \pi_1   \rangle    \circ   \langle   \pi_2   ,   \pi_1   \rangle     \Big)  \hspace*{3pt} [\text{By Figure \ref{prfcd7}}] \\
= &  \Lambda   (   \text{app}   )   \hspace*{3pt} [\text{By IH}]\\
= &  \text{id}_{    \llbracket  \ottnt{B}  \rrbracket  ^{  \llbracket  \ottnt{A}  \rrbracket  }   } 
\end{align*} 

Note that in Figure \ref{prfcd7}, $ \overline{\eta}_{ \ottnt{X} }  :=   \mathbf{S}^{   \bot    \sqsubseteq   \ell  }_{ \ottnt{X} }   \circ   \eta_{ \ottnt{X} }  $. The diagram in this figure commutes by naturality. 

\begin{figure}
\hspace*{-0.5cm}
\begin{tikzcd}[row sep = 3 em]
    \llbracket  \ottnt{B}  \rrbracket  ^{  \llbracket  \ottnt{A}  \rrbracket  }    \times   \llbracket  \ottnt{A}  \rrbracket   \arrow{d}{  \overline{\eta}_{    \llbracket  \ottnt{B}  \rrbracket  ^{  \llbracket  \ottnt{A}  \rrbracket  }   }   \times   \text{id}  } \arrow{r}{ \langle   \pi_2   ,   \pi_1   \rangle } &   \llbracket  \ottnt{A}  \rrbracket   \times     \llbracket  \ottnt{B}  \rrbracket  ^{  \llbracket  \ottnt{A}  \rrbracket  }    \arrow{d}{  \text{id}_{  \llbracket  \ottnt{A}  \rrbracket  }   \times   \overline{\eta}_{    \llbracket  \ottnt{B}  \rrbracket  ^{  \llbracket  \ottnt{A}  \rrbracket  }   }  } \arrow{r}{ \text{id} } &   \llbracket  \ottnt{A}  \rrbracket   \times     \llbracket  \ottnt{B}  \rrbracket  ^{  \llbracket  \ottnt{A}  \rrbracket  }    \arrow{d}{ \overline{\eta}_{    \llbracket  \ottnt{A}  \rrbracket   \times     \llbracket  \ottnt{B}  \rrbracket  ^{  \llbracket  \ottnt{A}  \rrbracket  }     } } \arrow{r}{ \langle   \pi_2   ,   \pi_1   \rangle } &     \llbracket  \ottnt{B}  \rrbracket  ^{  \llbracket  \ottnt{A}  \rrbracket  }    \times   \llbracket  \ottnt{A}  \rrbracket   \arrow{d}{ \overline{\eta}_{      \llbracket  \ottnt{B}  \rrbracket  ^{  \llbracket  \ottnt{A}  \rrbracket  }    \times   \llbracket  \ottnt{A}  \rrbracket    } } \arrow{r}{ \text{app} } &  \llbracket  \ottnt{B}  \rrbracket  \arrow{d}{ \overline{\eta}_{  \llbracket  \ottnt{B}  \rrbracket  } } \\
  \mathbf{S}_{  \ell  }     \llbracket  \ottnt{B}  \rrbracket  ^{  \llbracket  \ottnt{A}  \rrbracket  }     \times   \llbracket  \ottnt{A}  \rrbracket   \arrow{r}{ \langle   \pi_2   ,   \pi_1   \rangle } &   \llbracket  \ottnt{A}  \rrbracket   \times   \mathbf{S}_{  \ell  }     \llbracket  \ottnt{B}  \rrbracket  ^{  \llbracket  \ottnt{A}  \rrbracket  }     \arrow{r}{ t^{\mathbf{S}_{  \ell  } }_{  \llbracket  \ottnt{A}  \rrbracket  ,     \llbracket  \ottnt{B}  \rrbracket  ^{  \llbracket  \ottnt{A}  \rrbracket  }   } } &  \mathbf{S}_{  \ell  }   (    \llbracket  \ottnt{A}  \rrbracket   \times     \llbracket  \ottnt{B}  \rrbracket  ^{  \llbracket  \ottnt{A}  \rrbracket  }     )   \arrow{r}{ \mathbf{S}_{  \ell  }   \langle   \pi_2   ,   \pi_1   \rangle  } &  \mathbf{S}_{  \ell  }   (      \llbracket  \ottnt{B}  \rrbracket  ^{  \llbracket  \ottnt{A}  \rrbracket  }    \times   \llbracket  \ottnt{A}  \rrbracket    )   \arrow{r}{ \mathbf{S}_{  \ell  }   \text{app}  } &  \mathbf{S}_{  \ell  }   \llbracket  \ottnt{B}  \rrbracket  
\end{tikzcd}
\caption{Commutative diagram}
\label{prfcd7}
\end{figure}

\item \Rref{Prot-Monad}. Have $ \ell_{{\mathrm{1}}}  \sqsubseteq   \mathcal{T}_{ \ell_{{\mathrm{2}}} } \:  \ottnt{A}  $ where $ \ell_{{\mathrm{1}}}  \sqsubseteq  \ell_{{\mathrm{2}}} $.\\
Define: $ k^{ \ell_{{\mathrm{1}}} }_{   \mathbf{S}_{  \ell_{{\mathrm{2}}}  }   \llbracket  \ottnt{A}  \rrbracket    }  =  \mu^{  \ell_{{\mathrm{1}}}  ,  \ell_{{\mathrm{2}}}  }_{  \llbracket  \ottnt{A}  \rrbracket  } $.\\
Need to show: $   k^{ \ell_{{\mathrm{1}}} }_{   \mathbf{S}_{  \ell_{{\mathrm{2}}}  }   \llbracket  \ottnt{A}  \rrbracket    }   \circ   \mathbf{S}^{   \bot    \sqsubseteq   \ell_{{\mathrm{1}}}  }_{   \mathbf{S}_{  \ell_{{\mathrm{2}}}  }   \llbracket  \ottnt{A}  \rrbracket    }    \circ   \eta_{   \mathbf{S}_{  \ell_{{\mathrm{2}}}  }   \llbracket  \ottnt{A}  \rrbracket    }   =  \text{id}_{   \mathbf{S}_{  \ell_{{\mathrm{2}}}  }   \llbracket  \ottnt{A}  \rrbracket    } $.\\
Now,
\begin{align*}
&    k^{ \ell_{{\mathrm{1}}} }_{   \mathbf{S}_{  \ell_{{\mathrm{2}}}  }   \llbracket  \ottnt{A}  \rrbracket    }   \circ   \mathbf{S}^{   \bot    \sqsubseteq   \ell_{{\mathrm{1}}}  }_{   \mathbf{S}_{  \ell_{{\mathrm{2}}}  }   \llbracket  \ottnt{A}  \rrbracket    }    \circ   \eta_{   \mathbf{S}_{  \ell_{{\mathrm{2}}}  }   \llbracket  \ottnt{A}  \rrbracket    }   \\
= &    \mu^{  \ell_{{\mathrm{1}}}  ,  \ell_{{\mathrm{2}}}  }_{  \llbracket  \ottnt{A}  \rrbracket  }   \circ   \mathbf{S}^{   \bot    \sqsubseteq   \ell_{{\mathrm{1}}}  }_{   \mathbf{S}_{  \ell_{{\mathrm{2}}}  }   \llbracket  \ottnt{A}  \rrbracket    }    \circ   \eta_{   \mathbf{S}_{  \ell_{{\mathrm{2}}}  }   \llbracket  \ottnt{A}  \rrbracket    }   \\
= &      \mu^{  \ell_{{\mathrm{1}}}  ,  \ell_{{\mathrm{2}}}  }_{  \llbracket  \ottnt{A}  \rrbracket  }   \circ   \mathbf{S}^{   \bot    \sqsubseteq   \ell_{{\mathrm{1}}}  }_{   \mathbf{S}_{  \ell_{{\mathrm{2}}}  }   \llbracket  \ottnt{A}  \rrbracket    }    \circ   \delta^{   \bot   ,  \ell_{{\mathrm{2}}}  }_{  \llbracket  \ottnt{A}  \rrbracket  }    \circ   \mu^{   \bot   ,  \ell_{{\mathrm{2}}}  }_{  \llbracket  \ottnt{A}  \rrbracket  }    \circ   \eta_{   \mathbf{S}_{  \ell_{{\mathrm{2}}}  }   \llbracket  \ottnt{A}  \rrbracket    }   \\
= &     \mu^{  \ell_{{\mathrm{1}}}  ,  \ell_{{\mathrm{2}}}  }_{  \llbracket  \ottnt{A}  \rrbracket  }   \circ   \mathbf{S}^{   \bot    \sqsubseteq   \ell_{{\mathrm{1}}}  }_{   \mathbf{S}_{  \ell_{{\mathrm{2}}}  }   \llbracket  \ottnt{A}  \rrbracket    }    \circ   \delta^{   \bot   ,  \ell_{{\mathrm{2}}}  }_{  \llbracket  \ottnt{A}  \rrbracket  }    \circ   \text{id}_{   \mathbf{S}_{  \ell_{{\mathrm{2}}}  }   \llbracket  \ottnt{A}  \rrbracket    }   \hspace*{3pt} [\text{By lax monoidality}]\\
= &    \mu^{  \ell_{{\mathrm{1}}}  ,  \ell_{{\mathrm{2}}}  }_{  \llbracket  \ottnt{A}  \rrbracket  }   \circ   \delta^{  \ell_{{\mathrm{1}}}  ,  \ell_{{\mathrm{2}}}  }_{  \llbracket  \ottnt{A}  \rrbracket  }    \circ   \mathbf{S}^{  \ell_{{\mathrm{2}}}   \sqsubseteq    \ell_{{\mathrm{1}}}  \vee  \ell_{{\mathrm{2}}}   }_{  \llbracket  \ottnt{A}  \rrbracket  }   \hspace*{3pt} [\text{By naturality}] \\
= &  \mathbf{S}^{  \ell_{{\mathrm{2}}}   \sqsubseteq    \ell_{{\mathrm{1}}}  \vee  \ell_{{\mathrm{2}}}   }_{  \llbracket  \ottnt{A}  \rrbracket  }  =  \mathbf{S}^{  \ell_{{\mathrm{2}}}   \sqsubseteq   \ell_{{\mathrm{2}}}  }_{  \llbracket  \ottnt{A}  \rrbracket  }  [\because  \ell_{{\mathrm{1}}}  \sqsubseteq  \ell_{{\mathrm{2}}}  ] =  \text{id}_{   \mathbf{S}_{  \ell_{{\mathrm{2}}}  }   \llbracket  \ottnt{A}  \rrbracket    } 
\end{align*}

\item \Rref{Prot-Already}. Have: $ \ell  \sqsubseteq   \mathcal{T}_{ \ell' } \:  \ottnt{A}  $ where $ \ell  \sqsubseteq  \ottnt{A} $.\\
By IH, $\exists  k^{ \ell }_{  \llbracket  \ottnt{A}  \rrbracket  }  \in \text{Hom}_{\Ct} ( \mathbf{S}_{  \ell  }   \llbracket  \ottnt{A}  \rrbracket   ,  \llbracket  \ottnt{A}  \rrbracket )$ such that 
$   k^{ \ell }_{  \llbracket  \ottnt{A}  \rrbracket  }   \circ   \mathbf{S}^{   \bot    \sqsubseteq   \ell  }_{  \llbracket  \ottnt{A}  \rrbracket  }    \circ   \eta_{  \llbracket  \ottnt{A}  \rrbracket  }   =  \text{id}_{  \llbracket  \ottnt{A}  \rrbracket  } $. \\
Define: $ k^{ \ell }_{   \mathbf{S}_{  \ell'  }   \llbracket  \ottnt{A}  \rrbracket    }  =  \mathbf{S}_{  \ell  }   \mathbf{S}_{  \ell'  }   \llbracket  \ottnt{A}  \rrbracket    \xrightarrow{ \mu^{  \ell  ,  \ell'  }_{  \llbracket  \ottnt{A}  \rrbracket  } }  \mathbf{S}_{   \ell  \vee  \ell'   }   \llbracket  \ottnt{A}  \rrbracket   =  \mathbf{S}_{   \ell'  \vee  \ell   }   \llbracket  \ottnt{A}  \rrbracket   \xrightarrow{ \delta^{  \ell'  ,  \ell  }_{  \llbracket  \ottnt{A}  \rrbracket  } }  \mathbf{S}_{  \ell'  }   \mathbf{S}_{  \ell  }   \llbracket  \ottnt{A}  \rrbracket    \xrightarrow{ \mathbf{S}_{  \ell'  }   k^{ \ell }_{  \llbracket  \ottnt{A}  \rrbracket  }  }  \mathbf{S}_{  \ell'  }   \llbracket  \ottnt{A}  \rrbracket  $.\\
Need to show: $   k^{ \ell }_{  \mathbf{S}_{  \ell'  }   \llbracket  \ottnt{A}  \rrbracket   }   \circ   \mathbf{S}^{   \bot    \sqsubseteq   \ell  }_{   \mathbf{S}_{  \ell'  }   \llbracket  \ottnt{A}  \rrbracket    }    \circ   \eta_{  \mathbf{S}_{  \ell'  }   \llbracket  \ottnt{A}  \rrbracket   }   =  \text{id}_{  \mathbf{S}_{  \ell'  }   \llbracket  \ottnt{A}  \rrbracket   } $.\\

This equation follows from the commutative diagram in Figure \ref{prfcd4}. The diagram in Figure \ref{prfcd4} commutes: the left triangle in the top row commutes because $\mathbf{S}$ is a lax monoidal functor; the circular segment in the top row commutes because $\mathbf{S}$ is an oplax monoidal functor; the left square in the bottom row commutes because $\mu$ is natural in its first component; the right square in the bottom row commutes because $\delta$ is natural in its second component; the triangle to the right in the bottom row commutes by IH. 

\begin{figure}
\begin{tikzcd}[row sep = 3.5 em, column sep = 3.5 em]
&  \mathbf{S}_{  \ell'  }   \llbracket  \ottnt{A}  \rrbracket   \arrow{dl}[above left]{ \eta_{  \mathbf{S}_{  \ell'  }   \llbracket  \ottnt{A}  \rrbracket   } } \arrow{d}{ \text{id}_{  \mathbf{S}_{  \ell'  }   \llbracket  \ottnt{A}  \rrbracket   } } & & \\
 \mathbf{S}_{   \bot   }   \mathbf{S}_{  \ell'  }   \llbracket  \ottnt{A}  \rrbracket    \arrow{r}{ \mu^{   \bot   ,  \ell'  }_{  \llbracket  \ottnt{A}  \rrbracket  } } \arrow{d}{ \mathbf{S}^{   \bot    \sqsubseteq   \ell  }_{   \mathbf{S}_{  \ell'  }   \llbracket  \ottnt{A}  \rrbracket    } } &  \mathbf{S}_{  \ell'  }   \llbracket  \ottnt{A}  \rrbracket   \arrow{r}{ \delta^{  \ell'  ,   \bot   }_{  \llbracket  \ottnt{A}  \rrbracket  } } \arrow{d}{ \mathbf{S}^{  \ell'   \sqsubseteq    \ell  \vee  \ell'   }_{  \llbracket  \ottnt{A}  \rrbracket  } } \ar[rr," \text{id}_{  \mathbf{S}_{  \ell'  }   \llbracket  \ottnt{A}  \rrbracket   } ",bend left=35] &  \mathbf{S}_{  \ell'  }   \mathbf{S}_{   \bot   }   \llbracket  \ottnt{A}  \rrbracket    \arrow{r}{ \mathbf{S}_{  \ell'  }   \epsilon_{  \llbracket  \ottnt{A}  \rrbracket  }  } \arrow{d}{ \mathbf{S}_{  \ell'  }   \mathbf{S}^{   \bot    \sqsubseteq   \ell  }_{  \llbracket  \ottnt{A}  \rrbracket  }  } &  \mathbf{S}_{  \ell'  }   \llbracket  \ottnt{A}  \rrbracket   \\
 \mathbf{S}_{  \ell  }   \mathbf{S}_{  \ell'  }   \llbracket  \ottnt{A}  \rrbracket    \arrow{r}{ \mu^{  \ell  ,  \ell'  }_{  \llbracket  \ottnt{A}  \rrbracket  } } &  \mathbf{S}_{   \ell  \vee  \ell'   }   \llbracket  \ottnt{A}  \rrbracket   \arrow{r}{ \delta^{  \ell'  ,  \ell  }_{  \llbracket  \ottnt{A}  \rrbracket  } } &  \mathbf{S}_{  \ell'  }   \mathbf{S}_{  \ell  }   \llbracket  \ottnt{A}  \rrbracket    \arrow{ur}[below right]{ \mathbf{S}_{  \ell'  }   k^{ \ell }_{  \llbracket  \ottnt{A}  \rrbracket  }  } &
\end{tikzcd}
\caption{Commutative diagram}
\label{prfcd4}
\end{figure}
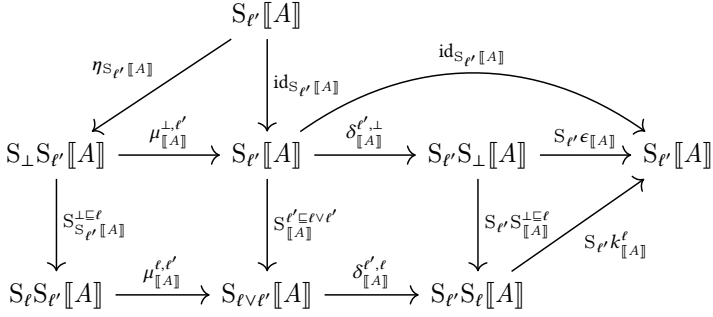

\item \Rref{Prot-Minimum}. Have: $  \bot   \sqsubseteq  \ottnt{A} $.\\
Define: $ k^{  \bot  }_{  \llbracket  \ottnt{A}  \rrbracket  }  =  \epsilon_{  \llbracket  \ottnt{A}  \rrbracket  } $. Note, $   \epsilon_{  \llbracket  \ottnt{A}  \rrbracket  }   \circ   \mathbf{S}^{   \bot    \sqsubseteq    \bot   }_{  \llbracket  \ottnt{A}  \rrbracket  }    \circ   \eta_{  \llbracket  \ottnt{A}  \rrbracket  }   =  \text{id}_{  \llbracket  \ottnt{A}  \rrbracket  } $.

\item \Rref{Prot-Combine}. Have: $ \ell  \sqsubseteq  \ottnt{A} $ where $ \ell_{{\mathrm{1}}}  \sqsubseteq  \ottnt{A} $ and $ \ell_{{\mathrm{2}}}  \sqsubseteq  \ottnt{A} $ and $ \ell  \sqsubseteq   \ell_{{\mathrm{1}}}  \vee  \ell_{{\mathrm{2}}}  $.\\
By IH, $\exists  k^{ \ell_{{\mathrm{1}}} }_{  \llbracket  \ottnt{A}  \rrbracket  }  \in \text{Hom}_{\Ct} ( \mathbf{S}_{  \ell_{{\mathrm{1}}}  }   \llbracket  \ottnt{A}  \rrbracket  ,  \llbracket  \ottnt{A}  \rrbracket )$ and $ k^{ \ell_{{\mathrm{2}}} }_{  \llbracket  \ottnt{A}  \rrbracket  }  \in \text{Hom}_{\Ct} ( \mathbf{S}_{  \ell_{{\mathrm{2}}}  }   \llbracket  \ottnt{A}  \rrbracket  ,  \llbracket  \ottnt{A}  \rrbracket )$ such that:
\begin{align*}
   k^{ \ell_{{\mathrm{1}}} }_{  \llbracket  \ottnt{A}  \rrbracket  }   \circ   \mathbf{S}^{   \bot    \sqsubseteq   \ell_{{\mathrm{1}}}  }_{  \llbracket  \ottnt{A}  \rrbracket  }    \circ   \eta_{  \llbracket  \ottnt{A}  \rrbracket  }   & =  \text{id}_{  \llbracket  \ottnt{A}  \rrbracket  }  \\
   k^{ \ell_{{\mathrm{2}}} }_{  \llbracket  \ottnt{A}  \rrbracket  }   \circ   \mathbf{S}^{   \bot    \sqsubseteq   \ell_{{\mathrm{2}}}  }_{  \llbracket  \ottnt{A}  \rrbracket  }    \circ   \eta_{  \llbracket  \ottnt{A}  \rrbracket  }   & =  \text{id}_{  \llbracket  \ottnt{A}  \rrbracket  } . 
\end{align*}
Define: $ k^{ \ell }_{  \llbracket  \ottnt{A}  \rrbracket  }  =  \mathbf{S}_{  \ell  }   \llbracket  \ottnt{A}  \rrbracket   \xrightarrow{ \mathbf{S}^{  \ell   \sqsubseteq    \ell_{{\mathrm{1}}}  \vee  \ell_{{\mathrm{2}}}   }_{  \llbracket  \ottnt{A}  \rrbracket  } }  \mathbf{S}_{   \ell_{{\mathrm{1}}}  \vee  \ell_{{\mathrm{2}}}   }   \llbracket  \ottnt{A}  \rrbracket   \xrightarrow{ \delta^{  \ell_{{\mathrm{1}}}  ,  \ell_{{\mathrm{2}}}  }_{  \llbracket  \ottnt{A}  \rrbracket  } }  \mathbf{S}_{  \ell_{{\mathrm{1}}}  }   \mathbf{S}_{  \ell_{{\mathrm{2}}}  }   \llbracket  \ottnt{A}  \rrbracket    \xrightarrow{ \mathbf{S}_{  \ell_{{\mathrm{1}}}  }   k^{ \ell_{{\mathrm{2}}} }_{  \llbracket  \ottnt{A}  \rrbracket  }  }  \mathbf{S}_{  \ell_{{\mathrm{1}}}  }   \llbracket  \ottnt{A}  \rrbracket   \xrightarrow{ k^{ \ell_{{\mathrm{1}}} }_{  \llbracket  \ottnt{A}  \rrbracket  } }  \llbracket  \ottnt{A}  \rrbracket $.\\
Now,
\begin{align*}
&    k^{ \ell }_{  \llbracket  \ottnt{A}  \rrbracket  }   \circ   \mathbf{S}^{   \bot    \sqsubseteq   \ell  }_{  \llbracket  \ottnt{A}  \rrbracket  }    \circ   \eta_{  \llbracket  \ottnt{A}  \rrbracket  }   \\
= &       k^{ \ell_{{\mathrm{1}}} }_{  \llbracket  \ottnt{A}  \rrbracket  }   \circ   \mathbf{S}_{  \ell_{{\mathrm{1}}}  }   k^{ \ell_{{\mathrm{2}}} }_{  \llbracket  \ottnt{A}  \rrbracket  }     \circ   \delta^{  \ell_{{\mathrm{1}}}  ,  \ell_{{\mathrm{2}}}  }_{  \llbracket  \ottnt{A}  \rrbracket  }    \circ   \mathbf{S}^{  \ell   \sqsubseteq    \ell_{{\mathrm{1}}}  \vee  \ell_{{\mathrm{2}}}   }_{  \llbracket  \ottnt{A}  \rrbracket  }    \circ   \mathbf{S}^{   \bot    \sqsubseteq   \ell  }_{  \llbracket  \ottnt{A}  \rrbracket  }    \circ   \eta_{  \llbracket  \ottnt{A}  \rrbracket  }   \\
= &      k^{ \ell_{{\mathrm{1}}} }_{  \llbracket  \ottnt{A}  \rrbracket  }   \circ   \mathbf{S}_{  \ell_{{\mathrm{1}}}  }   k^{ \ell_{{\mathrm{2}}} }_{  \llbracket  \ottnt{A}  \rrbracket  }     \circ   \delta^{  \ell_{{\mathrm{1}}}  ,  \ell_{{\mathrm{2}}}  }_{  \llbracket  \ottnt{A}  \rrbracket  }    \circ   \mathbf{S}^{   \bot    \sqsubseteq    \ell_{{\mathrm{1}}}  \vee  \ell_{{\mathrm{2}}}   }_{  \llbracket  \ottnt{A}  \rrbracket  }    \circ   \eta_{  \llbracket  \ottnt{A}  \rrbracket  }   \hspace*{3pt} [\because \mathbf{S} \text{ is a functor}] \\
= &       k^{ \ell_{{\mathrm{1}}} }_{  \llbracket  \ottnt{A}  \rrbracket  }   \circ   \mathbf{S}_{  \ell_{{\mathrm{1}}}  }   k^{ \ell_{{\mathrm{2}}} }_{  \llbracket  \ottnt{A}  \rrbracket  }     \circ   \delta^{  \ell_{{\mathrm{1}}}  ,  \ell_{{\mathrm{2}}}  }_{  \llbracket  \ottnt{A}  \rrbracket  }    \circ   \mathbf{S}^{  \ell_{{\mathrm{1}}}   \sqsubseteq    \ell_{{\mathrm{1}}}  \vee  \ell_{{\mathrm{2}}}   }_{  \llbracket  \ottnt{A}  \rrbracket  }    \circ   \mathbf{S}^{   \bot    \sqsubseteq   \ell_{{\mathrm{1}}}  }_{  \llbracket  \ottnt{A}  \rrbracket  }    \circ   \eta_{  \llbracket  \ottnt{A}  \rrbracket  }   \hspace*{3pt} [\because \mathbf{S} \text{ is a functor}] \\
= &     k^{ \ell_{{\mathrm{1}}} }_{  \llbracket  \ottnt{A}  \rrbracket  }   \circ   \text{id}_{  \mathbf{S}_{  \ell_{{\mathrm{1}}}  }   \llbracket  \ottnt{A}  \rrbracket   }    \circ   \mathbf{S}^{   \bot    \sqsubseteq   \ell_{{\mathrm{1}}}  }_{  \llbracket  \ottnt{A}  \rrbracket  }    \circ   \eta_{  \llbracket  \ottnt{A}  \rrbracket  }   \hspace*{3pt} [ \text{By commutative diagram in Figure \ref{prfcd6}} ]\\
= &  \text{id}_{  \llbracket  \ottnt{A}  \rrbracket  }  \hspace*{3pt} [ \text{By IH} ]
\end{align*}
 
\begin{figure}
\begin{tikzcd}[row sep = 3.5 em, column sep = 4 em]
 \mathbf{S}_{  \ell_{{\mathrm{1}}}  }   \llbracket  \ottnt{A}  \rrbracket   \arrow{r}{ \delta^{  \ell_{{\mathrm{1}}}  ,   \bot   }_{  \llbracket  \ottnt{A}  \rrbracket  } } \arrow{d}{ \mathbf{S}^{  \ell_{{\mathrm{1}}}   \sqsubseteq    \ell_{{\mathrm{1}}}  \vee  \ell_{{\mathrm{2}}}   }_{  \llbracket  \ottnt{A}  \rrbracket  } } \ar[drr," \text{id}_{  \mathbf{S}_{  \ell_{{\mathrm{1}}}  }   \llbracket  \ottnt{A}  \rrbracket   } ",bend left=120] &  \mathbf{S}_{  \ell_{{\mathrm{1}}}  }   \mathbf{S}_{   \bot   }   \llbracket  \ottnt{A}  \rrbracket    \arrow{d}{ \mathbf{S}_{  \ell_{{\mathrm{1}}}  }   \mathbf{S}^{   \bot    \sqsubseteq   \ell_{{\mathrm{2}}}  }_{  \llbracket  \ottnt{A}  \rrbracket  }  } \arrow{dr}{ \mathbf{S}_{  \ell_{{\mathrm{1}}}  }   \epsilon_{  \llbracket  \ottnt{A}  \rrbracket  }  } & \\
 \mathbf{S}_{   \ell_{{\mathrm{1}}}  \vee  \ell_{{\mathrm{2}}}   }   \llbracket  \ottnt{A}  \rrbracket   \arrow{r}{ \delta^{  \ell_{{\mathrm{1}}}  ,  \ell_{{\mathrm{2}}}  }_{  \llbracket  \ottnt{A}  \rrbracket  } }  &  \mathbf{S}_{  \ell_{{\mathrm{1}}}  }   \mathbf{S}_{  \ell_{{\mathrm{2}}}  }   \llbracket  \ottnt{A}  \rrbracket    \arrow{r}{ \mathbf{S}_{  \ell_{{\mathrm{1}}}  }   k^{ \ell_{{\mathrm{2}}} }_{  \llbracket  \ottnt{A}  \rrbracket  }  } &  \mathbf{S}_{  \ell_{{\mathrm{1}}}  }   \llbracket  \ottnt{A}  \rrbracket   
\end{tikzcd}
\caption{Commutative diagram}
\label{prfcd6}
\end{figure}
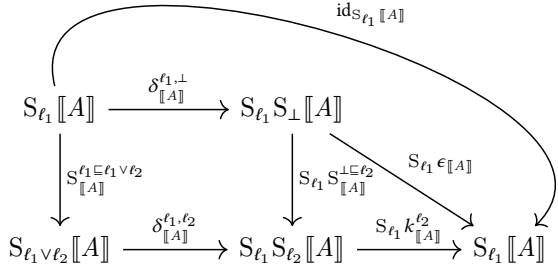

The diagram in Figure \ref{prfcd6} commutes: the circular segment commutes because $\mathbf{S}$ is an oplax monoidal functor; the square commutes because $\delta$ is natural in its second component; the triangle commutes by IH.

\end{itemize}
\end{proof}


\begin{lemma}\label{Ap3}
If $ \ell  \sqsubseteq  \ottnt{A} $ in \ED{}, then $   \mathbf{S}^{   \bot    \sqsubseteq   \ell  }_{  \llbracket  \ottnt{A}  \rrbracket  }   \circ   \eta_{  \llbracket  \ottnt{A}  \rrbracket  }    \circ   k^{ \ell }_{  \llbracket  \ottnt{A}  \rrbracket  }   =  \text{id}_{  \mathbf{S}_{  \ell  }   \llbracket  \ottnt{A}  \rrbracket   } $.
\end{lemma}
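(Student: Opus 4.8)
The plan is to avoid a fresh induction on the protection derivation $\ell \sqsubseteq A$ and instead obtain the identity directly from Lemma~\ref{Ap2} together with the fact, recorded in Section~\ref{modeldcce}, that $(\mathbf{S}_\ell,\overline{\eta},\mu^{\ell,\ell})$ is an \emph{idempotent} monad, where $\overline{\eta} \triangleq \mathbf{S}^{\bot \sqsubseteq \ell} \circ \eta$. The key structural input, beyond Lemma~\ref{Ap2}, is that for an idempotent monad the unit and the multiplication interact in the familiar way: since $\mu^{\ell,\ell}$ is invertible (hence in particular monic) and the monad unit laws give $\mu^{\ell,\ell}_X \circ \mathbf{S}_\ell \overline{\eta}_X = \text{id} = \mu^{\ell,\ell}_X \circ \overline{\eta}_{\mathbf{S}_\ell X}$, we may cancel $\mu^{\ell,\ell}_X$ to conclude $\overline{\eta}_{\mathbf{S}_\ell X} = \mathbf{S}_\ell \overline{\eta}_X$ for every object $X$ of $\Ct$. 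That $\overline{\eta}$ and $\mu^{\ell,\ell}$ do satisfy the monad unit laws is immediate from the unit triangles of Figure~\ref{laxDiag} instantiated at $m = \bot$, using idempotency of $\vee$.

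With this in hand the proof is a three-line calculation. First, Lemma~\ref{Ap2} supplies $k^\ell_{\llbracket A \rrbracket} \circ \overline{\eta}_{\llbracket A \rrbracket} = \text{id}_{\llbracket A \rrbracket}$. Next, naturality of $\overline{\eta} : \mathbf{Id} \Rightarrow \mathbf{S}_\ell$ applied to the morphism $k^\ell_{\llbracket A \rrbracket} : \mathbf{S}_\ell \llbracket A \rrbracket \to \llbracket A \rrbracket$ gives
\[ \overline{\eta}_{\llbracket A \rrbracket} \circ k^\ell_{\llbracket A \rrbracket} = \mathbf{S}_\ell(k^\ell_{\llbracket A \rrbracket}) \circ \overline{\eta}_{\mathbf{S}_\ell \llbracket A \rrbracket}. \]
Finally, substituting $\overline{\eta}_{\mathbf{S}_\ell \llbracket A \rrbracket} = \mathbf{S}_\ell \overline{\eta}_{\llbracket A \rrbracket}$ and using that $\mathbf{S}_\ell$ is a functor, the right-hand side becomes $\mathbf{S}_\ell(k^\ell_{\llbracket A \rrbracket} \circ \overline{\eta}_{\llbracket A \rrbracket}) = \mathbf{S}_\ell(\text{id}_{\llbracket A \rrbracket}) = \text{id}_{\mathbf{S}_\ell \llbracket A \rrbracket}$, which is the claim. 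Note the argument never touches the explicit construction of $k^\ell$ in Lemma~\ref{Ap2}; it works for any right inverse of $\overline{\eta}_{\llbracket A \rrbracket}$, which is reassuring, since an idempotent-monad algebra structure is unique.

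The only point deserving care — and the closest thing to an obstacle — is the bookkeeping that justifies $\overline{\eta}_{\mathbf{S}_\ell X} = \mathbf{S}_\ell \overline{\eta}_X$: one must be precise that $\mathbf{S}^{\bot \sqsubseteq \ell} \circ \eta$ is genuinely the unit of the monad $\mathbf{S}_\ell$ and $\mu^{\ell,\ell}$ its multiplication, in the sense that makes the cancellation above legitimate. This is routine but should be spelled out, since it is the hinge of the whole argument. Should one prefer to keep the proof self-contained and syntactic, there is a fallback: mirror the case analysis in the proof of Lemma~\ref{Ap2}, verifying $\overline{\eta} \circ k^\ell = \text{id}$ in each of the six protection cases (Prod, Fun, Monad, Already, Minimum, Combine) by the diagram chases dual to those already used there. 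But the monad-theoretic route collapses all six cases into the single computation above.
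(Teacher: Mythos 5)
Your proof is correct and is essentially the paper's own argument run in the opposite direction: both rest on exactly the same three ingredients — Lemma~\ref{Ap2}, functoriality of $\mathbf{S}_{\ell}$, the idempotent-monad identity $\mathbf{S}_{\ell}\overline{\eta} = \overline{\eta}_{\mathbf{S}_{\ell}}$, and naturality of $\overline{\eta}$. The only cosmetic difference is that you start from naturality and finish with Lemma~\ref{Ap2}, whereas the paper starts from Lemma~\ref{Ap2} and finishes with naturality; your explicit cancellation of the invertible $\mu^{\ell,\ell}$ to justify $\mathbf{S}_{\ell}\overline{\eta} = \overline{\eta}_{\mathbf{S}_{\ell}}$ is a welcome (and correct) elaboration of a step the paper merely cites.
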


\begin{proof}
Let $ \ell  \sqsubseteq  \ottnt{A} $. By Lemma \ref{Ap2}, we know that $\exists  k^{ \ell }_{  \llbracket  \ottnt{A}  \rrbracket  } $ such that $   k^{ \ell }_{  \llbracket  \ottnt{A}  \rrbracket  }   \circ   \mathbf{S}^{   \bot    \sqsubseteq   \ell  }_{  \llbracket  \ottnt{A}  \rrbracket  }    \circ   \eta_{  \llbracket  \ottnt{A}  \rrbracket  }   =  \text{id}_{  \llbracket  \ottnt{A}  \rrbracket  } $. 

Now,
\begin{align*}
   k^{ \ell }_{  \llbracket  \ottnt{A}  \rrbracket  }   \circ   \mathbf{S}^{   \bot    \sqsubseteq   \ell  }_{  \llbracket  \ottnt{A}  \rrbracket  }    \circ   \eta_{  \llbracket  \ottnt{A}  \rrbracket  }   & =  \text{id}_{  \llbracket  \ottnt{A}  \rrbracket  }  \\
\text{or, }    \mathbf{S}_{  \ell  }   k^{ \ell }_{  \llbracket  \ottnt{A}  \rrbracket  }    \circ   \mathbf{S}_{  \ell  }   \mathbf{S}^{   \bot    \sqsubseteq   \ell  }_{  \llbracket  \ottnt{A}  \rrbracket  }     \circ   \mathbf{S}_{  \ell  }   \eta_{  \llbracket  \ottnt{A}  \rrbracket  }    & =  \mathbf{S}_{  \ell  }   \text{id}_{  \llbracket  \ottnt{A}  \rrbracket  }   \hspace*{3pt} [\because \mathbf{S}_{\ell} \text{ is a functor}]\\
\text{or, }   \mathbf{S}_{  \ell  }   k^{ \ell }_{  \llbracket  \ottnt{A}  \rrbracket  }    \circ   \mathbf{S}^{   \bot    \sqsubseteq   \ell  }_{  \mathbf{S}_{  \ell  }   \llbracket  \ottnt{A}  \rrbracket   }    \circ   \eta_{  \mathbf{S}_{  \ell  }   \llbracket  \ottnt{A}  \rrbracket   }   & =  \text{id}_{  \mathbf{S}_{  \ell  }   \llbracket  \ottnt{A}  \rrbracket   }  \hspace*{3pt} [\because (\mathbf{S}_{\ell},\mathbf{S}^{  \bot   \sqsubseteq  \ell } \circ \eta, \mu^{\ell,\ell}) \text{ is an idempotent monad}]\\
\text{or, }   \mathbf{S}^{   \bot    \sqsubseteq   \ell  }_{  \llbracket  \ottnt{A}  \rrbracket  }   \circ   \eta_{  \llbracket  \ottnt{A}  \rrbracket  }    \circ   k^{ \ell }_{  \llbracket  \ottnt{A}  \rrbracket  }   & =  \text{id}_{  \mathbf{S}_{  \ell  }   \llbracket  \ottnt{A}  \rrbracket   }  \hspace*{3pt} [\text{By naturality of } \mathbf{S}^{  \bot   \sqsubseteq  \ell } \circ \eta]
\end{align*}
This shows that $ \mathbf{S}_{  \ell  }   \llbracket  \ottnt{A}  \rrbracket   \cong  \llbracket  \ottnt{A}  \rrbracket $.\\
\end{proof}


\begin{lemma}[Lemma \ref{lemmak}]
If $ \ell  \sqsubseteq  \ottnt{A} $, then $\exists$ an isomorphism $k : \mathbf{S}_{\ell}  \llbracket  \ottnt{A}  \rrbracket  \to  \llbracket  \ottnt{A}  \rrbracket $. \\ Further, $k \circ \overline{\eta} = id_{ \llbracket  \ottnt{A}  \rrbracket }$ and $\overline{\eta} \circ k =  \text{id}_{  \mathbf{S}_{  \ell  }   \llbracket  \ottnt{A}  \rrbracket   } $ where $\overline{\eta} \triangleq \mathbf{S}^{  \bot   \sqsubseteq  \ell } \circ \eta :  \llbracket  \ottnt{A}  \rrbracket  \to \mathbf{S}_{\ell}  \llbracket  \ottnt{A}  \rrbracket $.
\end{lemma}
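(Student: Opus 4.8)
The plan is to take $k$ to be the morphism $k^{\ell}_{\llbracket \ottnt{A} \rrbracket}$ whose existence is already recorded in Lemma \ref{Ap2}, and then to observe that Lemmas \ref{Ap2} and \ref{Ap3} together say precisely that $k^{\ell}_{\llbracket \ottnt{A} \rrbracket}$ and $\overline{\eta}_{\llbracket \ottnt{A} \rrbracket} = \mathbf{S}^{\bot \sqsubseteq \ell}_{\llbracket \ottnt{A} \rrbracket} \circ \eta_{\llbracket \ottnt{A} \rrbracket}$ are mutually inverse. Indeed, Lemma \ref{Ap2} gives $k^{\ell}_{\llbracket \ottnt{A} \rrbracket} \circ \overline{\eta}_{\llbracket \ottnt{A} \rrbracket} = \mathrm{id}_{\llbracket \ottnt{A} \rrbracket}$ and Lemma \ref{Ap3} gives $\overline{\eta}_{\llbracket \ottnt{A} \rrbracket} \circ k^{\ell}_{\llbracket \ottnt{A} \rrbracket} = \mathrm{id}_{\mathbf{S}_{\ell}\llbracket \ottnt{A} \rrbracket}$, so setting $k := k^{\ell}_{\llbracket \ottnt{A} \rrbracket}$ yields an isomorphism $\mathbf{S}_{\ell}\llbracket \ottnt{A} \rrbracket \to \llbracket \ottnt{A} \rrbracket$ with the two stated identities holding on the nose. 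At the level of the lemma statement itself there is then nothing left to do, so the real content is in the two auxiliary lemmas.

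The substantive work lies in Lemma \ref{Ap2}, which I would prove by induction on the derivation of $\ell \sqsubseteq \ottnt{A}$, defining $k^{\ell}$ rule by rule: $k^{\bot} = \epsilon$ for \rref{Prot-Minimum}; $k^{\ell_1}_{\mathbf{S}_{\ell_2}\llbracket \ottnt{A} \rrbracket} = \mu^{\ell_1,\ell_2}_{\llbracket \ottnt{A} \rrbracket}$ for \rref{Prot-Monad} (using $\ell_1 \vee \ell_2 = \ell_2$); and in the \rref{Prot-Prod} and \rref{Prot-Fun} cases building $k^{\ell}$ from the inductively given component maps by conjugating with the canonical finite-product and tensorial-strength morphisms and checking the unit identity via a naturality diagram chase. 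The two cases where the simultaneously monadic-and-comonadic structure of $\mathbf{S}$ is genuinely used — and which I expect to be the main obstacle — are \rref{Prot-Already} and \rref{Prot-Combine}: there one first applies monadic join $\mu$ to collapse two modalities to a single grade, reindexes along idempotency of $\vee$, then applies comonadic fork $\delta$ to re-split the grade in the opposite order, and finally applies $\mathbf{S}$ of the inductively given $k$. Discharging the unit identity in those cases requires chasing a diagram assembled from the lax unit triangle, the oplax counit triangle, naturality of $\mu$ in its first argument, naturality of $\delta$ in its second argument, functoriality of each $\mathbf{S}_{\ell}$, and the interaction equations $\epsilon \circ \eta = \mathrm{id}$ and $\mu \circ \delta = \mathrm{id}$ from Section \ref{GMCCModel}.

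Finally, Lemma \ref{Ap3} — the reverse identity — I would obtain purely formally: since $\vee$ is idempotent and $\mu^{\ell,\ell}$ is invertible, $(\mathbf{S}_{\ell},\overline{\eta},\mu^{\ell,\ell})$ is an idempotent monad, so $\overline{\eta}_{\mathbf{S}_{\ell}X}$ is itself an isomorphism (it coincides with $\mathbf{S}_{\ell}\overline{\eta}_{X}$). Applying the functor $\mathbf{S}_{\ell}$ to the identity $k^{\ell}\circ \overline{\eta} = \mathrm{id}$ from Lemma \ref{Ap2} and rewriting via naturality of $\overline{\eta}$ forces $\mathbf{S}_{\ell}(k^{\ell})$ to be invertible, and transporting this back along naturality of $\overline{\eta}$ yields $\overline{\eta}\circ k^{\ell} = \mathrm{id}$. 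Combining the two one-sided identities then proves that $k$ is an isomorphism with inverse $\overline{\eta}_{\llbracket \ottnt{A} \rrbracket}$, which is exactly the claim.
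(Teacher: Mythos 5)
Your proposal matches the paper's proof: the paper also derives the lemma immediately from the two auxiliary lemmas, proves the first by induction on $ \ell  \sqsubseteq  \ottnt{A} $ with exactly the case-by-case constructions you describe (including the $\mu$-then-$\delta$ flip for \rref{Prot-Already}), and obtains the reverse identity from idempotency of the monad $(\mathbf{S}_{\ell},\overline{\eta},\mu^{\ell,\ell})$ plus naturality. The only slight imprecision is your description of \rref{Prot-Combine}, which in the paper uses the coercion $\mathbf{S}^{ \ell  \sqsubseteq   \ell_{{\mathrm{1}}}  \vee  \ell_{{\mathrm{2}}} }$ followed by $\delta^{\ell_{{\mathrm{1}}},\ell_{{\mathrm{2}}}}$ rather than a monadic join, but this does not affect the correctness of the overall plan.
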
 

\begin{proof}
Follows by lemma \ref{Ap2} and \ref{Ap3}.
\end{proof}


\begin{theorem}[Theorem \ref{dcceWD}]
If $ \Gamma  \vdash  \ottnt{a}  :  \ottnt{A} $ in \ED{}, then $ \llbracket  \ottnt{a}  \rrbracket  \in \text{Hom}_{\Ct} ( \llbracket  \Gamma  \rrbracket  ,  \llbracket  \ottnt{A}  \rrbracket )$.
\end{theorem}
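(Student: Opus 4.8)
The proof proceeds by structural induction on the typing derivation $\Gamma \vdash a : A$ in \ED{}. The statement to be maintained is that the interpretation $\llbracket a \rrbracket$, read off from the clauses given just before the statement, is a well-defined morphism in $\Ct$ with the expected source $\llbracket \Gamma \rrbracket$ and target $\llbracket A \rrbracket$. The cases split into the standard simply-typed $\lambda$-calculus rules and the two modal rules \rref{DCC-Eta} and \rref{DCC-Bind}.

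For the $\lambda$-calculus fragment (variables, abstraction, application, pairing, projections, injections, $\mathbf{case}$, $\ottkw{unit}$, $\mathbf{abort}$), the argument is entirely standard: $\Ct$ is bicartesian closed, so the usual universal-property constructions (products, coproducts, exponentials, terminal and initial objects) produce morphisms of the required type from the inductive hypotheses, exactly as in the interpretation of the ordinary $\lambda$-calculus. I would dispatch these uniformly without detailed calculation.

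For \rref{DCC-Eta}, I have $\Gamma \vdash \mathbf{eta}^{\ell} a : \mathcal{T}_{\ell} A$ from $\Gamma \vdash a : A$; by the induction hypothesis $\llbracket a \rrbracket \in \text{Hom}_\Ct(\llbracket \Gamma \rrbracket, \llbracket A \rrbracket)$. Then I post-compose with $\eta_{\llbracket A \rrbracket} : \llbracket A \rrbracket \to \mathbf{S}_{\bot}\llbracket A \rrbracket$ (recalling $\bot$ is the monoid unit, so $\mathbf{S}_{\bot}$ is the object $\mathbf{S}(1)$) and then with the coherence map $\mathbf{S}^{\bot \sqsubseteq \ell}_{\llbracket A \rrbracket} : \mathbf{S}_{\bot}\llbracket A \rrbracket \to \mathbf{S}_{\ell}\llbracket A \rrbracket = \llbracket \mathcal{T}_{\ell} A \rrbracket$; the composite has type $\llbracket \Gamma \rrbracket \to \llbracket \mathcal{T}_{\ell} A \rrbracket$, matching the interpretation clause $\llbracket \mathbf{eta}^{\ell} a \rrbracket = \mathbf{S}^{\bot \sqsubseteq \ell} \circ \eta \circ \llbracket a \rrbracket$. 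For \rref{DCC-Bind}, I have $\Gamma \vdash \mathbf{bind}^{\ell}\, x = a \,\mathbf{in}\, b : B$ from $\Gamma \vdash a : \mathcal{T}_{\ell} A$, $\Gamma, x : A \vdash b : B$, and the protection judgement $\ell \sqsubseteq B$. By the induction hypotheses, $\llbracket a \rrbracket : \llbracket \Gamma \rrbracket \to \mathbf{S}_{\ell}\llbracket A \rrbracket$ and $\llbracket b \rrbracket : \llbracket \Gamma \rrbracket \times \llbracket A \rrbracket \to \llbracket B \rrbracket$ (using $\llbracket \Gamma, x : A \rrbracket = \llbracket \Gamma \rrbracket \times \llbracket A \rrbracket$). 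I then form the composite $\langle \I, \llbracket a \rrbracket \rangle : \llbracket \Gamma \rrbracket \to \llbracket \Gamma \rrbracket \times \mathbf{S}_{\ell}\llbracket A \rrbracket$, apply the strength $\text{t}^{\mathbf{S}_{\ell}} : \llbracket \Gamma \rrbracket \times \mathbf{S}_{\ell}\llbracket A \rrbracket \to \mathbf{S}_{\ell}(\llbracket \Gamma \rrbracket \times \llbracket A \rrbracket)$ (available since $\mathbf{S}_{\ell}$ is a strong endofunctor), then $\mathbf{S}_{\ell}\llbracket b \rrbracket : \mathbf{S}_{\ell}(\llbracket \Gamma \rrbracket \times \llbracket A \rrbracket) \to \mathbf{S}_{\ell}\llbracket B \rrbracket$, and finally the isomorphism $k : \mathbf{S}_{\ell}\llbracket B \rrbracket \to \llbracket B \rrbracket$. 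This exactly reconstructs the interpretation clause and yields a morphism $\llbracket \Gamma \rrbracket \to \llbracket B \rrbracket$.

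The only non-mechanical ingredient — and hence the ``main obstacle,'' though it has already been discharged upstream — is that the closing step of the \rref{DCC-Bind} case needs the morphism $k : \mathbf{S}_{\ell}\llbracket B \rrbracket \to \llbracket B \rrbracket$, which is supplied by Lemma \ref{lemmak} precisely under the hypothesis $\ell \sqsubseteq B$. Thus the plan is to invoke Lemma \ref{lemmak} with the side condition of \rref{DCC-Bind}, and everything else reduces to assembling the listed composites and checking object-level type-matching. I would remark that this is where the non-standard shape of the DCC bind rule (returning $B$ rather than $\mathcal{T}_{\ell} B$) is accounted for semantically: it is sound exactly because $\llbracket B \rrbracket$ carries an $\mathbf{S}_{\ell}$-algebra structure whenever $\ell \sqsubseteq B$.
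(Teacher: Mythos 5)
Your proof is correct and follows essentially the same route as the paper's: induction on the typing derivation, standard treatment of the $\lambda$-calculus fragment, post-composition with $\mathbf{S}^{\bot\sqsubseteq\ell}\circ\eta$ for \rref{DCC-Eta}, and assembly of $k\circ\mathbf{S}_{\ell}\llbracket b\rrbracket\circ t^{\mathbf{S}_{\ell}}\circ\langle\I,\llbracket a\rrbracket\rangle$ for \rref{DCC-Bind}, with the protection judgement $\ell\sqsubseteq B$ supplying $k$ via Lemma \ref{lemmak}. Your closing remark correctly identifies the algebra structure on $\llbracket B\rrbracket$ as the semantic reason the bind rule can return $B$ rather than $\mathcal{T}_{\ell}B$, which is exactly the reading the paper gives.
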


\begin{proof}
By induction on $ \Gamma  \vdash  \ottnt{a}  :  \ottnt{A} $.

\begin{itemize}
\item $\lambda$-calculus. Standard.
\item \Rref{DCC-Eta}. Have: $ \Gamma  \vdash   \mathbf{eta} ^{ \ell }  \ottnt{a}   :   \mathcal{T}_{ \ell } \:  \ottnt{A}  $ where $ \Gamma  \vdash  \ottnt{a}  :  \ottnt{A} $.\\
By IH, $ \llbracket  \ottnt{a}  \rrbracket  \in \text{Hom}_{\Ct} ( \llbracket  \Gamma  \rrbracket ,  \llbracket  \ottnt{A}  \rrbracket )$.\\
Therefore, $ \llbracket    \mathbf{eta} ^{ \ell }  \ottnt{a}    \rrbracket  =    \mathbf{S}^{   \bot    \sqsubseteq   \ell  }_{  \llbracket  \ottnt{A}  \rrbracket  }   \circ   \eta_{  \llbracket  \ottnt{A}  \rrbracket  }    \circ   \llbracket  \ottnt{a}  \rrbracket   \in \text{Hom}_{\Ct} ( \llbracket  \Gamma  \rrbracket ,  \mathbf{S}_{  \ell  }   \llbracket  \ottnt{A}  \rrbracket  )$.
\item \Rref{DCC-Bind}. Have: $ \Gamma  \vdash   \mathbf{bind} ^{ \ell } \:  \ottmv{x}  =  \ottnt{a}  \: \mathbf{in} \:  \ottnt{b}   :  \ottnt{B} $ where $ \Gamma  \vdash  \ottnt{a}  :   \mathcal{T}_{ \ell } \:  \ottnt{A}  $ and $  \Gamma  ,   \ottmv{x}  :  \ottnt{A}    \vdash  \ottnt{b}  :  \ottnt{B} $ and $ \ell  \sqsubseteq  \ottnt{B} $.\\
By lemma \ref{Ap2}, $\exists  k^{ \ell }_{  \llbracket  \ottnt{B}  \rrbracket  }  \in \text{Hom}_{\Ct} ( \mathbf{S}_{  \ell  }   \llbracket  \ottnt{B}  \rrbracket  , \llbracket  \ottnt{B}  \rrbracket )$. Now,\\
$ \llbracket    \mathbf{bind} ^{ \ell } \:  \ottmv{x}  =  \ottnt{a}  \: \mathbf{in} \:  \ottnt{b}    \rrbracket  =  \llbracket  \Gamma  \rrbracket  \xrightarrow{ \langle   \text{id}_{  \llbracket  \Gamma  \rrbracket  }   ,   \llbracket  \ottnt{a}  \rrbracket   \rangle }   \llbracket  \Gamma  \rrbracket   \times   \mathbf{S}_{  \ell  }   \llbracket  \ottnt{A}  \rrbracket    \xrightarrow{ t^{\mathbf{S}_{  \ell  } }_{  \llbracket  \Gamma  \rrbracket  ,   \llbracket  \ottnt{A}  \rrbracket  } }  \mathbf{S}_{  \ell  }   (    \llbracket  \Gamma  \rrbracket   \times   \llbracket  \ottnt{A}  \rrbracket    )   \xrightarrow{ \mathbf{S}_{  \ell  }   \llbracket  \ottnt{b}  \rrbracket  }  \mathbf{S}_{  \ell  }   \llbracket  \ottnt{B}  \rrbracket   \xrightarrow{ k^{ \ell }_{  \llbracket  \ottnt{B}  \rrbracket  } }  \llbracket  \ottnt{B}  \rrbracket $.\\
Therefore, $ \llbracket    \mathbf{bind} ^{ \ell } \:  \ottmv{x}  =  \ottnt{a}  \: \mathbf{in} \:  \ottnt{b}    \rrbracket  \in \text{Hom}_{\Ct} ( \llbracket  \Gamma  \rrbracket , \llbracket  \ottnt{B}  \rrbracket )$.
\end{itemize}
\end{proof}


\begin{theorem}[Theorem \ref{dcceOE}] \label{dcceOEprf}
If $\Gamma \vdash a : A$ in \ED{} and $ \vdash  \ottnt{a}  \leadsto  \ottnt{a'} $, then $\llbracket a \rrbracket = \llbracket a’ \rrbracket$.
\end{theorem}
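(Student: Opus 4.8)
The plan is to argue by induction on the derivation of $ \vdash  \ottnt{a}  \leadsto  \ottnt{a'} $, i.e.\ by case analysis on the last reduction rule used. The rules split into three groups: the ordinary call-by-value $\beta$-rules carried over from the $\lambda$-calculus fragment (function application, projection, and the $\mathbf{case}$-reductions needed for the sum types used as ground types), the congruence rules that contract a redex inside a subterm (\rref{CBV-BindLeft} for the scrutinee of a $\mathbf{bind}$, together with the analogous compatibility rules for the $\lambda$-calculus constructs and for reducing underneath an $\mathbf{eta}$), and the rule \rref{CBV-BindBeta} governing $\mathbf{bind}$.

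The first two groups are routine. For the ordinary $\beta$-rules the equation $ \llbracket  \ottnt{a}  \rrbracket  =  \llbracket  \ottnt{a'}  \rrbracket $ is exactly an instance of soundness of the bicartesian closed interpretation, so I would simply cite it. For the congruence rules, if $\ottnt{a}$ has an immediate subterm $\ottnt{c}$ with $ \vdash  \ottnt{c}  \leadsto  \ottnt{c'} $ and $\ottnt{a'}$ is obtained by replacing $\ottnt{c}$ by $\ottnt{c'}$, then the induction hypothesis gives $ \llbracket  \ottnt{c}  \rrbracket  =  \llbracket  \ottnt{c'}  \rrbracket $; since the interpretation of every term former is obtained by composing $ \llbracket  \ottnt{c}  \rrbracket $ with fixed morphisms (for instance $ \llbracket   \mathbf{bind} ^{ \ell } \:  \ottmv{x}  =  \ottnt{c}  \: \mathbf{in} \:  \ottnt{b}   \rrbracket  = k \circ \mathbf{S}_\ell  \llbracket  \ottnt{b}  \rrbracket  \circ t^{\mathbf{S}_\ell} \circ \langle \I ,  \llbracket  \ottnt{c}  \rrbracket  \rangle$), compositionality yields $ \llbracket  \ottnt{a}  \rrbracket  =  \llbracket  \ottnt{a'}  \rrbracket $.

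The real content is the \rref{CBV-BindBeta} case, where $\ottnt{a} =  \mathbf{bind} ^{ \ell } \:  \ottmv{x}  =   \mathbf{eta} ^{ \ell }  \ottmv{v}   \: \mathbf{in} \:  \ottnt{b} $ and $\ottnt{a'} =  \ottnt{b}  \{  \ottmv{v}  /  \ottmv{x}  \} $, with $\Gamma \vdash v : A$, $ \Gamma  ,   \ottmv{x}  :  \ottnt{A}    \vdash  \ottnt{b}  :  \ottnt{B} $ and $ \ell  \sqsubseteq  \ottnt{B} $ by inversion on $ \Gamma  \vdash  \ottnt{a}  :  \ottnt{A} $. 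Writing $\overline{\eta} \triangleq \mathbf{S}^{  \bot   \sqsubseteq  \ell } \circ \eta$ as in Lemma~\ref{lemmak}, I would unfold the definitions of $ \llbracket   \mathbf{bind}   \rrbracket $ and $ \llbracket   \mathbf{eta}   \rrbracket $ to obtain
\[  \llbracket  \ottnt{a}  \rrbracket  = k \circ \mathbf{S}_\ell  \llbracket  \ottnt{b}  \rrbracket  \circ t^{\mathbf{S}_\ell}_{ \llbracket  \Gamma  \rrbracket ,  \llbracket  \ottnt{A}  \rrbracket } \circ (\I \times \overline{\eta}_{ \llbracket  \ottnt{A}  \rrbracket }) \circ \langle \I ,  \llbracket  \ottmv{v}  \rrbracket  \rangle , \]
where $k = k^\ell_{ \llbracket  \ottnt{B}  \rrbracket }$ is the isomorphism of Lemma~\ref{lemmak}. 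Since $\eta$ and $\mathbf{S}^{  \bot   \sqsubseteq  \ell }$ are strong natural transformations, so is $\overline{\eta} : \Id \to \mathbf{S}_\ell$, and the strong-naturality square (Figure~\ref{strongN}), together with $t^{\Id} = \I$, rewrites $t^{\mathbf{S}_\ell}_{ \llbracket  \Gamma  \rrbracket ,  \llbracket  \ottnt{A}  \rrbracket } \circ (\I \times \overline{\eta}_{ \llbracket  \ottnt{A}  \rrbracket })$ as $\overline{\eta}_{ \llbracket  \Gamma  \rrbracket  \times  \llbracket  \ottnt{A}  \rrbracket }$. Ordinary naturality of $\overline{\eta}$ then slides it past $ \llbracket  \ottnt{b}  \rrbracket $, and Lemma~\ref{lemmak} supplies $k \circ \overline{\eta}_{ \llbracket  \ottnt{B}  \rrbracket } = \text{id}_{ \llbracket  \ottnt{B}  \rrbracket }$, leaving $ \llbracket  \ottnt{b}  \rrbracket  \circ \langle \I ,  \llbracket  \ottmv{v}  \rrbracket  \rangle$. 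Finally the standard substitution lemma for $ \llbracket  -  \rrbracket $, namely $ \llbracket   \ottnt{b}  \{  \ottmv{v}  /  \ottmv{x}  \}   \rrbracket  =  \llbracket  \ottnt{b}  \rrbracket  \circ \langle \text{id}_{ \llbracket  \Gamma  \rrbracket } ,  \llbracket  \ottmv{v}  \rrbracket  \rangle$, identifies this with $ \llbracket  \ottnt{a'}  \rrbracket $.

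I expect this last case to be the only real obstacle: it is precisely the point where the non-standard $\mathbf{bind}$-rule's reliance on the protection judgement (the morphism $k$, equivalently the $\mathbf{S}_\ell$-algebra structure on $ \llbracket  \ottnt{B}  \rrbracket $) must cancel against the monad unit, and the needed cancellation is exactly Lemma~\ref{lemmak}. Alongside it I would need to state and use a substitution lemma for the interpretation, which is entirely routine but load-bearing here; everything else reduces to bicartesian closed soundness and functoriality of $\mathbf{S}_\ell$ and $\overline{\eta}$.
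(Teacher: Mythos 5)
Your proposal is correct and follows essentially the same route as the paper's proof: the only substantive case is \rref{CBV-BindBeta}, handled by commuting the unit $\overline{\eta} = \mathbf{S}^{ \bot  \sqsubseteq \ell}\circ\eta$ past the strength and past $ \llbracket  \ottnt{b}  \rrbracket $ by (strong) naturality, and then cancelling against $k$ via Lemma~\ref{lemmak}. The paper merely performs the two commutations (for $\mathbf{S}^{ \bot \sqsubseteq \ell}$ and for $\eta$) as separate steps rather than packaging them into the single strong natural transformation $\overline{\eta}$, and organizes the induction over the typing derivation with inversion on the reduction instead of directly over the reduction derivation; these are cosmetic differences.
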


\begin{proof}
By induction on $ \Gamma  \vdash  \ottnt{a}  :  \ottnt{A} $.
\begin{itemize}
\item $\lambda$-calculus. Standard.
\item \Rref{DCC-Eta}. Have: $ \Gamma  \vdash   \mathbf{eta} ^{ \ell }  \ottnt{a}   :   \mathcal{T}_{ \ell } \:  \ottnt{A}  $ where $ \Gamma  \vdash  \ottnt{a}  :  \ottnt{A} $.\\ Further, $ \vdash   \mathbf{eta} ^{ \ell }  \ottnt{a}   \leadsto  \ottnt{c} $. By inversion on $ \vdash   \mathbf{eta} ^{ \ell }  \ottnt{a}   \leadsto  \ottnt{c} $.
\begin{itemize}
\item \Rref{CBV-Eta}. Have: $ \vdash   \mathbf{eta} ^{ \ell }  \ottnt{a}   \leadsto   \mathbf{eta} ^{ \ell }  \ottnt{a'}  $ where $ \vdash  \ottnt{a}  \leadsto  \ottnt{a'} $.\\
By IH, $ \llbracket  \ottnt{a}  \rrbracket  =  \llbracket  \ottnt{a'}  \rrbracket $.\\
Therefore, $ \llbracket    \mathbf{eta} ^{ \ell }  \ottnt{a}    \rrbracket  =    \mathbf{S}^{   \bot    \sqsubseteq   \ell  }_{  \llbracket  \ottnt{A}  \rrbracket  }   \circ   \eta_{  \llbracket  \ottnt{A}  \rrbracket  }    \circ   \llbracket  \ottnt{a}  \rrbracket   =    \mathbf{S}^{   \bot    \sqsubseteq   \ell  }_{  \llbracket  \ottnt{A}  \rrbracket  }   \circ   \eta_{  \llbracket  \ottnt{A}  \rrbracket  }    \circ   \llbracket  \ottnt{a'}  \rrbracket   =  \llbracket    \mathbf{eta} ^{ \ell }  \ottnt{a'}    \rrbracket $.
\end{itemize}
\item \Rref{DCC-Bind}. Have: $ \Gamma  \vdash   \mathbf{bind} ^{ \ell } \:  \ottmv{x}  =  \ottnt{a}  \: \mathbf{in} \:  \ottnt{b}   :  \ottnt{B} $ where $ \Gamma  \vdash  \ottnt{a}  :   \mathcal{T}_{ \ell } \:  \ottnt{A}  $ and $  \Gamma  ,   \ottmv{x}  :  \ottnt{A}    \vdash  \ottnt{b}  :  \ottnt{B} $ and $ \ell  \sqsubseteq  \ottnt{B} $.\\
Further, $ \vdash   \mathbf{bind} ^{ \ell } \:  \ottmv{x}  =  \ottnt{a}  \: \mathbf{in} \:  \ottnt{b}   \leadsto  \ottnt{c} $. By inversion on $ \vdash   \mathbf{bind} ^{ \ell } \:  \ottmv{x}  =  \ottnt{a}  \: \mathbf{in} \:  \ottnt{b}   \leadsto  \ottnt{c} $.
\begin{itemize}
\item \Rref{CBV-BindLeft}. Have: $ \vdash   \mathbf{bind} ^{ \ell } \:  \ottmv{x}  =  \ottnt{a}  \: \mathbf{in} \:  \ottnt{b}   \leadsto   \mathbf{bind} ^{ \ell } \:  \ottmv{x}  =  \ottnt{a'}  \: \mathbf{in} \:  \ottnt{b}  $ where $ \vdash  \ottnt{a}  \leadsto  \ottnt{a'} $.\\
By IH, $ \llbracket  \ottnt{a}  \rrbracket  =  \llbracket  \ottnt{a'}  \rrbracket $.\\
Therefore, \begin{align*}
 \llbracket    \mathbf{bind} ^{ \ell } \:  \ottmv{x}  =  \ottnt{a}  \: \mathbf{in} \:  \ottnt{b}    \rrbracket  & =     k^{ \ell }_{  \llbracket  \ottnt{B}  \rrbracket  }   \circ   \mathbf{S}_{  \ell  }   \llbracket  \ottnt{b}  \rrbracket     \circ   t^{\mathbf{S}_{  \ell  } }_{  \llbracket  \Gamma  \rrbracket  ,   \llbracket  \ottnt{A}  \rrbracket  }    \circ   \langle   \text{id}_{  \llbracket  \Gamma  \rrbracket  }   ,   \llbracket  \ottnt{a}  \rrbracket   \rangle  \\
& =     k^{ \ell }_{  \llbracket  \ottnt{B}  \rrbracket  }   \circ   \mathbf{S}_{  \ell  }   \llbracket  \ottnt{b}  \rrbracket     \circ   t^{\mathbf{S}_{  \ell  } }_{  \llbracket  \Gamma  \rrbracket  ,   \llbracket  \ottnt{A}  \rrbracket  }    \circ   \langle   \text{id}_{  \llbracket  \Gamma  \rrbracket  }   ,   \llbracket  \ottnt{a'}  \rrbracket   \rangle   =  \llbracket    \mathbf{bind} ^{ \ell } \:  \ottmv{x}  =  \ottnt{a'}  \: \mathbf{in} \:  \ottnt{b}    \rrbracket .
\end{align*}
\item \Rref{CBV-BindBeta}. Have: $ \vdash   \mathbf{bind} ^{ \ell } \:  \ottmv{x}  =   \mathbf{eta} ^{ \ell }  \ottmv{v}   \: \mathbf{in} \:  \ottnt{b}   \leadsto   \ottnt{b}  \{  \ottmv{v}  /  \ottmv{x}  \}  $.\\
Now, \begin{align*}
&  \llbracket    \mathbf{bind} ^{ \ell } \:  \ottmv{x}  =   \mathbf{eta} ^{ \ell }  \ottmv{v}   \: \mathbf{in} \:  \ottnt{b}    \rrbracket  \\
= &     k^{ \ell }_{  \llbracket  \ottnt{B}  \rrbracket  }   \circ   \mathbf{S}_{  \ell  }   \llbracket  \ottnt{b}  \rrbracket     \circ   t^{\mathbf{S}_{  \ell  } }_{  \llbracket  \Gamma  \rrbracket  ,   \llbracket  \ottnt{A}  \rrbracket  }    \circ   \langle   \text{id}_{  \llbracket  \Gamma  \rrbracket  }   ,     \mathbf{S}^{   \bot    \sqsubseteq   \ell  }_{  \llbracket  \ottnt{A}  \rrbracket  }   \circ   \eta_{  \llbracket  \ottnt{A}  \rrbracket  }    \circ   \llbracket  \ottmv{v}  \rrbracket    \rangle  \\
= &      k^{ \ell }_{  \llbracket  \ottnt{B}  \rrbracket  }   \circ   \mathbf{S}_{  \ell  }   \llbracket  \ottnt{b}  \rrbracket     \circ   t^{\mathbf{S}_{  \ell  } }_{  \llbracket  \Gamma  \rrbracket  ,   \llbracket  \ottnt{A}  \rrbracket  }    \circ   (     \text{id}_{  \llbracket  \Gamma  \rrbracket  }    \times    \mathbf{S}^{   \bot    \sqsubseteq   \ell  }_{  \llbracket  \ottnt{A}  \rrbracket  }     )    \circ   \langle   \text{id}_{  \llbracket  \Gamma  \rrbracket  }   ,    \eta_{  \llbracket  \ottnt{A}  \rrbracket  }   \circ   \llbracket  \ottmv{v}  \rrbracket    \rangle  \\
= &      k^{ \ell }_{  \llbracket  \ottnt{B}  \rrbracket  }   \circ   \mathbf{S}_{  \ell  }   \llbracket  \ottnt{b}  \rrbracket     \circ   \mathbf{S}^{   \bot    \sqsubseteq   \ell  }_{    \llbracket  \Gamma  \rrbracket   \times   \llbracket  \ottnt{A}  \rrbracket    }    \circ   t^{\mathbf{S}_{   \bot   } }_{  \llbracket  \Gamma  \rrbracket  ,   \llbracket  \ottnt{A}  \rrbracket  }    \circ   \langle   \text{id}_{  \llbracket  \Gamma  \rrbracket  }   ,    \eta_{  \llbracket  \ottnt{A}  \rrbracket  }   \circ   \llbracket  \ottmv{v}  \rrbracket    \rangle   \hspace*{3pt} [\because \mathbf{S}^{  \bot   \sqsubseteq  \ell } \text{ is strong}] \\
= &       k^{ \ell }_{  \llbracket  \ottnt{B}  \rrbracket  }   \circ   \mathbf{S}_{  \ell  }   \llbracket  \ottnt{b}  \rrbracket     \circ   \mathbf{S}^{   \bot    \sqsubseteq   \ell  }_{    \llbracket  \Gamma  \rrbracket   \times   \llbracket  \ottnt{A}  \rrbracket    }    \circ   t^{\mathbf{S}_{   \bot   } }_{  \llbracket  \Gamma  \rrbracket  ,   \llbracket  \ottnt{A}  \rrbracket  }    \circ   (    \text{id}_{  \llbracket  \Gamma  \rrbracket  }   \times   \eta_{  \llbracket  \ottnt{A}  \rrbracket  }    )    \circ   \langle   \text{id}_{  \llbracket  \Gamma  \rrbracket  }   ,   \llbracket  \ottmv{v}  \rrbracket   \rangle   \\
= &      k^{ \ell }_{  \llbracket  \ottnt{B}  \rrbracket  }   \circ   \mathbf{S}_{  \ell  }   \llbracket  \ottnt{b}  \rrbracket     \circ   \mathbf{S}^{   \bot    \sqsubseteq   \ell  }_{    \llbracket  \Gamma  \rrbracket   \times   \llbracket  \ottnt{A}  \rrbracket    }    \circ   \eta_{    \llbracket  \Gamma  \rrbracket   \times   \llbracket  \ottnt{A}  \rrbracket    }    \circ   \langle   \text{id}_{  \llbracket  \Gamma  \rrbracket  }   ,   \llbracket  \ottmv{v}  \rrbracket   \rangle   \hspace*{3pt} [ \because \eta \text{ is strong}]\\
= &      k^{ \ell }_{  \llbracket  \ottnt{B}  \rrbracket  }   \circ   \mathbf{S}^{   \bot    \sqsubseteq   \ell  }_{  \llbracket  \ottnt{B}  \rrbracket  }    \circ   \mathbf{S}_{   \bot   }   \llbracket  \ottnt{b}  \rrbracket     \circ   \eta_{    \llbracket  \Gamma  \rrbracket   \times   \llbracket  \ottnt{A}  \rrbracket    }    \circ   \langle   \text{id}_{  \llbracket  \Gamma  \rrbracket  }   ,   \llbracket  \ottmv{v}  \rrbracket   \rangle   \hspace*{3pt} [\text{By naturality}]\\
= &      k^{ \ell }_{  \llbracket  \ottnt{B}  \rrbracket  }   \circ   \mathbf{S}^{   \bot    \sqsubseteq   \ell  }_{  \llbracket  \ottnt{B}  \rrbracket  }    \circ   \eta_{  \llbracket  \ottnt{B}  \rrbracket  }    \circ   \llbracket  \ottnt{b}  \rrbracket    \circ   \langle   \text{id}_{  \llbracket  \Gamma  \rrbracket  }   ,   \llbracket  \ottmv{v}  \rrbracket   \rangle   \hspace*{3pt} [\text{By naturality}]\\
= &    \text{id}_{  \llbracket  \ottnt{B}  \rrbracket  }   \circ   \llbracket  \ottnt{b}  \rrbracket    \circ   \langle   \text{id}_{  \llbracket  \Gamma  \rrbracket  }   ,   \llbracket  \ottmv{v}  \rrbracket   \rangle   \hspace*{3pt} [\text{By Lemma \ref{Ap2}}]\\
= &  \llbracket    \ottnt{b}  \{  \ottmv{v}  /  \ottmv{x}  \}    \rrbracket 
\end{align*}
\end{itemize}
\end{itemize}

\end{proof}


\begin{theorem}[Theorem \ref{dcceCA}] \label{dcceCAprf}
Let the interpretation $\llbracket \_ \rrbracket_{(\Ct,\mathbf{S})}$ be injective for ground types.
\begin{itemize}
\item Let $ \Gamma  \vdash  \ottnt{b}  :   \mathbf{Bool}  $ and $\ottmv{v}$ be a value of type $ \mathbf{Bool} $. If $ \llbracket  \ottnt{b}  \rrbracket _{(\Ct,\mathbf{S})} =  \llbracket  \ottmv{v}  \rrbracket _{(\Ct,\mathbf{S})}$, then $ \vdash  \ottnt{b}  \leadsto^{\ast}  \ottmv{v} $. \vspace*{3pt}
\item Fix some $\ell \in L$. Let $ \Gamma  \vdash  \ottnt{b}  :   \mathcal{T}_{ \ell } \:   \mathbf{Bool}   $ and $\ottmv{v}$ be a value of type $ \mathcal{T}_{ \ell } \:   \mathbf{Bool}  $. Suppose, the morphisms $\overline{\eta}_X \triangleq   \mathbf{S}^{   \bot    \sqsubseteq   \ell  }_{ \ottnt{X} }   \circ   \eta_{ \ottnt{X} }  $ are mono for any $X \in \text{Obj} (\Ct)$.  Now, if $ \llbracket  \ottnt{b}  \rrbracket _{(\Ct,\mathbf{S})} =  \llbracket  \ottmv{v}  \rrbracket _{(\Ct,\mathbf{S})}$, then $ \vdash  \ottnt{b}  \leadsto^{\ast}  \ottmv{v} $. 
\end{itemize}
\end{theorem}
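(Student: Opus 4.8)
The plan is to run the standard computational-adequacy argument: normalise $\ottnt{b}$ operationally, transport the equality of denotations along the reduction sequence, and then read off syntactic equality of the resulting value with $\ottmv{v}$ from injectivity at ground type. Throughout we work with the closed term $\ottnt{b}$ (the judgement $\vdash \ottnt{b} \leadsto^{\ast} \ottmv{v}$ already presupposes this, and in the intended applications $\Gamma$ is empty). First I would invoke the basic syntactic metatheory of \ED{}: it is a terminating calculus, so every closed well-typed term reduces via $\leadsto$ to a $\leadsto$-normal form $\ottnt{b'}$; a routine progress lemma for the call-by-value semantics shows $\ottnt{b'}$ is a value; and a canonical-forms lemma pins down which value it can be. For $ \mathbf{Bool}  \triangleq  \ottkw{Unit}  +  \ottkw{Unit} $ the only values are $ \mathbf{true} $ and $ \mathbf{false} $; for $ \mathcal{T}_{ \ell } \:   \mathbf{Bool}  $ the only values are $ \mathbf{eta} ^{ \ell }   \mathbf{true} $ and $ \mathbf{eta} ^{ \ell }   \mathbf{false} $ (recall that the chosen call-by-value semantics reduces underneath the $\mathbf{eta}$s, so a value of modal type is $ \mathbf{eta} ^{ \ell }  \ottmv{v} $ with $\ottmv{v}$ a value of the underlying type).

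Next I would transport the hypothesis. By Theorem \ref{dcceOE} applied to each single step of $ \vdash  \ottnt{b}  \leadsto^{\ast}  \ottnt{b'} $ — using subject reduction together with Theorem \ref{dcceWD} to keep each intermediate term well typed so that the hypothesis of Theorem \ref{dcceOE} is met — we get $ \llbracket  \ottnt{b}  \rrbracket  =  \llbracket  \ottnt{b'}  \rrbracket $, and hence $ \llbracket  \ottnt{b'}  \rrbracket  =  \llbracket  \ottmv{v}  \rrbracket $ by the assumption of the theorem. For the first bullet, $\ottnt{b'}$ and $\ottmv{v}$ both lie in $\{ \mathbf{true} ,  \mathbf{false} \}$, so the assumed ground-type injectivity, concretely $ \llbracket   \mathbf{true}   \rrbracket  \neq  \llbracket   \mathbf{false}   \rrbracket $, forces $\ottnt{b'} = \ottmv{v}$ syntactically; therefore $ \vdash  \ottnt{b}  \leadsto^{\ast}  \ottmv{v} $. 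For the second bullet, write $\ottnt{b'} =  \mathbf{eta} ^{ \ell }  \ottnt{v_{{\mathrm{1}}}} $ and $\ottmv{v} =  \mathbf{eta} ^{ \ell }  \ottnt{v_{{\mathrm{2}}}} $ with $\ottnt{v_{{\mathrm{1}}}}, \ottnt{v_{{\mathrm{2}}}} \in \{ \mathbf{true} ,  \mathbf{false} \}$. Since $ \llbracket    \mathbf{eta} ^{ \ell }  \ottnt{v_i}    \rrbracket  = \overline{\eta}_{ \llbracket   \mathbf{Bool}   \rrbracket } \circ  \llbracket  \ottnt{v_i}  \rrbracket $ and $\overline{\eta}_{ \llbracket   \mathbf{Bool}   \rrbracket }$ is mono by hypothesis, we obtain $ \llbracket  \ottnt{v_{{\mathrm{1}}}}  \rrbracket  =  \llbracket  \ottnt{v_{{\mathrm{2}}}}  \rrbracket $; ground-type injectivity again gives $\ottnt{v_{{\mathrm{1}}}} = \ottnt{v_{{\mathrm{2}}}}$, so $\ottnt{b'} = \ottmv{v}$ and $ \vdash  \ottnt{b}  \leadsto^{\ast}  \ottmv{v} $.

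The transport step and the $\mathbf{eta}$-monomorphism bookkeeping are short; the main obstacle is the purely syntactic input I am relying on — that the call-by-value reduction $\leadsto$ on closed well-typed \ED{}-terms is terminating and satisfies progress and preservation, so that its normal forms are exactly the canonical values of the given type. This is entirely standard for a simply-typed system of this shape, but it does require the full operational semantics of \ED{}, of which only a fragment is displayed in the excerpt; once this canonical-forms lemma is recorded, the adequacy argument above is immediate.
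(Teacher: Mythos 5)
Your proposal is correct and follows essentially the same route as the paper's proof: appeal to strong normalization and type soundness of \ED{} to obtain a value $\ottmv{v_{{\mathrm{0}}}}$ with $ \vdash  \ottnt{b}  \leadsto^{\ast}  \ottmv{v_{{\mathrm{0}}}} $, transport the denotation along the reduction via Theorem \ref{dcceOE}, and conclude $\ottmv{v} = \ottmv{v_{{\mathrm{0}}}}$ by ground-type injectivity, cancelling $\overline{\eta}$ with the mono hypothesis in the modal case. The only cosmetic difference is that you spell out the canonical forms of boolean values explicitly, which the paper leaves implicit.
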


\begin{proof}
Let $ \Gamma  \vdash  \ottnt{b}  :   \mathbf{Bool}  $ and $\ottmv{v}$ be a boolean value such that $ \llbracket  \ottnt{b}  \rrbracket_{(\mathds{C},\mathbf{S})}  =  \llbracket  \ottmv{v}  \rrbracket_{(\mathds{C},\mathbf{S})} $.\\
We show that $ \vdash  \ottnt{b}  \leadsto^{\ast}  \ottmv{v} $.

First note that \ED{} is strongly normalizing with respect to the reduction relation, $\leadsto^{\ast}$. Further, \ED{} is also type sound with respect to this reduction relation.

Therefore, given $ \Gamma  \vdash  \ottnt{b}  :   \mathbf{Bool}  $, we know that there exists a value $\ottmv{v_{{\mathrm{0}}}}$ such that $ \Gamma  \vdash  \ottmv{v_{{\mathrm{0}}}}  :   \mathbf{Bool}  $ and $ \vdash  \ottnt{b}  \leadsto^{\ast}  \ottmv{v_{{\mathrm{0}}}} $.

Next, by Theorem \ref{dcceOEprf}, $ \llbracket  \ottnt{b}  \rrbracket_{(\mathds{C},\mathbf{S})}  =  \llbracket  \ottmv{v_{{\mathrm{0}}}}  \rrbracket_{(\mathds{C},\mathbf{S})} $.

Since $ \llbracket  \ottnt{b}  \rrbracket_{(\mathds{C},\mathbf{S})}  =  \llbracket  \ottmv{v}  \rrbracket_{(\mathds{C},\mathbf{S})} $ (given) and $ \llbracket  \ottnt{b}  \rrbracket_{(\mathds{C},\mathbf{S})}  =  \llbracket  \ottmv{v_{{\mathrm{0}}}}  \rrbracket_{(\mathds{C},\mathbf{S})} $ (above), therefore, $ \llbracket  \ottmv{v}  \rrbracket_{(\mathds{C},\mathbf{S})}  =  \llbracket  \ottmv{v_{{\mathrm{0}}}}  \rrbracket_{(\mathds{C},\mathbf{S})} $.

But, by injectivity, $\ottmv{v} = \ottmv{v_{{\mathrm{0}}}}$.

Thus, $ \vdash  \ottnt{b}  \leadsto^{\ast}  \ottmv{v} $.\\

For the second part, we use a similar argument.

Given $ \Gamma  \vdash  \ottnt{b}  :   \mathcal{T}_{ \ell } \:   \mathbf{Bool}   $, we know that there exists a value $\ottmv{v_{{\mathrm{0}}}}$ such that $ \Gamma  \vdash  \ottmv{v_{{\mathrm{0}}}}  :   \mathcal{T}_{ \ell } \:   \mathbf{Bool}   $ and $ \vdash  \ottnt{b}  \leadsto^{\ast}  \ottmv{v_{{\mathrm{0}}}} $.

By Theorem \ref{dcceOEprf}, $ \llbracket  \ottnt{b}  \rrbracket_{(\mathds{C},\mathbf{S})}  =  \llbracket  \ottmv{v_{{\mathrm{0}}}}  \rrbracket_{(\mathds{C},\mathbf{S})} $.

Since $ \llbracket  \ottnt{b}  \rrbracket_{(\mathds{C},\mathbf{S})}  =  \llbracket  \ottmv{v}  \rrbracket_{(\mathds{C},\mathbf{S})} $ (given) and $ \llbracket  \ottnt{b}  \rrbracket_{(\mathds{C},\mathbf{S})}  =  \llbracket  \ottmv{v_{{\mathrm{0}}}}  \rrbracket_{(\mathds{C},\mathbf{S})} $ (above), therefore, $ \llbracket  \ottmv{v}  \rrbracket_{(\mathds{C},\mathbf{S})}  =  \llbracket  \ottmv{v_{{\mathrm{0}}}}  \rrbracket_{(\mathds{C},\mathbf{S})} $.

Both $\ottmv{v}$ and $\ottmv{v_{{\mathrm{0}}}}$ are values of type $ \mathcal{T}_{ \ell } \:   \mathbf{Bool}  $. Therefore, $\ottmv{v} =  \mathbf{eta} ^{ \ell }  \ottmv{v'} $ and $\ottmv{v_{{\mathrm{0}}}} =  \mathbf{eta} ^{ \ell }  \ottmv{v'_{{\mathrm{0}}}} $, for some values $\ottmv{v'}$ and $\ottmv{v'_{{\mathrm{0}}}}$ of type $ \mathbf{Bool} $. 

Then, $ \llbracket   \mathbf{eta} ^{ \ell }  \ottmv{v'}   \rrbracket_{(\mathds{C},\mathbf{S})}  =  \llbracket   \mathbf{eta} ^{ \ell }  \ottmv{v'_{{\mathrm{0}}}}   \rrbracket_{(\mathds{C},\mathbf{S})} $, or $   \mathbf{S}^{   \bot    \sqsubseteq   \ell  }_{   \llbracket   \mathbf{Bool}   \rrbracket   }   \circ   \eta_{   \llbracket   \mathbf{Bool}   \rrbracket   }    \circ   \llbracket  \ottmv{v'}  \rrbracket_{(\mathds{C},\mathbf{S})}   =    \mathbf{S}^{   \bot    \sqsubseteq   \ell  }_{   \llbracket   \mathbf{Bool}   \rrbracket   }   \circ   \eta_{   \llbracket   \mathbf{Bool}   \rrbracket   }    \circ   \llbracket  \ottmv{v'_{{\mathrm{0}}}}  \rrbracket_{(\mathds{C},\mathbf{S})}  $.

Since for any $X \in \text{Obj} (\Ct)$, $ \overline{\eta}_{ \ottnt{X} } $ is mono, so $ \llbracket  \ottmv{v'}  \rrbracket_{(\mathds{C},\mathbf{S})}  =  \llbracket  \ottmv{v'_{{\mathrm{0}}}}  \rrbracket_{(\mathds{C},\mathbf{S})} $. 

By injectivity, $\ottmv{v'} = \ottmv{v'_{{\mathrm{0}}}}$.

Therefore, $\ottmv{v} = \ottmv{v_{{\mathrm{0}}}}$.

Hence, $ \vdash  \ottnt{b}  \leadsto^{\ast}  \ottmv{v} $.
\end{proof}


\begin{theorem}\label{equiv1}
If $ \Gamma  \vdash  \ottnt{a}  :  \ottnt{A} $ in GMCC($ \mathcal{L} $), then $ \llbracket   \overline{ \ottnt{a} }   \rrbracket  =  \llbracket  \ottnt{a}  \rrbracket $.
\end{theorem}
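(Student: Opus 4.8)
The plan is to prove the statement by induction on the GMCC($\mathcal{L}$) typing derivation of $\Gamma \vdash a : A$, comparing the $\text{DCC}_e(\mathcal{L})$-interpretation $\llbracket \overline{a} \rrbracket_{(\Ct,\mathbf{S})}$ defined in Section \ref{modeldcce} with the GMCC($\mathcal{L}$)-interpretation $\llbracket a \rrbracket_{(\Ct,\mathbf{S})}$ defined in Section \ref{GMCCModel}, both built from the same strong monoidal functor $\mathbf{S}$ from $\Ca(\mathcal{L})$ to $\ES$. For the pure $\lambda$-calculus constructors the translation $\overline{\phantom{a}}$ is the identity, and both interpretations interpret that fragment identically in the ambient bicartesian closed category $\Ct$; since $\overline{A} = A$ on types (after identifying $\mathcal{T}_\ell$ with $S_\ell$), no type-level mismatch arises, so these cases follow immediately from the induction hypotheses on subterms.

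The substance lies in the eight modal cases --- $\mathbf{ret}$, monadic $\mathbf{lift}$, $\mathbf{join}$, monadic $\mathbf{up}$, $\mathbf{extr}$, comonadic $\mathbf{lift}$, $\mathbf{fork}$, comonadic $\mathbf{up}$ --- using the translations of Figures \ref{GMCtoDCC} and \ref{GCCtoDCC} (which agree on $\mathbf{lift}$ and $\mathbf{up}$, so the combined translation is well defined). For each I would unfold the translation, then expand the $\text{DCC}_e$-interpretation of the resulting $\mathbf{eta}$/$\mathbf{bind}$-term. The tools to deploy are: functoriality of $\mathbf{S}$ (so $\mathbf{S}^{\ell \sqsubseteq \ell} = \mathrm{id}$ and $\mathbf{S}^{\ell_1 \sqsubseteq \ell_3} = \mathbf{S}^{\ell_2 \sqsubseteq \ell_3} \circ \mathbf{S}^{\ell_1 \sqsubseteq \ell_2}$); the fact that $\mathbf{S}$ is simultaneously a strong graded monad $(\mathbf{S},\mu,\eta)$ and a strong graded comonad $(\mathbf{S},\delta,\epsilon)$ with $\epsilon \circ \eta = \mathrm{id}$, $\eta \circ \epsilon = \mathrm{id}$, $\delta \circ \mu = \mathrm{id}$, $\mu \circ \delta = \mathrm{id}$ (Section \ref{GMCCModel}); naturality and strength of these structure maps (exactly as used in the proof of Theorem \ref{GMCPrf}); and the explicit description of the isomorphism $k$ of Lemma \ref{lemmak}. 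For instance $\overline{\mathbf{ret}\,a} = \mathbf{eta}^{\bot}\,\overline{a}$ interprets as $\mathbf{S}^{\bot \sqsubseteq \bot} \circ \eta \circ \llbracket \overline{a} \rrbracket = \eta \circ \llbracket a \rrbracket = \llbracket \mathbf{ret}\,a \rrbracket$ by the hypothesis and functoriality; and $\overline{\mathbf{up}^{\ell_1,\ell_2}a} = \mathbf{bind}^{\ell_1}\,x = \overline{a}\ \mathbf{in}\ \mathbf{eta}^{\ell_2}\,x$, whose $\text{DCC}_e$-interpretation unwinds --- using the $k$ attached to the \textsc{Prot-Monad} instance $\ell_1 \sqsubseteq \mathcal{T}_{\ell_2}\overline{A}$ together with strength and naturality of $\overline{\eta}$ and $\mu$ --- to $\mathbf{S}^{\ell_1 \sqsubseteq \ell_2} \circ \llbracket \overline{a} \rrbracket = \llbracket \mathbf{up}^{\ell_1,\ell_2}a \rrbracket$. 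The $\mathbf{join}$, $\mathbf{fork}$, and $\mathbf{extr}$ cases are analogous but involve nested binds whose protection derivations are, respectively, nested \textsc{Prot-Monad}s, a \textsc{Prot-Combine} over \textsc{Prot-Monad}/\textsc{Prot-Already}, and a \textsc{Prot-Minimum}; in each the $k$ morphism, read off from the recursion in the proof of Lemma \ref{lemmak}, is built precisely from $\mu$, $\delta$, $\epsilon$ and the $\mathbf{S}^{\ell \sqsubseteq \ell'}$, and the composite collapses --- using idempotency of $\mu^{\ell,\ell}$ --- to the corresponding GMCC structure map.

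The main obstacle is precisely this entanglement with Lemma \ref{lemmak}: the $\text{DCC}_e$-interpretation of a $\mathbf{bind}$ depends on the particular protection derivation appearing in the typing of the translated term, so for each modal case I must identify which protection rules the translation invokes, compute the associated $k$ explicitly from Lemma \ref{lemmak}, and then verify that the resulting composite equals the clean graded-monad/comonad map; after these base identities for $k$ are in hand, the rest of each case is routine diagram-chasing with naturality, strength, and the monoidal coherence conditions. As an alternative shorter route, if the companion statement ``$\llbracket \underline{a} \rrbracket = \llbracket a \rrbracket$ for $a$ in $\text{DCC}_e(\mathcal{L})$'' is available, one can combine it with Theorem \ref{DCCComplete} (which types $\overline{a}$ in $\text{DCC}_e$), Theorem \ref{GMCCround} ($\underline{\overline{a}} \equiv a$ in GMCC), and soundness of GMCC (Theorem \ref{gmccsound}) to conclude $\llbracket \overline{a} \rrbracket = \llbracket \underline{\overline{a}} \rrbracket = \llbracket a \rrbracket$; I would record this observation but still carry out the direct induction as the primary argument.
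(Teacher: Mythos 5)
Your proposal matches the paper's proof essentially step for step: the paper proves this by induction on the GMCC typing derivation, handling the $\lambda$-calculus fragment by IH and each modal case by unfolding the translation into $\mathbf{eta}$/$\mathbf{bind}$-terms, reading off the relevant $k$ morphism from the protection derivation via Lemma~\ref{lemmak}, and collapsing the composite to the corresponding graded-(co)monad structure map using strength, naturality, lax/oplax monoidality, and idempotence of $\mu^{\ell,\ell}$. The sample computations you give for $\mathbf{ret}$ and $\mathbf{up}$ are exactly the ones appearing in the paper, so no gap remains.
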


\begin{proof}
By induction on $ \Gamma  \vdash  \ottnt{a}  :  \ottnt{A} $. Note that $ \llbracket   \overline{ \ottnt{A} }   \rrbracket  =  \llbracket  \ottnt{A}  \rrbracket $.

\begin{itemize}

\item $\lambda$-calculus. By IH.

\item \Rref{MC-Return}. Have: $ \Gamma  \vdash   \ottkw{ret}  \:  \ottnt{a}   :   S_{ \ottsym{1} } \:  \ottnt{A}  $ where $ \Gamma  \vdash  \ottnt{a}  :  \ottnt{A} $.\\
By IH, $ \llbracket   \overline{ \ottnt{a} }   \rrbracket  =  \llbracket  \ottnt{a}  \rrbracket $.\\ Now, $ \overline{  \ottkw{ret}  \:  \ottnt{a}  }  =  \mathbf{eta} ^{  \bot  }   \overline{ \ottnt{a} }  $.\\
Then, $ \llbracket    \mathbf{eta} ^{  \bot  }   \overline{ \ottnt{a} }     \rrbracket  =    \mathbf{S}^{   \bot    \sqsubseteq    \bot   }_{  \llbracket  \ottnt{A}  \rrbracket  }   \circ   \eta_{  \llbracket  \ottnt{A}  \rrbracket  }    \circ   \llbracket   \overline{ \ottnt{a} }   \rrbracket   =   \eta_{  \llbracket  \ottnt{A}  \rrbracket  }   \circ   \llbracket  \ottnt{a}  \rrbracket   =  \llbracket    \ottkw{ret}  \:  \ottnt{a}    \rrbracket $.

\item \Rref{MC-Extract}. Have: $ \Gamma  \vdash   \mathbf{extr} \:  \ottnt{a}   :  \ottnt{A} $ where $ \Gamma  \vdash  \ottnt{a}  :   S_{ \ottsym{1} } \:  \ottnt{A}  $.\\
By IH, $ \llbracket   \overline{ \ottnt{a} }   \rrbracket  =  \llbracket  \ottnt{a}  \rrbracket $.\\ Now, $ \overline{  \mathbf{extr} \:  \ottnt{a}  }  =  \mathbf{bind} ^{  \bot  } \:  \ottmv{x}  =   \overline{ \ottnt{a} }   \: \mathbf{in} \:  \ottmv{x} $.\\
Then, \begin{align*}
&  \llbracket    \mathbf{bind} ^{  \bot  } \:  \ottmv{x}  =   \overline{ \ottnt{a} }   \: \mathbf{in} \:  \ottmv{x}    \rrbracket \\
= &     k^{  \bot  }_{  \llbracket  \ottnt{A}  \rrbracket  }   \circ   \mathbf{S}_{   \bot   }   \pi_2     \circ   t^{\mathbf{S}_{   \bot   } }_{  \llbracket  \Gamma  \rrbracket  ,   \llbracket  \ottnt{A}  \rrbracket  }    \circ   \langle   \text{id}_{  \llbracket  \Gamma  \rrbracket  }   ,   \llbracket   \overline{ \ottnt{a} }   \rrbracket   \rangle  \\
= &    k^{  \bot  }_{  \llbracket  \ottnt{A}  \rrbracket  }   \circ   \pi_2    \circ   \langle   \text{id}_{  \llbracket  \Gamma  \rrbracket  }   ,   \llbracket  \ottnt{a}  \rrbracket   \rangle   \hspace*{3pt} [\because \mathbf{S}^{ \bot } \text{ is strong}]\\
= &   \epsilon_{  \llbracket  \ottnt{A}  \rrbracket  }   \circ   \llbracket  \ottnt{a}  \rrbracket   \hspace*{3pt} [\text{By Lemma \ref{Ap2}}]\\
= &  \llbracket    \mathbf{extr} \:  \ottnt{a}    \rrbracket 
\end{align*}

\item \Rref{MC-Fmap}. Have: $ \Gamma  \vdash   \mathbf{lift}^{  \ell  }  \ottnt{f}   :    S_{  \ell  } \:  \ottnt{A}   \to   S_{  \ell  } \:  \ottnt{B}   $ where $ \Gamma  \vdash  \ottnt{f}  :   \ottnt{A}  \to  \ottnt{B}  $.\\
By IH, $ \llbracket   \overline{ \ottnt{f} }   \rrbracket  =  \llbracket  \ottnt{f}  \rrbracket $.\\
Now, $ \overline{   \mathbf{lift}^{  \ell  }  \ottnt{f}   }  =  \lambda  \ottmv{x}  :   \mathcal{T}_{ \ell } \:   \overline{ \ottnt{A} }    .   \mathbf{bind} ^{ \ell } \:  \ottmv{y}  =  \ottmv{x}  \: \mathbf{in} \:   \mathbf{eta} ^{ \ell }   (    \overline{ \ottnt{f} }   \:  \ottmv{y}   )    $.\\
Then, \begin{align*}
&  \llbracket    \lambda  \ottmv{x}  :   \mathcal{T}_{ \ell } \:   \overline{ \ottnt{A} }    .   \mathbf{bind} ^{ \ell } \:  \ottmv{y}  =  \ottmv{x}  \: \mathbf{in} \:   \mathbf{eta} ^{ \ell }   (    \overline{ \ottnt{f} }   \:  \ottmv{y}   )       \rrbracket  \\
= &  \Lambda \Big(       k^{ \ell }_{   \mathbf{S}_{  \ell  }   \llbracket  \ottnt{B}  \rrbracket    }   \circ   \mathbf{S}_{  \ell  }   \llbracket    \mathbf{eta} ^{ \ell }   (    \overline{ \ottnt{f} }   \:  \ottmv{y}   )     \rrbracket     \circ   t^{\mathbf{S}_{  \ell  } }_{    \llbracket  \Gamma  \rrbracket   \times   \mathbf{S}_{  \ell  }   \llbracket  \ottnt{A}  \rrbracket     ,   \llbracket  \ottnt{A}  \rrbracket  }    \circ   \langle   \text{id}_{    \llbracket  \Gamma  \rrbracket   \times   \mathbf{S}_{  \ell  }   \llbracket  \ottnt{A}  \rrbracket     }   ,   \pi_2   \rangle     \Big)  \\
= &  \Lambda \Big(       k^{ \ell }_{   \mathbf{S}_{  \ell  }   \llbracket  \ottnt{B}  \rrbracket    }   \circ   \mathbf{S}_{  \ell  }   (       \mathbf{S}^{   \bot    \sqsubseteq   \ell  }_{  \llbracket  \ottnt{B}  \rrbracket  }   \circ   \eta_{  \llbracket  \ottnt{B}  \rrbracket  }    \circ   \text{app}    \circ   (    \llbracket  \ottnt{f}  \rrbracket   \times   \text{id}_{  \llbracket  \ottnt{A}  \rrbracket  }    )    \circ   \langle    \pi_1   \circ   \pi_1    ,   \pi_2   \rangle    )     \circ   t^{\mathbf{S}_{  \ell  } }_{    \llbracket  \Gamma  \rrbracket   \times   \mathbf{S}_{  \ell  }   \llbracket  \ottnt{A}  \rrbracket     ,   \llbracket  \ottnt{A}  \rrbracket  }    \circ   \langle   \text{id}_{    \llbracket  \Gamma  \rrbracket   \times   \mathbf{S}_{  \ell  }   \llbracket  \ottnt{A}  \rrbracket     }   ,   \pi_2   \rangle     \Big)  \\
= &  \Lambda \Big(          k^{ \ell }_{   \mathbf{S}_{  \ell  }   \llbracket  \ottnt{B}  \rrbracket    }   \circ   \mathbf{S}_{  \ell  }   \mathbf{S}^{   \bot    \sqsubseteq   \ell  }_{  \llbracket  \ottnt{B}  \rrbracket  }     \circ   \mathbf{S}_{  \ell  }   \eta_{  \llbracket  \ottnt{B}  \rrbracket  }     \circ   \mathbf{S}_{  \ell  }   (   \Lambda^{-1}   \llbracket  \ottnt{f}  \rrbracket    )     \circ   \mathbf{S}_{  \ell  }   \langle    \pi_1   \circ   \pi_1    ,   \pi_2   \rangle     \circ   t^{\mathbf{S}_{  \ell  } }_{    \llbracket  \Gamma  \rrbracket   \times   \mathbf{S}_{  \ell  }   \llbracket  \ottnt{A}  \rrbracket     ,   \llbracket  \ottnt{A}  \rrbracket  }    \circ   \langle   \text{id}_{    \llbracket  \Gamma  \rrbracket   \times   \mathbf{S}_{  \ell  }   \llbracket  \ottnt{A}  \rrbracket     }   ,   \pi_2   \rangle     \Big)  \\
= &  \Lambda \Big(          k^{ \ell }_{   \mathbf{S}_{  \ell  }   \llbracket  \ottnt{B}  \rrbracket    }   \circ   \mathbf{S}^{   \bot    \sqsubseteq   \ell  }_{   \mathbf{S}_{  \ell  }   \llbracket  \ottnt{B}  \rrbracket    }    \circ   \eta_{   \mathbf{S}_{  \ell  }   \llbracket  \ottnt{B}  \rrbracket    }    \circ   \mathbf{S}_{  \ell  }   (   \Lambda^{-1}   \llbracket  \ottnt{f}  \rrbracket    )     \circ   \mathbf{S}_{  \ell  }   \langle    \pi_1   \circ   \pi_1    ,   \pi_2   \rangle     \circ   t^{\mathbf{S}_{  \ell  } }_{    \llbracket  \Gamma  \rrbracket   \times   \mathbf{S}_{  \ell  }   \llbracket  \ottnt{A}  \rrbracket     ,   \llbracket  \ottnt{A}  \rrbracket  }    \circ   \langle   \text{id}_{    \llbracket  \Gamma  \rrbracket   \times   \mathbf{S}_{  \ell  }   \llbracket  \ottnt{A}  \rrbracket     }   ,   \pi_2   \rangle     \Big)  \\ & \hspace*{30pt} [\text{By idempotence}] \\
= &  \Lambda \Big(        \text{id}_{  \mathbf{S}_{  \ell  }   \llbracket  \ottnt{B}  \rrbracket   }   \circ   \mathbf{S}_{  \ell  }   (   \Lambda^{-1}   \llbracket  \ottnt{f}  \rrbracket    )     \circ   \mathbf{S}_{  \ell  }   \langle    \pi_1   \circ   \pi_1    ,   \pi_2   \rangle     \circ   t^{\mathbf{S}_{  \ell  } }_{    \llbracket  \Gamma  \rrbracket   \times   \mathbf{S}_{  \ell  }   \llbracket  \ottnt{A}  \rrbracket     ,   \llbracket  \ottnt{A}  \rrbracket  }    \circ   \langle   \text{id}_{    \llbracket  \Gamma  \rrbracket   \times   \mathbf{S}_{  \ell  }   \llbracket  \ottnt{A}  \rrbracket     }   ,   \pi_2   \rangle     \Big) \\ & \hspace*{30pt} [\text{By Theorem \ref{Ap2}}] \\
= &  \Lambda \Big(     \mathbf{S}_{  \ell  }   (   \Lambda^{-1}   \llbracket  \ottnt{f}  \rrbracket    )    \circ   t^{\mathbf{S}_{  \ell  } }_{  \llbracket  \Gamma  \rrbracket  ,   \llbracket  \ottnt{A}  \rrbracket  }     \Big)  \hspace*{3pt} [\text{By commutative diagram \ref{prfcd8}}]\\
= &  \llbracket    \mathbf{lift}^{  \ell  }  \ottnt{f}    \rrbracket 
\end{align*}

The upper rectangle in Figure \ref{prfcd8} commutes by strength and naturality.  

\begin{figure}
\begin{tikzcd}[row sep = 3.5 em, column sep = 3 em]
  \llbracket  \Gamma  \rrbracket   \times   \mathbf{S}_{  \ell  }   \llbracket  \ottnt{A}  \rrbracket    \arrow{rr}{ t^{\mathbf{S}_{  \ell  } }_{  \llbracket  \Gamma  \rrbracket  ,   \llbracket  \ottnt{A}  \rrbracket  } } \arrow{d}{ \langle   \text{id}   ,   \pi_2   \rangle } & &  \mathbf{S}_{  \ell  }   (    \llbracket  \Gamma  \rrbracket   \times   \llbracket  \ottnt{A}  \rrbracket    )   \\
  (    \llbracket  \Gamma  \rrbracket   \times   \mathbf{S}_{  \ell  }   \llbracket  \ottnt{A}  \rrbracket     )   \times   \mathbf{S}_{  \ell  }   \llbracket  \ottnt{A}  \rrbracket    \arrow{rr}{ t^{\mathbf{S}_{  \ell  } }_{    \llbracket  \Gamma  \rrbracket   \times   \mathbf{S}_{  \ell  }   \llbracket  \ottnt{A}  \rrbracket     ,   \llbracket  \ottnt{A}  \rrbracket  } } \arrow{d}{ \alpha^{-1} } & &  \mathbf{S}_{  \ell  }   (    (    \llbracket  \Gamma  \rrbracket   \times   \mathbf{S}_{  \ell  }   \llbracket  \ottnt{A}  \rrbracket     )   \times   \llbracket  \ottnt{A}  \rrbracket    )   \arrow{u}{ \mathbf{S}_{  \ell  }   \langle    \pi_1   \circ   \pi_1    ,   \pi_2   \rangle  } \arrow{d}{ \mathbf{S}_{  \ell  }   \alpha^{-1}  } \\
  \llbracket  \Gamma  \rrbracket   \times   (    \mathbf{S}_{  \ell  }   \llbracket  \ottnt{A}  \rrbracket    \times   \mathbf{S}_{  \ell  }   \llbracket  \ottnt{A}  \rrbracket     )   \arrow{r}{  \text{id}   \times   t^{\mathbf{S}_{  \ell  } }_{   \mathbf{S}_{  \ell  }   \llbracket  \ottnt{A}  \rrbracket    ,   \llbracket  \ottnt{A}  \rrbracket  }  } &   \llbracket  \Gamma  \rrbracket   \times   \mathbf{S}_{  \ell  }   (    \mathbf{S}_{  \ell  }   \llbracket  \ottnt{A}  \rrbracket    \times   \llbracket  \ottnt{A}  \rrbracket    )    \arrow{r}{ t^{\mathbf{S}_{  \ell  } }_{  \llbracket  \Gamma  \rrbracket  ,     \mathbf{S}_{  \ell  }   \llbracket  \ottnt{A}  \rrbracket    \times   \llbracket  \ottnt{A}  \rrbracket    } } \arrow[crossing over, near start]{uul}{  \text{id}   \times   \mathbf{S}_{  \ell  }   \pi_2   } &  \mathbf{S}_{  \ell  }   (    \llbracket  \Gamma  \rrbracket   \times   (    \mathbf{S}_{  \ell  }   \llbracket  \ottnt{A}  \rrbracket    \times   \llbracket  \ottnt{A}  \rrbracket    )    )   \arrow[bend right=90,near start]{uu}{ \mathbf{S}_{  \ell  }   (    \text{id}   \times   \pi_2    )  }
\end{tikzcd}
\caption{Commutative diagram}
\label{prfcd8}
\end{figure}

\item \Rref{MC-Join}. Have: $ \Gamma  \vdash   \mathbf{join}^{  \ell_{{\mathrm{1}}}  ,  \ell_{{\mathrm{2}}}  }  \ottnt{a}   :   S_{   \ell_{{\mathrm{1}}}  \vee  \ell_{{\mathrm{2}}}   } \:  \ottnt{A}  $ where $ \Gamma  \vdash  \ottnt{a}  :   S_{  \ell_{{\mathrm{1}}}  } \:   S_{  \ell_{{\mathrm{2}}}  } \:  \ottnt{A}   $.\\
By IH, $ \llbracket   \overline{ \ottnt{a} }   \rrbracket  =  \llbracket  \ottnt{a}  \rrbracket $.\\ Now, $ \overline{   \mathbf{join}^{  \ell_{{\mathrm{1}}}  ,  \ell_{{\mathrm{2}}}  }  \ottnt{a}   }  =  \mathbf{bind} ^{ \ell_{{\mathrm{1}}} } \:  \ottmv{x}  =   \overline{ \ottnt{a} }   \: \mathbf{in} \:   \mathbf{bind} ^{ \ell_{{\mathrm{2}}} } \:  \ottmv{y}  =  \ottmv{x}  \: \mathbf{in} \:   \mathbf{eta} ^{  \ell_{{\mathrm{1}}}  \vee  \ell_{{\mathrm{2}}}  }  \ottmv{y}   $.\\
Then, \begin{align*}
&  \llbracket    \mathbf{bind} ^{ \ell_{{\mathrm{1}}} } \:  \ottmv{x}  =   \overline{ \ottnt{a} }   \: \mathbf{in} \:   \mathbf{bind} ^{ \ell_{{\mathrm{2}}} } \:  \ottmv{y}  =  \ottmv{x}  \: \mathbf{in} \:   \mathbf{eta} ^{  \ell_{{\mathrm{1}}}  \vee  \ell_{{\mathrm{2}}}  }  \ottmv{y}      \rrbracket  \\
= &   k^{ \ell_{{\mathrm{1}}} }_{  \mathbf{S}_{   \ell_{{\mathrm{1}}}  \vee  \ell_{{\mathrm{2}}}   }   \llbracket  \ottnt{A}  \rrbracket   }   \circ   \mathbf{S}_{  \ell_{{\mathrm{1}}}  }   \llbracket    \mathbf{bind} ^{ \ell_{{\mathrm{2}}} } \:  \ottmv{y}  =  \ottmv{x}  \: \mathbf{in} \:   \mathbf{eta} ^{  \ell_{{\mathrm{1}}}  \vee  \ell_{{\mathrm{2}}}  }  \ottmv{y}     \rrbracket    \circ   t^{\mathbf{S}_{  \ell_{{\mathrm{1}}}  } }_{  \llbracket  \Gamma  \rrbracket  ,   \mathbf{S}_{  \ell_{{\mathrm{2}}}  }   \llbracket  \ottnt{A}  \rrbracket   }   \circ   \langle   \text{id}_{  \llbracket  \Gamma  \rrbracket  }   ,   \llbracket   \overline{ \ottnt{a} }   \rrbracket   \rangle   \\
= &   k^{ \ell_{{\mathrm{1}}} }_{  \mathbf{S}_{   \ell_{{\mathrm{1}}}  \vee  \ell_{{\mathrm{2}}}   }   \llbracket  \ottnt{A}  \rrbracket   }   \circ   \mathbf{S}_{  \ell_{{\mathrm{1}}}  }   (      k^{ \ell_{{\mathrm{2}}} }_{  \mathbf{S}_{   \ell_{{\mathrm{1}}}  \vee  \ell_{{\mathrm{2}}}   }   \llbracket  \ottnt{A}  \rrbracket   }   \circ   \mathbf{S}_{  \ell_{{\mathrm{2}}}  }   \llbracket    \mathbf{eta} ^{  \ell_{{\mathrm{1}}}  \vee  \ell_{{\mathrm{2}}}  }  \ottmv{y}    \rrbracket     \circ   t^{\mathbf{S}_{  \ell_{{\mathrm{2}}}  } }_{    \llbracket  \Gamma  \rrbracket   \times   \mathbf{S}_{  \ell_{{\mathrm{2}}}  }   \llbracket  \ottnt{A}  \rrbracket     ,   \llbracket  \ottnt{A}  \rrbracket  }    \circ   \langle   \text{id}_{    \llbracket  \Gamma  \rrbracket   \times   \mathbf{S}_{  \ell_{{\mathrm{2}}}  }   \llbracket  \ottnt{A}  \rrbracket     }   ,   \pi_2   \rangle    )    \\ & \hspace*{30pt} \circ   t^{\mathbf{S}_{  \ell_{{\mathrm{1}}}  } }_{  \llbracket  \Gamma  \rrbracket  ,   \mathbf{S}_{  \ell_{{\mathrm{2}}}  }   \llbracket  \ottnt{A}  \rrbracket   }   \circ   \langle   \text{id}_{  \llbracket  \Gamma  \rrbracket  }   ,   \llbracket  \ottnt{a}  \rrbracket   \rangle   \\
= &   k^{ \ell_{{\mathrm{1}}} }_{  \mathbf{S}_{   \ell_{{\mathrm{1}}}  \vee  \ell_{{\mathrm{2}}}   }   \llbracket  \ottnt{A}  \rrbracket   }   \circ   \mathbf{S}_{  \ell_{{\mathrm{1}}}  }   (      k^{ \ell_{{\mathrm{2}}} }_{  \mathbf{S}_{   \ell_{{\mathrm{1}}}  \vee  \ell_{{\mathrm{2}}}   }   \llbracket  \ottnt{A}  \rrbracket   }   \circ   \mathbf{S}_{  \ell_{{\mathrm{2}}}  }   (     \mathbf{S}^{   \bot    \sqsubseteq    \ell_{{\mathrm{1}}}  \vee  \ell_{{\mathrm{2}}}   }_{  \llbracket  \ottnt{A}  \rrbracket  }   \circ   \eta_{  \llbracket  \ottnt{A}  \rrbracket  }    \circ   \pi_2    )     \circ   t^{\mathbf{S}_{  \ell_{{\mathrm{2}}}  } }_{    \llbracket  \Gamma  \rrbracket   \times   \mathbf{S}_{  \ell_{{\mathrm{2}}}  }   \llbracket  \ottnt{A}  \rrbracket     ,   \llbracket  \ottnt{A}  \rrbracket  }    \circ   \langle   \text{id}   ,   \pi_2   \rangle    )    \\ & \hspace*{30pt} \circ   t^{\mathbf{S}_{  \ell_{{\mathrm{1}}}  } }_{  \llbracket  \Gamma  \rrbracket  ,   \mathbf{S}_{  \ell_{{\mathrm{2}}}  }   \llbracket  \ottnt{A}  \rrbracket   }   \circ   \langle   \text{id}   ,   \llbracket  \ottnt{a}  \rrbracket   \rangle   \\
= &    \mu^{  \ell_{{\mathrm{1}}}  ,   \ell_{{\mathrm{1}}}  \vee  \ell_{{\mathrm{2}}}   }_{  \llbracket  \ottnt{A}  \rrbracket  }   \circ   \mathbf{S}_{  \ell_{{\mathrm{1}}}  }   \mu^{  \ell_{{\mathrm{2}}}  ,   \ell_{{\mathrm{1}}}  \vee  \ell_{{\mathrm{2}}}   }_{  \llbracket  \ottnt{A}  \rrbracket  }     \circ   \mathbf{S}_{  \ell_{{\mathrm{1}}}  }   (     \mathbf{S}_{  \ell_{{\mathrm{2}}}  }   (     \mathbf{S}^{   \bot    \sqsubseteq    \ell_{{\mathrm{1}}}  \vee  \ell_{{\mathrm{2}}}   }_{  \llbracket  \ottnt{A}  \rrbracket  }   \circ   \eta_{  \llbracket  \ottnt{A}  \rrbracket  }    \circ   \pi_2    )    \circ   t^{\mathbf{S}_{  \ell_{{\mathrm{2}}}  } }_{    \llbracket  \Gamma  \rrbracket   \times   \mathbf{S}_{  \ell_{{\mathrm{2}}}  }   \llbracket  \ottnt{A}  \rrbracket     ,   \llbracket  \ottnt{A}  \rrbracket  }    \circ   \langle   \text{id}   ,   \pi_2   \rangle    )    \\ & \hspace*{30pt} \circ   t^{\mathbf{S}_{  \ell_{{\mathrm{1}}}  } }_{  \llbracket  \Gamma  \rrbracket  ,   \mathbf{S}_{  \ell_{{\mathrm{2}}}  }   \llbracket  \ottnt{A}  \rrbracket   }   \circ   \langle   \text{id}   ,   \llbracket  \ottnt{a}  \rrbracket   \rangle   \hspace*{5pt} [\text{By Lemma \ref{Ap2}}]\\
= &       \mu^{  \ell_{{\mathrm{1}}}  ,   \ell_{{\mathrm{1}}}  \vee  \ell_{{\mathrm{2}}}   }_{  \llbracket  \ottnt{A}  \rrbracket  }   \circ   \mathbf{S}_{  \ell_{{\mathrm{1}}}  }   \mu^{  \ell_{{\mathrm{2}}}  ,   \ell_{{\mathrm{1}}}  \vee  \ell_{{\mathrm{2}}}   }_{  \llbracket  \ottnt{A}  \rrbracket  }     \circ   \mathbf{S}_{  \ell_{{\mathrm{1}}}  }   \mathbf{S}_{  \ell_{{\mathrm{2}}}  }   \mathbf{S}^{   \bot    \sqsubseteq    \ell_{{\mathrm{1}}}  \vee  \ell_{{\mathrm{2}}}   }_{  \llbracket  \ottnt{A}  \rrbracket  }      \circ   \mathbf{S}_{  \ell_{{\mathrm{1}}}  }   \mathbf{S}_{  \ell_{{\mathrm{2}}}  }   \eta_{  \llbracket  \ottnt{A}  \rrbracket  }      \circ   \mathbf{S}_{  \ell_{{\mathrm{1}}}  }   \pi_2     \circ   \mathbf{S}_{  \ell_{{\mathrm{1}}}  }   \langle   \text{id}   ,   \pi_2   \rangle    \\ & \hspace*{30pt} \circ   t^{\mathbf{S}_{  \ell_{{\mathrm{1}}}  } }_{  \llbracket  \Gamma  \rrbracket  ,   \mathbf{S}_{  \ell_{{\mathrm{2}}}  }   \llbracket  \ottnt{A}  \rrbracket   }   \circ   \langle   \text{id}   ,   \llbracket  \ottnt{a}  \rrbracket   \rangle   \hspace*{5pt} [\text{By strength}]\\
= &   \mu^{  \ell_{{\mathrm{1}}}  ,   \ell_{{\mathrm{1}}}  \vee  \ell_{{\mathrm{2}}}   }_{  \llbracket  \ottnt{A}  \rrbracket  }   \circ   \mathbf{S}_{  \ell_{{\mathrm{1}}}  }   \mu^{  \ell_{{\mathrm{2}}}  ,   \ell_{{\mathrm{1}}}  \vee  \ell_{{\mathrm{2}}}   }_{  \llbracket  \ottnt{A}  \rrbracket  }    \circ     \mathbf{S}_{  \ell_{{\mathrm{1}}}  }   \mathbf{S}_{  \ell_{{\mathrm{2}}}  }   \mathbf{S}^{   \bot    \sqsubseteq    \ell_{{\mathrm{1}}}  \vee  \ell_{{\mathrm{2}}}   }_{  \llbracket  \ottnt{A}  \rrbracket  }     \circ   \mathbf{S}_{  \ell_{{\mathrm{1}}}  }   \mathbf{S}_{  \ell_{{\mathrm{2}}}  }   \eta_{  \llbracket  \ottnt{A}  \rrbracket  }      \circ   \pi_2    \circ   \langle   \text{id}   ,   \llbracket  \ottnt{a}  \rrbracket   \rangle   \hspace*{3pt} [\text{By strength}]\\
= &   \mu^{  \ell_{{\mathrm{1}}}  ,   \ell_{{\mathrm{1}}}  \vee  \ell_{{\mathrm{2}}}   }_{  \llbracket  \ottnt{A}  \rrbracket  }   \circ   \mathbf{S}_{  \ell_{{\mathrm{1}}}  }   \mu^{  \ell_{{\mathrm{2}}}  ,   \ell_{{\mathrm{1}}}  \vee  \ell_{{\mathrm{2}}}   }_{  \llbracket  \ottnt{A}  \rrbracket  }    \circ    \mathbf{S}_{  \ell_{{\mathrm{1}}}  }   \mathbf{S}_{  \ell_{{\mathrm{2}}}  }   \mathbf{S}^{   \bot    \sqsubseteq    \ell_{{\mathrm{1}}}  \vee  \ell_{{\mathrm{2}}}   }_{  \llbracket  \ottnt{A}  \rrbracket  }     \circ   \mathbf{S}_{  \ell_{{\mathrm{1}}}  }   \mathbf{S}_{  \ell_{{\mathrm{2}}}  }   \eta_{  \llbracket  \ottnt{A}  \rrbracket  }      \circ   \llbracket  \ottnt{a}  \rrbracket  \\
= &   \mu^{  \ell_{{\mathrm{1}}}  ,  \ell_{{\mathrm{2}}}  }_{  \llbracket  \ottnt{A}  \rrbracket  }   \circ   \llbracket  \ottnt{a}  \rrbracket   \hspace*{3pt} [\text{By Figure \ref{prfcd11}}]\\
= &  \llbracket    \mathbf{join}^{  \ell_{{\mathrm{1}}}  ,  \ell_{{\mathrm{2}}}  }  \ottnt{a}    \rrbracket 
\end{align*}

The diagram in Figure \ref{prfcd11} commutes by lax monoidality and naturality.

\begin{figure}
\begin{tikzcd}[row sep = 3em]
 \mathbf{S}_{  \ell_{{\mathrm{1}}}  }   \mathbf{S}_{  \ell_{{\mathrm{2}}}  }   \llbracket  \ottnt{A}  \rrbracket    \arrow{r}{ \mathbf{S}_{  \ell_{{\mathrm{1}}}  }   \mathbf{S}_{  \ell_{{\mathrm{2}}}  }   \eta_{  \llbracket  \ottnt{A}  \rrbracket  }   } \arrow{dr}[left]{ \text{id} } &  \mathbf{S}_{  \ell_{{\mathrm{1}}}  }   \mathbf{S}_{  \ell_{{\mathrm{2}}}  }   \mathbf{S}_{   \bot   }   \llbracket  \ottnt{A}  \rrbracket     \arrow{r}{ \mathbf{S}_{  \ell_{{\mathrm{1}}}  }   \mathbf{S}_{  \ell_{{\mathrm{2}}}  }   \mathbf{S}^{   \bot    \sqsubseteq    \ell_{{\mathrm{1}}}  \vee  \ell_{{\mathrm{2}}}   }_{  \llbracket  \ottnt{A}  \rrbracket  }   } \arrow{d}{ \mathbf{S}_{  \ell_{{\mathrm{1}}}  }   \mu^{  \ell_{{\mathrm{2}}}  ,   \bot   }_{  \llbracket  \ottnt{A}  \rrbracket  }  } &  \mathbf{S}_{  \ell_{{\mathrm{1}}}  }   \mathbf{S}_{  \ell_{{\mathrm{2}}}  }   \mathbf{S}_{   \ell_{{\mathrm{1}}}  \vee  \ell_{{\mathrm{2}}}   }   \llbracket  \ottnt{A}  \rrbracket     \arrow{d}{ \mathbf{S}_{  \ell_{{\mathrm{1}}}  }   \mu^{  \ell_{{\mathrm{2}}}  ,   \ell_{{\mathrm{1}}}  \vee  \ell_{{\mathrm{2}}}   }_{  \llbracket  \ottnt{A}  \rrbracket  }  } \\
&  \mathbf{S}_{  \ell_{{\mathrm{1}}}  }   \mathbf{S}_{  \ell_{{\mathrm{2}}}  }   \llbracket  \ottnt{A}  \rrbracket    \arrow{r}{ \mathbf{S}_{  \ell_{{\mathrm{1}}}  }   \mathbf{S}^{  \ell_{{\mathrm{2}}}   \sqsubseteq    \ell_{{\mathrm{1}}}  \vee  \ell_{{\mathrm{2}}}   }_{  \llbracket  \ottnt{A}  \rrbracket  }  } \arrow{d}{ \mu^{  \ell_{{\mathrm{1}}}  ,  \ell_{{\mathrm{2}}}  }_{  \llbracket  \ottnt{A}  \rrbracket  } } &  \mathbf{S}_{  \ell_{{\mathrm{1}}}  }   \mathbf{S}_{   \ell_{{\mathrm{1}}}  \vee  \ell_{{\mathrm{2}}}   }   \llbracket  \ottnt{A}  \rrbracket    \arrow{d}{ \mu^{  \ell_{{\mathrm{1}}}  ,   \ell_{{\mathrm{1}}}  \vee  \ell_{{\mathrm{2}}}   }_{  \llbracket  \ottnt{A}  \rrbracket  } } \\
&  \mathbf{S}_{   \ell_{{\mathrm{1}}}  \vee  \ell_{{\mathrm{2}}}   }   \llbracket  \ottnt{A}  \rrbracket   \arrow{r}{ \text{id} } &  \mathbf{S}_{   \ell_{{\mathrm{1}}}  \vee  \ell_{{\mathrm{2}}}   }   \llbracket  \ottnt{A}  \rrbracket  
\end{tikzcd}
\caption{Commutative diagram}
\label{prfcd11}
\end{figure}

\item \Rref{MC-Fork}. Have: $ \Gamma  \vdash   \mathbf{fork}^{  \ell_{{\mathrm{1}}}  ,  \ell_{{\mathrm{2}}}  }  \ottnt{a}   :   S_{  \ell_{{\mathrm{1}}}  } \:   S_{  \ell_{{\mathrm{2}}}  } \:  \ottnt{A}   $ where $ \Gamma  \vdash  \ottnt{a}  :   S_{   \ell_{{\mathrm{1}}}  \vee  \ell_{{\mathrm{2}}}   } \:  \ottnt{A}  $.\\
By IH, $ \llbracket   \overline{ \ottnt{a} }   \rrbracket  =  \llbracket  \ottnt{a}  \rrbracket $.\\ Now $ \overline{   \mathbf{fork}^{  \ell_{{\mathrm{1}}}  ,  \ell_{{\mathrm{2}}}  }  \ottnt{a}   }  =  \mathbf{bind} ^{  \ell_{{\mathrm{1}}}  \vee  \ell_{{\mathrm{2}}}  } \:  \ottmv{x}  =   \overline{ \ottnt{a} }   \: \mathbf{in} \:   \mathbf{eta} ^{ \ell_{{\mathrm{1}}} }   \mathbf{eta} ^{ \ell_{{\mathrm{2}}} }  \ottmv{x}   $.

Then, \begin{align*}
&  \llbracket    \mathbf{bind} ^{  \ell_{{\mathrm{1}}}  \vee  \ell_{{\mathrm{2}}}  } \:  \ottmv{x}  =   \overline{ \ottnt{a} }   \: \mathbf{in} \:   \mathbf{eta} ^{ \ell_{{\mathrm{1}}} }   \mathbf{eta} ^{ \ell_{{\mathrm{2}}} }  \ottmv{x}      \rrbracket  \\
= &   k^{  \ell_{{\mathrm{1}}}  \vee  \ell_{{\mathrm{2}}}  }_{  \mathbf{S}_{  \ell_{{\mathrm{1}}}  }   \mathbf{S}_{  \ell_{{\mathrm{2}}}  }   \llbracket  \ottnt{A}  \rrbracket    }   \circ   \mathbf{S}_{   \ell_{{\mathrm{1}}}  \vee  \ell_{{\mathrm{2}}}   }   \llbracket    \mathbf{eta} ^{ \ell_{{\mathrm{1}}} }   \mathbf{eta} ^{ \ell_{{\mathrm{2}}} }  \ottmv{x}     \rrbracket    \circ   t^{\mathbf{S}_{   \ell_{{\mathrm{1}}}  \vee  \ell_{{\mathrm{2}}}   } }_{  \llbracket  \Gamma  \rrbracket  ,   \llbracket  \ottnt{A}  \rrbracket  }   \circ   \langle   \text{id}_{  \llbracket  \Gamma  \rrbracket  }   ,   \llbracket   \overline{ \ottnt{a} }   \rrbracket   \rangle   \\
= &   k^{  \ell_{{\mathrm{1}}}  \vee  \ell_{{\mathrm{2}}}  }_{  \mathbf{S}_{  \ell_{{\mathrm{1}}}  }   \mathbf{S}_{  \ell_{{\mathrm{2}}}  }   \llbracket  \ottnt{A}  \rrbracket    }   \circ   \mathbf{S}_{   \ell_{{\mathrm{1}}}  \vee  \ell_{{\mathrm{2}}}   }   (       \mathbf{S}^{   \bot    \sqsubseteq   \ell_{{\mathrm{1}}}  }_{  \mathbf{S}_{  \ell_{{\mathrm{2}}}  }   \llbracket  \ottnt{A}  \rrbracket   }   \circ   \eta_{  \mathbf{S}_{  \ell_{{\mathrm{2}}}  }   \llbracket  \ottnt{A}  \rrbracket   }    \circ   \mathbf{S}^{   \bot    \sqsubseteq   \ell_{{\mathrm{2}}}  }_{  \llbracket  \ottnt{A}  \rrbracket  }    \circ   \eta_{  \llbracket  \ottnt{A}  \rrbracket  }    \circ   \pi_2    )    \circ   t^{\mathbf{S}_{   \ell_{{\mathrm{1}}}  \vee  \ell_{{\mathrm{2}}}   } }_{  \llbracket  \Gamma  \rrbracket  ,   \llbracket  \ottnt{A}  \rrbracket  }   \circ   \langle   \text{id}_{  \llbracket  \Gamma  \rrbracket  }   ,   \llbracket  \ottnt{a}  \rrbracket   \rangle   \\
= &    k^{  \ell_{{\mathrm{1}}}  \vee  \ell_{{\mathrm{2}}}  }_{  \mathbf{S}_{  \ell_{{\mathrm{1}}}  }   \mathbf{S}_{  \ell_{{\mathrm{2}}}  }   \llbracket  \ottnt{A}  \rrbracket    }   \circ   \mathbf{S}_{   \ell_{{\mathrm{1}}}  \vee  \ell_{{\mathrm{2}}}   }   (      \mathbf{S}^{   \bot    \sqsubseteq   \ell_{{\mathrm{1}}}  }_{  \mathbf{S}_{  \ell_{{\mathrm{2}}}  }   \llbracket  \ottnt{A}  \rrbracket   }   \circ   \eta_{  \mathbf{S}_{  \ell_{{\mathrm{2}}}  }   \llbracket  \ottnt{A}  \rrbracket   }    \circ   \mathbf{S}^{   \bot    \sqsubseteq   \ell_{{\mathrm{2}}}  }_{  \llbracket  \ottnt{A}  \rrbracket  }    \circ   \eta_{  \llbracket  \ottnt{A}  \rrbracket  }    )     \circ   \llbracket  \ottnt{a}  \rrbracket   \hspace*{3pt} [\because \mathbf{S}_{ \ell_{{\mathrm{1}}}  \vee  \ell_{{\mathrm{2}}} } \text{ is strong}] \\
= &    k^{ \ell_{{\mathrm{1}}} }_{  \mathbf{S}_{  \ell_{{\mathrm{1}}}  }   \mathbf{S}_{  \ell_{{\mathrm{2}}}  }   \llbracket  \ottnt{A}  \rrbracket    }   \circ   \mathbf{S}_{  \ell_{{\mathrm{1}}}  }   k^{ \ell_{{\mathrm{2}}} }_{  \mathbf{S}_{  \ell_{{\mathrm{1}}}  }   \mathbf{S}_{  \ell_{{\mathrm{2}}}  }   \llbracket  \ottnt{A}  \rrbracket    }     \circ   \delta^{  \ell_{{\mathrm{1}}}  ,  \ell_{{\mathrm{2}}}  }_{  \mathbf{S}_{  \ell_{{\mathrm{1}}}  }   \mathbf{S}_{  \ell_{{\mathrm{2}}}  }   \llbracket  \ottnt{A}  \rrbracket    }   \circ   \mathbf{S}_{   \ell_{{\mathrm{1}}}  \vee  \ell_{{\mathrm{2}}}   }   (      \mathbf{S}^{   \bot    \sqsubseteq   \ell_{{\mathrm{1}}}  }_{  \mathbf{S}_{  \ell_{{\mathrm{2}}}  }   \llbracket  \ottnt{A}  \rrbracket   }   \circ   \eta_{  \mathbf{S}_{  \ell_{{\mathrm{2}}}  }   \llbracket  \ottnt{A}  \rrbracket   }    \circ   \mathbf{S}^{   \bot    \sqsubseteq   \ell_{{\mathrm{2}}}  }_{  \llbracket  \ottnt{A}  \rrbracket  }    \circ   \eta_{  \llbracket  \ottnt{A}  \rrbracket  }    )    \circ   \llbracket  \ottnt{a}  \rrbracket  \\ & \hspace*{30pt} [\text{By Lemma \ref{Ap2}}] \\
= &    k^{ \ell_{{\mathrm{1}}} }_{  \mathbf{S}_{  \ell_{{\mathrm{1}}}  }   \mathbf{S}_{  \ell_{{\mathrm{2}}}  }   \llbracket  \ottnt{A}  \rrbracket    }   \circ   \mathbf{S}_{  \ell_{{\mathrm{1}}}  }   (     \mathbf{S}_{  \ell_{{\mathrm{1}}}  }   k^{ \ell_{{\mathrm{2}}} }_{  \mathbf{S}_{  \ell_{{\mathrm{2}}}  }   \llbracket  \ottnt{A}  \rrbracket   }    \circ   \delta^{  \ell_{{\mathrm{1}}}  ,  \ell_{{\mathrm{2}}}  }_{  \mathbf{S}_{  \ell_{{\mathrm{2}}}  }   \llbracket  \ottnt{A}  \rrbracket   }    \circ   \mu^{  \ell_{{\mathrm{2}}}  ,  \ell_{{\mathrm{1}}}  }_{  \mathbf{S}_{  \ell_{{\mathrm{2}}}  }   \llbracket  \ottnt{A}  \rrbracket   }    )     \circ   \delta^{  \ell_{{\mathrm{1}}}  ,  \ell_{{\mathrm{2}}}  }_{  \mathbf{S}_{  \ell_{{\mathrm{1}}}  }   \mathbf{S}_{  \ell_{{\mathrm{2}}}  }   \llbracket  \ottnt{A}  \rrbracket    }  \\ & \hspace*{30pt} \circ   \mathbf{S}_{   \ell_{{\mathrm{1}}}  \vee  \ell_{{\mathrm{2}}}   }   (      \mathbf{S}^{   \bot    \sqsubseteq   \ell_{{\mathrm{1}}}  }_{  \mathbf{S}_{  \ell_{{\mathrm{2}}}  }   \llbracket  \ottnt{A}  \rrbracket   }   \circ   \eta_{  \mathbf{S}_{  \ell_{{\mathrm{2}}}  }   \llbracket  \ottnt{A}  \rrbracket   }    \circ   \mathbf{S}^{   \bot    \sqsubseteq   \ell_{{\mathrm{2}}}  }_{  \llbracket  \ottnt{A}  \rrbracket  }    \circ   \eta_{  \llbracket  \ottnt{A}  \rrbracket  }    )    \circ   \llbracket  \ottnt{a}  \rrbracket   \hspace*{3pt} [\text{By Lemma \ref{Ap2}}]\\
= &    \mu^{  \ell_{{\mathrm{1}}}  ,  \ell_{{\mathrm{1}}}  }_{  \mathbf{S}_{  \ell_{{\mathrm{2}}}  }   \llbracket  \ottnt{A}  \rrbracket   }   \circ   \mathbf{S}_{  \ell_{{\mathrm{1}}}  }   (     \mathbf{S}_{  \ell_{{\mathrm{1}}}  }   \mu^{  \ell_{{\mathrm{2}}}  ,  \ell_{{\mathrm{2}}}  }_{  \llbracket  \ottnt{A}  \rrbracket  }    \circ   \delta^{  \ell_{{\mathrm{1}}}  ,  \ell_{{\mathrm{2}}}  }_{  \mathbf{S}_{  \ell_{{\mathrm{2}}}  }   \llbracket  \ottnt{A}  \rrbracket   }    \circ   \mu^{  \ell_{{\mathrm{2}}}  ,  \ell_{{\mathrm{1}}}  }_{  \mathbf{S}_{  \ell_{{\mathrm{2}}}  }   \llbracket  \ottnt{A}  \rrbracket   }    )     \circ   \delta^{  \ell_{{\mathrm{1}}}  ,  \ell_{{\mathrm{2}}}  }_{  \mathbf{S}_{  \ell_{{\mathrm{1}}}  }   \mathbf{S}_{  \ell_{{\mathrm{2}}}  }   \llbracket  \ottnt{A}  \rrbracket    }   \\ & \hspace*{30pt} \circ   \mathbf{S}_{   \ell_{{\mathrm{1}}}  \vee  \ell_{{\mathrm{2}}}   }   (      \mathbf{S}^{   \bot    \sqsubseteq   \ell_{{\mathrm{1}}}  }_{  \mathbf{S}_{  \ell_{{\mathrm{2}}}  }   \llbracket  \ottnt{A}  \rrbracket   }   \circ   \eta_{  \mathbf{S}_{  \ell_{{\mathrm{2}}}  }   \llbracket  \ottnt{A}  \rrbracket   }    \circ   \mathbf{S}^{   \bot    \sqsubseteq   \ell_{{\mathrm{2}}}  }_{  \llbracket  \ottnt{A}  \rrbracket  }    \circ   \eta_{  \llbracket  \ottnt{A}  \rrbracket  }    )    \circ   \llbracket  \ottnt{a}  \rrbracket   \hspace*{3pt} [\text{By Lemma \ref{Ap2}}]\\
= &    \mathbf{S}_{  \ell_{{\mathrm{1}}}  }   \mu^{   \bot   ,  \ell_{{\mathrm{2}}}  }_{  \llbracket  \ottnt{A}  \rrbracket  }    \circ   \mathbf{S}_{  \ell_{{\mathrm{1}}}  }   \mathbf{S}_{   \bot   }   \mu^{  \ell_{{\mathrm{2}}}  ,  \ell_{{\mathrm{2}}}  }_{  \llbracket  \ottnt{A}  \rrbracket  }      \circ   \mathbf{S}_{  \ell_{{\mathrm{1}}}  }   \delta^{   \bot   ,  \ell_{{\mathrm{2}}}  }_{  \mathbf{S}_{  \ell_{{\mathrm{2}}}  }   \llbracket  \ottnt{A}  \rrbracket   }    \circ     \mathbf{S}_{  \ell_{{\mathrm{1}}}  }   \mathbf{S}_{  \ell_{{\mathrm{2}}}  }   \mathbf{S}^{   \bot    \sqsubseteq   \ell_{{\mathrm{2}}}  }_{  \llbracket  \ottnt{A}  \rrbracket  }     \circ   \mathbf{S}_{  \ell_{{\mathrm{1}}}  }   \mathbf{S}_{  \ell_{{\mathrm{2}}}  }   \eta_{  \llbracket  \ottnt{A}  \rrbracket  }      \circ   \delta^{  \ell_{{\mathrm{1}}}  ,  \ell_{{\mathrm{2}}}  }_{  \llbracket  \ottnt{A}  \rrbracket  }    \circ   \llbracket  \ottnt{a}  \rrbracket   \hspace*{3pt} [\text{By Figure \ref{prfcd9}}]\\
= &   \delta^{  \ell_{{\mathrm{1}}}  ,  \ell_{{\mathrm{2}}}  }_{  \llbracket  \ottnt{A}  \rrbracket  }   \circ   \llbracket  \ottnt{a}  \rrbracket   \hspace*{3pt} [\text{By Figure \ref{prfcd10}}]\\
= &  \llbracket    \mathbf{fork}^{  \ell_{{\mathrm{1}}}  ,  \ell_{{\mathrm{2}}}  }  \ottnt{a}    \rrbracket 
\end{align*}

The diagram in Figure \ref{prfcd9} commutes: all the squares commute by naturality; the ellipses commute because $ \mathbf{S}_{  \ell_{{\mathrm{1}}}  }   \mu^{   \bot   ,  \ell_{{\mathrm{2}}}  }_{  \llbracket  \ottnt{A}  \rrbracket  }   =  \mu^{  \ell_{{\mathrm{1}}}  ,   \bot   }_{  \mathbf{S}_{  \ell_{{\mathrm{2}}}  }   \llbracket  \ottnt{A}  \rrbracket   }  =  \mathbf{S}_{  \ell_{{\mathrm{1}}}  }   \epsilon_{  \mathbf{S}_{  \ell_{{\mathrm{2}}}  }   \llbracket  \ottnt{A}  \rrbracket   }  $; the circular segment commutes by lax monoidality. The diagram in Figure \ref{prfcd10} commutes because $\mathbf{S}$ is a strong monoidal functor.

\begin{figure}
\begin{tikzcd}[row sep = 3em]
 \mathbf{S}_{   \ell_{{\mathrm{1}}}  \vee  \ell_{{\mathrm{2}}}   }   \llbracket  \ottnt{A}  \rrbracket   \arrow{r}{ \mathbf{S}_{   \ell_{{\mathrm{1}}}  \vee  \ell_{{\mathrm{2}}}   }   \eta_{  \llbracket  \ottnt{A}  \rrbracket  }  } \arrow{d}{ \delta^{  \ell_{{\mathrm{1}}}  ,  \ell_{{\mathrm{2}}}  }_{  \llbracket  \ottnt{A}  \rrbracket  } } &  \mathbf{S}_{   \ell_{{\mathrm{1}}}  \vee  \ell_{{\mathrm{2}}}   }   \mathbf{S}_{   \bot   }   \llbracket  \ottnt{A}  \rrbracket    \arrow{r}{ \mathbf{S}_{   \ell_{{\mathrm{1}}}  \vee  \ell_{{\mathrm{2}}}   }   \mathbf{S}^{   \bot    \sqsubseteq   \ell_{{\mathrm{2}}}  }_{  \llbracket  \ottnt{A}  \rrbracket  }  } \arrow{d}{ \delta^{  \ell_{{\mathrm{1}}}  ,  \ell_{{\mathrm{2}}}  }_{  \mathbf{S}_{   \bot   }   \llbracket  \ottnt{A}  \rrbracket   } } &  \mathbf{S}_{   \ell_{{\mathrm{1}}}  \vee  \ell_{{\mathrm{2}}}   }   \mathbf{S}_{  \ell_{{\mathrm{2}}}  }   \llbracket  \ottnt{A}  \rrbracket    \arrow{r}{ \mathbf{S}_{   \ell_{{\mathrm{1}}}  \vee  \ell_{{\mathrm{2}}}   }   \eta_{  \mathbf{S}_{  \ell_{{\mathrm{2}}}  }   \llbracket  \ottnt{A}  \rrbracket   }  } \arrow{d}{ \delta^{  \ell_{{\mathrm{1}}}  ,  \ell_{{\mathrm{2}}}  }_{  \mathbf{S}_{  \ell_{{\mathrm{2}}}  }   \llbracket  \ottnt{A}  \rrbracket   } } &  \mathbf{S}_{   \ell_{{\mathrm{1}}}  \vee  \ell_{{\mathrm{2}}}   }   \mathbf{S}_{   \bot   }   \mathbf{S}_{  \ell_{{\mathrm{2}}}  }   \llbracket  \ottnt{A}  \rrbracket     \arrow{r}{ \mathbf{S}_{   \ell_{{\mathrm{1}}}  \vee  \ell_{{\mathrm{2}}}   }   \mathbf{S}^{   \bot    \sqsubseteq   \ell_{{\mathrm{1}}}  }_{  \mathbf{S}_{  \ell_{{\mathrm{2}}}  }   \llbracket  \ottnt{A}  \rrbracket   }  }  \arrow{d}{ \delta^{  \ell_{{\mathrm{1}}}  ,  \ell_{{\mathrm{2}}}  }_{  \mathbf{S}_{   \bot   }   \mathbf{S}_{  \ell_{{\mathrm{2}}}  }   \llbracket  \ottnt{A}  \rrbracket    } } &  \mathbf{S}_{   \ell_{{\mathrm{1}}}  \vee  \ell_{{\mathrm{2}}}   }   \mathbf{S}_{  \ell_{{\mathrm{1}}}  }   \mathbf{S}_{  \ell_{{\mathrm{2}}}  }   \llbracket  \ottnt{A}  \rrbracket     \arrow{d}{ \delta^{  \ell_{{\mathrm{1}}}  ,  \ell_{{\mathrm{2}}}  }_{  \mathbf{S}_{  \ell_{{\mathrm{1}}}  }   \mathbf{S}_{  \ell_{{\mathrm{2}}}  }   \llbracket  \ottnt{A}  \rrbracket    } }\\
 \mathbf{S}_{  \ell_{{\mathrm{1}}}  }   \mathbf{S}_{  \ell_{{\mathrm{2}}}  }   \llbracket  \ottnt{A}  \rrbracket    \arrow{r}{ \mathbf{S}_{  \ell_{{\mathrm{1}}}  }   \mathbf{S}_{  \ell_{{\mathrm{2}}}  }   \eta_{  \llbracket  \ottnt{A}  \rrbracket  }   } &  \mathbf{S}_{  \ell_{{\mathrm{1}}}  }   \mathbf{S}_{  \ell_{{\mathrm{2}}}  }   \mathbf{S}_{   \bot   }   \llbracket  \ottnt{A}  \rrbracket     \arrow{r}{ \mathbf{S}_{  \ell_{{\mathrm{1}}}  }   \mathbf{S}_{  \ell_{{\mathrm{2}}}  }   \mathbf{S}^{   \bot    \sqsubseteq   \ell_{{\mathrm{2}}}  }_{  \llbracket  \ottnt{A}  \rrbracket  }   } &  \mathbf{S}_{  \ell_{{\mathrm{1}}}  }   \mathbf{S}_{  \ell_{{\mathrm{2}}}  }   \mathbf{S}_{  \ell_{{\mathrm{2}}}  }   \llbracket  \ottnt{A}  \rrbracket     \arrow{r}{ \mathbf{S}_{  \ell_{{\mathrm{1}}}  }   \mathbf{S}_{  \ell_{{\mathrm{2}}}  }   \eta_{  \mathbf{S}_{  \ell_{{\mathrm{2}}}  }   \llbracket  \ottnt{A}  \rrbracket   }   } \arrow[bend right=90]{dr}{ \text{id} } &  \mathbf{S}_{  \ell_{{\mathrm{1}}}  }   \mathbf{S}_{  \ell_{{\mathrm{2}}}  }   \mathbf{S}_{   \bot   }   \mathbf{S}_{  \ell_{{\mathrm{2}}}  }   \llbracket  \ottnt{A}  \rrbracket      \arrow{r}{ \mathbf{S}_{  \ell_{{\mathrm{1}}}  }   \mathbf{S}_{  \ell_{{\mathrm{2}}}  }   \mathbf{S}^{   \bot    \sqsubseteq   \ell_{{\mathrm{1}}}  }_{  \mathbf{S}_{  \ell_{{\mathrm{2}}}  }   \llbracket  \ottnt{A}  \rrbracket   }   } \arrow[bend left = 30]{d}{ \mathbf{S}_{  \ell_{{\mathrm{1}}}  }   \mu^{  \ell_{{\mathrm{2}}}  ,   \bot   }_{  \mathbf{S}_{  \ell_{{\mathrm{2}}}  }   \llbracket  \ottnt{A}  \rrbracket   }  } \arrow[bend right=30]{d}[left]{ \mathbf{S}_{  \ell_{{\mathrm{1}}}  }   \mathbf{S}_{  \ell_{{\mathrm{2}}}  }   \mu^{   \bot   ,  \ell_{{\mathrm{2}}}  }_{  \llbracket  \ottnt{A}  \rrbracket  }   } &  \mathbf{S}_{  \ell_{{\mathrm{1}}}  }   \mathbf{S}_{  \ell_{{\mathrm{2}}}  }   \mathbf{S}_{  \ell_{{\mathrm{1}}}  }   \mathbf{S}_{  \ell_{{\mathrm{2}}}  }   \llbracket  \ottnt{A}  \rrbracket      \arrow{d}{ \mathbf{S}_{  \ell_{{\mathrm{1}}}  }   \mu^{  \ell_{{\mathrm{2}}}  ,  \ell_{{\mathrm{1}}}  }_{  \mathbf{S}_{  \ell_{{\mathrm{2}}}  }   \llbracket  \ottnt{A}  \rrbracket   }  } \\
& & &  \mathbf{S}_{  \ell_{{\mathrm{1}}}  }   \mathbf{S}_{  \ell_{{\mathrm{2}}}  }   \mathbf{S}_{  \ell_{{\mathrm{2}}}  }   \llbracket  \ottnt{A}  \rrbracket     \arrow{r}[below]{ \mathbf{S}_{  \ell_{{\mathrm{1}}}  }   \mathbf{S}^{  \ell_{{\mathrm{2}}}   \sqsubseteq    \ell_{{\mathrm{2}}}  \vee  \ell_{{\mathrm{1}}}   }_{  \mathbf{S}_{  \ell_{{\mathrm{2}}}  }   \llbracket  \ottnt{A}  \rrbracket   }  } \arrow{d}{ \mathbf{S}_{  \ell_{{\mathrm{1}}}  }   \delta^{   \bot   ,  \ell_{{\mathrm{2}}}  }_{  \mathbf{S}_{  \ell_{{\mathrm{2}}}  }   \llbracket  \ottnt{A}  \rrbracket   }  } &  \mathbf{S}_{  \ell_{{\mathrm{1}}}  }   \mathbf{S}_{   \ell_{{\mathrm{2}}}  \vee  \ell_{{\mathrm{1}}}   }   \mathbf{S}_{  \ell_{{\mathrm{2}}}  }   \llbracket  \ottnt{A}  \rrbracket     \arrow{d}{ \mathbf{S}_{  \ell_{{\mathrm{1}}}  }   \delta^{  \ell_{{\mathrm{1}}}  ,  \ell_{{\mathrm{2}}}  }_{  \mathbf{S}_{  \ell_{{\mathrm{2}}}  }   \llbracket  \ottnt{A}  \rrbracket   }  } \\
& & &  \mathbf{S}_{  \ell_{{\mathrm{1}}}  }   \mathbf{S}_{   \bot   }   \mathbf{S}_{  \ell_{{\mathrm{2}}}  }   \mathbf{S}_{  \ell_{{\mathrm{2}}}  }   \llbracket  \ottnt{A}  \rrbracket      \arrow{r}{ \mathbf{S}_{  \ell_{{\mathrm{1}}}  }   \mathbf{S}^{   \bot    \sqsubseteq   \ell_{{\mathrm{1}}}  }_{  \mathbf{S}_{  \ell_{{\mathrm{2}}}  }   \mathbf{S}_{  \ell_{{\mathrm{2}}}  }   \llbracket  \ottnt{A}  \rrbracket    }  } \arrow{d}{ \mathbf{S}_{  \ell_{{\mathrm{1}}}  }   \mathbf{S}_{   \bot   }   \mu^{  \ell_{{\mathrm{2}}}  ,  \ell_{{\mathrm{2}}}  }_{  \llbracket  \ottnt{A}  \rrbracket  }   } &  \mathbf{S}_{  \ell_{{\mathrm{1}}}  }   \mathbf{S}_{  \ell_{{\mathrm{1}}}  }   \mathbf{S}_{  \ell_{{\mathrm{2}}}  }   \mathbf{S}_{  \ell_{{\mathrm{2}}}  }   \llbracket  \ottnt{A}  \rrbracket      \arrow{d}{ \mathbf{S}_{  \ell_{{\mathrm{1}}}  }   \mathbf{S}_{  \ell_{{\mathrm{1}}}  }   \mu^{  \ell_{{\mathrm{2}}}  ,  \ell_{{\mathrm{2}}}  }_{  \llbracket  \ottnt{A}  \rrbracket  }   }\\
& & &  \mathbf{S}_{  \ell_{{\mathrm{1}}}  }   \mathbf{S}_{   \bot   }   \mathbf{S}_{  \ell_{{\mathrm{2}}}  }   \llbracket  \ottnt{A}  \rrbracket     \arrow{r}{ \mathbf{S}_{  \ell_{{\mathrm{1}}}  }   \mathbf{S}^{   \bot    \sqsubseteq   \ell_{{\mathrm{1}}}  }_{  \mathbf{S}_{  \ell_{{\mathrm{2}}}  }   \llbracket  \ottnt{A}  \rrbracket   }  } \arrow[bend left = 30]{d}{ \mu^{  \ell_{{\mathrm{1}}}  ,   \bot   }_{  \mathbf{S}_{  \ell_{{\mathrm{2}}}  }   \llbracket  \ottnt{A}  \rrbracket   } } \arrow[bend right=30]{d}[left]{ \mathbf{S}_{  \ell_{{\mathrm{1}}}  }   \mu^{   \bot   ,  \ell_{{\mathrm{2}}}  }_{  \llbracket  \ottnt{A}  \rrbracket  }  } &  \mathbf{S}_{  \ell_{{\mathrm{1}}}  }   \mathbf{S}_{  \ell_{{\mathrm{1}}}  }   \mathbf{S}_{  \ell_{{\mathrm{2}}}  }   \llbracket  \ottnt{A}  \rrbracket     \arrow{d}{ \mu^{  \ell_{{\mathrm{1}}}  ,  \ell_{{\mathrm{1}}}  }_{  \mathbf{S}_{  \ell_{{\mathrm{2}}}  }   \llbracket  \ottnt{A}  \rrbracket   } } \\
& & &  \mathbf{S}_{  \ell_{{\mathrm{1}}}  }   \mathbf{S}_{  \ell_{{\mathrm{2}}}  }   \llbracket  \ottnt{A}  \rrbracket    \arrow{r}{ \text{id} } &  \mathbf{S}_{  \ell_{{\mathrm{1}}}  }   \mathbf{S}_{  \ell_{{\mathrm{2}}}  }   \llbracket  \ottnt{A}  \rrbracket   
\end{tikzcd}
\caption{Commutative diagram}
\label{prfcd9}
\end{figure}

\begin{figure}
\begin{tikzcd}[row sep = 3 em]
 \mathbf{S}_{  \ell_{{\mathrm{1}}}  }   \mathbf{S}_{  \ell_{{\mathrm{2}}}  }   \llbracket  \ottnt{A}  \rrbracket    \arrow{r}{ \mathbf{S}_{  \ell_{{\mathrm{1}}}  }   \mathbf{S}_{  \ell_{{\mathrm{2}}}  }   \eta_{  \llbracket  \ottnt{A}  \rrbracket  }   } \arrow{dr}{ \text{id} } &  \mathbf{S}_{  \ell_{{\mathrm{1}}}  }   \mathbf{S}_{  \ell_{{\mathrm{2}}}  }   \mathbf{S}_{   \bot   }   \llbracket  \ottnt{A}  \rrbracket     \arrow{r}{ \mathbf{S}_{  \ell_{{\mathrm{1}}}  }   \mathbf{S}_{  \ell_{{\mathrm{2}}}  }   \mathbf{S}^{   \bot    \sqsubseteq   \ell_{{\mathrm{2}}}  }_{  \llbracket  \ottnt{A}  \rrbracket  }   } \arrow{d}{ \mathbf{S}_{  \ell_{{\mathrm{1}}}  }   \mu^{  \ell_{{\mathrm{2}}}  ,   \bot   }_{  \llbracket  \ottnt{A}  \rrbracket  }  } &  \mathbf{S}_{  \ell_{{\mathrm{1}}}  }   \mathbf{S}_{  \ell_{{\mathrm{2}}}  }   \mathbf{S}_{  \ell_{{\mathrm{2}}}  }   \llbracket  \ottnt{A}  \rrbracket     \arrow{r}{ \mathbf{S}_{  \ell_{{\mathrm{1}}}  }   \delta^{   \bot   ,  \ell_{{\mathrm{2}}}  }_{  \mathbf{S}_{  \ell_{{\mathrm{2}}}  }   \llbracket  \ottnt{A}  \rrbracket   }  } \arrow{dr}{ \text{id} } &  \mathbf{S}_{  \ell_{{\mathrm{1}}}  }   \mathbf{S}_{   \bot   }   \mathbf{S}_{  \ell_{{\mathrm{2}}}  }   \mathbf{S}_{  \ell_{{\mathrm{2}}}  }   \llbracket  \ottnt{A}  \rrbracket      \arrow{r}{ \mathbf{S}_{  \ell_{{\mathrm{1}}}  }   \mathbf{S}_{   \bot   }   \mu^{  \ell_{{\mathrm{2}}}  ,  \ell_{{\mathrm{2}}}  }_{  \llbracket  \ottnt{A}  \rrbracket  }   } \arrow{d}{ \mathbf{S}_{  \ell_{{\mathrm{1}}}  }   \mu^{   \bot   ,  \ell_{{\mathrm{2}}}  }_{  \mathbf{S}_{  \ell_{{\mathrm{2}}}  }   \llbracket  \ottnt{A}  \rrbracket   }  } &  \mathbf{S}_{  \ell_{{\mathrm{1}}}  }   \mathbf{S}_{   \bot   }   \mathbf{S}_{  \ell_{{\mathrm{2}}}  }   \llbracket  \ottnt{A}  \rrbracket     \arrow{d}{ \mathbf{S}_{  \ell_{{\mathrm{1}}}  }   \mu^{   \bot   ,  \ell_{{\mathrm{2}}}  }_{  \llbracket  \ottnt{A}  \rrbracket  }  }\\
&  \mathbf{S}_{  \ell_{{\mathrm{1}}}  }   \mathbf{S}_{  \ell_{{\mathrm{2}}}  }   \llbracket  \ottnt{A}  \rrbracket    \arrow[bend right=10]{rrr}{ \text{id} } & &  \mathbf{S}_{  \ell_{{\mathrm{1}}}  }   \mathbf{S}_{  \ell_{{\mathrm{2}}}  }   \mathbf{S}_{  \ell_{{\mathrm{2}}}  }   \llbracket  \ottnt{A}  \rrbracket     \arrow{r}{ \mathbf{S}_{  \ell_{{\mathrm{1}}}  }   \mu^{  \ell_{{\mathrm{2}}}  ,  \ell_{{\mathrm{2}}}  }_{  \llbracket  \ottnt{A}  \rrbracket  }  } &  \mathbf{S}_{  \ell_{{\mathrm{1}}}  }   \mathbf{S}_{  \ell_{{\mathrm{2}}}  }   \llbracket  \ottnt{A}  \rrbracket   
\end{tikzcd}
\caption{Commutative diagram}
\label{prfcd10}
\end{figure}

\item \Rref{MC-Up}. Have: $ \Gamma  \vdash   \mathbf{up}^{  \ell_{{\mathrm{1}}}  ,  \ell_{{\mathrm{2}}}  }  \ottnt{a}   :   S_{  \ell_{{\mathrm{2}}}  } \:  \ottnt{A}  $ where $ \Gamma  \vdash  \ottnt{a}  :   S_{  \ell_{{\mathrm{1}}}  } \:  \ottnt{A}  $ and $ \ell_{{\mathrm{1}}}  \sqsubseteq  \ell_{{\mathrm{2}}} $.\\
By IH, $ \llbracket   \overline{ \ottnt{a} }   \rrbracket  =  \llbracket  \ottnt{a}  \rrbracket $.\\ Now, $ \overline{   \mathbf{up}^{  \ell_{{\mathrm{1}}}  ,  \ell_{{\mathrm{2}}}  }  \ottnt{a}   }  =  \mathbf{bind} ^{ \ell_{{\mathrm{1}}} } \:  \ottmv{x}  =   \overline{ \ottnt{a} }   \: \mathbf{in} \:   \mathbf{eta} ^{ \ell_{{\mathrm{2}}} }  \ottmv{x}  $.\\
Then, \begin{align*}
&  \llbracket    \mathbf{bind} ^{ \ell_{{\mathrm{1}}} } \:  \ottmv{x}  =   \overline{ \ottnt{a} }   \: \mathbf{in} \:   \mathbf{eta} ^{ \ell_{{\mathrm{2}}} }  \ottmv{x}     \rrbracket  \\
= &     k^{ \ell_{{\mathrm{1}}} }_{  \mathbf{S}_{  \ell_{{\mathrm{2}}}  }   \llbracket  \ottnt{A}  \rrbracket   }   \circ   \mathbf{S}_{  \ell_{{\mathrm{1}}}  }   \llbracket    \mathbf{eta} ^{ \ell_{{\mathrm{2}}} }  \ottmv{x}    \rrbracket     \circ   t^{\mathbf{S}_{  \ell_{{\mathrm{1}}}  } }_{  \llbracket  \Gamma  \rrbracket  ,   \llbracket  \ottnt{A}  \rrbracket  }    \circ   \langle   \text{id}_{  \llbracket  \Gamma  \rrbracket  }   ,   \llbracket  \ottnt{a}  \rrbracket   \rangle   \\
= &     k^{ \ell_{{\mathrm{1}}} }_{  \mathbf{S}_{  \ell_{{\mathrm{2}}}  }   \llbracket  \ottnt{A}  \rrbracket   }   \circ   \mathbf{S}_{  \ell_{{\mathrm{1}}}  }   (     \mathbf{S}^{   \bot    \sqsubseteq   \ell_{{\mathrm{2}}}  }_{  \llbracket  \ottnt{A}  \rrbracket  }   \circ   \eta_{  \llbracket  \ottnt{A}  \rrbracket  }    \circ   \pi_2    )     \circ   t^{\mathbf{S}_{  \ell_{{\mathrm{1}}}  } }_{  \llbracket  \Gamma  \rrbracket  ,   \llbracket  \ottnt{A}  \rrbracket  }    \circ   \langle   \text{id}_{  \llbracket  \Gamma  \rrbracket  }   ,   \llbracket  \ottnt{a}  \rrbracket   \rangle   \\
= &     \mu^{  \ell_{{\mathrm{1}}}  ,  \ell_{{\mathrm{2}}}  }_{  \llbracket  \ottnt{A}  \rrbracket  }   \circ   \mathbf{S}_{  \ell_{{\mathrm{1}}}  }   (     \mathbf{S}^{   \bot    \sqsubseteq   \ell_{{\mathrm{2}}}  }_{  \llbracket  \ottnt{A}  \rrbracket  }   \circ   \eta_{  \llbracket  \ottnt{A}  \rrbracket  }    \circ   \pi_2    )     \circ   t^{\mathbf{S}_{  \ell_{{\mathrm{1}}}  } }_{  \llbracket  \Gamma  \rrbracket  ,   \llbracket  \ottnt{A}  \rrbracket  }    \circ   \langle   \text{id}_{  \llbracket  \Gamma  \rrbracket  }   ,   \llbracket  \ottnt{a}  \rrbracket   \rangle   \hspace*{3pt} [\text{By Theorem \ref{Ap2}}]\\
= &      \mu^{  \ell_{{\mathrm{1}}}  ,  \ell_{{\mathrm{2}}}  }_{  \llbracket  \ottnt{A}  \rrbracket  }   \circ   \mathbf{S}_{  \ell_{{\mathrm{1}}}  }   (     \mathbf{S}^{  \ell_{{\mathrm{1}}}   \sqsubseteq   \ell_{{\mathrm{2}}}  }_{  \llbracket  \ottnt{A}  \rrbracket  }   \circ   \mathbf{S}^{   \bot    \sqsubseteq   \ell_{{\mathrm{1}}}  }_{  \llbracket  \ottnt{A}  \rrbracket  }    \circ   \eta_{  \llbracket  \ottnt{A}  \rrbracket  }    )     \circ   \mathbf{S}_{  \ell_{{\mathrm{1}}}  }   \pi_2     \circ   t^{\mathbf{S}_{  \ell_{{\mathrm{1}}}  } }_{  \llbracket  \Gamma  \rrbracket  ,   \llbracket  \ottnt{A}  \rrbracket  }    \circ   \langle   \text{id}_{  \llbracket  \Gamma  \rrbracket  }   ,   \llbracket  \ottnt{a}  \rrbracket   \rangle   \\
= &      \mu^{  \ell_{{\mathrm{1}}}  ,  \ell_{{\mathrm{2}}}  }_{  \llbracket  \ottnt{A}  \rrbracket  }   \circ   \mathbf{S}_{  \ell_{{\mathrm{1}}}  }   \mathbf{S}^{  \ell_{{\mathrm{1}}}   \sqsubseteq   \ell_{{\mathrm{2}}}  }_{  \llbracket  \ottnt{A}  \rrbracket  }     \circ   \mathbf{S}_{  \ell_{{\mathrm{1}}}  }   (    \mathbf{S}^{   \bot    \sqsubseteq   \ell_{{\mathrm{1}}}  }_{  \llbracket  \ottnt{A}  \rrbracket  }   \circ   \eta_{  \llbracket  \ottnt{A}  \rrbracket  }    )     \circ   \pi_2    \circ   \langle   \text{id}_{  \llbracket  \Gamma  \rrbracket  }   ,   \llbracket  \ottnt{a}  \rrbracket   \rangle   \hspace*{3pt} [\because \mathbf{S}_{\ell_{{\mathrm{1}}}} \text{ is strong} ] \\
= &     \mathbf{S}^{  \ell_{{\mathrm{1}}}   \sqsubseteq   \ell_{{\mathrm{2}}}  }_{  \llbracket  \ottnt{A}  \rrbracket  }   \circ   \mu^{  \ell_{{\mathrm{1}}}  ,  \ell_{{\mathrm{1}}}  }_{  \llbracket  \ottnt{A}  \rrbracket  }    \circ   \mathbf{S}_{  \ell_{{\mathrm{1}}}  }   (    \mathbf{S}^{   \bot    \sqsubseteq   \ell_{{\mathrm{1}}}  }_{  \llbracket  \ottnt{A}  \rrbracket  }   \circ   \eta_{  \llbracket  \ottnt{A}  \rrbracket  }    )     \circ   \llbracket  \ottnt{a}  \rrbracket   \hspace*{3pt} [\text{By naturality}]\\
= &   \mathbf{S}^{  \ell_{{\mathrm{1}}}   \sqsubseteq   \ell_{{\mathrm{2}}}  }_{  \llbracket  \ottnt{A}  \rrbracket  }   \circ   \llbracket  \ottnt{a}  \rrbracket   \hspace*{3pt} [\because (\mathbf{S}_{\ell_{{\mathrm{1}}}},\mathbf{S}^{  \bot   \sqsubseteq  \ell_{{\mathrm{1}}} } \circ \eta, \mu^{\ell_{{\mathrm{1}}},\ell_{{\mathrm{1}}}}) \text{ is a monad}]\\
= &  \llbracket    \mathbf{up}^{  \ell_{{\mathrm{1}}}  ,  \ell_{{\mathrm{2}}}  }  \ottnt{a}    \rrbracket 
\end{align*}

\end{itemize}

\end{proof}


\begin{theorem}\label{equiv2}
If $ \Gamma  \vdash  \ottnt{a}  :  \ottnt{A} $ in \ED{}($ \mathcal{L} $), then $ \llbracket   \underline{ \ottnt{a} }   \rrbracket  =  \llbracket  \ottnt{a}  \rrbracket $.
\end{theorem}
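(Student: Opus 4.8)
The plan is to fix an arbitrary model $(\Ct,\mathbf{S})$ of GMCC($\mathcal{L}$) — which, by Section~\ref{modeldcce}, is simultaneously a model of \ED{}($\mathcal{L}$) — and to prove $\llbracket \underline{a} \rrbracket_{(\Ct,\mathbf{S})} = \llbracket a \rrbracket_{(\Ct,\mathbf{S})}$ by induction on the derivation $\Gamma \vdash a : A$ in \ED{}($\mathcal{L}$), following the same pattern as the proof of Theorem~\ref{equiv1}. Because the type translation sends $\mathcal{T}_{\ell}$ to $S_{\ell}$ and both calculi interpret the graded constructor as $\mathbf{S}_{\ell}$, we have $\llbracket \underline{A} \rrbracket = \llbracket A \rrbracket$; the $\lambda$-calculus term formers are translated homomorphically and receive identical interpretations in the two calculi, so those cases are immediate from the induction hypotheses. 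Only the rules \rref{DCC-Eta} and \rref{DCC-Bind} require genuine work.

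For \rref{DCC-Eta}, where $\underline{\mathbf{eta}^{\ell}\,a} = \mathbf{up}^{\bot,\ell}(\mathbf{ret}\;\underline{a})$, I would simply unfold the GMCC interpretation to $\mathbf{S}^{\bot \sqsubseteq \ell}_{\llbracket A \rrbracket} \circ \eta_{\llbracket A \rrbracket} \circ \llbracket \underline{a} \rrbracket$, apply the induction hypothesis $\llbracket \underline{a} \rrbracket = \llbracket a \rrbracket$, and recognise the result as the \ED{} interpretation $\llbracket \mathbf{eta}^{\ell}\,a \rrbracket = \mathbf{S}^{\bot \sqsubseteq \ell}_{\llbracket A \rrbracket} \circ \eta_{\llbracket A \rrbracket} \circ \llbracket a \rrbracket$.

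The substantive case is \rref{DCC-Bind}. Here $\underline{\mathbf{bind}^{\ell}\,x = a\,\mathbf{in}\,b} = j^{\ell}_{\underline{B}}\;\big((\mathbf{lift}^{\ell}(\lambda x.\underline{b}))\;\underline{a}\big)$, with $j^{\ell}_{\underline{B}}$ the GMCC term constructed in Lemma~\ref{protectId}, while the \ED{} interpretation (from Theorem~\ref{dcceWD}) is $k^{\ell}_{\llbracket B \rrbracket} \circ \mathbf{S}_{\ell}\llbracket b \rrbracket \circ t^{\mathbf{S}_{\ell}}_{\llbracket \Gamma \rrbracket,\llbracket A \rrbracket} \circ \langle \I, \llbracket a \rrbracket \rangle$, with $k^{\ell}_{\llbracket B \rrbracket}$ the $\mathbf{S}_{\ell}$-algebra structure produced in Lemma~\ref{Ap2}. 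The crucial auxiliary fact, which I would isolate as a separate lemma proved by induction on the derivation of $\ell \sqsubseteq B$, is that the GMCC interpretation of $j^{\ell}_{\underline{B}}$ coincides — after uncurrying — with $k^{\ell}_{\llbracket B \rrbracket}$. Establishing it amounts to checking that the recursion defining $j^{\ell}_{\underline{B}}$ in Lemma~\ref{protectId} runs in lockstep with the recursion defining $k^{\ell}_{\llbracket B \rrbracket}$ in Lemma~\ref{Ap2}: \rref{Prot-Monad} gives $\llbracket \lambda x.\mathbf{join}^{\ell_1,\ell_2}\,x \rrbracket = \mu^{\ell_1,\ell_2} = k^{\ell_1}_{\mathbf{S}_{\ell_2}\llbracket B \rrbracket}$, \rref{Prot-Minimum} gives $\epsilon = k^{\bot}$, and \rref{Prot-Already}, \rref{Prot-Combine}, \rref{Prot-Prod}, \rref{Prot-Fun} reduce to the induction hypothesis after short naturality- and strength-based calculations of exactly the kind found in Figures~\ref{prfcd5}--\ref{prfcd7}. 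Granting the auxiliary lemma, the \rref{DCC-Bind} case closes by unfolding the GMCC interpretation of $j^{\ell}_{\underline{B}}\;\big((\mathbf{lift}^{\ell}(\lambda x.\underline{b}))\;\underline{a}\big)$, simplifying $\llbracket (\mathbf{lift}^{\ell}(\lambda x.\underline{b}))\;\underline{a} \rrbracket$ to $\mathbf{S}_{\ell}\llbracket \underline{b} \rrbracket \circ t^{\mathbf{S}_{\ell}}_{\llbracket \Gamma \rrbracket,\llbracket A \rrbracket} \circ \langle \I, \llbracket \underline{a} \rrbracket \rangle$ (the currying/uncurrying bookkeeping is routine, as in the proof of Theorem~\ref{equiv1}), composing on the left with $\llbracket j^{\ell}_{\underline{B}} \rrbracket = k^{\ell}_{\llbracket B \rrbracket}$, and invoking the induction hypotheses $\llbracket \underline{a} \rrbracket = \llbracket a \rrbracket$ and $\llbracket \underline{b} \rrbracket = \llbracket b \rrbracket$.

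I expect the main obstacle to be the function case \rref{Prot-Fun} of the auxiliary lemma: there $j^{\ell}_{\underline{A} \to \underline{B}}$ is built by $\lambda$-abstraction, lifting an application, and post-composition with $j^{\ell}_{\underline{B}}$, so matching its interpretation with the transpose-style definition of $k^{\ell}_{\llbracket B \rrbracket^{\llbracket A \rrbracket}}$ in Lemma~\ref{Ap2} requires the same careful transport of strengths through currying as in the \rref{Prot-Fun} step of that lemma (cf.\ Figure~\ref{prfcd7}). Everything else is direct computation, reusing machinery already assembled for Theorems~\ref{equiv1} and \ref{dcceWD}.
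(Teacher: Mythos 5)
Your proposal follows essentially the same route as the paper: induction on the typing derivation, with the \rref{DCC-Bind} case resting on an auxiliary lemma (the paper's Lemma~\ref{jk}) stating that the uncurried interpretation of $j^{\ell}_{\underline{B}}$ equals $k^{\ell}_{\llbracket B \rrbracket}$, itself proved by induction on $\ell \sqsubseteq B$ so that the two recursions run in lockstep. Your identification of the \rref{Prot-Fun} case as the delicate one matches where the paper's calculation is heaviest, so the plan is sound as stated.
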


\begin{proof}
By induction on $ \Gamma  \vdash  \ottnt{a}  :  \ottnt{A} $. Note that $ \llbracket   \underline{ \ottnt{A} }   \rrbracket  =  \llbracket  \ottnt{A}  \rrbracket $.

\begin{itemize}
\item $\lambda$-calculus. By IH.

\item \Rref{DCC-Eta}. Have: $ \Gamma  \vdash   \mathbf{eta} ^{ \ell }  \ottnt{a}   :   \mathcal{T}_{ \ell } \:  \ottnt{A}  $ where $ \Gamma  \vdash  \ottnt{a}  :  \ottnt{A} $.\\
By IH, $ \llbracket   \underline{ \ottnt{a} }   \rrbracket  =  \llbracket  \ottnt{a}  \rrbracket $.\\ Now, $ \underline{   \mathbf{eta} ^{ \ell }  \ottnt{a}   }  =  \mathbf{up}^{   \bot   ,  \ell  }   (   \ottkw{ret}  \:   \underline{ \ottnt{a} }    )  $.\\
Then, \begin{align*}
&  \llbracket    \mathbf{up}^{   \bot   ,  \ell  }   (   \ottkw{ret}  \:   \underline{ \ottnt{a} }    )     \rrbracket \\
= &    \mathbf{S}^{   \bot    \sqsubseteq   \ell  }_{  \llbracket  \ottnt{A}  \rrbracket  }   \circ   \eta_{  \llbracket  \ottnt{A}  \rrbracket  }    \circ   \llbracket   \underline{ \ottnt{a} }   \rrbracket   \\
= &    \mathbf{S}^{   \bot    \sqsubseteq   \ell  }_{  \llbracket  \ottnt{A}  \rrbracket  }   \circ   \eta_{  \llbracket  \ottnt{A}  \rrbracket  }    \circ   \llbracket  \ottnt{a}  \rrbracket   \hspace*{3pt} [\text{By IH}]\\
= &  \llbracket    \mathbf{eta} ^{ \ell }  \ottnt{a}    \rrbracket  
\end{align*} 

\item \Rref{DCC-Bind}. Have: $ \Gamma  \vdash   \mathbf{bind} ^{ \ell } \:  \ottmv{x}  =  \ottnt{a}  \: \mathbf{in} \:  \ottnt{b}   :  \ottnt{B} $ where $ \Gamma  \vdash  \ottnt{a}  :   \mathcal{T}_{ \ell } \:  \ottnt{A}  $ and $  \Gamma  ,   \ottmv{x}  :  \ottnt{A}    \vdash  \ottnt{b}  :  \ottnt{B} $ and $ \ell  \sqsubseteq  \ottnt{B} $.

By IH, $ \llbracket   \underline{ \ottnt{a} }   \rrbracket  =  \llbracket  \ottnt{a}  \rrbracket $ and $ \llbracket   \underline{ \ottnt{b} }   \rrbracket  =  \llbracket  \ottnt{b}  \rrbracket $. \\
Now, $ \underline{   \mathbf{bind} ^{ \ell } \:  \ottmv{x}  =  \ottnt{a}  \: \mathbf{in} \:  \ottnt{b}   }  =   j^{ \ell }_{  \underline{ \ottnt{B} }  }   \:   (    (   \mathbf{lift}^{  \ell  }   (   \lambda  \ottmv{x}  .   \underline{ \ottnt{b} }    )    )   \:   \underline{ \ottnt{a} }    )  $.\\
Then, \begin{align*}
&  \llbracket     j^{ \ell }_{  \underline{ \ottnt{B} }  }   \:   (    (   \mathbf{lift}^{  \ell  }   (   \lambda  \ottmv{x}  .   \underline{ \ottnt{b} }    )    )   \:   \underline{ \ottnt{a} }    )     \rrbracket \\
= &   \text{app}   \circ   \langle    \llbracket    j^{ \ell }_{  \underline{ \ottnt{B} }  }    \rrbracket   \circ   \langle \rangle    ,   \llbracket     (   \mathbf{lift}^{  \ell  }   (   \lambda  \ottmv{x}  .   \underline{ \ottnt{b} }    )    )   \:   \underline{ \ottnt{a} }     \rrbracket   \rangle   \\
= &   \text{app}   \circ   \langle    \llbracket    j^{ \ell }_{  \underline{ \ottnt{B} }  }    \rrbracket   \circ   \langle \rangle    ,    \text{app}   \circ   \langle   \llbracket    \mathbf{lift}^{  \ell  }   (   \lambda  \ottmv{x}  .   \underline{ \ottnt{b} }    )     \rrbracket   ,   \llbracket   \underline{ \ottnt{a} }   \rrbracket   \rangle    \rangle   \\
= &   \text{app}   \circ   \langle    \llbracket    j^{ \ell }_{  \underline{ \ottnt{B} }  }    \rrbracket   \circ   \langle \rangle    ,    \text{app}   \circ   \langle   \Lambda \Big(     \mathbf{S}_{  \ell  }   (   \Lambda^{-1}   \llbracket   \lambda  \ottmv{x}  .   \underline{ \ottnt{b} }    \rrbracket    )    \circ   t^{\mathbf{S}_{  \ell  } }_{  \llbracket  \Gamma  \rrbracket  ,   \llbracket  \ottnt{A}  \rrbracket  }     \Big)   ,   \llbracket   \underline{ \ottnt{a} }   \rrbracket   \rangle    \rangle   \\
= &   \text{app}   \circ   \langle    \llbracket    j^{ \ell }_{  \underline{ \ottnt{B} }  }    \rrbracket   \circ   \langle \rangle    ,    \text{app}   \circ   \langle   \Lambda \Big(     \mathbf{S}_{  \ell  }   \llbracket   \underline{ \ottnt{b} }   \rrbracket    \circ   t^{\mathbf{S}_{  \ell  } }_{  \llbracket  \Gamma  \rrbracket  ,   \llbracket  \ottnt{A}  \rrbracket  }     \Big)   ,   \llbracket   \underline{ \ottnt{a} }   \rrbracket   \rangle    \rangle   \\
= &   \text{app}   \circ   \langle    \llbracket    j^{ \ell }_{  \underline{ \ottnt{B} }  }    \rrbracket   \circ   \langle \rangle    ,    \text{app}   \circ   \langle   \Lambda \Big(     \mathbf{S}_{  \ell  }   \llbracket  \ottnt{b}  \rrbracket    \circ   t^{\mathbf{S}_{  \ell  } }_{  \llbracket  \Gamma  \rrbracket  ,   \llbracket  \ottnt{A}  \rrbracket  }     \Big)   ,   \llbracket  \ottnt{a}  \rrbracket   \rangle    \rangle   \hspace*{3pt} [\text{By IH}]\\
= &   \text{app}   \circ   \langle    \llbracket    j^{ \ell }_{  \underline{ \ottnt{B} }  }    \rrbracket   \circ   \langle \rangle    ,     \mathbf{S}_{  \ell  }   \llbracket  \ottnt{b}  \rrbracket    \circ   t^{\mathbf{S}_{  \ell  } }_{  \llbracket  \Gamma  \rrbracket  ,   \llbracket  \ottnt{A}  \rrbracket  }    \circ   \langle   \text{id}_{  \llbracket  \Gamma  \rrbracket  }   ,   \llbracket  \ottnt{a}  \rrbracket   \rangle    \rangle  \\
= &   \Lambda^{-1}   \llbracket    j^{ \ell }_{  \underline{ \ottnt{B} }  }    \rrbracket    \circ   \langle   \langle \rangle   ,     \mathbf{S}_{  \ell  }   \llbracket  \ottnt{b}  \rrbracket    \circ   t^{\mathbf{S}_{  \ell  } }_{  \llbracket  \Gamma  \rrbracket  ,   \llbracket  \ottnt{A}  \rrbracket  }    \circ   \langle   \text{id}_{  \llbracket  \Gamma  \rrbracket  }   ,   \llbracket  \ottnt{a}  \rrbracket   \rangle    \rangle  \\
= &      \Lambda^{-1}   \llbracket    j^{ \ell }_{  \underline{ \ottnt{B} }  }    \rrbracket    \circ   \langle   \langle \rangle   ,   \text{id}_{  \mathbf{S}_{  \ell  }   \llbracket  \ottnt{B}  \rrbracket   }   \rangle    \circ   \mathbf{S}_{  \ell  }   \llbracket  \ottnt{b}  \rrbracket     \circ   t^{\mathbf{S}_{  \ell  } }_{  \llbracket  \Gamma  \rrbracket  ,   \llbracket  \ottnt{A}  \rrbracket  }    \circ   \langle   \text{id}_{  \llbracket  \Gamma  \rrbracket  }   ,   \llbracket  \ottnt{a}  \rrbracket   \rangle  \\
= &     k^{ \ell }_{  \llbracket  \ottnt{B}  \rrbracket  }   \circ   \mathbf{S}_{  \ell  }   \llbracket  \ottnt{b}  \rrbracket     \circ   t^{\mathbf{S}_{  \ell  } }_{  \llbracket  \Gamma  \rrbracket  ,   \llbracket  \ottnt{A}  \rrbracket  }    \circ   \langle   \text{id}_{  \llbracket  \Gamma  \rrbracket  }   ,   \llbracket  \ottnt{a}  \rrbracket   \rangle   \hspace*{3pt} [\text{By Lemma \ref{jk}}]\\
= &  \llbracket    \mathbf{bind} ^{ \ell } \:  \ottmv{x}  =  \ottnt{a}  \: \mathbf{in} \:  \ottnt{b}    \rrbracket 
\end{align*} 

\end{itemize}
\end{proof}


\begin{lemma}\label{jk}
If $ \ell  \sqsubseteq  \ottnt{A} $ in \ED{}, then $ k^{ \ell }_{  \llbracket  \ottnt{A}  \rrbracket  }  =   \Lambda^{-1}   \llbracket    j^{ \ell }_{  \underline{ \ottnt{A} }  }    \rrbracket    \circ   \langle   \langle \rangle   ,   \text{id}_{  \mathbf{S}_{  \ell  }   \llbracket  \ottnt{A}  \rrbracket   }   \rangle  $. 
\end{lemma}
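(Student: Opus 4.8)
The plan is to prove Lemma \ref{jk} by induction on the derivation of $\ell \sqsubseteq A$, taking advantage of the fact that the term $j^\ell_{\underline A}$ (Lemma \ref{ProtT}) and the morphism $k^\ell_{\llbracket A\rrbracket}$ (Lemma \ref{Ap2}) are both defined by structural recursion on this very derivation, with one clause per protection rule among \rref{Prot-Prod,Prot-Fun,Prot-Monad,Prot-Already,Prot-Minimum,Prot-Combine}. Writing $\widehat{t} \triangleq \Lambda^{-1}\llbracket t\rrbracket \circ \langle\langle\rangle, \text{id}\rangle$ for the morphism extracted from the interpretation of a closed term $\emptyset \vdash t : S_m C \to D$, the goal becomes exactly $\widehat{j^\ell_{\underline A}} = k^\ell_{\llbracket A\rrbracket}$. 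In each case I would unfold both recursive definitions, rewrite $\llbracket j^\ell_{\underline A}\rrbracket$ using the interpretation clauses for the GMCC constructors (Figures \ref{intGMC} and \ref{intGCC}) and the $\beta\eta$-behaviour of $\Lambda$/$\Lambda^{-1}$ under application, and then match the result against the clause for $k^\ell$, invoking the induction hypothesis on the immediate sub-derivations.

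Before the induction I would isolate one auxiliary calculation that recurs in four of the cases: for a closed function term $\emptyset \vdash f : C \to D$ one has $\widehat{\mathbf{lift}^m f} = \mathbf{S}_m(\widehat{f})$, which follows from $\llbracket \mathbf{lift}^m f\rrbracket = \Lambda(\mathbf{S}_m(\Lambda^{-1}\llbracket f\rrbracket)\circ t^{\mathbf{S}_m})$ by collapsing the strength morphism over the terminal context, exactly as in the diagram of Figure \ref{prfcd8}. Two base cases are then immediate: for \rref{Prot-Minimum}, $j^\perp_{\underline A} = \lambda x. \mathbf{extr}\,x$ yields $\widehat{j^\perp_{\underline A}} = \epsilon_{\llbracket A\rrbracket} = k^\perp_{\llbracket A\rrbracket}$; for \rref{Prot-Monad}, $j^{\ell_1}_{S_{\ell_2}\underline B} = \lambda x. \mathbf{join}^{\ell_1,\ell_2}\,x$ yields $\widehat{j^{\ell_1}_{S_{\ell_2}\underline B}} = \mu^{\ell_1,\ell_2}_{\llbracket B\rrbracket} = k^{\ell_1}_{S_{\ell_2}\llbracket B\rrbracket}$.

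For the inductive cases: \rref{Prot-Prod} uses the auxiliary fact (specialised to $\widehat{\mathbf{lift}^\ell(\lambda y.\mathbf{proj}_i\,y)} = \mathbf{S}_\ell\pi_i$), the IH on both premises, and the universal property of products; \rref{Prot-Already} and \rref{Prot-Combine} use the IH (twice, for the latter), the auxiliary fact, the clauses $\llbracket\mathbf{fork}^{-,-}\,a\rrbracket = \delta\circ\llbracket a\rrbracket$, $\llbracket\mathbf{join}^{-,-}\,a\rrbracket = \mu\circ\llbracket a\rrbracket$, $\llbracket\mathbf{up}^{-,-}\,a\rrbracket = \mathbf{S}^{-\sqsubseteq-}\circ\llbracket a\rrbracket$, plus functoriality and naturality of $\mathbf{S}$. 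The main obstacle I expect is \rref{Prot-Fun}, where $j^\ell_{\underline A\to\underline B} = \lambda z.\lambda y.\,j^\ell_{\underline B}\,((\mathbf{lift}^\ell(\lambda x. x\,y))\,z)$: unwinding its interpretation requires threading the exponential adjunction, the strength $t^{\mathbf{S}_\ell}$, and $\mathbf{S}_\ell\,\text{app}$ through a sequence of reassociations and product swaps so as to land precisely on the definition of $k^\ell_{\llbracket B\rrbracket^{\llbracket A\rrbracket}}$ in Lemma \ref{Ap2}, with the IH on $\ell\sqsubseteq B$ inserted at the final step; this is essentially the diagram chase of Figure \ref{prfcd7} run in reverse. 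No new ingredient is needed beyond the strength, naturality and monoidality axioms already used in the proofs of Theorems \ref{GMCPrf} and \ref{equiv1}; the remaining effort is bookkeeping.
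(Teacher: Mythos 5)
Your proposal is correct and follows essentially the same route as the paper: induction on the derivation of $ \ell  \sqsubseteq  \ottnt{A} $, unfolding the parallel recursive definitions of $j^{\ell}$ and $k^{\ell}$ case by case and discharging each protection rule with the interpretation clauses, strength, naturality, and the induction hypotheses, with \rref{Prot-Fun} indeed being the one case that requires the longer diagram chase through the exponential adjunction. Your only departure is to factor out $\widehat{ \mathbf{lift}^{ m }  \ottnt{f} } = \mathbf{S}_m(\widehat{f})$ as a standalone auxiliary fact rather than re-deriving it inline in each case as the paper does; that is a harmless (arguably cleaner) reorganization, not a different argument.
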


\begin{proof}
By induction on $ \ell  \sqsubseteq  \ottnt{A} $.

\begin{itemize}

\item \Rref{Prot-Prod}. Have $ \ell  \sqsubseteq   \ottnt{A}  \times  \ottnt{B}  $ where $ \ell  \sqsubseteq  \ottnt{A} $ and $ \ell  \sqsubseteq  \ottnt{B} $.\\
By IH, $ k^{ \ell }_{  \llbracket  \ottnt{A}  \rrbracket  }  =   \Lambda^{-1}   \llbracket    j^{ \ell }_{  \underline{ \ottnt{A} }  }    \rrbracket    \circ   \langle   \langle \rangle   ,   \text{id}_{  \mathbf{S}_{  \ell  }   \llbracket  \ottnt{A}  \rrbracket   }   \rangle  $ and $ k^{ \ell }_{  \llbracket  \ottnt{B}  \rrbracket  }  =   \Lambda^{-1}   \llbracket    j^{ \ell }_{  \underline{ \ottnt{B} }  }    \rrbracket    \circ   \langle   \langle \rangle   ,   \text{id}_{  \mathbf{S}_{  \ell  }   \llbracket  \ottnt{B}  \rrbracket   }   \rangle  $.\\
Now, $ k^{ \ell }_{    \llbracket  \ottnt{A}  \rrbracket   \times   \llbracket  \ottnt{B}  \rrbracket    }  =     k^{ \ell }_{  \llbracket  \ottnt{A}  \rrbracket  }    \times    k^{ \ell }_{  \llbracket  \ottnt{B}  \rrbracket  }     \circ   \langle   \mathbf{S}_{  \ell  }   \pi_1    ,   \mathbf{S}_{  \ell  }   \pi_2    \rangle  $.\\
Next, $ j^{ \ell }_{    \underline{ \ottnt{A} }   \times   \underline{ \ottnt{B} }    }  =  \lambda  \ottmv{z}  .   (     j^{ \ell }_{  \underline{ \ottnt{A} }  }    \:   (    (   \mathbf{lift}^{  \ell  }   (   \lambda  \ottmv{y}  .   \mathbf{proj}_1 \:  \ottmv{y}    )    )   \:  \ottmv{z}   )    ,     j^{ \ell }_{  \underline{ \ottnt{B} }  }    \:   (    (   \mathbf{lift}^{  \ell  }   (   \lambda  \ottmv{y}  .   \mathbf{proj}_2 \:  \ottmv{y}    )    )   \:  \ottmv{z}   )    )  $. 

Then, 
\begin{align*}
&   \Lambda^{-1}   \llbracket    \lambda  \ottmv{z}  .   (     j^{ \ell }_{  \underline{ \ottnt{A} }  }    \:   (    (   \mathbf{lift}^{  \ell  }   (   \lambda  \ottmv{y}  .   \mathbf{proj}_1 \:  \ottmv{y}    )    )   \:  \ottmv{z}   )    ,     j^{ \ell }_{  \underline{ \ottnt{B} }  }    \:   (    (   \mathbf{lift}^{  \ell  }   (   \lambda  \ottmv{y}  .   \mathbf{proj}_2 \:  \ottmv{y}    )    )   \:  \ottmv{z}   )    )     \rrbracket    \circ   \langle   \langle \rangle   ,   \text{id}   \rangle   \\
= &   \langle   \llbracket      j^{ \ell }_{  \underline{ \ottnt{A} }  }    \:   (    (   \mathbf{lift}^{  \ell  }   (   \lambda  \ottmv{y}  .   \mathbf{proj}_1 \:  \ottmv{y}    )    )   \:  \ottmv{z}   )     \rrbracket   ,   \llbracket      j^{ \ell }_{  \underline{ \ottnt{B} }  }    \:   (    (   \mathbf{lift}^{  \ell  }   (   \lambda  \ottmv{y}  .   \mathbf{proj}_2 \:  \ottmv{y}    )    )   \:  \ottmv{z}   )     \rrbracket   \rangle   \circ   \langle   \langle \rangle   ,   \text{id}   \rangle   \\
= &   \langle    \text{app}   \circ   \langle    \llbracket    j^{ \ell }_{  \underline{ \ottnt{A} }  }    \rrbracket   \circ   \langle \rangle    ,   \llbracket     (   \mathbf{lift}^{  \ell  }   (   \lambda  \ottmv{y}  .   \mathbf{proj}_1 \:  \ottmv{y}    )    )   \:  \ottmv{z}    \rrbracket   \rangle    ,    \text{app}   \circ   \langle    \llbracket    j^{ \ell }_{  \underline{ \ottnt{B} }  }    \rrbracket   \circ   \langle \rangle    ,   \llbracket     (   \mathbf{lift}^{  \ell  }   (   \lambda  \ottmv{y}  .   \mathbf{proj}_2 \:  \ottmv{y}    )    )   \:  \ottmv{z}    \rrbracket   \rangle    \rangle   \circ   \langle   \langle \rangle   ,   \text{id}   \rangle   \\
= &   \langle    \Lambda^{-1}   \llbracket    j^{ \ell }_{  \underline{ \ottnt{A} }  }    \rrbracket    \circ   \langle   \langle \rangle   ,   \llbracket     (   \mathbf{lift}^{  \ell  }   (   \lambda  \ottmv{y}  .   \mathbf{proj}_1 \:  \ottmv{y}    )    )   \:  \ottmv{z}    \rrbracket   \rangle    ,    \Lambda^{-1}   \llbracket    j^{ \ell }_{  \underline{ \ottnt{B} }  }    \rrbracket    \circ   \langle   \langle \rangle   ,   \llbracket     (   \mathbf{lift}^{  \ell  }   (   \lambda  \ottmv{y}  .   \mathbf{proj}_2 \:  \ottmv{y}    )    )   \:  \ottmv{z}    \rrbracket   \rangle    \rangle   \circ   \langle   \langle \rangle   ,   \text{id}   \rangle   \\
= &   \langle    k^{ \ell }_{  \llbracket  \ottnt{A}  \rrbracket  }   \circ   \llbracket     (   \mathbf{lift}^{  \ell  }   (   \lambda  \ottmv{y}  .   \mathbf{proj}_1 \:  \ottmv{y}    )    )   \:  \ottmv{z}    \rrbracket    ,    k^{ \ell }_{  \llbracket  \ottnt{B}  \rrbracket  }   \circ   \llbracket     (   \mathbf{lift}^{  \ell  }   (   \lambda  \ottmv{y}  .   \mathbf{proj}_2 \:  \ottmv{y}    )    )   \:  \ottmv{z}    \rrbracket    \rangle   \circ   \langle   \langle \rangle   ,   \text{id}   \rangle   \hspace*{3pt} [\text{By IH}]\\
= &   \langle     k^{ \ell }_{  \llbracket  \ottnt{A}  \rrbracket  }   \circ   \text{app}    \circ   \langle   \llbracket    \mathbf{lift}^{  \ell  }   (   \lambda  \ottmv{y}  .   \mathbf{proj}_1 \:  \ottmv{y}    )     \rrbracket   ,   \pi_2   \rangle    ,     k^{ \ell }_{  \llbracket  \ottnt{B}  \rrbracket  }   \circ   \text{app}    \circ   \langle   \llbracket    \mathbf{lift}^{  \ell  }   (   \lambda  \ottmv{y}  .   \mathbf{proj}_2 \:  \ottmv{y}    )     \rrbracket   ,   \pi_2   \rangle    \rangle   \circ   \langle   \langle \rangle   ,   \text{id}   \rangle   \\
= &  \langle     k^{ \ell }_{  \llbracket  \ottnt{A}  \rrbracket  }   \circ   \text{app}    \circ   \langle   \Lambda \Big(     \mathbf{S}_{  \ell  }   (   \Lambda^{-1}   \llbracket    \lambda  \ottmv{y}  .   \mathbf{proj}_1 \:  \ottmv{y}     \rrbracket    )    \circ   t^{\mathbf{S}_{  \ell  } }_{ \ottnt{X} ,  \ottnt{Y} }     \Big)   ,   \pi_2   \rangle    ,     k^{ \ell }_{  \llbracket  \ottnt{B}  \rrbracket  }   \circ   \text{app}    \circ   \langle   \Lambda \Big(     \mathbf{S}_{  \ell  }   (   \Lambda^{-1}   \llbracket    \lambda  \ottmv{y}  .   \mathbf{proj}_2 \:  \ottmv{y}     \rrbracket    )    \circ   t^{\mathbf{S}_{  \ell  } }_{ \ottnt{X} ,  \ottnt{Y} }     \Big)   ,   \pi_2   \rangle    \rangle  \\ & \hspace*{30pt} \circ  \langle   \langle \rangle   ,   \text{id}   \rangle  \hspace*{5pt} [ \ottnt{X} :=    \top   \times   \mathbf{S}_{  \ell  }   (    \llbracket  \ottnt{A}  \rrbracket   \times   \llbracket  \ottnt{B}  \rrbracket    )     \text{ and } \ottnt{Y} :=   \llbracket  \ottnt{A}  \rrbracket   \times   \llbracket  \ottnt{B}  \rrbracket   ]\\
= &    \langle     k^{ \ell }_{  \llbracket  \ottnt{A}  \rrbracket  }   \circ   \text{app}    \circ   \langle   \Lambda \Big(     \mathbf{S}_{  \ell  }   (    \pi_1   \circ   \pi_2    )    \circ   t^{\mathbf{S}_{  \ell  } }_{ \ottnt{X} ,  \ottnt{Y} }     \Big)   ,   \pi_2   \rangle    ,     k^{ \ell }_{  \llbracket  \ottnt{B}  \rrbracket  }   \circ   \text{app}    \circ   \langle   \Lambda \Big(     \mathbf{S}_{  \ell  }   (    \pi_2   \circ   \pi_2    )    \circ   t^{\mathbf{S}_{  \ell  } }_{ \ottnt{X} ,  \ottnt{Y} }     \Big)   ,   \pi_2   \rangle    \rangle   \circ   \langle   \langle \rangle   ,   \text{id}   \rangle   \\ 
= &   \langle     k^{ \ell }_{  \llbracket  \ottnt{A}  \rrbracket  }   \circ   \text{app}    \circ   \langle   \Lambda   (    \mathbf{S}_{  \ell  }   \pi_1    \circ   \pi_2    )    ,   \pi_2   \rangle    ,     k^{ \ell }_{  \llbracket  \ottnt{B}  \rrbracket  }   \circ   \text{app}    \circ   \langle   \Lambda   (    \mathbf{S}_{  \ell  }   \pi_2    \circ   \pi_2    )    ,   \pi_2   \rangle    \rangle   \circ   \langle   \langle \rangle   ,   \text{id}   \rangle   \\ 
= &   \langle       k^{ \ell }_{  \llbracket  \ottnt{A}  \rrbracket  }   \circ   \text{app}    \circ   \Lambda   (    \mathbf{S}_{  \ell  }   \pi_1    \circ   \pi_2    )     \times   \text{id}    \circ   \langle   \text{id}   ,   \pi_2   \rangle    ,       k^{ \ell }_{  \llbracket  \ottnt{B}  \rrbracket  }   \circ   \text{app}    \circ   \Lambda   (    \mathbf{S}_{  \ell  }   \pi_2    \circ   \pi_2    )     \times   \text{id}    \circ   \langle   \text{id}   ,   \pi_2   \rangle    \rangle   \circ   \langle   \langle \rangle   ,   \text{id}   \rangle   \\ 
= &   \langle      k^{ \ell }_{  \llbracket  \ottnt{A}  \rrbracket  }   \circ   \mathbf{S}_{  \ell  }   \pi_1     \circ   \pi_2    \circ   \langle   \text{id}   ,   \pi_2   \rangle    ,      k^{ \ell }_{  \llbracket  \ottnt{B}  \rrbracket  }   \circ   \mathbf{S}_{  \ell  }   \pi_2     \circ   \pi_2    \circ   \langle   \text{id}   ,   \pi_2   \rangle    \rangle   \circ   \langle   \langle \rangle   ,   \text{id}   \rangle   \\
= &     \langle    k^{ \ell }_{  \llbracket  \ottnt{A}  \rrbracket  }   \circ   \mathbf{S}_{  \ell  }   \pi_1     ,    k^{ \ell }_{  \llbracket  \ottnt{B}  \rrbracket  }   \circ   \mathbf{S}_{  \ell  }   \pi_2     \rangle   \circ   \pi_2    \circ   \langle   \text{id}   ,   \pi_2   \rangle    \circ   \langle   \langle \rangle   ,   \text{id}   \rangle   \\
= &  \langle    k^{ \ell }_{  \llbracket  \ottnt{A}  \rrbracket  }   \circ   \mathbf{S}_{  \ell  }   \pi_1     ,    k^{ \ell }_{  \llbracket  \ottnt{B}  \rrbracket  }   \circ   \mathbf{S}_{  \ell  }   \pi_2     \rangle  \\
= &  k^{ \ell }_{    \llbracket  \ottnt{A}  \rrbracket   \times   \llbracket  \ottnt{B}  \rrbracket    } 
\end{align*}

\item \Rref{Prot-Fun}. Have: $ \ell  \sqsubseteq   \ottnt{A}  \to  \ottnt{B}  $ where $ \ell  \sqsubseteq  \ottnt{B} $.\\
By IH, $ k^{ \ell }_{  \llbracket  \ottnt{B}  \rrbracket  }  =   \Lambda^{-1}   \llbracket    j^{ \ell }_{  \underline{ \ottnt{B} }  }    \rrbracket    \circ   \langle   \langle \rangle   ,   \text{id}_{  \mathbf{S}_{  \ell  }   \llbracket  \ottnt{B}  \rrbracket   }   \rangle  $.\\
Now, $ k^{ \ell }_{    \llbracket  \ottnt{B}  \rrbracket  ^{  \llbracket  \ottnt{A}  \rrbracket  }   }  =  \Lambda   (       k^{ \ell }_{  \llbracket  \ottnt{B}  \rrbracket  }   \circ   \mathbf{S}_{  \ell  }   \text{app}     \circ   \mathbf{S}_{  \ell  }   \langle   \pi_2   ,   \pi_1   \rangle     \circ   t^{\mathbf{S}_{  \ell  } }_{  \llbracket  \ottnt{A}  \rrbracket  ,     \llbracket  \ottnt{B}  \rrbracket  ^{  \llbracket  \ottnt{A}  \rrbracket  }   }    \circ   \langle   \pi_2   ,   \pi_1   \rangle    )  $.\\
And $ j^{ \ell }_{    \underline{ \ottnt{A} }   \to   \underline{ \ottnt{B} }    }  =   \lambda  \ottmv{z}  .   \lambda  \ottmv{y}  .   j^{ \ell }_{  \underline{ \ottnt{B} }  }     \:   (    (   \mathbf{lift}^{  \ell  }   (    \lambda  \ottmv{x}  .  \ottmv{x}   \:  \ottmv{y}   )    )   \:  \ottmv{z}   )  $.

Then,
\begin{align*}
&   \Lambda^{-1}   \llbracket     \lambda  \ottmv{z}  .   \lambda  \ottmv{y}  .   j^{ \ell }_{  \underline{ \ottnt{B} }  }     \:   (    (   \mathbf{lift}^{  \ell  }   (    \lambda  \ottmv{x}  .  \ottmv{x}   \:  \ottmv{y}   )    )   \:  \ottmv{z}   )     \rrbracket    \circ   \langle   \langle \rangle   ,   \text{id}   \rangle   \\
= &   \llbracket     \lambda  \ottmv{y}  .   j^{ \ell }_{  \underline{ \ottnt{B} }  }    \:   (    (   \mathbf{lift}^{  \ell  }   (    \lambda  \ottmv{x}  .  \ottmv{x}   \:  \ottmv{y}   )    )   \:  \ottmv{z}   )     \rrbracket   \circ   \langle   \langle \rangle   ,   \text{id}   \rangle   \\
= &   \Lambda \Big(     \text{app}   \circ   \langle    \llbracket    j^{ \ell }_{  \underline{ \ottnt{B} }  }    \rrbracket   \circ   \langle \rangle    ,   \llbracket     (   \mathbf{lift}^{  \ell  }   (    \lambda  \ottmv{x}  .  \ottmv{x}   \:  \ottmv{y}   )    )   \:  \ottmv{z}    \rrbracket   \rangle     \Big)   \circ   \langle   \langle \rangle   ,   \text{id}   \rangle   \\
= &   \Lambda \Big(      \text{app}   \circ   (    \llbracket    j^{ \ell }_{  \underline{ \ottnt{B} }  }    \rrbracket   \times   \text{id}    )    \circ   \langle   \langle \rangle   ,   \llbracket     (   \mathbf{lift}^{  \ell  }   (    \lambda  \ottmv{x}  .  \ottmv{x}   \:  \ottmv{y}   )    )   \:  \ottmv{z}    \rrbracket   \rangle     \Big)   \circ   \langle   \langle \rangle   ,   \text{id}   \rangle   \\
= &   \Lambda \Big(     \Lambda^{-1}   \llbracket    j^{ \ell }_{  \underline{ \ottnt{B} }  }    \rrbracket    \circ   \langle   \langle \rangle   ,    \text{app}   \circ   \langle   \llbracket    \mathbf{lift}^{  \ell  }   (    \lambda  \ottmv{x}  .  \ottmv{x}   \:  \ottmv{y}   )     \rrbracket   ,    \pi_2   \circ   \pi_1    \rangle    \rangle     \Big)   \circ   \langle   \langle \rangle   ,   \text{id}   \rangle   \\
= &   \Lambda \Big(     \Lambda^{-1}   \llbracket    j^{ \ell }_{  \underline{ \ottnt{B} }  }    \rrbracket    \circ   \langle   \langle \rangle   ,    \text{app}   \circ   \langle   \Lambda \Big(     \mathbf{S}_{  \ell  }   (   \Lambda^{-1}   \llbracket     \lambda  \ottmv{x}  .  \ottmv{x}   \:  \ottmv{y}    \rrbracket    )    \circ   t^{\mathbf{S}_{  \ell  } }_{    (    \top   \times   \mathbf{S}_{  \ell  }     \llbracket  \ottnt{B}  \rrbracket  ^{  \llbracket  \ottnt{A}  \rrbracket  }      )   \times   \llbracket  \ottnt{A}  \rrbracket    ,     \llbracket  \ottnt{B}  \rrbracket  ^{  \llbracket  \ottnt{A}  \rrbracket  }   }     \Big)   ,    \pi_2   \circ   \pi_1    \rangle    \rangle     \Big)   \circ   \langle   \langle \rangle   ,   \text{id}   \rangle   \\
= &   \Lambda \Big(     \Lambda^{-1}   \llbracket    j^{ \ell }_{  \underline{ \ottnt{B} }  }    \rrbracket    \circ   \langle   \langle \rangle   ,    \text{app}   \circ   \langle   \Lambda \Big(     \mathbf{S}_{  \ell  }   (    \text{app}   \circ   \langle   \pi_2   ,    \pi_2   \circ   \pi_1    \rangle    )    \circ   t^{\mathbf{S}_{  \ell  } }_{    (    \top   \times   \mathbf{S}_{  \ell  }     \llbracket  \ottnt{B}  \rrbracket  ^{  \llbracket  \ottnt{A}  \rrbracket  }      )   \times   \llbracket  \ottnt{A}  \rrbracket    ,     \llbracket  \ottnt{B}  \rrbracket  ^{  \llbracket  \ottnt{A}  \rrbracket  }   }     \Big)   ,    \pi_2   \circ   \pi_1    \rangle    \rangle     \Big)   \circ   \langle   \langle \rangle   ,   \text{id}   \rangle   \\
= &   \Lambda \Big(     \Lambda^{-1}   \llbracket    j^{ \ell }_{  \underline{ \ottnt{B} }  }    \rrbracket    \circ   \langle   \langle \rangle   ,       \mathbf{S}_{  \ell  }   \text{app}    \circ   \mathbf{S}_{  \ell  }   \langle   \pi_2   ,   \pi_1   \rangle     \circ   \mathbf{S}_{  \ell  }   \langle    \pi_2   \circ   \pi_1    ,   \pi_2   \rangle     \circ   t^{\mathbf{S}_{  \ell  } }_{    (    \top   \times   \mathbf{S}_{  \ell  }     \llbracket  \ottnt{B}  \rrbracket  ^{  \llbracket  \ottnt{A}  \rrbracket  }      )   \times   \llbracket  \ottnt{A}  \rrbracket    ,     \llbracket  \ottnt{B}  \rrbracket  ^{  \llbracket  \ottnt{A}  \rrbracket  }   }    \circ   \langle   \text{id}   ,    \pi_2   \circ   \pi_1    \rangle    \rangle     \Big)   \circ   \langle   \langle \rangle   ,   \text{id}   \rangle   \\
= &   \Lambda \Big(     \Lambda^{-1}   \llbracket    j^{ \ell }_{  \underline{ \ottnt{B} }  }    \rrbracket    \circ   \langle   \langle \rangle   ,       \mathbf{S}_{  \ell  }   \text{app}    \circ   \mathbf{S}_{  \ell  }   \langle   \pi_2   ,   \pi_1   \rangle     \circ   t^{\mathbf{S}_{  \ell  } }_{  \llbracket  \ottnt{A}  \rrbracket  ,     \llbracket  \ottnt{B}  \rrbracket  ^{  \llbracket  \ottnt{A}  \rrbracket  }   }    \circ   \langle    \pi_2   \circ   \pi_1    ,   \pi_2   \rangle    \circ   \langle   \text{id}   ,    \pi_2   \circ   \pi_1    \rangle    \rangle     \Big)   \circ   \langle   \langle \rangle   ,   \text{id}   \rangle   \hspace*{3pt} [\text{By naturality}]\\
= &  \Lambda \Big(      \Lambda^{-1}   \llbracket    j^{ \ell }_{  \underline{ \ottnt{B} }  }    \rrbracket    \circ   \langle   \langle \rangle   ,      \mathbf{S}_{  \ell  }   \text{app}    \circ   \mathbf{S}_{  \ell  }   \langle   \pi_2   ,   \pi_1   \rangle     \circ   t^{\mathbf{S}_{  \ell  } }_{  \llbracket  \ottnt{A}  \rrbracket  ,     \llbracket  \ottnt{B}  \rrbracket  ^{  \llbracket  \ottnt{A}  \rrbracket  }   }    \circ   \langle   \pi_2   ,    \pi_2   \circ   \pi_1    \rangle    \rangle    \circ   \langle   \langle   \langle \rangle   ,   \pi_1   \rangle   ,   \pi_2   \rangle     \Big) \\
= &  \Lambda \Big(     \Lambda^{-1}   \llbracket    j^{ \ell }_{  \underline{ \ottnt{B} }  }    \rrbracket    \circ   \langle   \langle \rangle   ,      \mathbf{S}_{  \ell  }   \text{app}    \circ   \mathbf{S}_{  \ell  }   \langle   \pi_2   ,   \pi_1   \rangle     \circ   t^{\mathbf{S}_{  \ell  } }_{  \llbracket  \ottnt{A}  \rrbracket  ,     \llbracket  \ottnt{B}  \rrbracket  ^{  \llbracket  \ottnt{A}  \rrbracket  }   }    \circ   \langle   \pi_2   ,   \pi_1   \rangle    \rangle     \Big) \\
= &  \Lambda \Big(         \Lambda^{-1}   \llbracket    j^{ \ell }_{  \underline{ \ottnt{B} }  }    \rrbracket    \circ   \langle   \langle \rangle   ,   \text{id}   \rangle    \circ   \mathbf{S}_{  \ell  }   \text{app}     \circ   \mathbf{S}_{  \ell  }   \langle   \pi_2   ,   \pi_1   \rangle     \circ   t^{\mathbf{S}_{  \ell  } }_{  \llbracket  \ottnt{A}  \rrbracket  ,     \llbracket  \ottnt{B}  \rrbracket  ^{  \llbracket  \ottnt{A}  \rrbracket  }   }    \circ   \langle   \pi_2   ,   \pi_1   \rangle     \Big) \\
= &  \Lambda \Big(        k^{ \ell }_{  \llbracket  \ottnt{B}  \rrbracket  }   \circ   \mathbf{S}_{  \ell  }   \text{app}     \circ   \mathbf{S}_{  \ell  }   \langle   \pi_2   ,   \pi_1   \rangle     \circ   t^{\mathbf{S}_{  \ell  } }_{  \llbracket  \ottnt{A}  \rrbracket  ,     \llbracket  \ottnt{B}  \rrbracket  ^{  \llbracket  \ottnt{A}  \rrbracket  }   }    \circ   \langle   \pi_2   ,   \pi_1   \rangle     \Big)  \hspace*{3pt} [\text{By IH}]\\
= &  k^{ \ell }_{    \llbracket  \ottnt{B}  \rrbracket  ^{  \llbracket  \ottnt{A}  \rrbracket  }   } 
\end{align*} 

\item \Rref{Prot-Monad}. Have: $ \ell_{{\mathrm{1}}}  \sqsubseteq   \mathcal{T}_{ \ell_{{\mathrm{2}}} } \:  \ottnt{A}  $ where $ \ell_{{\mathrm{1}}}  \sqsubseteq  \ell_{{\mathrm{2}}} $.\\
Now, $ k^{ \ell_{{\mathrm{1}}} }_{  \mathbf{S}_{  \ell_{{\mathrm{2}}}  }   \llbracket  \ottnt{A}  \rrbracket   }  =  \mu^{  \ell_{{\mathrm{1}}}  ,  \ell_{{\mathrm{2}}}  }_{  \llbracket  \ottnt{A}  \rrbracket  } $ and $ j^{ \ell }_{  \underline{ \ottnt{A} }  }  =  \lambda  \ottmv{x}  .   \mathbf{join}^{  \ell_{{\mathrm{1}}}  ,  \ell_{{\mathrm{2}}}  }  \ottmv{x}  $.\\
Then, $  \Lambda^{-1}   \llbracket    \lambda  \ottmv{x}  .   \mathbf{join}^{  \ell_{{\mathrm{1}}}  ,  \ell_{{\mathrm{2}}}  }  \ottmv{x}     \rrbracket    \circ   \langle   \langle \rangle   ,   \text{id}_{  \mathbf{S}_{  \ell_{{\mathrm{1}}}  }   \mathbf{S}_{  \ell_{{\mathrm{2}}}  }   \llbracket  \ottnt{A}  \rrbracket    }   \rangle   =    \mu^{  \ell_{{\mathrm{1}}}  ,  \ell_{{\mathrm{2}}}  }_{  \llbracket  \ottnt{A}  \rrbracket  }   \circ   \pi_2    \circ   \langle   \langle \rangle   ,   \text{id}_{  \mathbf{S}_{  \ell_{{\mathrm{1}}}  }   \mathbf{S}_{  \ell_{{\mathrm{2}}}  }   \llbracket  \ottnt{A}  \rrbracket    }   \rangle   =  \mu^{  \ell_{{\mathrm{1}}}  ,  \ell_{{\mathrm{2}}}  }_{  \llbracket  \ottnt{A}  \rrbracket  } $.

\item \Rref{Prot-Already}. Have: $ \ell  \sqsubseteq   \mathcal{T}_{ \ell' } \:  \ottnt{A}  $ where $ \ell  \sqsubseteq  \ottnt{A} $.\\
By IH, $ k^{ \ell }_{  \llbracket  \ottnt{A}  \rrbracket  }  =   \Lambda^{-1}   \llbracket    j^{ \ell }_{  \underline{ \ottnt{A} }  }    \rrbracket    \circ   \langle   \langle \rangle   ,   \text{id}_{  \mathbf{S}_{  \ell  }   \llbracket  \ottnt{A}  \rrbracket   }   \rangle  $.\\ 
Now, $ k^{ \ell }_{  \mathbf{S}_{  \ell'  }   \llbracket  \ottnt{A}  \rrbracket   }  =    \mathbf{S}_{  \ell'  }   k^{ \ell }_{  \llbracket  \ottnt{A}  \rrbracket  }    \circ   \delta^{  \ell'  ,  \ell  }_{  \llbracket  \ottnt{A}  \rrbracket  }    \circ   \mu^{  \ell  ,  \ell'  }_{  \llbracket  \ottnt{A}  \rrbracket  }  $ and $ j^{ \ell }_{  S_{  \ell'  } \:   \underline{ \ottnt{A} }   }  =   \lambda  \ottmv{x}  .   (   \mathbf{lift}^{  \ell'  }   j^{ \ell }_{  \underline{ \ottnt{A} }  }    )    \:   (   \mathbf{fork}^{  \ell'  ,  \ell  }   (   \mathbf{join}^{  \ell  ,  \ell'  }  \ottmv{x}   )    )  $.

Then,
\begin{align*}
&   \Lambda^{-1}   \llbracket     \lambda  \ottmv{x}  .   (   \mathbf{lift}^{  \ell'  }   j^{ \ell }_{  \underline{ \ottnt{A} }  }    )    \:   (   \mathbf{fork}^{  \ell'  ,  \ell  }   (   \mathbf{join}^{  \ell  ,  \ell'  }  \ottmv{x}   )    )     \rrbracket    \circ   \langle   \langle \rangle   ,   \text{id}_{  \mathbf{S}_{  \ell  }   \mathbf{S}_{  \ell'  }   \llbracket  \ottnt{A}  \rrbracket    }   \rangle   \\
= &    \text{app}   \circ   \langle   \llbracket    \mathbf{lift}^{  \ell'  }   j^{ \ell }_{  \underline{ \ottnt{A} }  }     \rrbracket   ,   \llbracket    \mathbf{fork}^{  \ell'  ,  \ell  }   (   \mathbf{join}^{  \ell  ,  \ell'  }  \ottmv{x}   )     \rrbracket   \rangle    \circ   \langle   \langle \rangle   ,   \text{id}_{  \mathbf{S}_{  \ell  }   \mathbf{S}_{  \ell'  }   \llbracket  \ottnt{A}  \rrbracket    }   \rangle   \\
= &    \text{app}   \circ   \langle    \Lambda \Big(     \mathbf{S}_{  \ell'  }   (   \Lambda^{-1}   \llbracket    j^{ \ell }_{  \underline{ \ottnt{A} }  }    \rrbracket    )    \circ   t^{\mathbf{S}_{  \ell'  } }_{  \top  ,   \mathbf{S}_{  \ell  }   \llbracket  \ottnt{A}  \rrbracket   }     \Big)   \circ   \langle \rangle    ,     \delta^{  \ell'  ,  \ell  }_{  \llbracket  \ottnt{A}  \rrbracket  }   \circ   \mu^{  \ell  ,  \ell'  }_{  \llbracket  \ottnt{A}  \rrbracket  }    \circ   \pi_2    \rangle    \circ   \langle   \langle \rangle   ,   \text{id}_{  \mathbf{S}_{  \ell  }   \mathbf{S}_{  \ell'  }   \llbracket  \ottnt{A}  \rrbracket    }   \rangle   \\
= &     \mathbf{S}_{  \ell'  }   (     \Lambda^{-1}   \llbracket    j^{ \ell }_{  \underline{ \ottnt{A} }  }    \rrbracket    \circ   \langle   \langle \rangle   ,   \text{id}_{  \mathbf{S}_{  \ell  }   \llbracket  \ottnt{A}  \rrbracket   }   \rangle    \circ   \pi_2    )    \circ   t^{\mathbf{S}_{  \ell'  } }_{  \top  ,   \mathbf{S}_{  \ell  }   \llbracket  \ottnt{A}  \rrbracket   }    \circ   \langle   \langle \rangle   ,     \delta^{  \ell'  ,  \ell  }_{  \llbracket  \ottnt{A}  \rrbracket  }   \circ   \mu^{  \ell  ,  \ell'  }_{  \llbracket  \ottnt{A}  \rrbracket  }    \circ   \pi_2    \rangle    \circ   \langle   \langle \rangle   ,   \text{id}_{  \mathbf{S}_{  \ell  }   \mathbf{S}_{  \ell'  }   \llbracket  \ottnt{A}  \rrbracket    }   \rangle  \\ & \hspace{30pt} [\because   \langle   \langle \rangle   ,   \text{id}   \rangle   \circ   \pi_2   =  \text{id}_{    \top   \times   \mathbf{S}_{  \ell  }   \llbracket  \ottnt{A}  \rrbracket     }  ] \\ 
= &      \mathbf{S}_{  \ell'  }   (    \Lambda^{-1}   \llbracket    j^{ \ell }_{  \underline{ \ottnt{A} }  }    \rrbracket    \circ   \langle   \langle \rangle   ,   \text{id}_{  \mathbf{S}_{  \ell  }   \llbracket  \ottnt{A}  \rrbracket   }   \rangle    )    \circ   \mathbf{S}_{  \ell'  }   \pi_2     \circ   t^{\mathbf{S}_{  \ell'  } }_{  \top  ,   \mathbf{S}_{  \ell  }   \llbracket  \ottnt{A}  \rrbracket   }    \circ   \langle   \langle \rangle   ,     \delta^{  \ell'  ,  \ell  }_{  \llbracket  \ottnt{A}  \rrbracket  }   \circ   \mu^{  \ell  ,  \ell'  }_{  \llbracket  \ottnt{A}  \rrbracket  }    \circ   \pi_2    \rangle    \circ   \langle   \langle \rangle   ,   \text{id}_{  \mathbf{S}_{  \ell  }   \mathbf{S}_{  \ell'  }   \llbracket  \ottnt{A}  \rrbracket    }   \rangle   \\
= &      \mathbf{S}_{  \ell'  }   k^{ \ell }_{  \llbracket  \ottnt{A}  \rrbracket  }    \circ   \mathbf{S}_{  \ell'  }   \pi_2     \circ   t^{\mathbf{S}_{  \ell'  } }_{  \top  ,   \mathbf{S}_{  \ell  }   \llbracket  \ottnt{A}  \rrbracket   }    \circ   \langle   \langle \rangle   ,     \delta^{  \ell'  ,  \ell  }_{  \llbracket  \ottnt{A}  \rrbracket  }   \circ   \mu^{  \ell  ,  \ell'  }_{  \llbracket  \ottnt{A}  \rrbracket  }    \circ   \pi_2    \rangle    \circ   \langle   \langle \rangle   ,   \text{id}_{  \mathbf{S}_{  \ell  }   \mathbf{S}_{  \ell'  }   \llbracket  \ottnt{A}  \rrbracket    }   \rangle   \hspace*{3pt} [\text{By IH}]\\
= &     \mathbf{S}_{  \ell'  }   k^{ \ell }_{  \llbracket  \ottnt{A}  \rrbracket  }    \circ   \pi_2    \circ   \langle   \langle \rangle   ,     \delta^{  \ell'  ,  \ell  }_{  \llbracket  \ottnt{A}  \rrbracket  }   \circ   \mu^{  \ell  ,  \ell'  }_{  \llbracket  \ottnt{A}  \rrbracket  }    \circ   \pi_2    \rangle    \circ   \langle   \langle \rangle   ,   \text{id}_{  \mathbf{S}_{  \ell  }   \mathbf{S}_{  \ell'  }   \llbracket  \ottnt{A}  \rrbracket    }   \rangle   \hspace*{3pt} [\text{By strength}]\\
= &    \mathbf{S}_{  \ell'  }   k^{ \ell }_{  \llbracket  \ottnt{A}  \rrbracket  }    \circ   \delta^{  \ell'  ,  \ell  }_{  \llbracket  \ottnt{A}  \rrbracket  }    \circ   \mu^{  \ell  ,  \ell'  }_{  \llbracket  \ottnt{A}  \rrbracket  }   \\
= &  k^{ \ell }_{  \mathbf{S}_{  \ell'  }   \llbracket  \ottnt{A}  \rrbracket   } 
\end{align*}

\item \Rref{Prot-Minimum}. Have: $  \bot   \sqsubseteq  \ottnt{A} $.\\
Now, $ k^{  \bot  }_{  \llbracket  \ottnt{A}  \rrbracket  }  =  \epsilon_{  \llbracket  \ottnt{A}  \rrbracket  } $ and $ j^{  \bot  }_{  \underline{ \ottnt{A} }  }  =  \lambda  \ottmv{x}  .   \mathbf{extr} \:  \ottmv{x}  $.\\
Then, $  \Lambda^{-1}   \llbracket    \lambda  \ottmv{x}  .   \mathbf{extr} \:  \ottmv{x}     \rrbracket    \circ   \langle   \langle \rangle   ,   \text{id}_{  \mathbf{S}_{   \bot   }   \llbracket  \ottnt{A}  \rrbracket   }   \rangle   =    \epsilon_{  \llbracket  \ottnt{A}  \rrbracket  }   \circ   \pi_2    \circ   \langle   \langle \rangle   ,   \text{id}_{  \mathbf{S}_{   \bot   }   \llbracket  \ottnt{A}  \rrbracket   }   \rangle   =  \epsilon_{  \llbracket  \ottnt{A}  \rrbracket  }  =  k^{  \bot  }_{  \llbracket  \ottnt{A}  \rrbracket  } $.

\item \Rref{Prot-Combine}. Have: $ \ell  \sqsubseteq  \ottnt{A} $ where $ \ell_{{\mathrm{1}}}  \sqsubseteq  \ottnt{A} $ and $ \ell_{{\mathrm{2}}}  \sqsubseteq  \ottnt{A} $ and $ \ell  \sqsubseteq   \ell_{{\mathrm{1}}}  \vee  \ell_{{\mathrm{2}}}  $.\\
By IH, $ k^{ \ell_{{\mathrm{1}}} }_{  \llbracket  \ottnt{A}  \rrbracket  }  =   \Lambda^{-1}   \llbracket    j^{ \ell_{{\mathrm{1}}} }_{  \underline{ \ottnt{A} }  }    \rrbracket    \circ   \langle   \langle \rangle   ,   \text{id}_{  \mathbf{S}_{  \ell_{{\mathrm{1}}}  }   \llbracket  \ottnt{A}  \rrbracket   }   \rangle  $ and $ k^{ \ell_{{\mathrm{2}}} }_{  \llbracket  \ottnt{A}  \rrbracket  }  =   \Lambda^{-1}   \llbracket    j^{ \ell_{{\mathrm{2}}} }_{  \underline{ \ottnt{A} }  }    \rrbracket    \circ   \langle   \langle \rangle   ,   \text{id}_{  \mathbf{S}_{  \ell_{{\mathrm{2}}}  }   \llbracket  \ottnt{A}  \rrbracket   }   \rangle  $.\\
Now, $ k^{ \ell }_{  \llbracket  \ottnt{A}  \rrbracket  }  =     k^{ \ell_{{\mathrm{1}}} }_{  \llbracket  \ottnt{A}  \rrbracket  }   \circ   \mathbf{S}_{  \ell_{{\mathrm{1}}}  }   k^{ \ell_{{\mathrm{2}}} }_{  \llbracket  \ottnt{A}  \rrbracket  }     \circ   \delta^{  \ell_{{\mathrm{1}}}  ,  \ell_{{\mathrm{2}}}  }_{  \llbracket  \ottnt{A}  \rrbracket  }    \circ   \mathbf{S}^{  \ell   \sqsubseteq    \ell_{{\mathrm{1}}}  \vee  \ell_{{\mathrm{2}}}   }_{  \llbracket  \ottnt{A}  \rrbracket  }  $ \\
and $ j^{ \ell }_{  \underline{ \ottnt{A} }  }  =   \lambda  \ottmv{x}  .   j^{ \ell_{{\mathrm{1}}} }_{  \underline{ \ottnt{A} }  }    \:   (    (   \mathbf{lift}^{  \ell_{{\mathrm{1}}}  }   j^{ \ell_{{\mathrm{2}}} }_{  \underline{ \ottnt{A} }  }    )   \:   (   \mathbf{fork}^{  \ell_{{\mathrm{1}}}  ,  \ell_{{\mathrm{2}}}  }   (   \mathbf{up}^{  \ell  ,   \ell_{{\mathrm{1}}}  \vee  \ell_{{\mathrm{2}}}   }  \ottmv{x}   )    )    )  $.\\
Then,
\begin{align*}
&   \Lambda^{-1}   \llbracket     \lambda  \ottmv{x}  .   j^{ \ell_{{\mathrm{1}}} }_{  \underline{ \ottnt{A} }  }    \:   (    (   \mathbf{lift}^{  \ell_{{\mathrm{1}}}  }   j^{ \ell_{{\mathrm{2}}} }_{  \underline{ \ottnt{A} }  }    )   \:   (   \mathbf{fork}^{  \ell_{{\mathrm{1}}}  ,  \ell_{{\mathrm{2}}}  }   (   \mathbf{up}^{  \ell  ,   \ell_{{\mathrm{1}}}  \vee  \ell_{{\mathrm{2}}}   }  \ottmv{x}   )    )    )     \rrbracket    \circ   \langle   \langle \rangle   ,   \text{id}_{  \mathbf{S}_{  \ell  }   \llbracket  \ottnt{A}  \rrbracket   }   \rangle   \\
= &   \llbracket     j^{ \ell_{{\mathrm{1}}} }_{  \underline{ \ottnt{A} }  }   \:   (    (   \mathbf{lift}^{  \ell_{{\mathrm{1}}}  }   j^{ \ell_{{\mathrm{2}}} }_{  \underline{ \ottnt{A} }  }    )   \:   (   \mathbf{fork}^{  \ell_{{\mathrm{1}}}  ,  \ell_{{\mathrm{2}}}  }   (   \mathbf{up}^{  \ell  ,   \ell_{{\mathrm{1}}}  \vee  \ell_{{\mathrm{2}}}   }  \ottmv{x}   )    )    )     \rrbracket   \circ   \langle   \langle \rangle   ,   \text{id}_{  \mathbf{S}_{  \ell  }   \llbracket  \ottnt{A}  \rrbracket   }   \rangle   \\
= &    \text{app}   \circ   \langle    \llbracket    j^{ \ell_{{\mathrm{1}}} }_{  \underline{ \ottnt{A} }  }    \rrbracket   \circ   \langle \rangle    ,    \text{app}   \circ   \langle    \llbracket    \mathbf{lift}^{  \ell_{{\mathrm{1}}}  }   j^{ \ell_{{\mathrm{2}}} }_{  \underline{ \ottnt{A} }  }     \rrbracket   \circ   \langle \rangle    ,   \llbracket    \mathbf{fork}^{  \ell_{{\mathrm{1}}}  ,  \ell_{{\mathrm{2}}}  }   (   \mathbf{up}^{  \ell  ,   \ell_{{\mathrm{1}}}  \vee  \ell_{{\mathrm{2}}}   }  \ottmv{x}   )     \rrbracket   \rangle    \rangle    \circ   \langle   \langle \rangle   ,   \text{id}_{  \mathbf{S}_{  \ell  }   \llbracket  \ottnt{A}  \rrbracket   }   \rangle   \\
= &    \text{app}   \circ   \langle    \llbracket    j^{ \ell_{{\mathrm{1}}} }_{  \underline{ \ottnt{A} }  }    \rrbracket   \circ   \langle \rangle    ,    \text{app}   \circ   \langle    \Lambda \Big(     \mathbf{S}_{  \ell_{{\mathrm{1}}}  }   (   \Lambda^{-1}   \llbracket    j^{ \ell_{{\mathrm{2}}} }_{  \underline{ \ottnt{A} }  }    \rrbracket    )    \circ   t^{\mathbf{S}_{  \ell_{{\mathrm{1}}}  } }_{  \top  ,   \mathbf{S}_{  \ell_{{\mathrm{2}}}  }   \llbracket  \ottnt{A}  \rrbracket   }     \Big)   \circ   \langle \rangle    ,     \delta^{  \ell_{{\mathrm{1}}}  ,  \ell_{{\mathrm{2}}}  }_{  \llbracket  \ottnt{A}  \rrbracket  }   \circ   \mathbf{S}^{  \ell   \sqsubseteq    \ell_{{\mathrm{1}}}  \vee  \ell_{{\mathrm{2}}}   }_{  \llbracket  \ottnt{A}  \rrbracket  }    \circ   \pi_2    \rangle    \rangle    \circ   \langle   \langle \rangle   ,   \text{id}_{  \mathbf{S}_{  \ell  }   \llbracket  \ottnt{A}  \rrbracket   }   \rangle   \\
= &   \text{app}   \circ   \langle    \llbracket    j^{ \ell_{{\mathrm{1}}} }_{  \underline{ \ottnt{A} }  }    \rrbracket   \circ   \langle \rangle    ,    \text{app}   \circ   \langle    \Lambda \Big(     \mathbf{S}_{  \ell_{{\mathrm{1}}}  }   (     \Lambda^{-1}   \llbracket    j^{ \ell_{{\mathrm{2}}} }_{  \underline{ \ottnt{A} }  }    \rrbracket    \circ   \langle   \langle \rangle   ,   \text{id}   \rangle    \circ   \pi_2    )    \circ   t^{\mathbf{S}_{  \ell_{{\mathrm{1}}}  } }_{  \top  ,   \mathbf{S}_{  \ell_{{\mathrm{2}}}  }   \llbracket  \ottnt{A}  \rrbracket   }     \Big)   \circ   \langle \rangle    ,     \delta^{  \ell_{{\mathrm{1}}}  ,  \ell_{{\mathrm{2}}}  }_{  \llbracket  \ottnt{A}  \rrbracket  }   \circ   \mathbf{S}^{  \ell   \sqsubseteq    \ell_{{\mathrm{1}}}  \vee  \ell_{{\mathrm{2}}}   }_{  \llbracket  \ottnt{A}  \rrbracket  }    \circ   \pi_2    \rangle    \rangle   \\ & \hspace*{30pt} \circ  \langle   \langle \rangle   ,   \text{id}_{  \mathbf{S}_{  \ell  }   \llbracket  \ottnt{A}  \rrbracket   }   \rangle  \\
= &    \text{app}   \circ   \langle    \llbracket    j^{ \ell_{{\mathrm{1}}} }_{  \underline{ \ottnt{A} }  }    \rrbracket   \circ   \langle \rangle    ,    \text{app}   \circ   \langle    \Lambda \Big(     \mathbf{S}_{  \ell_{{\mathrm{1}}}  }   k^{ \ell_{{\mathrm{2}}} }_{  \llbracket  \ottnt{A}  \rrbracket  }    \circ   \pi_2     \Big)   \circ   \langle \rangle    ,     \delta^{  \ell_{{\mathrm{1}}}  ,  \ell_{{\mathrm{2}}}  }_{  \llbracket  \ottnt{A}  \rrbracket  }   \circ   \mathbf{S}^{  \ell   \sqsubseteq    \ell_{{\mathrm{1}}}  \vee  \ell_{{\mathrm{2}}}   }_{  \llbracket  \ottnt{A}  \rrbracket  }    \circ   \pi_2    \rangle    \rangle    \circ   \langle   \langle \rangle   ,   \text{id}_{  \mathbf{S}_{  \ell  }   \llbracket  \ottnt{A}  \rrbracket   }   \rangle   \\ & \hspace*{30pt} [\text{By IH and strength}]\\
= &    \text{app}   \circ   \langle    \llbracket    j^{ \ell_{{\mathrm{1}}} }_{  \underline{ \ottnt{A} }  }    \rrbracket   \circ   \langle \rangle    ,      \mathbf{S}_{  \ell_{{\mathrm{1}}}  }   k^{ \ell_{{\mathrm{2}}} }_{  \llbracket  \ottnt{A}  \rrbracket  }    \circ   \pi_2    \circ   (    \text{id}   \times   (     \delta^{  \ell_{{\mathrm{1}}}  ,  \ell_{{\mathrm{2}}}  }_{  \llbracket  \ottnt{A}  \rrbracket  }   \circ   \mathbf{S}^{  \ell   \sqsubseteq    \ell_{{\mathrm{1}}}  \vee  \ell_{{\mathrm{2}}}   }_{  \llbracket  \ottnt{A}  \rrbracket  }    \circ   \pi_2    )    )    \circ   \langle   \langle \rangle   ,   \text{id}   \rangle    \rangle    \circ   \langle   \langle \rangle   ,   \text{id}   \rangle   \\
= &    \text{app}   \circ   \langle    \llbracket    j^{ \ell_{{\mathrm{1}}} }_{  \underline{ \ottnt{A} }  }    \rrbracket   \circ   \langle \rangle    ,        \mathbf{S}_{  \ell_{{\mathrm{1}}}  }   k^{ \ell_{{\mathrm{2}}} }_{  \llbracket  \ottnt{A}  \rrbracket  }    \circ   \delta^{  \ell_{{\mathrm{1}}}  ,  \ell_{{\mathrm{2}}}  }_{  \llbracket  \ottnt{A}  \rrbracket  }    \circ   \mathbf{S}^{  \ell   \sqsubseteq    \ell_{{\mathrm{1}}}  \vee  \ell_{{\mathrm{2}}}   }_{  \llbracket  \ottnt{A}  \rrbracket  }    \circ   \pi_2    \circ   \pi_2    \circ   \langle   \langle \rangle   ,   \text{id}   \rangle    \rangle    \circ   \langle   \langle \rangle   ,   \text{id}   \rangle   \\
= &    \text{app}   \circ   \langle    \llbracket    j^{ \ell_{{\mathrm{1}}} }_{  \underline{ \ottnt{A} }  }    \rrbracket   \circ   \langle \rangle    ,      \mathbf{S}_{  \ell_{{\mathrm{1}}}  }   k^{ \ell_{{\mathrm{2}}} }_{  \llbracket  \ottnt{A}  \rrbracket  }    \circ   \delta^{  \ell_{{\mathrm{1}}}  ,  \ell_{{\mathrm{2}}}  }_{  \llbracket  \ottnt{A}  \rrbracket  }    \circ   \mathbf{S}^{  \ell   \sqsubseteq    \ell_{{\mathrm{1}}}  \vee  \ell_{{\mathrm{2}}}   }_{  \llbracket  \ottnt{A}  \rrbracket  }    \circ   \pi_2    \rangle    \circ   \langle   \langle \rangle   ,   \text{id}   \rangle   \\
= &     \text{app}   \circ   (    \llbracket    j^{ \ell_{{\mathrm{1}}} }_{  \underline{ \ottnt{A} }  }    \rrbracket   \times   \text{id}_{   \mathbf{S}_{  \ell_{{\mathrm{1}}}  }   \llbracket  \ottnt{A}  \rrbracket    }    )    \circ   \langle   \langle \rangle   ,      \mathbf{S}_{  \ell_{{\mathrm{1}}}  }   k^{ \ell_{{\mathrm{2}}} }_{  \llbracket  \ottnt{A}  \rrbracket  }    \circ   \delta^{  \ell_{{\mathrm{1}}}  ,  \ell_{{\mathrm{2}}}  }_{  \llbracket  \ottnt{A}  \rrbracket  }    \circ   \mathbf{S}^{  \ell   \sqsubseteq    \ell_{{\mathrm{1}}}  \vee  \ell_{{\mathrm{2}}}   }_{  \llbracket  \ottnt{A}  \rrbracket  }    \circ   \pi_2    \rangle    \circ   \langle   \langle \rangle   ,   \text{id}   \rangle   \\
= &      \Lambda^{-1}   \llbracket    j^{ \ell_{{\mathrm{1}}} }_{  \underline{ \ottnt{A} }  }    \rrbracket    \circ   \langle   \langle \rangle   ,   \text{id}_{   \mathbf{S}_{  \ell_{{\mathrm{1}}}  }   \llbracket  \ottnt{A}  \rrbracket    }   \rangle    \circ   \pi_2    \circ   \langle   \langle \rangle   ,      \mathbf{S}_{  \ell_{{\mathrm{1}}}  }   k^{ \ell_{{\mathrm{2}}} }_{  \llbracket  \ottnt{A}  \rrbracket  }    \circ   \delta^{  \ell_{{\mathrm{1}}}  ,  \ell_{{\mathrm{2}}}  }_{  \llbracket  \ottnt{A}  \rrbracket  }    \circ   \mathbf{S}^{  \ell   \sqsubseteq    \ell_{{\mathrm{1}}}  \vee  \ell_{{\mathrm{2}}}   }_{  \llbracket  \ottnt{A}  \rrbracket  }    \circ   \pi_2    \rangle    \circ   \langle   \langle \rangle   ,   \text{id}   \rangle   \\
= &     k^{ \ell_{{\mathrm{1}}} }_{  \llbracket  \ottnt{A}  \rrbracket  }   \circ   \mathbf{S}_{  \ell_{{\mathrm{1}}}  }   k^{ \ell_{{\mathrm{2}}} }_{  \llbracket  \ottnt{A}  \rrbracket  }     \circ   \delta^{  \ell_{{\mathrm{1}}}  ,  \ell_{{\mathrm{2}}}  }_{  \llbracket  \ottnt{A}  \rrbracket  }    \circ   \mathbf{S}^{  \ell   \sqsubseteq    \ell_{{\mathrm{1}}}  \vee  \ell_{{\mathrm{2}}}   }_{  \llbracket  \ottnt{A}  \rrbracket  }   \hspace*{3pt} [\text{By IH}]\\
= &  k^{ \ell }_{  \llbracket  \ottnt{A}  \rrbracket  } 
\end{align*} 

\end{itemize}
\end{proof}


\begin{theorem}[Theorem \ref{gmccdcce}]
Let $ \mathcal{L} $ be a bounded join-semilattice.
\begin{itemize}
\item If $ \Gamma  \vdash  \ottnt{a}  :  \ottnt{A} $ in GMCC($ \mathcal{L} $), then $ \llbracket   \overline{ \ottnt{a} }   \rrbracket  =  \llbracket  \ottnt{a}  \rrbracket $.
\item If $ \Gamma  \vdash  \ottnt{a}  :  \ottnt{A} $ in \ED{}($ \mathcal{L} $), then $ \llbracket   \underline{ \ottnt{a} }   \rrbracket  =  \llbracket  \ottnt{a}  \rrbracket $.
\end{itemize}
\end{theorem}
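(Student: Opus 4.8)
The plan is to observe that the two displayed items are verbatim Theorems~\ref{equiv1} and~\ref{equiv2}, which are established just above, so the present statement follows immediately by citing them with no further argument. Accordingly, the real content is in how those two theorems are proved, and, beneath them, in how the isomorphism $k : \mathbf{S}_\ell \llbracket A \rrbracket \to \llbracket A \rrbracket$ for $\ell \sqsubseteq A$ is constructed and analysed (Lemma~\ref{lemmak}, i.e.\ Lemmas~\ref{Ap2} and~\ref{Ap3}). I describe the two inductions and then the obstacle underneath.

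For the first item I would induct on $\Gamma \vdash a : A$ in GMCC($\mathcal{L}$). The standard $\lambda$-calculus constructors go through by the induction hypothesis, since $\overline{(-)}$ is the identity there. For each modal constructor I would unfold the translation of Figures~\ref{GMCtoDCC} and~\ref{GCCtoDCC} into $\mathbf{eta}$/$\mathbf{bind}$, compute the \ED{} interpretation from the definitions of $\llbracket \mathbf{eta}^\ell a \rrbracket$ and $\llbracket \mathbf{bind}^\ell\, x = a\, \mathbf{in}\, b \rrbracket$, and simplify back to the GMCC interpretation. The essential inputs are: that $(\mathbf{S}_\ell, \mathbf{S}^{\bot \sqsubseteq \ell}\circ\eta, \mu^{\ell,\ell})$ is an idempotent monad (so the unit laws collapse), the strength axioms of Figure~\ref{strong} (to slide $t^{\mathbf{S}_\ell}$ past $\eta$ and past $\mathbf{S}^{\ell_1 \sqsubseteq \ell_2}$), naturality of $\mu$, $\delta$ and $\mathbf{S}^{\ell_1\sqsubseteq\ell_2}$ in each argument, lax/oplax monoidality of $\mathbf{S}$, and Lemma~\ref{lemmak} giving $k^\ell$ as a two-sided inverse of $\mathbf{S}^{\bot\sqsubseteq\ell}\circ\eta$. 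The \rref{MC-Fork} case is the laborious one: $\overline{\mathbf{fork}^{\ell_1,\ell_2} a} = \mathbf{bind}^{\ell_1 \vee \ell_2}\, x = \overline{a}\, \mathbf{in}\, \mathbf{eta}^{\ell_1}\mathbf{eta}^{\ell_2} x$ must be shown to denote $\delta^{\ell_1,\ell_2}$, which forces a chain of diagram chases rearranging nested $\mu$'s, $\delta$'s and $\mathbf{S}^{\ell \sqsubseteq \ell'}$'s (Figures~\ref{prfcd9} and~\ref{prfcd10}).

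For the second item I would induct on $\Gamma \vdash a : A$ in \ED{}($\mathcal{L}$); only \rref{DCC-Eta} and \rref{DCC-Bind} are non-trivial. For \rref{DCC-Bind}, after unfolding $\underline{\mathbf{bind}^\ell\, x = a\, \mathbf{in}\, b} = j^\ell_{\underline{B}}\, ((\mathbf{lift}^\ell (\lambda x.\underline{b}))\, \underline{a})$ using Lemma~\ref{protectId}, I would compute its GMCC denotation and compare it with the \ED{} denotation $k^\ell_{\llbracket B \rrbracket} \circ \mathbf{S}_\ell \llbracket b \rrbracket \circ t^{\mathbf{S}_\ell} \circ \langle \mathrm{id}, \llbracket a \rrbracket \rangle$; the bridge is a compatibility lemma (Lemma~\ref{jk}) stating $k^\ell_{\llbracket A \rrbracket} = \Lambda^{-1}\llbracket j^\ell_{\underline{A}} \rrbracket \circ \langle \langle\rangle, \mathrm{id}\rangle$, which I would prove by a parallel induction on the protection derivation $\ell \sqsubseteq A$, matching each case against the definition of $j^\ell$.

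The main obstacle is not either surface induction but the construction and verification of $k^\ell$ (Lemmas~\ref{Ap2},~\ref{Ap3}) and its compatibility with the syntactic $j^\ell$ (Lemma~\ref{jk}): these are recursions on the protection judgement whose delicate cases are \rref{Prot-Already} and \rref{Prot-Combine}. In \rref{Prot-Already} one must ``flip'' $\mu^{\ell,\ell'}$ into $\delta^{\ell',\ell}$---exactly the point where comonadic structure is used---and then chase the square of Figure~\ref{prfcd4} to confirm the result is still inverse to $\mathbf{S}^{\bot\sqsubseteq\ell}\circ\eta$; in \rref{Prot-Combine} one composes two such inverses through $\delta^{\ell_1,\ell_2}$ and chases Figure~\ref{prfcd6}. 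Everything ultimately rests on $\mathbf{S}$ being a \emph{strong} monoidal functor into $\ES$, so that $\mu$ and $\delta$ are genuine mutual inverses and interact correctly with the strengths.
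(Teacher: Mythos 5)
Your proposal matches the paper's proof exactly: the paper derives this theorem in one line from Theorems~\ref{equiv1} and~\ref{equiv2}, whose proofs proceed by the same two inductions you describe, resting on Lemmas~\ref{Ap2}, \ref{Ap3} and \ref{jk} and the diagram chases of Figures~\ref{prfcd4}, \ref{prfcd6}, \ref{prfcd9} and \ref{prfcd10}. Your identification of the \rref{MC-Fork}, \rref{Prot-Already} and \rref{Prot-Combine} cases as the delicate ones is also accurate.
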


\begin{proof}
Follows by theorems \ref{equiv1} and \ref{equiv2}.
\end{proof}


\begin{theorem}[Theorem \ref{dcceEM}]
If $ \ell  \sqsubseteq  \ottnt{A} $ in \ED{}, then $( \llbracket  \ottnt{A}  \rrbracket ,  k^{ \ell }_{  \llbracket  \ottnt{A}  \rrbracket  } )$ is an $\mathbf{S}_{\ell}$-algebra.\\
Further, if $ \ell  \sqsubseteq  \ottnt{A} $ and $ \ell  \sqsubseteq  \ottnt{B} $, then for any $f \in \text{Hom}_{\Ct} ( \llbracket  \ottnt{A}  \rrbracket  ,  \llbracket  \ottnt{B}  \rrbracket )$, $f$ is an $\mathbf{S}_{\ell}$-algebra morphism.\\
Hence, the full subcategory of $\Ct{}$ with $\text{Obj} := \{  \llbracket  \ottnt{A}  \rrbracket  \: | \:  \ell  \sqsubseteq  \ottnt{A}  \}$ is also a full subcategory of the Eilenberg-Moore category, $\Ct{}^{\mathbf{S}_{\ell}}$.
\end{theorem}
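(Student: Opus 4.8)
The plan is to leverage the fact — recorded just before Lemma \ref{lemmak} — that $(\mathbf{S}_{\ell}, \overline{\eta}, \mu^{\ell,\ell})$, where $\overline{\eta} = \mathbf{S}^{\bot \sqsubseteq \ell} \circ \eta$, is an idempotent monad on $\Ct$, together with Lemma \ref{lemmak}, which exhibits $k^{\ell}_{\llbracket A \rrbracket}$ as a two-sided inverse of $\overline{\eta}_{\llbracket A \rrbracket}$ whenever $\ell \sqsubseteq A$. First I would check the two Eilenberg--Moore axioms for $(\llbracket A \rrbracket, k^{\ell}_{\llbracket A \rrbracket})$. The unit axiom $k^{\ell}_{\llbracket A \rrbracket} \circ \overline{\eta}_{\llbracket A \rrbracket} = \text{id}_{\llbracket A \rrbracket}$ is exactly the first identity of Lemma \ref{lemmak}. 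For the multiplication axiom $k^{\ell}_{\llbracket A \rrbracket} \circ \mu^{\ell,\ell}_{\llbracket A \rrbracket} = k^{\ell}_{\llbracket A \rrbracket} \circ \mathbf{S}_{\ell} k^{\ell}_{\llbracket A \rrbracket}$, I would avoid a brute diagram chase and instead argue: the generic monad unit law gives $\mu^{\ell,\ell}_{\llbracket A \rrbracket} \circ \mathbf{S}_{\ell} \overline{\eta}_{\llbracket A \rrbracket} = \text{id}_{\mathbf{S}_{\ell} \llbracket A \rrbracket}$, and since $\mu^{\ell,\ell}_{\llbracket A \rrbracket}$ is invertible while $\mathbf{S}_{\ell} \overline{\eta}_{\llbracket A \rrbracket}$ is invertible with inverse $\mathbf{S}_{\ell} k^{\ell}_{\llbracket A \rrbracket}$ (apply the functor $\mathbf{S}_{\ell}$ to the mutually inverse pair of Lemma \ref{lemmak}), we get $\mu^{\ell,\ell}_{\llbracket A \rrbracket} = \mathbf{S}_{\ell} k^{\ell}_{\llbracket A \rrbracket}$, so both sides of the axiom coincide.

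For the morphism statement, fix $\ell \sqsubseteq A$, $\ell \sqsubseteq B$ and $f \in \text{Hom}_{\Ct}(\llbracket A \rrbracket, \llbracket B \rrbracket)$; I must show $f \circ k^{\ell}_{\llbracket A \rrbracket} = k^{\ell}_{\llbracket B \rrbracket} \circ \mathbf{S}_{\ell} f$. The key point is that $\overline{\eta} = \mathbf{S}^{\bot \sqsubseteq \ell} \circ \eta$ is a composite of natural transformations, hence natural, so $\overline{\eta}_{\llbracket B \rrbracket} \circ f = \mathbf{S}_{\ell} f \circ \overline{\eta}_{\llbracket A \rrbracket}$. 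Composing this on the left with $k^{\ell}_{\llbracket B \rrbracket}$, using $k^{\ell}_{\llbracket B \rrbracket} \circ \overline{\eta}_{\llbracket B \rrbracket} = \text{id}$, gives $f = k^{\ell}_{\llbracket B \rrbracket} \circ \mathbf{S}_{\ell} f \circ \overline{\eta}_{\llbracket A \rrbracket}$, and then composing on the right with $k^{\ell}_{\llbracket A \rrbracket}$, using $\overline{\eta}_{\llbracket A \rrbracket} \circ k^{\ell}_{\llbracket A \rrbracket} = \text{id}$ (both identities from Lemma \ref{lemmak}), yields the algebra-morphism equation. Thus every underlying-$\Ct$-morphism between two such carriers is automatically an $\mathbf{S}_{\ell}$-algebra morphism.

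For the final corollary, let $\mathcal{D}$ be the full subcategory of $\Ct$ on $\{ \llbracket A \rrbracket \mid \ell \sqsubseteq A \}$. I would send each object $\llbracket A \rrbracket$ to the algebra $(\llbracket A \rrbracket, k^{\ell}_{\llbracket A \rrbracket})$ and each morphism to itself; the first two parts make this well defined, and it obviously preserves identities and composition. It is faithful (identity on hom-sets) and full, because the morphism part shows $\text{Hom}_{\Ct^{\mathbf{S}_{\ell}}}\big((\llbracket A \rrbracket, k^{\ell}_{\llbracket A \rrbracket}), (\llbracket B \rrbracket, k^{\ell}_{\llbracket B \rrbracket})\big)$ is all of $\text{Hom}_{\Ct}(\llbracket A \rrbracket, \llbracket B \rrbracket)$. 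Hence $\mathcal{D}$ embeds as a full subcategory of $\Ct^{\mathbf{S}_{\ell}}$.

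I do not anticipate a real obstacle: the substantive content is already packaged in Lemmas \ref{Ap2}, \ref{Ap3} and \ref{lemmak}. The only delicate point is the multiplication axiom, where one should resist verifying it directly and instead reduce it to $\mathbf{S}_{\ell} k^{\ell}_{\llbracket A \rrbracket} = (\mathbf{S}_{\ell}\overline{\eta}_{\llbracket A \rrbracket})^{-1} = \mu^{\ell,\ell}_{\llbracket A \rrbracket}$ via the monad unit law and invertibility of $\mu^{\ell,\ell}$; a secondary check is that the morphism $k$ in the theorem statement is precisely the $k^{\ell}_{\llbracket A \rrbracket}$ built in Lemma \ref{lemmak}, so that its defining identities are available for use.
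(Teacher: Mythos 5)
Your proposal is correct and follows essentially the same route as the paper: the unit axiom and the algebra-morphism claim are obtained exactly as in the paper's proof, from Lemmas \ref{Ap2}, \ref{Ap3} and \ref{lemmak} together with naturality of $\overline{\eta}$, and the full-subcategory statement is the immediate consequence you describe. Your treatment of the multiplication axiom --- deducing $\mathbf{S}_{\ell} k^{\ell}_{\llbracket A \rrbracket} = \mu^{\ell,\ell}_{\llbracket A \rrbracket}$ from the monad unit law and the invertibility of $\mathbf{S}_{\ell}\overline{\eta}_{\llbracket A \rrbracket}$ --- is a slightly more direct packaging of the paper's chain of rewrites (which inserts $\delta^{\ell,\ell}\circ\mu^{\ell,\ell}$ and appeals to the same unit law), but it rests on the same facts and is equally valid.
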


\begin{proof}
Let $ \ell  \sqsubseteq  \ottnt{A} $. We show that $( \llbracket  \ottnt{A}  \rrbracket  ,  k^{ \ell }_{  \llbracket  \ottnt{A}  \rrbracket  } )$ is an $(\mathbf{S}_{\ell} , \mathbf{S}^{  \bot   \sqsubseteq  \ell } \circ \eta , \mu^{\ell,\ell})$-algebra. 

We use the following shorthand: $ \overline{\eta}_{ \ottnt{X} }  :=   \mathbf{S}^{   \bot    \sqsubseteq   \ell  }_{ \ottnt{X} }   \circ   \eta_{ \ottnt{X} }  $ and $\overline{\mu}_{X} :=  \mu^{  \ell  ,  \ell  }_{ \ottnt{X} } $, where $X \in \text{Obj}(\Ct)$.\\

We need to show: $  k^{ \ell }_{  \llbracket  \ottnt{A}  \rrbracket  }   \circ   \overline{\eta}_{  \llbracket  \ottnt{A}  \rrbracket  }   =  \text{id}_{  \llbracket  \ottnt{A}  \rrbracket  } $. This follows by lemma \ref{Ap2}. 

Next, we need to show: $  k^{ \ell }_{  \llbracket  \ottnt{A}  \rrbracket  }   \circ   \mathbf{S}_{  \ell  }   k^{ \ell }_{  \llbracket  \ottnt{A}  \rrbracket  }    =   k^{ \ell }_{  \llbracket  \ottnt{A}  \rrbracket  }   \circ   \overline{\mu}_{  \llbracket  \ottnt{A}  \rrbracket  }  $.

Now,
\begin{align*}
  k^{ \ell }_{  \llbracket  \ottnt{A}  \rrbracket  }   \circ   \overline{\eta}_{  \llbracket  \ottnt{A}  \rrbracket  }   & =  \text{id}_{  \llbracket  \ottnt{A}  \rrbracket  }  \\
\text{or, }   \mathbf{S}_{  \ell  }   k^{ \ell }_{  \llbracket  \ottnt{A}  \rrbracket  }    \circ   \mathbf{S}_{  \ell  }   \overline{\eta}_{  \llbracket  \ottnt{A}  \rrbracket  }    & =  \text{id}_{  \mathbf{S}_{  \ell  }   \llbracket  \ottnt{A}  \rrbracket   }  \\
\text{or, }    k^{ \ell }_{  \llbracket  \ottnt{A}  \rrbracket  }   \circ   \mathbf{S}_{  \ell  }   k^{ \ell }_{  \llbracket  \ottnt{A}  \rrbracket  }     \circ   \mathbf{S}_{  \ell  }   \overline{\eta}_{  \llbracket  \ottnt{A}  \rrbracket  }    & =  k^{ \ell }_{  \llbracket  \ottnt{A}  \rrbracket  }  \\
\text{or, }     k^{ \ell }_{  \llbracket  \ottnt{A}  \rrbracket  }   \circ   \mathbf{S}_{  \ell  }   k^{ \ell }_{  \llbracket  \ottnt{A}  \rrbracket  }     \circ   \mathbf{S}_{  \ell  }   \overline{\eta}_{  \llbracket  \ottnt{A}  \rrbracket  }     \circ   \overline{\mu}_{  \llbracket  \ottnt{A}  \rrbracket  }   & =   k^{ \ell }_{  \llbracket  \ottnt{A}  \rrbracket  }   \circ   \overline{\mu}_{  \llbracket  \ottnt{A}  \rrbracket  }   \\
\text{or, }       k^{ \ell }_{  \llbracket  \ottnt{A}  \rrbracket  }   \circ   \mathbf{S}_{  \ell  }   k^{ \ell }_{  \llbracket  \ottnt{A}  \rrbracket  }     \circ   \delta^{  \ell  ,  \ell  }_{  \llbracket  \ottnt{A}  \rrbracket  }    \circ   \mu^{  \ell  ,  \ell  }_{  \llbracket  \ottnt{A}  \rrbracket  }    \circ   \mathbf{S}_{  \ell  }   \overline{\eta}_{  \llbracket  \ottnt{A}  \rrbracket  }     \circ   \overline{\mu}_{  \llbracket  \ottnt{A}  \rrbracket  }   & =   k^{ \ell }_{  \llbracket  \ottnt{A}  \rrbracket  }   \circ   \overline{\mu}_{  \llbracket  \ottnt{A}  \rrbracket  }   \\
\text{or, }       k^{ \ell }_{  \llbracket  \ottnt{A}  \rrbracket  }   \circ   \mathbf{S}_{  \ell  }   k^{ \ell }_{  \llbracket  \ottnt{A}  \rrbracket  }     \circ   \delta^{  \ell  ,  \ell  }_{  \llbracket  \ottnt{A}  \rrbracket  }    \circ   \overline{\mu}_{  \llbracket  \ottnt{A}  \rrbracket  }    \circ   \mathbf{S}_{  \ell  }   \overline{\eta}_{  \llbracket  \ottnt{A}  \rrbracket  }     \circ   \overline{\mu}_{  \llbracket  \ottnt{A}  \rrbracket  }   & =   k^{ \ell }_{  \llbracket  \ottnt{A}  \rrbracket  }   \circ   \overline{\mu}_{  \llbracket  \ottnt{A}  \rrbracket  }   \\
\text{or, }     k^{ \ell }_{  \llbracket  \ottnt{A}  \rrbracket  }   \circ   \mathbf{S}_{  \ell  }   k^{ \ell }_{  \llbracket  \ottnt{A}  \rrbracket  }     \circ   \delta^{  \ell  ,  \ell  }_{  \llbracket  \ottnt{A}  \rrbracket  }    \circ   \overline{\mu}_{  \llbracket  \ottnt{A}  \rrbracket  }   & =   k^{ \ell }_{  \llbracket  \ottnt{A}  \rrbracket  }   \circ   \overline{\mu}_{  \llbracket  \ottnt{A}  \rrbracket  }   \hspace*{3pt} [\because \: (\mathbf{S}_{\ell} , \overline{\eta}, \overline{\mu}) \text{ is a monad}]\\
\text{or, }     k^{ \ell }_{  \llbracket  \ottnt{A}  \rrbracket  }   \circ   \mathbf{S}_{  \ell  }   k^{ \ell }_{  \llbracket  \ottnt{A}  \rrbracket  }     \circ   \delta^{  \ell  ,  \ell  }_{  \llbracket  \ottnt{A}  \rrbracket  }    \circ   \mu^{  \ell  ,  \ell  }_{  \llbracket  \ottnt{A}  \rrbracket  }   & =   k^{ \ell }_{  \llbracket  \ottnt{A}  \rrbracket  }   \circ   \overline{\mu}_{  \llbracket  \ottnt{A}  \rrbracket  }  \\
\text{or, }   k^{ \ell }_{  \llbracket  \ottnt{A}  \rrbracket  }   \circ   \mathbf{S}_{  \ell  }   k^{ \ell }_{  \llbracket  \ottnt{A}  \rrbracket  }    & =   k^{ \ell }_{  \llbracket  \ottnt{A}  \rrbracket  }   \circ   \overline{\mu}_{  \llbracket  \ottnt{A}  \rrbracket  }  
\end{align*}

Hence, $( \llbracket  \ottnt{A}  \rrbracket  ,  k^{ \ell }_{  \llbracket  \ottnt{A}  \rrbracket  } )$ is an $\mathbf{S}_{\ell}$-algebra.\\

Next, let $ \ell  \sqsubseteq  \ottnt{A} $ and $ \ell  \sqsubseteq  \ottnt{B} $ such that $f \in \text{Hom}_{\Ct} ( \llbracket  \ottnt{A}  \rrbracket  ,  \llbracket  \ottnt{B}  \rrbracket )$.

We show that $f$ is an $\mathbf{S}_{\ell}$-algebra morphism.

Need to show: $ \ottnt{f}  \circ   k^{ \ell }_{  \llbracket  \ottnt{A}  \rrbracket  }   =   k^{ \ell }_{  \llbracket  \ottnt{B}  \rrbracket  }   \circ   \mathbf{S}_{  \ell  }  \ottnt{f}  $.

Since $\overline{\eta}$ is a natural transformation, we have:
\begin{align*}
  \overline{\eta}_{  \llbracket  \ottnt{B}  \rrbracket  }   \circ  \ottnt{f}  & =   \mathbf{S}_{  \ell  }  \ottnt{f}   \circ   \overline{\eta}_{  \llbracket  \ottnt{A}  \rrbracket  }   \\
\text{or, }     k^{ \ell }_{  \llbracket  \ottnt{B}  \rrbracket  }   \circ   \overline{\eta}_{  \llbracket  \ottnt{B}  \rrbracket  }    \circ  \ottnt{f}   \circ   k^{ \ell }_{  \llbracket  \ottnt{A}  \rrbracket  }   & =     k^{ \ell }_{  \llbracket  \ottnt{B}  \rrbracket  }   \circ   \mathbf{S}_{  \ell  }  \ottnt{f}    \circ   \overline{\eta}_{  \llbracket  \ottnt{A}  \rrbracket  }    \circ   k^{ \ell }_{  \llbracket  \ottnt{A}  \rrbracket  }   \\
\text{or, }    \text{id}_{  \llbracket  \ottnt{B}  \rrbracket  }   \circ  \ottnt{f}   \circ   k^{ \ell }_{  \llbracket  \ottnt{A}  \rrbracket  }   & =     k^{ \ell }_{  \llbracket  \ottnt{B}  \rrbracket  }   \circ   \mathbf{S}_{  \ell  }  \ottnt{f}    \circ   \overline{\eta}_{  \llbracket  \ottnt{A}  \rrbracket  }    \circ   k^{ \ell }_{  \llbracket  \ottnt{A}  \rrbracket  }   \hspace*{3pt} \text{[By Lemma \ref{Ap2}]} \\
\text{or, }  \ottnt{f}  \circ   k^{ \ell }_{  \llbracket  \ottnt{A}  \rrbracket  }   & =    k^{ \ell }_{  \llbracket  \ottnt{B}  \rrbracket  }   \circ   \mathbf{S}_{  \ell  }  \ottnt{f}    \circ   \text{id}_{  \mathbf{S}_{  \ell  }   \llbracket  \ottnt{A}  \rrbracket   }   \hspace*{3pt} \text{[By Lemma  \ref{Ap3}]} \\
\text{or, }  \ottnt{f}  \circ   k^{ \ell }_{  \llbracket  \ottnt{A}  \rrbracket  }   & =   k^{ \ell }_{  \llbracket  \ottnt{B}  \rrbracket  }   \circ   \mathbf{S}_{  \ell  }  \ottnt{f}  
\end{align*}

The final clause of the theorem follows. 
\end{proof}


\begin{theorem}[Theorem \ref{dcceSlA}] \label{dcceSlAprf}
$\llbracket \_ \rrbracket_{(\mathbf{Set},\mathbf{S}^{\ell})}$, for any $\ell \in L$, is a computationally adequate interpretation of \ED{}($ \mathcal{L} $).
\end{theorem}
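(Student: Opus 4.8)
The plan is to observe that Theorem~\ref{dcceSlAprf} is a direct corollary of the general computational adequacy result for \ED{}, namely Theorem~\ref{dcceCA}, applied to the specific interpretation $\llbracket \_ \rrbracket_{(\mathbf{Set},\mathbf{S}^{\ell})}$. So the real work is to check that this particular interpretation satisfies the hypotheses of Theorem~\ref{dcceCA}: first, that $\mathbf{S}^{\ell}$ is a genuine strong monoidal functor from $\Ca( \mathcal{L} )$ to $\ES{}$ (so that the interpretation is well-defined and sound, by Theorem~\ref{dcceWD} and Theorem~\ref{dcceOE}); second, that the interpretation is injective for ground types; and third, for the second clause of Theorem~\ref{dcceCA}, that the morphisms $\overline{\eta}_X = \mathbf{S}^{  \bot   \sqsubseteq  \ell }_X \circ \eta_X$ are monic.

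First I would verify that $\mathbf{S}^{\ell}$ is strong monoidal. The excerpt already records the key facts: $\mathbf{S}^{\ell}( \bot ) = \Id$, so $\eta = \epsilon = \I$; and $\mathbf{S}^{\ell}(\ell_1) \circ \mathbf{S}^{\ell}(\ell_2) = \mathbf{S}^{\ell}( \ell_{{\mathrm{1}}}  \vee  \ell_{{\mathrm{2}}} )$, so $\mu^{\ell_1,\ell_2} = \delta^{\ell_1,\ell_2} = \I$, making $\mathbf{S}^{\ell}$ strict monoidal. The only thing left is functoriality of $\mathbf{S}^{\ell}$ on morphisms of $\Ca( \mathcal{L} )$, which amounts to checking that $\mathbf{S}^{\ell}$ respects identities and composition of the unique arrows $ \ell_{{\mathrm{1}}}  \sqsubseteq  \ell_{{\mathrm{2}}} $; this is immediate from the case split in the definition, since $\Id$ and $\ast$ are functors and $\I$, $\langle \rangle$ compose correctly. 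Both $\Id$ and $\ast$ (the terminal functor) are finite-product-preserving strong endofunctors over $(\mathbf{Set},\times,\top)$, so $\mathbf{S}^{\ell}$ lands in $\ES{}$ as required. By Theorems~\ref{dcceWD} and~\ref{dcceOE}, $\llbracket \_ \rrbracket_{(\mathbf{Set},\mathbf{S}^{\ell})}$ is then a sound interpretation of \ED{}($ \mathcal{L} $) that respects the call-by-value reductions.

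Next I would check injectivity for ground types. The ground type of interest is $ \mathbf{Bool}  =  \ottkw{Unit}  +  \ottkw{Unit} $; its two values $ \mathbf{true} $ and $ \mathbf{false} $ are interpreted, in $\mathbf{Set}$, as the two distinct injections into a two-element set, so $ \llbracket   \mathbf{true}   \rrbracket  \neq  \llbracket   \mathbf{false}   \rrbracket $ regardless of which $\mathbf{S}^{\ell}$ we use (the modal operator does not touch ground types built from $ \ottkw{Unit} $ and $+$ alone at grade $ \bot $, and at other grades $\ast$ collapses everything but this does not affect the ground values themselves, which sit at no modality). Finally, for the second clause, I would note that $\overline{\eta}_X = \mathbf{S}^{\ell,  \bot   \sqsubseteq  \ell }_X \circ \eta_X$ is either $\I_X$ (when $ \ell'  \sqsubseteq  \ell $ for the relevant grade, since $\eta = \I$ and $\mathbf{S}^{\ell}( \bot  \sqsubseteq \ell) = \I$) or $\langle \rangle : X \to \top$ (otherwise); an identity is mono, and $\langle \rangle$ is mono in $\mathbf{Set}$ exactly when $X$ has at most one element — but in the cases where $\mathbf{S}^{\ell}$ blacks out a type, the carrier $X$ is indeed $\top$, so $\overline{\eta}$ is mono in all cases. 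With all hypotheses verified, Theorem~\ref{dcceCA} applies and gives computational adequacy. The main obstacle I anticipate is not any single step but the bookkeeping in the injectivity and monic checks: one must be careful about how the modal functor $\mathbf{S}^{\ell}$ interacts with the ground-type and value structure, and confirm that in every branch of the case analysis the relevant carrier really is terminal when it needs to be — this is where a sloppy argument could go wrong, even though the underlying facts are elementary.
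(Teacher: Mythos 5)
Your overall route is the same as the paper's: reduce to the general adequacy theorem, check that $\mathbf{S}^{\ell}$ is a (strict) strong monoidal functor from $\Ca(\mathcal{L})$ to $\ES$ via the case split on whether the relevant labels lie below $\ell$, and check injectivity for ground types. Those parts are fine and match the paper's proof, which does exactly this and nothing more.

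The one step that does not hold up is your justification of the mono hypothesis for the second clause of the adequacy theorem. You claim that when $\mathbf{S}^{\ell}$ blacks out a grade, $\overline{\eta}_X = \langle\rangle : X \to \top$ is still mono because ``the carrier $X$ is indeed $\top$.'' That is not true: the hypothesis in Theorem~\ref{dcceCA} quantifies over \emph{all} $X \in \text{Obj}(\Ct)$, not just carriers of protected types, and in $\mathbf{Set}$ the map $\langle\rangle : X \to \top$ fails to be mono as soon as $|X| > 1$. So for an observer level $\ell_0$ and a modality $\ell$ with $\neg(\ell \sqsubseteq \ell_0)$, the morphisms $\overline{\eta}_X = \mathbf{S}^{\ell_0}(\bot \sqsubseteq \ell)_X \circ \eta_X$ are genuinely not mono, and the second adequacy clause for $\mathcal{T}_{\ell}\,\mathbf{Bool}$ is simply not available in that configuration. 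The paper sidesteps this entirely: its proof of Theorem~\ref{dcceSlA} does not attempt to discharge the mono hypothesis, because that hypothesis is a side condition of the second bullet of Theorem~\ref{dcceCA} and is verified only where it is actually used --- in the noninterference proof, where the observer level is chosen to be the same $\ell'$ as the modality on $\mathbf{Bool}$, so that $\mathbf{S}^{\ell'}(\bot \sqsubseteq \ell') = \I$ and $\overline{\eta}$ is the identity, hence trivially mono. The fix to your write-up is to drop the false claim and either leave the mono condition as a hypothesis to be checked per use, or note that it holds exactly when the modality in question lies below the observer level.
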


\begin{proof}
For a bicartesian category $\Ct$, any strong monoidal functor from $\Ca( \mathcal{L} )$ to $\ES{}$ provides a computationally adequate interpretation of \ED{}($ \mathcal{L} $), given the interpretation for ground types is injective. Now, with respect to $\llbracket \_ \rrbracket_{(\mathbf{Set},\mathbf{S}^{\ell})}$, the interpretation for ground types is injective. As such, to prove that $(\mathbf{Set}, \mathbf{S}^{  \ell  } )$ is a computationally adequate interpretation of \ED{}($ \mathcal{L} $), we just need to show that $ \mathbf{S}^{  \ell  } $ is a strong monoidal functor from $\Ca( \mathcal{L} )$ to $\ES{}$.

Recall the definition of $ \mathbf{S}^{  \ell  } $:
\begin{align*}
\mathbf{S}^{\ell}(\ell') = \begin{cases} 
                            \Id , & \text{ if }  \ell'  \sqsubseteq  \ell  \hspace{10pt}\\
                            \ast , & \text{ otherwise } \hspace*{10pt}
                            \end{cases}
\mathbf{S}^{\ell}( \ell_{{\mathrm{1}}}  \sqsubseteq  \ell_{{\mathrm{2}}} ) & = \begin{cases}
                            \I , & \text{ if }  \ell_{{\mathrm{2}}}  \sqsubseteq  \ell  \\
                            \langle \rangle , & \text{ otherwise } 
                            \end{cases} 
\end{align*} 

By this definition, $ \mathbf{S}^{  \ell  }({   \bot   })  = \Id{}$. Further, $\eta = \epsilon = \I{}_{\Id{}}$. 

Now, for any $\ell_{{\mathrm{1}}} , \ell_{{\mathrm{2}}} \in L$, there are two cases to consider:
\begin{itemize}
\item $ \ell_{{\mathrm{1}}}  \sqsubseteq  \ell $ and $ \ell_{{\mathrm{2}}}  \sqsubseteq  \ell $. Then, $  \ell_{{\mathrm{1}}}  \vee  \ell_{{\mathrm{2}}}   \sqsubseteq  \ell $.\\
So, $ \mathbf{S}^{  \ell  }({  \ell_{{\mathrm{1}}}  })  =  \mathbf{S}^{  \ell  }({  \ell_{{\mathrm{2}}}  })  =  \mathbf{S}^{  \ell  }({   \ell_{{\mathrm{1}}}  \vee  \ell_{{\mathrm{2}}}   })  = \Id{}$.\\
In this case, $\mu^{\ell_{{\mathrm{1}}},\ell_{{\mathrm{2}}}} = \delta^{\ell_{{\mathrm{1}}},\ell_{{\mathrm{2}}}} = \I{}_{\Id{}}$.
\item $\neg ( \ell_{{\mathrm{1}}}  \sqsubseteq  \ell )$ or $\neg ( \ell_{{\mathrm{2}}}  \sqsubseteq  \ell )$. Then, $\neg (  \ell_{{\mathrm{1}}}  \vee  \ell_{{\mathrm{2}}}   \sqsubseteq  \ell )$. \\
So, $ \mathbf{S}^{  \ell  }({  \ell_{{\mathrm{1}}}  })  = \ast$ or $ \mathbf{S}^{  \ell  }({  \ell_{{\mathrm{2}}}  })  = \ast$. Hence, $  \mathbf{S}^{  \ell  }({  \ell_{{\mathrm{1}}}  })   \circ   \mathbf{S}^{  \ell  }({  \ell_{{\mathrm{2}}}  })   = \ast$. Further, $ \mathbf{S}^{  \ell  }({   \ell_{{\mathrm{1}}}  \vee  \ell_{{\mathrm{2}}}   })  = \ast$.\\
In this case, $\mu^{\ell_{{\mathrm{1}}},\ell_{{\mathrm{2}}}} = \delta^{\ell_{{\mathrm{1}}},\ell_{{\mathrm{2}}}} = \I{}_{\ast}$. 
\end{itemize}
Hence, $ \mathbf{S}^{  \ell  } $ is a strong (in fact a strict) monoidal functor from $\Ca( \mathcal{L} )$ to $\ES{}$.
\end{proof}


\begin{lemma}[Lemma \ref{dcceSlObs}] \label{dcceSlObsprf}
If $ \ell''  \sqsubseteq  \ottnt{A} $ and $\neg ( \ell''  \sqsubseteq  \ell )$, then $ \llbracket  \ottnt{A}  \rrbracket _{(\Ct, \mathbf{S}^{  \ell  } )} \cong \top$. 
\end{lemma}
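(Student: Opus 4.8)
The statement to prove is: if $ \ell''  \sqsubseteq  \ottnt{A} $ and $\neg( \ell''  \sqsubseteq  \ell )$, then $ \llbracket  \ottnt{A}  \rrbracket _{(\Ct, \mathbf{S}^{\ell})} \cong \top$. The natural approach is structural induction on the derivation of the protection judgement $ \ell''  \sqsubseteq  \ottnt{A} $, mirroring the case analysis used in Lemma \ref{lemmak} (Lemma \ref{Ap2}) and Theorem \ref{dcceEM}. The key external fact is Lemma \ref{lemmak}: since $ \ell''  \sqsubseteq  \ottnt{A} $, we already know $\mathbf{S}^{\ell}_{\ell''}  \llbracket  \ottnt{A}  \rrbracket  \cong  \llbracket  \ottnt{A}  \rrbracket $. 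So the whole game reduces to showing that $\mathbf{S}^{\ell}_{\ell''}  \llbracket  \ottnt{A}  \rrbracket  \cong \top$, and by the definition of $\mathbf{S}^{\ell}$ this is immediate: since $\neg( \ell''  \sqsubseteq  \ell )$, we have $\mathbf{S}^{\ell}(\ell'') = \ast$, the terminal functor, so $\mathbf{S}^{\ell}_{\ell''}  \llbracket  \ottnt{A}  \rrbracket  = \ast( \llbracket  \ottnt{A}  \rrbracket ) = \top$. Composing the two isomorphisms gives $ \llbracket  \ottnt{A}  \rrbracket  \cong \mathbf{S}^{\ell}_{\ell''}  \llbracket  \ottnt{A}  \rrbracket  = \top$, as required.

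In more detail, the first step is to observe that $(\mathbf{Set}, \mathbf{S}^{\ell})$ is a valid interpretation of \ED{}($ \mathcal{L} $) — this is Theorem \ref{dcceSlA} — so that Lemma \ref{lemmak} applies to the interpretation $ \llbracket   \_   \rrbracket _{(\Ct, \mathbf{S}^{\ell})}$; here I would take $\Ct = \mathbf{Set}$, or more generally any bicartesian closed $\Ct$ for which the strong monoidal functor $\mathbf{S}^{\ell} : \Ca( \mathcal{L} ) \to \ES$ makes sense (the construction of $\mathbf{S}^{\ell}$ does not depend on $\Ct$). The second step is the unwinding of the definition of $\mathbf{S}^{\ell}$ on the object $\ell''$: the hypothesis $\neg( \ell''  \sqsubseteq  \ell )$ lands us in the ``otherwise'' branch, giving $\mathbf{S}^{\ell}(\ell'') = \ast$. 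Then $\mathbf{S}^{\ell}_{\ell''} X = \top$ for every object $X$ of $\Ct$ by the definition of the terminal functor $\ast$. The third step is to instantiate Lemma \ref{lemmak} with $\ell := \ell''$ and the type $\ottnt{A}$ (legitimate since $ \ell''  \sqsubseteq  \ottnt{A} $ is assumed), obtaining an isomorphism $k : \mathbf{S}^{\ell}_{\ell''}  \llbracket  \ottnt{A}  \rrbracket  \to  \llbracket  \ottnt{A}  \rrbracket $. Chaining: $ \llbracket  \ottnt{A}  \rrbracket  \cong \mathbf{S}^{\ell}_{\ell''}  \llbracket  \ottnt{A}  \rrbracket  = \top$.

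I do not expect any serious obstacle here — this is essentially a two-line argument once the right prior results are cited. The only thing requiring a small amount of care is making sure the notational collision between the subscript $\ell$ in $\mathbf{S}^{\ell}$ and the bound variable $\ell$ in the statement of Lemma \ref{lemmak} is handled cleanly: Lemma \ref{lemmak} is stated for a generic grade, and we are instantiating that generic grade at $\ell''$, not at the $\ell$ that parametrizes $\mathbf{S}^{\ell}$. I would spell this out explicitly to avoid confusion. A secondary minor point is confirming that ``$\cong \top$'' is meant in $\Ct$ and that $\top$ denotes $\Ct$'s terminal object, which is consistent with the notation $\ast$ for the terminal functor introduced just above the statement. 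If one preferred a fully self-contained argument avoiding Lemma \ref{lemmak}, one could instead do the direct structural induction on $ \ell''  \sqsubseteq  \ottnt{A} $, using in the \rref{Prot-Prod}, \rref{Prot-Fun}, \rref{Prot-Monad}, \rref{Prot-Already} cases that $\top \times \top \cong \top$, $\top^X \cong \top$, and $\mathbf{S}^{\ell}_{\ell'}\top \cong \top$ (the last because $\mathbf{S}^{\ell}_{\ell'}$ is either $\Id$ or $\ast$, and both preserve $\top$), and in the \rref{Prot-Combine} case that $ \ell''  \sqsubseteq   \ell_{{\mathrm{1}}}  \vee  \ell_{{\mathrm{2}}} $ together with $\neg( \ell''  \sqsubseteq  \ell )$ forces $\neg( \ell_{{\mathrm{1}}}  \sqsubseteq  \ell )$ or $\neg( \ell_{{\mathrm{2}}}  \sqsubseteq  \ell )$; but routing through Lemma \ref{lemmak} is shorter and cleaner, so that is the route I would take.
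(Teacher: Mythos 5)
Your argument is correct, but it takes a genuinely different route from the paper. The paper proves Lemma \ref{dcceSlObs} by a direct structural induction on the derivation of $ \ell''  \sqsubseteq  \ottnt{A} $, working through all six protection rules and using exactly the facts you list in your fallback sketch ($\top \times \top \cong \top$, $\top^X \cong \top$, both $\Id$ and $\ast$ preserve $\top$ up to isomorphism, and in the \rref{Prot-Combine} case that $ \ell_{{\mathrm{1}}}  \sqsubseteq  \ell $ and $ \ell_{{\mathrm{2}}}  \sqsubseteq  \ell $ together with $ \ell''  \sqsubseteq   \ell_{{\mathrm{1}}}  \vee  \ell_{{\mathrm{2}}} $ would contradict $\neg( \ell''  \sqsubseteq  \ell )$). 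Your primary route instead factors the induction through Lemma \ref{lemmak}: since $ \ell''  \sqsubseteq  \ottnt{A} $ gives $\mathbf{S}^{\ell}_{\ell''} \llbracket  \ottnt{A}  \rrbracket  \cong  \llbracket  \ottnt{A}  \rrbracket $ for \emph{any} admissible $\mathbf{S}$, and $\neg( \ell''  \sqsubseteq  \ell )$ forces $\mathbf{S}^{\ell}(\ell'') = \ast$ so that $\mathbf{S}^{\ell}_{\ell''} \llbracket  \ottnt{A}  \rrbracket  = \top$ on the nose, the conclusion follows by composing. This is sound: Lemma \ref{lemmak} is proved for an arbitrary strong monoidal functor into $\ES$, Theorem \ref{dcceSlA} certifies that $\mathbf{S}^{\ell}$ is one, and there is no circularity since Lemma \ref{lemmak} does not depend on this lemma. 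What your route buys is brevity and a conceptual point — the inductive content is already packaged in the statement that protected types are fixed points of $\mathbf{S}_{\ell''}$, so the collapse to $\top$ is an immediate specialization. What the paper's route buys is self-containment (it does not lean on the comparatively heavy proof of Lemma \ref{lemmak}) and it makes visible, rule by rule, where the hypothesis $\neg( \ell''  \sqsubseteq  \ell )$ is actually used. Your cautionary remark about the variable clash between the superscript $\ell$ of $\mathbf{S}^{\ell}$ and the bound grade in Lemma \ref{lemmak} is well taken and worth spelling out, but it is the only point of care needed; the proof goes through.
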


\begin{proof}
By induction on $ \ell''  \sqsubseteq  \ottnt{A} $.
\begin{itemize}
\item \Rref{Prot-Prod}. Have: $ \ell''  \sqsubseteq   \ottnt{A}  \times  \ottnt{B}  $, where $ \ell''  \sqsubseteq  \ottnt{A} $ and $ \ell''  \sqsubseteq  \ottnt{B} $. Further, $\neg ( \ell''  \sqsubseteq  \ell )$.\\
Need to show: $ \llbracket    \ottnt{A}  \times  \ottnt{B}    \rrbracket _{(\Ct, \mathbf{S}^{  \ell  } )} \cong \top$.\\
By IH, $ \llbracket  \ottnt{A}  \rrbracket _{(\Ct, \mathbf{S}^{  \ell  } )} \cong \top$ and $ \llbracket  \ottnt{B}  \rrbracket _{(\Ct, \mathbf{S}^{  \ell  } )} \cong \top$.\\
Therefore, $ \llbracket    \ottnt{A}  \times  \ottnt{B}    \rrbracket _{(\Ct, \mathbf{S}^{  \ell  } )} =  \llbracket  \ottnt{A}  \rrbracket _{(\Ct, \mathbf{S}^{  \ell  } )} \times  \llbracket  \ottnt{B}  \rrbracket _{(\Ct, \mathbf{S}^{  \ell  } )} \cong \top \times \top \cong \top$.
\item \Rref{Prot-Fun}. Have: $ \ell''  \sqsubseteq   \ottnt{A}  \to  \ottnt{B}  $, where $ \ell''  \sqsubseteq  \ottnt{B} $. Further, $\neg ( \ell''  \sqsubseteq  \ell )$.\\
Need to show: $ \llbracket   \ottnt{A}  \to  \ottnt{B}   \rrbracket_{(\mathds{C},\mathbf{S}^{  \ell  })}  \cong \top$.\\
By IH, $ \llbracket  \ottnt{B}  \rrbracket_{(\mathds{C},\mathbf{S}^{  \ell  })}  \cong \top$.\\
Therefore, $ \llbracket   \ottnt{A}  \to  \ottnt{B}   \rrbracket_{(\mathds{C},\mathbf{S}^{  \ell  })}  =    (   \llbracket  \ottnt{B}  \rrbracket_{(\mathds{C},\mathbf{S}^{  \ell  })}   )  ^{  (   \llbracket  \ottnt{A}  \rrbracket_{(\mathds{C},\mathbf{S}^{  \ell  })}   )  }  \cong   \top  ^{  (   \llbracket  \ottnt{A}  \rrbracket_{(\mathds{C},\mathbf{S}^{  \ell  })}   )  }  \cong \top$.
\item \Rref{Prot-Monad}. Have: $ \ell''  \sqsubseteq   \mathcal{T}_{ \ell' } \:  \ottnt{A}  $, where $ \ell''  \sqsubseteq  \ell' $. Further, $\neg ( \ell''  \sqsubseteq  \ell )$.\\
Need to show: $ \llbracket   \mathcal{T}_{ \ell' } \:  \ottnt{A}   \rrbracket_{(\mathds{C},\mathbf{S}^{  \ell  })}  \cong \top$.\\
Now, since $ \ell''  \sqsubseteq  \ell' $ and $\neg( \ell''  \sqsubseteq  \ell )$, so $\neg( \ell'  \sqsubseteq  \ell )$.\\
Then, $ \llbracket   \mathcal{T}_{ \ell' } \:  \ottnt{A}   \rrbracket_{(\mathds{C},\mathbf{S}^{  \ell  })}  = ( \mathbf{S}^{  \ell  }({  \ell'  }) ) \llbracket  \ottnt{A}  \rrbracket_{(\mathds{C},\mathbf{S}^{  \ell  })}  = (\ast) \llbracket  \ottnt{A}  \rrbracket_{(\mathds{C},\mathbf{S}^{  \ell  })}  = \top$.
\item \Rref{Prot-Already}. Have: $ \ell''  \sqsubseteq   \mathcal{T}_{ \ell' } \:  \ottnt{A}  $, where $ \ell''  \sqsubseteq  \ottnt{A} $. Further $\neg( \ell''  \sqsubseteq  \ell )$.\\
Need to show: $ \llbracket   \mathcal{T}_{ \ell' } \:  \ottnt{A}   \rrbracket_{(\mathds{C},\mathbf{S}^{  \ell  })}  \cong \top$.\\
By IH, $ \llbracket  \ottnt{A}  \rrbracket_{(\mathds{C},\mathbf{S}^{  \ell  })}  \cong \top$.\\
Therefore, $ \llbracket   \mathcal{T}_{ \ell' } \:  \ottnt{A}   \rrbracket_{(\mathds{C},\mathbf{S}^{  \ell  })}  = ( \mathbf{S}^{  \ell  }({  \ell'  }) ) \llbracket  \ottnt{A}  \rrbracket_{(\mathds{C},\mathbf{S}^{  \ell  })}  \cong ( \mathbf{S}^{  \ell  }({  \ell'  }) )(\top) \cong \top$.
\item \Rref{Prot-Minimum}. Have: $  \bot   \sqsubseteq  \ottnt{A} $.\\
Further, $\neg(  \bot   \sqsubseteq  \ell )$. But this is a contradiction because for any $\ell \in L$, $  \bot   \sqsubseteq  \ell $.
\item \Rref{Prot-Combine}. Have: $ \ell''  \sqsubseteq  \ottnt{A} $, where $ \ell_{{\mathrm{1}}}  \sqsubseteq  \ottnt{A} $ and $ \ell_{{\mathrm{2}}}  \sqsubseteq  \ottnt{A} $ and $ \ell''  \sqsubseteq   \ell_{{\mathrm{1}}}  \vee  \ell_{{\mathrm{2}}}  $. Further, $\neg( \ell''  \sqsubseteq  \ell )$.\\
Now, if both $ \ell_{{\mathrm{1}}}  \sqsubseteq  \ell $ and $ \ell_{{\mathrm{2}}}  \sqsubseteq  \ell $ hold, then $ \ell''  \sqsubseteq   \ell_{{\mathrm{1}}}  \vee  \ell_{{\mathrm{2}}}   \sqsubseteq \ell$.\\
Therefore, either $\neg( \ell_{{\mathrm{1}}}  \sqsubseteq  \ell )$ or $\neg( \ell_{{\mathrm{2}}}  \sqsubseteq  \ell )$.\\
In either case, by IH, $ \llbracket  \ottnt{A}  \rrbracket_{(\mathds{C},\mathbf{S}^{  \ell  })}  \cong \top$.
\end{itemize}
\end{proof}


\begin{theorem}[Theorem \ref{dcceNI}]
Let $ \mathcal{L}  = (L,\vee,\bot)$ be the parametrizing semilattice.
\begin{itemize}
\item Suppose $\ell \in L$ such that $\neg ( \ell  \sqsubseteq   \bot  )$. Let $ \ell  \sqsubseteq  \ottnt{A} $. Let $  \emptyset   \vdash  \ottnt{f}  :   \ottnt{A}  \to   \mathbf{Bool}   $ and $  \emptyset   \vdash  \ottnt{a_{{\mathrm{1}}}}  :  \ottnt{A} $ and $  \emptyset   \vdash  \ottnt{a_{{\mathrm{2}}}}  :  \ottnt{A} $. Then, $ \vdash   \ottnt{f}  \:  \ottnt{a_{{\mathrm{1}}}}   \leadsto^{\ast}  \ottmv{v} $ if and only if $ \vdash   \ottnt{f}  \:  \ottnt{a_{{\mathrm{2}}}}   \leadsto^{\ast}  \ottmv{v} $, where $\ottmv{v}$ is a value of type $ \mathbf{Bool} $.

\vspace*{3pt}

\item Suppose $\ell , \ell' \in L$ such that $\neg ( \ell  \sqsubseteq  \ell' )$. Let $ \ell  \sqsubseteq  \ottnt{A} $. Let $  \emptyset   \vdash  \ottnt{f}  :   \ottnt{A}  \to   \mathcal{T}_{ \ell' } \:   \mathbf{Bool}    $ and $  \emptyset   \vdash  \ottnt{a_{{\mathrm{1}}}}  :  \ottnt{A} $ and $  \emptyset   \vdash  \ottnt{a_{{\mathrm{2}}}}  :  \ottnt{A} $. Then, $ \vdash   \ottnt{f}  \:  \ottnt{a_{{\mathrm{1}}}}   \leadsto^{\ast}  \ottmv{v} $ if and only if $ \vdash   \ottnt{f}  \:  \ottnt{a_{{\mathrm{2}}}}   \leadsto^{\ast}  \ottmv{v} $, where $\ottmv{v}$ is a value of type $ \mathcal{T}_{ \ell' } \:   \mathbf{Bool}  $.
\end{itemize}
\end{theorem}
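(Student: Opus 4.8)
The plan is to prove both clauses of Theorem~\ref{dcceNI} as consequences of the presence-absence test, using the computationally adequate interpretations $\llbracket \_ \rrbracket_{(\mathbf{Set},\mathbf{S}^{\ell})}$ supplied by Theorem~\ref{dcceSlA}. The common strategy for both clauses is: pick the observer functor $\mathbf{S}^{\ell}$ corresponding to the level $\ell$ at which the input $A$ is protected; show that under this interpretation $\llbracket \ottnt{A} \rrbracket$ collapses to the terminal object; conclude that $\llbracket \ottnt{a_1} \rrbracket$ and $\llbracket \ottnt{a_2} \rrbracket$ are equal (they are both the unique map into $\top$); hence $\llbracket \ottnt{f}\:\ottnt{a_1} \rrbracket = \llbracket \ottnt{f}\:\ottnt{a_2} \rrbracket$ by functoriality of the interpretation on application; and finally invoke computational adequacy to transport this semantic equality back to the operational world.

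For the first clause, I would take the observer $\mathbf{S}^{\ell}$. The crucial input is Lemma~\ref{dcceSlObs}: since $\ell \sqsubseteq \ottnt{A}$ and $\neg(\ell \sqsubseteq \bot)$, that lemma (instantiated with $\ell'' := \ell$ and with the ambient observer level also $\bot$... more precisely, I need $\neg(\ell \sqsubseteq \bot)$ which is exactly the hypothesis, so Lemma~\ref{dcceSlObs} applied with observer $\mathbf{S}^{\bot}$ gives $\llbracket \ottnt{A} \rrbracket_{(\Ct,\mathbf{S}^{\bot})} \cong \top$). Working in $(\mathbf{Set},\mathbf{S}^{\bot})$, both $\llbracket \ottnt{a_1} \rrbracket$ and $\llbracket \ottnt{a_2} \rrbracket$ are morphisms $\top \to \llbracket \ottnt{A} \rrbracket \cong \top$, hence equal. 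By Theorem~\ref{dcceWD} the interpretation respects typing, and interpretation of application is composition with $\text{app}$, so $\llbracket \ottnt{f}\:\ottnt{a_1} \rrbracket = \llbracket \ottnt{f}\:\ottnt{a_2} \rrbracket$. Suppose $\vdash \ottnt{f}\:\ottnt{a_1} \leadsto^{\ast} \ottmv{v}$; by Theorem~\ref{dcceOE}, $\llbracket \ottnt{f}\:\ottnt{a_1} \rrbracket = \llbracket \ottmv{v} \rrbracket$, so $\llbracket \ottnt{f}\:\ottnt{a_2} \rrbracket = \llbracket \ottmv{v} \rrbracket$, and by computational adequacy (Theorem~\ref{dcceCA}, first item, which applies since $\mathbf{S}^{\bot}$ is injective on ground types) $\vdash \ottnt{f}\:\ottnt{a_2} \leadsto^{\ast} \ottmv{v}$. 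The converse is symmetric.

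The second clause is handled the same way but with the observer $\mathbf{S}^{\ell'}$ rather than $\mathbf{S}^{\bot}$. Since $\ell \sqsubseteq \ottnt{A}$ and $\neg(\ell \sqsubseteq \ell')$, Lemma~\ref{dcceSlObs} (with $\ell'' := \ell$, observer level $\ell'$) gives $\llbracket \ottnt{A} \rrbracket_{(\Ct,\mathbf{S}^{\ell'})} \cong \top$, so again $\llbracket \ottnt{a_1} \rrbracket = \llbracket \ottnt{a_2} \rrbracket$ and hence $\llbracket \ottnt{f}\:\ottnt{a_1} \rrbracket = \llbracket \ottnt{f}\:\ottnt{a_2} \rrbracket$. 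The one extra subtlety is that the result type is now $\mathcal{T}_{\ell'}\:\mathbf{Bool}$, so I must appeal to the second item of Theorem~\ref{dcceCA}, which additionally requires the morphisms $\overline{\eta}_X = \mathbf{S}^{\ell' \sqsubseteq \ell'}_X \circ \eta_X$ (for the functor $\mathbf{S}^{\ell'}$) to be monic. For $\mathbf{S}^{\ell'}$, as noted in the discussion preceding Theorem~\ref{dcceSlA}, $\eta = \epsilon = \I$, so $\overline{\eta}_X$ is an identity (either on $X$ or on $\top$ depending on whether $\bot \sqsubseteq \ell'$, which always holds, so it is $\mathrm{id}_X$), which is trivially monic. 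Thus the adequacy hypothesis is met and the operational conclusion follows exactly as in the first clause.

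The main obstacle, such as it is, is bookkeeping rather than mathematical depth: one has to be careful to instantiate Lemma~\ref{dcceSlObs} with the \emph{correct} observer level and to check that the specific $\mathbf{S}^{\ell}$ chosen satisfies the injectivity-on-ground-types hypothesis and (for the second clause) the mono hypothesis needed by Theorem~\ref{dcceCA}. Both checks are immediate from the explicit description of $\mathbf{S}^{\ell}$ and from the fact that $\mathbf{Set}$ interprets $\mathbf{Bool}$ as a two-element set with $\mathbf{true} \neq \mathbf{false}$. No new categorical machinery is required; the entire argument is a routine assembly of Theorems~\ref{dcceWD}, \ref{dcceOE}, \ref{dcceCA}, \ref{dcceSlA} and Lemma~\ref{dcceSlObs}. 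The corollaries mentioned after the theorem statement (constancy of functions out of $\mathcal{T}_{\mathbf{Secret}}\:\ottnt{A_1}$, etc.) then follow by taking $\ell := \mathbf{Secret}$, $\ell' := \mathbf{Public}$ and observing that $\neg(\mathbf{Secret} \sqsubseteq \mathbf{Public})$, together with the protection derivations $\mathbf{Secret} \sqsubseteq \mathcal{T}_{\mathbf{Secret}}\:\ottnt{A_1}$ and $\mathbf{Secret} \sqsubseteq \ottnt{A_1} \to \mathcal{T}_{\mathbf{Secret}}\:\ottnt{A_2}$ (the latter via \rref{Prot-Fun,Prot-Monad}).
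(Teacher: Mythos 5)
Your proof is correct and follows essentially the same route as the paper's: instantiate the presence--absence observer ($\mathbf{S}^{\bot}$ for the first clause, $\mathbf{S}^{\ell'}$ for the second), apply Lemma \ref{dcceSlObs} to collapse $\llbracket \ottnt{A} \rrbracket$ to $\top$, deduce $\llbracket \ottnt{f}\:\ottnt{a_{{\mathrm{1}}}} \rrbracket = \llbracket \ottnt{f}\:\ottnt{a_{{\mathrm{2}}}} \rrbracket$, and transfer back via Theorems \ref{dcceOE} and \ref{dcceCA}, checking the mono condition on $\overline{\eta}$ exactly as the paper does. (Minor slip only: $\overline{\eta}$ is $\mathbf{S}^{\bot\sqsubseteq\ell'}\circ\eta$, not $\mathbf{S}^{\ell'\sqsubseteq\ell'}\circ\eta$, but your argument that it is the identity, hence monic, is the intended one.)
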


\begin{proof}
For the first part, note that $(\mathbf{Set}, \mathbf{S}^{   \bot   } )$ gives us a computationally adequate model of \ED{}($ \mathcal{L} $) (by Theorem \ref{dcceSlAprf}).\\
Next, since $ \ell  \sqsubseteq  \ottnt{A} $ and $\neg( \ell  \sqsubseteq   \bot  )$, by Lemma \ref{dcceSlObsprf}, $ \llbracket  \ottnt{A}  \rrbracket_{(\mathds{C},\mathbf{S}^{   \bot   })}  \cong \top$.\\
Therefore, $ \llbracket    \ottnt{f}  \:  \ottnt{a_{{\mathrm{1}}}}    \rrbracket_{(\mathbf{Set},\mathbf{S}^{   \bot   })}  =  \llbracket    \ottnt{f}  \:  \ottnt{a_{{\mathrm{2}}}}    \rrbracket_{(\mathbf{Set},\mathbf{S}^{   \bot   })}  =   \text{app}   \circ   \langle   \llbracket  \ottnt{f}  \rrbracket_{(\mathbf{Set},\mathbf{S}^{   \bot   })}   ,   \langle \rangle   \rangle  $.\\
Then, by Theorem \ref{dcceCAprf}, $ \vdash   \ottnt{f}  \:  \ottnt{a_{{\mathrm{1}}}}   \leadsto^{\ast}  \ottmv{v} $ if and only if $ \vdash   \ottnt{f}  \:  \ottnt{a_{{\mathrm{2}}}}   \leadsto^{\ast}  \ottmv{v} $.\\

For the second part, note that $(\mathbf{Set}, \mathbf{S}^{  \ell'  } )$ gives us a computationally adequate model of \ED{}($ \mathcal{L} $) (by Theorem \ref{dcceSlAprf}). Further, note that $\overline{\eta} \triangleq ( \mathbf{S}^{  \ell'  }  (  \bot   \sqsubseteq  \ell' )) \circ \eta = \I{}_{\Id{}} \circ \I{}_{\Id{}} = \I{}_{\Id{}}$. So, the morphisms $ \overline{\eta}_{ \ottnt{X} } $ are mono for any $X \in \text{Obj}(\Ct)$. \\
Next, since $ \ell  \sqsubseteq  \ottnt{A} $ and $\neg( \ell  \sqsubseteq  \ell' )$, by Lemma \ref{dcceSlObsprf}, $ \llbracket  \ottnt{A}  \rrbracket_{(\mathds{C},\mathbf{S}^{  \ell'  })}  \cong \top$.\\
Therefore, $ \llbracket    \ottnt{f}  \:  \ottnt{a_{{\mathrm{1}}}}    \rrbracket_{(\mathbf{Set},\mathbf{S}^{  \ell'  })}  =  \llbracket    \ottnt{f}  \:  \ottnt{a_{{\mathrm{2}}}}    \rrbracket_{(\mathbf{Set},\mathbf{S}^{  \ell'  })}  =   \text{app}   \circ   \langle   \llbracket  \ottnt{f}  \rrbracket_{(\mathbf{Set},\mathbf{S}^{  \ell'  })}   ,   \langle \rangle   \rangle  $.\\
Then, by Theorem \ref{dcceCAprf}, $ \vdash   \ottnt{f}  \:  \ottnt{a_{{\mathrm{1}}}}   \leadsto^{\ast}  \ottmv{v} $ if and only if $ \vdash   \ottnt{f}  \:  \ottnt{a_{{\mathrm{2}}}}   \leadsto^{\ast}  \ottmv{v} $.
\end{proof}


\section{Proofs of lemmas/theorems stated in Section \ref{lcirc}}

\begin{theorem}[Theorem \ref{lcCat}] \label{lcCatprf}
If $ \Gamma  \vdash  \ottnt{a}  :^{ n }  \ottnt{A} $ in \lc{}, then $ \llbracket  \ottnt{a}  \rrbracket  \in \text{Hom}_{\Ct} ( \llbracket  \Gamma  \rrbracket , \mathbf{S}_{n}  \llbracket  \ottnt{A}  \rrbracket )$. Further, if $ \Gamma  \vdash  \ottnt{a_{{\mathrm{1}}}}  :^{ n }  \ottnt{A} $ and $ \Gamma  \vdash  \ottnt{a_{{\mathrm{2}}}}  :^{ n }  \ottnt{A} $ such that $ \ottnt{a_{{\mathrm{1}}}}  \equiv  \ottnt{a_{{\mathrm{2}}}} $ in \lc{}, then $ \llbracket  \ottnt{a_{{\mathrm{1}}}}  \rrbracket  =  \llbracket  \ottnt{a_{{\mathrm{2}}}}  \rrbracket  \in \text{Hom}_{\Ct} ( \llbracket  \Gamma  \rrbracket , \mathbf{S}_{n}  \llbracket  \ottnt{A}  \rrbracket )$. 
\end{theorem}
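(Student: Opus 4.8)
The plan is to prove the two claims of Theorem~\ref{lcCat} by induction on the typing derivation $ \Gamma  \vdash  \ottnt{a}  :^{ n }  \ottnt{A} $, in exactly the style of the proofs of Theorems~\ref{gmcsound} and~\ref{gccsound}. For the well-definedness part, the invariant I carry through the induction is precisely the statement $ \llbracket  \ottnt{a}  \rrbracket  \in \text{Hom}_{\Ct} ( \llbracket  \Gamma  \rrbracket , \mathbf{S}_{n}  \llbracket  \ottnt{A}  \rrbracket )$; the point is that every interpretation clause in Section~\ref{seclcCat} is manifestly a composite of morphisms in $\Ct$ of the right source and target, \emph{provided} we know the components exist. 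The components we need are: the structure maps $p_{X,Y}, p_\top, q_{X,Y}$ of the cartesian closed endofunctors $\mathbf{S}_n \in \EC$ (and their inverses), the graded multiplication and comultiplication $\mu^{m_1,m_2}$, $\delta^{m_1,m_2}$ coming from $\mathbf{S}$ being a strong monoidal functor into $\EC$, and the associators/unitors and $\text{app}$ maps of the bicartesian closed base. So the first block of work is simply to walk through \rref{LC-Var,LC-Lam,LC-App,LC-InjOne,LC-Case,LC-Next,LC-Prev} one by one, citing the displayed interpretation and checking the types compose; the two genuinely modal cases \rref{LC-Next} and \rref{LC-Prev} are immediate since $\delta^{n,0'}$ and $\mu^{n,0'}$ have exactly the advertised (co)domains $\mathbf{S}_{n+0'} \to \mathbf{S}_n\mathbf{S}_{0'}$ and $\mathbf{S}_n\mathbf{S}_{0'} \to \mathbf{S}_{n+0'}$.

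For the soundness part — that $ \ottnt{a_{{\mathrm{1}}}}  \equiv  \ottnt{a_{{\mathrm{2}}}} $ implies $ \llbracket  \ottnt{a_{{\mathrm{1}}}}  \rrbracket  =  \llbracket  \ottnt{a_{{\mathrm{2}}}}  \rrbracket $ — I would proceed by inversion on the generating rule of $\equiv$ from Figure~\ref{lcequality}, using that $\equiv$ is a congruence so it suffices to verify each axiom (the congruence closure is handled by functoriality of the interpretation on contexts, as usual). The $\beta$ and $\eta$ laws for $\lambda$-abstraction/application are the standard bicartesian-closed-category computations, but twisted through the isomorphisms $q, q^{-1}$: here I would first establish a small lemma that, under the interpretation of Section~\ref{seclcCat}, the interpretation of $ \ottnt{b}  \:  \ottnt{a} $ agrees with $\mathbf{S}_n$ applied to a genuine application in $\Ct$, after transporting along $q$; then the $\beta\eta$ laws reduce to the base-category $\beta\eta$ laws plus $q \circ q^{-1} = \text{id}$ and $q^{-1}\circ q = \text{id}$. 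The $\beta\eta$ laws for $\mathbf{prev}/\mathbf{next}$ follow directly from $\mu^{n,0'}$ and $\delta^{n,0'}$ being mutually inverse, which is part of $\mathbf{S}$ being a \emph{strong} monoidal functor (as recorded in Section~\ref{GMCCModel}: $\delta \circ \mu = \text{id}$ and $\mu \circ \delta = \text{id}$). The $\beta\eta$ laws for $\mathbf{case}/\mathbf{inj}_i$ are the usual coproduct computations, again pushed through $\mathbf{S}_n$ and the structure isomorphisms $p^{-1}$ and the distributivity iso $\mathbf{S}_n((B^{A_1}\times B^{A_2})\times(A_1+A_2)) \cong \mathbf{S}_n(B^{A_1+A_2}\times(A_1+A_2))$ that appears in the $\mathbf{case}$ clause; these hold because $\mathbf{S}_n$ preserves finite products and exponentials and because $\Ct$ is bicartesian closed.

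The main obstacle I anticipate is bookkeeping rather than conceptual: the $\mathbf{case}$ clause is a long composite involving $p^{-1}$ at three different types, a distributivity isomorphism, and $\mathbf{S}_n(\text{app})$, so verifying the $\beta$-rule $\mathbf{case}\:(\mathbf{inj}_i\:a_i)\:\mathbf{of}\:b_1;b_2 \equiv b_i\:a_i$ will require carefully chasing a commuting diagram that says ``$\mathbf{S}_n$ of coproduct injection followed by the packaged eliminator equals the eliminator's $i$-th component''. The cleanest way to manage this is to prove, as an auxiliary lemma, that $q_{X,Y}$, $p_{X,Y}$ and the canonical map $\langle F\pi_1,F\pi_2\rangle$ are all natural in $X,Y$ and mutually coherent (a ``finite-product-and-exponential-preserving functors act like the identity on first-order type structure'' statement), after which every equation collapses to a base-category identity. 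I do not expect the case analysis on $\mathbf{next}/\mathbf{prev}$ or on $\lambda$-calculus to cause trouble, since those mirror the GMC soundness proof (Theorem~\ref{GMCPrf}) almost verbatim with $\mathbf{S}$ in place of $\mathbf{T}$. Hence the bulk of the effort, and the only real risk of error, lies in the coproduct-and-exponential coherence for $\mathbf{case}$.
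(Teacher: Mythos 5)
Your proposal matches the paper's proof: both parts proceed by induction on the typing derivation and inversion on the generating equality axioms, with the $\beta\eta$ laws for abstraction reduced to $q\circ q^{-1}=\text{id}$ and $q^{-1}\circ q=\text{id}$, the $\mathbf{prev}/\mathbf{next}$ laws to $\mu^{n,0'}$ and $\delta^{n,0'}$ being mutually inverse, and the $\mathbf{case}$ laws to a diagram chase through the canonical isomorphism $B^{A_1}\times B^{A_2}\cong B^{A_1+A_2}$ and the $p^{-1}$ structure maps. Your suggested auxiliary coherence lemma is a mild repackaging of the explicit computation the paper carries out for the $\mathbf{case}$ $\beta$ and $\eta$ rules, which is indeed where the paper concentrates its effort.
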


\begin{proof}
The first part follows by induction on the typing derivation. Most of the cases are straightforward, given the detailed interpretation in Section \ref{seclcCat}. However, we elaborate on the case of $\mathbf{case}$-expressions.
\begin{itemize}
\item \Rref{LC-Case}. Have: $ \Gamma  \vdash   \mathbf{case} \:  \ottnt{a}  \: \mathbf{of} \:  \ottnt{b_{{\mathrm{1}}}}  \: ; \:  \ottnt{b_{{\mathrm{2}}}}   :^{ n }  \ottnt{B} $ where $ \Gamma  \vdash  \ottnt{a}  :^{ n }   \ottnt{A_{{\mathrm{1}}}}  +  \ottnt{A_{{\mathrm{2}}}}  $ and $ \Gamma  \vdash  \ottnt{b_{{\mathrm{1}}}}  :^{ n }   \ottnt{A_{{\mathrm{1}}}}  \to  \ottnt{B}  $ and $ \Gamma  \vdash  \ottnt{b_{{\mathrm{2}}}}  :^{ n }   \ottnt{A_{{\mathrm{2}}}}  \to  \ottnt{B}  $.\\
Need to show: $ \llbracket   \mathbf{case} \:  \ottnt{a}  \: \mathbf{of} \:  \ottnt{b_{{\mathrm{1}}}}  \: ; \:  \ottnt{b_{{\mathrm{2}}}}   \rrbracket  \in \text{Hom}_{\Ct}( \llbracket  \Gamma  \rrbracket ,  \mathbf{S}_{  n  }   \llbracket  \ottnt{B}  \rrbracket  )$.\\
By IH, $ \llbracket  \ottnt{a}  \rrbracket  \in \text{Hom}_{\Ct}( \llbracket  \Gamma  \rrbracket ,  \mathbf{S}_{  n  }   (    \llbracket  \ottnt{A_{{\mathrm{1}}}}  \rrbracket   +   \llbracket  \ottnt{A_{{\mathrm{2}}}}  \rrbracket    )  )$ and $ \llbracket  \ottnt{b_{{\mathrm{1}}}}  \rrbracket  \in \text{Hom}_{\Ct}( \llbracket  \Gamma  \rrbracket , \mathbf{S}_{  n  }   (    \llbracket  \ottnt{B}  \rrbracket  ^{  \llbracket  \ottnt{A_{{\mathrm{1}}}}  \rrbracket  }   )  )$ and $ \llbracket  \ottnt{b_{{\mathrm{2}}}}  \rrbracket  \in \text{Hom}_{\Ct}( \llbracket  \Gamma  \rrbracket , \mathbf{S}_{  n  }   (    \llbracket  \ottnt{B}  \rrbracket  ^{  \llbracket  \ottnt{A_{{\mathrm{2}}}}  \rrbracket  }   )  )$.\\
Now,
\begin{align*}
 \llbracket    \mathbf{case} \:  \ottnt{a}  \: \mathbf{of} \:  \ottnt{b_{{\mathrm{1}}}}  \: ; \:  \ottnt{b_{{\mathrm{2}}}}    \rrbracket  =  \llbracket  \Gamma  \rrbracket  & \xrightarrow{ \langle   \langle   \llbracket  \ottnt{b_{{\mathrm{1}}}}  \rrbracket   ,   \llbracket  \ottnt{b_{{\mathrm{2}}}}  \rrbracket   \rangle   ,   \llbracket  \ottnt{a}  \rrbracket   \rangle }   (    \mathbf{S}_{  n  }   (    \llbracket  \ottnt{B}  \rrbracket  ^{  \llbracket  \ottnt{A_{{\mathrm{1}}}}  \rrbracket  }   )    \times   \mathbf{S}_{  n  }   (    \llbracket  \ottnt{B}  \rrbracket  ^{  \llbracket  \ottnt{A_{{\mathrm{2}}}}  \rrbracket  }   )     )   \times   \mathbf{S}_{  n  }   (    \llbracket  \ottnt{A_{{\mathrm{1}}}}  \rrbracket   +   \llbracket  \ottnt{A_{{\mathrm{2}}}}  \rrbracket    )    \hspace{3pt} [\text{By IH}] \\ & \xrightarrow{   p^{-1}_{    \llbracket  \ottnt{B}  \rrbracket  ^{  \llbracket  \ottnt{A_{{\mathrm{1}}}}  \rrbracket  }   ,     \llbracket  \ottnt{B}  \rrbracket  ^{  \llbracket  \ottnt{A_{{\mathrm{2}}}}  \rrbracket  }   }    \times   \text{id}  }   \mathbf{S}_{  n  }   (      \llbracket  \ottnt{B}  \rrbracket  ^{  \llbracket  \ottnt{A_{{\mathrm{1}}}}  \rrbracket  }    \times     \llbracket  \ottnt{B}  \rrbracket  ^{  \llbracket  \ottnt{A_{{\mathrm{2}}}}  \rrbracket  }     )    \times   \mathbf{S}_{  n  }   (    \llbracket  \ottnt{A_{{\mathrm{1}}}}  \rrbracket   +   \llbracket  \ottnt{A_{{\mathrm{2}}}}  \rrbracket    )    \\ & \xrightarrow{ p^{-1}_{     \llbracket  \ottnt{B}  \rrbracket  ^{  \llbracket  \ottnt{A_{{\mathrm{1}}}}  \rrbracket  }    \times     \llbracket  \ottnt{B}  \rrbracket  ^{  \llbracket  \ottnt{A_{{\mathrm{2}}}}  \rrbracket  }    ,     \llbracket  \ottnt{A_{{\mathrm{1}}}}  \rrbracket   +   \llbracket  \ottnt{A_{{\mathrm{2}}}}  \rrbracket    } }  \mathbf{S}_{  n  }   (    (      \llbracket  \ottnt{B}  \rrbracket  ^{  \llbracket  \ottnt{A_{{\mathrm{1}}}}  \rrbracket  }    \times     \llbracket  \ottnt{B}  \rrbracket  ^{  \llbracket  \ottnt{A_{{\mathrm{2}}}}  \rrbracket  }     )   \times   (    \llbracket  \ottnt{A_{{\mathrm{1}}}}  \rrbracket   +   \llbracket  \ottnt{A_{{\mathrm{2}}}}  \rrbracket    )    )   \\ & \xrightarrow{ \mathbf{S}_{  n  }   (   \ottnt{h}  \times   \text{id}    )  }  \mathbf{S}_{  n  }   (      \llbracket  \ottnt{B}  \rrbracket  ^{    \llbracket  \ottnt{A_{{\mathrm{1}}}}  \rrbracket   +   \llbracket  \ottnt{A_{{\mathrm{2}}}}  \rrbracket    }    \times   (    \llbracket  \ottnt{A_{{\mathrm{1}}}}  \rrbracket   +   \llbracket  \ottnt{A_{{\mathrm{2}}}}  \rrbracket    )    )   \xrightarrow{ \mathbf{S}_{  n  }   \text{app}  }  \mathbf{S}_{  n  }   \llbracket  \ottnt{B}  \rrbracket  
\end{align*}
where $\ottnt{h} =  \Lambda \Big(      [    \Lambda^{-1}   \pi_1    \circ   \langle   \pi_2   ,   \pi_1   \rangle    ,    \Lambda^{-1}   \pi_2    \circ   \langle   \pi_2   ,   \pi_1   \rangle    ]   \circ   \Lambda^{-1}   [   \Lambda   i_1    ,   \Lambda   i_2    ]     \circ   \langle   \pi_2   ,   \pi_1   \rangle     \Big) $.\\

Note that $\ottnt{h} :     \llbracket  \ottnt{B}  \rrbracket  ^{  \llbracket  \ottnt{A_{{\mathrm{1}}}}  \rrbracket  }    \times     \llbracket  \ottnt{B}  \rrbracket  ^{  \llbracket  \ottnt{A_{{\mathrm{2}}}}  \rrbracket  }    \to   \llbracket  \ottnt{B}  \rrbracket  ^{    \llbracket  \ottnt{A_{{\mathrm{1}}}}  \rrbracket   +   \llbracket  \ottnt{A_{{\mathrm{2}}}}  \rrbracket    } $ is an isomorphism, where \[ h^{-1} =  \langle    \llbracket  \ottnt{B}  \rrbracket  ^{  i_1  }   ,    \llbracket  \ottnt{B}  \rrbracket  ^{  i_2  }   \rangle  :   \llbracket  \ottnt{B}  \rrbracket  ^{    \llbracket  \ottnt{A_{{\mathrm{1}}}}  \rrbracket   +   \llbracket  \ottnt{A_{{\mathrm{2}}}}  \rrbracket    }  \to     \llbracket  \ottnt{B}  \rrbracket  ^{  \llbracket  \ottnt{A_{{\mathrm{1}}}}  \rrbracket  }    \times     \llbracket  \ottnt{B}  \rrbracket  ^{  \llbracket  \ottnt{A_{{\mathrm{2}}}}  \rrbracket  }    \] Recall that for $X , Y, Z \in \text{Obj}(\Ct)$ and $f \in \text{Hom}_{\Ct} (Y, Z)$, \[ X^f \triangleq \Lambda \Big( X^Z \times Y \xrightarrow{  \text{id}   \times  \ottnt{f} } X^Z \times Z \xrightarrow{ \text{app} } X \Big) \in \text{Hom}_{\Ct} (X^Z,X^Y) \]
\end{itemize}

For the second part, we invert the equality judgement.
\begin{itemize}
\item $   (   \lambda  \ottmv{x}  :  \ottnt{A}  .  \ottnt{b}   )   \:  \ottnt{a}   \equiv   \ottnt{b}  \{  \ottnt{a}  /  \ottmv{x}  \}  $. \\
Now, \begin{align*}
&  \llbracket     (   \lambda  \ottmv{x}  :  \ottnt{A}  .  \ottnt{b}   )   \:  \ottnt{a}    \rrbracket  \\
= &   \text{app}   \circ   \langle     q_{  \llbracket  \ottnt{A}  \rrbracket  ,   \llbracket  \ottnt{B}  \rrbracket  }    \circ   \llbracket    \lambda  \ottmv{x}  :  \ottnt{A}  .  \ottnt{b}    \rrbracket    ,   \llbracket  \ottnt{a}  \rrbracket   \rangle   \\
= &   \text{app}   \circ   \langle      q_{  \llbracket  \ottnt{A}  \rrbracket  ,   \llbracket  \ottnt{B}  \rrbracket  }    \circ    q^{-1}_{  \llbracket  \ottnt{A}  \rrbracket  ,   \llbracket  \ottnt{B}  \rrbracket  }     \circ   \Lambda   \llbracket  \ottnt{b}  \rrbracket     ,   \llbracket  \ottnt{a}  \rrbracket   \rangle   \\
= &   \text{app}   \circ   \langle   \Lambda   \llbracket  \ottnt{b}  \rrbracket    ,   \llbracket  \ottnt{a}  \rrbracket   \rangle   =   \llbracket  \ottnt{b}  \rrbracket   \circ   \langle   \text{id}   ,   \llbracket  \ottnt{a}  \rrbracket   \rangle   =  \llbracket    \ottnt{b}  \{  \ottnt{a}  /  \ottmv{x}  \}    \rrbracket .
\end{align*}
\item $ \ottnt{b}  \equiv    \lambda  \ottmv{x}  :  \ottnt{A}  .  \ottnt{b}   \:  \ottmv{x}  $.\\
Now, \begin{align*}
&  \llbracket     \lambda  \ottmv{x}  :  \ottnt{A}  .  \ottnt{b}   \:  \ottmv{x}    \rrbracket  \\
= &    q^{-1}_{  \llbracket  \ottnt{A}  \rrbracket  ,   \llbracket  \ottnt{B}  \rrbracket  }    \circ   \Lambda   \llbracket    \ottnt{b}  \:  \ottmv{x}    \rrbracket    \\
= &    q^{-1}_{  \llbracket  \ottnt{A}  \rrbracket  ,   \llbracket  \ottnt{B}  \rrbracket  }    \circ   \Lambda \Big(     \text{app}   \circ   \langle      q_{  \llbracket  \ottnt{A}  \rrbracket  ,   \llbracket  \ottnt{B}  \rrbracket  }    \circ   \llbracket  \ottnt{b}  \rrbracket    \circ   \pi_1    ,   \pi_2   \rangle     \Big)   \\
= &     q^{-1}_{  \llbracket  \ottnt{A}  \rrbracket  ,   \llbracket  \ottnt{B}  \rrbracket  }    \circ    q_{  \llbracket  \ottnt{A}  \rrbracket  ,   \llbracket  \ottnt{B}  \rrbracket  }     \circ   \llbracket  \ottnt{b}  \rrbracket   =  \llbracket  \ottnt{b}  \rrbracket .
\end{align*}
\item $  \mathbf{prev} \:   (   \mathbf{next} \:  \ottnt{a}   )    \equiv  \ottnt{a} $.\\
Now, $ \llbracket    \mathbf{prev} \:   (   \mathbf{next} \:  \ottnt{a}   )     \rrbracket  =    \mu^{ n ,  0'  }_{  \llbracket  \ottnt{A}  \rrbracket  }   \circ   \delta^{ n ,  0'  }_{  \llbracket  \ottnt{A}  \rrbracket  }    \circ   \llbracket  \ottnt{a}  \rrbracket   =  \llbracket  \ottnt{a}  \rrbracket $.
\item $  \mathbf{next} \:   (   \mathbf{prev} \:  \ottnt{a}   )    \equiv  \ottnt{a} $.\\
$ \llbracket    \mathbf{next} \:   (   \mathbf{prev} \:  \ottnt{a}   )     \rrbracket  =    \delta^{ n ,  0'  }_{  \llbracket  \ottnt{A}  \rrbracket  }   \circ   \mu^{ n ,  0'  }_{  \llbracket  \ottnt{A}  \rrbracket  }    \circ   \llbracket  \ottnt{a}  \rrbracket   =  \llbracket  \ottnt{a}  \rrbracket $.
\item $  \mathbf{case} \:   \mathbf{inj}_1 \:  \ottnt{a_{{\mathrm{1}}}}   \: \mathbf{of} \:  \ottnt{b_{{\mathrm{1}}}}  \: ; \:  \ottnt{b_{{\mathrm{2}}}}   \equiv   \ottnt{b_{{\mathrm{1}}}}  \:  \ottnt{a_{{\mathrm{1}}}}  $.\\
Now, \begin{align*}
&  \llbracket    \mathbf{case} \:   \mathbf{inj}_1 \:  \ottnt{a_{{\mathrm{1}}}}   \: \mathbf{of} \:  \ottnt{b_{{\mathrm{1}}}}  \: ; \:  \ottnt{b_{{\mathrm{2}}}}    \rrbracket  \\
= &      \mathbf{S}_{  n  }   \text{app}    \circ   \mathbf{S}_{  n  }   (   \ottnt{h}  \times   \text{id}    )     \circ   p^{-1}_{     \llbracket  \ottnt{B}  \rrbracket  ^{  \llbracket  \ottnt{A_{{\mathrm{1}}}}  \rrbracket  }    \times     \llbracket  \ottnt{B}  \rrbracket  ^{  \llbracket  \ottnt{A_{{\mathrm{2}}}}  \rrbracket  }    ,     \llbracket  \ottnt{A_{{\mathrm{1}}}}  \rrbracket   +   \llbracket  \ottnt{A_{{\mathrm{2}}}}  \rrbracket    }    \circ     p^{-1}_{    \llbracket  \ottnt{B}  \rrbracket  ^{  \llbracket  \ottnt{A_{{\mathrm{1}}}}  \rrbracket  }   ,     \llbracket  \ottnt{B}  \rrbracket  ^{  \llbracket  \ottnt{A_{{\mathrm{2}}}}  \rrbracket  }   }   \times   \text{id}      \circ   \langle   \langle   \llbracket  \ottnt{b_{{\mathrm{1}}}}  \rrbracket   ,   \llbracket  \ottnt{b_{{\mathrm{2}}}}  \rrbracket   \rangle   ,    \mathbf{S}_{  n  }   i_1    \circ   \llbracket  \ottnt{a_{{\mathrm{1}}}}  \rrbracket    \rangle   \\
= &     \mathbf{S}_{  n  }   (     [    \Lambda^{-1}   \pi_1    \circ   \langle   \pi_2   ,   \pi_1   \rangle    ,    \Lambda^{-1}   \pi_2    \circ   \langle   \pi_2   ,   \pi_1   \rangle    ]   \circ   \Lambda^{-1}   [   \Lambda   i_1    ,   \Lambda   i_2    ]     \circ   \langle   \pi_2   ,   \pi_1   \rangle    )    \circ   p^{-1}    \circ   p^{-1}    \times   \text{id}   \\ & \hspace*{30pt} \circ  \langle   \langle   \llbracket  \ottnt{b_{{\mathrm{1}}}}  \rrbracket   ,   \llbracket  \ottnt{b_{{\mathrm{2}}}}  \rrbracket   \rangle   ,    \mathbf{S}_{  n  }   i_1    \circ   \llbracket  \ottnt{a_{{\mathrm{1}}}}  \rrbracket    \rangle  \\
= &      \mathbf{S}_{  n  }   (    [    \Lambda^{-1}   \pi_1    \circ   \langle   \pi_2   ,   \pi_1   \rangle    ,    \Lambda^{-1}   \pi_2    \circ   \langle   \pi_2   ,   \pi_1   \rangle    ]   \circ   \Lambda^{-1}   [   \Lambda   i_1    ,   \Lambda   i_2    ]     )    \circ   p^{-1}    \circ   \langle   \pi_2   ,   \pi_1   \rangle    \circ   p^{-1}    \times   \text{id}   \\ & \hspace*{30pt} \circ  \langle   \langle   \llbracket  \ottnt{b_{{\mathrm{1}}}}  \rrbracket   ,   \llbracket  \ottnt{b_{{\mathrm{2}}}}  \rrbracket   \rangle   ,    \mathbf{S}_{  n  }   i_1    \circ   \llbracket  \ottnt{a_{{\mathrm{1}}}}  \rrbracket    \rangle  \\
= &    \mathbf{S}_{  n  }   (    [    \Lambda^{-1}   \pi_1    \circ   \langle   \pi_2   ,   \pi_1   \rangle    ,    \Lambda^{-1}   \pi_2    \circ   \langle   \pi_2   ,   \pi_1   \rangle    ]   \circ   \Lambda^{-1}   [   \Lambda   i_1    ,   \Lambda   i_2    ]     )    \circ   p^{-1}    \circ   \langle    \mathbf{S}_{  n  }   i_1    \circ   \llbracket  \ottnt{a_{{\mathrm{1}}}}  \rrbracket    ,    p^{-1}   \circ   \langle   \llbracket  \ottnt{b_{{\mathrm{1}}}}  \rrbracket   ,   \llbracket  \ottnt{b_{{\mathrm{2}}}}  \rrbracket   \rangle    \rangle   \\
= &     \mathbf{S}_{  n  }   (    [    \Lambda^{-1}   \pi_1    \circ   \langle   \pi_2   ,   \pi_1   \rangle    ,    \Lambda^{-1}   \pi_2    \circ   \langle   \pi_2   ,   \pi_1   \rangle    ]   \circ   \Lambda^{-1}   [   \Lambda   i_1    ,   \Lambda   i_2    ]     )    \circ   p^{-1}    \circ     \mathbf{S}_{  n  }   i_1    \times   \text{id}      \circ   \langle   \llbracket  \ottnt{a_{{\mathrm{1}}}}  \rrbracket   ,    p^{-1}   \circ   \langle   \llbracket  \ottnt{b_{{\mathrm{1}}}}  \rrbracket   ,   \llbracket  \ottnt{b_{{\mathrm{2}}}}  \rrbracket   \rangle    \rangle   \\
= &     \mathbf{S}_{  n  }   (    [    \Lambda^{-1}   \pi_1    \circ   \langle   \pi_2   ,   \pi_1   \rangle    ,    \Lambda^{-1}   \pi_2    \circ   \langle   \pi_2   ,   \pi_1   \rangle    ]   \circ   \Lambda^{-1}   [   \Lambda   i_1    ,   \Lambda   i_2    ]     )    \circ   \mathbf{S}_{  n  }   (    i_1   \times   \text{id}    )     \circ   p^{-1}    \circ   \langle   \llbracket  \ottnt{a_{{\mathrm{1}}}}  \rrbracket   ,    p^{-1}   \circ   \langle   \llbracket  \ottnt{b_{{\mathrm{1}}}}  \rrbracket   ,   \llbracket  \ottnt{b_{{\mathrm{2}}}}  \rrbracket   \rangle    \rangle   \\
= &      \mathbf{S}_{  n  }   [    \Lambda^{-1}   \pi_1    \circ   \langle   \pi_2   ,   \pi_1   \rangle    ,    \Lambda^{-1}   \pi_2    \circ   \langle   \pi_2   ,   \pi_1   \rangle    ]    \circ   \mathbf{S}_{  n  }   \Lambda^{-1}   [   \Lambda   i_1    ,   \Lambda   i_2    ]      \circ   \mathbf{S}_{  n  }   (    i_1   \times   \text{id}    )     \circ   p^{-1}    \circ   \langle   \llbracket  \ottnt{a_{{\mathrm{1}}}}  \rrbracket   ,    p^{-1}   \circ   \langle   \llbracket  \ottnt{b_{{\mathrm{1}}}}  \rrbracket   ,   \llbracket  \ottnt{b_{{\mathrm{2}}}}  \rrbracket   \rangle    \rangle   \\
= &     \mathbf{S}_{  n  }   [    \Lambda^{-1}   \pi_1    \circ   \langle   \pi_2   ,   \pi_1   \rangle    ,    \Lambda^{-1}   \pi_2    \circ   \langle   \pi_2   ,   \pi_1   \rangle    ]    \circ   \mathbf{S}_{  n  }   i_1     \circ   p^{-1}    \circ   \langle   \llbracket  \ottnt{a_{{\mathrm{1}}}}  \rrbracket   ,    p^{-1}   \circ   \langle   \llbracket  \ottnt{b_{{\mathrm{1}}}}  \rrbracket   ,   \llbracket  \ottnt{b_{{\mathrm{2}}}}  \rrbracket   \rangle    \rangle   \\
= &     \mathbf{S}_{  n  }   \Lambda^{-1}   \pi_1     \circ   \mathbf{S}_{  n  }   \langle   \pi_2   ,   \pi_1   \rangle     \circ   p^{-1}    \circ   \langle   \llbracket  \ottnt{a_{{\mathrm{1}}}}  \rrbracket   ,    p^{-1}   \circ   \langle   \llbracket  \ottnt{b_{{\mathrm{1}}}}  \rrbracket   ,   \llbracket  \ottnt{b_{{\mathrm{2}}}}  \rrbracket   \rangle    \rangle   \\
= &     \mathbf{S}_{  n  }   \Lambda^{-1}   \pi_1     \circ   p^{-1}    \circ   \langle   \pi_2   ,   \pi_1   \rangle    \circ   \langle   \llbracket  \ottnt{a_{{\mathrm{1}}}}  \rrbracket   ,    p^{-1}   \circ   \langle   \llbracket  \ottnt{b_{{\mathrm{1}}}}  \rrbracket   ,   \llbracket  \ottnt{b_{{\mathrm{2}}}}  \rrbracket   \rangle    \rangle   \\
= &     \mathbf{S}_{  n  }   \text{app}    \circ   \mathbf{S}_{  n  }   (    \pi_1   \times   \text{id}    )     \circ   p^{-1}    \circ   \langle    p^{-1}   \circ   \langle   \llbracket  \ottnt{b_{{\mathrm{1}}}}  \rrbracket   ,   \llbracket  \ottnt{b_{{\mathrm{2}}}}  \rrbracket   \rangle    ,   \llbracket  \ottnt{a_{{\mathrm{1}}}}  \rrbracket   \rangle   \\
= &      \mathbf{S}_{  n  }   \text{app}    \circ   p^{-1}    \circ    \mathbf{S}_{  n  }   \pi_1      \times   \text{id}    \circ   \langle    p^{-1}   \circ   \langle   \llbracket  \ottnt{b_{{\mathrm{1}}}}  \rrbracket   ,   \llbracket  \ottnt{b_{{\mathrm{2}}}}  \rrbracket   \rangle    ,   \llbracket  \ottnt{a_{{\mathrm{1}}}}  \rrbracket   \rangle   \\
= &        \mathbf{S}_{  n  }   \text{app}    \circ   p^{-1}    \circ    \mathbf{S}_{  n  }   \pi_1      \times   \text{id}    \circ   p^{-1}    \times   \text{id}    \circ   \langle   \langle   \llbracket  \ottnt{b_{{\mathrm{1}}}}  \rrbracket   ,   \llbracket  \ottnt{b_{{\mathrm{2}}}}  \rrbracket   \rangle   ,   \llbracket  \ottnt{a_{{\mathrm{1}}}}  \rrbracket   \rangle   \\
= &          \mathbf{S}_{  n  }   \text{app}    \circ   p^{-1}    \circ   \pi_1    \times   \text{id}    \circ   p    \times   \text{id}    \circ   p^{-1}    \times   \text{id}    \circ   \langle   \langle   \llbracket  \ottnt{b_{{\mathrm{1}}}}  \rrbracket   ,   \llbracket  \ottnt{b_{{\mathrm{2}}}}  \rrbracket   \rangle   ,   \llbracket  \ottnt{a_{{\mathrm{1}}}}  \rrbracket   \rangle   \\
= &    \mathbf{S}_{  n  }   \text{app}    \circ   p^{-1}    \circ   \langle   \llbracket  \ottnt{b_{{\mathrm{1}}}}  \rrbracket   ,   \llbracket  \ottnt{a_{{\mathrm{1}}}}  \rrbracket   \rangle   \\
= &     \text{app}   \circ   \Lambda   (    \mathbf{S}_{  n  }   \text{app}    \circ   p^{-1}    )     \times   \text{id}    \circ   \langle   \llbracket  \ottnt{b_{{\mathrm{1}}}}  \rrbracket   ,   \llbracket  \ottnt{a_{{\mathrm{1}}}}  \rrbracket   \rangle   =   \text{app}   \circ   \langle    q   \circ   \llbracket  \ottnt{b_{{\mathrm{1}}}}  \rrbracket    ,   \llbracket  \ottnt{a_{{\mathrm{1}}}}  \rrbracket   \rangle   =  \llbracket    \ottnt{b_{{\mathrm{1}}}}  \:  \ottnt{a_{{\mathrm{1}}}}    \rrbracket 
\end{align*}
\item $  \ottnt{b}  \:  \ottnt{a}   \equiv    \mathbf{case} \:  \ottnt{a}  \: \mathbf{of} \:    \lambda  \ottmv{x_{{\mathrm{1}}}}  .  \ottnt{b}   \:   (   \mathbf{inj}_1 \:  \ottmv{x_{{\mathrm{1}}}}   )    \: ; \:   \lambda  \ottmv{x_{{\mathrm{2}}}}  .  \ottnt{b}    \:   (   \mathbf{inj}_2 \:  \ottmv{x_{{\mathrm{2}}}}   )    $.\\
Note that \begin{align*}
&  \llbracket     \lambda  \ottmv{x_{{\mathrm{1}}}}  .  \ottnt{b}   \:   (   \mathbf{inj}_1 \:  \ottmv{x_{{\mathrm{1}}}}   )     \rrbracket  \\
= &   q^{-1}_{  \llbracket  \ottnt{A_{{\mathrm{1}}}}  \rrbracket  ,   \llbracket  \ottnt{B}  \rrbracket  }   \circ   \Lambda \Big(     \text{app}   \circ   \langle     q_{    \llbracket  \ottnt{A_{{\mathrm{1}}}}  \rrbracket   +   \llbracket  \ottnt{A_{{\mathrm{2}}}}  \rrbracket    ,   \llbracket  \ottnt{B}  \rrbracket  }   \circ   \llbracket  \ottnt{b}  \rrbracket    \circ   \pi_1    ,    \mathbf{S}_{  n  }   i_1    \circ   \pi_2    \rangle     \Big)  \\
= &   q^{-1}_{  \llbracket  \ottnt{A_{{\mathrm{1}}}}  \rrbracket  ,   \llbracket  \ottnt{B}  \rrbracket  }   \circ   \Lambda \Big(       \text{app}   \circ   (    \text{id}   \times   \mathbf{S}_{  n  }   i_1     )    \circ   (    q_{    \llbracket  \ottnt{A_{{\mathrm{1}}}}  \rrbracket   +   \llbracket  \ottnt{A_{{\mathrm{2}}}}  \rrbracket    ,   \llbracket  \ottnt{B}  \rrbracket  }   \circ   \llbracket  \ottnt{b}  \rrbracket    )    \times   \text{id}     \Big)   \\
= &     q^{-1}_{  \llbracket  \ottnt{A_{{\mathrm{1}}}}  \rrbracket  ,   \llbracket  \ottnt{B}  \rrbracket  }   \circ   \Lambda   (    \text{app}   \circ   (    \text{id}   \times   \mathbf{S}_{  n  }   i_1     )    )     \circ   q_{    \llbracket  \ottnt{A_{{\mathrm{1}}}}  \rrbracket   +   \llbracket  \ottnt{A_{{\mathrm{2}}}}  \rrbracket    ,   \llbracket  \ottnt{B}  \rrbracket  }    \circ   \llbracket  \ottnt{b}  \rrbracket   \\
= &      q^{-1}_{  \llbracket  \ottnt{A_{{\mathrm{1}}}}  \rrbracket  ,   \llbracket  \ottnt{B}  \rrbracket  }    \circ     (   \mathbf{S}_{  n  }   \llbracket  \ottnt{B}  \rrbracket    )  ^{   \mathbf{S}_{  n  }   i_1    }     \circ   q_{    \llbracket  \ottnt{A_{{\mathrm{1}}}}  \rrbracket   +   \llbracket  \ottnt{A_{{\mathrm{2}}}}  \rrbracket    ,   \llbracket  \ottnt{B}  \rrbracket  }    \circ   \llbracket  \ottnt{b}  \rrbracket   \\
= &     \mathbf{S}_{  n  }     \llbracket  \ottnt{B}  \rrbracket  ^{  i_1  }     \circ   q^{-1}_{    \llbracket  \ottnt{A_{{\mathrm{1}}}}  \rrbracket   +   \llbracket  \ottnt{A_{{\mathrm{2}}}}  \rrbracket    ,   \llbracket  \ottnt{B}  \rrbracket  }    \circ   q_{    \llbracket  \ottnt{A_{{\mathrm{1}}}}  \rrbracket   +   \llbracket  \ottnt{A_{{\mathrm{2}}}}  \rrbracket    ,   \llbracket  \ottnt{B}  \rrbracket  }    \circ   \llbracket  \ottnt{b}  \rrbracket   =   \mathbf{S}_{  n  }     \llbracket  \ottnt{B}  \rrbracket  ^{  i_1  }     \circ   \llbracket  \ottnt{b}  \rrbracket  
\end{align*}
Similarly, $ \llbracket     \lambda  \ottmv{x_{{\mathrm{2}}}}  .  \ottnt{b}   \:   (   \mathbf{inj}_2 \:  \ottmv{x_{{\mathrm{2}}}}   )     \rrbracket  =   \mathbf{S}_{  n  }     \llbracket  \ottnt{B}  \rrbracket  ^{  i_2  }     \circ   \llbracket  \ottnt{b}  \rrbracket  $.\\
Now,\begin{align*}
&  \llbracket     \mathbf{case} \:  \ottnt{a}  \: \mathbf{of} \:    \lambda  \ottmv{x_{{\mathrm{1}}}}  .  \ottnt{b}   \:   (   \mathbf{inj}_1 \:  \ottmv{x_{{\mathrm{1}}}}   )    \: ; \:   \lambda  \ottmv{x_{{\mathrm{2}}}}  .  \ottnt{b}    \:   (   \mathbf{inj}_2 \:  \ottmv{x_{{\mathrm{2}}}}   )     \rrbracket  \\
= &      \mathbf{S}_{  n  }   \text{app}    \circ   \mathbf{S}_{  n  }   (   \ottnt{h}  \times   \text{id}    )     \circ   p^{-1}_{     \llbracket  \ottnt{B}  \rrbracket  ^{  \llbracket  \ottnt{A_{{\mathrm{1}}}}  \rrbracket  }    \times     \llbracket  \ottnt{B}  \rrbracket  ^{  \llbracket  \ottnt{A_{{\mathrm{2}}}}  \rrbracket  }    ,     \llbracket  \ottnt{A_{{\mathrm{1}}}}  \rrbracket   +   \llbracket  \ottnt{A_{{\mathrm{2}}}}  \rrbracket    }    \circ     p^{-1}_{    \llbracket  \ottnt{B}  \rrbracket  ^{  \llbracket  \ottnt{A_{{\mathrm{1}}}}  \rrbracket  }   ,     \llbracket  \ottnt{B}  \rrbracket  ^{  \llbracket  \ottnt{A_{{\mathrm{2}}}}  \rrbracket  }   }   \times   \text{id}      \circ   \langle    \langle   \mathbf{S}_{  n  }     \llbracket  \ottnt{B}  \rrbracket  ^{  i_1  }     ,   \mathbf{S}_{  n  }     \llbracket  \ottnt{B}  \rrbracket  ^{  i_1  }     \rangle   \circ   \llbracket  \ottnt{b}  \rrbracket    ,   \llbracket  \ottnt{a}  \rrbracket   \rangle   \\
= &        \mathbf{S}_{  n  }   \text{app}    \circ   \mathbf{S}_{  n  }   (   \ottnt{h}  \times   \text{id}    )     \circ   p^{-1}    \circ     p^{-1}   \times   \text{id}      \circ   \langle   \mathbf{S}_{  n  }     \llbracket  \ottnt{B}  \rrbracket  ^{  i_1  }     ,   \mathbf{S}_{  n  }     \llbracket  \ottnt{B}  \rrbracket  ^{  i_1  }     \rangle    \times   \text{id}    \circ   \langle   \llbracket  \ottnt{b}  \rrbracket   ,   \llbracket  \ottnt{a}  \rrbracket   \rangle   \\
= &       \mathbf{S}_{  n  }   \text{app}    \circ   \mathbf{S}_{  n  }   (   \ottnt{h}  \times   \text{id}    )     \circ   p^{-1}    \circ   \mathbf{S}_{  n  }   \langle    \llbracket  \ottnt{B}  \rrbracket  ^{  i_1  }   ,    \llbracket  \ottnt{B}  \rrbracket  ^{  i_1  }   \rangle     \times   \text{id}    \circ   \langle   \llbracket  \ottnt{b}  \rrbracket   ,   \llbracket  \ottnt{a}  \rrbracket   \rangle   \\
= &      \mathbf{S}_{  n  }   \text{app}    \circ   \mathbf{S}_{  n  }   (   \ottnt{h}  \times   \text{id}    )     \circ   \mathbf{S}_{  n  }   (    \langle    \llbracket  \ottnt{B}  \rrbracket  ^{  i_1  }   ,    \llbracket  \ottnt{B}  \rrbracket  ^{  i_1  }   \rangle   \times   \text{id}    )     \circ   p^{-1}    \circ   \langle   \llbracket  \ottnt{b}  \rrbracket   ,   \llbracket  \ottnt{a}  \rrbracket   \rangle   \\
= &    \mathbf{S}_{  n  }   (     \text{app}   \circ   (   \ottnt{h}  \times   \text{id}    )    \circ   (    h^{-1}   \times   \text{id}    )    )    \circ   p^{-1}    \circ   \langle   \llbracket  \ottnt{b}  \rrbracket   ,   \llbracket  \ottnt{a}  \rrbracket   \rangle   \\
= &    \mathbf{S}_{  n  }   \text{app}    \circ   p^{-1}    \circ   \langle   \llbracket  \ottnt{b}  \rrbracket   ,   \llbracket  \ottnt{a}  \rrbracket   \rangle   =     \text{app}   \circ   \Lambda   (    \mathbf{S}_{  n  }   \text{app}    \circ   p^{-1}    )     \times   \text{id}    \circ   \langle   \llbracket  \ottnt{b}  \rrbracket   ,   \llbracket  \ottnt{a}  \rrbracket   \rangle   =  \llbracket    \ottnt{b}  \:  \ottnt{a}    \rrbracket . 
\end{align*}
\end{itemize}
\end{proof}


\begin{theorem}[Theorem \ref{lcCA}] \label{lcCAprf}
Let $ \Gamma  \vdash  \ottnt{b}  :^{ \ottsym{0} }   \mathbf{Bool}  $ and $\ottmv{v}$ be a value of type $ \mathbf{Bool} $. If $ \llbracket  \ottnt{b}  \rrbracket  =  \llbracket  \ottmv{v}  \rrbracket $ then $ \vdash  \ottnt{b}  \longmapsto^{\ast}  \ottmv{v} $. 
\end{theorem}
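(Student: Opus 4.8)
The plan is to prove computational adequacy for $\lambda^{\circ}$ exactly the way Theorem \ref{dcceCAprf} was proved for \ED{}, i.e. by combining normalization and type soundness on the syntactic side with soundness of the interpretation on the semantic side, and then closing the loop with injectivity of the interpretation at ground types. First I would recall that $\lambda^{\circ}$, restricted as in this paper (no recursion, sum types added only for ground types), is strongly normalizing and type sound with respect to the call-by-value reduction relation $\longmapsto^{\ast}$ induced by the $\beta$-rules of Figure \ref{lcequality}. Hence, given $ \Gamma  \vdash  \ottnt{b}  :^{ \ottsym{0} }   \mathbf{Bool}  $, there is a value $\ottmv{v_{{\mathrm{0}}}}$ with $ \Gamma  \vdash  \ottmv{v_{{\mathrm{0}}}}  :^{ \ottsym{0} }   \mathbf{Bool}  $ and $ \vdash  \ottnt{b}  \longmapsto^{\ast}  \ottmv{v_{{\mathrm{0}}}} $. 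Since every $\beta$-step is also an equality in the equational theory of Figure \ref{lcequality}, Theorem \ref{lcCatprf} (soundness of the $\EC$-model) gives $ \llbracket  \ottnt{b}  \rrbracket  =  \llbracket  \ottmv{v_{{\mathrm{0}}}}  \rrbracket $ in $\Ct$.

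Then I would chain the equalities: from the hypothesis $ \llbracket  \ottnt{b}  \rrbracket  =  \llbracket  \ottmv{v}  \rrbracket $ and the just-derived $ \llbracket  \ottnt{b}  \rrbracket  =  \llbracket  \ottmv{v_{{\mathrm{0}}}}  \rrbracket $ we get $ \llbracket  \ottmv{v}  \rrbracket  =  \llbracket  \ottmv{v_{{\mathrm{0}}}}  \rrbracket $. Because the interpretation at time $0$ of $ \mathbf{Bool} $ is just $ \llbracket   \ottkw{Unit}   \rrbracket  +  \llbracket   \ottkw{Unit}   \rrbracket $ (using $\mathbf{S}_0 = \Id$, which holds for any strong monoidal functor since $0$ is the monoidal unit of $\mathcal{N}$), and the interpretation is assumed injective for ground types with $ \llbracket   \mathbf{true}   \rrbracket  \neq  \llbracket   \mathbf{false}   \rrbracket $, the two closed boolean values $\ottmv{v}$ and $\ottmv{v_{{\mathrm{0}}}}$ must be syntactically equal. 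Therefore $ \vdash  \ottnt{b}  \longmapsto^{\ast}  \ottmv{v} $, as required.

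The one genuinely new ingredient, compared to the \ED{} case, is that I have to make sure the semantic equalities are available specifically at time instant $0$: the values at issue live in $\mathbf{S}_0 \llbracket  \mathbf{Bool}  \rrbracket$, so I must invoke $\mathbf{S}_0 \cong \Id$ (in fact $= \Id$ for a strict functor, or canonically isomorphic otherwise) to identify this with the plain interpretation of $ \mathbf{Bool} $ in the ground category and thereby apply the injectivity hypothesis. There is no need here for the ``mono $\overline{\eta}$'' side condition that appeared in the second bullet of Theorem \ref{dcceCAprf}, because the statement only concerns closed terms of the \emph{unguarded} type $ \mathbf{Bool} $ at time $0$, not of a modal type.

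The main obstacle, such as it is, is not conceptual but a matter of establishing the supporting metatheory for $\lambda^{\circ}$: I would need strong normalization and subject reduction for $\lambda^{\circ}$ under $\longmapsto^{\ast}$, including the temporal typing rules \rref{LC-Next,LC-Prev}. Strong normalization for $\lambda^{\circ}$ is standard (it is a conservative extension of simply-typed $\lambda$-calculus with an additional pair of mutually inverse $\beta\eta$ operators $\mathbf{next}/\mathbf{prev}$, which do not introduce divergence), and subject reduction follows by the usual inversion-plus-substitution argument, carefully tracking the time-index shift in \rref{LC-Prev}; I would state these as lemmas and give the routine induction. Once those are in hand, the adequacy argument above is a direct transcription of the proof of Theorem \ref{dcceCAprf}.
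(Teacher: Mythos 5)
Your proposal is correct and follows essentially the same route as the paper's own proof: strong normalization plus type soundness yield a value $\ottmv{v_{{\mathrm{0}}}}$ with $ \vdash  \ottnt{b}  \longmapsto^{\ast}  \ottmv{v_{{\mathrm{0}}}} $, soundness of the interpretation (Theorem \ref{lcCat}) gives $ \llbracket  \ottnt{b}  \rrbracket  =  \llbracket  \ottmv{v_{{\mathrm{0}}}}  \rrbracket $, and injectivity at ground types forces $\ottmv{v} = \ottmv{v_{{\mathrm{0}}}}$. The extra remarks about $\mathbf{S}_0 = \Id$ and the absence of the mono-$\overline{\eta}$ side condition are sensible but the paper's proof does not need to spell them out.
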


\begin{proof}
Let $ \Gamma  \vdash  \ottnt{b}  :^{ \ottsym{0} }   \mathbf{Bool}  $ and $\ottmv{v}$ be a boolean value such that $ \llbracket  \ottnt{b}  \rrbracket_{(\mathds{C},\mathbf{S})}  =  \llbracket  \ottmv{v}  \rrbracket_{(\mathds{C},\mathbf{S})} $.\\
We show that $ \vdash  \ottnt{b}  \longmapsto^{\ast}  \ottmv{v} $.

First note that \lc{} is strongly normalizing with respect to the reduction relation, $\longmapsto^{\ast}$. Further, \lc{} is also type sound with respect to this reduction relation.

Therefore, given $ \Gamma  \vdash  \ottnt{b}  :^{ \ottsym{0} }   \mathbf{Bool}  $, we know that there exists a value $\ottmv{v_{{\mathrm{0}}}}$ such that $ \Gamma  \vdash  \ottmv{v_{{\mathrm{0}}}}  :^{ \ottsym{0} }   \mathbf{Bool}  $ and $ \vdash  \ottnt{b}  \longmapsto^{\ast}  \ottmv{v_{{\mathrm{0}}}} $.

Next, by Theorem \ref{lcCatprf}, $ \llbracket  \ottnt{b}  \rrbracket_{(\mathds{C},\mathbf{S})}  =  \llbracket  \ottmv{v_{{\mathrm{0}}}}  \rrbracket_{(\mathds{C},\mathbf{S})} $.

Since $ \llbracket  \ottnt{b}  \rrbracket_{(\mathds{C},\mathbf{S})}  =  \llbracket  \ottmv{v}  \rrbracket_{(\mathds{C},\mathbf{S})} $ (given) and $ \llbracket  \ottnt{b}  \rrbracket_{(\mathds{C},\mathbf{S})}  =  \llbracket  \ottmv{v_{{\mathrm{0}}}}  \rrbracket_{(\mathds{C},\mathbf{S})} $ (above), therefore, $ \llbracket  \ottmv{v}  \rrbracket_{(\mathds{C},\mathbf{S})}  =  \llbracket  \ottmv{v_{{\mathrm{0}}}}  \rrbracket_{(\mathds{C},\mathbf{S})} $.

By injectivity, $\ottmv{v} = \ottmv{v_{{\mathrm{0}}}}$.

Thus, $ \vdash  \ottnt{b}  \longmapsto^{\ast}  \ottmv{v} $.
\end{proof}


\begin{theorem}[Theorem \ref{lcS0}] \label{lcS0prf}
$\llbracket \_ \rrbracket_{(\mathbf{Set},\mathbf{S}^{\ottsym{0}})}$ is a computationally adequate interpretation of \lc{}.
\end{theorem}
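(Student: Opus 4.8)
The plan is to mirror the proof of Theorem~\ref{dcceSlAprf}, replacing $\ES$ by $\EC$ and $\Ca( \mathcal{L} )$ by $\Ca(\mathcal{N})$. Recall from the discussion preceding the statement that for any bicartesian closed category $\Ct$, any strong monoidal functor from $\Ca(\mathcal{N})$ to $\EC$ yields a computationally adequate interpretation of \lc{}, provided the interpretation of ground types is injective. Since $\mathbf{Set}$ is bicartesian closed and $ \llbracket   \mathbf{true}   \rrbracket  \neq  \llbracket   \mathbf{false}   \rrbracket $ there (the two coproduct injections into a two-element set are distinct), it suffices to check that $\mathbf{S}^{\ottsym{0}}$ is a strong monoidal functor from $\Ca(\mathcal{N})$ to $\EC$.

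First I would verify that the two endofunctors in the image of $\mathbf{S}^{\ottsym{0}}$, namely $\Id$ and the terminal functor $\ast$, are both cartesian closed endofunctors of $\mathbf{Set}$. For $\Id$ this is immediate, as all of $p_{\top}$, $p_{X,Y}$, $q_{X,Y}$ are identities. For $\ast$, which sends every object to the terminal object $\top$ (a singleton in $\mathbf{Set}$) and every morphism to $\langle \rangle$: the morphisms $p_{\top} : \ast(\top) \to \top$ and $p_{X,Y} : \ast(X \times Y) \to \ast X \times \ast Y$ are maps $\top \to \top$ and $\top \to \top \times \top$, invertible because $\top \times \top \cong \top$; likewise $q_{X,Y} : \ast(Y^X) \to (\ast Y)^{(\ast X)}$ is a map $\top \to \top^{\top}$, invertible because $\top^{\top} \cong \top$. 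Hence $\ast \in \text{Obj}(\EC)$.

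Next I would check functoriality and (strict) monoidality of $\mathbf{S}^{\ottsym{0}}$ exactly as for $\mathbf{S}^{\ell}$ in Theorem~\ref{dcceSlAprf}: since $\mathbf{S}^{\ottsym{0}}(0) = \Id$ we get $\eta = \epsilon = \I$; and for $m_1, m_2 \in \mathbb{N}$, either $m_1 = m_2 = 0$, in which case $\mathbf{S}^{\ottsym{0}}(m_1) \circ \mathbf{S}^{\ottsym{0}}(m_2) = \Id = \mathbf{S}^{\ottsym{0}}(m_1 + m_2)$, or $m_1 + m_2 > 0$, in which case at least one factor is $\ast$ and the composite is again $\ast$ (using $\ast \circ F = \ast$ and $F \circ \ast = \ast$ for $F \in \{ \Id, \ast \}$, the latter because $\Id(\top) = \ast(\top) = \top$), so $\mathbf{S}^{\ottsym{0}}(m_1) \circ \mathbf{S}^{\ottsym{0}}(m_2) = \ast = \mathbf{S}^{\ottsym{0}}(m_1 + m_2)$. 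Therefore $\mu = \delta = \I$ and $\mathbf{S}^{\ottsym{0}}$ is a strict, hence strong, monoidal functor from $\Ca(\mathcal{N})$ to $\EC$. Invoking the adequacy result then concludes the proof.

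I expect the only non-routine step to be the verification that the terminal functor $\ast$ is cartesian closed: this is precisely where $\EC$ is needed in place of $\ES$, reflecting the isomorphism between $ \bigcirc^{ n }   (   \ottnt{A}  \to  \ottnt{B}   )  $ and $  \bigcirc^{ n }  \ottnt{A}   \to   \bigcirc^{ n }  \ottnt{B}  $ that \lc{} demands. Everything else is a direct transcription of the argument for \ED{}.
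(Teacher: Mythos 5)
Your proposal is correct and follows essentially the same route as the paper: reduce to checking that $\mathbf{S}^{\ottsym{0}}$ is a strong (indeed strict) monoidal functor from $\Ca(\mathcal{N})$ to $\EC$, via the same two-case analysis on $n_{{\mathrm{1}}}, n_{{\mathrm{2}}}$ showing $\eta = \epsilon = \mu = \delta = \I$. Your explicit verification that $\Id$ and the terminal functor $\ast$ are cartesian closed endofunctors of $\mathbf{Set}$ is a step the paper leaves implicit, and it is correctly carried out.
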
 

\begin{proof}
For a bicartesian category $\Ct$, any strong monoidal functor from $\Ca(\mathcal{N})$ to $\EC{}$ provides a computationally adequate interpretation of \lc{}, given the interpretation for ground types is injective. Now, with respect to $\llbracket \_ \rrbracket_{(\mathbf{Set},\mathbf{S}^{\ottsym{0}})}$, the interpretation for ground types is injective. As such, to prove that $(\mathbf{Set}, \mathbf{S}^{  \ottsym{0}  } )$ is a computationally adequate interpretation of \lc{}, we just need to show that $ \mathbf{S}^{  \ottsym{0}  } $ is a strong monoidal functor from $\Ca(\mathcal{N})$ to $\EC{}$.

Recall the definition of $ \mathbf{S}^{  \ottsym{0}  } $:
\begin{align*}
\mathbf{S}^0(n) = \begin{cases}
                  \Id & \text{ if } n = 0 \\
                  \ast & \text{ otherwise}
                  \end{cases}
\end{align*}

By this definition, $ \mathbf{S}^{  \ottsym{0}  }({  \ottsym{0}  })  = \Id{}$. Further, $\eta = \epsilon = \I{}_{\Id{}}$. Now for any $n_{{\mathrm{1}}} , n_{{\mathrm{2}}} \in \mathbb{N}$, there are two cases to consider:
\begin{itemize}
\item $n_{{\mathrm{1}}} = n_{{\mathrm{2}}} = 0$. Then $ n_{{\mathrm{1}}}  +  n_{{\mathrm{2}}}  = 0$.\\
So, $ \mathbf{S}^{  \ottsym{0}  }({  n_{{\mathrm{1}}}  })  =  \mathbf{S}^{  \ottsym{0}  }({  n_{{\mathrm{2}}}  })  =  \mathbf{S}^{  \ottsym{0}  }({   n_{{\mathrm{1}}}  +  n_{{\mathrm{2}}}   })  = \Id{}$.\\
In this case, $\mu^{n_{{\mathrm{1}}},n_{{\mathrm{2}}}} = \delta^{n_{{\mathrm{1}}},n_{{\mathrm{2}}}} = \I{}_{\Id{}}$.
\item $n_{{\mathrm{1}}} \neq 0$ or $n_{{\mathrm{2}}} \neq 0$. Then, $ n_{{\mathrm{1}}}  +  n_{{\mathrm{2}}}  \neq 0$.\\
So, $ \mathbf{S}^{  \ottsym{0}  }({  n_{{\mathrm{1}}}  })  = \ast$ or $ \mathbf{S}^{  \ottsym{0}  }({  n_{{\mathrm{2}}}  })  = \ast$. Hence, $  \mathbf{S}^{  \ottsym{0}  }({  n_{{\mathrm{1}}}  })   \circ   \mathbf{S}^{  \ottsym{0}  }({  n_{{\mathrm{2}}}  })   = \ast$.\\
Further, $ \mathbf{S}^{  \ottsym{0}  }({   n_{{\mathrm{1}}}  +  n_{{\mathrm{2}}}   })  = \ast$.\\
In this case, $\mu^{n_{{\mathrm{1}}},n_{{\mathrm{2}}}} = \delta^{n_{{\mathrm{1}}},n_{{\mathrm{2}}}} = \I{}_{\ast}$.
\end{itemize}
Hence, $ \mathbf{S}^{  \ottsym{0}  } $ is a strong (in fact a strict) monoidal functor from $\Ca(\mathcal{N})$ to $\EC{}$.
\end{proof}


\begin{theorem}[Theorem \ref{lcNI}]
Let $  \emptyset   \vdash  \ottnt{f}  :^{ \ottsym{0} }    \bigcirc  \ottnt{A}   \to   \mathbf{Bool}   $ and $  \emptyset   \vdash  \ottnt{b_{{\mathrm{1}}}}  :^{ \ottsym{0} }   \bigcirc  \ottnt{A}  $ and $  \emptyset   \vdash  \ottnt{b_{{\mathrm{2}}}}  :^{ \ottsym{0} }   \bigcirc  \ottnt{A}  $. Then, $ \vdash   \ottnt{f}  \:  \ottnt{b_{{\mathrm{1}}}}   \longmapsto^{\ast}  \ottmv{v} $ if and only if $ \vdash   \ottnt{f}  \:  \ottnt{b_{{\mathrm{2}}}}   \longmapsto^{\ast}  \ottmv{v} $, where $\ottmv{v}$ is a value of type $ \mathbf{Bool} $.
\end{theorem}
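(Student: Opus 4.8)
The plan is to apply the presence-absence test via the strict monoidal functor $\mathbf{S}^0$ defined in this section, exactly as was done for \ED{} in Theorem \ref{dcceNI}. The key observation is that $\bigcirc \ottnt{A}$ is modelled by $\mathbf{S}^0_{0'} \llbracket \ottnt{A} \rrbracket$, and since $0' \neq 0$, by the definition of $\mathbf{S}^0$ we have $\mathbf{S}^0(0') = \ast$, the terminal functor. Hence $\llbracket \bigcirc \ottnt{A} \rrbracket_{(\mathbf{Set},\mathbf{S}^0)} = \ast(\llbracket \ottnt{A} \rrbracket) = \top$, the terminal object of $\mathbf{Set}$.

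First I would invoke Theorem \ref{lcS0} to know that $\llbracket \_ \rrbracket_{(\mathbf{Set},\mathbf{S}^0)}$ is a computationally adequate interpretation of \lc{}. Next, since $\llbracket \bigcirc \ottnt{A} \rrbracket_{(\mathbf{Set},\mathbf{S}^0)} = \top$, the two closed terms $\ottnt{b_{{\mathrm{1}}}}$ and $\ottnt{b_{{\mathrm{2}}}}$ of type $\bigcirc \ottnt{A}$ must both be interpreted as the unique morphism $\langle\rangle : \top \to \top$ (more precisely, $\llbracket \ottnt{b_1} \rrbracket$ and $\llbracket \ottnt{b_2} \rrbracket$ are both the unique element of $\mathrm{Hom}_{\mathbf{Set}}(\top, \top)$), so $\llbracket \ottnt{b_1} \rrbracket_{(\mathbf{Set},\mathbf{S}^0)} = \llbracket \ottnt{b_2} \rrbracket_{(\mathbf{Set},\mathbf{S}^0)}$. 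Then by the soundness part of Theorem \ref{lcCat} (functoriality of the interpretation, or more directly by the definition of $\llbracket \ottnt{f} \: \ottnt{b_i} \rrbracket$ as $\mathrm{app} \circ \langle q \circ \llbracket \ottnt{f} \rrbracket, \llbracket \ottnt{b_i} \rrbracket \rangle$), we get $\llbracket \ottnt{f} \: \ottnt{b_1} \rrbracket_{(\mathbf{Set},\mathbf{S}^0)} = \llbracket \ottnt{f} \: \ottnt{b_2} \rrbracket_{(\mathbf{Set},\mathbf{S}^0)}$.

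Now suppose $\vdash \ottnt{f} \: \ottnt{b_1} \longmapsto^{\ast} \ottmv{v}$ for a value $\ottmv{v}$ of type $\mathbf{Bool}$. By Theorem \ref{lcCat} (preservation of the interpretation under $\equiv$, hence under $\longmapsto^{\ast}$ since reduction is contained in the equational theory), $\llbracket \ottnt{f} \: \ottnt{b_1} \rrbracket = \llbracket \ottmv{v} \rrbracket$. Combining, $\llbracket \ottnt{f} \: \ottnt{b_2} \rrbracket_{(\mathbf{Set},\mathbf{S}^0)} = \llbracket \ottmv{v} \rrbracket_{(\mathbf{Set},\mathbf{S}^0)}$, and then computational adequacy (Theorem \ref{lcCA}) gives $\vdash \ottnt{f} \: \ottnt{b_2} \longmapsto^{\ast} \ottmv{v}$. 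The converse direction is symmetric. I would present the argument as a short chain of equalities between interpretations in $(\mathbf{Set},\mathbf{S}^0)$, bracketed by soundness on one side and adequacy on the other.

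I do not expect a real obstacle here: the heavy lifting is already done in Theorems \ref{lcCat}, \ref{lcCA}, and \ref{lcS0}. The only point requiring a little care is checking that reduction $\longmapsto^{\ast}$ is sound for the interpretation --- i.e. that $\vdash \ottnt{a} \longmapsto^{\ast} \ottnt{a'}$ implies $\llbracket \ottnt{a} \rrbracket = \llbracket \ottnt{a'} \rrbracket$ --- but this follows immediately because the call-by-value reduction relation is generated by the $\beta$-rules of Figure \ref{lcequality}, which are part of the equational theory $\equiv$, and Theorem \ref{lcCat} already establishes soundness for $\equiv$. The mild subtlety worth a sentence is that $\longmapsto^{\ast}$ permits reduction only in evaluation contexts whereas $\equiv$ is a full congruence, so $\longmapsto^{\ast}$ is genuinely contained in $\equiv$; nothing deeper is needed. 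So the proof is essentially a three-line application of the presence-absence test, mirroring the proof of Theorem \ref{dcceNI}.
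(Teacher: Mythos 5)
Your proposal is correct and follows essentially the same route as the paper: instantiate the model at $(\mathbf{Set},\mathbf{S}^0)$, observe $\llbracket \bigcirc A \rrbracket = \top$ so that $\llbracket f\: b_1\rrbracket = \llbracket f\: b_2\rrbracket$, and conclude by computational adequacy (Theorem \ref{lcCA}). Your extra remark that $\longmapsto^{\ast}$ is contained in $\equiv$ (so reduction preserves the interpretation) is a point the paper leaves implicit inside the proof of adequacy, but it is the right observation.
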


\begin{proof}
First note that $(\mathbf{Set}, \mathbf{S}^{  \ottsym{0}  } )$ gives us a computationally adequate model of \lc{} (by Theorem \ref{lcS0prf}).\\
Next, $ \llbracket   \bigcirc  \ottnt{A}   \rrbracket_{(\mathds{C},\mathbf{S}^{  \ottsym{0}  })}  = ( \mathbf{S}^{  \ottsym{0}  }({   0'   }) ) \llbracket  \ottnt{A}  \rrbracket_{(\mathds{C},\mathbf{S}^{  \ottsym{0}  })}  = (\ast) \llbracket  \ottnt{A}  \rrbracket_{(\mathds{C},\mathbf{S}^{  \ottsym{0}  })}  = \top$.\\
Therefore, $ \llbracket    \ottnt{f}  \:  \ottnt{a_{{\mathrm{1}}}}    \rrbracket_{(\mathbf{Set},\mathbf{S}^{  \ottsym{0}  })}  =  \llbracket    \ottnt{f}  \:  \ottnt{a_{{\mathrm{2}}}}    \rrbracket_{(\mathbf{Set},\mathbf{S}^{  \ottsym{0}  })}  =   \text{app}   \circ   \langle    q   \circ   \llbracket  \ottnt{f}  \rrbracket_{(\mathbf{Set},\mathbf{S}^{  \ottsym{0}  })}    ,   \langle \rangle   \rangle  $.\\
Then, by Theorem \ref{lcCAprf}, $ \vdash   \ottnt{f}  \:  \ottnt{a_{{\mathrm{1}}}}   \longmapsto^{\ast}  \ottmv{v} $ if and only if $ \vdash   \ottnt{f}  \:  \ottnt{a_{{\mathrm{2}}}}   \longmapsto^{\ast}  \ottmv{v} $.
\end{proof}


\section{Proofs of propositions/theorems stated in Section \ref{secgmcce}}

\begin{prop}[Proposition \ref{propiso}]
Let $ \mathcal{M} $ be any preordered monoid. Then, in \Ge{}($ \mathcal{M} $),
\begin{itemize}
\item The types $ S_{ m } \:   \ottkw{Unit}  $ and $ \ottkw{Unit} $ are isomorphic.
\item The types $ S_{ m } \:   (   \ottnt{A_{{\mathrm{1}}}}  \times  \ottnt{A_{{\mathrm{2}}}}   )  $ and $  S_{ m } \:  \ottnt{A_{{\mathrm{1}}}}   \times   S_{ m } \:  \ottnt{A_{{\mathrm{2}}}}  $, for all types $\ottnt{A_{{\mathrm{1}}}}$ and $\ottnt{A_{{\mathrm{2}}}}$, are isomorphic.
\item The types $ S_{ m } \:   (   \ottnt{A}  \to  \ottnt{B}   )  $ and $  S_{ m } \:  \ottnt{A}   \to   S_{ m } \:  \ottnt{B}  $, for all types $\ottnt{A}$ and $\ottnt{B}$, are isomorphic. 
\end{itemize}
\end{prop}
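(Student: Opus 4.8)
The plan is to exhibit explicit pairs of terms in \Ge{}($ \mathcal{M} $) witnessing each isomorphism, and then verify both round-trip compositions reduce to the identity using the $\beta\eta$-rules of the $\lambda$-calculus together with the single modal rule $ \mathbf{merge}^{ m }   (   \mathbf{split}^{ m }  \ottnt{a}   )   \equiv \ottnt{a}$ and $\ottnt{a}  \equiv  \mathbf{split}^{ m }   (   \mathbf{merge}^{ m }  \ottnt{a}   )  $. The key observation driving all three cases is the same: because contexts and judgements in \Ge{} are graded, we can move freely between "a term of type $\ottnt{A}$ observed at grade $m$" and "a term of type $ S_{ m } \:  \ottnt{A} $ observed at grade $\ottsym{1}$" via \rref{E-Split,E-Merge}, and this lets the structural rules (\rref{E-Pair,E-Proj,E-Lam,E-App,E-Unit}) do their usual work "underneath" the modality. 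I expect the $ \ottkw{Unit} $ case to be essentially immediate, the product case to be a short routine check, and the function case to be the one requiring the most care — so I would present the function-type isomorphism in the most detail.

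For $ S_{ m } \:   \ottkw{Unit}  \cong  \ottkw{Unit} $: in one direction send $\ottmv{x} :  S_{ m } \:   \ottkw{Unit}  $ to $ \ottkw{unit} $ (derivable at any grade by \rref{E-Unit}); in the other send $\ottmv{y} :  \ottkw{Unit} $ to $ \mathbf{split}^{ m }   \ottkw{unit}  $, where the inner $ \ottkw{unit} $ is formed at grade $m \cdot \ottsym{1}$ so that \rref{E-Split} yields $ S_{ m } \:   \ottkw{Unit}  $. The composite $ \ottkw{Unit}  \to  S_{ m } \:   \ottkw{Unit}  \to  \ottkw{Unit} $ is $\ottmv{y} \mapsto  \ottkw{unit} $, equal to $\ottmv{y}$ by the $\eta$-rule for $ \ottkw{Unit} $; the composite $ S_{ m } \:   \ottkw{Unit}  \to  \ottkw{Unit}  \to  S_{ m } \:   \ottkw{Unit} $ is $\ottmv{x} \mapsto  \mathbf{split}^{ m }   \ottkw{unit}  $, and by the $ \ottkw{Unit} $-$\eta$-rule $ \mathbf{merge}^{ m }  \ottmv{x}  \equiv  \ottkw{unit} $, hence $ \mathbf{split}^{ m }   \ottkw{unit}  \equiv  \mathbf{split}^{ m }   (   \mathbf{merge}^{ m }  \ottmv{x}   )   \equiv \ottmv{x}$ using the modal $\eta$-rule. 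For the product case, the forward map $ S_{ m } \:   (   \ottnt{A_{{\mathrm{1}}}}  \times  \ottnt{A_{{\mathrm{2}}}}   )  \to   S_{ m } \:  \ottnt{A_{{\mathrm{1}}}}   \times   S_{ m } \:  \ottnt{A_{{\mathrm{2}}}} $ is $\ottmv{x} \mapsto  (  \mathbf{split}^{ m }   (   \mathbf{proj}_1 \:   (   \mathbf{merge}^{ m }  \ottmv{x}   )    )  ,  \mathbf{split}^{ m }   (   \mathbf{proj}_2 \:   (   \mathbf{merge}^{ m }  \ottmv{x}   )    )  )$, using that after $ \mathbf{merge}^{ m } $ the variable sits at grade $m$ so $ \mathbf{proj}_i \:  (-)$ also sits at grade $m$; the backward map $  S_{ m } \:  \ottnt{A_{{\mathrm{1}}}}   \times   S_{ m } \:  \ottnt{A_{{\mathrm{2}}}}  \to  S_{ m } \:   (   \ottnt{A_{{\mathrm{1}}}}  \times  \ottnt{A_{{\mathrm{2}}}}   )  $ is $\ottmv{y} \mapsto  \mathbf{split}^{ m }   (   \mathbf{merge}^{ m }   (   \mathbf{proj}_1 \:  \ottmv{y}   )   ,  \mathbf{merge}^{ m }   (   \mathbf{proj}_2 \:  \ottmv{y}   )   )  $. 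Both composites collapse by first applying the modal $\eta$/$\beta$ rules to cancel adjacent $ \mathbf{merge}^{ m } / \mathbf{split}^{ m } $ pairs and then the $\beta\eta$-rules for products.

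For the function case $ S_{ m } \:   (   \ottnt{A}  \to  \ottnt{B}   )  \cong   S_{ m } \:  \ottnt{A}   \to   S_{ m } \:  \ottnt{B} $: the forward map takes $\ottnt{f} :  S_{ m } \:   (   \ottnt{A}  \to  \ottnt{B}   )  $ to $ \lambda  \ottmv{x}  .   \mathbf{split}^{ m }   (   \mathbf{merge}^{ m }  \ottnt{f}   \:   (   \mathbf{merge}^{ m }  \ottmv{x}   )    )  $, where, after applying \rref{E-Merge} to $\ottnt{f}$ and to $\ottmv{x}$, both $ \mathbf{merge}^{ m }  \ottnt{f} $ (of type $ \ottnt{A}  \to  \ottnt{B} $) and $ \mathbf{merge}^{ m }  \ottmv{x} $ (of type $\ottnt{A}$) are observed at grade $m$, so \rref{E-App} produces a term of type $\ottnt{B}$ at grade $m$, and \rref{E-Split} wraps it back to $ S_{ m } \:  \ottnt{B} $; the backward map takes $\ottnt{g} :   S_{ m } \:  \ottnt{A}   \to   S_{ m } \:  \ottnt{B} $ to $ \mathbf{split}^{ m }   (   \lambda  \ottmv{x}  .   \mathbf{merge}^{ m }   (   \ottnt{g}  \:   (   \mathbf{split}^{ m }  \ottmv{x}   )    )    )  $ — note here the inner $ \lambda  \ottmv{x}  .  (-)$ must be formed at grade $m$ and $\ottmv{x} :  \ottnt{A} $ bound at grade $m$ so that $ \mathbf{split}^{ m }  \ottmv{x}  :   S_{ m } \:  \ottnt{A} $ is available to feed $\ottnt{g}$. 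Checking the two round trips: starting from $\ottnt{g}$, after $\beta$-reducing the $\lambda$ and cancelling $ \mathbf{merge}^{ m }   (   \mathbf{split}^{ m }  (-)  )  $ inside and the outer $ \mathbf{merge}^{ m }   (   \mathbf{split}^{ m }  (-)  )  $, one is left with $ \lambda  \ottmv{x}  .  \ottnt{g}   \:  \ottmv{x} $, equal to $\ottnt{g}$ by function-$\eta$; starting from $\ottnt{f}$, one similarly cancels the $ \mathbf{split}^{ m } / \mathbf{merge}^{ m } $ pairs and uses function-$\eta$ plus the modal $\eta$-rule $\ottnt{f} \equiv  \mathbf{split}^{ m }   (   \mathbf{merge}^{ m }  \ottnt{f}   )  $. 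The main obstacle I anticipate is purely bookkeeping: making sure every subterm is formed at the grade that the typing rules in Figure \ref{gmccetyp} actually permit, in particular that the implicit grade on the $\lambda$ in the backward function map and the grades fed to \rref{E-App,E-Pair} line up, and that one does not need \rref{E-Up} at all. Once the typing derivations are laid out, the equational verifications are mechanical applications of the four $\beta\eta$-schemas. I would also remark that all three isomorphisms are instances of the fact, visible from the categorical model, that $\mathbf{S}$ for \Ge{} is interpreted by a \emph{cartesian closed} endofunctor, so these syntactic isomorphisms simply internalize the finite-product-preservation and cartesian-closure conditions defining $\EC$.
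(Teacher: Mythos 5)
Your proposal is correct and follows essentially the same route as the paper's own proof: identical witnessing terms in all three cases (including the $\mathbf{split}^{m}/\mathbf{merge}^{m}$ sandwiches for products and functions and the inner $\lambda$ formed under the modality in the backward function map), with the round trips discharged by the modal $\beta\eta$-rules plus the standard $\beta\eta$-rules, and no use of the up-rule. The only delta is your closing categorical remark, which the paper keeps in the surrounding discussion rather than in the proof itself.
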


\begin{proof}
For any two types, $\ottnt{A}$ and $\ottnt{B}$, to show that $\ottnt{A} \cong \ottnt{B}$, we need to provide terms  $  \emptyset   \vdash  \ottnt{f}  :^{ m' }   \ottnt{A}  \to  \ottnt{B}  $ and $  \emptyset   \vdash  \ottnt{g}  :^{ m' }   \ottnt{B}  \to  \ottnt{A}  $ such that $  \ottmv{x}  :^{ m' }  \ottnt{A}   \vdash   \ottnt{g}  \:   (   \ottnt{f}  \:  \ottmv{x}   )    \equiv  \ottmv{x}  :^{ m' }  \ottnt{A} $ and $  \ottmv{y}  :^{ m' }  \ottnt{B}   \vdash   \ottnt{f}  \:   (   \ottnt{g}  \:  \ottmv{y}   )    \equiv  \ottmv{y}  :^{ m' }  \ottnt{B} $, for any $m' \in M$. Note that the judgement $ \Gamma  \vdash  \ottnt{a_{{\mathrm{1}}}}  \equiv  \ottnt{a_{{\mathrm{2}}}}  :^{ m }  \ottnt{A} $ is a shorthand for the judgements $ \Gamma  \vdash  \ottnt{a_{{\mathrm{1}}}}  :^{ m }  \ottnt{A} $ and $ \Gamma  \vdash  \ottnt{a_{{\mathrm{2}}}}  :^{ m }  \ottnt{A} $ and $ \ottnt{a_{{\mathrm{1}}}}  \equiv  \ottnt{a_{{\mathrm{2}}}} $.
\begin{itemize}
\item Need to show: $ S_{ m } \:   \ottkw{Unit}   \cong  \ottkw{Unit} $.\\
We have: 
\[ \infer[]{  \emptyset   \vdash   \lambda  \ottmv{x}  .   \ottkw{unit}    :^{ m' }    S_{ m } \:   \ottkw{Unit}    \to   \ottkw{Unit}   }{  \ottmv{x}  :^{ m' }   S_{ m } \:   \ottkw{Unit}     \vdash   \ottkw{unit}   :^{ m' }   \ottkw{Unit}  } \hspace*{10pt}
 \infer[]{  \emptyset   \vdash   \lambda  \ottmv{y}  .   \mathbf{split}^{ m }   \ottkw{unit}     :^{ m' }    \ottkw{Unit}   \to   S_{ m } \:   \ottkw{Unit}    }
            {\infer[]{  \ottmv{y}  :^{ m' }   \ottkw{Unit}    \vdash   \mathbf{split}^{ m }   \ottkw{unit}    :^{ m' }   S_{ m } \:   \ottkw{Unit}   }{  \ottmv{y}  :^{ m' }   \ottkw{Unit}    \vdash   \ottkw{unit}   :^{  m'  \cdot  m  }   \ottkw{Unit}  }}\]
Say, $\ottnt{f_{{\mathrm{1}}}} \triangleq  \lambda  \ottmv{x}  .   \ottkw{unit}  $ and $\ottnt{g_{{\mathrm{1}}}} \triangleq  \lambda  \ottmv{y}  .   \mathbf{split}^{ m }   \ottkw{unit}   $.\\
Next, \[ \infer[]{  \ottmv{x}  :^{ m' }   S_{ m } \:   \ottkw{Unit}     \vdash   \ottkw{unit}   \equiv   \mathbf{merge}^{ m }  \ottmv{x}   :^{  m'  \cdot  m  }   \ottkw{Unit}  }
                  {\infer[]{  \ottmv{x}  :^{ m' }   S_{ m } \:   \ottkw{Unit}     \vdash   \mathbf{merge}^{ m }  \ottmv{x}   :^{  m'  \cdot  m  }   \ottkw{Unit}  }{  \ottmv{x}  :^{ m' }   S_{ m } \:   \ottkw{Unit}     \vdash  \ottmv{x}  :^{ m' }   S_{ m } \:   \ottkw{Unit}   }} \]
So, given $ \ottmv{x}  :^{ m' }   S_{ m } \:   \ottkw{Unit}   $, we have, $ \ottnt{g_{{\mathrm{1}}}}  \:   (   \ottnt{f_{{\mathrm{1}}}}  \:  \ottmv{x}   )   \equiv   \mathbf{split}^{ m }   \ottkw{unit}    \equiv   \mathbf{split}^{ m }   (   \mathbf{merge}^{ m }  \ottmv{x}   )    \equiv x$.\\
And, given $ \ottmv{y}  :^{ m' }   \ottkw{Unit}  $, we have, $ \ottnt{f_{{\mathrm{1}}}}  \:   (   \ottnt{g_{{\mathrm{1}}}}  \:  \ottmv{y}   )   \equiv   \ottkw{unit}   \equiv  \ottmv{y} $.

\item Need to show: $ S_{ m } \:   (   \ottnt{A_{{\mathrm{1}}}}  \times  \ottnt{A_{{\mathrm{2}}}}   )   \cong   S_{ m } \:  \ottnt{A_{{\mathrm{1}}}}   \times   S_{ m } \:  \ottnt{A_{{\mathrm{2}}}}  $.\\
We have: \[ \infer[]{  \emptyset   \vdash   \lambda  \ottmv{x}  .   (   \mathbf{split}^{ m }   (   \mathbf{proj}_1 \:   (   \mathbf{merge}^{ m }  \ottmv{x}   )    )    ,   \mathbf{split}^{ m }   (   \mathbf{proj}_2 \:   (   \mathbf{merge}^{ m }  \ottmv{x}   )    )    )    :^{ m' }     S_{ m } \:   (   \ottnt{A_{{\mathrm{1}}}}  \times  \ottnt{A_{{\mathrm{2}}}}   )    \to   S_{ m } \:  \ottnt{A_{{\mathrm{1}}}}    \times   S_{ m } \:  \ottnt{A_{{\mathrm{2}}}}    }
        {\infer[]{  \ottmv{x}  :^{ m' }   S_{ m } \:   (   \ottnt{A_{{\mathrm{1}}}}  \times  \ottnt{A_{{\mathrm{2}}}}   )     \vdash   (   \mathbf{split}^{ m }   (   \mathbf{proj}_1 \:   (   \mathbf{merge}^{ m }  \ottmv{x}   )    )    ,   \mathbf{split}^{ m }   (   \mathbf{proj}_2 \:   (   \mathbf{merge}^{ m }  \ottmv{x}   )    )    )   :^{ m' }    S_{ m } \:  \ottnt{A_{{\mathrm{1}}}}   \times   S_{ m } \:  \ottnt{A_{{\mathrm{2}}}}   }
        {\infer[]{  \ottmv{x}  :^{ m' }   S_{ m } \:   (   \ottnt{A_{{\mathrm{1}}}}  \times  \ottnt{A_{{\mathrm{2}}}}   )     \vdash   \mathbf{split}^{ m }   (   \mathbf{proj}_1 \:   (   \mathbf{merge}^{ m }  \ottmv{x}   )    )    :^{ m' }   S_{ m } \:  \ottnt{A_{{\mathrm{1}}}}  }
        {\infer[]{  \ottmv{x}  :^{ m' }   S_{ m } \:   (   \ottnt{A_{{\mathrm{1}}}}  \times  \ottnt{A_{{\mathrm{2}}}}   )     \vdash   \mathbf{proj}_1 \:   (   \mathbf{merge}^{ m }  \ottmv{x}   )    :^{  m'  \cdot  m  }  \ottnt{A_{{\mathrm{1}}}} }
        {\infer[]{  \ottmv{x}  :^{ m' }   S_{ m } \:   (   \ottnt{A_{{\mathrm{1}}}}  \times  \ottnt{A_{{\mathrm{2}}}}   )     \vdash   \mathbf{merge}^{ m }  \ottmv{x}   :^{  m'  \cdot  m  }   \ottnt{A_{{\mathrm{1}}}}  \times  \ottnt{A_{{\mathrm{2}}}}  }
        {  \ottmv{x}  :^{ m' }   S_{ m } \:   (   \ottnt{A_{{\mathrm{1}}}}  \times  \ottnt{A_{{\mathrm{2}}}}   )     \vdash  \ottmv{x}  :^{ m' }   S_{ m } \:   (   \ottnt{A_{{\mathrm{1}}}}  \times  \ottnt{A_{{\mathrm{2}}}}   )   }}}
        } } \] 
and \[ \infer[]{  \emptyset   \vdash   \lambda  \ottmv{y}  .   \mathbf{split}^{ m }   (   \mathbf{merge}^{ m }   (   \mathbf{proj}_1 \:  \ottmv{y}   )    ,   \mathbf{merge}^{ m }   (   \mathbf{proj}_2 \:  \ottmv{y}   )    )     :^{ m' }     S_{ m } \:  \ottnt{A_{{\mathrm{1}}}}   \times   S_{ m } \:  \ottnt{A_{{\mathrm{2}}}}    \to   S_{ m } \:   (   \ottnt{A_{{\mathrm{1}}}}  \times  \ottnt{A_{{\mathrm{2}}}}   )    }
       {\infer[]{  \ottmv{y}  :^{ m' }    S_{ m } \:  \ottnt{A_{{\mathrm{1}}}}   \times   S_{ m } \:  \ottnt{A_{{\mathrm{2}}}}     \vdash   \mathbf{split}^{ m }   (   \mathbf{merge}^{ m }   (   \mathbf{proj}_1 \:  \ottmv{y}   )    ,   \mathbf{merge}^{ m }   (   \mathbf{proj}_2 \:  \ottmv{y}   )    )    :^{ m' }   S_{ m } \:   (   \ottnt{A_{{\mathrm{1}}}}  \times  \ottnt{A_{{\mathrm{2}}}}   )   }
       {\infer[]{  \ottmv{y}  :^{ m' }    S_{ m } \:  \ottnt{A_{{\mathrm{1}}}}   \times   S_{ m } \:  \ottnt{A_{{\mathrm{2}}}}     \vdash   (   \mathbf{merge}^{ m }   (   \mathbf{proj}_1 \:  \ottmv{y}   )    ,   \mathbf{merge}^{ m }   (   \mathbf{proj}_2 \:  \ottmv{y}   )    )   :^{  m'  \cdot  m  }   \ottnt{A_{{\mathrm{1}}}}  \times  \ottnt{A_{{\mathrm{2}}}}  }
       {\infer[]{  \ottmv{y}  :^{ m' }    S_{ m } \:  \ottnt{A_{{\mathrm{1}}}}   \times   S_{ m } \:  \ottnt{A_{{\mathrm{2}}}}     \vdash   \mathbf{merge}^{ m }   (   \mathbf{proj}_1 \:  \ottmv{y}   )    :^{  m'  \cdot  m  }  \ottnt{A_{{\mathrm{1}}}} }
       {\infer[]{  \ottmv{y}  :^{ m' }    S_{ m } \:  \ottnt{A_{{\mathrm{1}}}}   \times   S_{ m } \:  \ottnt{A_{{\mathrm{2}}}}     \vdash   \mathbf{proj}_1 \:  \ottmv{y}   :^{ m' }   S_{ m } \:  \ottnt{A_{{\mathrm{1}}}}  }
       {  \ottmv{y}  :^{ m' }    S_{ m } \:  \ottnt{A_{{\mathrm{1}}}}   \times   S_{ m } \:  \ottnt{A_{{\mathrm{2}}}}     \vdash  \ottmv{y}  :^{ m' }    S_{ m } \:  \ottnt{A_{{\mathrm{1}}}}   \times   S_{ m } \:  \ottnt{A_{{\mathrm{2}}}}   }}}}} \]
Say, \begin{align*}
\ottnt{f_{{\mathrm{2}}}} & \triangleq  \lambda  \ottmv{x}  .   (   \mathbf{split}^{ m }   (   \mathbf{proj}_1 \:   (   \mathbf{merge}^{ m }  \ottmv{x}   )    )    ,   \mathbf{split}^{ m }   (   \mathbf{proj}_2 \:   (   \mathbf{merge}^{ m }  \ottmv{x}   )    )    )   \\      
\ottnt{g_{{\mathrm{2}}}} & \triangleq  \lambda  \ottmv{y}  .   \mathbf{split}^{ m }   (   \mathbf{merge}^{ m }   (   \mathbf{proj}_1 \:  \ottmv{y}   )    ,   \mathbf{merge}^{ m }   (   \mathbf{proj}_2 \:  \ottmv{y}   )    )   
\end{align*}
Next, given $ \ottmv{x}  :^{ m' }   S_{ m } \:   (   \ottnt{A_{{\mathrm{1}}}}  \times  \ottnt{A_{{\mathrm{2}}}}   )   $, we have,
\begin{align*}
&  \ottnt{g_{{\mathrm{2}}}}  \:   (   \ottnt{f_{{\mathrm{2}}}}  \:  \ottmv{x}   )   \\
\equiv &  (   \lambda  \ottmv{y}  .   \mathbf{split}^{ m }   (   \mathbf{merge}^{ m }   (   \mathbf{proj}_1 \:  \ottmv{y}   )    ,   \mathbf{merge}^{ m }   (   \mathbf{proj}_2 \:  \ottmv{y}   )    )     ) \\ & \hspace*{30pt}  (   \mathbf{split}^{ m }   (   \mathbf{proj}_1 \:   (   \mathbf{merge}^{ m }  \ottmv{x}   )    )    ,   \mathbf{split}^{ m }   (   \mathbf{proj}_2 \:   (   \mathbf{merge}^{ m }  \ottmv{x}   )    )    )  \\
\equiv &  \mathbf{split}^{ m }   (   \mathbf{merge}^{ m }   (   \mathbf{split}^{ m }   (   \mathbf{proj}_1 \:   (   \mathbf{merge}^{ m }  \ottmv{x}   )    )    )    ,   \mathbf{merge}^{ m }   (   \mathbf{split}^{ m }   (   \mathbf{proj}_2 \:   (   \mathbf{merge}^{ m }  \ottmv{x}   )    )    )    )   \\
\equiv &  \mathbf{split}^{ m }   (   \mathbf{proj}_1 \:   (   \mathbf{merge}^{ m }  \ottmv{x}   )    ,   \mathbf{proj}_2 \:   (   \mathbf{merge}^{ m }  \ottmv{x}   )    )   \equiv  \mathbf{split}^{ m }   (   \mathbf{merge}^{ m }  \ottmv{x}   )   \equiv x
\end{align*}             
And, given $ \ottmv{y}  :^{ m' }    S_{ m } \:  \ottnt{A_{{\mathrm{1}}}}   \times   S_{ m } \:  \ottnt{A_{{\mathrm{2}}}}   $, we have,
\begin{align*}
&  \ottnt{f_{{\mathrm{2}}}}  \:   (   \ottnt{g_{{\mathrm{2}}}}  \:  \ottmv{y}   )   \\
\equiv &  (   \lambda  \ottmv{x}  .   (   \mathbf{split}^{ m }   (   \mathbf{proj}_1 \:   (   \mathbf{merge}^{ m }  \ottmv{x}   )    )    ,   \mathbf{split}^{ m }   (   \mathbf{proj}_2 \:   (   \mathbf{merge}^{ m }  \ottmv{x}   )    )    )    )  \\ & \hspace*{30pt}  (   \mathbf{split}^{ m }   (   \mathbf{merge}^{ m }   (   \mathbf{proj}_1 \:  \ottmv{y}   )    ,   \mathbf{merge}^{ m }   (   \mathbf{proj}_2 \:  \ottmv{y}   )    )    )  \\
\equiv & ( \mathbf{split}^{ m }   (   \mathbf{proj}_1 \:   (   \mathbf{merge}^{ m }   (   \mathbf{split}^{ m }   (   \mathbf{merge}^{ m }   (   \mathbf{proj}_1 \:  \ottmv{y}   )    ,   \mathbf{merge}^{ m }   (   \mathbf{proj}_2 \:  \ottmv{y}   )    )    )    )    )  , \\
& \hspace*{30pt}  \mathbf{split}^{ m }   (   \mathbf{proj}_2 \:   (   \mathbf{merge}^{ m }   (   \mathbf{split}^{ m }   (   \mathbf{merge}^{ m }   (   \mathbf{proj}_1 \:  \ottmv{y}   )    ,   \mathbf{merge}^{ m }   (   \mathbf{proj}_2 \:  \ottmv{y}   )    )    )    )    )  ) \\
\equiv & ( \mathbf{split}^{ m }   (   \mathbf{proj}_1 \:   (   \mathbf{merge}^{ m }   (   \mathbf{proj}_1 \:  \ottmv{y}   )    ,   \mathbf{merge}^{ m }   (   \mathbf{proj}_2 \:  \ottmv{y}   )    )    )  , \\ & \hspace*{30pt}  \mathbf{split}^{ m }   (   \mathbf{proj}_2 \:   (   \mathbf{merge}^{ m }   (   \mathbf{proj}_1 \:  \ottmv{y}   )    ,   \mathbf{merge}^{ m }   (   \mathbf{proj}_2 \:  \ottmv{y}   )    )    )  ) \\
\equiv &  (   \mathbf{split}^{ m }   (   \mathbf{merge}^{ m }   (   \mathbf{proj}_1 \:  \ottmv{y}   )    )    ,   \mathbf{split}^{ m }   (   \mathbf{merge}^{ m }   (   \mathbf{proj}_2 \:  \ottmv{y}   )    )    )  \equiv  (   \mathbf{proj}_1 \:  \ottmv{y}   ,   \mathbf{proj}_2 \:  \ottmv{y}   )  \equiv y
\end{align*}

\item Need to show: $ S_{ m } \:   (   \ottnt{A}  \to  \ottnt{B}   )   \cong   S_{ m } \:  \ottnt{A}   \to   S_{ m } \:  \ottnt{B}  $.\\
We have: \[ \mkern-25mu \infer[]{  \emptyset   \vdash   \lambda  \ottmv{x}  .   \lambda  \ottmv{z}  .   \mathbf{split}^{ m }   (    (   \mathbf{merge}^{ m }  \ottmv{x}   )   \:   (   \mathbf{merge}^{ m }  \ottmv{z}   )    )      :^{ m' }    S_{ m } \:   (   \ottnt{A}  \to  \ottnt{B}   )    \to   (    S_{ m } \:  \ottnt{A}   \to   S_{ m } \:  \ottnt{B}    )   }
             {\infer[]{  \ottmv{x}  :^{ m' }   S_{ m } \:   (   \ottnt{A}  \to  \ottnt{B}   )     \vdash   \lambda  \ottmv{z}  .   \mathbf{split}^{ m }   (    (   \mathbf{merge}^{ m }  \ottmv{x}   )   \:   (   \mathbf{merge}^{ m }  \ottmv{z}   )    )     :^{ m' }    S_{ m } \:  \ottnt{A}   \to   S_{ m } \:  \ottnt{B}   }
             {\infer[]{   \ottmv{x}  :^{ m' }   S_{ m } \:   (   \ottnt{A}  \to  \ottnt{B}   )     ,   \ottmv{z}  :^{ m' }   S_{ m } \:  \ottnt{A}     \vdash   \mathbf{split}^{ m }   (    (   \mathbf{merge}^{ m }  \ottmv{x}   )   \:   (   \mathbf{merge}^{ m }  \ottmv{z}   )    )    :^{ m' }   S_{ m } \:  \ottnt{B}  }
             {\infer[]{   \ottmv{x}  :^{ m' }   S_{ m } \:   (   \ottnt{A}  \to  \ottnt{B}   )     ,   \ottmv{z}  :^{ m' }   S_{ m } \:  \ottnt{A}     \vdash    (   \mathbf{merge}^{ m }  \ottmv{x}   )   \:   (   \mathbf{merge}^{ m }  \ottmv{z}   )    :^{  m'  \cdot  m  }  \ottnt{B} }
             {\infer[]{   \ottmv{x}  :^{ m' }   S_{ m } \:   (   \ottnt{A}  \to  \ottnt{B}   )     ,   \ottmv{z}  :^{ m' }   S_{ m } \:  \ottnt{A}     \vdash   \mathbf{merge}^{ m }  \ottmv{x}   :^{  m'  \cdot  m  }   \ottnt{A}  \to  \ottnt{B}  }
             {   \ottmv{x}  :^{ m' }   S_{ m } \:   (   \ottnt{A}  \to  \ottnt{B}   )     ,   \ottmv{z}  :^{ m' }   S_{ m } \:  \ottnt{A}     \vdash  \ottmv{x}  :^{ m' }   S_{ m } \:   (   \ottnt{A}  \to  \ottnt{B}   )   }
             &
             \infer[]{   \ottmv{x}  :^{ m' }   S_{ m } \:   (   \ottnt{A}  \to  \ottnt{B}   )     ,   \ottmv{z}  :^{ m' }   S_{ m } \:  \ottnt{A}     \vdash   \mathbf{merge}^{ m }  \ottmv{z}   :^{  m'  \cdot  m  }  \ottnt{A} }
             {   \ottmv{x}  :^{ m' }   S_{ m } \:   (   \ottnt{A}  \to  \ottnt{B}   )     ,   \ottmv{z}  :^{ m' }   S_{ m } \:  \ottnt{A}     \vdash  \ottmv{z}  :^{ m' }   S_{ m } \:  \ottnt{A}  }
             }}}} \] 
and \[ \mkern-25mu \infer[]{  \emptyset   \vdash   \lambda  \ottmv{y}  .   \mathbf{split}^{ m }   (   \lambda  \ottmv{w}  .   \mathbf{merge}^{ m }   (   \ottmv{y}  \:   (   \mathbf{split}^{ m }  \ottmv{w}   )    )     )     :^{ m' }    (    S_{ m } \:  \ottnt{A}   \to   S_{ m } \:  \ottnt{B}    )   \to   S_{ m } \:   (   \ottnt{A}  \to  \ottnt{B}   )    }
          {\infer[]{  \ottmv{y}  :^{ m' }    S_{ m } \:  \ottnt{A}   \to   S_{ m } \:  \ottnt{B}     \vdash   \mathbf{split}^{ m }   (   \lambda  \ottmv{w}  .   \mathbf{merge}^{ m }   (   \ottmv{y}  \:   (   \mathbf{split}^{ m }  \ottmv{w}   )    )     )    :^{ m' }   S_{ m } \:   (   \ottnt{A}  \to  \ottnt{B}   )   }
          {\infer[]{  \ottmv{y}  :^{ m' }    S_{ m } \:  \ottnt{A}   \to   S_{ m } \:  \ottnt{B}     \vdash   \lambda  \ottmv{w}  .   \mathbf{merge}^{ m }   (   \ottmv{y}  \:   (   \mathbf{split}^{ m }  \ottmv{w}   )    )     :^{  m'  \cdot  m  }   \ottnt{A}  \to  \ottnt{B}  }
          {\infer[]{   \ottmv{y}  :^{ m' }    S_{ m } \:  \ottnt{A}   \to   S_{ m } \:  \ottnt{B}     ,   \ottmv{w}  :^{  m'  \cdot  m  }  \ottnt{A}    \vdash   \mathbf{merge}^{ m }   (   \ottmv{y}  \:   (   \mathbf{split}^{ m }  \ottmv{w}   )    )    :^{  m'  \cdot  m  }  \ottnt{B} }
          {\infer[]{   \ottmv{y}  :^{ m' }    S_{ m } \:  \ottnt{A}   \to   S_{ m } \:  \ottnt{B}     ,   \ottmv{w}  :^{  m'  \cdot  m  }  \ottnt{A}    \vdash   \ottmv{y}  \:   (   \mathbf{split}^{ m }  \ottmv{w}   )    :^{ m' }   S_{ m } \:  \ottnt{B}  }
          { {   \ottmv{y}  :^{ m' }    S_{ m } \:  \ottnt{A}   \to   S_{ m } \:  \ottnt{B}     ,   \ottmv{w}  :^{  m'  \cdot  m  }  \ottnt{A}    \vdash  \ottmv{y}  :^{ m' }    S_{ m } \:  \ottnt{A}   \to   S_{ m } \:  \ottnt{B}   }
           &
          {\infer[]{   \ottmv{y}  :^{ m' }    S_{ m } \:  \ottnt{A}   \to   S_{ m } \:  \ottnt{B}     ,   \ottmv{w}  :^{  m'  \cdot  m  }  \ottnt{A}    \vdash   \mathbf{split}^{ m }  \ottmv{w}   :^{ m' }   S_{ m } \:  \ottnt{A}  }
          {   \ottmv{y}  :^{ m' }    S_{ m } \:  \ottnt{A}   \to   S_{ m } \:  \ottnt{B}     ,   \ottmv{w}  :^{  m'  \cdot  m  }  \ottnt{A}    \vdash  \ottmv{w}  :^{  m'  \cdot  m  }  \ottnt{A} }}}}}}} \]
Say, \begin{align*}
\ottnt{f_{{\mathrm{3}}}} & \triangleq  \lambda  \ottmv{x}  .   \lambda  \ottmv{z}  .   \mathbf{split}^{ m }   (    (   \mathbf{merge}^{ m }  \ottmv{x}   )   \:   (   \mathbf{merge}^{ m }  \ottmv{z}   )    )     \\
\ottnt{g_{{\mathrm{3}}}} & \triangleq  \lambda  \ottmv{y}  .   \mathbf{split}^{ m }   (   \lambda  \ottmv{w}  .   \mathbf{merge}^{ m }   (   \ottmv{y}  \:   (   \mathbf{split}^{ m }  \ottmv{w}   )    )     )   
\end{align*}
Next, given $ \ottmv{x}  :^{ m' }   S_{ m } \:   (   \ottnt{A}  \to  \ottnt{B}   )   $, we have,
\begin{align*}
&  \ottnt{g_{{\mathrm{3}}}}  \:   (   \ottnt{f_{{\mathrm{3}}}}  \:  \ottmv{x}   )   \\
\equiv &   (   \lambda  \ottmv{y}  .   \mathbf{split}^{ m }   (   \lambda  \ottmv{w}  .   \mathbf{merge}^{ m }   (   \ottmv{y}  \:   (   \mathbf{split}^{ m }  \ottmv{w}   )    )     )     )   \:   (   \lambda  \ottmv{z}  .   \mathbf{split}^{ m }   (    (   \mathbf{merge}^{ m }  \ottmv{x}   )   \:   (   \mathbf{merge}^{ m }  \ottmv{z}   )    )     )   \\
\equiv &  \mathbf{split}^{ m }   (   \lambda  \ottmv{w}  .   \mathbf{merge}^{ m }   (    (   \lambda  \ottmv{z}  .   \mathbf{split}^{ m }   (    (   \mathbf{merge}^{ m }  \ottmv{x}   )   \:   (   \mathbf{merge}^{ m }  \ottmv{z}   )    )     )   \:   (   \mathbf{split}^{ m }  \ottmv{w}   )    )     )  \\
\equiv &  \mathbf{split}^{ m }   (   \lambda  \ottmv{w}  .   \mathbf{merge}^{ m }   (   \mathbf{split}^{ m }   (    (   \mathbf{merge}^{ m }  \ottmv{x}   )   \:   (   \mathbf{merge}^{ m }   (   \mathbf{split}^{ m }  \ottmv{w}   )    )    )    )     )  \\
\equiv &  \mathbf{split}^{ m }   (   \lambda  \ottmv{w}  .   \mathbf{merge}^{ m }   (   \mathbf{split}^{ m }   (    (   \mathbf{merge}^{ m }  \ottmv{x}   )   \:  \ottmv{w}   )    )     )  \\
\equiv &  \mathbf{split}^{ m }   (    \lambda  \ottmv{w}  .   (   \mathbf{merge}^{ m }  \ottmv{x}   )    \:  \ottmv{w}   )   \equiv  \mathbf{split}^{ m }   (   \mathbf{merge}^{ m }  \ottmv{x}   )   \equiv x
\end{align*}
And, given $ \ottmv{y}  :^{ m' }    S_{ m } \:  \ottnt{A}   \to   S_{ m } \:  \ottnt{B}   $, we have,
\begin{align*}
&  \ottnt{f_{{\mathrm{3}}}}  \:   (   \ottnt{g_{{\mathrm{3}}}}  \:  \ottmv{y}   )   \\
\equiv &   (   \lambda  \ottmv{x}  .   \lambda  \ottmv{z}  .   \mathbf{split}^{ m }   (    (   \mathbf{merge}^{ m }  \ottmv{x}   )   \:   (   \mathbf{merge}^{ m }  \ottmv{z}   )    )      )   \:   (   \mathbf{split}^{ m }   (   \lambda  \ottmv{w}  .   \mathbf{merge}^{ m }   (   \ottmv{y}  \:   (   \mathbf{split}^{ m }  \ottmv{w}   )    )     )    )   \\
\equiv &  \lambda  \ottmv{z}  .   \mathbf{split}^{ m }   (    (   \mathbf{merge}^{ m }   (   \mathbf{split}^{ m }   (   \lambda  \ottmv{w}  .   \mathbf{merge}^{ m }   (   \ottmv{y}  \:   (   \mathbf{split}^{ m }  \ottmv{w}   )    )     )    )    )   \:   (   \mathbf{merge}^{ m }  \ottmv{z}   )    )    \\
\equiv &  \lambda  \ottmv{z}  .   \mathbf{split}^{ m }   (    (   \lambda  \ottmv{w}  .   \mathbf{merge}^{ m }   (   \ottmv{y}  \:   (   \mathbf{split}^{ m }  \ottmv{w}   )    )     )   \:   (   \mathbf{merge}^{ m }  \ottmv{z}   )    )    \\
\equiv &  \lambda  \ottmv{z}  .   \mathbf{split}^{ m }   (   \mathbf{merge}^{ m }   (   \ottmv{y}  \:   (   \mathbf{split}^{ m }   (   \mathbf{merge}^{ m }  \ottmv{z}   )    )    )    )    \\
\equiv &   \lambda  \ottmv{z}  .  \ottmv{y}   \:   (   \mathbf{split}^{ m }   (   \mathbf{merge}^{ m }  \ottmv{z}   )    )   \equiv   \lambda  \ottmv{z}  .  \ottmv{y}   \:  \ottmv{z}  \equiv y
\end{align*}
\end{itemize}
\end{proof}


\begin{theorem}[Theorem \ref{gmcceSound}]
If $ \Gamma  \vdash  \ottnt{a}  :^{ m }  \ottnt{A} $ in \Ge{}, then $ \llbracket  \ottnt{a}  \rrbracket  \in \text{Hom}_{\Ct} ( \llbracket  \Gamma  \rrbracket , \mathbf{S}_{m}  \llbracket  \ottnt{A}  \rrbracket )$. Further, if $ \Gamma  \vdash  \ottnt{a_{{\mathrm{1}}}}  :^{ m }  \ottnt{A} $ and $ \Gamma  \vdash  \ottnt{a_{{\mathrm{2}}}}  :^{ m }  \ottnt{A} $ such that $ \ottnt{a_{{\mathrm{1}}}}  \equiv  \ottnt{a_{{\mathrm{2}}}} $ in \Ge{}, then $ \llbracket  \ottnt{a_{{\mathrm{1}}}}  \rrbracket  =  \llbracket  \ottnt{a_{{\mathrm{2}}}}  \rrbracket  \in \text{Hom}_{\Ct} ( \llbracket  \Gamma  \rrbracket , \mathbf{S}_{m}  \llbracket  \ottnt{A}  \rrbracket )$. 
\end{theorem}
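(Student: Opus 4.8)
The plan is to prove both halves of the statement by a straightforward structural induction, closely following the patterns already established in the proofs of Theorem~\ref{gmcsound}, Theorem~\ref{gccsound}, and especially Theorem~\ref{lcCatprf}, since the categorical model of \Ge{} is essentially that of \lc{} but with an arbitrary preordered monoid $\mathcal{M}$ in place of $\mathcal{N}$, and with \rref{E-Split,E-Merge,E-Up} in place of the $\mathbf{next}/\mathbf{prev}$ constructs. For the first (well-definedness) half, I would fix a derivation $ \Gamma  \vdash  \ottnt{a}  :^{ m }  \ottnt{A} $ and induct on it. The standard $\lambda$-calculus cases (\rref{E-Var,E-Lam,E-App,E-Pair,E-Proj,E-Inj,E-Case,E-Unit}) are interpreted exactly as in \lc{}, using the cartesian closed endofunctor structure of $\mathbf{S}_m$ (the invertible morphisms $p$ and $q$) to build the context-morphisms; the most delicate of these is \rref{E-Case}, which I would handle with the same explicit chain of $p^{-1}$'s and the isomorphism $   \llbracket  \ottnt{B}  \rrbracket  ^{  \llbracket  \ottnt{A_{{\mathrm{1}}}}  \rrbracket  }    \times     \llbracket  \ottnt{B}  \rrbracket  ^{  \llbracket  \ottnt{A_{{\mathrm{2}}}}  \rrbracket  }    \cong   \llbracket  \ottnt{B}  \rrbracket  ^{    \llbracket  \ottnt{A_{{\mathrm{1}}}}  \rrbracket   +   \llbracket  \ottnt{A_{{\mathrm{2}}}}  \rrbracket    } $ used in the proof of Theorem~\ref{lcCatprf}. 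For the three genuinely new cases: \rref{E-Split} is interpreted by postcomposing with $ \delta^{ m_{{\mathrm{1}}} , m_{{\mathrm{2}}} }_{  \llbracket  \ottnt{A}  \rrbracket  } $, which lands in $\mathbf{S}_{m_1}\mathbf{S}_{m_2}\llbracket A\rrbracket = \mathbf{S}_{m_1}\llbracket S_{m_2}\,A\rrbracket$ as required; \rref{E-Merge} dually by $ \mu^{ m_{{\mathrm{1}}} , m_{{\mathrm{2}}} }_{  \llbracket  \ottnt{A}  \rrbracket  } $; and \rref{E-Up} by $\mathbf{S}^{m_1\le m_2}_{\llbracket A\rrbracket}$, exactly as for \rref{M-Up,C-Up}. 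In each of these three cases the needed homomorphism property is immediate from the IH plus the fact that $\delta$, $\mu$ and $\mathbf{S}^{m_1\le m_2}$ are natural transformations with the stated (co)domains.

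For the second half, I would invert the equality judgement $ \ottnt{a_{{\mathrm{1}}}}  \equiv  \ottnt{a_{{\mathrm{2}}}} $. The $\beta\eta$-rules of the standard $\lambda$-calculus, including the $\mathbf{case}$-congruences and the $\eta$-rule for sums, are discharged verbatim as in the proof of Theorem~\ref{lcCatprf} — the same diagram chases go through since they only use cartesian closure of $\Ct$ and the invertibility of $p,q$, which hold for every object of $\EC$. The only new equations are the two modal $\beta\eta$-rules $ \mathbf{merge}^{ m }   (   \mathbf{split}^{ m }  \ottnt{a}   )   \equiv \ottnt{a}$ and $\ottnt{a}  \equiv  \mathbf{split}^{ m }   (   \mathbf{merge}^{ m }  \ottnt{a}   )  $. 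These reduce to the equalities $\mu^{m_1,m}\circ\delta^{m_1,m} = \text{id}$ and $\delta^{m_1,m}\circ\mu^{m_1,m}=\text{id}$ at the appropriate object, which hold precisely because $\mathbf{S}$ is a \emph{strong} monoidal functor: its structure maps $F_0$ and $F_2$ are invertible, so the induced $\mu$ and $\delta$ are mutually inverse, exactly the relations displayed in Section~\ref{GMCCModel}. So for these cases I would simply write $ \llbracket   \mathbf{merge}^{ m }   (   \mathbf{split}^{ m }  \ottnt{a}   )    \rrbracket  =  \mu^{ m_{{\mathrm{1}}} , m }_{  \llbracket  \ottnt{A}  \rrbracket  }   \circ   \delta^{ m_{{\mathrm{1}}} , m }_{  \llbracket  \ottnt{A}  \rrbracket  }   \circ   \llbracket  \ottnt{a}  \rrbracket  =  \llbracket  \ottnt{a}  \rrbracket $ and dually, precisely mirroring the $\mathbf{prev}/\mathbf{next}$ cases of Theorem~\ref{lcCatprf}.

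I do not expect a serious obstacle here; the result is essentially a corollary of the machinery already set up for \lc{}. If there is any friction at all it will be purely bookkeeping: keeping track of the implicit grade $m_1$ in the interpretations of $ \mathbf{split}^{ m_{{\mathrm{2}}} }  \ottnt{a} $ and $ \mathbf{merge}^{ m_{{\mathrm{2}}} }  \ottnt{a} $ (the typing rule fixes the outer grade $m_1$ from the context of the derivation, and the model reads it off the same way), and making sure the $\mathbf{case}$ case is written out with the correct associativity of the product/coproduct isomorphisms. None of this requires any idea beyond what appears in the proof of Theorem~\ref{lcCatprf}, so I would present the new \rref{E-Split,E-Merge,E-Up} cases in full and explicitly refer the reader to the proof of Theorem~\ref{lcCatprf} for the $\lambda$-calculus fragment and its equations.
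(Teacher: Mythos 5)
Your proposal matches the paper's proof essentially exactly: induction on the typing derivation with the $\lambda$-calculus fragment handled as in the proof of Theorem~\ref{lcCatprf}, the \rref{E-Split,E-Merge,E-Up} cases discharged by $\delta^{m_1,m_2}$, $\mu^{m_1,m_2}$ and $\mathbf{S}^{m_1\le m_2}$ respectively, and the modal $\beta\eta$-equations reduced to $\mu\circ\delta=\text{id}$ and $\delta\circ\mu=\text{id}$ from strong monoidality. The only cosmetic difference is that the paper writes out the product cases (\rref{E-Pair,E-Proj} and their $\beta\eta$-equations) explicitly, since \lc{} lacks products, whereas you fold them into the ``as in \lc{}'' fragment — but your appeal to the invertible $p$ covers exactly what is needed there.
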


\begin{proof}
The first part follows by induction on the typing derivation. Most of the cases are similar to those in Theorem \ref{lcCatprf}. We present the differing ones below.
\begin{itemize}
\item \Rref{E-Pair}. Have: $ \Gamma  \vdash   (  \ottnt{a_{{\mathrm{1}}}}  ,  \ottnt{a_{{\mathrm{2}}}}  )   :^{ m }   \ottnt{A_{{\mathrm{1}}}}  \times  \ottnt{A_{{\mathrm{2}}}}  $ where $ \Gamma  \vdash  \ottnt{a_{{\mathrm{1}}}}  :^{ m }  \ottnt{A_{{\mathrm{1}}}} $ and $ \Gamma  \vdash  \ottnt{a_{{\mathrm{2}}}}  :^{ m }  \ottnt{A_{{\mathrm{2}}}} $.\\
Need to show: $ \llbracket   (  \ottnt{a_{{\mathrm{1}}}}  ,  \ottnt{a_{{\mathrm{2}}}}  )   \rrbracket  \in \text{Hom}_{\Ct} ( \llbracket  \Gamma  \rrbracket  ,  \mathbf{S}_{ m }   (    \llbracket  \ottnt{A_{{\mathrm{1}}}}  \rrbracket   \times   \llbracket  \ottnt{A_{{\mathrm{2}}}}  \rrbracket    )  )$.\\
By IH, $ \llbracket  \ottnt{a_{{\mathrm{1}}}}  \rrbracket  \in \text{Hom}_{\Ct} ( \llbracket  \Gamma  \rrbracket  ,  \mathbf{S}_{ m }   \llbracket  \ottnt{A_{{\mathrm{1}}}}  \rrbracket  )$ and $ \llbracket  \ottnt{a_{{\mathrm{2}}}}  \rrbracket  \in \text{Hom}_{\Ct} ( \llbracket  \Gamma  \rrbracket ,  \mathbf{S}_{ m }   \llbracket  \ottnt{A_{{\mathrm{2}}}}  \rrbracket  )$.\\
Now, $ \llbracket   (  \ottnt{a_{{\mathrm{1}}}}  ,  \ottnt{a_{{\mathrm{2}}}}  )   \rrbracket  =  \llbracket  \Gamma  \rrbracket  \xrightarrow{ \langle   \llbracket  \ottnt{a_{{\mathrm{1}}}}  \rrbracket   ,   \llbracket  \ottnt{a_{{\mathrm{2}}}}  \rrbracket   \rangle }   \mathbf{S}_{ m }   \llbracket  \ottnt{A_{{\mathrm{1}}}}  \rrbracket    \times   \mathbf{S}_{ m }   \llbracket  \ottnt{A_{{\mathrm{2}}}}  \rrbracket    \xrightarrow{ p^{-1}_{  \llbracket  \ottnt{A_{{\mathrm{1}}}}  \rrbracket  ,   \llbracket  \ottnt{A_{{\mathrm{2}}}}  \rrbracket  } }  \mathbf{S}_{ m }   (    \llbracket  \ottnt{A_{{\mathrm{1}}}}  \rrbracket   \times   \llbracket  \ottnt{A_{{\mathrm{2}}}}  \rrbracket    )  $.
\item \Rref{E-Proj}. Have: $ \Gamma  \vdash   \mathbf{proj}_i \:  \ottnt{a}   :^{ m }   A_i  $ where $ \Gamma  \vdash  \ottnt{a}  :^{ m }   \ottnt{A_{{\mathrm{1}}}}  \times  \ottnt{A_{{\mathrm{2}}}}  $.\\
Need to show: $ \llbracket   \mathbf{proj}_i \:  \ottnt{a}   \rrbracket  \in \text{Hom}_{\Ct} ( \llbracket  \Gamma  \rrbracket ,  \mathbf{S}_{ m }   \llbracket   A_i   \rrbracket  )$.\\
By IH, $ \llbracket  \ottnt{a}  \rrbracket  \in \text{Hom}_{\Ct} ( \llbracket  \Gamma  \rrbracket  ,  \mathbf{S}_{ m }   (    \llbracket  \ottnt{A_{{\mathrm{1}}}}  \rrbracket   \times   \llbracket  \ottnt{A_{{\mathrm{2}}}}  \rrbracket    )  )$.\\
Now, $ \llbracket   \mathbf{proj}_i \:  \ottnt{a}   \rrbracket  =  \llbracket  \Gamma  \rrbracket  \xrightarrow{ \llbracket  \ottnt{a}  \rrbracket }  \mathbf{S}_{ m }   (    \llbracket  \ottnt{A_{{\mathrm{1}}}}  \rrbracket   \times   \llbracket  \ottnt{A_{{\mathrm{2}}}}  \rrbracket    )   \xrightarrow{ \mathbf{S}_{ m }   \pi_i  }  \mathbf{S}_{ m }   \llbracket   A_i   \rrbracket  $.
\item \Rref{E-Split}. Have: $ \Gamma  \vdash   \mathbf{split}^{ m_{{\mathrm{2}}} }  \ottnt{a}   :^{ m_{{\mathrm{1}}} }   S_{ m_{{\mathrm{2}}} } \:  \ottnt{A}  $ where $ \Gamma  \vdash  \ottnt{a}  :^{  m_{{\mathrm{1}}}  \cdot  m_{{\mathrm{2}}}  }  \ottnt{A} $.\\
Need to show: $ \llbracket   \mathbf{split}^{ m_{{\mathrm{2}}} }  \ottnt{a}   \rrbracket  \in \text{Hom}_{\Ct} ( \llbracket  \Gamma  \rrbracket ,  \mathbf{S}_{ m_{{\mathrm{1}}} }   \mathbf{S}_{ m_{{\mathrm{2}}} }   \llbracket  \ottnt{A}  \rrbracket   )$.\\
By IH, $ \llbracket  \ottnt{a}  \rrbracket  \in \text{Hom}_{\Ct}( \llbracket  \Gamma  \rrbracket  ,  \mathbf{S}_{  m_{{\mathrm{1}}}  \cdot  m_{{\mathrm{2}}}  }   \llbracket  \ottnt{A}  \rrbracket  )$.\\
Now, $ \llbracket   \mathbf{split}^{ m_{{\mathrm{2}}} }  \ottnt{a}   \rrbracket  =  \llbracket  \Gamma  \rrbracket  \xrightarrow{ \llbracket  \ottnt{a}  \rrbracket }  \mathbf{S}_{  m_{{\mathrm{1}}}  \cdot  m_{{\mathrm{2}}}  }   \llbracket  \ottnt{A}  \rrbracket   \xrightarrow{ \delta^{ m_{{\mathrm{1}}} , m_{{\mathrm{2}}} }_{  \llbracket  \ottnt{A}  \rrbracket  } }  \mathbf{S}_{ m_{{\mathrm{1}}} }   \mathbf{S}_{ m_{{\mathrm{2}}} }   \llbracket  \ottnt{A}  \rrbracket   $.
\item \Rref{E-Merge}. Have: $ \Gamma  \vdash   \mathbf{merge}^{ m_{{\mathrm{2}}} }  \ottnt{a}   :^{  m_{{\mathrm{1}}}  \cdot  m_{{\mathrm{2}}}  }  \ottnt{A} $ where $ \Gamma  \vdash  \ottnt{a}  :^{ m_{{\mathrm{1}}} }   S_{ m_{{\mathrm{2}}} } \:  \ottnt{A}  $.\\
Need to show: $ \llbracket   \mathbf{merge}^{ m_{{\mathrm{2}}} }  \ottnt{a}   \rrbracket  \in \text{Hom}_{\Ct} ( \llbracket  \Gamma  \rrbracket ,  \mathbf{S}_{  m_{{\mathrm{1}}}  \cdot  m_{{\mathrm{2}}}  }   \llbracket  \ottnt{A}  \rrbracket  )$.\\
By IH, $ \llbracket  \ottnt{a}  \rrbracket  \in \text{Hom}_{\Ct} ( \llbracket  \Gamma  \rrbracket  ,  \mathbf{S}_{ m_{{\mathrm{1}}} }   \mathbf{S}_{ m_{{\mathrm{2}}} }   \llbracket  \ottnt{A}  \rrbracket   )$.\\
Now, $ \llbracket   \mathbf{merge}^{ m_{{\mathrm{2}}} }  \ottnt{a}   \rrbracket  =  \llbracket  \Gamma  \rrbracket  \xrightarrow{ \llbracket  \ottnt{a}  \rrbracket }  \mathbf{S}_{ m_{{\mathrm{1}}} }   \mathbf{S}_{ m_{{\mathrm{2}}} }   \llbracket  \ottnt{A}  \rrbracket    \xrightarrow{ \mu^{ m_{{\mathrm{1}}} , m_{{\mathrm{2}}} }_{  \llbracket  \ottnt{A}  \rrbracket  } }  \mathbf{S}_{  m_{{\mathrm{1}}}  \cdot  m_{{\mathrm{2}}}  }   \llbracket  \ottnt{A}  \rrbracket  $.
\item \Rref{E-Up}. Have: $ \Gamma  \vdash  \ottnt{a}  :^{ m_{{\mathrm{2}}} }  \ottnt{A} $ where $ \Gamma  \vdash  \ottnt{a}  :^{ m_{{\mathrm{1}}} }  \ottnt{A} $ and $ m_{{\mathrm{1}}}   \leq   m_{{\mathrm{2}}} $.\\
Need to show: $\llbracket  \Gamma  \vdash  \ottnt{a}  :^{ m_{{\mathrm{2}}} }  \ottnt{A}  \rrbracket \in \text{Hom}_{\Ct} ( \llbracket  \Gamma  \rrbracket  ,  \mathbf{S}_{ m_{{\mathrm{2}}} }   \llbracket  \ottnt{A}  \rrbracket  )$.\\
By IH, $\llbracket  \Gamma  \vdash  \ottnt{a}  :^{ m_{{\mathrm{1}}} }  \ottnt{A}  \rrbracket \in \text{Hom}_{\Ct} ( \llbracket  \Gamma  \rrbracket  ,  \mathbf{S}_{ m_{{\mathrm{1}}} }   \llbracket  \ottnt{A}  \rrbracket  )$.\\
Now, $\llbracket  \Gamma  \vdash  \ottnt{a}  :^{ m_{{\mathrm{2}}} }  \ottnt{A}  \rrbracket =  \llbracket  \Gamma  \rrbracket  \xrightarrow{\llbracket  \Gamma  \vdash  \ottnt{a}  :^{ m_{{\mathrm{1}}} }  \ottnt{A}  \rrbracket}  \mathbf{S}_{ m_{{\mathrm{1}}} }   \llbracket  \ottnt{A}  \rrbracket   \xrightarrow{ \mathbf{S}^{ m_{{\mathrm{1}}}  \leq  m_{{\mathrm{2}}} }_{  \llbracket  \ottnt{A}  \rrbracket  } }  \mathbf{S}_{ m_{{\mathrm{2}}} }   \llbracket  \ottnt{A}  \rrbracket  $.
\end{itemize}

For the second part, invert the equality judgement, $ \ottnt{a_{{\mathrm{1}}}}  \equiv  \ottnt{a_{{\mathrm{2}}}} $. Most of the cases are similar to those in Theorem \ref{lcCatprf}. We present the differing ones below.

\begin{itemize}
\item $  \mathbf{proj}_i \:   (  \ottnt{a_{{\mathrm{1}}}}  ,  \ottnt{a_{{\mathrm{2}}}}  )    \equiv   a_i  $.\\
Now, \begin{align*}
&  \llbracket   \mathbf{proj}_i \:   (  \ottnt{a_{{\mathrm{1}}}}  ,  \ottnt{a_{{\mathrm{2}}}}  )    \rrbracket  \\
= &    \mathbf{S}_{ m }   \pi_i    \circ   p^{-1}_{  \llbracket  \ottnt{A_{{\mathrm{1}}}}  \rrbracket  ,   \llbracket  \ottnt{A_{{\mathrm{2}}}}  \rrbracket  }    \circ   \langle   \llbracket  \ottnt{a_{{\mathrm{1}}}}  \rrbracket   ,   \llbracket  \ottnt{a_{{\mathrm{2}}}}  \rrbracket   \rangle   \\
= &     \pi_i   \circ   \langle   \mathbf{S}_{ m }   \pi_1    ,   \mathbf{S}_{ m }   \pi_2    \rangle    \circ   p^{-1}_{  \llbracket  \ottnt{A_{{\mathrm{1}}}}  \rrbracket  ,   \llbracket  \ottnt{A_{{\mathrm{2}}}}  \rrbracket  }    \circ   \langle   \llbracket  \ottnt{a_{{\mathrm{1}}}}  \rrbracket   ,   \llbracket  \ottnt{a_{{\mathrm{2}}}}  \rrbracket   \rangle   \\
= &     \pi_i   \circ   p_{  \llbracket  \ottnt{A_{{\mathrm{1}}}}  \rrbracket  ,   \llbracket  \ottnt{A_{{\mathrm{2}}}}  \rrbracket  }    \circ   p^{-1}_{  \llbracket  \ottnt{A_{{\mathrm{1}}}}  \rrbracket  ,   \llbracket  \ottnt{A_{{\mathrm{2}}}}  \rrbracket  }    \circ   \langle   \llbracket  \ottnt{a_{{\mathrm{1}}}}  \rrbracket   ,   \llbracket  \ottnt{a_{{\mathrm{2}}}}  \rrbracket   \rangle   =   \pi_i   \circ   \langle   \llbracket  \ottnt{a_{{\mathrm{1}}}}  \rrbracket   ,   \llbracket  \ottnt{a_{{\mathrm{2}}}}  \rrbracket   \rangle   =  \llbracket   a_i   \rrbracket .
\end{align*}
\item $ \ottnt{a}  \equiv   (   \mathbf{proj}_1 \:  \ottnt{a}   ,   \mathbf{proj}_2 \:  \ottnt{a}   )  $.\\
Now, \begin{align*}
&  \llbracket   (   \mathbf{proj}_1 \:  \ottnt{a}   ,   \mathbf{proj}_2 \:  \ottnt{a}   )   \rrbracket  \\
= &   p^{-1}_{  \llbracket  \ottnt{A_{{\mathrm{1}}}}  \rrbracket  ,   \llbracket  \ottnt{A_{{\mathrm{2}}}}  \rrbracket  }   \circ   \langle   \llbracket   \mathbf{proj}_1 \:  \ottnt{a}   \rrbracket   ,   \llbracket   \mathbf{proj}_2 \:  \ottnt{a}   \rrbracket   \rangle   \\
= &   p^{-1}_{  \llbracket  \ottnt{A_{{\mathrm{1}}}}  \rrbracket  ,   \llbracket  \ottnt{A_{{\mathrm{2}}}}  \rrbracket  }   \circ   \langle    \mathbf{S}_{ m }   \pi_1    \circ   \llbracket  \ottnt{a}  \rrbracket    ,    \mathbf{S}_{ m }   \pi_2    \circ   \llbracket  \ottnt{a}  \rrbracket    \rangle   \\
= &    p^{-1}_{  \llbracket  \ottnt{A_{{\mathrm{1}}}}  \rrbracket  ,   \llbracket  \ottnt{A_{{\mathrm{2}}}}  \rrbracket  }   \circ   \langle   \mathbf{S}_{ m }   \pi_1    ,   \mathbf{S}_{ m }   \pi_2    \rangle    \circ   \llbracket  \ottnt{a}  \rrbracket   =    p^{-1}_{  \llbracket  \ottnt{A_{{\mathrm{1}}}}  \rrbracket  ,   \llbracket  \ottnt{A_{{\mathrm{2}}}}  \rrbracket  }   \circ   p_{  \llbracket  \ottnt{A_{{\mathrm{1}}}}  \rrbracket  ,   \llbracket  \ottnt{A_{{\mathrm{2}}}}  \rrbracket  }    \circ   \llbracket  \ottnt{a}  \rrbracket   =  \llbracket  \ottnt{a}  \rrbracket 
\end{align*}
\item $ \Gamma  \vdash   \mathbf{merge}^{ m }   (   \mathbf{split}^{ m }  \ottnt{a}   )    \equiv  \ottnt{a}  :^{  m'  \cdot  m  }  \ottnt{A} $.\\
Now, \begin{align*}
& \llbracket  \Gamma  \vdash   \mathbf{merge}^{ m }   (   \mathbf{split}^{ m }  \ottnt{a}   )    :^{  m'  \cdot  m  }  \ottnt{A}  \rrbracket\\
= &  \mu^{ m' , m }_{  \llbracket  \ottnt{A}  \rrbracket  }  \circ \llbracket  \Gamma  \vdash   \mathbf{split}^{ m }  \ottnt{a}   :^{ m' }   S_{ m } \:  \ottnt{A}   \rrbracket \\
= &   \mu^{ m' , m }_{  \llbracket  \ottnt{A}  \rrbracket  }   \circ   \delta^{ m' , m }_{  \llbracket  \ottnt{A}  \rrbracket  }   \circ \llbracket  \Gamma  \vdash  \ottnt{a}  :^{  m'  \cdot  m  }  \ottnt{A}  \rrbracket = \llbracket  \Gamma  \vdash  \ottnt{a}  :^{  m'  \cdot  m  }  \ottnt{A}  \rrbracket
\end{align*} 
\item $ \Gamma  \vdash  \ottnt{a}  \equiv   \mathbf{split}^{ m }   (   \mathbf{merge}^{ m }  \ottnt{a}   )    :^{ m' }   S_{ m } \:  \ottnt{A}  $.\\
Now, \begin{align*}
& \llbracket  \Gamma  \vdash   \mathbf{split}^{ m }   (   \mathbf{merge}^{ m }  \ottnt{a}   )    :^{ m' }   S_{ m } \:  \ottnt{A}   \rrbracket \\
= &  \delta^{ m' , m }_{  \llbracket  \ottnt{A}  \rrbracket  }  \circ \llbracket  \Gamma  \vdash   \mathbf{merge}^{ m }  \ottnt{a}   :^{  m'  \cdot  m  }  \ottnt{A}  \rrbracket \\
= &   \delta^{ m' , m }_{  \llbracket  \ottnt{A}  \rrbracket  }   \circ   \mu^{ m' , m }_{  \llbracket  \ottnt{A}  \rrbracket  }   \circ \llbracket  \Gamma  \vdash  \ottnt{a}  :^{ m' }   S_{ m } \:  \ottnt{A}   \rrbracket = \llbracket  \Gamma  \vdash  \ottnt{a}  :^{ m' }   S_{ m } \:  \ottnt{A}   \rrbracket
\end{align*}
\end{itemize}
\end{proof}


\begin{theorem}[Theorem \ref{gmcc2gmcce}]
If $ \Gamma  \vdash  \ottnt{a}  :  \ottnt{A} $ in GMCC($ \mathcal{L} $), then $  \Gamma ^{   \bot   }   \vdash   \widetilde{ \ottnt{a} }   :^{   \bot   }  \ottnt{A} $ in \Ge{}($ \mathcal{L} $). Further, if $ \Gamma  \vdash  \ottnt{a_{{\mathrm{1}}}}  :  \ottnt{A} $ and $ \Gamma  \vdash  \ottnt{a_{{\mathrm{2}}}}  :  \ottnt{A} $ such that $ \ottnt{a_{{\mathrm{1}}}}  \equiv  \ottnt{a_{{\mathrm{2}}}} $ in GMCC($ \mathcal{L} $), then $  \widetilde{ \ottnt{a_{{\mathrm{1}}}} }   \equiv   \widetilde{ \ottnt{a_{{\mathrm{2}}}} }  $ in \Ge{}($ \mathcal{L} $).
\end{theorem}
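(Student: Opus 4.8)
Theorem~\ref{gmcc2gmcce} asserts two things: the translation $\widetilde{\phantom{a}}$ from GMCC($\mathcal{L}$) to \Ge{}($\mathcal{L}$) preserves typing (with every context assumption and the judgement itself held at grade $\bot$) and preserves equality. The plan is to prove both parts by induction on the relevant derivation, following the same pattern used for the earlier translation theorems (e.g.\ Theorems~\ref{GMC2DCC}, \ref{GCC2DCCe}, \ref{DCCSound}).

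For the typing part, I would proceed by induction on the derivation $\Gamma \vdash \ottnt{a} : \ottnt{A}$ in GMCC($\mathcal{L}$). The standard $\lambda$-calculus cases are immediate: the rules of \Ge{} for those constructs are as expected, and since every grade in sight is $\bot$ and $\bot \cdot \bot = \bot$ in the join-semilattice seen as a preordered monoid, the graded premises line up with the ungraded ones. The interesting cases are the six modal constructs $\mathbf{ret}, \mathbf{extr}, \mathbf{join}, \mathbf{fork}, \mathbf{lift}, \mathbf{up}$. For each I would exhibit the \Ge{} typing derivation of the translated term, reading off grades from the definition of $\widetilde{\phantom{a}}$. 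For instance, for $\mathbf{ret}\:\ottnt{a}$ with $\Gamma \vdash \ottnt{a} : \ottnt{A}$, the IH gives $\Gamma^{\bot} \vdash \widetilde{\ottnt{a}} :^{\bot} \ottnt{A}$, and since $\bot \cdot \bot = \bot$, \rref{E-Split} yields $\Gamma^{\bot} \vdash \mathbf{split}^{\bot}\widetilde{\ottnt{a}} :^{\bot} S_{\bot}\:\ottnt{A}$, which matches $\overline{S_{\ottsym{1}}\:\ottnt{A}}$ since $\ottsym{1} = \bot$. The cases for $\mathbf{join}^{\ell_1,\ell_2}$ and $\mathbf{fork}^{\ell_1,\ell_2}$ require a little more bookkeeping because the translation nests two $\mathbf{merge}$s / $\mathbf{split}$s; here the key arithmetic fact is that in $\mathcal{L}$ we have $\bot \cdot \ell_1 \cdot \ell_2 = \ell_1 \vee \ell_2$ and associativity of $\vee$, so the intermediate grades produced by the \rref{E-Split,E-Merge} rules land exactly where they must. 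For $\mathbf{lift}^{\ell}\ottnt{f}$, translated to $\lambda \ottmv{x} . \mathbf{split}^{\ell}(\widetilde{\ottnt{f}}\:(\mathbf{merge}^{\ell}\ottmv{x}))$, I would use \rref{E-Lam}, \rref{E-Merge}, \rref{E-App}, \rref{E-Split} in turn, noting that $\bot \cdot \ell = \ell$ gives the right grade for the bound variable. The $\mathbf{up}$ case needs \rref{E-Up} together with the observation that $\ell_1 \sqsubseteq \ell_2$ is precisely the order in the preordered monoid $\mathcal{L}$.

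For the equality part, I would argue as in the earlier theorems via erasure: first establish that if $\Gamma \vdash \ottnt{a} : \ottnt{A}$ in GMCC($\mathcal{L}$) then $\lfloor \widetilde{\ottnt{a}} \rfloor \equiv \lfloor \ottnt{a} \rfloor$ as plain $\lambda$-terms — a straightforward induction, since $\widetilde{\phantom{a}}$ only inserts/rearranges modal constructors ($\mathbf{split}, \mathbf{merge}$) which the erasure $\lfloor - \rfloor$ strips away. Wait — actually the statement claims $\widetilde{\ottnt{a_1}} \equiv \widetilde{\ottnt{a_2}}$ in \Ge{}, not merely equality up to erasure, so I cannot simply route through $\lfloor - \rfloor$; instead I must invert the GMCC equality judgement $\ottnt{a_1} \equiv \ottnt{a_2}$ and, for each equational rule of GMCC (the $\beta\eta$-rules of $\lambda$-calculus plus the rules in Figures~\ref{eqGMC}, \ref{eqGCC}, plus the four GMCC-specific rules), show that the translated left and right sides are provably equal in \Ge{} using the single $\beta\eta$-rule $\mathbf{merge}^{m}(\mathbf{split}^{m}\ottnt{a}) \equiv \ottnt{a}$, $\ottnt{a} \equiv \mathbf{split}^{m}(\mathbf{merge}^{m}\ottnt{a})$ together with the $\lambda$-calculus $\beta\eta$-rules. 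Concretely, each GMCC equation unfolds, after translation, to a chain of $\mathbf{split}/\mathbf{merge}$ cancellations. The main obstacle is exactly this last step: the GMCC equational theory has many rules (the naturality rules \eqref{eq:natl}, \eqref{eq:natr}, \eqref{eq:cnatl}, \eqref{eq:cnatr}, the associativity rules \eqref{eq:assoc}, \eqref{eq:cassoc}, the unit rules, and the four compatibility rules $\mathbf{extr}(\mathbf{ret}\:\ottnt{a}) \equiv \ottnt{a}$ etc.), and for each one must patiently unfold both sides through $\widetilde{\phantom{a}}$ and verify the cancellation. This is not conceptually deep but is lengthy; I would organize it by first proving a handful of derived lemmas in \Ge{} (e.g.\ that $\widetilde{\mathbf{lift}^{\ell}\ottnt{f}}$ composed appropriately behaves functorially, and that $\widetilde{\mathbf{up}}$ is transitive) so that the larger equations reduce to applications of these lemmas rather than raw $\mathbf{split}/\mathbf{merge}$ manipulation. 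One convenient shortcut: since Theorem~\ref{gmcceSound} gives a sound categorical model for \Ge{} and Theorem~\ref{gmccsound}/\ref{gmcccomplete} do the same for GMCC, and since the translation is designed to be semantics-preserving, many of these equalities could alternatively be deduced from completeness of GMCC — but since the cleanest self-contained route is the syntactic one, I would present the syntactic inductions and relegate the semantic cross-check to a remark.
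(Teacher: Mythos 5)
Your proposal matches the paper's proof: part one is an induction on the GMCC typing derivation, discharging each modal construct with the grade arithmetic of the semilattice-as-monoid ($\bot$ as unit, $\vee$ as multiplication), and part two is an inversion on the GMCC equality judgement, verifying each equational rule by unfolding the translation and cancelling $\mathbf{split}/\mathbf{merge}$ pairs — exactly as you describe after correctly abandoning the erasure route, since the claim is genuine \Ge{}-equality. The only difference is organizational: the paper grinds through every equation directly rather than factoring out derived lemmas, but the underlying argument is the same.
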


\begin{proof}
Let $ \Gamma  \vdash  \ottnt{a}  :  \ottnt{A} $ in GMCC($ \mathcal{L} $). We show $  \Gamma ^{   \bot   }   \vdash   \widetilde{ \ottnt{a} }   :^{   \bot   }  \ottnt{A} $ in \Ge{}($ \mathcal{L} $) by induction on the typing derivation.
\begin{itemize}
\item $\lambda$-calculus. By IH.
\item \Rref{MC-Return}. Have: $ \Gamma  \vdash   \ottkw{ret}  \:  \ottnt{a}   :   S_{   \bot   } \:  \ottnt{A}  $ where $ \Gamma  \vdash  \ottnt{a}  :  \ottnt{A} $.\\
Need to show: $  \Gamma ^{   \bot   }   \vdash   \mathbf{split}^{   \bot   }   \widetilde{ \ottnt{a} }    :^{   \bot   }   S_{   \bot   } \:  \ottnt{A}  $.\\
By IH, $  \Gamma ^{   \bot   }   \vdash   \widetilde{ \ottnt{a} }   :^{   \bot   }  \ottnt{A} $.\\
This case follows by \rref{E-Split}.
\item \Rref{MC-Extract}. Have: $ \Gamma  \vdash   \mathbf{extr} \:  \ottnt{a}   :  \ottnt{A} $ where $ \Gamma  \vdash  \ottnt{a}  :   S_{   \bot   } \:  \ottnt{A}  $.\\
Need to show: $  \Gamma ^{   \bot   }   \vdash   \mathbf{merge}^{   \bot   }   \widetilde{ \ottnt{a} }    :^{   \bot   }  \ottnt{A} $.\\
By IH, $  \Gamma ^{   \bot   }   \vdash   \widetilde{ \ottnt{a} }   :^{   \bot   }   S_{   \bot   } \:  \ottnt{A}  $.\\
This case follows by \rref{E-Merge}.
\item \Rref{MC-Join}. Have: $ \Gamma  \vdash   \mathbf{join}^{  \ell_{{\mathrm{1}}}  ,  \ell_{{\mathrm{2}}}  }  \ottnt{a}   :   S_{   \ell_{{\mathrm{1}}}  \vee  \ell_{{\mathrm{2}}}   } \:  \ottnt{A}  $ where $ \Gamma  \vdash  \ottnt{a}  :   S_{  \ell_{{\mathrm{1}}}  } \:   S_{  \ell_{{\mathrm{2}}}  } \:  \ottnt{A}   $.\\
Need to show: $  \Gamma ^{   \bot   }   \vdash   \mathbf{split}^{   \ell_{{\mathrm{1}}}  \vee  \ell_{{\mathrm{2}}}   }   (   \mathbf{merge}^{  \ell_{{\mathrm{2}}}  }   (   \mathbf{merge}^{  \ell_{{\mathrm{1}}}  }   \widetilde{ \ottnt{a} }    )    )    :^{   \bot   }   S_{   \ell_{{\mathrm{1}}}  \vee  \ell_{{\mathrm{2}}}   } \:  \ottnt{A}  $.\\
By IH, $  \Gamma ^{   \bot   }   \vdash   \widetilde{ \ottnt{a} }   :^{   \bot   }   S_{  \ell_{{\mathrm{1}}}  } \:   S_{  \ell_{{\mathrm{2}}}  } \:  \ottnt{A}   $.\\
By \rref{E-Merge}, $  \Gamma ^{   \bot   }   \vdash   \mathbf{merge}^{  \ell_{{\mathrm{1}}}  }   \widetilde{ \ottnt{a} }    :^{  \ell_{{\mathrm{1}}}  }   S_{  \ell_{{\mathrm{2}}}  } \:  \ottnt{A}  $.\\
Applying \rref{E-Merge} again, $  \Gamma ^{   \bot   }   \vdash   \mathbf{merge}^{  \ell_{{\mathrm{2}}}  }   (   \mathbf{merge}^{  \ell_{{\mathrm{1}}}  }   \widetilde{ \ottnt{a} }    )    :^{   \ell_{{\mathrm{1}}}  \vee  \ell_{{\mathrm{2}}}   }  \ottnt{A} $.\\
This case, then, follows by \rref{E-Split}.
\item \Rref{MC-Fork}. Have: $ \Gamma  \vdash   \mathbf{fork}^{  \ell_{{\mathrm{1}}}  ,  \ell_{{\mathrm{2}}}  }  \ottnt{a}   :   S_{  \ell_{{\mathrm{1}}}  } \:   S_{  \ell_{{\mathrm{2}}}  } \:  \ottnt{A}   $ where $ \Gamma  \vdash  \ottnt{a}  :   S_{   \ell_{{\mathrm{1}}}  \vee  \ell_{{\mathrm{2}}}   } \:  \ottnt{A}  $.\\
Need to show: $  \Gamma ^{   \bot   }   \vdash   \mathbf{split}^{  \ell_{{\mathrm{1}}}  }   (   \mathbf{split}^{  \ell_{{\mathrm{2}}}  }   (   \mathbf{merge}^{   \ell_{{\mathrm{1}}}  \vee  \ell_{{\mathrm{2}}}   }   \widetilde{ \ottnt{a} }    )    )    :^{   \bot   }   S_{  \ell_{{\mathrm{1}}}  } \:   S_{  \ell_{{\mathrm{2}}}  } \:  \ottnt{A}   $.\\
By IH, $  \Gamma ^{   \bot   }   \vdash   \widetilde{ \ottnt{a} }   :^{   \bot   }   S_{   \ell_{{\mathrm{1}}}  \vee  \ell_{{\mathrm{2}}}   } \:  \ottnt{A}  $.\\
By \rref{E-Merge}, $  \Gamma ^{   \bot   }   \vdash   \mathbf{merge}^{   \ell_{{\mathrm{1}}}  \vee  \ell_{{\mathrm{2}}}   }   \widetilde{ \ottnt{a} }    :^{   \ell_{{\mathrm{1}}}  \vee  \ell_{{\mathrm{2}}}   }  \ottnt{A} $.\\
By \rref{E-Split}, $  \Gamma ^{   \bot   }   \vdash   \mathbf{split}^{  \ell_{{\mathrm{2}}}  }   (   \mathbf{merge}^{   \ell_{{\mathrm{1}}}  \vee  \ell_{{\mathrm{2}}}   }   \widetilde{ \ottnt{a} }    )    :^{  \ell_{{\mathrm{1}}}  }   S_{  \ell_{{\mathrm{2}}}  } \:  \ottnt{A}  $.\\
Applying \rref{E-Split} again, $  \Gamma ^{   \bot   }   \vdash   \mathbf{split}^{  \ell_{{\mathrm{1}}}  }   (   \mathbf{split}^{  \ell_{{\mathrm{2}}}  }   (   \mathbf{merge}^{   \ell_{{\mathrm{1}}}  \vee  \ell_{{\mathrm{2}}}   }   \widetilde{ \ottnt{a} }    )    )    :^{   \bot   }   S_{  \ell_{{\mathrm{1}}}  } \:   S_{  \ell_{{\mathrm{2}}}  } \:  \ottnt{A}   $.
\item \Rref{MC-Fmap}. Have: $ \Gamma  \vdash   \mathbf{lift}^{  \ell  }  \ottnt{f}   :    S_{  \ell  } \:  \ottnt{A}   \to   S_{  \ell  } \:  \ottnt{B}   $ where $ \Gamma  \vdash  \ottnt{f}  :   \ottnt{A}  \to  \ottnt{B}  $.\\
Need to show: $  \Gamma ^{   \bot   }   \vdash   \lambda  \ottmv{x}  .   \mathbf{split}^{  \ell  }   (     \widetilde{ \ottnt{f} }    \:   (   \mathbf{merge}^{  \ell  }  \ottmv{x}   )    )     :^{   \bot   }    S_{  \ell  } \:  \ottnt{A}   \to   S_{  \ell  } \:  \ottnt{B}   $.\\
By IH, $  \Gamma ^{   \bot   }   \vdash   \widetilde{ \ottnt{f} }   :^{   \bot   }   \ottnt{A}  \to  \ottnt{B}  $.\\
Now,\[ \infer[]{  \Gamma ^{   \bot   }   \vdash   \lambda  \ottmv{x}  .   \mathbf{split}^{  \ell  }   (     \widetilde{ \ottnt{f} }    \:   (   \mathbf{merge}^{  \ell  }  \ottmv{x}   )    )     :^{   \bot   }    S_{  \ell  } \:  \ottnt{A}   \to   S_{  \ell  } \:  \ottnt{B}   }
             {\infer[]{   \Gamma ^{   \bot   }   ,   \ottmv{x}  :^{   \bot   }   S_{  \ell  } \:  \ottnt{A}     \vdash   \mathbf{split}^{  \ell  }   (     \widetilde{ \ottnt{f} }    \:   (   \mathbf{merge}^{  \ell  }  \ottmv{x}   )    )    :^{   \bot   }   S_{  \ell  } \:  \ottnt{B}  }
             {\infer[]{   \Gamma ^{   \bot   }   ,   \ottmv{x}  :^{   \bot   }   S_{  \ell  } \:  \ottnt{A}     \vdash     \widetilde{ \ottnt{f} }    \:   (   \mathbf{merge}^{  \ell  }  \ottmv{x}   )    :^{  \ell  }  \ottnt{B} }
             {\infer[\text{(E-Up)}]{   \Gamma ^{   \bot   }   ,   \ottmv{x}  :^{   \bot   }   S_{  \ell  } \:  \ottnt{A}     \vdash   \widetilde{ \ottnt{f} }   :^{  \ell  }   \ottnt{A}  \to  \ottnt{B}  }
             {   \Gamma ^{   \bot   }   ,   \ottmv{x}  :^{   \bot   }   S_{  \ell  } \:  \ottnt{A}     \vdash   \widetilde{ \ottnt{f} }   :^{   \bot   }   \ottnt{A}  \to  \ottnt{B}  }
             & 
             \infer[]{   \Gamma ^{   \bot   }   ,   \ottmv{x}  :^{   \bot   }   S_{  \ell  } \:  \ottnt{A}     \vdash   \mathbf{merge}^{  \ell  }  \ottmv{x}   :^{  \ell  }  \ottnt{A} }
             {   \Gamma ^{   \bot   }   ,   \ottmv{x}  :^{   \bot   }   S_{  \ell  } \:  \ottnt{A}     \vdash  \ottmv{x}  :^{   \bot   }   S_{  \ell  } \:  \ottnt{A}  }}}} \]
\item \Rref{MC-Up}. Have: $ \Gamma  \vdash   \mathbf{up}^{  \ell_{{\mathrm{1}}}  ,  \ell_{{\mathrm{2}}}  }  \ottnt{a}   :   S_{  \ell_{{\mathrm{2}}}  } \:  \ottnt{A}  $ where $ \Gamma  \vdash  \ottnt{a}  :   S_{  \ell_{{\mathrm{1}}}  } \:  \ottnt{A}  $ and $ \ell_{{\mathrm{1}}}  \sqsubseteq  \ell_{{\mathrm{2}}} $.\\
Need to show: $  \Gamma ^{   \bot   }   \vdash   \mathbf{split}^{  \ell_{{\mathrm{2}}}  }   (   \mathbf{merge}^{  \ell_{{\mathrm{1}}}  }   \widetilde{ \ottnt{a} }    )    :^{   \bot   }   S_{  \ell_{{\mathrm{2}}}  } \:  \ottnt{A}  $.\\
By IH, $  \Gamma ^{   \bot   }   \vdash   \widetilde{ \ottnt{a} }   :^{   \bot   }   S_{  \ell_{{\mathrm{1}}}  } \:  \ottnt{A}  $.\\
By \rref{E-Merge}, $  \Gamma ^{   \bot   }   \vdash   \mathbf{merge}^{  \ell_{{\mathrm{1}}}  }   \widetilde{ \ottnt{a} }    :^{  \ell_{{\mathrm{1}}}  }  \ottnt{A} $.\\
By \rref{E-Up}, $  \Gamma ^{   \bot   }   \vdash   \mathbf{merge}^{  \ell_{{\mathrm{1}}}  }   \widetilde{ \ottnt{a} }    :^{  \ell_{{\mathrm{2}}}  }  \ottnt{A} $.\\
This case, then, follows by \rref{E-Split}.
\end{itemize}

Next, we show that if $ \Gamma  \vdash  \ottnt{a_{{\mathrm{1}}}}  :  \ottnt{A} $ and $ \Gamma  \vdash  \ottnt{a_{{\mathrm{2}}}}  :  \ottnt{A} $ such that $ \ottnt{a_{{\mathrm{1}}}}  \equiv  \ottnt{a_{{\mathrm{2}}}} $ in GMCC($ \mathcal{L} $), then $  \widetilde{ \ottnt{a_{{\mathrm{1}}}} }   \equiv   \widetilde{ \ottnt{a_{{\mathrm{2}}}} }  $ in \Ge{}($ \mathcal{L} $).\\
By inversion on $ \ottnt{a_{{\mathrm{1}}}}  \equiv  \ottnt{a_{{\mathrm{2}}}} $.
\begin{itemize}
\item $\lambda$-calculus. By IH.
\item $  \mathbf{lift}^{  \ell  }   (   \lambda  \ottmv{x}  .  \ottmv{x}   )    \equiv   \lambda  \ottmv{x}  .  \ottmv{x}  $.\\
Now, \begin{align*}
& \wdtilde{ \mathbf{lift}^{  \ell  }   (   \lambda  \ottmv{x}  .  \ottmv{x}   )  } \\
= &  \lambda  \ottmv{y}  .   \mathbf{split}^{  \ell  }   (    (   \lambda  \ottmv{x}  .  \ottmv{x}   )   \:   (   \mathbf{merge}^{  \ell  }  \ottmv{y}   )    )    \\
\equiv &  \lambda  \ottmv{y}  .   \mathbf{split}^{  \ell  }   (   \mathbf{merge}^{  \ell  }  \ottmv{y}   )    \equiv  \lambda  \ottmv{y}  .  \ottmv{y} 
\end{align*}
\item $  \mathbf{lift}^{  \ell  }   (    \lambda  \ottmv{x}  .  \ottnt{g}   \:   (   \ottnt{f}  \:  \ottmv{x}   )    )    \equiv    \lambda  \ottmv{x}  .   (   \mathbf{lift}^{  \ell  }  \ottnt{g}   )    \:   (    (   \mathbf{lift}^{  \ell  }  \ottnt{f}   )   \:  \ottmv{x}   )   $.\\
Now, \begin{align*}
& \wdtilde{ \mathbf{lift}^{  \ell  }   (    \lambda  \ottmv{x}  .  \ottnt{g}   \:   (   \ottnt{f}  \:  \ottmv{x}   )    )  } \\
= &  \lambda  \ottmv{y}  .   \mathbf{split}^{  \ell  }   (    (    \lambda  \ottmv{x}  .    \widetilde{ \ottnt{g} }     \:   (     \widetilde{ \ottnt{f} }    \:  \ottmv{x}   )    )   \:   (   \mathbf{merge}^{  \ell  }  \ottmv{y}   )    )    \\
\equiv &  \lambda  \ottmv{y}  .   \mathbf{split}^{  \ell  }   (     \widetilde{ \ottnt{g} }    \:   (     \widetilde{ \ottnt{f} }    \:   (   \mathbf{merge}^{  \ell  }  \ottmv{y}   )    )    )    
\end{align*}
and \begin{align*}
& \wdtilde{  \lambda  \ottmv{x}  .   (   \mathbf{lift}^{  \ell  }  \ottnt{g}   )    \:   (    (   \mathbf{lift}^{  \ell  }  \ottnt{f}   )   \:  \ottmv{x}   )  } \\
\equiv &   \lambda  \ottmv{x}  .   (   \lambda  \ottmv{y}  .   \mathbf{split}^{  \ell  }   (     \widetilde{ \ottnt{g} }    \:   (   \mathbf{merge}^{  \ell  }  \ottmv{y}   )    )     )    \:   (   \mathbf{split}^{  \ell  }   (     \widetilde{ \ottnt{f} }    \:   (   \mathbf{merge}^{  \ell  }  \ottmv{x}   )    )    )   \\
\equiv &  \lambda  \ottmv{x}  .   \mathbf{split}^{  \ell  }   (     \widetilde{ \ottnt{g} }    \:   (   \mathbf{merge}^{  \ell  }   (   \mathbf{split}^{  \ell  }   (     \widetilde{ \ottnt{f} }    \:   (   \mathbf{merge}^{  \ell  }  \ottmv{x}   )    )    )    )    )    \\
\equiv &  \lambda  \ottmv{x}  .   \mathbf{split}^{  \ell  }   (     \widetilde{ \ottnt{g} }    \:   (     \widetilde{ \ottnt{f} }    \:   (   \mathbf{merge}^{  \ell  }  \ottmv{x}   )    )    )   
\end{align*}
\item $  \mathbf{up}^{  \ell_{{\mathrm{1}}}  ,  \ell_{{\mathrm{1}}}  }  \ottnt{a}   \equiv  \ottnt{a} $.\\
Now, $ \widetilde{   \mathbf{up}^{  \ell_{{\mathrm{1}}}  ,  \ell_{{\mathrm{1}}}  }  \ottnt{a}   }  =  \mathbf{split}^{  \ell_{{\mathrm{1}}}  }   (   \mathbf{merge}^{  \ell_{{\mathrm{1}}}  }   \widetilde{ \ottnt{a} }    )   \equiv  \widetilde{ \ottnt{a} } $.
\item $  \mathbf{up}^{  \ell_{{\mathrm{2}}}  ,  \ell_{{\mathrm{3}}}  }   (   \mathbf{up}^{  \ell_{{\mathrm{1}}}  ,  \ell_{{\mathrm{2}}}  }  \ottnt{a}   )    \equiv   \mathbf{up}^{  \ell_{{\mathrm{1}}}  ,  \ell_{{\mathrm{3}}}  }  \ottnt{a}  $.\\
Now,\begin{align*}
& \wdtilde{ \mathbf{up}^{  \ell_{{\mathrm{2}}}  ,  \ell_{{\mathrm{3}}}  }   (   \mathbf{up}^{  \ell_{{\mathrm{1}}}  ,  \ell_{{\mathrm{2}}}  }  \ottnt{a}   )  } \\
= &  \mathbf{split}^{  \ell_{{\mathrm{3}}}  }   (   \mathbf{merge}^{  \ell_{{\mathrm{2}}}  }   (   \mathbf{split}^{  \ell_{{\mathrm{2}}}  }   (   \mathbf{merge}^{  \ell_{{\mathrm{1}}}  }   \widetilde{ \ottnt{a} }    )    )    )   \\
\equiv &  \mathbf{split}^{  \ell_{{\mathrm{3}}}  }   (   \mathbf{merge}^{  \ell_{{\mathrm{1}}}  }   \widetilde{ \ottnt{a} }    )   =  \widetilde{  \mathbf{up}^{  \ell_{{\mathrm{1}}}  ,  \ell_{{\mathrm{3}}}  }  \ottnt{a}  } 
\end{align*} 
\item $   (   \mathbf{up}^{  \ell_{{\mathrm{1}}}  ,  \ell'_{{\mathrm{1}}}  }  \ottnt{a}   )   \:  \leftindex^{  \ell'_{{\mathrm{1}}}  }{\gg}\!\! =^{  \ell_{{\mathrm{2}}}  }  \ottnt{f}   \equiv   \mathbf{up}^{    \ell_{{\mathrm{1}}}  \vee  \ell_{{\mathrm{2}}}    ,    \ell'_{{\mathrm{1}}}  \vee  \ell_{{\mathrm{2}}}    }   (   \ottnt{a}  \:  \leftindex^{  \ell_{{\mathrm{1}}}  }{\gg}\!\! =^{  \ell_{{\mathrm{2}}}  }  \ottnt{f}   )   $.\\
First, note that: \begin{align*}
& \wdtilde{ \ottnt{a}  \:  \leftindex^{  \ell_{{\mathrm{1}}}  }{\gg}\!\! =^{  \ell_{{\mathrm{2}}}  }  \ottnt{f} } \\
= & \wdtilde{ \mathbf{join}^{  \ell_{{\mathrm{1}}}  ,  \ell_{{\mathrm{2}}}  }   (    (   \mathbf{lift}^{  \ell_{{\mathrm{1}}}  }  \ottnt{f}   )   \:  \ottnt{a}   )  }\\
= &  \mathbf{split}^{   \ell_{{\mathrm{1}}}  \vee  \ell_{{\mathrm{2}}}   }   (   \mathbf{merge}^{  \ell_{{\mathrm{2}}}  }   (   \mathbf{merge}^{  \ell_{{\mathrm{1}}}  }   (    \widetilde{   \mathbf{lift}^{  \ell_{{\mathrm{1}}}  }  \ottnt{f}   }   \:   \widetilde{ \ottnt{a} }    )    )    )   \\
\equiv &  \mathbf{split}^{   \ell_{{\mathrm{1}}}  \vee  \ell_{{\mathrm{2}}}   }   (   \mathbf{merge}^{  \ell_{{\mathrm{2}}}  }   (   \mathbf{merge}^{  \ell_{{\mathrm{1}}}  }   (   \mathbf{split}^{  \ell_{{\mathrm{1}}}  }   (     \widetilde{ \ottnt{f} }    \:   (   \mathbf{merge}^{  \ell_{{\mathrm{1}}}  }    \widetilde{ \ottnt{a} }     )    )    )    )    )   \\
\equiv &  \mathbf{split}^{   \ell_{{\mathrm{1}}}  \vee  \ell_{{\mathrm{2}}}   }   (   \mathbf{merge}^{  \ell_{{\mathrm{2}}}  }   (     \widetilde{ \ottnt{f} }    \:   (   \mathbf{merge}^{  \ell_{{\mathrm{1}}}  }    \widetilde{ \ottnt{a} }     )    )    )  
\end{align*}
Next, \begin{align*}
& \wdtilde{  (   \mathbf{up}^{  \ell_{{\mathrm{1}}}  ,  \ell'_{{\mathrm{1}}}  }  \ottnt{a}   )   \:  \leftindex^{  \ell'_{{\mathrm{1}}}  }{\gg}\!\! =^{  \ell_{{\mathrm{2}}}  }  \ottnt{f} } \\
\equiv &   \mathbf{split}^{   \ell'_{{\mathrm{1}}}  \vee  \ell_{{\mathrm{2}}}   }   (   \mathbf{merge}^{  \ell_{{\mathrm{2}}}  }   (     \widetilde{ \ottnt{f} }    \:   (   \mathbf{merge}^{  \ell'_{{\mathrm{1}}}  }    \widetilde{  \mathbf{up}^{  \ell_{{\mathrm{1}}}  ,  \ell'_{{\mathrm{1}}}  }  \ottnt{a}  }     )    )    )   \\
\equiv &   \mathbf{split}^{   \ell'_{{\mathrm{1}}}  \vee  \ell_{{\mathrm{2}}}   }   (   \mathbf{merge}^{  \ell_{{\mathrm{2}}}  }   (     \widetilde{ \ottnt{f} }    \:   (   \mathbf{merge}^{  \ell'_{{\mathrm{1}}}  }   (   \mathbf{split}^{  \ell'_{{\mathrm{1}}}  }   (   \mathbf{merge}^{  \ell_{{\mathrm{1}}}  }   \widetilde{ \ottnt{a} }    )    )    )    )    )   \\
\equiv &   \mathbf{split}^{   \ell'_{{\mathrm{1}}}  \vee  \ell_{{\mathrm{2}}}   }   (   \mathbf{merge}^{  \ell_{{\mathrm{2}}}  }   (     \widetilde{ \ottnt{f} }    \:   (   \mathbf{merge}^{  \ell_{{\mathrm{1}}}  }   \widetilde{ \ottnt{a} }    )    )    )   
\end{align*}
and \begin{align*}
& \wdtilde{ \mathbf{up}^{    \ell_{{\mathrm{1}}}  \vee  \ell_{{\mathrm{2}}}    ,    \ell'_{{\mathrm{1}}}  \vee  \ell_{{\mathrm{2}}}    }   (   \ottnt{a}  \:  \leftindex^{  \ell_{{\mathrm{1}}}  }{\gg}\!\! =^{  \ell_{{\mathrm{2}}}  }  \ottnt{f}   )  } \\
= &  \mathbf{split}^{   \ell'_{{\mathrm{1}}}  \vee  \ell_{{\mathrm{2}}}   }   (   \mathbf{merge}^{   \ell_{{\mathrm{1}}}  \vee  \ell_{{\mathrm{2}}}   }   \wdtilde{   \ottnt{a}  \:  \leftindex^{  \ell_{{\mathrm{1}}}  }{\gg}\!\! =^{  \ell_{{\mathrm{2}}}  }  \ottnt{f}   }    )   \\
\equiv &  \mathbf{split}^{   \ell'_{{\mathrm{1}}}  \vee  \ell_{{\mathrm{2}}}   }   (   \mathbf{merge}^{   \ell_{{\mathrm{1}}}  \vee  \ell_{{\mathrm{2}}}   }   (   \mathbf{split}^{   \ell_{{\mathrm{1}}}  \vee  \ell_{{\mathrm{2}}}   }   (   \mathbf{merge}^{  \ell_{{\mathrm{2}}}  }   (     \widetilde{ \ottnt{f} }    \:   (   \mathbf{merge}^{  \ell_{{\mathrm{1}}}  }    \widetilde{ \ottnt{a} }     )    )    )    )    )   \\
\equiv &  \mathbf{split}^{   \ell'_{{\mathrm{1}}}  \vee  \ell_{{\mathrm{2}}}   }   (   \mathbf{merge}^{  \ell_{{\mathrm{2}}}  }   (     \widetilde{ \ottnt{f} }    \:   (   \mathbf{merge}^{  \ell_{{\mathrm{1}}}  }    \widetilde{ \ottnt{a} }     )    )    )  
\end{align*}
\item $  \ottnt{a}  \:  \leftindex^{  \ell_{{\mathrm{1}}}  }{\gg}\!\! =^{  \ell'_{{\mathrm{2}}}  }   (   \lambda  \ottmv{x}  .   \mathbf{up}^{  \ell_{{\mathrm{2}}}  ,  \ell'_{{\mathrm{2}}}  }  \ottnt{b}    )    \equiv   \mathbf{up}^{    \ell_{{\mathrm{1}}}  \vee  \ell_{{\mathrm{2}}}    ,    \ell_{{\mathrm{1}}}  \vee  \ell'_{{\mathrm{2}}}    }   (   \ottnt{a}  \:  \leftindex^{  \ell_{{\mathrm{1}}}  }{\gg}\!\! =^{  \ell_{{\mathrm{2}}}  }   \lambda  \ottmv{x}  .  \ottnt{b}    )   $.\\
Now, \begin{align*}
& \wdtilde{ \ottnt{a}  \:  \leftindex^{  \ell_{{\mathrm{1}}}  }{\gg}\!\! =^{  \ell'_{{\mathrm{2}}}  }   (   \lambda  \ottmv{x}  .   \mathbf{up}^{  \ell_{{\mathrm{2}}}  ,  \ell'_{{\mathrm{2}}}  }  \ottnt{b}    )  } \\
\equiv &  \mathbf{split}^{   \ell_{{\mathrm{1}}}  \vee  \ell'_{{\mathrm{2}}}   }   (   \mathbf{merge}^{  \ell'_{{\mathrm{2}}}  }   (    \wdtilde{   \lambda  \ottmv{x}  .   \mathbf{up}^{  \ell_{{\mathrm{2}}}  ,  \ell'_{{\mathrm{2}}}  }  \ottnt{b}    }   \:   (   \mathbf{merge}^{  \ell_{{\mathrm{1}}}  }    \widetilde{ \ottnt{a} }     )    )    )   \\
= &  \mathbf{split}^{   \ell_{{\mathrm{1}}}  \vee  \ell'_{{\mathrm{2}}}   }   (   \mathbf{merge}^{  \ell'_{{\mathrm{2}}}  }   (    (   \lambda  \ottmv{x}  .   \mathbf{split}^{  \ell'_{{\mathrm{2}}}  }   (   \mathbf{merge}^{  \ell_{{\mathrm{2}}}  }   \widetilde{ \ottnt{b} }    )     )   \:   (   \mathbf{merge}^{  \ell_{{\mathrm{1}}}  }    \widetilde{ \ottnt{a} }     )    )    )   \\
\equiv &  \mathbf{split}^{   \ell_{{\mathrm{1}}}  \vee  \ell'_{{\mathrm{2}}}   }   (   \mathbf{merge}^{  \ell'_{{\mathrm{2}}}  }   (   \mathbf{split}^{  \ell'_{{\mathrm{2}}}  }   (    \mathbf{merge}^{  \ell_{{\mathrm{2}}}  }   \widetilde{ \ottnt{b} }    \{   \mathbf{merge}^{  \ell_{{\mathrm{1}}}  }    \widetilde{ \ottnt{a} }     /  \ottmv{x}  \}   )    )    )   \\
\equiv &  \mathbf{split}^{   \ell_{{\mathrm{1}}}  \vee  \ell'_{{\mathrm{2}}}   }   (    \mathbf{merge}^{  \ell_{{\mathrm{2}}}  }   \widetilde{ \ottnt{b} }    \{   \mathbf{merge}^{  \ell_{{\mathrm{1}}}  }    \widetilde{ \ottnt{a} }     /  \ottmv{x}  \}   )  
\end{align*}
and \begin{align*}
& \wdtilde{ \mathbf{up}^{    \ell_{{\mathrm{1}}}  \vee  \ell_{{\mathrm{2}}}    ,    \ell_{{\mathrm{1}}}  \vee  \ell'_{{\mathrm{2}}}    }   (   \ottnt{a}  \:  \leftindex^{  \ell_{{\mathrm{1}}}  }{\gg}\!\! =^{  \ell_{{\mathrm{2}}}  }   \lambda  \ottmv{x}  .  \ottnt{b}    )  } \\
= &  \mathbf{split}^{   \ell_{{\mathrm{1}}}  \vee  \ell'_{{\mathrm{2}}}   }   (   \mathbf{merge}^{   \ell_{{\mathrm{1}}}  \vee  \ell_{{\mathrm{2}}}   }   \wdtilde{   \ottnt{a}  \:  \leftindex^{  \ell_{{\mathrm{1}}}  }{\gg}\!\! =^{  \ell_{{\mathrm{2}}}  }   \lambda  \ottmv{x}  .  \ottnt{b}    }    )   \\
\equiv &  \mathbf{split}^{   \ell_{{\mathrm{1}}}  \vee  \ell'_{{\mathrm{2}}}   }   (   \mathbf{merge}^{   \ell_{{\mathrm{1}}}  \vee  \ell_{{\mathrm{2}}}   }   (   \mathbf{split}^{   \ell_{{\mathrm{1}}}  \vee  \ell_{{\mathrm{2}}}   }   (   \mathbf{merge}^{  \ell_{{\mathrm{2}}}  }   (    (   \lambda  \ottmv{x}  .   \widetilde{ \ottnt{b} }    )   \:   (   \mathbf{merge}^{  \ell_{{\mathrm{1}}}  }    \widetilde{ \ottnt{a} }     )    )    )    )    )   \\ 
\equiv &  \mathbf{split}^{   \ell_{{\mathrm{1}}}  \vee  \ell'_{{\mathrm{2}}}   }   (   \mathbf{merge}^{  \ell_{{\mathrm{2}}}  }   (    (   \lambda  \ottmv{x}  .   \widetilde{ \ottnt{b} }    )   \:   (   \mathbf{merge}^{  \ell_{{\mathrm{1}}}  }    \widetilde{ \ottnt{a} }     )    )    )   \\ 
\equiv &  \mathbf{split}^{   \ell_{{\mathrm{1}}}  \vee  \ell'_{{\mathrm{2}}}   }   (    \mathbf{merge}^{  \ell_{{\mathrm{2}}}  }   \widetilde{ \ottnt{b} }    \{   \mathbf{merge}^{  \ell_{{\mathrm{1}}}  }   \widetilde{ \ottnt{a} }    /  \ottmv{x}  \}   )  
\end{align*}

\item $   (   \ottkw{ret}  \:  \ottnt{a}   )   \:  \leftindex^{   \bot   }{\gg}\!\! =^{  \ell  }  \ottnt{f}   \equiv   \ottnt{f}  \:  \ottnt{a}  $.\\
Now, \begin{align*}
&  \wdtilde{    (   \ottkw{ret}  \:  \ottnt{a}   )   \:  \leftindex^{   \bot   }{\gg}\!\! =^{  \ell  }  \ottnt{f}   }  \\
\equiv &  \mathbf{split}^{  \ell  }   (   \mathbf{merge}^{  \ell  }   (     \widetilde{ \ottnt{f} }    \:   (   \mathbf{merge}^{   \bot   }    \widetilde{  \ottkw{ret}  \:  \ottnt{a}  }     )    )    )   \\
= &  \mathbf{split}^{  \ell  }   (   \mathbf{merge}^{  \ell  }   (     \widetilde{ \ottnt{f} }    \:   (   \mathbf{merge}^{   \bot   }   (   \mathbf{split}^{   \bot   }   \widetilde{ \ottnt{a} }    )    )    )    )   \\
\equiv &  \mathbf{split}^{  \ell  }   (   \mathbf{merge}^{  \ell  }   (     \widetilde{ \ottnt{f} }    \:   \widetilde{ \ottnt{a} }    )    )   \equiv    \widetilde{ \ottnt{f} }    \:   \widetilde{ \ottnt{a} }  
\end{align*}

\item $  \ottnt{a}  \:  \leftindex^{  \ell_{{\mathrm{1}}}  }{\gg}\!\! =^{   \bot   }   (   \lambda  \ottmv{x}  .   \ottkw{ret}  \:  \ottmv{x}    )    \equiv  \ottnt{a} $.\\
Now, \begin{align*}
&  \wdtilde{   \ottnt{a}  \:  \leftindex^{  \ell_{{\mathrm{1}}}  }{\gg}\!\! =^{   \bot   }   (   \lambda  \ottmv{x}  .   \ottkw{ret}  \:  \ottmv{x}    )    }  \\
\equiv &  \mathbf{split}^{  \ell_{{\mathrm{1}}}  }   (   \mathbf{merge}^{   \bot   }   (     \wdtilde{   \lambda  \ottmv{x}  .   \ottkw{ret}  \:  \ottmv{x}    }    \:   (   \mathbf{merge}^{  \ell_{{\mathrm{1}}}  }    \widetilde{ \ottnt{a} }     )    )    )   \\
= &  \mathbf{split}^{  \ell_{{\mathrm{1}}}  }   (   \mathbf{merge}^{   \bot   }   (    (   \lambda  \ottmv{x}  .   \mathbf{split}^{   \bot   }  \ottmv{x}    )   \:   (   \mathbf{merge}^{  \ell_{{\mathrm{1}}}  }    \widetilde{ \ottnt{a} }     )    )    )   \\
\equiv &  \mathbf{split}^{  \ell_{{\mathrm{1}}}  }   (   \mathbf{merge}^{   \bot   }   (   \mathbf{split}^{   \bot   }   (   \mathbf{merge}^{  \ell_{{\mathrm{1}}}  }    \widetilde{ \ottnt{a} }     )    )    )   \\
\equiv &  \mathbf{split}^{  \ell_{{\mathrm{1}}}  }   (   \mathbf{merge}^{  \ell_{{\mathrm{1}}}  }    \widetilde{ \ottnt{a} }     )   \equiv  \widetilde{ \ottnt{a} } 
\end{align*}

\item $   (   \ottnt{a}  \:  \leftindex^{  \ell_{{\mathrm{1}}}  }{\gg}\!\! =^{  \ell_{{\mathrm{2}}}  }  \ottnt{f}   )   \:  \leftindex^{    \ell_{{\mathrm{1}}}  \vee  \ell_{{\mathrm{2}}}    }{\gg}\!\! =^{  \ell_{{\mathrm{3}}}  }  \ottnt{g}   \equiv   \ottnt{a}  \:  \leftindex^{  \ell_{{\mathrm{1}}}  }{\gg}\!\! =^{    \ell_{{\mathrm{2}}}  \vee  \ell_{{\mathrm{3}}}    }   (   \lambda  \ottmv{x}  .   (    \ottnt{f}  \:  \ottmv{x}   \:  \leftindex^{  \ell_{{\mathrm{2}}}  }{\gg}\!\! =^{  \ell_{{\mathrm{3}}}  }  \ottnt{g}   )    )   $.\\
Now, \begin{align*}
&  \wdtilde{    (   \ottnt{a}  \:  \leftindex^{  \ell_{{\mathrm{1}}}  }{\gg}\!\! =^{  \ell_{{\mathrm{2}}}  }  \ottnt{f}   )   \:  \leftindex^{    \ell_{{\mathrm{1}}}  \vee  \ell_{{\mathrm{2}}}    }{\gg}\!\! =^{  \ell_{{\mathrm{3}}}  }  \ottnt{g}   }  \\
\equiv &  \mathbf{split}^{    (   \ell_{{\mathrm{1}}}  \vee  \ell_{{\mathrm{2}}}   )   \vee  \ell_{{\mathrm{3}}}   }   (   \mathbf{merge}^{  \ell_{{\mathrm{3}}}  }   (     \widetilde{ \ottnt{g} }    \:   (   \mathbf{merge}^{   \ell_{{\mathrm{1}}}  \vee  \ell_{{\mathrm{2}}}   }    \wdtilde{   \ottnt{a}  \:  \leftindex^{  \ell_{{\mathrm{1}}}  }{\gg}\!\! =^{  \ell_{{\mathrm{2}}}  }  \ottnt{f}   }     )    )    )   \\
\equiv &  \mathbf{split}^{    (   \ell_{{\mathrm{1}}}  \vee  \ell_{{\mathrm{2}}}   )   \vee  \ell_{{\mathrm{3}}}   }   (   \mathbf{merge}^{  \ell_{{\mathrm{3}}}  }   (     \widetilde{ \ottnt{g} }    \:   (   \mathbf{merge}^{   \ell_{{\mathrm{1}}}  \vee  \ell_{{\mathrm{2}}}   }   (   \mathbf{split}^{   \ell_{{\mathrm{1}}}  \vee  \ell_{{\mathrm{2}}}   }   (   \mathbf{merge}^{  \ell_{{\mathrm{2}}}  }   (     \widetilde{ \ottnt{f} }    \:   (   \mathbf{merge}^{  \ell_{{\mathrm{1}}}  }    \widetilde{ \ottnt{a} }     )    )    )    )    )    )    )   \\
\equiv &  \mathbf{split}^{     \ell_{{\mathrm{1}}}  \vee  \ell_{{\mathrm{2}}}    \vee  \ell_{{\mathrm{3}}}   }   (   \mathbf{merge}^{  \ell_{{\mathrm{3}}}  }   (     \widetilde{ \ottnt{g} }    \:   (   \mathbf{merge}^{  \ell_{{\mathrm{2}}}  }   (     \widetilde{ \ottnt{f} }    \:   (   \mathbf{merge}^{  \ell_{{\mathrm{1}}}  }    \widetilde{ \ottnt{a} }     )    )    )    )    )   
\end{align*}
and \begin{align*}
&  \wdtilde{   \ottnt{a}  \:  \leftindex^{  \ell_{{\mathrm{1}}}  }{\gg}\!\! =^{    \ell_{{\mathrm{2}}}  \vee  \ell_{{\mathrm{3}}}    }   (   \lambda  \ottmv{x}  .   (    \ottnt{f}  \:  \ottmv{x}   \:  \leftindex^{  \ell_{{\mathrm{2}}}  }{\gg}\!\! =^{  \ell_{{\mathrm{3}}}  }  \ottnt{g}   )    )    }  \\
\equiv &  \mathbf{split}^{   \ell_{{\mathrm{1}}}  \vee   (   \ell_{{\mathrm{2}}}  \vee  \ell_{{\mathrm{3}}}   )    }   (   \mathbf{merge}^{   \ell_{{\mathrm{2}}}  \vee  \ell_{{\mathrm{3}}}   }   (     \wdtilde{  (   \lambda  \ottmv{x}  .   (    \ottnt{f}  \:  \ottmv{x}   \:  \leftindex^{  \ell_{{\mathrm{2}}}  }{\gg}\!\! =^{  \ell_{{\mathrm{3}}}  }  \ottnt{g}   )    )  }    \:   (   \mathbf{merge}^{  \ell_{{\mathrm{1}}}  }    \widetilde{ \ottnt{a} }     )    )    )   \\
\equiv &  \mathbf{split}^{   \ell_{{\mathrm{1}}}  \vee   (   \ell_{{\mathrm{2}}}  \vee  \ell_{{\mathrm{3}}}   )    }   (   \mathbf{merge}^{   \ell_{{\mathrm{2}}}  \vee  \ell_{{\mathrm{3}}}   }   (    (   \lambda  \ottmv{x}  .   (   \mathbf{split}^{   \ell_{{\mathrm{2}}}  \vee  \ell_{{\mathrm{3}}}   }   (   \mathbf{merge}^{  \ell_{{\mathrm{3}}}  }   (     \widetilde{ \ottnt{g} }    \:   (   \mathbf{merge}^{  \ell_{{\mathrm{2}}}  }   (     \widetilde{ \ottnt{f} }    \:  \ottmv{x}   )    )    )    )    )    )   \:   (   \mathbf{merge}^{  \ell_{{\mathrm{1}}}  }    \widetilde{ \ottnt{a} }     )    )    )   \\
\equiv &  \mathbf{split}^{   \ell_{{\mathrm{1}}}  \vee   (   \ell_{{\mathrm{2}}}  \vee  \ell_{{\mathrm{3}}}   )    }   (   \mathbf{merge}^{   \ell_{{\mathrm{2}}}  \vee  \ell_{{\mathrm{3}}}   }   (   \mathbf{split}^{   \ell_{{\mathrm{2}}}  \vee  \ell_{{\mathrm{3}}}   }   (   \mathbf{merge}^{  \ell_{{\mathrm{3}}}  }   (     \widetilde{ \ottnt{g} }    \:   (   \mathbf{merge}^{  \ell_{{\mathrm{2}}}  }   (     \widetilde{ \ottnt{f} }    \:   (   \mathbf{merge}^{  \ell_{{\mathrm{1}}}  }    \widetilde{ \ottnt{a} }     )    )    )    )    )    )    )   \\
\equiv &  \mathbf{split}^{   \ell_{{\mathrm{1}}}  \vee    \ell_{{\mathrm{2}}}  \vee  \ell_{{\mathrm{3}}}     }   (   \mathbf{merge}^{  \ell_{{\mathrm{3}}}  }   (     \widetilde{ \ottnt{g} }    \:   (   \mathbf{merge}^{  \ell_{{\mathrm{2}}}  }   (     \widetilde{ \ottnt{f} }    \:   (   \mathbf{merge}^{  \ell_{{\mathrm{1}}}  }    \widetilde{ \ottnt{a} }     )    )    )    )    )   
\end{align*}

\item $  \ottkw{ret}  \:   (   \mathbf{extr} \:  \ottnt{a}   )    \equiv  \ottnt{a} $.\\
Now, $ \wdtilde{   \ottkw{ret}  \:   (   \mathbf{extr} \:  \ottnt{a}   )    }  =  \mathbf{split}^{   \bot   }   (   \mathbf{merge}^{   \bot   }   \widetilde{ \ottnt{a} }    )   \equiv  \widetilde{ \ottnt{a} } $.
\item $  \mathbf{extr} \:   (   \ottkw{ret}  \:  \ottnt{a}   )    \equiv  \ottnt{a} $.\\
Now, $ \wdtilde{   \mathbf{extr} \:   (   \ottkw{ret}  \:  \ottnt{a}   )    }  =  \mathbf{merge}^{   \bot   }   (   \mathbf{split}^{   \bot   }   \widetilde{ \ottnt{a} }    )   \equiv  \widetilde{ \ottnt{a} } $.
\item $  \mathbf{join}^{  \ell_{{\mathrm{1}}}  ,  \ell_{{\mathrm{2}}}  }   (   \mathbf{fork}^{  \ell_{{\mathrm{1}}}  ,  \ell_{{\mathrm{2}}}  }  \ottnt{a}   )    \equiv  \ottnt{a} $.\\
Now, \begin{align*}
&  \wdtilde{   \mathbf{join}^{  \ell_{{\mathrm{1}}}  ,  \ell_{{\mathrm{2}}}  }   (   \mathbf{fork}^{  \ell_{{\mathrm{1}}}  ,  \ell_{{\mathrm{2}}}  }  \ottnt{a}   )    }  \\
= &  \mathbf{split}^{   \ell_{{\mathrm{1}}}  \vee  \ell_{{\mathrm{2}}}   }   (   \mathbf{merge}^{  \ell_{{\mathrm{2}}}  }   (   \mathbf{merge}^{  \ell_{{\mathrm{1}}}  }   \widetilde{  \mathbf{fork}^{  \ell_{{\mathrm{1}}}  ,  \ell_{{\mathrm{2}}}  }  \ottnt{a}  }    )    )   \\
= &  \mathbf{split}^{   \ell_{{\mathrm{1}}}  \vee  \ell_{{\mathrm{2}}}   }   (   \mathbf{merge}^{  \ell_{{\mathrm{2}}}  }   (   \mathbf{merge}^{  \ell_{{\mathrm{1}}}  }   (   \mathbf{split}^{  \ell_{{\mathrm{1}}}  }   (   \mathbf{split}^{  \ell_{{\mathrm{2}}}  }   (   \mathbf{merge}^{   \ell_{{\mathrm{1}}}  \vee  \ell_{{\mathrm{2}}}   }   \widetilde{ \ottnt{a} }    )    )    )    )    )   \\
\equiv &  \mathbf{split}^{   \ell_{{\mathrm{1}}}  \vee  \ell_{{\mathrm{2}}}   }   (   \mathbf{merge}^{  \ell_{{\mathrm{2}}}  }   (   \mathbf{split}^{  \ell_{{\mathrm{2}}}  }   (   \mathbf{merge}^{   \ell_{{\mathrm{1}}}  \vee  \ell_{{\mathrm{2}}}   }   \widetilde{ \ottnt{a} }    )    )    )   \\
\equiv &  \mathbf{split}^{   \ell_{{\mathrm{1}}}  \vee  \ell_{{\mathrm{2}}}   }   (   \mathbf{merge}^{   \ell_{{\mathrm{1}}}  \vee  \ell_{{\mathrm{2}}}   }   \widetilde{ \ottnt{a} }    )   \equiv  \widetilde{ \ottnt{a} } 
\end{align*}
\item $  \mathbf{fork}^{  \ell_{{\mathrm{1}}}  ,  \ell_{{\mathrm{2}}}  }   (   \mathbf{join}^{  \ell_{{\mathrm{1}}}  ,  \ell_{{\mathrm{2}}}  }  \ottnt{a}   )    \equiv  \ottnt{a} $.\\
Now, \begin{align*}
&  \wdtilde{   \mathbf{fork}^{  \ell_{{\mathrm{1}}}  ,  \ell_{{\mathrm{2}}}  }   (   \mathbf{join}^{  \ell_{{\mathrm{1}}}  ,  \ell_{{\mathrm{2}}}  }  \ottnt{a}   )    }  \\
= &  \mathbf{split}^{  \ell_{{\mathrm{1}}}  }   (   \mathbf{split}^{  \ell_{{\mathrm{2}}}  }   (   \mathbf{merge}^{   \ell_{{\mathrm{1}}}  \vee  \ell_{{\mathrm{2}}}   }   \widetilde{   \mathbf{join}^{  \ell_{{\mathrm{1}}}  ,  \ell_{{\mathrm{2}}}  }  \ottnt{a}   }    )    )   \\
= &  \mathbf{split}^{  \ell_{{\mathrm{1}}}  }   (   \mathbf{split}^{  \ell_{{\mathrm{2}}}  }   (   \mathbf{merge}^{   \ell_{{\mathrm{1}}}  \vee  \ell_{{\mathrm{2}}}   }   (   \mathbf{split}^{   \ell_{{\mathrm{1}}}  \vee  \ell_{{\mathrm{2}}}   }   (   \mathbf{merge}^{  \ell_{{\mathrm{2}}}  }   (   \mathbf{merge}^{  \ell_{{\mathrm{1}}}  }   \widetilde{ \ottnt{a} }    )    )    )    )    )   \\
\equiv &  \mathbf{split}^{  \ell_{{\mathrm{1}}}  }   (   \mathbf{split}^{  \ell_{{\mathrm{2}}}  }   (   \mathbf{merge}^{  \ell_{{\mathrm{2}}}  }   (   \mathbf{merge}^{  \ell_{{\mathrm{1}}}  }   \widetilde{ \ottnt{a} }    )    )    )   \\
\equiv &  \mathbf{split}^{  \ell_{{\mathrm{1}}}  }   (   \mathbf{merge}^{  \ell_{{\mathrm{1}}}  }   \widetilde{ \ottnt{a} }    )   \equiv  \widetilde{ \ottnt{a} } 
\end{align*}
\end{itemize}
\end{proof}


\begin{theorem}[Theorem \ref{lc2gmcce}]
If $ \Gamma  \vdash  \ottnt{a}  :^{ n }  \ottnt{A} $ in \lc{}, then $  \widehat{  \Gamma  }   \vdash   \widehat{ \ottnt{a} }   :^{  n  }   \widehat{ \ottnt{A} }  $ in \Ge{}($ \mathcal{N} $). Further, if $ \Gamma  \vdash  \ottnt{a_{{\mathrm{1}}}}  :^{ n }  \ottnt{A} $ and $ \Gamma  \vdash  \ottnt{a_{{\mathrm{2}}}}  :^{ n }  \ottnt{A} $ such that $ \ottnt{a_{{\mathrm{1}}}}  \equiv  \ottnt{a_{{\mathrm{2}}}} $ in \lc{}, then $  \widehat{ \ottnt{a_{{\mathrm{1}}}} }   \equiv   \widehat{ \ottnt{a_{{\mathrm{2}}}} }  $ in \Ge{}($ \mathcal{N} $).
\end{theorem}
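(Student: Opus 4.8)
The plan is to prove the theorem by two straightforward inductions, mirroring the structure of the soundness proofs for the earlier translations (e.g.\ Theorems \ref{GMC2DCC} and \ref{gmcc2gmcce}). The key point to exploit is that the translation $\widehat{\phantom{a}}$ is a syntactic homomorphism on term constructors that leaves variables untouched, and that the modal typing rules \rref{E-Split,E-Merge} of \Ge{} are designed to match \rref{LC-Next,LC-Prev} of \lc{} once we instantiate the parametrizing monoid to $\mathcal{N} = (\mathbb{N},+,0)$.

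First I would establish a routine substitution lemma: $ \widehat{ \ottnt{b}  \{  \ottnt{a}  /  \ottmv{x}  \} }  =  \widehat{ \ottnt{b} }  \{  \widehat{ \ottnt{a} }  /  \ottmv{x}  \} $, proved by trivial structural induction on $\ottnt{b}$ since $\widehat{\phantom{a}}$ commutes with every constructor and acts as the identity on variables. Next I would prove typing preservation by induction on the derivation of $ \Gamma  \vdash  \ottnt{a}  :^{ n }  \ottnt{A} $. The $\lambda$-calculus cases (\rref{LC-Var,LC-Lam,LC-App,LC-InjOne,LC-Case,LC-Unit}) follow immediately from the corresponding rules \rref{E-Var,E-Lam,E-App,E-Inj,E-Case,E-Unit} of \Ge{}: the crucial observation is that, just as in \lc{}, the graded typing judgement of \Ge{} keeps the grade $n$ fixed along the $\lambda$-calculus rules, so no grade shift is introduced on either side and the induction hypotheses splice in directly. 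For \rref{LC-Next}, the hypothesis gives $ \widehat{  \Gamma  }   \vdash   \widehat{ \ottnt{a} }   :^{  n  +   0'   }   \widehat{ \ottnt{A} }  $, and since in $\mathcal{N}$ we have $ n  \cdot   0'  = n + 0'$, applying \rref{E-Split} with $m_1 = n$ and $m_2 =  0' $ yields $ \widehat{  \Gamma  }   \vdash   \mathbf{split}^{   0'   }   \widehat{ \ottnt{a} }   :^{  n  }   S_{   0'   } \:   \widehat{ \ottnt{A} }   $, which is exactly $ \widehat{  \Gamma  }   \vdash   \widehat{  \mathbf{next} \:  \ottnt{a}  }   :^{  n  }   \widehat{  \bigcirc  \ottnt{A}  }  $. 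Dually, for \rref{LC-Prev} the hypothesis gives $ \widehat{  \Gamma  }   \vdash   \widehat{ \ottnt{a} }   :^{  n  }   S_{   0'   } \:   \widehat{ \ottnt{A} }   $ and \rref{E-Merge} produces $ \widehat{  \Gamma  }   \vdash   \mathbf{merge}^{   0'   }   \widehat{ \ottnt{a} }   :^{  n  +   0'   }   \widehat{ \ottnt{A} }  $, as required.

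For equality preservation I would invert the judgement $ \ottnt{a_{{\mathrm{1}}}}  \equiv  \ottnt{a_{{\mathrm{2}}}} $ using the equality rules of Figure \ref{lcequality}. The function $\beta$ and $\eta$ rules translate to the $\beta\eta$-rules of standard $\lambda$-calculus in \Ge{} (using the substitution lemma for the $\beta$ case), and likewise for the sum-type $\beta\eta$ rules. The modal rules map perfectly onto the modal $\beta\eta$-rules of \Ge{}: $ \mathbf{prev} \:   (   \mathbf{next} \:  \ottnt{a}   )   \equiv \ottnt{a}$ becomes $ \mathbf{merge}^{   0'   }   (   \mathbf{split}^{   0'   }   \widehat{ \ottnt{a} }    )   \equiv  \widehat{ \ottnt{a} } $ and $\ottnt{a} \equiv  \mathbf{next} \:   (   \mathbf{prev} \:  \ottnt{a}   )  $ becomes $ \widehat{ \ottnt{a} }  \equiv  \mathbf{split}^{   0'   }   (   \mathbf{merge}^{   0'   }   \widehat{ \ottnt{a} }    )  $. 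Since $\equiv$ in \Ge{} is a congruence, the congruence-closure rules of \lc{}-equality are handled automatically.

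I do not expect a serious obstacle: the translation was engineered precisely so that \lc{}'s uniform grading matches \Ge{}'s graded judgements, and the main content is bookkeeping. The one place warranting care is the grade arithmetic in the $\mathbf{next}/\mathbf{prev}$ cases --- confirming that $+$ on $\mathbb{N}$ lines up with the ``next moment'' semantics so that \rref{E-Split,E-Merge} land on exactly the instants $n$ and $n+0'$ --- and double-checking that the graded $\lambda$-calculus rules of \Ge{} genuinely keep the grade fixed (so that no spurious $S$-wrapping appears), which is what made the analogous attempt with GMCC in Section \ref{lcgmcc} fail.
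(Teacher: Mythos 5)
Your proposal is correct and follows exactly the route the paper takes: the appendix proof of this theorem is literally ``the first part follows by induction on the typing derivation; the second part follows by inversion on the equality judgement,'' and your write-up supplies the details (the substitution lemma, the grade arithmetic matching $n + 0'$ with $m_1 \cdot m_2$ in \rref{E-Split,E-Merge}, and the direct correspondence of the modal $\beta\eta$-rules) that the paper leaves implicit.
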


\begin{proof}
The first part follows by induction on the typing derivation. The second part follows by inversion on the equality judgement.
\end{proof}


\section{Proof of proposition mentioned in Section \ref{secdiscuss}}

\begin{prop}
The exponential object in $\mathcal{DC}$ does not satisfy the universal property, unless the relations in the definition of $\text{Obj}(\mathcal{DC})$ are restricted to reflexive ones only.
\end{prop}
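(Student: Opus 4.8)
The statement has two halves, and I would prove each separately. Recall that an object of $\mathcal{DC}$ is a pair $(X,R_\ell)$ with $X$ a set and $R_\ell$ a family of binary relations on $X$ indexed by the lattice elements, that morphisms are the functions respecting every $R_\ell$, and that products and exponentials are formed componentwise exactly as recalled in the excerpt. Saying that $B\Rightarrow C$ is the exponential of $B$ and $C$ amounts to saying that the currying map
\[
\Lambda:\ \mathrm{Hom}_{\mathcal{DC}}(A\times B,\,C)\ \longrightarrow\ \mathrm{Hom}_{\mathcal{DC}}(A,\,B\Rightarrow C),\qquad \Lambda(g)(a)(b)=g(a,b),
\]
is a well-defined bijection natural in $A$. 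So for the first half the plan is to exhibit one instance, over the two-point lattice $\mathcal{L}_2=(\mathbf{Public}\sqsubset\mathbf{Secret})$, in which the two Hom-sets do not even have the same cardinality, whence no bijection — $\Lambda$ or otherwise — can exist, and the universal property fails for $B\Rightarrow C$.

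For the counterexample I would take
\begin{align*}
\mathrm{Obj}_1 &= (\{x_1,x_2\},\ R_{\mathbf{Public}}=R_{\mathbf{Secret}}=\{(x_1,x_1)\}),\\
\mathrm{Obj}_2 &= (\{x'_1,x'_2\},\ R'_{\mathbf{Public}}=\textit{true},\ R'_{\mathbf{Secret}}=\textit{id}),\\
\mathrm{Obj}_3 &= (\{x''_1,x''_2\},\ R''_{\mathbf{Public}}=R''_{\mathbf{Secret}}=\textit{id}),
\end{align*}
where $\textit{id}$ and $\textit{true}$ denote the identity and the full relation on the relevant set. First I would unwind $\mathrm{Obj}_1\times\mathrm{Obj}_2$ and count its morphisms into $\mathrm{Obj}_3$: the product relation at $\mathbf{Public}$ relates $(x_1,b_1)$ to $(x_1,b_2)$ for all $b_1,b_2$, which forces any morphism $h$ to satisfy $h(x_1,x'_1)=h(x_1,x'_2)$, while $h(x_2,\cdot)$ is unconstrained because $R_{\mathbf{Public}}$ does not relate $x_2$ to itself; this gives $2\cdot2\cdot2=8$ morphisms. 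Then I would compute $\mathrm{Obj}_2\Rightarrow\mathrm{Obj}_3$: its underlying set is $\mathrm{Hom}_{\mathcal{DC}}(\mathrm{Obj}_2,\mathrm{Obj}_3)$, which consists of the two constant functions (since relating everything at $\mathbf{Public}$ forces constancy), and the exponential relations both collapse to $\textit{id}$ on those two points, so $\mathrm{Obj}_2\Rightarrow\mathrm{Obj}_3$ is isomorphic to $\mathrm{Obj}_3$; hence $\mathrm{Hom}_{\mathcal{DC}}(\mathrm{Obj}_1,\mathrm{Obj}_2\Rightarrow\mathrm{Obj}_3)$ has all $2\cdot2=4$ functions as morphisms. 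Since $8\neq 4$, $\mathrm{Obj}_2\Rightarrow\mathrm{Obj}_3$ is not the exponential, proving the first half.

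For the ``unless'' clause I would pass to the full subcategory of objects all of whose relations are reflexive and show that there $\Lambda$ is the required natural bijection. The key point is where reflexivity enters: given $g\in\mathrm{Hom}_{\mathcal{DC}}(A\times B,C)$ and $a\in A$, the function $b\mapsto g(a,b)$ respects $R^B_\ell$ precisely because $g$ respects the product relation \emph{and} $(a,a)\in R^A_\ell$ — so without reflexivity of $R^A_\ell$ the candidate $\Lambda(g)$ need not land in $\mathrm{Hom}_{\mathcal{DC}}(A,B\Rightarrow C)$ at all, which is exactly what went wrong above. Granting reflexivity, one checks routinely that $\Lambda(g)$ is a morphism $A\to B\Rightarrow C$, that uncurrying $h\mapsto\big((a,b)\mapsto h(a)(b)\big)$ is a two-sided inverse, and that both are natural in $A$; this makes $B\Rightarrow C$ the exponential and, re-running the count above with $(x_2,x_2)$ adjoined to the relations of $\mathrm{Obj}_1$, confirms that the two Hom-sets then both have size $4$. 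I expect the main obstacle to be bookkeeping: carefully enumerating the relation-respecting functions in the counterexample (especially for the exponential object, whose carrier is itself a Hom-set) and, on the converse side, isolating cleanly the single spot in the well-definedness argument where reflexivity of the domain's relations is indispensable.
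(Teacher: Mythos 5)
Your proposal is correct and follows essentially the same route as the paper: the identical three objects over the two-point lattice with the $8$-versus-$4$ Hom-set count for the failure, and for the converse the same observation that reflexivity of the domain's relations is exactly what makes the curried map $b\mapsto g(a,b)$ land in $\mathrm{Hom}_{\mathcal{DC}}(B,C)$, after which currying/uncurrying give the universal property. The only cosmetic difference is that you phrase the universal property as a natural bijection of Hom-sets while the paper verifies it via $\mathrm{app}$ together with existence and uniqueness of $\Lambda h$, which are equivalent formulations.
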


\begin{proof}
An object of $\mathcal{DC}$ is a pair, $X := (|X|,R_{X,\ell})$, where $|X|$ is a set and $R_{X,\ell}$ is a family of binary relations on $|X|$, indexed by elements, $\ell$, of the parametrizing lattice, $\mathcal{L}$. (Strictly speaking, an object is a pair consisting of a cpo and a family of directed-complete relations. However, since we are not considering nonterminating computations, we can simplify it to a set and a family of relations.) A morphism from $X$ to $Y$ is any function from $|X|$ to $|Y|$ that respects the binary relations, i.e. a function $h : | X | \to | Y |$ such that if $(x_1,x_2) \in R_{X,\ell}$, then $(h \: x_1, h \: x_2) \in R_{Y,\ell}$.\\ A product object is defined in $\mathcal{DC}$ as follows:
\[  \ottnt{X}  \times  \ottnt{Y}  := (| X | \times | Y |, \; \{ ((x_1,y_1),(x_2,y_2)) \: | \: (x_1,x_2) \in R_{X,\ell} \: \wedge \: (y_1,y_2) \in R_{Y,\ell}\} \]
An exponential object is defined as:
\[   \ottnt{X}  \Rightarrow  \ottnt{Y}  := (\text{Hom}_{\mathcal{DC}} (X,Y), \; \{ (f,g) \: | \: \forall (x_1,x_2) \in R_{X,\ell},(f\: x_1,g \: x_2) \in R_{Y,\ell} \}) \]

Now we present our counter-example. Let the parametrizing lattice $\mathcal{L}$ be $ \mathbf{Public}  \sqsubset  \mathbf{Secret} $. For a set $A$, let $\textit{id}$ and $\textit{true}$ denote the identity relation on $A$ and the total relation on $A$ respectively. Now define the following objects in $\mathcal{DC}$:
\begin{align*}
X & := ( \{ x_1, x_2 \}, \: R_{X, \mathbf{Public} } = R_{X, \mathbf{Secret} } = \{ (x_1 , x_1) \}) \\
Y & := ( \{ y_1, y_2 \}, \: R_{Y, \mathbf{Public} } = \textit{true} \: \wedge \: R_{Y, \mathbf{Secret} } = \textit{id}) \\
Z & := ( \{ z_1, z_2 \}, \: R_{Z, \mathbf{Public} } =  R_{Z, \mathbf{Secret} } = \textit{id})
\end{align*} 
Then, $\text{Hom}_{\mathcal{DC}} ( \ottnt{X}  \times  \ottnt{Y}  , Z)$ has $8$ elements but $\text{Hom}_{\mathcal{DC}} (X,  \ottnt{Y}  \Rightarrow  \ottnt{Z} )$ has only $4$ elements. Therefore, $ \ottnt{Y}  \Rightarrow  \ottnt{Z} $ does not satisfy the universal property. Hence, category $\mathcal{DC}$, as presented by \citet{dcc}, is not cartesian closed.

An analysis of the above counter-example shows that the problem stems from the fact that $R_{X,\ell}$ does not relate $x_2$ to itself. In fact, when we set $R_{X, \mathbf{Public} } = R_{X, \mathbf{Secret} } = \textit{id}$, both $\text{Hom}_{\mathcal{DC}} ( \ottnt{X}  \times  \ottnt{Y}  , Z)$ and $\text{Hom}_{\mathcal{DC}} (X,  \ottnt{Y}  \Rightarrow  \ottnt{Z} )$ have $4$ elements. Below, we show that if for every object $X$, the relation $R_{X,\ell}$ is required to be reflexive, then the exponential object indeed satisfies the universal property. 

Let $X , Y , Z \in \text{Obj}(\mathcal{DC})$. Define the exponential object, $ \ottnt{X}  \Rightarrow  \ottnt{Y} $, as above. Need to check that for any $h \in \text{Hom}_{\mathcal{DC}} (X,Y)$, we have, $(h , h) \in R_{ \ottnt{X}  \Rightarrow  \ottnt{Y} ,\ell}$.\\
Suppose, $ (  \ottmv{x_{{\mathrm{1}}}}  ,  \ottmv{x_{{\mathrm{2}}}}  )  \in R_{X,\ell}$. Need to show: $ (   \ottnt{h}  \:  \ottmv{x_{{\mathrm{1}}}}   ,   \ottnt{h}  \:  \ottmv{x_{{\mathrm{2}}}}   )  \in R_{Y,\ell}$.\\
But this is true because $h \in \text{Hom}_{\mathcal{DC}} (X,Y)$.\\

Next, we define $ \text{app} $ as:
\[   (   \ottnt{X}  \Rightarrow  \ottnt{Y}   )   \times  \ottnt{X}  \xrightarrow{\mathbf{\lambda} w .   (    \pi_1   \:  \ottmv{w}   )   \:   (    \pi_2   \:  \ottmv{w}   )  } Y \]
Need to check that $ \text{app} $ is a $\mathcal{DC}$-morphism.\\
Suppose, $ (  \ottmv{w}  ,  \ottmv{w'}  )  \in R_{  (   \ottnt{X}  \Rightarrow  \ottnt{Y}   )   \times  \ottnt{X} ,\ell}$. Then $ (    \pi_1   \:  \ottmv{w}   ,    \pi_1   \:  \ottmv{w'}   )  \in R_{ \ottnt{X}  \Rightarrow  \ottnt{Y} ,\ell}$ and $ (    \pi_2   \:  \ottmv{w}   ,    \pi_2   \:  \ottmv{w'}   )  \in R_{X,\ell}$.\\
Therefore, $ (    (    \pi_1   \:  \ottmv{w}   )   \:   (    \pi_2   \:  \ottmv{w}   )    ,    (    \pi_1   \:  \ottmv{w'}   )   \:   (    \pi_2   \:  \ottmv{w'}   )    )  \in R_{Y,\ell}$.\\

Now, say $h \in \text{Hom}_{\mathcal{DC}} ( \ottnt{Z}  \times  \ottnt{X} ,Y)$. We define $ \Lambda  \ottnt{h}  \in \text{Hom}_{\mathcal{DC}} (Z ,  \ottnt{X}  \Rightarrow  \ottnt{Y} )$ as:
\[  \Lambda  \ottnt{h}  \triangleq \mathbf{\lambda} z . \mathbf{\lambda} x .  \ottnt{h}  \:   (  \ottmv{z}  ,  \ottmv{x}  )    \]
Need to check the following:
\begin{itemize}
\item First that $ \Lambda  \ottnt{h} $ is well-defined.\\
For any $\ottmv{z_{{\mathrm{0}}}} \in |Z|$, need to show: $  (   \Lambda  \ottnt{h}   )   \:  \ottmv{z_{{\mathrm{0}}}}  \in |  \ottnt{X}  \Rightarrow  \ottnt{Y}  |$, i.e. $\mathbf{\lambda} x .  \ottnt{h}  \:   (  \ottmv{z_{{\mathrm{0}}}}  ,  \ottmv{x}  )   \in \text{Hom}_{\mathcal{DC}} (X,Y)$.\\
Suppose $ (  \ottmv{x_{{\mathrm{1}}}}  ,  \ottmv{x_{{\mathrm{2}}}}  )  \in R_{X,\ell}$. Need to show: $ (   \ottnt{h}  \:   (  \ottmv{z_{{\mathrm{0}}}}  ,  \ottmv{x_{{\mathrm{1}}}}  )    ,   \ottnt{h}  \:   (  \ottmv{z_{{\mathrm{0}}}}  ,  \ottmv{x_{{\mathrm{2}}}}  )    )  \in R_{Y,\ell}$.\\
\textbf{Since $R_{Z,\ell}$ is reflexive}, $ (  \ottmv{z_{{\mathrm{0}}}}  ,  \ottmv{z_{{\mathrm{0}}}}  )  \in R_{Z,\ell}$. Therefore, $ (   (  \ottmv{z_{{\mathrm{0}}}}  ,  \ottmv{x_{{\mathrm{1}}}}  )   ,   (  \ottmv{z_{{\mathrm{0}}}}  ,  \ottmv{x_{{\mathrm{2}}}}  )   )  \in R_{ \ottnt{Z}  \times  \ottnt{X} ,\ell}$.\\
This implies that $ (   \ottnt{h}  \:   (  \ottmv{z_{{\mathrm{0}}}}  ,  \ottmv{x_{{\mathrm{1}}}}  )    ,   \ottnt{h}  \:   (  \ottmv{z_{{\mathrm{0}}}}  ,  \ottmv{x_{{\mathrm{2}}}}  )    )  \in R_{Y,\ell}$ [ $ \because h \in \text{Hom}_{\mathcal{DC}} ( \ottnt{Z}  \times  \ottnt{X} ,Y)$ ].
\item Next, that $ \Lambda  \ottnt{h} $ is a $\mathcal{DC}$-morphism.\\
For  $ (  \ottmv{z_{{\mathrm{1}}}}  ,  \ottmv{z_{{\mathrm{2}}}}  )  \in R_{Z,\ell}$, need to show $(\mathbf{\lambda} x .  \ottnt{h}  \:   (  \ottmv{z_{{\mathrm{1}}}}  ,  \ottmv{x}  )   , \mathbf{\lambda} x .  \ottnt{h}  \:   (  \ottmv{z_{{\mathrm{2}}}}  ,  \ottmv{x}  )  ) \in R_{ \ottnt{X}  \Rightarrow  \ottnt{Y} ,\ell}$. \\
Suppose $ (  \ottmv{x_{{\mathrm{1}}}}  ,  \ottmv{x_{{\mathrm{2}}}}  )  \in R_{X,\ell}$. Need to show: $ (   \ottnt{h}  \:   (  \ottmv{z_{{\mathrm{1}}}}  ,  \ottmv{x_{{\mathrm{1}}}}  )    ,   \ottnt{h}  \:   (  \ottmv{z_{{\mathrm{2}}}}  ,  \ottmv{x_{{\mathrm{2}}}}  )    )  \in R_{Y,\ell}$.\\
From what we are given, $ (   (  \ottmv{z_{{\mathrm{1}}}}  ,  \ottmv{x_{{\mathrm{1}}}}  )   ,   (  \ottmv{z_{{\mathrm{2}}}}  ,  \ottmv{x_{{\mathrm{2}}}}  )   )  \in R_{ \ottnt{Z}  \times  \ottnt{X} , \ell}$.\\
As such, $ (   \ottnt{h}  \:   (  \ottmv{z_{{\mathrm{1}}}}  ,  \ottmv{x_{{\mathrm{1}}}}  )    ,   \ottnt{h}  \:   (  \ottmv{z_{{\mathrm{2}}}}  ,  \ottmv{x_{{\mathrm{2}}}}  )    )  \in R_{Y,\ell}$ [ $ \because h \in \text{Hom}_{\mathcal{DC}} ( \ottnt{Z}  \times  \ottnt{X} ,Y)$ ].
\end{itemize}

Now we check that $ \Lambda  \ottnt{h} $ satisfies the standard existence and uniqueness properties:

\begin{itemize}
\item Existence.
\begin{align*}
&   \text{app}   \circ   (    \Lambda  \ottnt{h}   \times   \text{id}    )   \\
= &   \mathbf{\lambda}  \ottmv{v}  .   \text{app}    \:   (     \Lambda  \ottnt{h}    \:   (    \pi_1   \:  \ottmv{v}   )    ,    \pi_2   \:  \ottmv{v}   )   \\
= &   \mathbf{\lambda}  \ottmv{v}  .   (    \mathbf{\lambda}  \ottmv{w}  .   (    \pi_1   \:  \ottmv{w}   )    \:   (    \pi_2   \:  \ottmv{w}   )    )    \:   (    (    \mathbf{\lambda}  \ottmv{z}  .   \mathbf{\lambda}  \ottmv{x}  .  \ottnt{h}    \:   (  \ottmv{z}  ,  \ottmv{x}  )    )   \:   (    \pi_1   \:  \ottmv{v}   )    ,    \pi_2   \:  \ottmv{v}   )   \\
= &    \mathbf{\lambda}  \ottmv{v}  .   (    \mathbf{\lambda}  \ottmv{z}  .   \mathbf{\lambda}  \ottmv{x}  .  \ottnt{h}    \:   (  \ottmv{z}  ,  \ottmv{x}  )    )    \:   (    \pi_1   \:  \ottmv{v}   )    \:   (    \pi_2   \:  \ottmv{v}   )   \\
= &   \mathbf{\lambda}  \ottmv{v}  .  \ottnt{h}   \:   (    \pi_1   \:  \ottmv{v}   ,    \pi_2   \:  \ottmv{v}   )   =   \mathbf{\lambda}  \ottmv{v}  .  \ottnt{h}   \:  \ottmv{v}  = \ottnt{h}
\end{align*} 
\item Uniqueness.\\
Suppose, $h' \in \text{Hom}_{\mathcal{DC}} (Z ,  \ottnt{X}  \Rightarrow  \ottnt{Y} )$ such that $  \text{app}   \circ   (   \ottnt{h'}  \times   \text{id}    )   = h$.\\
Then, \[  \ottnt{h}  \:   (  \ottmv{z}  ,  \ottmv{x}  )   =   (     \text{app}   \circ  \ottnt{h'}   \times   \text{id}    )   \:   (  \ottmv{z}  ,  \ottmv{x}  )   =   (    \mathbf{\lambda}  \ottmv{w}  .   (    \pi_1   \:  \ottmv{w}   )    \:   (    \pi_2   \:  \ottmv{w}   )    )   \:   (   \ottnt{h'}  \:  \ottmv{z}   ,  \ottmv{x}  )   =    \ottnt{h'}  \:  \ottmv{z}   \:  \ottmv{x}  \]
But, $ \ottnt{h}  \:   (  \ottmv{z}  ,  \ottmv{x}  )   =     \Lambda  \ottnt{h}    \:  \ottmv{z}   \:  \ottmv{x} $.\\
So, by function extensionality, $\ottnt{h'} =  \Lambda  \ottnt{h} $.
\end{itemize}
\end{proof}

\fi

\end{document}